%% file: main.tex
\documentclass[11pt]{article}
\usepackage{fullpage}
\usepackage{url}
\usepackage[colorlinks=true,linkcolor=Emerald,citecolor=RoyalBlue]{hyperref}
\usepackage{amsmath,nccmath, amsfonts,amsthm,multirow,mdframed,amssymb}
\usepackage{thmtools,thm-restate}

\usepackage{times}
\usepackage{paralist}
\usepackage{graphicx}
\usepackage{floatpag}
\usepackage{float}
\DeclareMathOperator{\supp}{supp}
\usepackage{bbold}
\usepackage{enumitem}
\usepackage{subcaption} 
\usepackage{soul}
\usepackage{comment}
\usepackage{calc}
\usepackage{upgreek}
\usepackage{mathtools}
\usepackage{color}
\usepackage[dvipsnames]{xcolor}
\usepackage{xspace}
\usepackage{tcolorbox}
\usepackage{cleveref}
\usepackage{stackengine}
\usepackage{tocloft}
\usepackage{physics}
\usepackage{quantikz}

\stackMath
\allowdisplaybreaks

\DeclareMathOperator*{\E}{\mathbb{E}}

\def \R {\mathbb{R}}

\def \N {\mathbb{N}}

\def \cB {\mathcal{B}}

\def \cE {\mathcal{E}}

\def \cK {\mathcal{K}}

\def \cP {\mathcal{P}}
\def \cQ {\mathcal{Q}}
\def \cR {\mathcal{R}}
\def \cS {\mathcal{S}}

\def \cW {\mathcal{W}}
\def \Id {\mathrm{Id}}

\def \sF {\mathsf{F}}

\def \poly {\mathrm{poly}}
\def \polylog {\mathrm{polylog}}
\def \eps {{\varepsilon}}

\def \maj {\operatorname{Maj}}

\def\ggg{\gtrsim}
\def\lll{\lesssim}

\def \LC {\text{LC}}

\def \deg {\text{deg}}

\def \pl {\operatorname{polylog}}
\def \pll {\operatorname{polyloglog}}
\def \plll {\operatorname{polylogloglog}}

\def \Enc {\mathrm{Enc}}
\def \REnc {\mathrm{REnc}}
\def \Rep {\mathrm{Rep}}
\def \NRep {\mathrm{NRep}}
\def \BL {\mathrm{BL}}
\def \FL {\mathrm{FL}}

\def \syn {\mathrm{Syndrome}}

\renewcommand{\leq}{\leqslant}
\renewcommand{\le}{\leqslant}
\renewcommand{\geq}{\geqslant}
\renewcommand{\ge}{\geqslant}

\newcommand{\encq}{\mathrm{Enc}_\mathcal{Q}}
\newcommand{\encc}{\mathrm{Enc}_\mathrm{rep}}

\usepackage{tikz}

\newtheorem{theorem}{Theorem}[section]

\newtheorem{fact}[theorem]{Fact}

\newtheorem{proposition}[theorem]{Proposition}
\newtheorem{prop}[theorem]{Proposition}

\newtheorem{lemma}[theorem]{Lemma}
\newtheorem*{lemma*}{Lemma}
\newtheorem*{theorem*}{Theorem}
\newtheorem{claim}[theorem]{Claim}
\newtheorem*{claim*}{Claim}

\newtheorem{assumption}[theorem]{Assumption}

\newtheorem{remark}[theorem]{Remark}
\newtheorem*{remark*}{Remark}
\newtheorem{definition}[theorem]{Definition}
\theoremstyle{definition}

\newtheorem{agr-test}{Agreement-Test}
\newtheorem{list-agr-test}{List-Agreement-Test}
\newtheorem{protocol}{Protocol}

\usetikzlibrary{decorations.pathmorphing}

\tikzset{snake it/.style={decorate, decoration=snake}}

\makeatletter
\let\c@fconjecture\c@conjecture
\makeatother

\makeatletter
\let\c@fconj\c@conj
\makeatother

\title{\fontsize{15.5}{19}\selectfont Gap Amplification for Local Hamiltonians with Combinatorial Soundness}

\author{
Mitali Bafna\textsuperscript{1}\thanks{mitalib@cs.washington.edu}
\qquad
Quynh T. Nguyen\textsuperscript{2}\thanks{thequynhqb@pm.me}
\qquad
Tina Zhang\textsuperscript{3}\thanks{tinaz@mit.edu}
\\[0.5em]
{\normalsize
\textsuperscript{1}University of Washington
\qquad
\textsuperscript{2}Harvard \& UC Berkeley
\qquad
\textsuperscript{3}MIT
}
}

\date{\vspace{-5ex}}
\begin{document}
\maketitle
\thispagestyle{empty}

\begin{abstract}
    The quantum PCP conjecture is one of the major open problems in quantum complexity theory. It has resisted attack in part because many primitives used in the proof of the classical PCP theorem, such as \emph{locality-preserving gap amplification} and \emph{alphabet reduction}, have no obvious quantum analogues due to quantum no-cloning.

    Locality-preserving gap amplification is a procedure that takes as input a local Hamiltonian problem instance and produces a new instance with a larger promise gap, without increasing the locality of the Hamiltonian, and instead moderately increasing its local qudit dimension. Obtaining this kind of control over the locality during gap amplification is critical to the success of many known strategies for proving the classical PCP theorem.
    
    In this work, we put forth the first known viable template for quantum locality-preserving gap amplification, and we prove that our procedure amplifies the \emph{combinatorial gap} of local Hamiltonians. Our work introduces a new framework for reasoning about quantum gap amplification in terms of fault-tolerant computation, and illuminates a route toward importing one of the central ingredients in classical PCPs into the quantum setting. In particular, we build upon ideas from the recent classical PCP of~\cite{BafnaMV25} based on high-dimensional expanders, and the circuit-to-Hamiltonian construction of~\cite{Anshu_2024}, in addition to  several recent advances in quantum coding theory.

\end{abstract}

\newpage

{
\pagenumbering{gobble}
\setcounter{tocdepth}{2}
\tableofcontents
\clearpage
}

\pagenumbering{arabic}
\setcounter{page}{1}
\pagestyle{plain}

\input{intro}

\input{prelims}

\input{mpc}

\input{routing}

\input{ckt-to-protocol-v2}

\input{nc-protocol}
\input{protocol-to-ham}

\input{reduction}

\paragraph{Acknowledgements:} We
thank Anurag Anshu and Umesh Vazirani for helpful discussions. Part of this work was done while the three authors were visiting the Simons Institute at UC Berkeley.

\paragraph{AI Disclosure:} None of the ideas in this paper were generated by AI. We used ChatGPT 5.4 and Claude Opus 4.6 to assist with elementary calculations and choosing parameters in proofs of lemmas in Sections 4, 8 and Appendix A.

\bibliographystyle{alpha}
\bibliography{references}

\appendix
\input{dl-new}

\end{document}

%% file: intro.tex
\section{Introduction}
The classical PCP theorem~\cite{AS, ALMSS, Dinur07} is one of the most celebrated achievements of classical complexity theory, and the quantum PCP conjecture is one of the major open problems of quantum complexity theory. The vast literature on classical PCPs contains a variety of ways to prove the classical PCP theorem. Many of these proof strategies rely on the following two steps:
\begin{enumerate}
\item \emph{Locality-preserving gap amplification}, in which the promise gap of some family of classical constraint satisfaction problems (CSPs) is amplified at the cost of blowing up the alphabet size moderately, but without otherwise increasing the locality of the constraints in the CSPs,
\item \emph{Alphabet reduction}, in which the alphabet size of the constraint satisfaction problems is reduced (to offset the alphabet blowup from the gap amplification step) and the promise gap is shown not to shrink too much in the process. %
\end{enumerate}

It is difficult to see how to approach the quantum PCP conjecture in part because neither of these two steps translates easily to the quantum setting. Quantum gap amplification has been studied, but so far every known quantum gap amplification procedure is \emph{locality-increasing}~\cite{AharonovALV09, bergamaschi2025derandomised}. This makes existing quantum gap amplification procedures difficult to use in constructions which aim to approach the quantum PCP conjecture using the two-step template above, because it is difficult (even classically) to reduce the locality of a CSP family in a black-box way while preserving its promise gap, while it is considerably easier to reduce its alphabet size.\footnote{In fact, achieving locality-increasing gap amplification is almost trivial in the classical setting. Quantumly, even achieving locality-increasing gap amplification requires some finesse, and the proof techniques which led to the successful analysis of locality-increasing gap amplification~\cite{AharonovALV09} have since seen widespread application.} The problem of quantizing alphabet reduction appears to have received less attention, but it too seems likely to present serious technical difficulties.

Locality-\emph{preserving} gap amplification has eluded quantization due in part to the well-known fact that quantum information cannot be cloned. (For a thorough survey of the obstacles that no-cloning creates for approaches to quantum PCP, the reader can refer to \cite{aharonov2013guest}.) Classical locality-preserving gap amplification usually relies on asking the prover to encode the original (input) assignment redundantly into a larger proof, and then locally enforcing consistency between different appearances of the same input variable. Redundant encoding produces amplification because (for an honest prover) each large variable in the output assignment contains enough information to check several input constraints rather than only one, so for the same number of variables queried the verifier is more likely to notice violations. Classical \emph{parallel repetition} is a simple example of this idea: each large output variable represents an assignment to a \emph{group} of $k$ input variables, and every possible group of size $k$ appears once. The verifier makes two types of queries: either it chooses a large variable (representing a group of $k$ input variables) uniformly at random and checks all the input constraints that involve those input variables, or it checks that different `large variables' in the new proof are consistent with some global underlying assignment by choosing two at random and comparing them on their intersection.

Na\"ively speaking, this strategy is completely unworkable in the quantum setting, because the original ``assignment'' (the ground state of a local Hamiltonian) could be a highly entangled state, in which case the reduced state on any small subsystem may be close to maximally mixed, so that local views contain very little information about the global state and local consistency checks cannot certify that different pieces come from the same underlying assignment. Moreover, there are concrete no-go results against locality-preserving quantum gap amplification in certain parameter regimes. For example, the 2013 results of Brandao and Harrow~\cite{brandao2013product} show that parallel repetition is impossible for $\mathsf{QMA}$-hard quantum Hamiltonian families unless either (1) the quantum PCP conjecture is false, or (2) $\mathsf{QMA} = \mathsf{NP}$. This surprising result led to a marked increase in skepticism about the quantum PCP conjecture, because parallel repetition is in some sense the most basic classical locality-preserving gap amplification for CSPs and a natural starting point for the kinds of locality-preserving gap amplification procedures that are used for the PCP theorem in the classical setting. In fact, the gap amplification procedures used in proofs of the classical PCP theorem tend to be `derandomized' versions of parallel repetition, in which expander walks or other derandomization techniques are used in order to avoid the unrecoverable blowup in the randomness complexity of the verifier which would occur as a result of applying na\"ive parallel repetition as a gap amplification procedure.

For these reasons, despite the success of \cite{AharonovALV09, bergamaschi2025derandomised} in quantizing locality-increasing gap amplification, quantizing locality-preserving gap amplification remains a major open problem. In this work, we present substantial progress towards this objective: we set forth arguably the first meaningful template for a quantum locality-preserving gap amplification procedure in the literature, and we prove that it has \emph{combinatorial soundness}, defined below,

\begin{definition}[Combinatorial energy]
    Let $H$ be a normalized sum of weighted projective terms: $H = \sum_{i=1}^m p_i \Pi_i$. We say that a Hamiltonian $H$ has combinatorial energy $\epsilon$ if for any state $\ket{\psi}$, if we sample a random constraint $\Pi_i$ from the probability distribution with $\Pr[\Pi_i] = p_i$, then the probability that $\ket{\psi}$ violates (has nonzero trace against) $\Pi_i$ is at least $\epsilon$.
\end{definition}

\begin{theorem}[Informal main theorem]\label{thm:main-intro}
Let $H_\mathrm{in}$ be a normalized sum of projective terms acting on $n$ qubits. Further suppose that $H_\mathrm{in}$ is $O(1)$-local with degree $O(1)$, i.e., each qubit is part of $O(1)$ terms. There is a polynomial-time reduction that maps any such Hamiltonian
$H_{\mathrm{in}}$ to a $2$-local Hamiltonian $H$ on $O(n)$ qudits with the following properties:
\begin{itemize}
\item  \textbf{Completeness}: if $H_{\mathrm{in}}$ has ground energy 0,
then the ground energy of $H$ is at most $2^{-\operatorname{polylog}(n)}$.
\item \textbf{Combinatorial soundness}: If $H_{\mathrm{in}}$ has normalized ground energy
$\eps_{\mathrm{in}}$, then $H$ has combinatorial energy
\[
\eps =\min\{\eps_{\mathrm{in}} \cdot \poly\log(n),\,\eps_{\max}(n)\},
\]
for $\eps_{\max}(n) =  \frac{1}{\mathrm{polylogloglog}\ n}$.
\item The degree and alphabet blowups are as follows:
\begin{itemize}
    \item \textbf{Low-gap regime.}
    If the amplified promise gap is $O(1/\log n)$, then
    $H$ has degree and log-alphabet size (log qudit dimension) $\operatorname{polylog}(n)$.
    \item \textbf{High-gap regime.}
    If the amplified promise gap is $\Omega(1/\log n)$, then $H$ has degree and log-alphabet size (log qudit dimension) $2^{\operatorname{polyloglog}(n)}$.
\end{itemize}
\end{itemize}
Furthermore, assuming asymptotically good quantum locally testable codes, one can take $\eps_{\max}(n)=\Theta(1)$.\footnote{It will be clear from our proof overview why we use good qLTCs in the highest soundness regime: see~\Cref{sec:solving-post-prob-1}. We believe however that this dependence is an artifact of the current proof strategy and can be removed.}
\end{theorem}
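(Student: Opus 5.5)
The plan follows the pipeline suggested by the paper's section structure: circuit, then fault-tolerant routing protocol, then a protocol on a high-dimensional expander, then a Hamiltonian. We view $H_{\mathrm{in}}$ as specifying a verification task. The honest prover holds a ground state $\ket{\psi}$ of $H_{\mathrm{in}}$, encoded in a quantum code (a qLDPC code, or a good qLTC in the highest-gap regime), with each encoded block assigned to a vertex of a bounded-degree HDX-like complex as in~\cite{BafnaMV25}.

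\textbf{Step 1: build a routing protocol.} We design a distributed protocol that repeatedly routes groups of $k=\polylog(n)$ logical qubits to common locations along the HDX's routing structure, using the routing results stated earlier. At each location it measures the local terms $\Pi_i$ of $H_{\mathrm{in}}$ on those qubits, fault-tolerantly on the encoded data. In the end, each local ``view'' of the protocol certifies about $k$ terms of $H_{\mathrm{in}}$ instead of one.

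\textbf{Step 2: convert to a $2$-local Hamiltonian.} We apply a circuit-to-Hamiltonian construction in the style of~\cite{Anshu_2024}, where each qudit stores the full history of one processor of the protocol. This gives log-alphabet equal to $\polylog(n)$ times the code block size. It yields $2^{\pll(n)}$ when the block length must grow, which is the high-gap regime. Each term checks one local step between neighboring processors, so locality is $2$ and the degree equals the complex degree.

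\textbf{Step 3: completeness.} Completeness follows from propagation of the honest history state. Measuring frustration-free terms on $\ket{\psi}$ never rejects. The only error comes from the fault-tolerance and decoding failure probabilities, which are $2^{-\polylog(n)}$.

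\textbf{Step 4: combinatorial soundness.} This is the main obstacle. Suppose a state $\ket{\phi}$ has combinatorial energy $\eps$ against $H$. We want to show $\eps \ge \min\{\eps_{\mathrm{in}}\polylog(n), \eps_{\max}\}$. The argument proceeds as follows.

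\begin{enumerate}
\item If fewer than an $\eps$ fraction of terms are violated, then $\ket{\phi}$ is supported on histories where all but an $\eps$ fraction of local transitions are correct.
\item We view the violated transitions as faults in the protocol, at rate $\eps$.
\item Fault tolerance of the routing protocol, via the earlier MPC-style analysis, shows that for $\eps$ below a threshold $\eps_{\max}$ these faults can be corrected. A decoder then recovers from $\ket{\phi}$ a global logical state $\rho$ on the original $n$ qubits.
\item The protocol's accept outcomes then imply that $\rho$ violates at most an $O(\eps/\polylog(n))$ fraction of input terms in expectation. This uses the fact that each faultless view tests $\polylog(n)$ terms and that routing samples terms nearly uniformly, by the HDX sampling/agreement properties.
\item Since $\rho$ has energy at least $\eps_{\mathrm{in}}$, we get $\eps \gtrsim \eps_{\mathrm{in}}\polylog(n)$.
\end{enumerate}

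\textbf{Where I expect the difficulty.} Turning ``zero trace on most projectors'' into an effectively classical fault pattern is where combinatorial soundness, rather than energy soundness, is used: a violated/unviolated dichotomy per term gives a well-defined fault set. The dependence of the threshold $\eps_{\max}$ on code and decoder properties is what produces the $1/\plll(n)$ bound. With good qLTCs, local tests catch errors at constant rate, which gives $\Theta(1)$.
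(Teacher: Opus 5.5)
Your Step~4 has a genuine gap at items 2--4, and it is the obstacle the paper's whole construction is designed around. Combinatorial soundness of the \cite{Anshu_2024} construction gives you a fixed \emph{set} $S$ of fault locations of density $O(\eps)$. It does not give a classical fault pattern. A low-violation state has the form $\sum_{\mathsf F\in\mathsf{Pauli}^S}c_{\mathsf F}\,\cP_{\mathsf F}\ket{\phi_{\mathsf F}}$, a coherent sum in which even the witness varies across summands. Linear combinations of Paulis on $S$ include non-unitary maps such as $(I+Z)/2$, so the prover effectively has postselection. This is harmless for data kept in a quantum code, since coherent errors on a fixed small set are still correctable. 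It breaks your item~3 in two places.

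(a) Any verification or secret-sharing subroutine with nonzero failure probability, however small, can be postselected onto its failure branch. You therefore need a \emph{zero-error} VQSS; the paper builds one from qLTC local testability plus a robust single-shot decoder.

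(b) Once a measured outcome is copied so that many views can read it, it lives in a repetition code with no protection against coherent $Z$-type faults. For example, projecting a single share of $\epsilon\ketbra{0^d}{0^d}+(1-\epsilon)\ketbra{1^d}{1^d}$ onto $\ket{0}$ erases the violating branch.

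Consequently no decoder extracts a logical $\rho$ whose outcome statistics are those of honestly measuring $H_{\mathrm{in}}$. The paper carries out that reduction (\Cref{lem:soundness-of-rho}) only for commuting $H_{\mathrm{in}}$, where each common eigenvector yields a deterministic outcome string, so $Z$-faults can be absorbed into the input. For general $H_{\mathrm{in}}$ it adds two ingredients. The first is a $\gamma$-noisy repetition readout. The second is \emph{friendly postselection} $A_\eta=\sqrt{1-\eta}\ketbra{1}{1}+\sqrt{\eta}\ketbra{0}{0}$, placed inside the honest tensor network, with $\eta$ tuned so that the honest players in each good clique out-tilt the adversary's postselection (\Cref{lem:output-distribution-Pgamma}). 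That deliberately injected readout noise, not ``decoding failure,'' is what costs the $2^{-\polylog n}$ in completeness.

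Items 4--5 also rely on a lemma you have not supplied. For non-commuting terms, the violation count returned by a sequential measurement is not the energy. Moreover, the detectability lemma of \cite{AharonovALV09} only says that \emph{some} layer, measured on its own, detects violations. You need a bound on the outcome distribution of the sequential measurement circuit itself that survives coherent faults on a small set of locations. The paper proves this with a pyramid/$XY$-sector argument, plus lightcone localization of the faults to a few input ancillas (\Cref{lem:dl-appendix}).

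The amplification mechanism should also be stated differently. No-cloning prevents the quantum data from sitting in many views, so routing groups of $\polylog n$ logical qubits to one place is not the source. Instead, each term is measured once by MPC inside a single $d$-clique, and only its classical outcome is shared with the $d$ members. Each vertex then applies an OR or threshold check over the $\polylog n$ cliques containing it. The low-mean HDX sampling bound (\Cref{lem:concentration-improved}) converts an $\epsilon$-fraction of violated cliques into an $\epsilon\polylog n$ fraction of rejecting vertices. Correspondingly, the log-alphabet is the per-vertex work $(1/\lambda)^{O(d^2)}\polylog n$, set by the number of cliques per vertex, rather than $\polylog n$ times a block length.
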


Combinatorial soundness is a restricted form of soundness in which, instead of showing that (in the no case) there does not exist any state that has energy less than $\epsilon$ with respect to our Hamiltonian, we show that \emph{every state violates at least $\epsilon$ fraction of the Hamiltonian terms}. This notion is technically more tractable because it reduces the analysis to states that satisfy at least $(1-\epsilon)$ fraction of the terms \emph{exactly}. This weakening has often served as a useful first step in past work: for instance, the proof of combinatorial NLTS~\cite{anshu2022construction} was in some sense a precursor to a recent circuit-to-Hamiltonian construction \cite{Anshu_2024}. As such, combinatorial soundness is a natural milestone for any candidate gap-amplification framework, and a construction that achieves it provides meaningful evidence for the viability of the underlying approach. We discuss the prospects for generalizing our analysis beyond combinatorial soundness in \Cref{sec:discussion}.

Our new gap amplification procedure combines two main ingredients from prior work: the classical PCP of~\cite{BafnaMV25} based on \emph{high-dimensional expanders} (HDX)\footnote{Although the main  advance that \cite{BafnaMV25} made was to construct PCPs in the small-soundness regime, their work also achieves en route gap amplification in the inverse-polylog-to-constant-gap regime. The gap amplification results of~\cite{BafnaMV25} were recently generalized by~\cite{BT} to regimes other than the inverse-polylog-to-constant regime.}, and the circuit-to-Hamiltonian procedure of~\cite{Anshu_2024}. One of our key conceptual contributions, which helps us overcome the no-cloning barrier, is to observe that locality-preserving gap amplification needs not redundantly repeat \emph{the witness itself} across the large-alphabet variables. Inspired by~\cite{BafnaMV25}’s view of a PCP as a \emph{tableau of a protocol} that fault-tolerantly computes whether the original CSP constraints are violated, we observe that it suffices to repeat and distribute only this \emph{classical} ``violation information'' across the large-alphabet variables. Of course, there needs to be some way to certify that this ``violation information'' was computed from the original witness. Thus, using some of the tools from~\cite{BafnaMV25}, we firstly build a protocol on an HDX communication graph that fault-tolerantly computes whether the original Hamiltonian's constraints are violated or not. Then, like in~\cite{BafnaMV25}, we get a PCP by converting this protocol into  a local Hamiltonian. This basic template makes it possible to use \emph{secure quantum multiparty computation}, based on the \emph{secret sharing} rather than the repetition of the witness, to compute and distribute the violation information fault-tolerantly.

This idea gives us arguably the first viable template for locality-preserving gap amplification in the quantum setting, but proving soundness is nontrivial. The main difficulty is that, while converting a protocol into a Hamiltonian can be achieved using a minor modification of Cook-Levin in the classical setting~\cite{BafnaMV25}, analyzing the quantum equivalent of this step is difficult. We use the circuit-to-Hamiltonian procedure of~\cite{Anshu_2024} because the standard Feynman-Kitaev construction is not robust enough for our purposes. The~\cite{Anshu_2024} construction guarantees that combinatorial low-energy states of its output Hamiltonian encode (adversarially) faulty executions of our protocol, but the~\cite{Anshu_2024} fault model allows \emph{coherent} combinations of faults that can effectively implement non-unitary operations such as postselection. Thus our protocol must remain sound against adversarial players capable of \emph{postselection attacks}. Dealing with this type of attack is technically challenging, and our solution requires a number of advances which we believe could be more generally useful: (1) we develop a novel method to perform fault-tolerant computation (in particular fault-tolerant classical readout) in the presence of~\cite{Anshu_2024}-style faults by injecting \emph{random noise} and \emph{friendly postselection}, which may be useful more broadly in applications of~\cite{Anshu_2024}; (2) we prove a version of the detectability lemma from~\cite{AharonovALV09} that is robust even in the presence of adversarial postselection; (3) we develop a template for quantum multiparty computation with \emph{zero error} (perfect security), using as ingredients several recent coding-theoretic advances~\cite{dinur2024expansion, nguyen2025good, golowich2025asymptotically}, which solves an open problem of independent interest~\cite[Section 6]{crepeau2002secure}. We elaborate on how these ingredients fit together under our proof strategy in~\Cref{sec:overview}.

Finally, we remark that the work of~\cite{BT} shows a classical analogue of~\Cref{thm:main-intro}, and also constructs classical PCPs (albeit with  slightly super-polynomial $n^{\exp(\log^*(n))}$ size) by combining this gap amplification procedure with recent PCPs of proximity (PCPP) constructed by~\cite{amireddyBSSW2025}.\footnote{We use sampling bounds on high-dimensional expanders from~\cite{BT}; see~\Cref{lem:concentration-improved}.} \cite{BT}'s approach is analogous to Dinur's strategy of interleaving gap amplification with composition with the Hadamard PCPP and iterating these two steps~\cite{Dinur07}. However, it is not known how to quantize either proof composition or PCPPs, and so a corresponding route to quantum PCPs remains out of reach (see~\Cref{sec:discussion}). Even so, this observation indicates that \Cref{thm:main-intro} is already of a strength comparable to what suffices for the classical PCP theorem, which has not been true for any previous quantum gap amplification procedure.

\section{Proof Overview}
\label{sec:overview}
\subsection{Obstacles to Quantizing Gap Amplification}

Classical locality-preserving gap amplification typically works by \emph{redundantly encoding} the witness: each output variable stores information about many input variables, and local checks enforce consistency among different output variables that involve the same input variables. This paradigm produces amplification because it increases the ``per-variable'' / ``per-query'' visibility of violated input constraints: each output variable contains enough information to check many input constraints, and so, assuming that the output-variable assignments are consistent with some global assignment to the input CSP (which condition must be enforced by local equality checks between output variables), the verifier can detect violations of the input constraints more often without increasing the number of queries it makes. Parallel repetition has this form, as does Dinur's gap amplification procedure~\cite{Dinur07}, where a large output variable represents an assignment to the $t$-\emph{neighborhood} of an input variable. In both cases, a single input variable appears many times in an honest output assignment.

The first obstacle to quantizing this paradigm is that there is no useful quantum analogue of repeating witness bits across many large variables. Repeating reduced states of an entangled witness is not useful for the reasons discussed in the introduction. One could instead try to apply repetition encoding \emph{in superposition}, for example by replacing a repeated bit by a GHZ state, but this is neither locally testable nor compatible with the need to measure the same qubit in different bases.
One might also try to replace repetition with a quantum error-correcting code. However, for a quantum encoding to play the role that repetition plays classically, each local view of the codeword would have to suffice to recover the encoded qubit, so the code would have to tolerate erasure of all but one local view. This is impossible because of local indistinguishability of quantum codes: any view of a quantum codeword smaller than the distance of the code looks maximally mixed (over the codespace), so it contains no  information about the logical state.

\subsection{A Template for Gap Amplification: Multiparty Computation on an HDX}
\label{sec:overview-template}

These obstacles suggest abandoning the redundant-encoding view of locality-preserving gap amplification. The recent work of~\cite{BafnaMV25} (BMV) offers an alternative paradigm. Their principal goal  is to achieve improved classical PCPs in the small soundness regime. One corollary of their work, however, is a locality-preserving gap amplification procedure that can amplify the promise gap of an input 2CSP family from $1/\polylog(n)$ to $\Omega(1)$ using high-dimensional expanders (HDX).

At a high level, they obtain this amplification by designing a \emph{distributed computing protocol} on an HDX graph $G$ (i.e.~parties/players in the protocol are associated with vertices of $G$, and all communication occurs on the edges of this graph). They then convert this distributed protocol into a CSP with amplified gap using a modification of Cook--Levin. We describe the overall structure of this distributed computing protocol here, and elaborate on the protocol-to-CSP conversion in `Converting the classical fault-tolerant protocol into a gap-amplified 2CSP' later in this overview.

The overall goal of the distributed protocol is to distribute a witness for the input 2CSP fault-tolerantly among the players; this is in some sense an interactive analogue of redundantly encoding the input witness across large variables. Fix any input 2CSP, whose gap we want to amplify, and fix any witness $x$ that violates $1/\pl n$ fraction of the constraints to the input 2CSP. Each player in the distributed protocol begins by receiving as input some $\pl n$--sized subset of the bits of $x$. The players then share their witness information with each other by executing a sequence of \emph{fault-tolerant routing protocols} so that, at the end of the overall protocol, each player holds the correct evaluations on $x$ of $\pl n$ constraints of the input 2CSP. This holds true even when communication across a constant fraction of `bad' edges in $G$ may be arbitrarily corrupted (hence `fault tolerant routing'). The properties of the HDX $G$ allow the constraints to be `spread out' among the players in a way such that an $\Omega(1)$ fraction of the players see at least one violation. These players will output `NO' at the end of the protocol. (We will achieve this guarantee in the quantum world: see~\Cref{lem:protocol-informal}.) This increased \emph{per-player visibility} of violations is the source of the $\pl n$--factor gap amplification: when the distributed protocol is converted into an output CSP (using a Cook--Levin-type reduction, which we discuss later), each player in the protocol will be associated with one variable in the output CSP, and so a constant fraction of players outputting NO will translate to a constant fraction of large variables in an assignment to the output CSP witnessing at least one input violation.

\paragraph{High-dimensional expanders} To achieve these guarantees, BMV exploit the sampling properties of high-dimensional expanders. For this overview, it suffices to think of a $d$-dimensional high-dimensional expander (HDX) as a sparse graph $G$ with many overlapping $d$-sized cliques. We will use $X(d)$ to denote the set of $d$-cliques in $G$. HDXs possess, among others, the following nice properties:
\begin{enumerate}
    \item \emph{Vertex-clique sampling.} As before, let $x$ be a witness for the input 2CSP. If the input constraints are distributed so that each clique in $X(d)$ is associated with one constraint (in the sense that every vertex/player in that clique ends the protocol knowing the value of the constraint evaluated on $x$), and $x$ violates a $1/\pl n$ fraction of the input constraints, then $\Omega(1)$ fraction of the vertices are a part of some clique responsible for a violated constraint.
    \item \emph{Clique-edge sampling.} Even if some constant (say $0.001$ fraction) of edges in $G$ are corrupted, only a $1/\pl n$ fraction of cliques will contain more than $0.1$ fraction of corrupted edges. This sampling property allows HDXs to support the kind of fault-tolerant routing protocols that BMV use.
\end{enumerate}
Sparse constructions of HDXs in which each vertex belongs to $2^{\pll n}$ cliques are known~\cite{LSV2, kaufman2018construction}, and any such construction suffices here.

\paragraph{Idea 1: Introducing multiparty computation}
We firstly build on the distributed-computing viewpoint from BMV in order to create a quantum-friendly distribution protocol for a witness to an input 2-local Hamiltonian, and then we discuss how we convert our protocol into an output local Hamiltonian with larger promise gap. 
The protocol in BMV still begins by repeating each bit of the input witness $x$ across some clique of players in the HDX, so it does not by itself resolve the quantum obstruction.

The key observation is that one need not repeat the witness $x$ itself. To increase the \emph{per-player visibility} of violations of $x$, we only need many players to be able to  obtain consistent copies of the input constraints evaluated on $x$. This suggests replacing repeated copies of $x$ with a more indirect mechanism: the relevant information about $x$ should be made available to the players only in an encoded or shared form, from which they jointly compute the violation information, and only that classical information is then copied. One must still ensure that this violation information was obtained from a correct evaluation of the input constraints on $x$, rather than being supplied arbitrarily by the prover; in the classical BMV protocol this is automatic because the players themselves locally perform the computation from the repeated inputs. The quantum analogue should therefore let the players jointly evaluate the constraint on shared quantum data in a way that guarantees that the output bit is exactly the measurement outcome of that constraint.

This suggests using multiparty computation (MPC). In an MPC protocol, mutually distrustful players jointly evaluate a circuit with one-bit output on a \emph{secret-shared} input, where each player holds one share. The adversarial fault-tolerance of MPC guarantees that, provided the number of dishonest players is small enough, the shared output bit has the correct distribution. Quantum secret sharing is possible even though quantum cloning is not: for example, the quantum Reed--Solomon code~\cite{aharonov1999fault, cleve1999share} yields a secret sharing scheme, just as classical Reed--Solomon does in the classical setting. Indeed, the fact that local views below the code distance are maximally mixed is exactly the secrecy property needed for secret sharing. As in the classical setting, \emph{verifiable} quantum secret sharing can be used to build secure quantum multiparty computation~\cite{crepeau2002secure, ben2006secure}. The idea that we can replace witness copying with secret sharing and MPC is the basis of our template for quantum locality-preserving gap amplification.

Fortunately, an HDX communication graph is very compatible with MPC and secret-sharing because HDXs contain many cliques. In~\Cref{sec:discussion} we discuss why extending our strategy to quantize Dinur's gap amplification procedure (that uses generic expander graphs) might be hard.

\paragraph{Our fault-tolerant protocol}
We now present a simplified version of our fault-tolerant protocol for distributing violation information among players that we will eventually compile into a local Hamiltonian.\footnote{In reality, \Cref{prot:protocol-intro} is not quite our final protocol because of difficulties that we will introduce later in this technical overview.} At a high level, this protocol simulates the \emph{sequential measurement circuit} $C$ of the input Hamiltonian $H_{\mathrm{in}}$ (the circuit that measures all the terms of $H_{\mathrm{in}}$ in turn) on an arbitrary witness state $\ket{\psi}$. We will assign the constraints of $C$ to cliques of $G$ at the start of the protocol. The qubits of the input witness that each constraint acts on will be secret shared among the players in the corresponding clique. At the end of the protocol, each player will learn the outcome of many constraint measurements, even in the presence of communication faults. Like in BMV, this is the source of gap amplification.

For this overview, we assume that our input Hamiltonian is 2-local and that its sequential measurement circuit has the following form.
\begin{definition}[Form of input Hamiltonian and its measurement circuit]
\label{def:hamiltonian-circuit-form}
We assume that $H_\mathrm{in} = \sum_{j=1}^{M} h_j$ acts on $n$ qubits, that each $h_j$ is 2-local, and that each qubit is involved in at most $O(1)$ many Hamiltonian terms.

Given \(H_{\mathrm{in}}\), we can construct its \emph{sequential measurement circuit} \(C\) as in \Cref{def:sequential-measurement-circuit-appendix}. \(C\) has a depth of $T=O(1)$ timesteps and $N=n/2$ gates per timestep. Every gate in $C$ represents a measurement circuit for one term $h_j$ of $H_{\mathrm{in}}$, has two input and two output wires, and writes a single classical bit on an additional wire as measurement outcome. The circuit when applied on any state $\ket{\psi}$ produces an output state of the form
\begin{equation}\label{eq:honest-exec-C}
C\ket{\psi}=\sum_{\mathbf y}\ket{\mathbf y}\otimes Q_{\mathbf y}\ket{\psi},
\end{equation}
where $\ket{\mathbf y}$ denotes the $NT$-length string of measurement outcomes, and $Q_y\ket{\psi}$ denotes the final output state conditioned on measuring $\ket{\mathbf y}$.

We will denote the $N$ gates in the $t^{th}$ timestep as $G_{1,t},\ldots,G_{N,t}$. Note that the wiring of $C$ between timesteps is a permutation: each of the $2N$ output wires at time $t$ will become one of the $2N$ input wires at time $t+1$.
\end{definition}

Given \(H_{\mathrm{in}}\) and $C$, we use an HDX \(G\) with \(N\)
\(d\)-cliques, i.e. equal to the number of gates in each timestep of $C$. Clique \(i\) will be responsible for executing
\(G_{i,t}\) at every time step \(t\), and in the protocol we allow for communication along the edges of $G$.

\begin{protocol}[Honest execution of protocol for simulating sequential measurement circuit]
\label{prot:protocol-intro}
\end{protocol}
\begin{enumerate}
\item Let $\ket{\psi}$ be a $2N$-qubit state on which we want to simulate $C$. In an honest execution of the protocol, every qubit of $\ket{\psi}$ is encoded into a $[d,1,\Delta]$ quantum error-correcting code $\mathcal{Q}$ that `supports' MPC (we will suppress the precise conditions on the code for now) with sufficiently large distance $\Delta$. This produces $d$ `shares' of every qubit of $\ket{\psi}$. For every $i \in [N]$, the $d$ shares of each wire that feeds into $G_{i,1}$ are distributed among the $d$ players in the $i$th $d$-clique.
\item For every time step \(t \in [T]\):
    \begin{enumerate}
        \item Each clique \(i\) executes a logical version of \(G_{i,t}\) using a quantum multiparty computation protocol (see~\Cref{thm:distributed-FTQC}). At the end of this step, the two output wires of $G_{i,t}$ remain separately encoded in $\mathcal{Q}$, whereas the measurement bit is encoded in a repetition code, i.e.~each player in clique $i$ holds one classical bit representing the measurement outcome of $G_{i,t}$.
        \item In this step, we route the output wires of $G_{i,t}$, so that  clique $i$ gets the encodings of the input wires of $G_{i,t+1}$. We do this by noting that the wiring of $C$ between timesteps $t$ and $t+1$ can be represented as a 2-regular bipartite graph with $N$ vertices on each side, representing the cliques at times $t$ and $t+1$. This set of edges can be decomposed into two perfect matchings (equivalently, permutations) $\pi_{1,t}$ and $\pi_{2,t}$, using Hall's theorem. We route these two matchings using two calls to the fault-tolerant routing procedure in~\Cref{prop:ft-routing}.
    \end{enumerate}
    \item Every player computes its final output by outputting YES if and only if every measurement outcome that it learned during step 2 is correct (i.e. indicates a satisfied term). %
\end{enumerate}

We use the fault-tolerance of the multiparty computation protocol in step 2(a) (\Cref{thm:distributed-FTQC}) and the routing protocol in 2(b) (\Cref{prop:ft-routing}; this routing protocol is a straightforward quantization of~BMV routing) to show that \Cref{prot:protocol-intro} satisfies the following guarantee.\footnote{We focus on the gap regime of $1/\pl n \to\Omega(1)$ for simplicity; our arguments easily generalize to any regime $\eps \to \eps\pl n$, using the appropriate properties about HDX.} 
\begin{lemma}[Informal]\label{lem:protocol-informal}
In the YES case, when there are no communication faults in the protocol, an input state equal to the prescribed encoding of a zero-energy state of $H_\mathrm{in}$ causes every player to output YES. In the NO case, if the input Hamiltonian $H_\mathrm{in}$ has ground energy at least $1/\pl n$, and a constant fraction (say 0.001 fraction) of communication edges in~\Cref{prot:protocol-intro} are faulty, then every input state to~\Cref{prot:protocol-intro} causes an $\Omega(1)$ fraction of vertices to output NO with high probability.
\end{lemma}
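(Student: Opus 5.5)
The plan is to handle the two cases separately and use only the fault-tolerance guarantees of the two subroutines invoked in \Cref{prot:protocol-intro}: \Cref{thm:distributed-FTQC} for the clique-wise MPC in step 2(a), and \Cref{prop:ft-routing} for routing in step 2(b).

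\textbf{Completeness.} In the YES case there are no faults, so each call to \Cref{thm:distributed-FTQC} implements the logical gate $G_{i,t}$ exactly on the $\mathcal{Q}$-encoded wires. Each call to \Cref{prop:ft-routing} delivers the encodings of the wires exactly along $\pi_{1,t},\pi_{2,t}$. By induction on $t$, the joint state after step $t$ is the $\mathcal{Q}$-encoding of the first $t$ layers of $C$ applied to $\ket{\psi}$. The measurement bits are repetition-encoded across the corresponding cliques. By \eqref{eq:honest-exec-C}, a zero-energy $\ket{\psi}$ has $Q_{\mathbf y}\ket{\psi}=0$ unless $\mathbf y$ is the all-satisfied string. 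Hence every bit any player learns says "satisfied", and every player outputs YES with certainty.

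\textbf{Soundness.} The argument proceeds in three steps.

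\begin{enumerate}
\item \emph{Bad cliques are rare.} Fix the faulty edge set $F$ with $|F|\le 0.001|E(G)|$. By clique--edge sampling, only a $1/\pl n$ fraction of cliques in $X(d)$ contain more than a $0.1$ fraction of faulty edges. Call such cliques \emph{bad}. The routing protocol of \Cref{prop:ft-routing} succeeds on all but a small fraction of destination cliques, which I would also declare bad. Choosing the HDX parameters appropriately, the total bad fraction is $\delta\le c\cdot\eps_{\mathrm{in}}$ for a small constant $c$.

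\item \emph{Reduction to an ideal circuit.} On good cliques, the MPC guarantee of \Cref{thm:distributed-FTQC} applies: with a minority of dishonest parties, the output bit is distributed exactly as the logical measurement of $h_j$ on the encoded state. So I would argue the whole execution is equivalent to running $C$ on some effective input state $\rho$. The gates belonging to bad cliques, and the wires routed through them, are replaced by arbitrary adversarial channels.

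\item \emph{Violations survive.} Since $H_{\mathrm{in}}$ has bounded degree, the terms touched by bad locations form at most an $O(\delta)$ fraction. Every state has energy at least $\eps_{\mathrm{in}}$ on $H_{\mathrm{in}}$. Using a detectability-lemma-type argument adapted to the fixed-depth sequential circuit, the measurements on the remaining good terms yield at least an $\Omega(\eps_{\mathrm{in}})$ fraction of NO outcomes, with high probability by concentration.
\end{enumerate}

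Finally, vertex--clique sampling on the HDX converts "an $\Omega(\eps_{\mathrm{in}})\ge 1/\pl n$ fraction of cliques hold a violated bit" into "an $\Omega(1)$ fraction of vertices belong to such a clique". The ideal-world reduction has to be applied for each vertex that learns the bit. This covers players of the clique itself, and also players who received copies through routing, provided the repetition-encoded bit was delivered correctly.

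\textbf{Main obstacle.} I expect step 3 to be the hardest. Faults at bad locations can act coherently and entangle with the witness, so it is not immediate that the measurement statistics on the good terms are those of a genuine state with energy at least $\eps_{\mathrm{in}}$. Adversarial gates between time layers could in principle "repair" a state between measurements.

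My first attempt would use the fact that the adversarial channels act only on wires adjacent to an $O(\delta)$ fraction of terms. I would then define a modified Hamiltonian $H'$ that drops those terms; it still has energy at least $\eps_{\mathrm{in}}-O(\delta)$. Next I would apply the detectability lemma to $H'$, whose terms commute in layers matching the $T=O(1)$ depth. This lower-bounds the probability that some good measurement fires by $\Omega(\eps_{\mathrm{in}})$. Finally, Markov and second-moment arguments would upgrade the expected fraction to a high-probability statement.
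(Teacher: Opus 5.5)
Your completeness argument, and the skeleton of your soundness argument (bad cliques are rare, reduce to a faulty logical run of $C$, finish with vertex--clique sampling), match the paper's sketch. The gap is in step 3, and it is exactly the issue the paper's robust detectability lemma (\Cref{lem:dl-appendix}) is built to resolve.

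After step 2 you have the sequential measurement circuit with adversarial operations inserted \emph{between} its layers. The good measurements of layer $t+1$ act on wires that have just passed through adversarial gates of layer $t$. Dropping terms from $H_{\mathrm{in}}$ removes measurements, not these operations. So the circuit that produces the good bits is not the sequential measurement circuit of $H'$ on any input: it is interleaved with arbitrary operations, and with honest measurements of the dropped terms, which also disturb the state. The detectability lemma for $H'$ therefore does not apply to it.

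The missing idea is to push the faults back into the input. Write each corrupted gate as $D_tG_t$. Then the operator $\widetilde D_t=G_1^\dagger\cdots G_t^\dagger D_tG_t\cdots G_1$ is supported on the input wires of the backward lightcone of $t$. Hence the faulty run equals the \emph{ideal} circuit applied to $\ket{\Phi}_{S,K_A}\otimes\ket{0}_{A\setminus K_A}$ with $|K_A|=O(|F|)$ (\Cref{lem:backward-fault-localization}). The effect on the system register is absorbed into a new witness. This costs nothing, because \emph{every} witness has energy at least $\eps_{\mathrm{in}}M$. Only the ancillas in $K_A$ are disturbed, and resetting them changes only the $O(|F|)$ output bits in their forward lightcone (\Cref{lem:reset-locality}). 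You then apply the detectability lemma to the full $H_{\mathrm{in}}$, not to a truncated $H'$, and lose at most those $O(|F|)$ bits. Note that the discarded set is of the form $\FL(\BL(\cdot))$ of the fault locations, not just the terms they touch.

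Your plan for the high-probability part also fails. A lower bound on the probability that \emph{some} good measurement fires certifies a single violation, whereas you need $\Omega(\eps_{\mathrm{in}}M)$ of them. Markov plus a second-moment bound cannot upgrade an expectation to a high-probability statement here: the outcomes generated from an adversarial entangled witness have no independence structure with which to bound covariances. In the non-commuting case, even relating the expected number of violations to the energy already requires a detectability-type argument, since earlier layers disturb the state. The paper instead gets concentration directly from the operational detectability lemma, $\Pr[|y|\le \ell]\le 2^{-c\ell}/(E_0-a\ell)$ (\Cref{lem:ideal-detectability}). That lemma is proved by running the pyramid/$XY$-sector argument on the incoherent sum over measurement outcomes.

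Two smaller points. For inputs far from the code space, you need the VQSS rejection step, or the friendliness of \Cref{thm:distributed-FTQC}, to obtain an effective logical witness at all. In the final sampling step, the measurement bits are never routed. What you actually need to control are the bad vertices inside good cliques, which may hold incorrect copies of the bit.
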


\begin{proof}[Proof sketch]
The YES case is straightforward. In the NO case, either the input state to the protocol is (close to being) the legal distributed encoding---as described in step 1---of some logical state $\ket{\psi}$, or it is far from a legal encoding. In the latter case, the first step in our MPC protocol in 2(a) will cause many players to reject and abort the protocol. Therefore, we assume for this proof sketch that the protocol's input is a legal encoding of some state $\ket{\psi}$. We will show that our protocol simulates the execution of $C$ on $\ket{\psi}$, up to some quantitative error bounds, and distributes the measurement outcomes resulting from the simulation of $C$ among the players so that a constant fraction of players output NO at the end.

Call any clique in $X(d)$ with less than $10\%$ corrupted internal edges `good', and other cliques `bad'. By the fault-tolerance guarantees of the quantum MPC protocol we use in step 2(a) (\Cref{thm:distributed-FTQC}), each good clique will execute every gate assigned to it correctly. A bad clique, on the other hand, may introduce an arbitrary logical error on the output wires of any gate assigned to it. Recall the `clique-edge sampling' property of HDXs that we introduced earlier. This property guarantees that, even when a constant fraction of communication edges are faulty, only a $1/\pl n$ fraction of cliques in $X(d)$ are bad. Meanwhile, using the fault-tolerance of the routing protocol (\Cref{prop:ft-routing}), the routing in step 2(b) also introduces logical faults on at most $1/\pl n$ fraction of the $2N$ wires. Thus our protocol amounts to an encoded simulation of $C$ that has up to $1/\pl n$ fraction of faulty logical gates in each layer/timestep. 

Denote by $\tilde C$ the version of $C$ with $1/\pl n$ fraction faulty logical gates per timestep that our protocol simulates. In analogy with~\eqref{eq:honest-exec-C}, let the output of $\tilde{C}$ on $\ket{\psi}$ be denoted
\begin{equation}\label{eq:final-state}
\widetilde{C}\ket{\psi}=\sum_{\mathbf y}\ket{\mathbf y}\otimes \widetilde{Q}_{\mathbf y}\ket{\psi},
\end{equation}
where $\ket{\mathbf y}$ denotes the $NT$-length string of measurement outcomes, and $\tilde Q_y\ket{\psi}$ denotes the final output state conditioned on measuring $\ket{\mathbf y}$. In~\Cref{sec:DL-new} we prove a strengthening of the detectability lemma of~\cite{AharonovALV09} which allows us to conclude that the distribution over $\mathbf{y}$ that $\tilde C$ outputs is not so different from that which the uncorrupted $C$ would output. More precisely, \Cref{lem:dl-appendix} allows us to conclude that, if $H_{\mathrm{in}}$ has ground energy at least $1/\pl n$, then measuring the $\ket{\mathbf{y}}$ register in the output of $\widetilde{C}$ results in a distribution over outcomes $\mathbf{y}$ with $1-o(1)$ weight on strings with at least $1/\pl n$ fraction of 1s (violations).

Recall that, at the end of~\Cref{prot:protocol-intro}, all the players in a clique responsible for gate $G_{i,t}$ hold one copy of the bit of the $\ket{\mathbf{y}}$ register associated with the measurement outcome of $G_{i,t}$. The `vertex-clique sampling' property of HDXs (discussed above) then implies that, if $\mathbf{y}$ is a string with $1/\pl n$ fraction of 1s, a constant fraction of players will see at least one 1 during the protocol.

\end{proof}

\paragraph{Converting the classical fault-tolerant protocol to a gap-amplified 2CSP} 
The classical analogue of~\Cref{lem:protocol-informal} immediately implies classical gap amplification, as observed in BMV. To see why, we must specify how a classical distributed protocol can be converted into an output CSP. This is done by associating each player in $G$ with one large variable in the output CSP. More specifically, an honest assignment to the output CSP in BMV is a \emph{Cook–Levin execution tableau} for the BMV distributed protocol, where each large variable records the complete work-tape history of one player in the protocol. The output CSP has two types of constraints: 2-local communication constraints and 1-local computation constraints. Each communication constraint uses equality checks to verify that the tableaux of two players agree on every transmission along one communication edge, and each computation constraint checks that all the local computation on a single player's work tape was propagated correctly and that the player accepted at the end. Since each large variable holds one player's work tape, the log of the alphabet size of the output CSP is the same as the per-player communication of the protocol, which in our case will be roughly the number of cliques that each player is part of, i.e. $2^{\pll n}$ bits. 

The argument that a classical analogue of~\Cref{lem:protocol-informal} translates into soundness for the output 2CSP is as follows. If an assignment to the output 2CSP (which is meant to be a tableau of a protocol execution) contains faulty transmissions for more than a constant fraction of `communication edges', then at least a constant fraction of the 2-local communication constraints fail. Otherwise, the fact that the protocol is fault-tolerant to a constant fraction of faulty edges, in the sense of~\Cref{lem:protocol-informal}, guarantees that a constant fraction of players output NO. This means that a constant fraction of the 1-local `computation constraints' will fail. In either case, a constant fraction of the terms in the output CSP are violated when the input CSP has unsat value at least $1/\pl n$.

While this is sufficient to show gap amplification in the classical setting, the quantum setting presents unique difficulties. The reason is that performing a Cook-Levin reduction that converts a circuit/protocol into a local Hamiltonian is significantly more complicated than it is in the classical setting, and this circuit-to-Hamiltonian step requires analyzing the soundness of~\Cref{prot:protocol-intro} in a very strong noise model. We discuss these challenges in the following section.

\subsection{Circuit-to-Hamiltonian and the Postselection Problem}
\label{sec:intro-circ-to-H}

We now explain how to convert the distributed protocol $\mathcal P$ of~\Cref{prot:protocol-intro}, which simulates the logical sequential measurement circuit $C$, into a Hamiltonian, and why postselection complicates its soundness analysis.

\paragraph{The ABN circuit-to-Hamiltonian procedure}

The usual quantum analogue of the Cook-Levin reduction, the Feynman--Kitaev clock Hamiltonian \cite{kitaev2002classical}, is not an adequate substitute here for the role that Cook--Levin played in BMV. The issue is robustness: when Cook--Levin is applied carefully to a \emph{fault-tolerant} circuit, an assignment to the resulting CSP with a small fraction of unsatisfied clauses still recognizably encodes a slightly faulty execution of the input circuit. By contrast, Feynman--Kitaev is extremely brittle: violating even a single propagation term allows the prover to cheat arbitrarily.

We therefore use the recent circuit-to-Hamiltonian construction of Anshu, Breuckmann, and Nguyen (ABN) \cite{Anshu_2024}, which is based on gate teleportation. We include an exposition of this construction based on a toy example in~\Cref{sec:ckt-to-H}.

\paragraph{Combinatorial soundness of the ABN construction}
At the level of completeness, the ABN construction can be adapted to a distributed computation just as BMV adapt Cook--Levin in the classical setting. The main difficulty is soundness. The only general soundness analysis currently known for the ABN construction would only allow us to obtain meaningful control over low-energy states of the output Hamiltonian when the protocol $\mathcal{P}$ has near constant physical depth.

However, ABN also provide a cleaner \emph{combinatorial} picture. Namely, they show a combinatorial soundness statement along the following lines (note that we have adapted the following lemma for our distributed-computation setting).
\begin{lemma}[Informal combinatorial soundness of~\cite{Anshu_2024} circuit-to-Hamiltonian]
\label{lem:abn-comb-soundness-intro}
Suppose that $\mathcal P$ is a distributed protocol whose circuit implementation has physical depth $D$, and $H$ is the result of applying the ABN circuit-to-Hamiltonian construction to this implementation\footnote{More precisely, we use a modification of~\cite{Anshu_2024} for distributed computation in~\Cref{sec:ckt-to-H}.}. Let $W_t$ denote the unitary of its $t$th layer. Then any witness which exactly satisfies all but $\epsilon_{\mathrm{ABN}}$ fraction of the local terms of $H$ has the form
\[
\sum_{\vec F : F_1, \dots, F_D \in S} F_D W_D \dots F_1 W_1 \ket{\psi},
\]
for some state $\ket{\psi}$ (ancilla initialisation can be checked by adding extra Hamiltonian terms) and some fixed set $S$ of size $O(\epsilon_{\mathrm{ABN}})$ of edges in the protocol $\mathcal{P}$'s communication graph, where the shorthand $F \in S$ is being used to denote ``$F$ is a Pauli error on a specified set $S$''. The set $S$ can be computed explicitly if it is known which Hamiltonian terms in $H$ are violated.
\end{lemma}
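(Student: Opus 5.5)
The plan is to follow the gate-teleportation structure of the ABN construction. Each layer $W_t$ of the protocol circuit is implemented by teleporting the state of every qudit through an EPR-type resource state with the gate $W_t$ applied. In an honest witness the Bell-measurement outcomes are coherently recorded, so that each time slice holds $W_t\cdots W_1\ket{\psi}$ up to known Pauli frames. The local terms of $H$ are of three kinds: initialization terms (ancillas, and resource states in the right codespace), gate/teleportation terms that check each Bell pair has the stabilizers of $(W_t\otimes I)$ applied to an EPR pair, and Pauli-frame/communication terms that check consistency of the recorded corrections along each edge of the communication graph. I would first fix the set $V$ of violated terms, with $|V|\le \epsilon_{\mathrm{ABN}}\cdot m$. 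Then I would let $S$ be the set of protocol edges, together with their time-slice locations, whose $O(1)$-neighbourhood in the interaction hypergraph meets $V$. Bounded degree gives $|S|=O(\epsilon_{\mathrm{ABN}})$ relative to the edge count, and $S$ is computable from $V$ by construction.

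Second, I would use the fact that every term outside $S$ is exactly satisfied. The witness therefore lies in the joint $+1$ eigenspace of all stabilizer-type terms supported away from $S$. Since these terms commute (they are stabilizers of a product of teleportation gadgets after conjugation by the Clifford/frame structure), this eigenspace is a stabilizer-like code. I would characterize it by its logical operators: any state in it is obtained from the honest history state by acting on the unconstrained locations. On each edge in $S$, the unconstrained stabilizers of the local Bell pair leave exactly the freedom of a single-qudit Pauli (or arbitrary operator, expanded in the Pauli basis) on the teleported qudit at that time. Expanding that operator in the Pauli basis gives coefficients, which I absorb into the sum. Propagating these insertions through the teleportation identities $W_t$ in time order produces the form $\sum_{\vec F} F_D W_D\cdots F_1 W_1\ket{\psi}$, with each $F_t$ a Pauli supported on the layer-$t$ portion of $S$. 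The initialization terms fix $\ket{\psi}$ to have correctly prepared ancillas.

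The main obstacle is the second step: showing that exact satisfaction of the remaining terms forces the state into this specific coherent sum. In particular, I must show that no nonlocal correlation between different Pauli insertions is hidden outside $S$ beyond what the sum allows. The coefficients of $\vec F$ may be arbitrary and correlated, which is what enables postselection-like behaviour. I would handle this by working in the stabilizer formalism. I would compute the centralizer of the satisfied stabilizer group restricted to a neighbourhood of $S$, and show that it is generated by Pauli operators on the $S$-locations together with the global logicals acting on $\ket{\psi}$. Commutation of the terms and a counting of independent generators per teleportation gadget should make this a direct dimension argument.
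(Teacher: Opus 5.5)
There is a genuine gap. Your choice of $S$ from the violated terms and its size bound are fine, but the rest of the argument is about a Hamiltonian that is not the one built here, and the mechanism that actually makes the statement true is missing.

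\textbf{The construction is different.} In \Cref{sec:ckt-to-H} no Bell outcomes are recorded, there are no Pauli frames, and there are no time slices holding $W_t\cdots W_1\ket{\psi}$. The honest state is obtained from the product $\ket{\Phi_{\mathcal P,\xi}}$ of Choi states $\ket{\Phi_U}$ and ancillas $\ket{0}$ by applying the injective perturbed EPR projectors $Q_e,Q_v$. Every local term has the form $\mathrm{span}\bigl(\Lambda(I-\Pi_{\mathrm{loc}})\Lambda\bigr)$ with $\Lambda=Q^{-1}$, where $\Pi_{\mathrm{loc}}$ projects onto the honest local Choi states and ancillas.

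These terms are non-unitary conjugates, not commuting Pauli stabilizers. Moreover, the protocol teleports CCZ gates (and the modified construction adds the non-unitary $A_\eta$). So the Choi states are not stabilizer states and any ``frame'' corrections would not be Paulis. Your centralizer and generator-counting step therefore has nothing to act on.

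\textbf{The missing idea is the change of basis} $\ket{\psi'}=Q^{-1}\ket{\psi}$. After undoing $Q$, the constraints become the non-interacting, frustration-free ones whose ground state is $\ket{\Phi_{\mathcal P,\xi}}$. Exact satisfaction of a term then says that $\ket{\psi'}$ equals the honest Choi states and $\ket{0}$'s on that term's support. Hence $\ket{\psi'}$ is the honest product state away from the locations $F\cup W$ of violated terms and arbitrary on them. Expanding in the Pauli basis there gives $\ket{\psi'}=\sum_{\mathsf G\in\mathsf{Pauli}^{F\cup W}}c_{\mathsf G}\ket{\xi_{\mathsf G}}\otimes\mathsf G(\cdots)$ at once. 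The obstacle you flag (hidden nonlocal correlations) disappears because of this product structure, not because of a dimension count. Note also that the input $\ket{\xi_{\mathsf G}}$ may depend on $\mathsf G$, so one does not get a single $\ket{\psi}$; the later sections indeed analyze states $\sum_{\mathsf F}c_{\mathsf F}\mathcal P_{\mathsf F}\ket{\phi_{\mathsf F}}$.

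\textbf{You also never use that $Q$ is not the EPR projector.} Re-applying $Q$ gives
$\ket{\psi}=\sum_{\mathsf F}\delta^{\mathsf F}\ket{\Phi_{\mathsf F}}\otimes\bigl(\sum_{\mathsf G}c_{\mathsf G}\mathcal P_{\mathsf G\mathsf F}\ket{0^{N_{\mathrm{anc}}}}\ket{\xi_{\mathsf G}}\bigr)$,
where the faults $\mathsf F$ range over all EPR locations, not just those near violated terms. Even an exact ground state (so $S=\emptyset$) is such a superposition (\Cref{claim:gs}). The statement therefore cannot be proved exactly with $S$ determined by the violated terms.

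What the paper proves (\Cref{prop:combinatorial-soundness}) is weaker but true. The output is exponentially close in trace distance to a mixture, over the orthogonal flags $\ket{\Phi_{\mathsf F}}$, of coherent faulty executions with fault set $F\cup W\cup\supp(\mathsf F)$ of size $\Theta(\varepsilon)$. Getting there requires a quantitative step absent from your plan: bounding the weight on components with $|\mathsf F|_E\ge\alpha|E|$ or $|\mathsf F|_V\ge\alpha|V|$. It has three parts.
\begin{enumerate}
\item On unviolated locations $\ket{\psi'}$ is a fixed product state, so this weight is an i.i.d.\ Bernoulli tail with per-location rate $\delta^2/(1+\delta^2)$, controlled by Chernoff.
\item The $Q$'s acting near violated terms can distort relative weights by at most $(4^{-m}\delta_{\mathrm{vertex}})^{-2\varepsilon|V|}(4^{-m}\delta_{\mathrm{edge}})^{-2\varepsilon|E|}$, because $4^{-m}\delta\, I\preceq Q\preceq I$.
\item The injectivity parameters must be chosen, $\delta_{\mathrm{edge}},\delta_{\mathrm{vertex}}\le e^{-\mathrm{poly}(m,\log|V|)}$ with a compatibility relation between them, so that the tail beats this distortion.
\end{enumerate}
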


In particular, \Cref{lem:abn-comb-soundness-intro} shows that a violated Hamiltonian term in $H$ has only a bounded influence: unlike in Feynman--Kitaev, violating one local term does not let the prover arbitrarily corrupt the entire future computation. Instead, it lets the prover introduce errors only on a bounded set of `communication edges' from $\mathcal{P}$. If the protocol $\mathcal P$ is adversarially fault-tolerant against faults on a fixed bounded-size set of edges, the final outcome is essentially unchanged. Faults on a fixed bounded-size set of circuit wires/edges are exactly the type of fault pattern that our multiparty computation protocol is designed to tolerate.

\paragraph{Postselection and its effect on MPC}
There is, however, a subtlety here that does not arise in the classical setting. The sum in the expression
\[
\sum_{\vec F : F_1, \dots, F_D \in S} F_D W_D \dots F_1 W_1 \ket{\psi}
\]
is a \emph{coherent sum} over the Pauli errors represented by different $\vec F$s. This means that the errors that the fault-tolerance of $\mathcal P$ must handle include \emph{non-unitary} operations, such as postselection, which can be produced as linear combinations of unitaries. As a toy example, suppose that $\mathcal P$ consists of a one-qubit circuit. Then
\[
\Big(\frac{I+Z}{2}\Big) W_D \dots W_1 \ket{\psi}
\]
is a valid low-energy witness state with only one fault location, and its operational interpretation is that the adversarial prover executes the protocol $\mathcal P$ correctly and then \emph{postselects on $\ket{0}$}. Thus the prover should be viewed as having the ability to postselect as part of its cheating strategy.

This becomes a serious issue when the ABN Hamiltonian is used to encode an MPC protocol. Fix a gate $G_{i,t}$ and consider step 2(a) of \Cref{prot:protocol-intro}. The first step in any MPC protocol (including ours) is that the participating players run \emph{verification of secret sharing} (VSS), so as to jointly certify that the input state shared between them is a valid distributed encoding of some underlying state. The players then run the MPC protocol for $G_{i,t}$. Provided the protocol succeeds, every player obtains one copy of the classical bit that indicates whether the constraint $G_{i,t}$ measures was violated (1) or satisfied (0). The witness data remains encoded in a quantum code and shared between the players.

Under the usual assumptions that cheating consists only of trace-preserving operations and that not too many players cheat, this procedure is secure: the measurement outcome that the honest players obtain at the end will reflect an execution of $G_{i,t}$ on the input witness. Allowing postselection creates two new problems.

\begin{enumerate}
\item \emph{Postselection Problem 1: VSS must have zero error.}
Existing verifiable quantum secret sharing (VQSS) protocols \cite{crepeau2002secure, ben2006secure} have a small but nonzero probability of failure: even when the number of cheaters is below threshold, the protocol may very rarely make the wrong decision about whether the shared state is a legal codeword. This is harmless in cryptographic applications, where the failure probability is exponentially small in the number of players. Here it is fatal, because cheaters can simply postselect on that failure event.

\item \emph{Postselection Problem 2: Postselection can bias the final classical readout.} 
For data that remain encoded in a quantum codeword, postselection attacks on a bounded number of shares are no more dangerous than ordinary faults, because quantum error correction protects against arbitrary local quantum operations. The problem arises from the classical violation data that effectively becomes encoded in a repetition code that is distributed between the players at the end of the MPC protocol.

Recall (see~\Cref{def:hamiltonian-circuit-form}) that an honest execution of the sequential measurement circuit $C$ on $\ket{\psi}$ produces a state of the form $C\ket{\psi} = \sum_{\mathbf{y}} \ket{\mathbf{y}}\otimes Q_{\mathbf{y}}\ket{\psi}$. Now suppose for simplicity that this sum is supported over only two measurement strings $0^{NT}$ and $1^{NT}$, and furthermore $\|Q_{0^{NT}} \ket{\psi}\|^2 = \epsilon$ and $\|Q_{1^{NT}} \ket{\psi}\|^2 = 1-\epsilon$. Thus $\ket{\psi}$ violates every constraint in $H_\mathrm{in}$ with high probability and every player should reject except with small probability. If we ignore the quantum shares, then at the end of step 2 of~\Cref{prot:protocol-intro} the players jointly hold the repetition encoding of $\mathbf{y}$,
\begin{equation}
\label{eq:final-joint-state-classical}
\epsilon \: \ketbra{0^{NTd}} + (1-\epsilon)\ketbra{1^{NTd}}.
\end{equation}
Under ordinary circumstances, the honest players would report violation except with probability $\epsilon$. But if even a single cheater now postselects one of its own shares of~\Cref{eq:final-joint-state-classical} onto $\ket{0}$, the all-$1$ branch is removed and the honest players always accept. Since our protocol must tolerate far more than one cheater, applying the ABN circuit-to-Hamiltonian to \Cref{prot:protocol-intro} as written cannot yield a sound reduction.
\end{enumerate}

\subsection{Solving the Postselection Problem Part 1: Zero-Error Quantum MPC}
\label{sec:solving-post-prob-1}

\paragraph{Idea 2: Zero-error MPC using QLTCs and gate teleportation}
We first address Postselection Problem~1. The issue is not that existing VQSS protocols fail too often; it is that they fail at all. To obtain soundness against postselection, we need a \emph{zero-error} VQSS primitive. The authors who proposed the first nonzero-error VQSS protocol \cite{crepeau2002secure} also attempted to obtain a zero-error protocol, but did not succeed, leaving it as an open question.

In this work we resolve that question---at least against a $\frac{1}{\poly\log(n)}$ fraction of cheaters in $n$-party protocols, and in all parameter regimes if we assume quantum locally testable codes with constant soundness. Interestingly, our solution is genuinely quantum: there is no obvious classical analogue.

Classically, the standard zero-error VSS paradigm is the protocol of Ben-Or, Goldwasser, and Wigderson (BGW) \cite{BGW1988}. Their protocol is designed around a particular classical code which is a tensor product of two Reed--Solomon codes. However, there is no obvious quantum analogue of such an object: the tensor product of two quantum Reed--Solomon codes does not have good distance. To design zero-error VQSS, we therefore isolate the structural properties of the Reed--Solomon tensor product that the BGW classical protocol actually uses.

\begin{enumerate}
\item \emph{Good distance.} The code corrects errors on a constant $\delta$ fraction of coordinates.

\item \emph{Localized verification.} The checks of the code can be partitioned into sets $q_1,\dots,q_n$---one set for each player. If $p_i$ denotes the coordinates touched by the checks in $q_i$, then:
\begin{enumerate}
\item There is a constant $c\in(0,1)$ such that the union of the $p_i$'s belonging to any $c$ fraction of players covers at most a $\delta$ fraction of the coordinates. This prevents a constant fraction of cheating players from corrupting the codeword beyond repair.
\item There is a constant $\epsilon$ such that if only an $\epsilon$ fraction of the check-sets $q_i$ are misreported or violated, while all the others are correctly reported and pass, then the word differs from a codeword only on the union of the corresponding bad coordinate sets $p_i$.\footnote{In the usual MPC setting, repeated uses of VQSS are controlled by localizing all corruptions on the output state of VQSS to the shares of the state held by the cheating players, so that faults introduced by the protocol  always happen on the same set of coordinates and do not spread when the protocol is executed repeatedly. Generic QLTC local testability does not guarantee this localization property as a black box. For repeated uses of VQSS, one therefore also needs a sufficiently strong robust decoder~\cite{nguyen2025quantum}; see \Cref{algo:error-correction}.} This means that, even if a constant fraction of cheating players misreport their checks, the checks on the honest players alone can guarantee that the shared word is close to a codeword.
\end{enumerate}

\item \emph{Support for MPC.} One can implement the required logical operations fault-tolerantly on a distributed codeword using local operations and communication among the players.
\end{enumerate}

Our first observation is that Properties~1 and~2 are close to what a quantum locally testable code (QLTC) with good distance, bounded degree, and constant soundness already guarantees. The difficult property is Property~3. Classically, only very special codes such as Reed--Solomon are known to support general MPC efficiently.

In the quantum setting, however, we can obtain Property~3 by \emph{gate teleportation}~\cite{gottesman1999demonstrating}. Every CSS code already supports transversal Pauli and CNOT gates. To achieve a universal gate set, we need to add support for $T$ or CCZ gates. This can be done using a CCZ state distillation protocol followed by teleportation, similarly to how it is done in the fault-tolerance literature~\cite{nguyen2025quantum}. However, we do need to carefully design a distillation protocol to handle MPC adversarial corruptions, which we achieve using a linear-distance quantum code with transversal CCZ gates\footnote{We remark that good codes with transversal CCZ operations were developed only recently \cite{nguyen2025good, golowich2025asymptotically}, and so our strategy relies in more than one way on recent coding-theoretic advances.}.
This uniquely quantum strategy allows us to achieve universal MPC with \emph{any} base code admitting VQSS, which is unlikely to be possible classically. Letting the base code be any QLTC, we get a secure zero-error quantum MPC protocol (\Cref{sec:mpc}). We instantiate this template with a recent almost-good QLTC construction~\cite{dinur2024expansion} and obtain a $1/\pl n$-tolerant protocol, which in turn leads to the unconditional part of our main~\Cref{thm:main-intro}.

\paragraph{The commuting case}
For commuting input Hamiltonians, Postselection Problem~2 disappears. The reason is that, when the terms of the input Hamiltonian have a simultaneous eigenbasis, there is always an optimal witness $\ket{\psi}$ for the sequential measurement circuit $C$ which, in the formulation of~\eqref{eq:honest-exec-C}, causes $C$ to produce a single \emph{deterministic} measurement outcome string $\mathbf{y}$. Therefore, intuitively at least, the sum in~\Cref{eq:honest-exec-C} has only a single term, and the sort of postselection attack that works on states like~\Cref{eq:final-joint-state-classical} is useless.

Note that commuting Hamiltonians are not known to be in $\mathsf{NP}$; therefore, this is already an advance over the classical case. This analysis is carried out in~\Cref{sec:protocol-for-circuit} in order to show the soundness theorem~\Cref{thm:commuting-main}. (Although we also analyze our gap amplification procedure for non-commuting input Hamiltonians, we retain the commuting analysis because we hope it may illustrate some subset of the key ideas in our work without the additional complications---and modifications to the procedure---which the non-commuting case requires.)

This reasoning fails completely in the non-commuting case. A state having high energy with respect to a non-commuting Hamiltonian does not preclude that a sequential measurement of the terms can, with small probability, return a string of outcomes indicating no violation. For example, the Hamiltonian
\[
\frac12(\proj{0}+\proj{+})
\]
has no common eigenstate for its two terms, yet one can first measure in the $\{\ket0,\ket1\}$ basis, obtain outcome $\ket0$, and then measure in the $\{\ket+,\ket-\}$ basis and obtain outcome $\ket+$ with nonzero probability. Thus, for non-commuting Hamiltonians, further ideas are needed to control postselection.

\subsection{Solving the Postselection Problem Part 2}
\label{sec:solving-post-prob-2}

We saw when we introduced Postselection Problem 2 that, because a classical repetition code does not protect against quantum postselection errors, even a \emph{single} cheater among the players might be able to destroy soundness using a postselection attack on the classical violation data. The postselection attack that we presented worked when the final joint state shared by the players after~\Cref{prot:protocol-intro} had the form given in~\Cref{eq:final-joint-state-classical}. For this to happen, we required two assumptions:
\begin{enumerate}
    \item \emph{Assumption 1:} the \emph{logical distribution} of measurement outcomes (prior to encoding in the length-$d$ repetition code) was supported only on $0^{NT}$ and $1^{NT}$.
    \item \emph{Assumption 2:} at the end of step 2(a) from~\Cref{prot:protocol-intro}, every player obtained a perfect copy of the measurement outcome of $G_{i,t}$.
\end{enumerate}
As it turns out, our solution to Postselection Problem 2 will require breaking both of these assumptions.

For simplicity, we assume here that all cliques are `good', in the sense that they contain an honest majority. We also assume that cheating players behave honestly until the end of the protocol (at which point we allow them to postselect the measurement outcomes they obtained during step 2(a) of~\Cref{prot:protocol-intro}). We will be able to illustrate all of our remaining key ideas by showing how we deal with cheating in this toy setting; the analysis is done in full generality in~\Cref{sec:nc-protocol}.

\paragraph{Idea 3: Add random noise to the classical readout}
We begin by showing how breaking Assumption 2 can help with the problem. The issue with step 2(a) as it is written in~\Cref{prot:protocol-intro} is that every player, including cheaters, receives a perfect copy of the measurement outcome bit, which means that cheaters can manipulate the global state using postselection with perfect confidence. We can partially counter postselection attacks by making every player's information about the measurement bit \emph{slightly imperfect}. This can be achieved by altering 2(a) of~\Cref{prot:protocol-intro}. Recall that step 2(a) consists of an encoded simulation of a gate/constraint $G_{i,t}$, and as written it outputs a repetition encoding of the measurement outcome of $G_{i,t}$ which is distributed between the players in the $i$th clique. We modify this step so that, instead of outputting a perfect repetition code encoding, the simulation protocol outputs a \emph{non-orthogonal} or \emph{noisy} repetition encoding of the measurement outcome of $G_{i,t}$.

In particular, suppose we construct the multiparty measurement protocol used in step 2(a) of~\Cref{prot:protocol-intro} so that each measurement outcome bit which a player receives at the end is independently flipped from its true value with probability $0<\gamma<1/2$. (Specifically, this is achieved by using~\Cref{algo:noisy-rep-code-output} in the place of~\Cref{algo:rep-code-output}.) Then, instead of~\Cref{eq:final-joint-state-classical}, the final joint state held by the players (under the assumptions as before that the protocol was executed honestly until the end) would take the form
\begin{equation}
\label{eq:final-joint-state-classical-noisy}
\sum_{x \in \{0,1\}^{NTd}} \gamma^{|x|} (1-\gamma)^{NTd - |x|} X^{x} \Big( \epsilon \: \ketbra{0^{NTd}} + (1-\epsilon)\ketbra{1^{NTd}} \Big) X^x,
\end{equation}
where $X^x$ represents a bit-flip operator on the qubits where $x_i = 1$.

This is already an effective countermeasure against a \emph{single} cheater (who postselects only a single one of its measurement outcomes). The point is that this deliberately introduced noise partially decouples a player's local value from the global branch. A cheating player who sees a local $0$ can no longer infer that the global state is close to the all-$0$ branch: its own bit may have been flipped by the injected noise. Hence postselecting its local share onto $\ket0$ does not simply remove the violating branch; when the cheater's own share has been flipped, postselecting that share on 0 actually selects the 1 branch globally. In the example above, the honest players' probability of seeing $1$ after one cheater postselects a single bit on $0$ becomes approximately $\gamma + \gamma(1-\epsilon)/\epsilon$ to first order in $\gamma$. Meanwhile, if the noise is handled carefully, we can show that completeness is not affected too much by this modification.

However, this idea does not protect soundness against \emph{many cheaters which collude across cliques}. For example, suppose there is exactly one cheater in every $d$-clique, and that every such cheater postselects its own shares of the joint state \Cref{eq:final-joint-state-classical-noisy} onto $\ket{0}$. Writing $L=NT(d-1)$, the (unnormalized) residual state on the honest players' shares after this coordinated postselection is
\begin{equation}
     \sum_{x \in \{0,1\}^{L}} \gamma^{|x|}(1-\gamma)^{L-|x|} X^x \Big[ \epsilon(1-\gamma)^{NT} \proj{0^L} + (1-\epsilon)\gamma^{NT} \proj{1^L} \Big] X^x,
\end{equation}
where the noise $X^x$ is now applied only on the honest players' shares.

As such, unless $\epsilon$ is extremely small in this example, the all-one branch will still be exponentially suppressed. The possible size of $\epsilon$, and more generally the form of the logical distribution of measurement outcomes for the input Hamiltonians of interest, is determined by the behaviour of the sequential measurement circuit on $\ket{\psi}$, which (as we mentioned in the proof sketch of~\Cref{lem:protocol-informal}) we control using a robust modification of the detectability lemma of~\cite{AharonovALV09} that we prove in~\Cref{sec:DL-new}. We were not able to obtain enough control over the logical measurement-outcome distribution in order to counter this kind of across-clique adversarial postselection attack; as such, we need another idea in order to prove the combinatorial soundness of our construction.

\paragraph{Idea 4: friendly noisy postselection}
Our next idea is to allow the \emph{honest} players to use postselection too in order to counter adversarial postselection by cheating players. More specifically, we introduce a single-qubit non-unitary operator $A$ of the following form:
\begin{align}
\label{eq:A-def-intro}
    A_\eta = \sqrt{1-\eta} \ketbra{1} + \sqrt{\eta} \ketbra{0}
\end{align}
For small $\eta > 0$, $A_\eta$ implements a \emph{noisy} postselection onto $\ket{1}$. Interestingly, the combinatorial-soundness analysis for the ABN circuit-to-Hamiltonian formalism is able to handle input circuits that have non-unitary operations like $A_\eta$. As we explain at the beginning of~\Cref{sec:ckt-to-H}, the essential step in the combinatorial analysis of ABN is to find a change of basis under which the Hamiltonian witness becomes a tensor product of \emph{resource states} that are used to implement the computation by teleportation. The formalism does not distinguish between resource states that implement unitary gates and resource states that implement non-unitary gates. A cheating prover can supply a witness state that encodes postselection by putting in resource states for non-unitary computations (indeed this is how the fault model expressed in~\Cref{lem:abn-comb-soundness-intro} arises). We can also design our protocol to take advantage of this by putting resource states for non-unitary operations in the honest witness.

Suppose we have the same setup as in `Idea 3': step 2 of \Cref{prot:protocol-intro} is executed honestly with noisy readout, producing~\Cref{eq:final-joint-state-classical-noisy} as the final joint state, and as before we assume every $d$-clique has exactly one cheater who postselects its final shares onto $\ket{0}$. Now suppose that the $(d-1)$ honest players in each clique perform $A_\eta$ on every share that they hold. Let
\begin{align*}
    r_0 &\coloneqq \eta(1-\gamma)+(1-\eta)\gamma \\
r_1 &\coloneqq \eta\gamma+(1-\eta)(1-\gamma).
\end{align*}
It can be shown by calculation that, after both operations, the all-zero and all-one branches have unnormalized weights
\begin{equation}
\epsilon\bigl((1-\gamma)r_0^{d-1}\bigr)^{NT}
\qquad\text{and}\qquad
(1-\epsilon)\bigl(\gamma r_1^{d-1}\bigr)^{NT}
\end{equation}
respectively.
We can therefore choose $\eta$ so that
\begin{equation*}
(1-\gamma)r_0^{d-1}
=
\gamma r_1^{d-1}.
\end{equation*}
With this choice, the friendly postselection performed by the $(d-1)$ honest players exactly cancels the bias introduced by the cheater in each clique. Although the norm of the state is exponentially reduced, the common postselection factor is the same in both branches, so after normalization their relative weights are again $\epsilon$ and $(1-\epsilon)$.

Thus, in our final modification of~\Cref{prot:protocol-intro}, we implement the noisy readout from `Idea 3' and we also ask the honest players to implement $A_\eta$ on all their classical shares after step 2. In effect, after these modifications, cheating and honest players alike have \emph{noisy} postselection devices, and the cheaters try to bias the outcome towards $0$, while the honest players bias it towards $1$. If $\eta$ is chosen carefully---small enough relative to $\gamma$ that the honest majority in a good clique overpowers the colluding cheaters, but still large enough that the yes-case behavior is not too badly disrupted---then the honest players' noisy friendly postselection dominates the adversarial postselection in NO instances without destroying completeness in YES instances. This is analyzed in~\Cref{sec:nc-protocol}.

\subsection{Discussion}\label{sec:discussion}
In this section we discuss some interesting observations about our procedure and clarify the landscape for future work.

\paragraph{Expanders vs HDX}
The fact that an HDX contains many cliques seems important to our gap amplification procedure. To see why, we could imagine trying to quantize Dinur's classical locality-preserving gap amplification procedure~\cite{Dinur07} using our MPC-based paradigm.

Dinur assumes that the input CSP to her amplification procedure has an interaction graph described by an ordinary expander $G$: that is, each input variable is identified with a vertex in $G$, and the 2-local input constraints act only on vertices connected by edges in $G$. In her amplification scheme, a large variable in her output CSP associated with vertex $v$ contains the assignments to all the input variables in the \emph{$t$-neighbourhood} in $G$ of $v$ (vertices reachable from $v$ by a $t$-step walk). If we tried to quantize her procedure using our framework, we would want to distribute the input witness among the players/vertices using secret-sharing (rather than witness copying). Additionally, we would want all the players who need to know the violation value of a particular input constraint to run MPC together on a secret-sharing of the necessary witness information, in order to compute the violation value jointly. For Dinur's scheme, this means that we would want all the players in a $t$-neighbourhood to be able to run MPC together.

The issue is that computing on a secret-shared input using MPC requires a robust communication graph (ideally a clique) between the players. More brittle communication graphs (e.g.~trees) do not support adversarially fault-tolerant quantum MPC, because a ``bottleneck'' of a few edges can be corrupted to disconnect the graph. Unfortunately, local neighborhoods in ordinary expanders are tree-like. For our framework, we more or less require that local neighbourhoods in the communication graph be cliques (or something that can simulate a clique). Furthermore, this requirement must coexist with the requirement that the degree of the communication graph, which decides the alphabet size of our output CSP, is not too large. Fortunately, using an HDX allows us to satisfy exactly these constraints.

\paragraph{Overcoming the Brandao-Harrow barrier}
As we mentioned in the introduction section, the work of Brandao and Harrow \cite{brandao2013product} has long been seen as a significant barrier to a quantum locality-preserving gap amplification procedure. We can identify at least two sufficient reasons that we overcome this barrier:
\begin{enumerate}
\item Brandao-Harrow does not preclude all kinds of locality-preserving quantum gap amplification, but only those within certain parameter regimes (of which parallel repetition is the most obvious example). In particular, as long as the \emph{degree} of the output Hamiltonian remains significantly smaller than its \emph{alphabet size} after gap amplification, there is no formal contradiction with \cite[Corollary 5]{brandao2013product}. This is exactly what happens in our procedure---any fault-tolerant routing protocol on a graph must have work complexity at least the degree, which means that the alphabet size is at least $\exp(\deg(G))$, which is within the parameter regime allowed by~\cite{brandao2013product}. Another way to view this phenomenon is to say that we achieve some form of \emph{derandomized} parallel repetition without being able to achieve ``vanilla'' parallel repetition, which is surprising but does not contradict Brandao-Harrow.
\item We only get combinatorial soundness for our output Hamiltonian. Brandao-Harrow does not apply to amplifying the combinatorial promise gap of $\mathsf{QMA}$-hard Hamiltonians.
\end{enumerate}
Because the first item alone is a sufficient reason that Brandao-Harrow would not apply to our procedure, we hope that it may be possible to analyze some variant of our procedure eventually in the general-soundness setting.

\paragraph{Reliance on combinatorial soundness}
We specialize to combinatorial soundness principally due to the fact that a good general soundness analysis technique is not known for the ABN Hamiltonian. It is natural to ask: how strongly do we rely on the combinatorial soundness assumption? That is, supposing that a way to analyze the general soundness of the ABN circuit-to-Hamiltonian procedure were found, how hard would it be to get general soundness for our procedure?

There are two main reasons that the generalization would not be immediate even supposing that a general-soundness analysis for ABN were known.
\begin{enumerate}
\item Quantum MPC protocols (including ours) are tolerant to a \emph{bounded fraction} of adversarial players, which plays well with combinatorial soundness because combinatorial soundness allows us to assume that communication edges that are not corrupted are perfect. A general-soundness noise model might require analyzing the quantum MPC protocol in a setting where (say) \emph{every} communication may be corrupted, but only by a small amount.

\item Our use of ``friendly postselection'' seems potentially difficult to generalize to the general soundness setting, for the following reason. The combinatorial soundness statement we have for the ABN circuit-to-Hamiltonian construction (\Cref{lem:abn-comb-soundness-intro}) can be stated qualitatively along the following lines: given $H_C$ that comes from applying the ABN circuit-to-Hamiltonian construction to a circuit $C$, every state with combinatorial energy $\leq \epsilon$ with respect to $H_C$ encodes an $O(\epsilon)$-faulty execution of $C$. This statement holds even for circuits $C$ with postselection gates. This is precisely the feature that we exploit in order to introduce friendly noisy postselection in ``Idea 4''.

If a similar statement were to hold for general soundness to energy $\epsilon$, and in addition postselection circuits could be made \emph{adversarially} fault-tolerant enough to tolerate $\epsilon$ adversarial error density, then (we claim) there would be a $\mathsf{QMA}$ protocol for $\mathsf{PostBQP}$ problems, which is unlikely. The protocol would be: start with any circuit $C$ with postselection gates; construct a fault-tolerant version $\tilde C$; apply ABN to get a Hamiltonian $H_{\tilde C}$; ask the prover to provide a state and measure the energy of that state.

It is not known whether postselection circuits can be made fault-tolerant against adversarial errors. As such, this observation does not preclude a general-soundness analysis of the ABN circuit-to-Hamiltonian procedure that would allow friendly postselection, but it raises a potential difficulty.

\end{enumerate}
We think the second reason is likely to be the more substantial obstacle, and we leave the question of whether it is possible to recover our results without the use of friendly postselection as an interesting open problem.

\paragraph{Proof composition and alphabet reduction}
Given that we achieve locality-preserving gap amplification with large alphabet (for combinatorial soundness), a natural next step is to study alphabet reduction. In the classical world, alphabet reduction using proof composition and PCPPs is a standard technique, and nearly every classical PCP is also a PCPP \cite{Ben-SassonGHSV06, DinurReingold}. In particular, as we mentioned earlier, a classical gap amplification procedure with similar parameters to ours is already sufficient to construct classical PCPs with slightly super-polynomial $n^{\exp(\log^*(n))}$ size~\cite{BT}, and the construction uses proof composition with a subexponential-sized PCPP~\cite{amireddyBSSW2025} in order to achieve alphabet reduction.

Quantumly, however, alphabet reduction for \emph{generic} Hamiltonians seems difficult to achieve: consider, for example, the case of a Hamiltonian on `large variables' of size $k$ qubits each, whose unique ground state is a tensor product of many copies of the GHZ state on $k$ qubits. This observation does not rule out that alphabet reduction might be possible for \emph{specific} Hamiltonians and in particular for the Hamiltonians that we construct. 
We likewise leave this as a direction for future work.

\subsection{Organization}

In \Cref{sec:prelims} we define necessary preliminaries. In \Cref{sec:mpc} we describe and prove the properties we require of our zero-error MPC protocol, which we use for within-clique computations in \Cref{prot:protocol-intro}. In \Cref{sec:routing} we show how to implement the clique-to-clique routing required by \Cref{prot:protocol-intro} in the quantum setting. \Cref{sec:protocol-for-circuit} shows the necessary fault-tolerance properties of a more basic version of our protocol in order to analyze the combinatorial soundness of our output Hamiltonian when the input Hamiltonian is \emph{commuting}. \Cref{sec:nc-protocol} shows how to modify the protocol for the case where the input Hamiltonian is non-commuting, and does the necessary fault-tolerance analysis in that setting. \Cref{sec:ckt-to-H} shows how to convert a protocol into a Hamiltonian using a modification of the ABN circuit-to-Hamiltonian procedure. \Cref{sec:final-reduction} puts all the pieces together, and \Cref{sec:DL-new} proves a robust variant of the~\cite{AharonovALV09} detectability lemma which we need in our analysis.

\medskip\medskip

%% file: prelims.tex
\section{Preliminaries}
\label{sec:prelims}
\paragraph{Notation:} 
We use standard big-$O$ notation: we denote $A = O(B)$ 
or $A\lll B$ if $A\leq C\cdot B$ for some absolute constant
$C>0$. For a parameter $\epsilon$ that we think of as going to zero, we denote by $c(\epsilon)$ a function that is bounded by an absolute constant times $\epsilon$. Similarly, we denote $A = \Omega(B)$ or $A\ggg B$ 
if $A\geq c B$ for some absolute constant $c>0$. We also 
write $k\ll d$ to denote the fact that $d$ is taken to 
be sufficiently large compared to any function of $k$. 
If $A$ is a finite set and 
$i \leq |A|$, the notation
$B\subseteq_{i} A$ means that 
we sample a subset of size $i$
of $A$ uniformly.
Given a domain $X$ and a measure $\mu$ over 
it, we denote by $L_2(X;\mu)$ the space
of real-valued functions over $X$ endowed with
the expectation inner product.

$\mathsf{Pauli}$ is the set of 4 Pauli matrices $\{I,X,Y,Z\}$.

\subsection{High-Dimensional Expanders}\label{sec:hdx-prelims}
A $d$-dimensional simplicial complex $X = (X(0),\ldots,X(d))$ with vertex set $X(1) = [n]$ is a downwards closed collection of subsets of $[n]$. We follow the convention that $X(0) = \{\emptyset\}$, and for each $i>1$ the set of $i$-faces $X(i)$ is a collection of subsets of $X(1)$ of size $i$. The size of $X$ is the total number of faces in $X$. The degree of a vertex $v\in X(1)$ is the number of faces containing it, and the degree of $X$ is the maximum degree over all $v\in X(1)$.
\begin{definition}
For a $d$-dimensional simplicial complex $X = (X(0),X(1),\ldots,X(d))$, $0\leq i\leq d-2$
and $I\in X(i)$, the link of $I$ is the $(d-i)$-dimensional complex $X_I$ whose faces are given as
\[
X_I(j-i) = \{J\setminus I~|~J\in X(j), J\supseteq I\}.
\]
\end{definition}
For a $d$-dimensional complex $X=(X(0),X(1),\ldots,X(d))$ 
and $I\in X$ of size at most $d-2$, the 
graph underlying the link of $I$ is 
the graph whose vertices are $X_I(1)$ 
and whose edges are $X_I(2)$. 
We associate with $X$ a collection 
of distributions over faces. The distribution $\mu_d$ is the uniform 
distribution over $X(d)$, and 
for each $i<d$ the distribution 
$\mu_i$ is a distribution over 
$X(i)$ which results by picking 
$D\sim \mu_d$, and then taking 
$I\subseteq D$ of size $i$ 
uniformly. The distribution $\mu_{I,j-i}$ associated to the link of $I \in X(i)$ is the conditional distribution, $\mu_{j}~|~J \supseteq I$.

\paragraph{Density, average and average on links:}
for a set $S \subseteq X(i)$, let $\mu_i(S)$
denote its density with respect to $\mu_i$.
For $j>i$, $I\in X(i)$ and $S \subseteq X_I(j-i)$ let $\mu_{I,j-i}(S)$ denote the density of $S$ with respect to $\mu_{I,j-i}$. We often omit the subscript and simply write $\mu(S)$ and $\mu_{I}(S)$ in these cases when there is no risk of confusion. We extend the definition for functions, and for $F: X(j) \rightarrow \R$ we define $\mu(F) = \E_{J \sim \mu_j}[F(J)]$ as well as $\mu_{I}(F) = \E_{J \sim \mu_{I,j-i}}[F(J)]$. For a set $K \in X(k)$ for $k \geq j$, 
we define the restricted function $F|_{K}\colon \{J\in X(j)~|~J\subseteq K\}\to\mathbb{R}$ by
$F|_{K}(J) = F(J)$; we think of $\{J\in X(j)~|~J\subseteq K\}$ as being endowed with 
the natural condition measure of $\mu_j$ on this set, and hence define 
\[
\mu(F|_K)
=\E_{\substack{J\sim\mu_j\\ J \subseteq_j K}}[F(J)].
\]

We will use a special class of complexes known as `partite' complexes: 
\begin{definition}
A $d$-dimensional complex $X$ is said to be $d$-partite if $X(1)$ can be partitioned into $d$ disjoint parts called ``colors'', $X(1)=\cup_{i\in [d]} X_i(1)$, such that every $d$-face in $X(d)$ contains one vertex from each color. For a face $I \in X$ we 
let $\text{col}(I)$ denote the set of colors of the vertices it contains. For an $i$-sized set $S \subseteq [d]$, we will use $X_{S}(i) \subseteq X(i)$ to denote $i$-faces $I$ for which $\text{col}(I)=S$.
\end{definition}

Then a high-dimensional expander is defined as,
\begin{definition}
We say a $d$-dimensional simplicial complex $X$ is 
a $\lambda$ one-sided local spectral expander if for every $I\in X$ of size at most $d-2$, the second eigenvalue of the normalized adjacency matrix of the graph $(X_I(1), X_I(2))$ is at most $\lambda$, and it is a $d$-partite $\lambda$-HDX if it is also $d$-partite.
\end{definition}

Our arguments require that the HDX is $d$-partite with $\lambda \leq 1/\pl n$, as well as the distribution over the top faces and the vertices is regular. This can be satisfied by two well-known explicit constructions of HDX by~\cite{LSV1, kaufman2018construction}. 
\begin{theorem}[Sparse HDX construction]\label{thm:hdx-construction} 
For all universal constants $c$, large enough $n\in \N$ and some prime $q=\Theta(\log^c n)$, one can construct in time ${\sf poly}(n)$ a $d$-partite $\lambda$-HDX $X$, with $\lambda \leq 1/\sqrt{q}$, any $d$ smaller than  $\mathrm{loglog}^c n$ and $n \leq |X(1)|\leq 2n$. Furthermore, every vertex participates in at most $q^{O(d^2)}$ $d$-cliques, and all vertices have isomorphic links; in particular they have the same degree.\footnote{The interested reader can check out the original papers of~\cite{LSV1, kaufman2018construction} or the calculations in Lemma D.2 in~\cite{BafnaMV25}.
}
\end{theorem}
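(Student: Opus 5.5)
The plan is to obtain the theorem by instantiating one of the two explicit families cited in the statement and checking each required parameter in turn; I would use the Kaufman--Oppenheim coset complexes~\cite{kaufman2018construction}, with the Ramanujan complexes of~\cite{LSV1} as an alternative. Fix a prime $q$ and an integer $s\ge 1$, and let $\Gamma=\mathrm{SL}_d(\mathbb{F}_q[t]/(f))$ for an irreducible $f$ of degree $s$. The coset complex $X$ is built from the $d$ subgroups $K_1,\dots,K_d$ obtained by omitting one elementary generator each. Its vertices are the cosets $gK_i$, and a collection of cosets spans a face iff their intersection is nonempty.

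\begin{enumerate}[label=(\arabic*)]
\item \textbf{Partiteness.} The index $i$ of $K_i$ is a colouring of the vertices. Every top face $\{gK_1,\dots,gK_d\}$ contains exactly one coset of each type, so $X$ is $d$-partite.
\item \textbf{Local spectral expansion.} I will invoke the link analysis of~\cite{kaufman2018construction}. Each link is again a coset complex over unipotent groups over $\mathbb{F}_q$, and its underlying graph is either a complete bipartite graph or a generalized-polygon-type graph. Its second eigenvalue is at most $1/(\sqrt q-(d-2))$, and the trickling-down argument applies. Since $q=\Theta(\log^c n)$ and $d\le \mathrm{loglog}^c n$, we have $d\ll\sqrt q$, so this bound is $O(1/\sqrt q)$. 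Replacing $q$ by a constant multiple absorbs the constant and gives $\lambda\le 1/\sqrt q$.
\item \textbf{Homogeneity and degree.} $\Gamma$ acts on $X$ by left multiplication, transitively on top faces and on the vertices of each colour. Links of same-coloured vertices are therefore isomorphic. Moreover, the $K_i$ are conjugate under a cyclic automorphism of the Dynkin diagram, so links of vertices of different colours are isomorphic as well. Hence all vertices have the same degree, and $\mu_d$ and $\mu_1$ are uniform. The link of a vertex is determined by the unipotent subgroup $K_i$, whose order is $q^{O(d^2)}$ and does not depend on $s$. This bounds the number of $d$-faces through a vertex by $q^{O(d^2)}$.
\item \textbf{Efficiency.} $|\Gamma|\le q^{d^2 s}$ and every $|K_i|$ is explicit, so $X$ can be listed in time polynomial in $|X(1)|$. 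The next step arranges $|X(1)|=\Theta(n)$, which makes this $\mathrm{poly}(n)$.
\end{enumerate}

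The main obstacle is hitting the window $n\le |X(1)|\le 2n$. The number of vertices is $d\cdot|\Gamma|/|K_1|\approx d\,q^{(d^2-1)s-O(d^2)}$. Increasing $s$ by one multiplies this by roughly $q^{d^2}$, which is far coarser than a factor of $2$. Taking disjoint unions to fill the gap is not allowed, because the link of $\emptyset$, i.e.\ the whole graph, must itself be an expander. I would first try to use the freedom in $q$. Choose $s$ so that $q_0=\log^c n$ yields a vertex count within a $q_0^{O(d^2)}$ factor of $n$. Then vary $q$ among primes in $[q_0,Cq_0]$. On that range the vertex count is a smooth function of $q$ of the form $q^{\Theta(d^2 s)}$, so changing $q$ by a relative amount $\Theta(1/(d^2 s))$ changes the count by only a constant factor. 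The prime number theorem guarantees primes with relative gaps $o(1/(d^2 s))$ in this range, because $d^2 s=O(\log n)$ while $q_0=\log^c n$. This lets us land in $[n,2n]$, adjusting the constants if needed, which is the bookkeeping carried out in Lemma D.2 of~\cite{BafnaMV25}. The remaining parameters do not depend on this choice, so the theorem follows once such a $q$ is found.
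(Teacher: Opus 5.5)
The paper does not prove this statement; it defers to \cite{LSV1, kaufman2018construction} and Lemma D.2 of \cite{BafnaMV25}. Your plan of instantiating a known family and then tuning $q$ is the natural reading of that citation. However, two of your steps do not hold as written.

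\textbf{The degree bound fails for Kaufman--Oppenheim complexes.} In the KO coset complex the intersection of all the $K_j$ is trivial. So the top faces through the vertex $K_i$ are in bijection with the elements of $K_i$, and that group is not of size $q^{O(d^2)}$. It is generated by the superdiagonal root subgroups with entries $a+bt$, and commutators multiply entries, e.g.\ $[e_{12}(a+bt),e_{23}(a'+b't)]=e_{13}((a+bt)(a'+b't))$. Hence, for $s$ large, $K_i$ is the group of unitriangular matrices whose $(k,l)$ entry is an arbitrary polynomial of degree at most $l-k$. This gives $|K_i|=q^{\sum_{k<l}(l-k+1)}=q^{\binom{d+1}{3}+\binom{d}{2}}=q^{\Theta(d^3)}$. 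Already for $d=3$ a vertex lies in $q^7$ top faces, versus $(q+1)(q^2+q+1)$ in the $\tilde A_2$ building. Since $d$ may grow with $n$, this does not give the stated $q^{O(d^2)}$.

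To get that bound you should use the LSV Ramanujan complexes instead. Their vertex links are all isomorphic to the spherical building of $\mathrm{SL}_d(\mathbb{F}_q)$, i.e.\ flags in $\mathbb{F}_q^d$. This building has $\prod_{i=1}^{d}\frac{q^i-1}{q-1}=q^{\binom d2}(1+o(1))$ chambers when $d\ll q$. Partiteness, homogeneity and the $1/\sqrt q$-type local bounds then come from the building. The size-tuning problem is the same as before.

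\textbf{The prime-gap step is not justified by the prime number theorem.} As you note, $\log|X(1)|$ has derivative $D\approx \log n/\log q$ with respect to $\log q$. So landing in $[n,2n]$ needs a prime in a prescribed interval of relative length about $1/D=\Theta(\log q_0/\log n)=\Theta(q_0^{-1/c}\log q_0)$, i.e.\ absolute length $\Theta(q_0^{1-1/c}\log q_0)$. The location of this interval depends on $n$, so you effectively need this for every such interval near $q_0$.

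Your reasoning compares $1/(d^2s)$ with the \emph{average} prime gap $\log q_0$. The prime number theorem, even with the Vinogradov--Korobov error term, only bounds the \emph{maximal} relative gap by $\exp(-(\log q_0)^{3/5-o(1)})$. That is far larger than any fixed power $q_0^{-1/c}$. What the argument actually needs is a primes-in-short-intervals theorem:
\begin{itemize}
\item Baker--Harman--Pintz (a prime in $[x-x^{0.525},x]$) supplies it exactly when $1-1/c\ge 21/40$, i.e.\ $c\ge 40/19$.
\item Even the Riemann hypothesis only reaches $c>2$.
\item For $c\le 1$ the target interval has length $O(\log q_0)$, while gaps exceeding any constant multiple of $\log x$ occur infinitely often. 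So with a single $s$ the argument cannot work at all.
\end{itemize}

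As written, then, you obtain the window $n\le |X(1)|\le 2n$ only for $c\ge 40/19$, which may suffice wherever $c$ can be taken large. For general $c$ you need a further idea, such as also varying $s$ to obtain many candidate windows. Otherwise you must settle for the weaker guarantee $n\le |X(1)|\le q^{O(d^2)}n$.
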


\subsubsection{Sampling Properties of HDX}
Recall that we associated with each $d$-dimensional simplicial 
complex $X$ a sequence of measures $\{\mu_k\}_{1\leq k\leq d}$, where 
$\mu_k$ is a probability measure over $X(k)$. Note that for all $0 \leq t \leq r \leq d$, a sample according to $\mu_t$ can be drawn by first sampling $R \sim \mu_r$, and then sampling $T\subseteq_{t} R$ uniformly. The converse is also true: a sample from $\mu_r$ can be drawn by first sampling $T \sim \mu_t$, and then sampling $R$ from $\mu_r$ conditioned on containing $T$. These observations 
give rise to the standard ``up'' and ``down'' operators, which we present next. %

\begin{definition}%
The operator $U_i^{i+1}$ is a map from $L_2(X(i); \mu_i)$ to $L_2(X(i+1); \mu_{i+1})$ defined as
\[
U_i^{i+1}f(u) 
= 
\E_{v \subseteq_i u}\big[f(v)\big]
\]
for all $u\in X(i+1)$. For $j\geq k+1$, we define $U_k^j$ via composition of up operators: $U_k^j = U_{j-1}^j \circ \ldots \circ U_k^{k+1}$.
\end{definition}

\begin{definition}%
The operator $D_i^{i+1}$ is a map from $L_2(X(i+1); \mu_{i+1})$ to $L_2(X(i); \mu_i)$ defined as
\[
D_i^{i+1}f(u) = \E_{v \supseteq_{i+1} u}\big[f(v)\big]
\]
for all $u \in X(i)$.
For $j\geq k+1$, we define $D_k^j$ via composition of down operators: $D_k^j = D_{k}^{k+1} \circ \ldots \circ D^j_{j-1}$.
\end{definition}

Abusing notations, we use the notations $U^j_k, D^j_k$ 
to denote the operators, as well as the real valued 
matrices associated with them. A key property of
the down and up operators is that they are adjoint:

\begin{claim}%
For all $k \leq j \leq d$, $U_k^{j}$ and $D^{j}_k$ are adjoint operators: for all functions $f\colon X(k)\to\mathbb{R}$ and $g\colon X(j)\to\mathbb{R}$ it holds that $\ip{U_k^{j}f,g} = \ip{f,D^{j}_kg}$. \end{claim}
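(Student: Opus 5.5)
It suffices to prove the single-step case $j=k+1$, where we show $\ip{U_k^{k+1}f,g}=\ip{f,D_k^{k+1}g}$ for all $f\colon X(k)\to\mathbb{R}$ and $g\colon X(k+1)\to\mathbb{R}$. The general case then follows by composition. Since $U_k^j = U_{j-1}^j\circ\cdots\circ U_k^{k+1}$ and $D_k^j = D_k^{k+1}\circ\cdots\circ D_{j-1}^j$, we move one up operator across the inner product at a time:
\[
\ip{U_{j-1}^{j}\cdots U_k^{k+1}f,\,g}=\ip{U_{j-2}^{j-1}\cdots U_k^{k+1}f,\,D_{j-1}^{j}g}=\cdots=\ip{f,\,D_k^{k+1}\cdots D_{j-1}^{j}g}.
\]
This is an induction on $j-k$. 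The case $j=k$ is trivial, with both operators the identity.

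For the single step, we expand both sides as expectations over a common coupling. The left side is
\[
\E_{u\sim\mu_{k+1}}\Big[g(u)\,\E_{v\subseteq_k u}f(v)\Big],
\]
which is the expectation of $f(v)g(u)$ under the joint law: draw $u\sim\mu_{k+1}$, then $v\subseteq_k u$ uniformly. The right side is
\[
\E_{v\sim\mu_k}\Big[f(v)\,\E_{u\supseteq_{k+1} v}g(u)\Big],
\]
where $u$ is drawn from $\mu_{k+1}$ conditioned on containing $v$. This is the expectation of $f(v)g(u)$ under the joint law: draw $v\sim\mu_k$, then $u$ from the conditional distribution.

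The key step is that these two joint distributions on pairs $(v,u)$ with $v\subset u$ coincide. By definition, $\mu_k$ is the marginal obtained by drawing $D\sim\mu_d$ and then a uniform $k$-subset. The same holds with $k+1$ in place of $k$, and a uniform $k$-subset of a uniform $(k+1)$-subset of $D$ is a uniform $k$-subset of $D$. So the first joint law has $v$-marginal $\mu_k$. Its conditional law of $u$ given $v$ is then, by Bayes, $\mu_{k+1}$ restricted to faces containing $v$, which is exactly the second construction. This is the observation the paper makes just before the definitions: sampling top-down and sampling bottom-up give the same pair.

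The only point needing care is this last consistency check, namely that the conditional of $u$ given $v$ under the top-down law is $\mu_{k+1}(\cdot\mid u\supseteq v)$. Writing out
\[
\Pr[v,u]=\mu_{k+1}(u)\cdot\tfrac{1}{k+1}
\]
for $v\subset u$ makes it immediate, since the factor $\tfrac{1}{k+1}$ does not depend on $u$. The rest is bookkeeping.
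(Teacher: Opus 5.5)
Your proof is correct. In the single step you show that both joint laws on pairs $v\subset u$ equal $\mu_{k+1}(u)/(k+1)$, and you then move one up-operator across the inner product at a time, which is the standard argument. The paper states this claim without proof, relying on exactly the top-down/bottom-up sampling observation you verify, so your route matches the intended one.
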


We need the following lemma regarding the second eigenvalue of the down-up walks $U^j_{k}D^j_{k}$ on $X(j)$ ($j \geq k$), that can be found in~\cite{AlevL20}. Roughly speaking, the lemma asserts that for a
one-sided spectral expander $X$, the singular
values of the operators $U_{\alpha i}^{i}$ 
and $D_{\alpha i}^{i}$ for $\alpha\in (0,1)$
are upper bounded by the eigenvalues
of the corresponding operators in the 
complete complex, up to an additive factor if 
${\sf poly}(i)\gamma$:
\begin{lemma}\label{lem:spectral_gap_of_graphs_from_HDX}
Let $(X, \mu)$ be a $d$-dimensional $\lambda$ one-sided local spectral expander. For all $i \leq d$ and $\alpha \in (1/i, 1)$, the largest singular value of $U^i_{\alpha i}$ and $D^i_{\alpha i}$ is at most $\sqrt{\alpha}+\poly(i)\lambda$. Thus the down-up random walk $U^i_{\alpha i}D^i_{\alpha i}$ on $X(i)$ has second largest singular value at most $\alpha + \poly(i)\lambda$. 
\end{lemma}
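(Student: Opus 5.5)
The plan is to reduce the statement to a local spectral computation in each link and then propagate it upward, following the Garland-type trickle-down/decomposition strategy of~\cite{AlevL20}. The singular value bound for $U^i_{\alpha i}$ and for $D^i_{\alpha i}$ are equivalent, because the two operators are adjoint by the preceding Claim. So it suffices to bound $\|D^i_{k} f\|^2$ for $f\in L_2(X(i);\mu_i)$ with $\E f = 0$, where $k=\alpha i$. Equivalently, we bound the second eigenvalue of the self-adjoint operator $U^i_k D^i_k$. That eigenvalue is exactly the square of the second singular value, which gives the "thus" part of the lemma immediately once the singular value bound is in hand.

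The key steps, in order, are as follows.

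\begin{enumerate}
\item \textbf{Single-step comparison.} First compare the one-step walks. Write the non-lazy upper walk $M_j^+ = D^{j+1}_jU_j^{j+1}$ restricted to its non-lazy part, and compare it with the lower walk $U^j_{j-1}D^j_{j-1}$. Expanding $\langle M_j^+ f,f\rangle$ as an average over links $I\in X(j-1)$ of the link graph quadratic form, and applying the $\lambda$-expansion of each link graph, should give the standard inequality
\[
D^{j+1}_jU_j^{j+1} \preceq \tfrac{1}{j+1}\Id + \tfrac{j}{j+1}\,U^j_{j-1}D^j_{j-1} + \lambda\,(\Id - U^j_{j-1}D^j_{j-1}) + \text{(constant-function term)}.
\]
This is precisely the point where the one-sided local spectral expansion of every link of dimension at most $d-2$ is used.

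\item \textbf{Induction on levels.} Next, iterate the inequality to control $D^j_{j-1}$ on the orthogonal complement of constants, level by level. Doing so yields a bound on $\langle U^j_kD^j_k f,f\rangle$ in terms of products of the complete-complex eigenvalues $\prod_{l=k+1}^{j}\frac{l-?}{l}$. In the complete complex the down-up walk $U^i_kD^i_k$ has second eigenvalue exactly $k/i=\alpha$. The perturbations by $\lambda$ accumulate additively, with coefficients that are at most polynomial in $i$, because there are at most $i$ levels and each coefficient is bounded by $O(i)$.

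\item \textbf{Conclusion.} Finally, take square roots using $\sqrt{\alpha + \poly(i)\lambda}\le \sqrt{\alpha}+\poly(i)\lambda$. Without further care this only gives $\sqrt{\poly(i)\lambda}$. To get the stated additive form, either absorb the loss into the polynomial, which is fine in the regime $\lambda\le 1/\pl n$ used later, or carry out the induction directly on singular values.
\end{enumerate}

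The main obstacle is the bookkeeping in the induction of step~2. The eigenspaces of the complete complex, namely the harmonic decomposition into levels, are only approximately preserved in $X$. One must therefore track the approximate eigenspace decomposition (the approach of Dikstein--Dinur--Filmus--Harsha, or Alev--Lau's direct decomposition) and ensure that cross terms contribute only $\poly(i)\lambda$. I would first try the Alev--Lau route: prove by induction on $j$ that $\|D^j_{j-1}g\|^2\le (\frac{j-1}{j}+O(j\lambda))\|g\|^2$ for $g$ orthogonal to lower-level images. Then chaining these bounds over the levels gives the result. Since this lemma is cited from~\cite{AlevL20}, in the paper it may simply be invoked.
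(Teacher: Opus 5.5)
The paper gives no proof of its own here; it only cites \cite{AlevL20}. Your skeleton (the Garland-type link inequality of Step~1, then induction over levels and chaining $D^i_k=D^{k+1}_k\cdots D^i_{i-1}$) is the standard argument behind that citation and does work. However, you stop short exactly where the proof happens, and the ``main obstacle'' you name is not actually there: no approximate harmonic decomposition or cross-term bookkeeping is needed.

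Your restriction ``for $g$ orthogonal to lower-level images'' would also break the chaining. The function $D^i_jf$ is not of that form in general, and if the phrase means $g\perp U^j_{j-1}L_2(X(j-1))$, then $D^j_{j-1}g=0$ and the bound is vacuous. What you need is the one-step bound on the whole mean-zero subspace, and it follows from Step~1 directly.

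Let $\rho_j$ be the top eigenvalue of $D^{j+1}_jU^{j+1}_j$ on mean-zero functions. Then $\rho_j$ is the square of the top nontrivial singular value of $D^{j+1}_j$, and it also equals the second eigenvalue of $U^{j+1}_jD^{j+1}_j$, since $UD$ and $DU$ share their nonzero spectrum. The link computation in your Step~1 gives $M^+_j\preceq\lambda\Id+(1-\lambda)U^j_{j-1}D^j_{j-1}$, where $D^{j+1}_jU^{j+1}_j=\frac{1}{j+1}\Id+\frac{j}{j+1}M^+_j$. Since $\|D^j_{j-1}g\|^2\le\rho_{j-1}\|g\|^2$ for mean-zero $g$, this yields
\[
\rho_j\le\frac{1+j\lambda+j(1-\lambda)\rho_{j-1}}{j+1},\qquad \rho_0=0 .
\]
Writing $\rho_j=\frac{j}{j+1}+e_j$ gives $(j+1)e_j\le\lambda+je_{j-1}$, hence $\rho_j\le\frac{j}{j+1}+\lambda$. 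Each $D^j_{j-1}$ preserves mean zero, so submultiplicativity applies. Using $j\ge k\ge 1$ (which is where $\alpha>1/i$ enters), the squared top nontrivial singular value of $D^i_k$ is at most
\[
\prod_{j=k}^{i-1}\rho_j\le\frac{k}{i}(1+2\lambda)^{i}\le\alpha e^{2i\lambda}.
\]
So the ``$?$'' in your product is $1$, and the product telescopes to exactly $k/i$.

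With this, Step~3 is not an obstacle, and the fix you propose there is backwards. You cannot absorb $\sqrt{\poly(i)\lambda}$ into $\poly(i)\lambda$ in the regime $\lambda\le 1/\pl n$, because $\sqrt{\lambda}\gg\lambda$ as $\lambda\to 0$. Since the bound above is multiplicative, the top nontrivial singular value of $D^i_k$ (and of $U^i_k$) is at most $\sqrt{\alpha}\,e^{i\lambda}\le\sqrt{\alpha}+2i\lambda$ whenever $i\lambda\le 1$; the claim is trivial otherwise. Likewise, the second eigenvalue of $U^i_kD^i_k$ is at most $\alpha e^{2i\lambda}\le\alpha+O(i\lambda)$. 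Even a purely additive bound $\alpha+x$ would have sufficed, because $\alpha>1/i$ gives $\sqrt{\alpha+x}\le\sqrt{\alpha}+x/(2\sqrt{\alpha})\le\sqrt{\alpha}+\sqrt{i}\,x/2$.
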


\subsubsection{Concentration on HDX}\label{sec:concentration-prelims}
We will use the following Chernoff bound for HDXs due to~\cite{dikstein2024chernoff}. The version below is tailored for our purposes and it follows in a straightforward manner from~\cite[Theorem 5.1]{dikstein2024chernoff}.

\begin{lemma}[HDX Chernoff]\label{lem:chernoff-hdx}
There exist absolute constants $c,c_1,c_2>0$ such that if a $d$-dimensional complex $X$ is a $\lambda$-one-sided local spectral expander for $\lambda \leq \frac{c}{d}$, then for all $i < k \leq O(\sqrt{d})$ and all functions $F:X(i) \rightarrow [0,1]$,
\begin{align}
    \Pr_{K \sim X(k)}[|\mu(F|_K)-\mu(F)| \geq \eps] \leq \frac{c_1}{\eps}\exp\left(-c_2\eps^2 \frac{k}{i}\right).
    \label{eq:hdx-chernoff-1}
\end{align}
Furthermore, if $i \leq O(\sqrt{d})$, $k \geq \Omega(\sqrt{d})$ and $\frac{d}{i}$ is large enough as a polynomial function of $1/\eps$ then,
\begin{align}
\Pr_{K \sim X(k)}[|\mu(F|_K)-\mu(F)| \geq \eps] \leq \exp\left(-c'\varepsilon^2 \frac{\sqrt{d}}{i}\right),
\label{eq:hdx-chernoff-2}
\end{align}
for some absolute constant $c'>0$.
\end{lemma}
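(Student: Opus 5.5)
The plan is to deduce both bounds from \cite[Theorem 5.1]{dikstein2024chernoff}, which gives concentration for the down operator $D^k_i$ on a $\lambda$ one-sided local spectral expander. The first step is a reformulation. For $F:X(i)\to[0,1]$ and $K\in X(k)$, the quantity $\mu(F|_K)$ is exactly $(U_i^k F)(K)$, since sampling $J\subseteq_i K$ uniformly and then averaging $F$ is how the up operator acts. Also $\E_{K\sim\mu_k}[U_i^kF(K)]=\mu(F)$, because $\mu_i$ is the marginal of $\mu_k$. So the statement says that the function $U_i^kF$ on $X(k)$ concentrates around its mean, i.e.\ that $U^k_i$ is a good sampler. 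This is the form in which \cite{dikstein2024chernoff} states its result: a $k$-face sampled from $\mu_k$ behaves like a sample of $k/i$ essentially independent $i$-faces.

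For \eqref{eq:hdx-chernoff-1} I would apply their theorem directly. We are in the range $k\le O(\sqrt d)$ with $\lambda\le c/d$, so the $\poly(k)\lambda$ corrections from local expansion (compare \Cref{lem:spectral_gap_of_graphs_from_HDX}) are bounded by a constant. The theorem then gives a tail of the form $\exp(-c_2\eps^2 k/i)$. The prefactor $c_1/\eps$ comes from a standard step: either a union bound over an $\eps$-net of thresholds, or a discretization of $F$ into $O(1/\eps)$ indicator functions, which reduces $[0,1]$-valued $F$ to Boolean $F$. If their theorem is stated only for $i=1$, I would first reduce to that case. One can view $X(i)$-valued functions through the "$i$-th faces complex", or partition a $k$-face into $k/i$ blocks of size $i$, each of which is a sample from $\mu_i$. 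The exponent then has $k/i$ in place of $k$.

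For \eqref{eq:hdx-chernoff-2} the issue is that $k$ may exceed $\sqrt d$, where the direct bound no longer applies. I would handle this with an intermediate face. Choose $k'=\Theta(\sqrt d)\le k$ and couple the samples by drawing $K\sim\mu_k$ and then $K'\subseteq_{k'}K$. Then write
\[|\mu(F|_K)-\mu(F)|\le|\mu(F|_K)-\mu(F|_{K'})|+|\mu(F|_{K'})-\mu(F)|.\]
The second term is controlled by \eqref{eq:hdx-chernoff-1} with $k'$, which gives $\exp(-c\eps^2\sqrt d/i)$ times $c_1/\eps$. The first term I expect to be the main obstacle. I would bound it by sampling inside the fixed face $K$ viewed as a complete complex, where the $i$-subsets of $K'$ form a hypergeometric-type sample, so ordinary Chernoff bounds for subsets of a complete complex give $\exp(-c\eps^2 k'/i)$. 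The one care point is to convert the statement "$\mu(F|_K)$ is far from the mean" into "a random $K'\subseteq K$ is far from the mean with constant probability", which is a Markov/averaging step. After that, the hypothesis that $d/i$ is polynomially large in $1/\eps$ absorbs the $1/\eps$ prefactor into the exponent, giving the clean bound $\exp(-c'\eps^2\sqrt d/i)$.
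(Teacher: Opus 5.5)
Your treatment of \eqref{eq:hdx-chernoff-1} matches the paper, which simply invokes \cite[Theorem 5.1]{dikstein2024chernoff} and takes the general-$i$ statement from there. Your blocking fallback would not work on its own, since the $k/i$ blocks of a face are correlated $\mu_i$-samples, but it is not needed.

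For \eqref{eq:hdx-chernoff-2} your argument is valid but organized differently. The paper never bounds your first term $|\mu(F|_K)-\mu(F|_{K'})|$. It uses the exact identity $\mu(F|_K)=\E_{K'\subseteq_{k_0}K}[\mu(F|_{K'})]$ and lets $\cK_0\subseteq X(k_0)$ be the $\eps/2$-bad faces, so that $\mu(\cK_0)\le p:=\exp(-c''\eps^2\sqrt d/i)$ by the first part. It then applies Markov to $K\mapsto\mu(\cK_0|_K)$, whose mean is $\mu(\cK_0)$. Outside a set of measure $\sqrt p$, at most a $\sqrt p$ fraction of the $k_0$-subfaces of $K$ are bad, and averaging gives $|\mu(F|_K)-\mu(F)|\le\eps/2+\sqrt p\le\eps$.

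So the step you flag as the main obstacle can be bypassed. The same identity plus a reverse-Markov step already shows that an $\eps$-bad $K$ has at least an $\eps/2$ fraction of $\eps/2$-bad $k'$-subfaces. This costs a $2/\eps$ factor, which the hypothesis absorbs anyway.

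Your route instead needs a sampling bound inside the complete complex on $K$ to make that fraction constant. Averaging alone gives only an $\Omega(\eps)$ fraction, so your ``constant probability'' rests on this extra ingredient, and you should justify it rather than call it ordinary. A full Chernoff bound is unnecessary there, since you only need failure probability at most $1/2$. Chebyshev, using the $O(i/k')$ variance of the $U$-statistic $\mu(F|_{K'})$ for a uniform $k'$-subset $K'\subseteq K$, suffices once $k'/i\gg 1/\eps^2$.

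In exchange, your transfer loses only a constant factor rather than a square root of $p$, giving a slightly better constant in the exponent. The paper's version needs nothing beyond averaging and Markov. Both arguments absorb the $1/\eps$ prefactor the same way, which strictly requires $\eps^2\sqrt d/i\gg\log(1/\eps)$, not just $d/i$ large.
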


\begin{proof}
Let $k_0=c_3 \sqrt{d}$ for some fixed constant $c_3$. Theorem 5.1 in~\cite{dikstein2024chernoff} implies the first part of the lemma above. 
Let us therefore discuss the second part when $k \geq k_0$. Assume the result holds for $k_0$ and let $\cK_0$ be the set of $K \in X(k_0)$ where $|\mu(F|_K)-\mu(F)| \geq \eps/2$. By applying the concentration inequality in the first part we know that $$\mu(\cK_0) \leq \frac{c_1}{\varepsilon/2} \exp(-\frac{c_2 c_3 \varepsilon^2}{4} \frac{\sqrt{d}}{i}  ) \leq \exp(-c'' \varepsilon^2 \frac{\sqrt{d}}{i}) := p,$$
where the second inequality holds when $\frac{d}{i} \gg \frac{1}{\varepsilon^5}$. 
Furthermore, note that $p \ll \varepsilon^2$.

Define $\cK \subseteq X(k)$ by 
\[
\cK = \{K\in X(k)~|~\mu(\cK_0|_{K})\geq \sqrt{p}\},
\]
so that by Markov's inequality $\mu(\cK)\leq \sqrt{p}$.
On $K \notin \cK$, 
\[\mu(F|_K) = \E_{K' \subset_{k_0} K}[\mu(F|_{K'})] \leq (1-\sqrt{p})\cdot\left(\mu(F)+\frac{\eps}{2}\right)+\sqrt{p}\cdot 1 \leq \mu(F)+\eps.\]
Similarly for $K \notin \cK$, 
\[
\mu(F|_K)
=
\E_{K' \subset_{k_0} K}[\mu(F|_{K'})] 
\geq (1-\sqrt{p})\cdot\left(\mu(F)-\frac{\eps}{2}\right)
\geq \mu(F) - \eps.
\]
Combining the two inequalities we get that 
$|\mu(F|_{K})-\mu(F)|\leq \eps$ for all 
$K\in \cK$, and the second part of the lemma
follows with $c':= c''/2$.
\end{proof}

We will also use a sampling bound due to~\cite{BT} that works in the regime where $\mu(F)$ is small,

\begin{lemma}[HDX concentration in the low mean regime~\cite{BT}]\label{lem:concentration-improved}
    Let $X$ be a $d$-dimensional partite complex such that it is a $\lambda$-link expander, $\varepsilon > 0$, and let $f: X(i) \to \{0,1\}$ such that $\E_{v \sim \pi_k} f(v) = \mu$. Denoting $\overline{f} (\sigma) := \E_{\tau \sim \sigma} [f(\tau)]$ the simple averaging over $\tau \subset \sigma$, we have,

    $$\Pr_{\sigma \sim \pi_d} ( \overline{f}(\sigma) \geq \varepsilon ) \leq\frac{ 1 + O\left(\frac{i^2}{d} \right)}{\varepsilon^2 - O\left(\frac{i^2}{d} \right)} \cdot \mu \cdot (\mu  + i^2 \lambda).$$
\end{lemma}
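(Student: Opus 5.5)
The plan is a second-moment argument. We view $\overline{f}(\sigma)$ as a random variable in $\sigma\sim\pi_d$ and bound $\E_\sigma[\overline{f}(\sigma)^2]$ using the local spectral expansion. Then Chebyshev/Markov applied to $\overline f^2$ gives $\Pr[\overline f(\sigma)\ge\varepsilon]\le \E[\overline f^2]/\varepsilon^2$.

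Here $\overline f = U_i^d f$. Writing $\E[\overline f^2]=\langle U_i^d f, U_i^d f\rangle = \langle f, D_i^dU_i^d f\rangle$, we expand the square as a pair average. Concretely,
\[
\E_\sigma[\overline f(\sigma)^2]=\Pr_{\sigma,\ \tau,\tau'\subseteq_i\sigma}[f(\tau)=f(\tau')=1].
\]
We then split on the event that the two independent uniform $i$-subsets of $\sigma$ intersect. Since $\sigma$ is partite and each $i$-face has its own color set, two random $i$-faces intersect with probability $O(i^2/d)$. On the intersecting event we bound trivially by $\Pr[f(\tau)=1]=\mu$. More carefully, conditioning on $\tau$, the chance that $\tau'$ meets $\tau$ is $O(i^2/d)$, giving a contribution $O(i^2/d)\cdot\mu$; that contribution still needs to be absorbed, as discussed below. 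On the disjoint event, $(\tau,\tau')$ is distributed as a random pair of disjoint $i$-faces whose union is a $2i$-face. This is the swap walk / bipartite graph between $X(i)$ and $X(i)$, i.e. essentially $D^{2i}U$ restricted to disjoint pairs, which we write as the operator $S_{i,i}$.

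The key step invokes the known spectral bound for swap walks on partite one-sided $\lambda$-local spectral expanders: the nontrivial singular values of the colored swap walk are at most $O(i^2\lambda)$. This follows from the trickle-down/link-decomposition arguments underlying \Cref{lem:spectral_gap_of_graphs_from_HDX}, or can be proved by induction over links, each step contributing $\lambda$ per pair of vertices. Then
\[
\langle f, S f\rangle \le \mu^2 + O(i^2\lambda)\,\|f-\mu\|^2 \le \mu^2 + O(i^2\lambda)\mu .
\]
This yields $\E[\overline f^2]\le (1+O(i^2/d))\,\mu(\mu+i^2\lambda)$ plus the diagonal term from the intersection event.

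The main obstacle is the bookkeeping that puts the intersection term into the stated form: $(1+O(i^2/d))$ in the numerator and $\varepsilon^2-O(i^2/d)$ in the denominator. Rather than bounding $\E[\overline f^2]$ directly, we would define $\overline f_{\text{disj}}$ as the average over disjoint pairs. We then note $\overline f(\sigma)^2 \le \Pr[\text{disjoint}]\cdot(\text{disjoint-pair average}) + O(i^2/d)\,\overline f(\sigma)$. On the event $\overline f(\sigma)\ge\varepsilon$ the disjoint-pair average is at least $\varepsilon^2-O(i^2/d)$, since $\overline f\le1$. Markov's inequality applied to the disjoint-pair average, whose mean is controlled by the swap-walk bound up to the normalization factor $1/\Pr[\text{disjoint}]=1+O(i^2/d)$, then gives exactly the claimed bound. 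Partiteness is used so that the color classes make the swap walk decompose cleanly over color pairs, each of which is a $\lambda$-expanding bipartite operator after trickle-down.
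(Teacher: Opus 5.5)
The paper does not prove this lemma (it is imported from \cite{BT}), so this is an assessment on the merits. Your threshold step is sound. Let $p_{\mathrm{int}}\le i^2/(d-i+1)$ be the probability that two uniform $i$-subsets of $\sigma$ meet, and let $g(\sigma)$ be the average of $f(\tau)f(\tau')$ over disjoint pairs. Then $\overline f(\sigma)^2\le(1-p_{\mathrm{int}})g(\sigma)+p_{\mathrm{int}}$, so $g\ge\varepsilon^2-p_{\mathrm{int}}$ on $\{\overline f\ge\varepsilon\}$, and Markov gives $\Pr[\overline f\ge\varepsilon]\le\E_\sigma[g]/(\varepsilon^2-p_{\mathrm{int}})$. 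Here $\E_\sigma[g]=\langle f,S_{i,i}f\rangle$ \emph{exactly}, so no $1/\Pr[\text{disjoint}]$ factor enters at this stage. That is not where the numerator's $1+O(i^2/d)$ comes from.

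The genuine gap is the key inequality $\langle f,Sf\rangle\le\mu^2+O(i^2\lambda)\|f-\mu\|^2$. An $O(i^2\lambda)$ bound on the \emph{colored} swap walks $S_{I,J}$ does not give this for the uncolored walk $S_{i,i}$, and on partite complexes the inequality is false. On functions of the color type, $S_{i,i}$ acts as the Kneser walk on $\binom{[d]}{i}$. Its nontrivial eigenvalues $(-1)^j\binom{d-i-j}{i-j}/\binom{d-i}{i}$ do not depend on $\lambda$; for $j=2$ the eigenvalue is $+\frac{i(i-1)}{(d-i)(d-i-1)}$.

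Concretely, take the complete $6$-partite complex, where all links are complete partite, so $\lambda=0$. Set $i=2$ and let $f(\tau)=1$ iff both colors of $\tau$ lie on the same side of a fixed $3+3$ split of the colors. Then $\mu=2/5$ but $\langle f,S_{2,2}f\rangle=1/5>\mu^2$.

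If you instead bound the color level spectrally, you pick up an extra additive $\Theta(i^2\mu/d^2)$. This is linear in $\mu$ and is not absorbed by the right-hand side when $\mu\ll 1/d$ and $\lambda\ll 1/d^2$. That is exactly the regime where the paper uses the lemma: $i=2$, constant $d$, and $\mu,\lambda\approx 1/\mathrm{polylog}\,n$. There the extra term would destroy the quadratic bound $\mu(\mu+4\lambda)$.

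The missing idea is to treat the color level by nonnegativity rather than spectrally. Let $\mu_I:=\E[f(\tau)\mid\mathrm{col}(\tau)=I]$, so that $\mu=\E_I\mu_I$ for uniform $I\in\binom{[d]}{i}$. Write $\langle f,S_{i,i}f\rangle=\E_{(I,J)}\langle f_I,S_{I,J}f_J\rangle$ over a uniform ordered pair of disjoint color sets. The colored bound gives $\langle f_I,S_{I,J}f_J\rangle\le\mu_I\mu_J+O(i^2\lambda)\sqrt{\mu_I\mu_J}$, using $\|f_I-\mu_I\|^2\le\mu_I$. Since every $\mu_I\ge 0$,
\[
\E_{(I,J)}[\mu_I\mu_J]\le\frac{\E_{I,J\ \mathrm{indep}}[\mu_I\mu_J]}{\Pr[I\cap J=\emptyset]}=\frac{\binom{d}{i}}{\binom{d-i}{i}}\,\mu^2=(1+O(i^2/d))\,\mu^2 ,
\]
and this is the true origin of the numerator factor. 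Also $\E\sqrt{\mu_I\mu_J}\le\E\tfrac{\mu_I+\mu_J}{2}=\mu$, because both marginals are uniform.

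Plugging these into your threshold step yields the lemma, with $O(i^2\lambda)$ in place of $i^2\lambda$. The constant is whatever your cited colored swap-walk bound gives. That bound needs the color-pair bipartite graphs in the links to be $\lambda$-expanders, which is presumably what the ``$\lambda$-link expander'' hypothesis provides.
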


\subsection{Quantum Error Correction}
We recall the stabilizer-code notions in quantum error correction. Throughout this subsection, equalities between Pauli operators are understood up to a global phase.
Let $\cQ$ be an $[[n,k,\Delta]]$ stabilizer code. Fix an independent generating set
$S_1,\dots,S_{n-k}$
for its stabilizer group, and fix logical $Z$ operators
$\overline Z_1,\dots,\overline Z_k$. For a Pauli operator $E$, define its syndrome $\syn(E)\in \{0,1\}^{n-k}$ by
\[
S_i E = (-1)^{\syn(E)_i} E S_i
\qquad\text{for } i=1,\dots,n-k.
\]

\begin{definition}[Clifford encoding unitary]\label{def:Clifford-encoding}
A \emph{Clifford encoding unitary} for $\cQ$ is an $n$-qubit Clifford circuit $V$, with no ancillas, such that
\[
V Z_{k+i} V^\dagger = S_i
\qquad\text{for } i=1,\dots,n-k,
\]
and
\[
V Z_i V^\dagger = \overline Z_i
\qquad\text{for } i=1,\dots,k.
\]
The corresponding \emph{encoding isometry} is
\[
\Enc_\cQ(\ket{\psi}) = V\bigl(\ket{\psi}\otimes \ket{0}^{\otimes (n-k)}\bigr),
\qquad
\ket{\psi}\in (\mathbb C^2)^{\otimes k}.
\]
Equivalently, one can check that for every $k$-qubit state $\ket{\phi}$ and every $s\in\{0,1\}^{n-k}$, the state $V(\ket{\phi}\otimes\ket{s})$ equals $E\Enc_{\cQ}(\ket{\phi})$ where $E$ is a Pauli error with syndrome $\ket{s}$. 
\end{definition}

Given a minimal stabilizer basis and logical operators, such a unitary exists and can be found efficiently~\cite{gottesman1997stabilizer}\footnote{Note that for quantum LDPC codes and quantum locally testable codes, the chosen generators $S_1,\dots,S_{n-k}$ need not coincide with the local check family, which may be redundant and much lower weight.}. 
For each syndrome $s\in\{0,1\}^{n-k}$, let
\[
\mathcal H_s := \Bigl\{\ket{\xi} : S_i\ket{\xi}=(-1)^{s_i}\ket{\xi}\text{ for all }i=1,\dots,n-k\Bigr\}.
\]
Then $\mathcal H_0$ is the codespace, the spaces $\mathcal H_s$ form an orthogonal decomposition of the full Hilbert space, and
\[
V^\dagger \mathcal H_s = (\mathbb C^2)^{\otimes k}\otimes \ket{s}.
\]
Indeed, if $\ket{\xi}\in \mathcal H_s$, then for every $i$,
\[
Z_{k+i}V^\dagger\ket{\xi}
=
V^\dagger S_i \ket{\xi}
=
(-1)^{s_i}V^\dagger\ket{\xi},
\]
so $V^\dagger\ket{\xi}$ lies in $(\mathbb C^2)^{\otimes k}\otimes \ket{s}$. Conversely, for every $k$-qubit state $\ket{\phi}$ and every $i$,
\[
S_iV(\ket{\phi}\otimes \ket{s})
=
V Z_{k+i}(\ket{\phi}\otimes \ket{s})
=
(-1)^{s_i}V(\ket{\phi}\otimes \ket{s}),
\]
so $V(\ket{\phi}\otimes \ket{s})\in \mathcal H_s$. Thus $V^\dagger$ exposes the stabilizer syndrome on the last $n-k$ qubits, although it might not decode to the ``closest codeword'' on the first $k$ qubits. A decoder $D$ can therefore be viewed as choosing one Pauli representative for each syndrome: it need not recover the actual physical error, only a Pauli with the same syndrome, and a good decoder is one which can identify the smallest Pauli correction given a syndrome.

\begin{definition}[Syndrome decoder]
A \emph{decoder} for $\cQ$ is a map
$D:\{0,1\}^{n-k}\to \mathsf{Pauli}^{\otimes n}$
such that
$\syn(D(s))=s$ for every $s\in\{0,1\}^{n-k}.$
A $t$-bounded-distance decoder is a decoder with the additional property that whenever a syndrome $s$ is realized by some Pauli error of weight at most $t$, the output $D(s)$ also has weight at most $t$. When $t<\Delta/2$, this implies
\[
D(\syn(E))\,E \in \operatorname{Stab}(\cQ)
\qquad\text{for every Pauli } E \text{ with } |E|\le t,
\]
because $D(\syn(E))E$ has trivial syndrome and weight at most $2t<\Delta$. 
\end{definition}
We remark that above we have stated a decoder as a map which may not have an efficient circuit that implements it. This is because we only use the \emph{existence} of such a bounded-distance decoder in the analysis of our protocols to move from a Pauli-deviated codewords to its nearest codeword and its decoding. Also note that such a decoder always exists for all $t$, by defining $D(s)$ to be the minimum weight Pauli error with syndrome $s$.

Given such a map $D$, we define a coherent version of ``read the syndrome, apply the chosen correction, and decode.'' On the syndrome-$s$ sector, the correction $D(s)$ returns the state to syndrome $0$, then $V^\dagger$ decodes the logical qubits, and finally the value $s$ is written into the last $n-k$ output qubits. Keeping this syndrome register is what makes the overall map reversible.

\begin{definition}[Reversible decoding unitary~\cite{he2025composable}]
\label{def:reversible-decoding}
For each $s\in\{0,1\}^{n-k}$, let
\[
P_s := V\bigl(I\otimes \ketbra{s}{s}\bigr)V^\dagger
\]
denote the projector onto the syndrome-$s$ subspace $\mathcal H_s$. The \emph{reversible decoding unitary} associated with $D$ is
\begin{equation}\label{eq:rev-dec}
U_\cQ
=
\sum_{s\in\{0,1\}^{n-k}}
\bigl(I\otimes \ketbra{s}{0^{n-k}}\bigr)\,V^\dagger D(s)\,P_s
=
\sum_{s\in\{0,1\}^{n-k}}
\bigl(I\otimes \ketbra{s}{0^{n-k}}\bigr)\,
V^\dagger D(s) V
\bigl(I\otimes \ketbra{s}{s}\bigr)
V^\dagger .
\end{equation}
\end{definition}

\begin{claim}[\cite{he2025composable}]
The operator $U_\cQ$ defined in~\eqref{eq:rev-dec} is unitary.
\end{claim}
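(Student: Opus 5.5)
We show that $U_\cQ$ maps an orthonormal basis bijectively onto an orthonormal basis. The natural basis is adapted to the orthogonal decomposition $\bigoplus_s \mathcal H_s$. Concretely, we use the basis $\{V(\ket{j}\otimes\ket{s})\}_{j\in\{0,1\}^k,\,s\in\{0,1\}^{n-k}}$, which is orthonormal since $V$ is unitary and which satisfies $V(\ket{j}\otimes\ket{s})\in\mathcal H_s$.

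First we compute the action of $U_\cQ$ on such a vector. Since $P_{s'}V(\ket{j}\otimes\ket{s})=\delta_{s,s'}V(\ket{j}\otimes\ket{s})$, only the $s'=s$ term of \eqref{eq:rev-dec} survives. This leaves
\[
\bigl(I\otimes\ketbra{s}{0^{n-k}}\bigr)V^\dagger D(s)V(\ket{j}\otimes\ket{s}).
\]
Next we use $\syn(D(s))=s$: since $D(s)$ anticommutes with exactly those $S_i$ with $s_i=1$, it maps $\mathcal H_s$ into $\mathcal H_0$. Indeed, for $\ket\xi\in\mathcal H_s$,
\[
S_iD(s)\ket\xi=(-1)^{s_i}D(s)S_i\ket\xi=(-1)^{2s_i}D(s)\ket\xi=D(s)\ket\xi .
\]
Since $D(s)$ is unitary, it restricts to a unitary bijection $\mathcal H_s\to\mathcal H_0$, because both spaces have dimension $2^k$. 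By the identity $V^\dagger\mathcal H_0=(\mathbb C^2)^{\otimes k}\otimes\ket{0^{n-k}}$ established above, we can write
\[
V^\dagger D(s)V(\ket{j}\otimes\ket{s})=(W_s\ket{j})\otimes\ket{0^{n-k}}
\]
for some $k$-qubit unitary $W_s$. Applying $I\otimes\ketbra{s}{0^{n-k}}$ then gives
\[
U_\cQ\,V(\ket{j}\otimes\ket{s})=(W_s\ket j)\otimes\ket s .
\]

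Finally, the images $\{(W_s\ket j)\otimes\ket s\}_{j,s}$ are orthonormal. For fixed $s$ they are orthonormal because $W_s$ is unitary, and blocks with different $s$ are orthogonal through the last register. There are $2^n$ of them, so they form an orthonormal basis. Hence $U_\cQ$ maps an orthonormal basis to an orthonormal basis, which means $U_\cQ$ is unitary. Equivalently, $U_\cQ=\bigl(\sum_s W_s\otimes\ketbra{s}{s}\bigr)V^\dagger$ is a product of unitaries.

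The only step needing care is the claim that $D(s)$ sends $\mathcal H_s$ onto $\mathcal H_0$ rather than to some other sector. This rests on the syndrome convention $S_iE=(-1)^{\syn(E)_i}ES_i$ and on the $S_i$ being independent commuting generators, so that the sectors $\mathcal H_s$ all have equal dimension $2^k$. Both facts are already established in the preliminaries, so we do not expect a real obstacle.
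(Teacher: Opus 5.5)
Your proof is correct, and it rests on the same key fact as the paper's, namely that $D(s)$ maps $\mathcal H_s$ onto $\mathcal H_0$. The paper does not evaluate $U_\cQ$ on a basis. Instead it expands $U_\cQ U_\cQ^\dagger$ as a double sum over $s,s'$ and kills the cross terms via $(I\otimes\ketbra{s}{s})(I\otimes\ketbra{s'}{s'})=\delta_{s,s'}\,I\otimes\ketbra{s}{s}$. It then collapses each diagonal term using $V^\dagger D(s)V\,(I\otimes\ketbra{s}{s})\,V^\dagger D(s)V=I\otimes\ketbra{0^{n-k}}{0^{n-k}}$, i.e.\ $D(s)P_sD(s)=P_0$, which leaves $\sum_s I\otimes\ketbra{s}{s}=I$. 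The paper justifies that identity only informally (``after $D(s)$ we are guaranteed the state has zero syndrome''), and your computation from the syndrome convention is exactly its proof. The paper's route is a short algebraic check, but it only computes $UU^\dagger=I$ and relies on finite dimension for $U^\dagger U=I$. Your basis-to-basis argument is slightly longer, but it gives both directions at once. It also exhibits the explicit block form $U_\cQ=\bigl(\sum_s W_s\otimes\ketbra{s}{s}\bigr)V^\dagger$, which is essentially what the next claim, on reversible decoding of Pauli-corrupted codewords, unpacks.
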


\begin{proof}
    For an (arbitrarily) corrupted codeword $\ket{\xi} = E \ \mathrm{Enc}(\ket{\psi})$ for some Pauli $E$, we interpret $U\ket{\xi}$ as follows. When applying $V^\dagger$ on $\ket{\xi}$, we obtain the syndrome on the last $n-k$ qubits (while the first $k$ qubits could be in a completely wrong logical state, unless $E=I$). So $V (I \otimes \ketbra{s}{s}) V^\dagger$ is essentially a projector onto the subspace of syndrome $s$. Then we apply the correction $D(s)$, which could be wrong (not stabilizer-equivalent to $E$), depending on the performance guarantee of $D$ (e.g. if the error $E$ is too large). After $D(s)$ we are guaranteed the state has zero syndrome (it's in the codespace, but the encoded message could be different from $\ket{\psi}$ if $D(s)$ is not equivalent to $E$). Finally, $V^\dagger$ undoes the Clifford encoding, and the $(I \otimes \ketbra{s}{0})$ replaces the $\ket{0^{n-k}}$ syndrome string by $\ket{s}$.
    
    With the above interpretation, one can easily verify that $U$ is a unitary
    \begin{align}
        UU^\dagger &= \sum_{s,s' \in \{0,1\}^{n-k}} (I \otimes \ketbra{s}{0}) V^\dagger D(s) V (I \otimes \ketbra{s}{s}) V^\dagger  V (I \otimes \ketbra{s'}{s'}) V^\dagger D(s') V (I \otimes \ketbra{0}{s'})\\
        &= \sum_{s \in \{0,1\}^{n-k}} (I \otimes \ketbra{s}{0}) \underbrace{V^\dagger D(s) V (I \otimes \ketbra{s}{s}) V^\dagger D(s) V}_{I \otimes \ketbra{0}{0}} (I \otimes \ketbra{0}{s})\\
        &= \sum_{s \in \{0,1\}^{n-k}} I \otimes \ketbra{s}{s} = I. 
    \end{align}    
\end{proof}

We next record the effect of $U_\cQ$ on corrupted codewords. %

\begin{claim}[Reversible decoding on a stabilizer code]\label{claim:reversible-decoding-properties}
Let $\cQ$ be an $[[n,k,\Delta]]$ stabilizer code with encoding isometry $\Enc_\cQ$, and fix a minimal generating set for its stabilizer group. For a Pauli operator $E$, let $\syn(E)\in\{0,1\}^{n-k}$ denote its syndrome in this basis. Let
$D:\{0,1\}^{n-k}\to \mathsf{Pauli}^{\otimes n}$
be any decoder, and let $U_\cQ$ be the reversible decoding unitary associated with $D$ from \Cref{def:reversible-decoding}.

Then for every Pauli error $E$ there exist a phase $\omega_E\in U(1)$ and a $k$-qubit Pauli operator $P_E$ such that for every $k$-qubit state $\ket{\psi}$,
\[
U_\cQ E\Enc_\cQ(\ket{\psi}) = \omega_E\, P_E\ket{\psi}\ket{\syn(E)}.
\]

Moreover, if $D$ is a $t$-distance (possibly inefficient) decoder for $t<\Delta/2$ then for every error $E$ with weight at most $t$ and every $k$-qubit state $\ket{\psi}$,
\[
U_\cQ E\Enc_\cQ(\ket{\psi}) = \omega_E\, \ket{\psi}\ket{\syn(E)}
\]
for some phase $\omega_E\in U(1)$. 
\end{claim}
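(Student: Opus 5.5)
The plan is to compute $U_\cQ E\Enc_\cQ(\ket{\psi})$ directly from the definition~\eqref{eq:rev-dec}, using only that $V$ is Clifford and that Paulis either commute or anticommute with each stabilizer generator. Write $s_0 := \syn(E)$.

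\emph{Step 1: projection.} Since $\Enc_\cQ(\ket{\psi})\in\mathcal H_0$, the state $E\Enc_\cQ(\ket{\psi})$ lies in $\mathcal H_{s_0}$. This holds because $S_iE = (-1)^{(s_0)_i}ES_i$, so $S_iE\ket{\xi}=(-1)^{(s_0)_i}E\ket{\xi}$ for $\ket{\xi}$ in the codespace. Hence $P_s E\Enc_\cQ(\ket{\psi}) = \delta_{s,s_0}E\Enc_\cQ(\ket{\psi})$, and only the $s=s_0$ term of the sum survives. We are left with
\[
(I\otimes\ketbra{s_0}{0^{n-k}})\,V^\dagger D(s_0) E\, V(\ket{\psi}\otimes\ket{0^{n-k}}).
\]

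\emph{Step 2: Clifford conjugation.} The operator $F:=D(s_0)E$ is an $n$-qubit Pauli with trivial syndrome, because syndromes add mod 2 under multiplication. Since $V$ is Clifford, $V^\dagger F V$ is an $n$-qubit Pauli (up to phase), and it commutes with each $V^\dagger S_iV = Z_{k+i}$. So on the last $n-k$ qubits it is a product of $I$'s and $Z$'s. Thus $V^\dagger FV = \omega\, P\otimes Z^{a}$ for some $k$-qubit Pauli $P$, bit string $a$, and phase $\omega$.

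Now $Z^a$ acts trivially on $\ket{0^{n-k}}$. Applying $\ketbra{s_0}{0^{n-k}}$ then yields $\omega_E P_E\ket{\psi}\ket{\syn(E)}$ with $P_E:=P$. The pair $(\omega_E,P_E)$ depends only on $E$ and the fixed data $V,D$, not on $\ket{\psi}$, which gives the first claim.

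\emph{Step 3: the bounded-distance case.} If $|E|\le t$, then $s_0$ is realized by a Pauli of weight at most $t$, so $|D(s_0)|\le t$. Therefore $F=D(s_0)E$ has trivial syndrome and weight at most $2t<\Delta$. By the definition of distance, $F$ cannot be a nontrivial logical operator, so $F\in\operatorname{Stab}(\cQ)$ up to phase. This is exactly the implication noted after the syndrome-decoder definition.

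Every stabilizer is a product of the generators $S_i$, so $V^\dagger FV$ is, up to phase, a product of the $Z_{k+i}$. Hence $P=I$, and the output is $\omega_E\ket{\psi}\ket{\syn(E)}$.

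\emph{Main obstacle.} The only delicate point is the phase bookkeeping. Equalities between Paulis hold only up to global phase, and a stabilizer element equal to $-S$ acts as $-1$ on the codespace. This is harmless: all such signs are absorbed into $\omega_E$, which is independent of $\ket{\psi}$. I will state this explicitly rather than track the signs.
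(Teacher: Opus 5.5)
Your proof is correct and follows essentially the same route as the paper: you keep only the $\syn(E)$ summand, observe that $D(\syn(E))E$ has trivial syndrome, and use the bounded-distance property to force it into the stabilizer so that $P_E=I$ up to a phase absorbed into $\omega_E$. The only difference is cosmetic. Where the paper simply invokes the fact that a trivial-syndrome Pauli acts on the codespace as an encoded logical Pauli, you derive this explicitly from $V^\dagger S_i V = Z_{k+i}$, which is a valid and slightly more self-contained justification of that step.
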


\begin{proof}
Let $s=\syn(E)$. Since $E\Enc_\cQ(\ket{\psi})$ lies entirely in the syndrome-$s$ subspace $\mathcal H_s$, only the $s$-summand in the definition of $U_\cQ$ contributes. Thus
\[
U_\cQ E\Enc_\cQ(\ket{\psi})
=
\bigl(I\otimes \ketbra{s}{0^{n-k}}\bigr)\,V^\dagger D(s)E\Enc_\cQ(\ket{\psi}).
\]
Now
\[
\syn(D(s)E)=\syn(D(s))\oplus \syn(E)=s\oplus s=0,
\]
so $D(s)E$ has trivial syndrome and therefore preserves the codespace. Since $D(s)E$ is a Pauli operator in the normalizer of the stabilizer, its action on the codespace is, up to a phase, an encoded logical Pauli. Hence there exist a phase $\omega_E\in U(1)$ and a $k$-qubit Pauli operator $P_E$ such that
\[
D(s)E\Enc_\cQ(\ket{\psi}) = \omega_E\, \Enc_\cQ(P_E\ket{\psi})
\]
for every $\ket{\psi}$. Applying $V^\dagger$ gives
\[
V^\dagger D(s)E\Enc_\cQ(\ket{\psi})
=
\omega_E\, P_E\ket{\psi}\ket{0}^{\otimes (n-k)},
\]
and then
\[
U_\cQ E\Enc_\cQ(\ket{\psi})
=
\omega_E\, P_E\ket{\psi}\ket{s},
\]
which proves the first statement.

If $|E|\le t$ and $D(\syn(E))E\in \operatorname{Stab}(\cQ)$, then the induced logical action of $D(s)E$ on the codespace is trivial, so the corresponding logical Pauli is $P_E=I$. Therefore
\[
U_\cQ E\Enc_\cQ(\ket{\psi}) = \omega_E\, \ket{\psi}\ket{\syn(E)}.
\]
Finally, if $D$ is a $t$-bounded-distance decoder with $t<\Delta/2$ and $|E|\le t$, then $D(\syn(E))E \in \operatorname{Stab}(\cQ)$ as required.
\end{proof}

We also need the corresponding statement for the classical repetition code interpreted as a quantum code. Let $\Rep$ denote the length-$d$ repetition code with encoding isometry
\[
\Enc_{\Rep}(\alpha\ket{0}+\beta\ket{1})=\alpha\ket{0^d}+\beta\ket{1^d}.
\]
For a string $x\in\{0,1\}^d$, let $\syn_{\Rep}(x)\in\{0,1\}^{d-1}$ denote its syndrome with respect to the standard parity checks $x_i\oplus x_{i+1}$, and let $b(x)\in\{0,1\}$ be the majority bit of $x$. If $d$ is even, fix a tie-breaking rule satisfying $b(\overline{x})=1\oplus b(x)$. Then the map
\[
x\longmapsto \bigl(b(x),\syn_{\Rep}(x)\bigr)
\]
is a bijection, and therefore there exists a unitary $U_{\Rep}$ such that
\[
U_{\Rep}\ket{x}=\ket{b(x)}\ket{\syn_{\Rep}(x)}
\qquad\text{for every } x\in\{0,1\}^d.
\]

\begin{claim}[Reversible decoding for the repetition code]\label{claim:dec-rep}
For each $I\subseteq [d]$, let
\[
x_I:=1_I0_{[d]\setminus I}
\qquad\text{and}\qquad
b(I):=b(x_I).
\]
Then for every Pauli operator $E=Z^J X^I$ on the repetition block and every one-qubit state $\ket{\psi}$,
\[
U_{\Rep}E\Enc_{\Rep}(\ket{\psi})
=
X^{b(I)}Z^{|J|\bmod 2}\ket{\psi}\ket{\syn_{\Rep}(x_I)},
\]
up to a global phase. Moreover, if $|I|<d/2$, then $b(I)=0$. In particular, when fewer than $d/2$ qubits suffer an $X$-error, the decoded logical error is either $I$ or $Z$; no logical $X$ component appears.
\end{claim}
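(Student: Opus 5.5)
The plan is a direct computation on the computational basis, using linearity in $\ket{\psi}=\alpha\ket0+\beta\ket1$.

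\emph{Step 1: compute $E\,\Enc_{\Rep}(\ket{\psi})$.} Write $E=Z^JX^I$, with the phase ordering absorbed into the global phase. Since $\Enc_{\Rep}(\ket{\psi})=\alpha\ket{0^d}+\beta\ket{1^d}$, we get
\[
X^I\Enc_{\Rep}(\ket{\psi})=\alpha\ket{x_I}+\beta\ket{\overline{x_I}} .
\]
The factor $Z^J$ acts on $\ket{x}$ as $(-1)^{|x\wedge 1_J|}$. This gives the phase $(-1)^{|I\cap J|}$ on the first branch. On the second branch it gives $(-1)^{|J\setminus I|}=(-1)^{|J|}(-1)^{|I\cap J|}$. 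Pulling out the common factor $(-1)^{|I\cap J|}$ as a global phase leaves
\[
\alpha\ket{x_I}+(-1)^{|J|}\beta\ket{\overline{x_I}} .
\]

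\emph{Step 2: apply $U_{\Rep}$.} Parity checks are invariant under complementation, so $\syn_{\Rep}(\overline{x_I})=\syn_{\Rep}(x_I)$. By the tie-breaking rule, $b(\overline{x_I})=1\oplus b(I)$. Hence the state becomes
\[
\bigl(\alpha\ket{b(I)}+(-1)^{|J|}\beta\ket{1\oplus b(I)}\bigr)\ket{\syn_{\Rep}(x_I)} .
\]
The first register equals $X^{b(I)}Z^{|J|\bmod 2}\ket{\psi}$. To see this, apply $Z^{|J|}$ first, which flips the sign of $\beta$. Then apply $X^{b(I)}$, which swaps the basis labels when $b(I)=1$. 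This matches the claimed form exactly, with no extra phase.

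\emph{Step 3: the majority case.} If $|I|<d/2$, then $x_I$ has fewer than $d/2$ ones, so its majority bit is $0$ and no tie occurs. Therefore $b(I)=0$, and the logical error is $Z^{|J|\bmod 2}\in\{I,Z\}$.

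\emph{Main obstacle.} The only real care needed is bookkeeping of the phases and operator ordering. One must check that the relative sign between the two branches is exactly $(-1)^{|J|}$, independent of $I$, and that the choice of ordering $Z^JX^I$ versus $X^IZ^J$ only affects the global phase.
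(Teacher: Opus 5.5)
Your proposal is correct and follows essentially the same route as the paper. The paper also expands $\ket{\psi}=\alpha\ket0+\beta\ket1$, factors out the global phase $(-1)^{|I\cap J|}$ to leave the relative sign $(-1)^{|J|}$, uses $\syn_{\Rep}(\overline{x_I})=\syn_{\Rep}(x_I)$ together with the tie-breaking rule $b(\overline{x})=1\oplus b(x)$, reads off $X^{b(I)}Z^{|J|\bmod 2}\ket{\psi}$, and concludes $b(I)=0$ from strict majority when $|I|<d/2$.
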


\begin{proof}
Write $\ket{\psi}=\alpha\ket{0}+\beta\ket{1}$ and set $x:=x_I$. Then
\[
X^I\Enc_{\Rep}(\ket{\psi})
=
\alpha\ket{x}+\beta\ket{\overline{x}},
\]
where $\overline{x}=1^d\oplus x$. Applying $Z^J$ contributes a phase $(-1)^{|I\cap J|}$ to $\ket{x}$ and a phase $(-1)^{|J\setminus I|}$ to $\ket{\overline{x}}$. Factoring out the global phase $(-1)^{|I\cap J|}$, we obtain
\[
E\Enc_{\Rep}(\ket{\psi})
=
(-1)^{|I\cap J|}\Bigl(\alpha\ket{x}+(-1)^{|J|}\beta\ket{\overline{x}}\Bigr).
\]
By definition of $U_{\Rep}$,
\[
U_{\Rep}\ket{x}=\ket{b(I)}\ket{\syn_{\Rep}(x_I)}
\qquad\text{and}\qquad
U_{\Rep}\ket{\overline{x}}=\ket{1\oplus b(I)}\ket{\syn_{\Rep}(x_I)},
\]
because $\syn_{\Rep}(\overline{x})=\syn_{\Rep}(x)$ and $b(\overline{x})=1\oplus b(x)$ by construction. Therefore
\[
U_{\Rep}E\Enc_{\Rep}(\ket{\psi})
=
(-1)^{|I\cap J|}\Bigl(\alpha\ket{b(I)}+(-1)^{|J|}\beta\ket{1\oplus b(I)}\Bigr)\ket{\syn_{\Rep}(x_I)}.
\]
The qubit in parentheses is exactly $X^{b(I)}Z^{|J|\bmod 2}\ket{\psi}$, up to the same global phase. This proves the claimed formula.

Finally, if $|I|<d/2$, then the string $x_I$ has strict majority $0$, so $b(I)=0$.
\end{proof}

\subsection{Fault Tolerance Notions}\label{sec:prelims-ft}
Here, we will define essential notions and properties of our fault-tolerance protocols constructed for our relevant noise model. We borrow ideas from~\cite{nguyen2025quantum, he2025composable}, although we specialize to codes and protocols with bounded-error guarantees.

\begin{definition}[Deviations] Let $ 0\leq t \leq 1$.
    For two (not necessarily normalized) pure states \(\ket{\phi}\) and \(\ket{\psi}\), where $\ket{\phi}$ is on $n$ qubits,
  \(\ket{\psi}\) is said to be a \emph{$t$-deviation} of \(\ket{\phi}\) if there exists a decomposition
  \begin{align}
    \label{eq:deviation-decomp}
    \ket{\psi} = \sum_i \ket{\phi^{(i)}}\otimes \ket{\xi^{(i)}},
  \end{align}
  where $\ket{\phi^{(i)}}$ are states on the same qubits as $\ket{\phi}$, and $\ket{\xi^{(i)}}$ are states on an auxiliary system, such that for each $i$, there exist a set \(S^{(i)} \subseteq [n]\) with $|S^{(i)}| \leq t n$ and Pauli operator \(N_{S^{(i)}}\) supported on $S^{(i)}$ such that \(\ket{\phi^{(i)}} = \left(\mathrm{Id}_{[n]\setminus S^{(i)}}\otimes N_{S^{(i)}}\right)(\ket{\phi})\).
  When the sets $S^{(i)}=S$ for all $i$, we sometimes explicitly use the terminology \emph{$S$-deviation}. When the sum has only one term, we say that $\ket{\psi}$ is a \emph{Pauli $t$-deviation} of $\ket{\phi}$.
\end{definition}

\begin{definition}[Generalized deviations]
    Generalizing the above definition to the case when $\ket{\phi}$ is viewed as composed of multiple registers $[n] \cong R_1 \cup R_2 \cup ... \cup R_m$, we say that $\ket{\psi}$ is a \emph{$\vec{t}$-deviation} of $\ket{\phi}_{R_1R_2...R_m}$ for a vector $\vec{t} = [t_1,t_2,...,t_m]$ if in the above decomposition of~\Cref{eq:deviation-decomp}, each $S^{(i)}$ includes at most $t_j |R_j|$ qubits on register $j$. If $t_1 = t_2 = ...=t_m$, we simply say $t$-deviation and omit the register labels when clear from context. Similarly we can define a notion of fixed-set $(S_1 \cup S_2 \cup...\cup S_m)$-deviation for the generalized case. %
\end{definition}

Our most frequent use of the above notion of (generalized) deviation is when $\ket{\phi}$ is a codeword of multiple code blocks. This means that each codeblock is not too deviated from the desired codeword.

\begin{remark}
        The above notion of deviation is strictly stronger than other notions such as in~\cite{aharonov1999fault, crepeau2002secure,nguyen2025quantum}. In particular,~\cite{aharonov1999fault}'s deviation notion only requires that a state and be transformed to another via a local map.~\cite{crepeau2002secure}'s notion is similar to our $S$-deviation notion, but restricts the faults giving rise to it to be physical quantum channels. In this work, we deal with a non-physical noise model from the circuit-to-Hamiltonian reduction that allows a linear combination of adversarial Pauli errors, hence we need a strong notion of deviation that can handle such superpositions. 
\end{remark}

We will construct distributed fault-tolerant protocols on graphs that maintain deviations from the desired code states, formalized as below.
\begin{definition}[Multiparty computation protocol]
    \label{def:protocol}
    A \emph{multiparty computation protocol} $\cP$ on a multigraph $G=(V,E)$ is a protocol involving $|V|$ players, where each player holds a register of qubits and can perform local quantum computations on their register and communicate qubits to other players across edges in $E$. The protocol $\cP$ specifies the local computations and communications to be performed by the players. When viewed as a quantum circuit consisting of bounded-locality operations, the protocol has \emph{depth} $D$. The \emph{work complexity} $m$ is the maximum complexity of the computation and communication performed by any vertex throughout the protocol. In the $\mathsf{QMA}$ setting, the protocol also takes in a witness state $\ket{\xi}$, distributed across the players as specified by the protocol.
\end{definition}

\begin{definition}[Edge Pauli faults and supports]
    \label{def:edge-pauli-fault}
    Let $\cP$ be a distributed protocol on a multigraph $G=(V,E)$ with set of wires $A$ when viewed as a circuit of bounded-locality operations, and let $\mu$ be a probability measure over $E$. In a faulty execution of $\cP$, a list of Pauli operators $\mathsf{F} \in \mathsf{PAULI}^A$ are inserted at a subset of locations in $A$. We say that the execution is faulty on edge $(i,j)$ if there is a non-trivial Pauli fault in $\mathsf{F}$ touching a qubit either in player $i$ or player $j$'s registers, or occurring in a communication between them. Given a set of edges $F \subseteq E$, an \emph{$F$-edge faulty} execution of $\cP$ is one where the set of faulty edges is contained in $F$. We define the support of $\mathsf{F}$ as $\operatorname{supp}_{\cP}(\mathsf{F})=F$ and denote the faulty protocol as $\cP_{\mathsf{F}}$. If $\mu(F) \leq \delta$, we say the execution is a \emph{$\delta$-edge faulty} execution of $\cP$.
\end{definition}

\begin{definition}[Fault-tolerant protocols] \label{def:fault-tolerant-protocol}
Let $\mathcal{P}$ be a quantum MPC protocol on a multigraph $G=(V,E)$ with edge measure $\mu$, with $n$ designated input qubits and $n'$ designated output qubits.
Let $C$ be a quantum operation mapping $k$ to $k'$ qubits.
Let \(\mathcal{Q}, \mathcal{Q}'\) be quantum codes with parameters \([n,k]\), \([n',k']\) and associated physical qubit partitionings $[n]\cong R_1 \cup ... \cup R_{m}, [n']\cong R'_1 \cup ... \cup R'_{m'}$ with deviation thresholds $\vec{t} \in [0,1]^{m}$, $\vec{t'} \in [0,1]^{m'}$, respectively. For $F \subseteq E$, the protocol $\cP$ is said to be a \emph{$(\mathcal{Q},\vec{t},\mathcal{Q'},\vec{t'},F)$-fault-tolerant protocol} for $C$ if the following statement holds: 

For any $k$-qubit pure state $\ket{\psi}$, upon any input state $\ket{\Psi}$ that is Pauli \(\vec{t}\)-deviated from \(\mathrm{Enc}_\mathcal{Q}(\ket{\psi})\), an $F$-edge faulty execution of $\cP$ specified by Pauli faults $\mathsf{F}
$ outputs a state \(\cP_\mathsf{F}(\ket{\Psi})\) that is \(\vec{t'}\)-deviated from \(\mathrm{Enc}_{\mathcal{Q}'}(C\ket{\psi})\).
    
\noindent  Furthermore, if the above statement holds for all $F$ such that $\mu(F) \leq \epsilon$, then we say $\cP$ is a $(\mathcal{Q},\vec{t},\mathcal{Q'},\vec{t'},G_\epsilon)$-FT protocol for $C$. In the special case where $\mathcal{Q}=\mathcal{Q'}$ and $\vec{t} =\vec{t'}$, or when one of the codes is trivial (e.g., when $C$ is a state preparation operation) we simply use the terms $(\mathcal{Q},\vec{t},F)$-FT or $(\mathcal{Q},\vec{t}, G_\epsilon)$-FT protocol.
    
\end{definition}

We will mainly consider the case where $\mathcal{Q}$ (resp. $\mathcal{Q}'$) are composed of $k$ (resp. $k'$) code blocks of a few underlying codes (e.g. quantum locally testable code or repetition code), each encoding a single logical qubit. Then, the partitioning into registers in the above definition is the natural one according to the code blocks, so $m=k$ and $m'=k'$. If the codes are of the same type we will also often use uniform thresholds $\vec{t}=[t]^k$, $\vec{t'}=[t']^{k'}$ and accordingly abbreviate $(\mathcal{Q},t, \mathcal{Q}',t',F)$-FT.

\begin{definition}[Friendliness]
     In the above definition, the protocol is further said to be \emph{friendly} if for any input state $\ket{\psi}$ (arbitrarily) Pauli-deviated from a $\mathcal{Q}$-codeword, the output state $\cP_\mathsf{F}(\ket{\psi})$ is $\vec{t'}$-deviated from \(\mathrm{Enc}_{\mathcal{Q}'}(G\ket{\xi})\) for \emph{some} state $\ket{\xi}$.
\end{definition}

\begin{remark}\label{remark:pauli}
    Note that by definition, it holds that \(\sum_{\mathsf{F}: \supp(\mathsf{F}) \in F} \alpha_\mathsf{F} \cP_\mathsf{F}(\ket{\psi})\) is \(t'\)-deviated from \(\mathrm{Enc}_{\mathcal{Q}'}(C\ket{\phi})\) for all coefficients $\alpha_\mathsf{F}$, which is a strong guarantee that can handle superpositions of Pauli errors. Thus, by linearity, it suffices to show a gadget is fault-tolerant when the input state is a Pauli deviation of an encoded state and the circuit fault is also Pauli.
\end{remark}

We will also extend these definitions to the setting of \emph{partial} fault tolerance where we allow for some of the codeblocks to suffer from arbitrary deviation. These notions will be required in the analysis of our routing and distributed computing protocols later on,  so we introduce them in~\Cref{sec:routing}.

We end this section by observing that fault-tolerant gadgets can be composed with each other, which makes our fault-tolerance proofs modular. 

\begin{proposition}[Sequential composability of fault-tolerant gadgets]\label{def:seq-composability}
Let \(\mathcal{Q}_1,\mathcal{Q}_2,\mathcal{Q}_3\) be codes with parameters \([n_1,k_1]\), \([n_2,k_2]\), \([n_3,k_3]\) and associated deviation thresholds \(t_1,t_2,t_3\). Let \(G_1,G_2\) be operations mapping \(k_1\) to \(k_2\) qubits and \(k_2\) to \(k_3\) qubits. Let \(C_1,C_2\) be \((\mathcal{Q}_1,t_1,\mathcal{Q}_2,t_2,F_1)\)- and \((\mathcal{Q}_2,t_2,\mathcal{Q}_3,t_3,F_2)\)-fault-tolerant gadgets for \(G_1,G_2\), respectively. Then the circuit
$C=C_2\circ C_1$ is a \((\mathcal{Q}_1,t_1,\mathcal{Q}_3,t_3,F_1\cup F_2)\)-fault-tolerant gadget for
the operation $G_2\circ G_1.$
\end{proposition}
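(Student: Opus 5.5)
The plan is to unfold \Cref{def:fault-tolerant-protocol} for $C_2\circ C_1$ and reduce to the two hypotheses using linearity, as in \Cref{remark:pauli}.

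Fix a $k_1$-qubit pure state $\ket{\psi}$ and an input $\ket{\Psi}$ that is Pauli $t_1$-deviated from $\mathrm{Enc}_{\mathcal{Q}_1}(\ket{\psi})$. Fix an $(F_1\cup F_2)$-edge faulty execution of $C$ with Pauli fault list $\mathsf{F}$. Since $C$ is the concatenation of the circuits $C_1$ and $C_2$, the wires of $C$ split into those of $C_1$ and those of $C_2$. So $\mathsf{F}=(\mathsf{F}_1,\mathsf{F}_2)$ factors, and $C_{\mathsf{F}}=(C_2)_{\mathsf{F}_2}\circ(C_1)_{\mathsf{F}_1}$. Here I need $\supp(\mathsf{F}_1)\subseteq F_1$ and $\supp(\mathsf{F}_2)\subseteq F_2$. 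As literally stated this is not automatic: a fault inside $C_1$ could sit on an edge in $F_2\setminus F_1$. I will read the statement in its intended sense, namely that the faults occurring during $C_i$ are supported on $F_i$. With that reading, the first gadget gives that $\ket{\Phi}:=(C_1)_{\mathsf{F}_1}(\ket{\Psi})$ is $t_2$-deviated from $\mathrm{Enc}_{\mathcal{Q}_2}(G_1\ket{\psi})$.

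The main step is feeding $\ket{\Phi}$ into $C_2$. The hypothesis on $C_2$ only covers Pauli $t_2$-deviated inputs, whereas $\ket{\Phi}$ is a general deviation: a sum $\sum_i (N_{S^{(i)}}\,\mathrm{Enc}_{\mathcal{Q}_2}(G_1\ket{\psi}))\otimes\ket{\xi^{(i)}}$ with $|S^{(i)}\cap R_j|\le t_2|R_j|$. I will expand each $\ket{\xi^{(i)}}$ in a basis of the auxiliary register and apply $(C_2)_{\mathsf{F}_2}\otimes \mathrm{Id}_{\mathrm{aux}}$ term by term. This is legitimate because the faulty protocol is linear and acts trivially on the auxiliary system.

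Each summand $N_{S^{(i)}}\mathrm{Enc}_{\mathcal{Q}_2}(G_1\ket{\psi})$ is a Pauli $t_2$-deviation. By the hypothesis on $C_2$, its image under $(C_2)_{\mathsf{F}_2}$ is $t_3$-deviated from $\mathrm{Enc}_{\mathcal{Q}_3}(G_2G_1\ket{\psi})$, so it has a decomposition $\sum_\ell (M_{T^{(i,\ell)}}\mathrm{Enc}_{\mathcal{Q}_3}(G_2G_1\ket{\psi}))\otimes\ket{\zeta^{(i,\ell)}}$. Tensoring with $\ket{\xi^{(i)}}$ gives a new auxiliary state $\ket{\zeta^{(i,\ell)}}\otimes\ket{\xi^{(i)}}$. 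Summing over $i,\ell$ then shows directly that $C_{\mathsf{F}}(\ket{\Psi})$ is $t_3$-deviated from $\mathrm{Enc}_{\mathcal{Q}_3}(G_2G_1\ket{\psi})$. The deviation definition allows unnormalized states and an arbitrary finite index set, so collecting terms causes no trouble.

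One bookkeeping point needs care. The auxiliary systems in different summands must be padded to a common register, for example by embedding each into a fixed larger space. This makes the sum a well-defined vector of the required form. With that done, the argument is complete, and the case of vector thresholds $\vec t$ follows verbatim register by register.
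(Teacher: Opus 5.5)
Your argument is correct and is essentially the paper's proof: split the fault pattern into its restrictions to the two halves, use the first gadget to get a $t_2$-deviated intermediate state, apply the second gadget termwise to each Pauli-deviated summand, and recombine by linearity with auxiliary states $\ket{\zeta^{(i,\ell)}}\otimes\ket{\xi^{(i)}}$ (the basis expansion of $\ket{\xi^{(i)}}$ is not even needed). The caveat you flag about $\operatorname{supp}(\mathsf{F}_1)\subseteq F_1$ is a fair one---the paper simply asserts it---and your intended reading, which holds automatically when $F_1=F_2$ as in the uniform $K_\eps$ applications, is exactly what the paper relies on.
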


\begin{proof}
Fix any \(k_1\)-qubit pure state \(\ket{\psi}\), any Pauli \(t_1\)-deviation \(\ket{\Psi}\) of \(\mathrm{Enc}_{\mathcal{Q}_1}(\ket{\psi})\), and any \((F_1\cup F_2)\)-edge faulty execution of \(C_2\circ C_1\), specified by a Pauli fault pattern \(\mathsf{F}\). Let \(\mathsf{F}_1,\mathsf{F}_2\) denote the restrictions of \(\mathsf{F}\) to the locations belonging to \(C_1\) and \(C_2\), respectively. Then
\[
(C_2\circ C_1)_{\mathsf{F}}
=
(C_2)_{\mathsf{F}_2}\circ (C_1)_{\mathsf{F}_1},
\qquad
\operatorname{supp}_{C_1}(\mathsf{F}_1)\subseteq F_1,
\quad
\operatorname{supp}_{C_2}(\mathsf{F}_2)\subseteq F_2.
\]

Since \(C_1\) is \((\mathcal{Q}_1,t_1,\mathcal{Q}_2,t_2,F_1)\)-fault-tolerant for \(G_1\), the intermediate state
$\ket{\Omega}:=(C_1)_{\mathsf{F}_1}\ket{\Psi}$
is a \(t_2\)-deviation of \(\mathrm{Enc}_{\mathcal{Q}_2}(G_1\ket{\psi})\). Hence, by the definition of deviation, there exists a decomposition
\[
\ket{\Omega}
=
\sum_i P_i\,\mathrm{Enc}_{\mathcal{Q}_2}(G_1\ket{\psi})\otimes \ket{\xi_i},
\]
where each \(P_i\) is a Pauli operator supported on at most \(t_2 n_2\) qubits.
Now apply \((C_2)_{\mathsf{F}_2}\) termwise. For each \(i\), the state
$P_i\,\mathrm{Enc}_{\mathcal{Q}_2}(G_1\ket{\psi})$
is a Pauli \(t_2\)-deviation of \(\mathrm{Enc}_{\mathcal{Q}_2}(G_1\ket{\psi})\). Since \(C_2\) is \((\mathcal{Q}_2,t_2,\mathcal{Q}_3,t_3,F_2)\)-fault-tolerant for \(G_2\), we obtain
\[
(C_2)_{\mathsf{F}_2}\!\left(P_i\,\mathrm{Enc}_{\mathcal{Q}_2}(G_1\ket{\psi})\right)
=
\sum_j Q_{i,j}\,\mathrm{Enc}_{\mathcal{Q}_3}(G_2G_1\ket{\psi})\otimes \ket{\eta_{i,j}},
\]
where each \(Q_{i,j}\) is a Pauli operator supported on at most \(t_3 n_3\) qubits.
Therefore, by linearity,
\[
(C_2\circ C_1)_{\mathsf{F}}\ket{\Psi}
=
\sum_{i,j}
Q_{i,j}\,\mathrm{Enc}_{\mathcal{Q}_3}(G_2G_1\ket{\psi})
\otimes
(\ket{\eta_{i,j}}\otimes \ket{\xi_i}),
\]
which is a \(t_3\)-deviation of \(\mathrm{Enc}_{\mathcal{Q}_3}(G_2G_1\ket{\psi})\). This is exactly the desired fault-tolerance guarantee.
\end{proof}

\begin{lemma}[Parallel composability]
\label{lem:parallel-composability}
Let \(\mathcal{Q}_1,\mathcal{Q}_2,\mathcal{Q}_1',\mathcal{Q}_2'\) be codes with parameters \([n_1,k_1]\), \([n_2,k_2]\), \([n_1',k_1']\), \([n_2',k_2']\) and associated deviation thresholds \(t_1,t_2,t_1',t_2'\). Let \(G_1,G_2\) be operations mapping \(k_1\) to \(k_1'\) qubits and \(k_2\) to \(k_2'\) qubits, respectively. Let \(C_1,C_2\) be \((\mathcal{Q}_1,t_1,\mathcal{Q}_1',t_1',F_1)\)- and \((\mathcal{Q}_2,t_2,\mathcal{Q}_2',t_2',F_2)\)-fault-tolerant gadgets for \(G_1,G_2\), respectively\footnote{To be precise, we assume in addition that the fault-tolerance guarantee in \Cref{def:fault-tolerant-protocol} is preserved under adjoining an untouched reference register: namely, whenever \(C\) is \((\mathcal{Q},t,\mathcal{Q}',t',F)\)-fault-tolerant for \(G\), then for every reference register \(R\), the protocol $C$ leaves $R$ untouched. This is always satisfied in our setup.
}. 

Then
$C_1\otimes C_2$
is a
$(\mathcal{Q}_1\otimes \mathcal{Q}_2,\,(t_1,t_2),\,\mathcal{Q}_1'\otimes \mathcal{Q}_2',\,(t_1',t_2'),\,F_1\cup F_2)
$
-fault-tolerant gadget for
$G_1\otimes G_2$.
\end{lemma}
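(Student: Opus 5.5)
The plan is to realize $C_1\otimes C_2$ as the sequential composition $(C_1\otimes \Id)\circ(\Id\otimes C_2)$ and then apply \Cref{def:seq-composability}. This reduces the claim to checking that each factor, padded with an identity on the other register, is fault-tolerant with the appropriate generalized thresholds. Equivalently, one can argue directly by linearity, following \Cref{remark:pauli}, which is what I will do to avoid juggling codes of mixed type.

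\textbf{Step 1 (reduce to product inputs and product faults).} Fix a $(k_1+k_2)$-qubit pure state $\ket{\psi}$ and expand it in Schmidt form, $\ket{\psi}=\sum_a \lambda_a \ket{\psi^1_a}\ket{\psi^2_a}$. Then $\Enc_{\mathcal{Q}_1\otimes\mathcal{Q}_2}(\ket{\psi})=\sum_a\lambda_a \Enc_{\mathcal{Q}_1}(\ket{\psi^1_a})\otimes\Enc_{\mathcal{Q}_2}(\ket{\psi^2_a})$. A Pauli $(t_1,t_2)$-deviation has the form $P_1\otimes P_2$ applied to this codeword, where $|P_j|\le t_j n_j$. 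A Pauli fault pattern $\mathsf{F}$ with support in $F_1\cup F_2$ factors as $\mathsf{F}_1\otimes\mathsf{F}_2$, where $\mathsf{F}_j$ consists of the locations of $C_j$ and has $\supp_{C_j}(\mathsf{F}_j)\subseteq F_j$. Here I take the wire sets of the two gadgets to be disjoint, as is implicit for a tensor product. Hence
\[
(C_1\otimes C_2)_{\mathsf{F}}(P_1\otimes P_2)\Enc(\ket{\psi})=\sum_a\lambda_a\, (C_1)_{\mathsf{F}_1}P_1\Enc_{\mathcal{Q}_1}(\ket{\psi^1_a})\otimes (C_2)_{\mathsf{F}_2}P_2\Enc_{\mathcal{Q}_2}(\ket{\psi^2_a}).
\]

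\textbf{Step 2 (apply each guarantee, keeping the decomposition uniform in $a$).} The danger in Step 2 is that the deviation decomposition supplied by fault-tolerance for each $a$ might depend on $a$, and then the terms could not be recombined into a single deviation of $\Enc(G_1\otimes G_2\ket{\psi})$. To avoid this, I will not expand in the Schmidt basis at all. Instead I invoke the footnoted assumption that the guarantee is preserved under adjoining an untouched reference register $R$.

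Apply $C_1$ with register 2, which holds $P_2\Enc_{\mathcal{Q}_2}$ of the second half, treated as part of the reference $R$. More precisely, purify $\ket{\psi}$ against the second logical subsystem. The extended guarantee then gives
\[
\sum_i Q_i\,\bigl(\Enc_{\mathcal{Q}_1'}\otimes \Id\bigr)\bigl((G_1\otimes \Id)\ket{\psi}\bigr)\otimes\ket{\xi_i},
\qquad |Q_i|\le t_1'n_1'.
\]
Here the second block is still $P_2\Enc_{\mathcal{Q}_2}$, since $C_1$ does not touch it. Next, apply $(C_2)_{\mathsf{F}_2}$ termwise, treating register 1 together with $\ket{\xi_i}$ as the reference. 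Each term is a Pauli $t_2$-deviation of an $\mathcal{Q}_2$-encoding of the second half of the state $(G_1\otimes\Id)\ket{\psi}$, so we obtain
\[
\sum_{i,j} (Q_i\otimes Q'_{ij})\Enc_{\mathcal{Q}_1'\otimes\mathcal{Q}_2'}\bigl((G_1\otimes G_2)\ket{\psi}\bigr)\otimes\ket{\eta_{ij}}\ket{\xi_i},
\]
with $|Q'_{ij}|\le t_2'n_2'$. This is exactly a $(t_1',t_2')$-deviation.

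\textbf{Step 3 (conclude).} Arbitrary deviations and fault superpositions follow by linearity, as in \Cref{remark:pauli}. The main obstacle is the legitimacy of the reference-register extension when the logical input of $C_2$ is entangled with register 1. That is precisely what the footnote hypothesis supplies, so this step reduces to citing it carefully, in the same manner as the proof of \Cref{def:seq-composability}.
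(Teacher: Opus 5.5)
Your argument is correct and is essentially the paper's proof: use the footnoted reference-register assumption to apply $C_1$ with the second block untouched, then apply $C_2$ with the first block as reference, and combine the two Pauli decompositions by linearity into a $(t_1',t_2')$-deviation of $(\Enc_{\mathcal{Q}_1'}\otimes\Enc_{\mathcal{Q}_2'})((G_1\otimes G_2)\ket{\psi})$. The differences are cosmetic. The paper commutes the input Pauli $Q$ and the $P_i$ past the other gadget, so $C_2$ is applied once to $(I\otimes Q)\ket{\Theta}$, rather than folding them into the reference and working termwise. Your Step 1 Schmidt expansion is never actually used, and your intermediate display should show $P_2\Enc_{\mathcal{Q}_2}$ on the second block rather than $\Id$.
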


\begin{proof}
Fix any \((k_1+k_2)\)-qubit pure state \(\ket{\psi}\), and let
\[
\ket{\Gamma}
:=
(\mathrm{Enc}_{\mathcal{Q}_1}\otimes \mathrm{Enc}_{\mathcal{Q}_2})(\ket{\psi}).
\]
Let \(\ket{\Psi}\) be any Pauli \((t_1,t_2)\)-deviation of \(\ket{\Gamma}\). Then, with respect to the natural partition into the two code blocks, we may write
\[
\ket{\Psi}=(P\otimes Q)\ket{\Gamma},
\]
where \(P\) is supported on at most \(t_1 n_1\) qubits of the first block and \(Q\) is supported on at most \(t_2 n_2\) qubits of the second block.

Now fix any \((F_1\cup F_2)\)-edge faulty execution of \(C_1\otimes C_2\), specified by a Pauli fault pattern \(\mathsf{F}\), and let \(\mathsf{F}_1,\mathsf{F}_2\) be its restrictions to \(C_1\) and \(C_2\). Then
$(C_1\otimes C_2)_{\mathsf{F}}
=
(C_1)_{\mathsf{F}_1}\otimes (C_2)_{\mathsf{F}_2}.$

Since \(C_1\) is fault-tolerant, we can view the second code block as an untouched reference register. Applying this to the state \((P\otimes I)\ket{\Gamma}\), we obtain a decomposition
\[
\bigl((C_1)_{\mathsf{F}_1}\otimes I\bigr)(P\otimes I)\ket{\Gamma}
=
\sum_i (P_i\otimes I)\ket{\Theta}\otimes \ket{\xi_i},
\]
where each \(P_i\) is supported on at most \(t_1' n_1'\) qubits of the first output block, and $\ket{\Theta}=
(\mathrm{Enc}_{\mathcal{Q}_1'}G_1\otimes \mathrm{Enc}_{\mathcal{Q}_2})(\ket{\psi}).$ Since \(Q\) acts only on the second block, it commutes with \((C_1)_{\mathsf{F}_1}\otimes I\), and hence
\[
\bigl((C_1)_{\mathsf{F}_1}\otimes I\bigr)\ket{\Psi}
=
\sum_i (P_i\otimes Q)\ket{\Theta}\otimes \ket{\xi_i}.
\]

Next define
$\ket{\Theta'}
:=
(\mathrm{Enc}_{\mathcal{Q}_1'}G_1\otimes \mathrm{Enc}_{\mathcal{Q}_2'}G_2)(\ket{\psi}).$ Again for \(C_2\) with the first block treated as the untouched reference register, we get
\[
\bigl(I\otimes (C_2)_{\mathsf{F}_2}\bigr)(I\otimes Q)\ket{\Theta}
=
\sum_j (I\otimes R_j)\ket{\Theta'}\otimes \ket{\eta_j},
\]
where each \(R_j\) is supported on at most \(t_2' n_2'\) qubits of the second output block.
Therefore,
\begin{align*}
(C_1\otimes C_2)_{\mathsf{F}}\ket{\Psi}
&=
\bigl(I\otimes (C_2)_{\mathsf{F}_2}\bigr)
\bigl((C_1)_{\mathsf{F}_1}\otimes I\bigr)\ket{\Psi} \\
&=
\sum_i (P_i\otimes I)
\bigl(I\otimes (C_2)_{\mathsf{F}_2}\bigr)(I\otimes Q)\ket{\Theta}
\otimes \ket{\xi_i} \\
&=
\sum_{i,j}
(P_i\otimes R_j)\ket{\Theta'}
\otimes
(\ket{\eta_j}\otimes \ket{\xi_i}).
\end{align*}
This is a \((t_1',t_2')\)-deviation of \(\ket{\Theta'}\), i.e. of
$(\mathrm{Enc}_{\mathcal{Q}_1'}\otimes \mathrm{Enc}_{\mathcal{Q}_2'})((G_1\otimes G_2)\ket{\psi}),$ as required.
\end{proof}

%% file: mpc.tex
\section{Quantum Multiparty Computation with Zero Error}
\label{sec:mpc}

In this section, we construct a quantum multiparty computing (MPC) protocol with zero error (or perfect information-theoretic security) on the complete graph $K$ that can tolerate an inverse polylogarithmic fraction of corrupted players and communication links. The zero-error guarantee means that, if the number of corrupted players and communication links is below threshold, then the output state is guaranteed to be exactly correct, up to faults on some bounded number of qubits. This is in contrast to the standard notion of security in quantum MPC~\cite{crepeau2002secure}, which only guarantees that the output state is exactly correct up to faults on some bounded number of qubits when the protocol does not `fail', which is an exponentially unlikely event in the number of players. The zero-error guarantee is necessary because of the strong non-physical noise model that our circuit-to-Hamiltonian reduction produces. Our quantum MPC protocol relies on the existence of an explicit family of quantum locally testable codes (qLTCs)~\cite{dinur2024expansion} with efficient and robust decoding algorithms~\cite{nguyen2025quantum}. We also need asymptotically good codes with transversal CCZ~\cite{nguyen2025good, golowich2025asymptotically} in order to do gate teleportation for a universal gate set.

\Cref{thm:distributed-FTQC} is the main theorem of this section, and its proof constructs a zero-error quantum MPC protocol. Broadly speaking, this protocol is built along similar lines to previous classical and quantum MPC protocols. At the start, the players jointly hold an encoding of the input state $\ket{\psi}$ under a quantum error-correcting code $\mathcal{Q}$: every qubit of $\ket{\psi}$ is encoded individually into one codeblock of $\mathcal{Q}$, and one physical qubit from each encoded block is given to each player. This encoding functions as a \emph{quantum secret-sharing} of the input. Before the computation begins, the players run a verifiable quantum secret-sharing (VQSS) protocol to certify jointly that each purported sharing is close to a valid codeword. (VQSS must be sound/secure even in the presence of a bounded number of corrupted communication edges.) Once the inputs have been verified, the players compute jointly on their shared state by performing local operations and communicating while the data remain encoded, and hence without reconstructing the private logical state at any one player.

As we mention in the introduction, we are able to achieve zero-error VQSS by using a quantum locally testable code (qLTC). The local checks of a qLTC suggest a natural VQSS procedure (as described in the introduction). Each application of VQSS may introduce a fresh set of faults onto the shared state, and a qLTC with a robust decoder supplies an error-suppression procedure that keeps faults that arise from multiple uses of VQSS from spreading.

Some logical gates, including Pauli and CNOT gates, can be applied transversally for any CSS code. To implement the remaining gates, we use auxiliary-state gate teleportation~\cite{gottesman1999demonstrating}. In its simplest form, gate teleportation works as follows. Let $\ket{\Phi^+}=(\ket{00}+\ket{11})/\sqrt{2}$, and prepare the auxiliary state $(I\otimes G)\ket{\Phi^+}$. A Bell measurement between the input $\ket{\psi}$ and the first half of this auxiliary state produces two bits $(a,b)$, after which the remaining half is in the state
\begin{equation}
G X^a Z^b\ket{\psi}
=
\bigl(GX^aZ^bG^\dagger\bigr)G\ket{\psi}.
\end{equation}
The measurement outcomes $(a,b)$ therefore specify the correction $\bigl(GX^aZ^bG^\dagger\bigr)^\dagger$, and applying this correction leaves exactly $G\ket{\psi}$. When $G$ is Hadamard or CZ, the required corrections can be performed using Pauli gates, while for CCZ the corrections can be applied using only Pauli and CZ gates. Consequently, once these simpler correction gates are available, implementing $G$ reduces to preparing a reliable encoded copy of its associated auxiliary state $\ket{\phi_{G}} \coloneqq (I\otimes G)\ket{\Phi^+}$.

It remains to prepare the auxiliary states reliably. Since each such state is fixed and contains no private information, every player can locally prepare a copy, encode it using the code $\mathcal{Q}$, and distribute its shares among the other players. Honest preparers supply the intended state, while corrupted preparers may supply arbitrary states. The players run VQSS on each candidate state, and candidates that are too far from the code space of $\mathcal{Q}$ are rejected. An accepted candidate may nevertheless encode the wrong logical state, so the players then execute a \emph{state-distillation protocol} at the logical level of the code $\mathcal{Q}$. That is, each wire of the circuit $C_{\mathrm{dist}}$ for the state-distillation protocol is associated physically with a distributed block of the code $\mathcal{Q}$, and each gate in the circuit $C_{\mathrm{dist}}$ is implemented jointly by the players using the distributed logical implementation of that gate. We require that the state-distillation protocol for some auxiliary state $\ket{\phi_{G}}$ takes as input a set of candidate auxiliary states (some bounded fraction of which may be corrupted) and outputs exactly the intended auxiliary state $\ket{\phi_{G}}$ (provided that the number of corrupted candidates in the input does not exceed threshold). We also require that the circuit $C_\mathrm{dist}$ of the distillation protocol uses only gates whose logical/distributed implementations have already been constructed, so that the players can implement $C_{\mathrm{dist}}$ logically using existing operations. A distillation protocol for stabilizer states with the required properties (in particular, where $C_\mathrm{dist}$ uses only CNOT gates, measurements, and adaptive Pauli gates) is supplied by~\cite{nguyen2025quantum}. The recent construction of good quantum codes with transversal CCZ~\cite{nguyen2025good,golowich2025asymptotically} means that there is also a CCZ-state distillation protocol that uses only CNOTs, measurements, and adaptive Pauli and CZ gates. Since Hadamard and CCZ form a universal gate set, we can implement arbitrary quantum circuits in this way.

The components of the section therefore fit together as follows. The qLTC verification provides zero-error VQSS (\Cref{algo:vqss}, \Cref{prop:vqss}), and the robust decoder provides repeated error suppression (\Cref{algo:error-correction}, \Cref{prop:error-suppression}). Transversal Pauli and CNOT gates, together with their adaptive variants, supply the basic operations used by the distillation and teleportation circuits (\Cref{algo:adaptive-gates}, \Cref{prop:adaptive,remark:transversal}). Stabilizer-state distillation (\Cref{lem:stabilizer-distill}) is incorporated into the distributed state-preparation protocol (\Cref{algo:distributed-state-prep}, \Cref{prop:distributed-state-prep}) to produce encoded $\ket{0}$, $\ket{+}$, and $\ket{\mathrm{CZ}}=\mathrm{CZ}\ket{++}$ states. These states allow us to initialize ancillas and teleport Hadamard and CZ gates (\Cref{fact:gate-teleportation}, \Cref{algo:distributed-Clifford}, \Cref{prop:distributed-Clifford}). Once CZ is available, CCZ-state distillation (\Cref{lem:magic-distill}, \Cref{algo:distributed-magic-prep}) produces $\ket{\mathrm{CCZ}}=\mathrm{CCZ}\ket{+++}$, which is consumed to teleport CCZ (\Cref{fact:CCZ-teleportation}, \Cref{algo:teleported-CCZ}, \Cref{prop:teleported-CCZ}). The final construction first verifies each encoded logical input wire using VQSS with state replacement (\Cref{algo:vqss-state-replacement}), then evaluates the desired circuit gate by gate and applies error suppression between successive gates. Because every component restores the same error bound required by the next, sequential composability (\Cref{def:seq-composability}) allows these components to be composed for circuits of arbitrary length while preserving an exactly correct logical output, as formalized in \Cref{thm:distributed-FTQC}.

\subsection{Noise Model and Code Requirements}

We will consider MPC with an edge fault model, which is a stronger model than the vertex-corruption model considered in standard MPC literature~\cite{crepeau2002secure}.

Next we state the code needed for our protocol.

\begin{definition}[Single-shot decoder]
\label{def:single-shot-decoder}
     Let $\mathcal{Q}$ be a quantum CSS code specified by the parity check matrices $H_X \in \mathbb{F}_2^{r_X \times n}$ and $H_Z \in \mathbb{F}_2^{r_Z \times n}$. Suppose the data state carries the Pauli errors $e_X, e_Z \in \mathbb{F}_2^{n}$ and the syndrome measurements incur the errors $s_X \in \mathbb{F}_2^{r_Z}$, $s_Z \in \mathbb{F}_2^{r_X}$, such that the associated noisy syndromes are $\tilde{\sigma}_X = H_Z e_X + s_X$ and $\tilde{\sigma}_Z = H_X e_Z + s_Z$. Let $\mathcal{D}$ be a decoding algorithm that receives $\tilde{\sigma}_X$, $\tilde{\sigma}_Z$ and proposes the Pauli corrections $f_X$, $f_Z$. The decoder is said to be $(\alpha(n), \delta(n), \gamma(n),\eta)$-single-shot, where $\alpha, \beta, \gamma: \mathbb{N}^+ \mapsto \mathbb{R}^{+}$ and $0\leq \eta <1$, if the following holds. For each $P \in \{X,Z\}$, provided that
\begin{align*}
    |e_P|_R + \alpha(n) |s_P| \leq \delta(n) n,
\end{align*}
then the proposed Pauli correction $f_P$ satisfies
\begin{align*}
    |e_P + f_P|_R \leq \eta |e_P|_R + \gamma(n) |s_P|.
\end{align*}
Here $|\cdot|_R$ denotes the stabilizer-reduced weight. When $\eta=0$ we simply say $(\alpha,\delta,\gamma)$-single-shot.
\end{definition}

\begin{assumption}[Nice qLTC]\label{assumption:nice-qLTC} Suppose there exists an explicit family $[n,1,d]$ quantum CSS LTC $\mathcal{Q}$ with a set of $m=\Theta(n)$ bounded-degree checks (there exists $w=O(1)$ such that each qubit is involved in $\leq w$ checks, and each check acts on $\leq w$ qubits), and a function $\delta(n)$ such that the following hold. (a) The checks have soundness $\geq \delta(n)$. (b) The code admits an efficient and robust decoding algorithm with decoding radius $t \geq n \delta$, more precisely:
    \begin{enumerate}
    \item (\textbf{Efficient decoder}) A $\mathrm{poly}(n)$-time decoder that is $(\alpha, \delta(n), \alpha,0)$-single-shot (see \Cref{def:single-shot-decoder}), where $\alpha$ is a universal constant.
    \item (\textbf{Error suppressor})  A $O(1)$-depth circuit (including computing and applying a Pauli correction on the code qubits) that is $(\alpha, \delta(n), \alpha, \frac{1}{100})$-single-shot (see \Cref{def:single-shot-decoder}), where  $\alpha$ is a universal constant.
\end{enumerate} 
\end{assumption}

An example is the Dinur-Lin-Vidick almost-good qLTC with $\delta(n) = 1/\mathrm{polylog}(n)$~\cite{dinur2024expansion}, whose efficient and robust (i.e. single-shot) decoder is given in~\cite{nguyen2025quantum}. The code there encodes $\Theta(n)$ logical qubits, for our purposes we can just use one logical qubit and ignore the rest.

\begin{lemma}[Almost-good qLTC~\cite{dinur2024expansion,nguyen2025quantum}]\label{fact:almost-good-qLTC}
    There exists a family of quantum codes $\mathcal{Q}$ satisfying the above assumption with $\delta= 1/\mathrm{polylog}(n)$.
\end{lemma}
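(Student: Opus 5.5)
The lemma is essentially an instantiation: we must exhibit an explicit family meeting both parts of \Cref{assumption:nice-qLTC} with $\delta(n)=1/\polylog(n)$. The plan is to take the Dinur--Lin--Vidick qLTC family~\cite{dinur2024expansion} and import the single-shot decoding guarantees of~\cite{nguyen2025quantum}. From~\cite{dinur2024expansion} we get an explicit CSS family $[[n,k,\Delta]]$ with $k=\Theta(n)$ and $\Delta\ge n/\polylog(n)$. Its checks are bounded weight and bounded degree, with $m=\Theta(n)$ of them, and they have soundness $\ge 1/\polylog(n)$. This settles (a) and the locality requirements, once we set $\delta(n)$ to the minimum of the soundness, the relative distance and the decoding radius, all of which are $1/\polylog(n)$.

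Next we reduce to one logical qubit. We fix one pair of logical operators $\overline X_1,\overline Z_1$ and regard the remaining $k-1$ logical qubits as gauge qubits frozen in $\ket{0}$. Equivalently, we add $\overline Z_2,\dots,\overline Z_k$ to the stabilizer group as non-local generators that are never measured. The local check family stays the same, so soundness and bounded degree are unchanged. The distance of the one-qubit code can only increase, because the logical operators of the new code form a subset of the old ones, up to multiplication by stabilizers. The stabilizer-reduced weight $|\cdot|_R$ can only decrease under the larger group. So any single-shot inequality $|e+f|_R\le \eta|e|_R+\gamma|s|$ proved for the original code still holds, because reduced weight with respect to the original stabilizer group upper bounds reduced weight with respect to the enlarged one. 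The hypothesis $|e|_R+\alpha|s|\le \delta n$ needs a little care, since $e$ now has smaller reduced weight. We handle it by noting that a minimal-weight representative $e'$ of $e$ modulo the enlarged group differs from $e$ by a product of stabilizers and $Z$-type gauge logicals. These act trivially on the code state with the frozen gauge, so $e'$ has the same syndrome. We then apply the original decoder's guarantee to $e'$.

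Finally we invoke~\cite{nguyen2025quantum} for item~(b). That work gives a polynomial-time decoder for the DLV codes that is $(\alpha,\delta,\alpha,0)$-single-shot with radius $\delta n$, $\delta=1/\polylog(n)$, and constant $\alpha$. It also gives a constant-depth, local (e.g.\ small-set-flip type) error suppressor achieving constant-factor reduction $\eta<1$ of the reduced weight under noisy syndromes. If the suppression factor there is some constant $\eta_0>1/100$, we will compose $O(\log(1/\eta_0)^{-1}\cdot\log 100)=O(1)$ rounds of it. Each round adds at most $\alpha|s|$ to the weight, so the combined factor is $\eta_0^r\le 1/100$ while $\gamma$ grows only by a geometric-series constant. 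The depth stays $O(1)$, and the radius hypothesis is preserved across rounds because the weight is non-increasing up to the $\alpha|s|$ terms. Shrinking $\delta$ by a constant factor absorbs this. The main obstacle is this bookkeeping: matching the exact parameter conventions of~\cite{nguyen2025quantum}, namely reduced weight, syndrome-error weight and radius, to \Cref{def:single-shot-decoder}, and verifying that the iterated suppressor keeps the same universal constant $\alpha$. If necessary, $\alpha$ is redefined as the maximum of the resulting constants.
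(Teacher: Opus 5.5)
Your route (take the Dinur--Lin--Vidick family, import the efficient single-shot decoder and constant-depth suppressor of \cite{nguyen2025quantum}, and keep only one of the $\Theta(n)$ logical qubits) is the paper's. The paper just cites these works and says one can use one logical qubit and ignore the rest.

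\textbf{The gap: freezing the unused logical qubits destroys soundness.} Adding $\overline Z_2,\dots,\overline Z_k$ to the stabilizer group does \emph{not} leave soundness unchanged. Take a frozen codeword $\ket{v}=\encq(\ket{\psi})$ and set $\ket{w}:=\overline X_2\ket{v}$.
\begin{itemize}
\item $\ket{w}$ lies in the original code space, so it violates none of the local checks.
\item Any Pauli $P$ with $\ket{w}\propto P\,\encq(\ket{\psi'})$ must commute with every local check and anticommute with $\overline Z_2$.
\item So $P$ is a nontrivial logical operator of the original code, with weight at least $\Delta\ge n/\polylog(n)$.
\end{itemize}
Hence the frozen code has soundness $0$ relative to its own code space. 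It is also no longer the code cut out by the bounded-degree checks. Item (a) of \Cref{assumption:nice-qLTC} therefore fails. This is not cosmetic: in \Cref{prop:vqss}, exactly such a state would be accepted although its minimal-weight $P$ exceeds the rejection threshold $\frac{wc_0+4\kappa}{\delta}\sqrt{\eps}\,n$.

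\textbf{The fix: do not enlarge the stabilizer group.} Keep the DLV code space and checks, and treat logical qubits $2,\dots,k$ as gauge qubits of a subsystem code with unconstrained state. Encodings and deviations are then read up to an arbitrary, possibly entangled, gauge state. This is what ``ignore the rest'' means.
\begin{itemize}
\item Soundness, check structure, and stabilizer-reduced weight are literally those of the DLV code.
\item The dressed distance of the retained qubit is at least $\Delta$.
\item The single-shot guarantees of \cite{nguyen2025quantum} transfer verbatim, so your $e'$ argument becomes unnecessary.
\end{itemize}

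Your composition of $O(1)$ suppressor rounds to reach factor $1/100$ is fine bookkeeping. Iterate on the updated syndrome $\tilde\sigma+Hf_1=H(e+f_1)+s$, so the syndrome error stays $s$. This gives $\eta_0^r|e|_R+\frac{\gamma}{1-\eta_0}|s|$. The radius hypothesis is restored by shrinking $\delta$ by a constant factor, using $|s|\le \delta n/\alpha$.
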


\subsection{Verification of Distributed Codewords}
Our starting point is a protocol to verify that the players are sharing a state close to a codeword of $\mathcal{Q}$. The task is similar to that which is achieved by the VQSS protocol in~\cite{crepeau2002secure} for any CSS code. Our protocol only works for qLTC codes of the form given in~\Cref{assumption:nice-qLTC}; however, it is simpler and most importantly it is \emph{zero-error} (i.e. if the number of violations detected by the protocol is small, then the state is guaranteed to be close to the code space).

\begin{mdframed}
\begin{protocol}[Zero-error VQSS with voting threshold $c_1$ (without state replacement)]\mbox{}\label{algo:vqss}\\
\textbf{Input:} An $n$-qubit pure state $\ket{\Psi}$, purported to be a codeword of $\mathcal{Q}$, jointly held by the $n$ players. \\
\textbf{Output:} The original qLTC block and a length-$n$ repetition code block $\ket{\cdot}_\mathrm{rep}$ that is in logical state $\ket{0}$ (accept) or $\ket{1}$ (reject), jointly held by the players.
\begin{enumerate}
\setcounter{enumi}{-1}
\item Partition the $m=\Theta(n)$ checks of $\mathcal{Q}$ into $\chi=O(1)$ groups $G_1, \ldots, G_{\chi}$ such that each group contains checks that act on disjoint sets of qubits. Pre-assign each player $i$ to $O(1)$ number of checks involving player $i$ (arbitrarily).
\item \textbf{(Measuring qLTC checks)} \\
For $q=1,...,\chi$: \\
 \hphantom{a} For each check $P$ in $G_q$:
\begin{itemize}
    \item The players send the qubits in $P$ to the assigned player $j$ via the following one-to-all-to-one protocol\footnote{This is needed only because we are in the edge fault model. Otherwise, the players could just communicate directly with each other.}:
    \begin{enumerate}[label=(\alph*)]
        \item For each $i$ in $\operatorname{supp}(P)$, player $i$ encodes its qubit into $\mathcal{Q}_\mathrm{inner}$ (inner encoding)\footnote{Any linear-distance quantum code with efficient decoder works (e.g.~\cite{aharonov1999fault, dinur2023good, nguyen2025quantum}).} and distributes to all players.
        \item The players send the shares of the above encoding to player $j$.
    \end{enumerate}
    \item Player $j$ error-corrects and unencodes the inner encodings of the qubits involved in $P$, (coherently) measures $P$ into a register $\ket{\cdot}_{P}$, then encodes this measurement qubit into the length-$n$ classical repetition code and distributes to all players. 
    Here, $P$ performs this distribution by---similarly to (a) and (b) in the previous step---firstly encoding every bit $i$ of the classical repetition code in another length-$n$ error-correcting code, and then distributing each of the resulting $n$ codeblocks among the $n$ players with the $j$th player out of $n$ receiving the $j$th bit of codeblock $i$ out of $n$, who are then expected to individually send the bits from codeblock $i$ to player $i$. Each player then decodes the codeblock they received from the other players.
     \item Player $j$ reencodes the inner encodings of the qubits involved in $P$ and sends the shares back to the respective players. The players send back to the original players, and the original players unencode the inner encodings.
\end{itemize}
\item \textbf{(Computing threshold)} Each player now holds a single bit for each check in the code $\mathcal{Q}$. Every player counts the number of 1s in the $m$ bits that they hold. If this number is greater than a threshold $t$ which we choose later, the player outputs 1 as their `opinion' bit; otherwise, the player outputs 0.
\item \textbf{(Byzantine agreement)} The players execute a Byzantine agreement protocol (with all communications done using an all-to-one-to-all gadget, as in previous steps) so that every player broadcasts their output bit to every other player. If there is a majority opinion that at most $\sqrt{|F|}/2$ players disagree with, all (honest) players replace their opinion bit with the majority opinion; otherwise, all (honest) players replace their opinion bit with 0.
\end{enumerate}
\end{protocol}
\end{mdframed}

This protocol will be used to verify the states provided by the QMA prover, and as a subroutine in the distributed state preparation procedures in the next subsection.

\begin{proposition}[Distributed code verification]\label{prop:vqss}
Assume \Cref{assumption:nice-qLTC}, and additionally assume that the number of checks satisfies \(m\ge n\).
Let \(w=O(1)\) be the common bound on check weight and qubit degree, and let \(\chi=O(1)\) be the number of check groups used in step 1 of~\Cref{algo:vqss}. Define
\(\kappa:=\frac{w+2w^{\chi+2}}{2\sqrt 2}\).
Fix any absolute constant \(c_0>0\).

Then there exist \(\eps^*_\mathrm{ver}=\poly(\delta)\) and a constant \(c'_\mathrm{ver}>0\) such that, for any \(\eps\le \eps^*_\mathrm{ver}\), the following holds. Consider any \(F\)-edge faulty execution of~\Cref{algo:vqss} with \(|F|\le \eps n^2/2\), and suppose that the threshold in step 2 is set to \((w c_0+2\kappa)\sqrt{\eps}\,m\). Then there exists a subset \(C'_\mathrm{ver}\subseteq [n]\) of size \(|C'_\mathrm{ver}| \le c'_\mathrm{ver}\sqrt{\eps}\,n\) depending only on $F$ such that the following guarantees hold. Suppose the input is a pure state of the form \(\ket{\Psi}=P\,\encq(\ket{\psi})\) for some Pauli operator \(P\), where \(P\) is the \emph{minimal-weight} Pauli operator which would put \(\ket{\Psi}\) in this form.

\begin{itemize}
\item \textbf{(Completeness)} If \(|\supp(P)| \le c_0\sqrt{\eps}\,n\), then the output of~\Cref{algo:vqss}, including all workspaces and ancilla qubits, is of the form
\[
E\!\left(\ket{\Psi}_\mathrm{in}\otimes\ket{0^n}_\mathrm{out}\otimes\ket{\mathrm{rest}}\right),
\]
where the support of \(E\) on the \(\ket{\cdot}_\mathrm{in},\ket{\cdot}_\mathrm{out}\) registers is restricted to players in \(C'_\mathrm{ver}\).

\item \textbf{(Soundness)} If \(|\supp(P)| \ge \frac{w c_0+4\kappa}{\delta}\sqrt{\eps}\,n\), then the output state is of the form
\[
E\!\left(\ket{\Psi}_\mathrm{in}\otimes\ket{1^n}_\mathrm{out}\otimes\ket{\mathrm{rest}}\right),
\]
where again the support of \(E\) on the \(\ket{\cdot}_\mathrm{in},\ket{\cdot}_\mathrm{out}\) registers is restricted to players in \(C'_\mathrm{ver}\).

\item If \(c_0\sqrt{\eps}\,n \le |\supp(P)| \le \frac{w c_0+4\kappa}{\delta}\sqrt{\eps}\,n\), then the output state is of one of the above two forms.
\end{itemize}

Furthermore, each player is involved in at most \(\poly(n)\) qubits and gates during the protocol.
\end{proposition}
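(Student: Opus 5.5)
We plan to first isolate the set of ``corrupted'' players and then do a deterministic, Pauli-by-Pauli analysis, using linearity (\Cref{remark:pauli}) to pass to superpositions. Since the edge measure is uniform on the complete graph and \(|F|\le \eps n^2/2\), a Markov argument shows that at most \(\sqrt{\eps}\,n\) players are incident to more than \(\sqrt{\eps}\,n\) faulty edges. Let \(C_0\) denote this set of players. We then fix \(C'_\mathrm{ver}\) to be \(C_0\) together with every player whose \(O(1)\) pre-assigned checks (from step 0) touch a qubit of \(C_0\). By bounded degree \(w\), this gives \(|C'_\mathrm{ver}|\le c'_\mathrm{ver}\sqrt{\eps}\,n\), and the set depends only on \(F\).

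The first step is to show that every communication gadget (one-to-all-to-one, and the repetition broadcast) succeeds \emph{exactly} whenever its endpoints lie outside \(C_0\). The reason is that each such transfer passes through all \(n\) intermediate players. An endpoint \(i\notin C_0\) has at most \(\sqrt{\eps}n\) faulty edges, and the remaining faults, counted over intermediaries, amount to at most an \(O(\sqrt{\eps})\) fraction of the shares of the inner code. Choosing \(\mathcal{Q}_\mathrm{inner}\) with linear distance, bounded-distance decoding (via \Cref{claim:reversible-decoding-properties} and \Cref{claim:dec-rep} for the reversible versions) returns the exact logical qubit, with the syndrome left in workspace. So for a Pauli fault pattern, honestly assigned checks outside \(C'_\mathrm{ver}\) get measured exactly on \(\ket{\Psi}\). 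Because \(\ket{\Psi}=P\,\encq(\ket\psi)\) is an eigenstate of every check, with eigenvalue determined by \(\syn(P)\), these measurements are deterministic and leave \(\ket{\Psi}\) unchanged. Their results are broadcast correctly to all honest players, and any errors on \(\mathrm{in}\) or \(\mathrm{out}\) are confined to \(C'_\mathrm{ver}\).

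Next comes the counting, carried out from the viewpoint of an honest player. Let \(v\) be the number of checks violated by \(P\) and \(b\) the number of checks whose reported bit is unreliable. Each bad player controls \(O(1)\) checks, so \(b\le 2\kappa\sqrt{\eps}\,m\) after accounting for the \(w^{\chi+2}\) spread across rounds; this is where \(\kappa\) enters. In the completeness case, \(v\le w|\supp P|\le wc_0\sqrt\eps n\le wc_0\sqrt\eps m\), so the count is \(\le v+b\) and lies below the threshold \((wc_0+2\kappa)\sqrt\eps m\). In the soundness case we use soundness \(\delta\) of the qLTC (\Cref{assumption:nice-qLTC}) with minimality of \(P\): this gives \(v\ge \delta|\supp P|\cdot m/n\ge (wc_0+4\kappa)\sqrt\eps m\), so the count is \(\ge v-b\) and exceeds the threshold. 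In the intermediate regime every honest player still sees the same reliable bits up to the \(b\) bad ones, but the outputs may split. To handle this, step 3's Byzantine agreement is made deterministic: since fewer than \(\sqrt{|F|}/2\) players are corrupted and all honest broadcasts arrive exactly, all honest players adopt a common bit, yielding \(\ket{0^n}\) or \(\ket{1^n}\) outside \(C'_\mathrm{ver}\). We take \(\eps^*_\mathrm{ver}=\poly(\delta)\) so that \(\frac{wc_0+4\kappa}{\delta}\sqrt\eps\) stays below the decoding radius.

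The main obstacle is the first step, namely making every gadget \emph{zero-error} and \emph{localized}. Here decoding must be exact for every Pauli fault pattern, not with high probability, and all residual garbage must be pushed into \(\ket{\mathrm{rest}}\) or onto \(C'_\mathrm{ver}\) registers. Tracking reversible decoders and the repetition-of-codes broadcast carefully enough to make this precise, including faulty intermediaries acting coherently, is where most of the work will go. By contrast, the counting in the second step is routine once the constants are arranged.
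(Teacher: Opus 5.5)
Your overall architecture matches the paper's. You reduce edge faults to a small set of corrupted players via the one-to-all-to-one gadgets, bound the lightcone of their actions, and count unreliable check bits. You then conclude with qLTC soundness for the minimal-weight $P$ and Byzantine agreement. However, two steps do not go through as written.

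First, your $C'_\mathrm{ver}$ is the wrong set. In step~1 of \Cref{algo:vqss}, every qubit of $\supp(P)$ for a check $P$ assigned to a corrupted player $j$ is physically sent to $j$ and returned by $j$. So the input-register qubits of those (possibly honest) players can be corrupted. A player $a\in\supp(P)$ need not belong to your set, which only adds players whose \emph{own} assigned checks touch $C_0$. Moreover, when you bound $b$ you invoke a $w^{\chi+2}$ spread across rounds. If corrupted qubits can spread in later rounds, the spread qubits must also lie in $C'_\mathrm{ver}$; otherwise your claim that all errors on the in/out registers are confined to $C'_\mathrm{ver}$ is unjustified. The paper takes $C'_\mathrm{ver}:=L_\chi$, where $L_0$ is the set of qubits of bad players and $|L_t|\le (w^2+1)|C|+w|L_{t-1}|$, so that $|L_\chi|\le 2w^{\chi+1}|C|$. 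It then counts as unreliable exactly the checks assigned to bad players or touching $L_\chi$.

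Second, your degree cutoff $\sqrt{\eps}\,n$ only gives $|C_0|<\sqrt{\eps}\,n$. But $\kappa$, the step-2 threshold $(wc_0+2\kappa)\sqrt\eps\,m$, and the Byzantine tolerance $\sqrt{|F|}/2$ hard-coded into step~3 are all calibrated to a bad set of size at most $\sqrt{|F|}/2\le\frac{\sqrt\eps}{2\sqrt2}\,n$.
\begin{itemize}
\item With your $C_0$ the unreliable-bit count is only $b\le (w+2w^{\chi+2})|C_0|<2\sqrt2\,\kappa\sqrt\eps\,m$, not $2\kappa\sqrt\eps\,m$. Then $v+b$ may exceed the threshold in the completeness case, and $v-b$ may fall below it in the soundness case.
\item Even granting $b\le 2\kappa\sqrt\eps\,m$, soundness only yields a count $\ge$ the threshold, while step~2 outputs $1$ only when the count is strictly greater.
\item Your bound on $|C_0|$ does not imply your assertion that fewer than $\sqrt{|F|}/2$ players are corrupted. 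Without it, the step-3 rule can send all honest players to $0$ in the soundness case, because more than $\sqrt{|F|}/2$ players may disagree with the honest majority.
\end{itemize}

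The fix is the paper's cutoff $\deg_F(i)>4\sqrt{|F|}$. Then $|C|\le\sqrt{|F|}/2$ by handshake counting. A transfer between two good players meets at most $8\sqrt{|F|}\le 4\sqrt{2\eps}\,n$ faulty edges, which is below the inner code's radius once $\eps$ is a small constant times $\beta^2$ (the paper uses $\eps\le\beta^2/32$). Finally $B\le (w+2w^{\chi+2})|C|\le\kappa\sqrt\eps\,m$, leaving a margin of $\kappa\sqrt\eps\,m$ on each side of the threshold.
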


 Notice that if \(\mathsf{Rep}\) denotes the classical repetition code of length \(n\), then for every fixed \(c_0>0\) and every \(\eps\le \eps^*_\mathrm{ver}\), the completeness guarantee implies that~\Cref{algo:vqss} is a \((\mathcal{Q},c_0\sqrt{\eps} , \mathcal{Q}\otimes \mathsf{Rep}, [(c_0+c'_\mathrm{ver})\sqrt{\eps}, c'_\mathrm{ver}\sqrt{\eps} ], K_{\eps})\)-fault-tolerant protocol for the quantum operation \(G(\ket{\psi}) = \ket{\psi} \ket{0}\) in the sense of~\Cref{def:fault-tolerant-protocol}.

\begin{proof}
Let \(\tau\) be the distance of the inner code used in the one-to-all-to-one transmission gadget in step 1 of~\Cref{algo:vqss}. Since that inner code has linear distance, there exists a constant \(\beta>0\) such that \(\tau\ge \beta n\). Choose \(\eps^*_\mathrm{ver}\) sufficiently small that \(\eps^*_\mathrm{ver}\le \beta^2/32\), and if needed shrink it further so that \(\eps^*_\mathrm{ver}=\poly(\delta)\).

Now fix any \(\eps\le \eps^*_\mathrm{ver}\). If \(|F|\le \eps n^2/2\), then
\[
8\sqrt{|F|}\le 8\sqrt{\eps/2}\,n \le 8\sqrt{\eps^*_\mathrm{ver}/2}\,n < \beta n \le \tau.
\]

\paragraph{Step 1: Reduction from edge faults to a small set of bad players.}
Define \(C:=\{\,i\in[n]: \deg_F(i)>4\sqrt{|F|}\,\}\), where \(\deg_F(i)\) is the number of faulty edges incident on player \(i\).

Consider one use of the one-to-all-to-one gadget to transmit a qubit from player \(i\) to player \(j\). If \(i,j\notin C\), then \(\deg_F(i)+\deg_F(j)\le 8\sqrt{|F|}<\tau\). Hence fewer than \(\tau\) positions of the inner-code word can be corrupted during that transmission, so the receiver decodes correctly. Therefore every transmission between players outside \(C\) is correct. It follows that the original \(F\)-edge faulty execution of \Cref{algo:vqss} produces no more faults than an auxiliary execution in which
\begin{itemize}
\item all one-to-all-to-one gadgets are replaced by ideal direct transmissions,
\item the players in \(C\) are treated as arbitrarily corrupted,
\item all players outside \(C\) behave honestly.
\end{itemize}
It therefore suffices to analyze this auxiliary execution.

The set \(C\) is small. Since every player in \(C\) has more than \(4\sqrt{|F|}\) faulty incident edges, we have \(|C|\cdot 4\sqrt{|F|}<\sum_{i\in C}\deg_F(i)\le 2|F|\), and hence
\[
|C| \le \frac{\sqrt{|F|}}{2}\le \frac{\sqrt{\eps}}{2\sqrt2}\,n.
\]
In particular, the number of bad players is at most \(\sqrt{|F|}/2\), exactly the tolerance built into the Byzantine-agreement step of~\Cref{algo:vqss}.

\paragraph{Step 2: Bounding the lightcone of the bad players.}
For \(t=0,1,\dots,\chi\), let \(L_t\) denote the set of qLTC codeword qubits lying in the future lightcone of the bad players' actions after round \(t\) of step 1, where initially \(L_0\) is the set of qLTC qubits held by the players in \(C\).

In a single round, the bad players can directly touch at most \((w^2+1)|C|\) fresh qubits: each bad player is assigned at most \(w\) checks, each such check involves at most \(w\) qubits, and we also include the bad player's own qLTC qubit. Moreover, each qubit already in the lightcone can spread to at most \(w\) qubits in the next round, because checks within one round are disjoint and each check has weight at most \(w\). Therefore \(|L_t|\le (w^2+1)|C|+w|L_{t-1}|\). Solving this recurrence gives \(|L_\chi|\le 2|C|w^{\chi+1}\).

Since player \(i\) holds qLTC qubit \(i\), we identify subsets of qLTC qubits with the corresponding subsets of players and define \(C'_\mathrm{ver}:=L_\chi\). Thus
\[
|C'_\mathrm{ver}|=|L_\chi| \le 2|C|w^{\chi+1}\le \frac{w^{\chi+1}}{\sqrt2}\sqrt{\eps}\,n.
\]
This proves the claim for any constant \(c'_\mathrm{ver}\ge \frac{w^{\chi+1}}{\sqrt2}\).

\paragraph{Step 3: Bounding how many check outcomes can be corrupted.}
Fix an honest player in the auxiliary execution, and compare the list of check outcomes that they hold after step 1 with the list they would hold in a completely honest execution of the same protocol on the same input state \(\ket{\Psi}=P\,\encq(\ket{\psi})\). A given check outcome can differ from the completely honest execution only if either
\begin{enumerate}
\item the check was assigned to a bad player, or
\item the check touches a qubit in the lightcone \(L_\chi\).
\end{enumerate}
The number of checks of type (1) is at most \(w|C|\), because each bad player is assigned at most \(w\) checks. The number of checks of type (2) is at most \(w|L_\chi|\le 2|C|w^{\chi+2}\), because each qubit participates in at most \(w\) checks. Hence the total number of check outcomes that may differ from the completely honest execution is at most \(B:=w|C|+2|C|w^{\chi+2}\). Using the bound on \(|C|\), we obtain
\[
B\le \frac{w+2w^{\chi+2}}{2\sqrt2}\sqrt{\eps}\,n
=\kappa\sqrt{\eps}\,n
\le \kappa\sqrt{\eps}\,m,
\]
where the last inequality uses \(m\ge n\).

\paragraph{Step 4: Completeness.}
Assume \(|\supp(P)|\le c_0\sqrt{\eps}\,n\). In a completely honest execution, every qubit in \(\supp(P)\) can cause at most \(w\) violated checks, so the total number of violated checks is at most
\[
w|\supp(P)|\le w c_0\sqrt{\eps}\,n\le w c_0\sqrt{\eps}\,m.
\]
By Step 3, the actual number of violated checks seen by any honest player can exceed this by at most \(\kappa\sqrt{\eps}\,m\). Hence every honest player sees at most \((w c_0+\kappa)\sqrt{\eps}\,m\) violated checks. Since \(w c_0+\kappa < w c_0+2\kappa\), every honest player outputs \(0\) in step 2.

Since the number of bad players is at most \(\sqrt{|F|}/2\), the Byzantine-agreement step makes all honest players agree on the majority honest opinion, namely \(0\). Thus every player outside \(C\) holds the ideal output bit \(0\).

Finally, outside the lightcone \(L_\chi=C'_\mathrm{ver}\), step 1 is exactly the honest coherent measurement-and-uncomputation of stabilizer checks, so the qLTC input register is unchanged there. Since \(C\subseteq C'_\mathrm{ver}\), it follows that outside \(C'_\mathrm{ver}\) both the qLTC input register and the output repetition register are ideal. Therefore the final state has the form
\[
E\!\left(\ket{\Psi}_\mathrm{in}\otimes\ket{0^n}_\mathrm{out}\otimes\ket{\mathrm{rest}}\right),
\]
where the support of \(E\) on the input/output registers is restricted to players in \(C'_\mathrm{ver}\).

\paragraph{Step 5: Soundness.}
Assume \(|\supp(P)|\ge \frac{w c_0+4\kappa}{\delta}\sqrt{\eps}\,n\). By qLTC soundness, in a completely honest execution the fraction of violated checks is at least \(\delta\cdot \frac{|\supp(P)|}{n}\), so the total number of violated checks is at least
\[
\delta\cdot \frac{|\supp(P)|}{n}\cdot m
\ge (w c_0+4\kappa)\sqrt{\eps}\,m.
\]
By Step 3, the actual number of violated checks seen by any honest player can be smaller by at most \(\kappa\sqrt{\eps}\,m\). Hence every honest player sees at least \((w c_0+3\kappa)\sqrt{\eps}\,m\) violated checks. Since \(w c_0+3\kappa > w c_0+2\kappa\), every honest player outputs \(1\) in step 2.

After Byzantine agreement, all honest players agree on \(1\), so every player outside \(C\) holds the ideal output bit \(1\). As in the completeness case, the only possible deviation on the input/output registers is inside \(C'_\mathrm{ver}\), and therefore the final state is of the form
\[
E\!\left(\ket{\Psi}_\mathrm{in}\otimes\ket{1^n}_\mathrm{out}\otimes\ket{\mathrm{rest}}\right),
\]
with support on the input/output registers restricted to players in \(C'_\mathrm{ver}\).

\paragraph{Step 6: Intermediate regime.}
If \(c_0\sqrt{\eps}\,n \le |\supp(P)| \le \frac{w c_0+4\kappa}{\delta}\sqrt{\eps}\,n\), then the uncertainty caused by the faulty communications may place the honest players' violated-check counts on either side of the threshold. Nevertheless, Byzantine agreement guarantees that all honest players agree on a common output bit, so the final state is of one of the two forms claimed above.

\paragraph{Complexity.}
The protocol performs only \(O(1)\) rounds of constant-weight check measurements, together with polynomial-size encoding, decoding, and Byzantine-agreement subroutines. Hence each player handles at most \(\poly(n)\) qubits and participates in at most \(\poly(n)\) gates.
\end{proof}

\subsection{Error Suppression}

In typical MPC protocols, faults introduced during repeated uses of VSS and during computation are controlled by showing that the faults remain localised on the shares of the joint state held by the cheating players or some fixed superset thereof. (Note that this is more specific than bounding the \emph{number} of fault locations by the number of cheaters: this means in addition that faults always occur in the same places, and so avoids a potential factor-two blowup in the fault weight every time VSS is repeated.) However, we are not always able to achieve this localisation property for our protocols. In addition, while typical MPC protocols are executed on a complete graph with a fixed set of cheaters, we will later have to transfer secret sharings between cliques in our HDX. The receiving clique could have a different fault pattern from the sending clique, which could lead to faults occurring both on shares held by the sending bad players and shares held by the receiving bad players. For these reasons, we will require the following \emph{error suppression} protocol, which is built from the robust decoder/error suppressor in~\Cref{assumption:nice-qLTC}.

\begin{mdframed}
\begin{protocol}[Repeated error suppression]\mbox{}\label{algo:error-correction}\\
\textbf{Input:} An \(n\)-qubit state \(\Psi\) jointly held by the \(n\) players, purported to be an erroneous codeword of \(\mathcal{Q}\).\\
\textbf{Ideal output:} The same code block, with the number of residual errors reduced to \(O(\sqrt{|F|})\).

Let
\[
R:=\left\lceil \log_{100}(2n) \right\rceil .
\]
For \(r=1,\dots,R\), the players perform the following two-step routine:
\begin{enumerate}
\item \textbf{(Measuring qLTC checks)} Perform the same syndrome measurements as in \Cref{algo:vqss} to obtain the (noisy) syndromes \(\tilde{\sigma}_X\) and \(\tilde{\sigma}_Z\), whose bits are held by the pre-assigned players.

\item \textbf{(Error suppression)} Implement the constant-depth error suppressor circuit from \Cref{assumption:nice-qLTC}. Specifically, for each layer \(\ell=1,\dots,D=O(1)\), and for each gate \(g\) in layer \(\ell\):
\begin{itemize}
    \item Any player holding a qubit involved in \(g\) sends that qubit to a pre-assigned player \(i_g\) for gate \(g\). Here, a player \(j\) sends a qubit to player \(i_g\) by encoding the qubit into a linear-distance quantum code of length \(n\) and distributing the resulting code block among the \(n\) players, who then individually send their qubits to player \(i_g\), who decodes the block. This is the same one-to-all-to-one gadget as in \Cref{algo:vqss}.
    \item The player \(i_g\) applies the gate \(g\), and then sends the qubits back to their original owners using the same one-to-all-to-one gadget.
\end{itemize}
\end{enumerate}
\end{protocol}
\end{mdframed}

\begin{proposition}[Repeated error suppression]\label{prop:error-suppression}
There exist \(\eps^*_{\mathrm{es}}=\poly(\delta)\), \(c_{\mathrm{es}}(\delta)=\poly(\delta)\), and a universal constant \(T_{\mathrm{es}}>0\) such that, for any \(c \leq c_{\mathrm{es}}(\delta)\), any \(\eps \leq \eps^*_{\mathrm{es}}\), and any set \(F\) of faulty edges with \(|F| \leq \eps n^2/2\), \Cref{algo:error-correction} is a \((\mathcal{Q},c,\mathcal{Q},T_{\mathrm{es}}\sqrt{\eps},F)\)-fault-tolerant gadget for the identity operation.
Furthermore, each player is involved in at most \(\poly(n)\) qubits and gates during the protocol.
\end{proposition}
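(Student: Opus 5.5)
We follow the template of the proof of \Cref{prop:vqss}. First, reduce edge faults to a small set of bad players. Define $C:=\{i\in[n]:\deg_F(i)>4\sqrt{|F|}\}$; exactly as in Step~1 of that proof, $|C|\le \sqrt{|F|}/2\le \frac{\sqrt{\eps}}{2\sqrt2}n$. Choosing $\eps^*_{\mathrm{es}}\le \beta^2/32$, every one-to-all-to-one transmission between players outside $C$ is decoded correctly. So it suffices to analyze an auxiliary execution with ideal transmissions in which only the players in $C$ are arbitrarily corrupted.

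Second, by \Cref{remark:pauli} and linearity, it suffices to treat a Pauli input deviation $P\,\encq(\ket\psi)$ with $|P|\le cn$ together with a single Pauli fault pattern supported on $C$. Then bound the damage done in one round $r$ by a lightcone argument, as in Step~2 of \Cref{prop:vqss}. The syndrome-measurement step involves only $O(1)$ layers of bounded-weight checks, and the suppressor has depth $D=O(1)$ with bounded fan-in gates. Hence, over one round, the bad players' actions touch at most $K|C|$ data qubits and corrupt at most $K|C|$ syndrome bits, for some constant $K=K(w,\chi,D)$. We can therefore write the round as an ideal execution of the single-shot suppressor on data error $e^{(r)}$ with syndrome error $|s^{(r)}|\le K|C|$, followed by an additional data Pauli of weight at most $K|C|$. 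Faults inserted before the suppressor are absorbed into $e^{(r)}$ and the syndrome error, since Paulis propagate through the Clifford and measurement gadget to Paulis of bounded spread.

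Third, apply the $(\alpha,\delta,\alpha,1/100)$-single-shot property of \Cref{assumption:nice-qLTC}. Suppose $|e^{(r)}|_R+\alpha K|C|\le \delta n$. Then
\[
|e^{(r+1)}|_R\le \tfrac{1}{100}|e^{(r)}|_R+\alpha K|C|+K|C|.
\]
Starting from $|e^{(1)}|_R\le cn+K|C|$, induction gives
\[
|e^{(r)}|_R\le 100^{-(r-1)}(cn+K|C|)+\tfrac{100}{99}(\alpha+1)K|C|,
\]
and the invariant $|e^{(r)}|_R+\alpha K|C|\le\delta n$ is maintained provided $c\le c_{\mathrm{es}}(\delta)=\delta/2$ and $\eps^*_{\mathrm{es}}$ is a small enough multiple of $\delta^2$. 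After $R=\lceil\log_{100}(2n)\rceil$ rounds the first term is below $1$, so it vanishes because weights are integers. Hence $|e^{(R+1)}|_R\le T'|C|\le T_{\mathrm{es}}\sqrt{\eps}\,n$. Since the reduced weight is achieved by some representative Pauli equivalent modulo stabilizers, the output equals $Q\,\encq(\ket\psi)$ with $|Q|\le T_{\mathrm{es}}\sqrt\eps n$ and the logical state intact. Summing over the superposition, the output is a $T_{\mathrm{es}}\sqrt\eps$-deviation, as required. The complexity bound follows because there are $O(\log n)$ rounds of polynomial-size gadgets.

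The main obstacle is the second step: cleanly expressing a round with adversarial (coherent, Pauli) faults as the ideal single-shot suppressor with bounded data and syndrome errors. The subtlety is that the suppressor circuit is itself executed by players, some of them bad, so faults may occur mid-circuit and could in principle corrupt the correction logic. I would handle this by propagating all Pauli faults forward through the depth-$O(1)$ circuit to its end, which yields at most a constant-factor blowup $K$. I would also check that the single-shot guarantee is stated for the circuit's output up to such residual errors; if it is not, I would apply the guarantee only to the ideal part and add the propagated faults as an extra $O(|C|)$ term, as above.
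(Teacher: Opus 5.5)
Your proposal is correct and follows essentially the paper's route. Both arguments reduce edge faults to $O(\sqrt{|F|})$ bad players, use the depth-$O(1)$ lightcone to bound damage during the syndrome and suppressor steps, and derive the one-round recurrence $t_{r+1}\le t_r/100+O(\sqrt{|F|})$ from the $(\alpha,\delta,\alpha,\tfrac{1}{100})$ single-shot property. They keep the invariant with $c_{\mathrm{es}}=\Theta(\delta)$ and $\eps^*_{\mathrm{es}}=\Theta(\delta^2)$, then unroll over $\lceil\log_{100}(2n)\rceil$ rounds.

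The only imprecision is the integrality step. When $|C|\ge 1$, the sub-unit term does not "vanish". Instead, split cases as the paper does: if $|C|=0$ the bound is $<1$, so the weight is $0$; otherwise absorb the $<1$ term into the constant using $1\le |C|$.
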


\begin{proof}
Let the input be \(\ket{\Psi}=P\,\encq(\ket{\psi})\), where \(P\) is a Pauli operator on the code block with \(\operatorname{wt}(P)\le cn\) and \(c\le c_{\mathrm{es}}(\delta)\).

Let \(s:=\sqrt{2|F|}\). We first analyze one repetition of the two-step routine in \Cref{algo:error-correction}.

\medskip
\noindent\textbf{One round of suppression.}
Exactly as in the proof of \Cref{prop:vqss}, after the syndrome-measurement step, the state is of the form
\begin{equation}
\sum_{E_1} E_1 \Bigl(
P\,\encq(\ket{\psi})_{\mathrm{in}}
\otimes
\ket{\mathrm{syndrome}(P)}
\otimes
\ket{\mathrm{rest}'_{E_1}}
\Bigr),
\label{eq:es-step1}
\end{equation}
where the operators \(E_1\) are independent of \(\ket{\psi}\), and each \(E_1\) acts nontrivially on the code/syndrome blocks only on qubits held by at most \(c_1 s\) players, for some universal constant \(c_1\).

Now consider Step 2. As in the proof of \Cref{prop:vqss}, faults in the one-to-all-to-one gadgets can only cause faulty gate implementations at players that are bad for the corresponding inner code. The number of such bad players is \(O(s)\), and since the error-suppressor circuit has constant depth, the total number of extra code-qubit errors incurred during Step 2 is at most \(c_2 s\), for some universal constant \(c_2\).

If Step 2 were fault-free, then by the guarantee in \Cref{assumption:nice-qLTC} that the error suppressor circuit is \((\alpha,\delta,\alpha,\frac{1}{100})\) single-shot, we have the following: provided that \(\operatorname{wt}(P)+\alpha c_1 s \le \delta n\), the residual error on the code block after the suppressor would have weight at most \(\frac{1}{100}\bigl(\operatorname{wt}(P)+c_1 s\bigr)+\alpha c_1 s\). Including the \(c_2 s\) additional errors caused by faulty gate implementations in Step 2, the total residual error is at most \(\frac{1}{100}\operatorname{wt}(P)+\left(\frac{c_1}{100}+\alpha c_1+c_2\right)s\). Define \(A:=\frac{c_1}{100}+\alpha c_1+c_2\). Thus one round of the protocol maps any Pauli \(t\)-deviation with \(t+\alpha c_1 s \le \delta n\) to a \(\bigl(t/100+As\bigr)\)-deviation of the same codeword.

\medskip
\noindent\textbf{Choice of parameters.}
Choose \(c_{\mathrm{es}}(\delta)\) and \(\eps^*_{\mathrm{es}}\) small enough that whenever \(|F|\le \eps^*_{\mathrm{es}}n^2/2\), equivalently \(s\le \sqrt{\eps^*_{\mathrm{es}}}\,n\), the following two conditions hold:
\begin{align}
c_{\mathrm{es}}(\delta)n+\alpha c_1 s &\le \delta n,
\label{eq:es-cond1}
\\
As &\le \frac{1}{2}c_{\mathrm{es}}(\delta)n.
\label{eq:es-cond2}
\end{align}
This is possible with \(c_{\mathrm{es}}(\delta)=\Theta(\delta)\) and \(\eps^*_{\mathrm{es}}=\Theta(\delta^2)\), since \(A\), \(\alpha\), and \(c_1\) are universal constants.

\medskip
\noindent\textbf{Iterating the recurrence.}
Let \(t_r\) denote the code-block error weight after the \(r\)-th repetition of the two-step routine, with \(t_0:=\operatorname{wt}(P)\le cn\le c_{\mathrm{es}}(\delta)n\). Using \Cref{remark:pauli}, we can wlog assume that the code-block error is a Pauli deviation.

We claim that for all \(r\ge 0\), \(t_r\le c_{\mathrm{es}}(\delta)n\). This is true for \(r=0\). If it holds for some \(r\), then by \eqref{eq:es-cond1} the one-round analysis applies, so
\[
t_{r+1}\le \frac{t_r}{100}+As
\le \frac{c_{\mathrm{es}}(\delta)n}{100}+\frac{1}{2}c_{\mathrm{es}}(\delta)n
< c_{\mathrm{es}}(\delta)n,
\]
using \eqref{eq:es-cond2}. This proves the claim.

Hence the one-round recurrence applies at every repetition: \(t_{r+1}\le \frac{t_r}{100}+As\). Unrolling it for \(R=\left\lceil \log_{100}(2n)\right\rceil\) rounds gives
\[
t_R
\le
\frac{t_0}{100^R}
+
A s \sum_{j=0}^{R-1} 100^{-j}
\le
\frac{cn}{100^R}
+
\frac{100}{99}As.
\]
Since \(c\le 1\) and \(100^R\ge 2n\), we have \(\frac{cn}{100^R}\le \frac{1}{2}\). Therefore \(t_R\le \frac{1}{2}+\frac{100}{99}As\).

If \(F=\varnothing\), then \(s=0\), so \(t_R<1\). Since \(t_R\) is an integer, it follows that \(t_R=0\).

If \(F\neq \varnothing\), then \(|F|\ge 1\), so \(s=\sqrt{2|F|}\ge \sqrt{2}> \frac{1}{2}\). Hence \(t_R\le \left(1+\frac{100}{99}A\right)s\).

Setting \(T_{\mathrm{es}}:=1+\frac{100}{99}A\), we conclude that in all cases the final code block is a Pauli \(T_{\mathrm{es}}\sqrt{2|F|}\)-deviation of \(\encq(\ket{\psi})\).

This proves that \Cref{algo:error-correction} is \((\mathcal{Q},cn,\mathcal{Q},T_{\mathrm{es}}\sqrt{2|F|},F)\)-fault-tolerant for every \(c\le c_{\mathrm{es}}(\delta)\). Since \(\sqrt{2|F|}\le \sqrt{\eps}\,n\) by hypothesis, the protocol is \((\mathcal{Q},c,\mathcal{Q},T_{\mathrm{es}}\sqrt{\eps},F)\)-fault-tolerant.

Finally, each repetition uses at most \(\poly(n)\) qubits and gates per player, and the number of repetitions is \(R=\left\lceil \log_{100}(2n)\right\rceil = O(\log n)\). Hence the total per-player resource usage remains \(\poly(n)\).
\end{proof}

\subsection{Transversal Gates and Classical Adaptivity}
\paragraph{Transversal gates.}
The underlying qLTC is CSS so we automatically have transversal logical Pauli and CNOT gates. Of course we also have transversal logical SWAP gate.

The classically controlled version of these gates can be implemented transversally as well, provided that each player holds a copy of the classical control bit. Here, `classically controlled' means the control bit(s) is discarded right after the gate. This observation allows us to coherently implement adaptive gates that appear in elementary circuits such as gate teleportation or Steane error corretion.

\begin{mdframed}
\begin{protocol}[Adaptive transversal gates]\mbox{}\label{algo:adaptive-gates}\\
\textbf{Ideal logical operation:} Measure the first $\ell \geq 0$ qubits of a state $\ket{\psi}$, where each qubit can be measured in either the computational or Hadamard basis as specified by a vector $\mathcal{B} \in \{0,1\}^{\ell}$, obtaining a classical string $x \in \mathbb{F}_2^\ell$. Then, compute a function $f(x)$ and apply operation $G_{f(x)}$ on the remaining $m \geq 1$ qubit, promised that, for every $x$, $G_{f(x)}$ admits transversal implementation on the $m$ codeblocks $\mathcal{Q}$ (as tensor product of $m$-qubit gates). \\
\textbf{Input:} A pure state $\ket{\Psi}_\mathrm{AB}$, where registers AB are purportedly an encoding of some $(\ell+m)$-qubit logical state $\ket{\psi}_\mathrm{AB}$ into (multiple blocks of) the code $\mathcal{Q}$. Here, each $\mathcal{Q}$-codeblock is jointly held by the $n$ players. \\
\textbf{Output:} A state that is supposedly the result of applying the above ideal logical operation on $\ket{\psi}_\mathrm{AB}$, encoded in the remaining $m$ codeblocks.
\begin{enumerate}
\item For each codeblock that is supposed to be measured in the computational basis, each player CNOTs its share into an $n$-qubit string and distributes qubit $i$ of the string to player $i$. For Hadamard basis measurement, the players first apply Hadamard gates before performing the above step.
\item Using the received qubits from step 1 (and its own qubits), each player decodes the $\ell$ logical values using the efficient decoder from~\Cref{assumption:nice-qLTC}. Each player then computes $f$ locally from the decoded logical values. Let $s_i \in \Im(f)$ be the result of $f$ computed by player $i$.
\item On the remaining $m$ codeblocks, each  player $i$ applies the local gates that it is responsible for in the transversal implementation of $G_{s_i}$.
\item On each of the remaining \(m\) codeblocks, the players execute the repeated error-suppression protocol of~\Cref{algo:error-correction}.
\end{enumerate}
\end{protocol}
\end{mdframed}

\begin{proposition}[Adaptive transversal gates]\label{prop:adaptive}
    There exist $\eps^*_\mathrm{tr}= \poly(\delta)$, $c_\mathrm{tr}(\delta)=\poly(\delta)$, and a constant $c'_\mathrm{tr}>0$ such that for any faulty execution of~\Cref{algo:adaptive-gates} with faulty edge set
$ |F|\le \eps n^2/2 \le \eps^*_\mathrm{tr} n^2/2$,
    the following promise holds on the output. Upon input an $(\ell+m)$-qubit state $\ket{\Psi}_{AB}$ that is Pauli-deviated from a pure codeword $\Enc_{\mathcal Q}(\ket{\psi}_{AB})$ on a subset of players $C_\mathrm{tr}\subseteq [n]$ with $|C_\mathrm{tr}|\le c_\mathrm{tr}(\delta)\,n$,
    where
    \[
    \ket{\psi}=\sum_{x\in\{0,1\}^\ell}\alpha_x H^{\mathcal B}\ket{x}\ket{\psi_x},
    \]
    with each $\ket{\psi_x}$ normalized, the protocol outputs a state of the form
    \[
    \sum_{x,i} P_{x,i}\Bigl(\encq(G_{f(x)}\ket{\psi_x})_B \otimes \ket{\mathrm{rest}_{x,i}}\Bigr),
    \]
    where $\ket{\mathrm{rest}_{x,i}}$ are states on all other qubits (including workspace), and each $P_{x,i}$ is a Pauli operator on the output register $B$ whose restriction to each of the \(m\) output codeblocks has support size at most $c'_\mathrm{tr}\sqrt{\eps}\,n$.
    
    Equivalently, on the \(m\) output codeblocks the final state is a \([c'_\mathrm{tr}\sqrt{\eps}]^m\)-deviation of the ideal branchwise output. Furthermore, each player is involved in at most $\poly(n)$ qubits and gates during the protocol.
\end{proposition}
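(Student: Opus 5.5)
By \Cref{remark:pauli} and linearity, it suffices to treat a single Pauli input deviation $P$ supported on $C_\mathrm{tr}$ and a single Pauli fault pattern $\mathsf F$ with support in $F$. Expanding $\ket{\psi}=\sum_x \alpha_x H^{\mathcal B}\ket{x}\ket{\psi_x}$, the encoded input is $\sum_x\alpha_x\,\encq(H^{\mathcal B}\ket{x})_A\otimes\encq(\ket{\psi_x})_B$. The plan is to follow each branch $x$ through the protocol and show that it ends as $P_{x,i}\,\encq(G_{f(x)}\ket{\psi_x})\otimes\ket{\mathrm{rest}_{x,i}}$.

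\textbf{Reduction to bad players.} First, as in Step 1 of the proof of \Cref{prop:vqss}, define $C:=\{i:\deg_F(i)>4\sqrt{|F|}\}$, which has size $O(\sqrt{\eps}\,n)$. Outside $C$, all one-to-all-to-one transmissions are correct. Steps 1--3 use only direct share distribution and local gates, so they contribute no further lightcone spreading. Hence the players that are bad for steps 1--3 form the set $B:=C\cup C_\mathrm{tr}$.

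\textbf{Steps 1--2: measurement and decoding.} In step 1, CNOTing a share into an $n$-qubit string (after $H$ if $\mathcal B$ requires it) copies the computational-basis codeword. The $n$ qubits that player $j$ receives, one from each player, therefore form a classical codeword $x$-encoding of the $X$-type (resp.\ $Z$) classical code. This codeword is corrupted only in the positions held by players in $B$, together with positions whose transmission to $j$ is faulty. The latter are few, since fault-free players only need to consider edges to honest senders, and $\deg_F(j)\le 4\sqrt{|F|}$.

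Choose $c_\mathrm{tr}(\delta)$ and $\eps^*_\mathrm{tr}$ small relative to $\delta$, so that the total corruption is within the radius of the efficient $(\alpha,\delta,\alpha,0)$-single-shot decoder of \Cref{assumption:nice-qLTC}, used here noiselessly as a classical decoder. Then every player $j\notin C$ decodes exactly the branch value $x$, and computes $s_j=f(x)$ coherently in branch $x$. The measured blocks and each player's scratch register become entangled with $x$; these go into $\ket{\mathrm{rest}_{x,i}}$. This step exploits the fact that a computational-basis measurement of a CSS codeword is classically decodable and unaffected by $Z$-errors; $X$-errors are handled by the bound above.

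\textbf{Step 3: transversal gate.} In branch $x$, every player outside $C$ applies its share of the transversal $G_{f(x)}$, while players in $C$ apply arbitrary operations. Because the implementation is transversal (a tensor product across players), the Pauli input deviation $P|_B$ conjugates to an operator that is still supported on the same players. Here $G$ is Clifford or Pauli, so the conjugate is Pauli; in general we expand it in the Pauli basis. The arbitrary actions of $C$ likewise expand as a sum of Paulis on $C$.

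The result is a sum of Pauli $(|C|+|C_\mathrm{tr}|)/n$-deviations of $\encq(G_{f(x)}\ket{\psi_x})$. With $c_\mathrm{tr}(\delta)+O(\sqrt{\eps^*_\mathrm{tr}})\le c_\mathrm{es}(\delta)$, \Cref{prop:error-suppression} applies blockwise to each term. It yields a $T_\mathrm{es}\sqrt{\eps}$-deviation per output block, using \Cref{lem:parallel-composability} across the $m$ blocks. Set $c'_\mathrm{tr}:=T_\mathrm{es}$. The resource bound follows from the $\poly(n)$ costs of decoding and error suppression.

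\textbf{Main obstacle.} The main obstacle is step 2. The branch decomposition must be exact and coherent: each non-$C$ player must recover precisely $x$, not merely $x$ with high probability. We also need the deviations generated in different branches to take the required form with $P_{x,i}$ depending on $x$ only through the bad-player actions. To handle this, I would rely on the zero-error, bounded-distance nature of the decoder, so that the decoded value is a deterministic function of $x$ within radius. This avoids any failure branch on which postselection could act.
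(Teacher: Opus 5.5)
Your proposal is correct and follows essentially the same route as the paper's proof. You track each branch $x$ separately and treat a small set of high-fault-degree players as arbitrarily corrupted. You then argue that every other player's received copy of the measured block is within the decoding radius, so it computes $f(x)$ exactly, and observe that the transversal $G_{f(x)}$ keeps each output block's error supported on the bad and input-deviated players. You finish with a Pauli expansion, \Cref{prop:error-suppression} on each block, and \Cref{lem:parallel-composability}. The one difference is cosmetic: you use the bad-player threshold $4\sqrt{|F|}$ from \Cref{prop:vqss} instead of the paper's $\delta n/2$, which only changes constants.
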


\begin{remark}\label{remark:transversal} 
    We will particularly apply this proposition to two cases:
    \begin{itemize} 
    \item \textbf{(Transversal gates)} When $\ell=0$, for any $\eps \leq \eps^*_\mathrm{tr}$, ~\Cref{algo:adaptive-gates} is a
    \[
    (\mathcal{Q}^{\otimes m}, c_\mathrm{tr}(\delta)/m,\mathcal{Q}^{\otimes m},c'_\mathrm{tr}\sqrt{\eps}, K_{\eps})
    \]
    -fault-tolerant protocol for an $m$-qubit gate $G$.

    \item \textbf{(Gate teleportation)} When $\ket{\psi}$ satisfies
    \[
    G_{f(x)} \ket{\psi_x} \propto G_{f(x')} \ket{\psi_{x'}}
    \qquad\text{for all }x,x',
    \]
    as in gate teleportation, for all $\eps \leq \eps^*_\mathrm{tr}$, ~\Cref{algo:adaptive-gates} is a
    \[
    (\mathcal{Q}^{\otimes (\ell+m)},[c_\mathrm{tr}(\delta)/m]^m,\mathcal{Q}^{\otimes m},c'_\mathrm{tr}\sqrt{\eps}, K_{\eps})
    \]
    -fault-tolerant protocol for the operation of measuring the first $\ell$ qubits and applying $G_{f(x)}$ on the remaining $m$ qubits adaptively based on the measurement outcomes.
    \end{itemize}
\end{remark}

\begin{proof}[Proof of \Cref{prop:adaptive}]
    For simplicity, we consider the case where one codeblock is measured ($\ell=1$) in the computational basis ($\mathcal{B}=\{0\}$), the feedback gate is on one code block ($m=1$), the function $f$ is identity $f(x)=x$, and $G_0$ is identity while $G_1$ is a single-qubit logical gate $G$ whose transversal physical implementation reads $\overline{G} = g_1 \otimes ... \otimes g_n$. The general case follows similarly (including the trivial $\ell=0$ case.) More concretely, let $\ket{\Psi} = P \encq^{\otimes 2}(\ket{\psi})$ denotes the input where $\ket{\psi}$ is a two-qubit state and $P$ is a Pauli operator with support on a subset $|C| \leq cn$ of vertices. 
    Expanding $\ket{\psi} = \sum_{x \in \{0,1\}} \alpha_x \ket{x} \ket{\psi_x}$, where $\ket{\psi_x}$ is normalized, then $\ket{\psi_0} \propto G \ket{\psi_1}$ as per the promise. 
    
    \paragraph{Faults:} When an $\eps n^2/2$ edges are faulty ($\eps \leq \eps^*_\mathrm{tr}$), there are at most $\sqrt{\eps} n$ corrupted vertices, and at most $(2 \eps/\delta) n$ (possibly non-corrupted) vertices that are incident to more than $ \delta n/2$ faulty edges. Let $C'$ be the set of such `bad' vertices (as opposed to `good' ones), we have
    \begin{align}
        |C'| \leq (\sqrt{\eps}  + 2 \eps/\delta) n.
    \end{align}
    On the bad vertices $C'$ we will be conservative and assume the relevant qubits and operations could be arbitrarily corrupted. In contrast, the perfomance of each good vertex is unaffected by the $\delta n/2$ faulty edges incident to it thanks to the decoding radius of the code $\mathcal{Q}$.

    We track the state after each step of the protocol, and, when helpful, compare the states between a faulty and faultless execution of the step. The initial state is
     \begin{align}
       \ket{\Psi}= \sum_{x \in \{0,1\}} \alpha_x 
        \begin{tikzpicture}[baseline=(current bounding box.center), scale=0.5]
    \draw[xstep=1.5cm, ystep=1cm, black] (0,0) grid (1.5,5);
    \fill[black] (0,1) rectangle (1.5,2);
    \node[fill=white, rotate = -90,
          minimum width=2cm, 
          minimum height=0.5cm, 
          inner sep=0pt] 
          at (0.75, 2.5) 
          { $\encq(|x\rangle)$};
\end{tikzpicture}  
 \otimes 
\begin{tikzpicture}[baseline=(current bounding box.center), scale=0.5]
        \draw[xstep=1.6cm, ystep=1cm, black] (0,0) grid (1.6,5);
        \fill[black] (0,1) rectangle (1.6,2);
        \node[fill=white, rotate=-90] at (0.8, 2.5) { $\encq(\ket{\psi_x})$ };
    \end{tikzpicture}.
    \end{align}
    Here, the blacked-out cells represent the input noise $P$ on top of the encoding on the corrupted set $C$. (In this illusration the input error $P$ is suppored on $C=\{4\}$ under the convention the top row being row 1.)
    
    \paragraph{Step 1:} After the first step, we imagine the qubits are arranged in a $n \times n$ grid where row $i$ corresponds to the qubits held by player $i$ and column $j$ corresponds to the qubits that were sent to player $j$.  In a faultless execution, the state is of the form
    \begin{align}
        \ket{\Psi_1^\mathrm{ideal}} =  \left(
    \begin{tikzpicture}[baseline=(current bounding box.center), scale=0.5]
    \draw[step=1cm, gray, thick] (0,0) grid (5,5);
    \fill[black] (3,0) rectangle (4,5);
    \end{tikzpicture} 
    \otimes 
    \begin{tikzpicture}[baseline=(current bounding box.center), scale=0.5]
    \draw[xstep=1cm, ystep=1cm, black] (0,0) grid (1,5);
    \fill[black] (0,1) rectangle (1,2);
    \end{tikzpicture} \right)
    \cdot 
    \left( \sum_{x \in \{0,1\}} \alpha_x  
    \begin{tikzpicture}[baseline=(current bounding box.center), scale=0.5]
    \draw[step=1cm, black] (0,0) grid (5,5);
    \node[fill=white, align=center,
          minimum width=2.3cm, 
          minimum height=0.5cm, 
          inner sep=0pt] 
          at (2.5, 2.5) 
          { $\encc^{\otimes n} $  \\ $ \circ \encq(|x\rangle)$ };
\end{tikzpicture}  \ \otimes \ \begin{tikzpicture}[baseline=(current bounding box.center), scale=0.5]
        \draw[xstep=2cm, ystep=1cm, black] (0,0) grid (2,5);
        \node[fill=white, rotate=-90] at (1, 2.5) { $\encq(\ket{\psi_x})$ };
    \end{tikzpicture} \right),
    \end{align}
    where the grids with blacked-out regions represent the Pauli errors propagated from $P$. Whereas, the actual state in a faulty execution is
\begin{align}
        \ket{\Psi_1}= 
  E_1 S_1
    \cdot 
    \left( \sum_{x \in \{0,1\}} \alpha_x  
    \begin{tikzpicture}[baseline=(current bounding box.center), scale=0.5]
    \draw[step=1cm, black] (0,0) grid (5,5);
    \node[fill=white, align=center,
          minimum width=2.3cm, 
          minimum height=0.5cm, 
          inner sep=0pt] 
          at (2.5, 2.5) 
          { $\encc^{\otimes n} $  \\ $ \circ \encq(|x\rangle)$ };
\end{tikzpicture}  \ \otimes \
  \begin{tikzpicture}[baseline=(current bounding box.center), scale=0.5]
        \draw[xstep=2cm, ystep=1cm, black] (0,0) grid (2,5);
        \node[fill=white, rotate=-90] at (1, 2.5) { $\encq(\ket{\psi_x})$ };
    \end{tikzpicture} \right),
    \end{align}
    where $E_1$ is a Pauli operator\footnote{Here
    we use the fact that step 1 is Clifford and thus initial Pauli errors and Pauli corruptions during step 1 can be combined, although this fact is not important.} error combining the propagation of the initial error $P$ and the Pauli errors introduced by the `bad' players in $C'$ during step 1, and $S_1$ denotes the Pauli errors introduced by the `good' players (those in $[n] \backslash C'$). Continuing our grid visualization with $n=5$, and suppose that $C'=\{2\}$ and $C=\{4\}$, then the support of $E_1$ can be visualized as
    \begin{align}
    \operatorname{supp}(E_1) \subseteq
    \begin{tikzpicture}[baseline=(current bounding box.center), scale=0.5]
    \draw[step=1cm, gray, thick] (0,0) grid (5,5);
    \fill[black] (3,0) rectangle (4,5);
    \fill[black] (1,0) rectangle (2,5);
    \fill[black] (0,3) rectangle (5,4);
    \end{tikzpicture} 
    \otimes 
    \begin{tikzpicture}[baseline=(current bounding box.center), scale=0.5]
    \draw[xstep=1cm, ystep=1cm, black] (0,0) grid (1,5);
    \fill[black] (0,1) rectangle (1,2);
    \fill[black] (0,3) rectangle (1,4);
    \end{tikzpicture}
    \end{align}
   Whereas, $S_1$ is a Pauli operator on the $n \times n$ grid that is $\delta n/2$-sparse on each good player ($S_1$ does not act on the second register) representing sparse edge transmission faults of the good players. I.e., in each row in $[n] \backslash C'$, there is at most $\delta n/2$ blacked-out cells (corresponding to corrupted edges).

    \paragraph{Step 2:} On each good row (vertex) of $\ket{\Psi_1}$, i.e., those in $[n] \backslash C'$, the number of Pauli errors on the codeword $\encc^{\otimes n} \circ \encq (\ket{x})$ is at most $cn + \delta n/2 + \sqrt{\eps } n < \delta n$, the decoding radius of $\mathcal{Q}$ (\Cref{assumption:nice-qLTC}). Thus, these rows are decoded correctly to $f(x)$. On the other hand, the bad rows $C'$ could be arbitrarily corrupted and and since the gates in decoding and computing $f$ may be non-Clifford, we could have a non-Pauli residual error on them.
    \begin{align}
        \ket{\Psi_2}= 
        \underbrace{\left(
             \begin{tikzpicture}[baseline=(current bounding box.center), scale=0.5]
    \draw[step=1cm, gray, thick] (0,0) grid (5,5);
    \fill[black] (0,3) rectangle (5,4);
    \end{tikzpicture}
    \ 
    \begin{tikzpicture}[baseline=(current bounding box.center), scale=0.5]
    \draw[xstep=1cm, ystep=1cm, black] (0,0) grid (1,5);
    \end{tikzpicture} 
    \ 
    \begin{tikzpicture}[baseline=(current bounding box.center), scale=0.5]
    \draw[xstep=1cm, ystep=1cm, black] (0,0) grid (1,5);
    \fill[black] (0,3) rectangle (1,4);
    \end{tikzpicture} 
    \
    \begin{tikzpicture}[baseline=(current bounding box.center), scale=0.5]
        \draw[xstep=2.4cm, ystep=5cm, black] (0,0) grid (2.4,5);
        \fill[black] (0,0) rectangle (2.4,5);
    \end{tikzpicture}
    \right)}_{\text{Non-Pauli errors $E_2$ from step 2}}
    \cdot
    E_1 S_1 \cdot 
    \left( \sum_{x \in \{0,1\}} \alpha_x 
    \begin{tikzpicture}[baseline=(current bounding box.center), scale=0.5]
    \draw[step=1cm, black] (0,0) grid (5,5);
    \node[fill=white, align=center,
          minimum width=2.3cm, 
          minimum height=0.5cm, 
          inner sep=0pt] 
          at (2.5, 2.5) 
          { $\encc^{\otimes n} $  \\ $ \circ \encq(|x\rangle)$ };
\end{tikzpicture}  \
  \begin{tikzpicture}[baseline=(current bounding box.center), scale=0.5]
        \draw[xstep=2cm, ystep=1cm, black] (0,0) grid (2,5);
        \node[fill=white, rotate=-90] at (1, 2.5) { $\encq(\ket{\psi_x})$ };
    \end{tikzpicture} 
\begin{tikzpicture}[baseline=(current bounding box.center), scale=0.5]
        \draw[step=1cm, black] (0,0) grid (1,5);
        \node at (0.5, 0.5) {$|x\rangle$};
        \node at (0.5, 2.5) {$\vdots$};
        \node at (0.5, 4.5) {$|x\rangle$}; 
    \end{tikzpicture} 
    \begin{tikzpicture}[baseline=(current bounding box.center), scale=0.5]
        \draw[xstep=2.4cm, ystep=5cm, black] (0,0) grid (2.4,5);
        \node[fill=white, align=center,
          inner sep=0pt] 
          at (1.2, 2.5) 
          { $\ket{\mathrm{rest}_x}$ };
    \end{tikzpicture}  \right),
    \end{align}
    where the third register is supposed to hold the decoded logical value $x$, $\ket{\mathrm{rest}}_x$ denotes the workspace for computing $f$, and $E_2$ may act globally on the the workspace part and depend on $E_1,S_1$.

    \paragraph{Step 3:}    Finally, each player applies $g_i$ on its share of the remaining codeblock if it computed $x=1$. The state after step 3 is of similar form to $\ket{\Psi_2}$, with the difference being $\encq(\ket{\psi_x})$ replaced by $\encq(G^x \ket{\psi_x})$ in the second block.
    
        \begin{align}
        \ket{\Psi_3}= \sum_{x \in \{0,1\}} 
        \alpha_{x} \underbrace{\left(
    \begin{tikzpicture}[baseline=(current bounding box.center), scale=0.5]
    \draw[step=1cm, gray, thick] (0,0) grid (5,5);
    \fill[black] (3,0) rectangle (4,5);
    \fill[black] (1,0) rectangle (2,5);
    \fill[black] (0,3) rectangle (5,4);
    \end{tikzpicture} 
    \
    \begin{tikzpicture}[baseline=(current bounding box.center), scale=0.5]
    \draw[xstep=1cm, ystep=1cm, black] (0,0) grid (1,5);
    \fill[black] (0,1) rectangle (1,2);
    \fill[black] (0,3) rectangle (1,4);
    \end{tikzpicture} \
    \begin{tikzpicture}[baseline=(current bounding box.center), scale=0.5]
    \end{tikzpicture}\
    \begin{tikzpicture}[baseline=(current bounding box.center), scale=0.5]
    \draw[xstep=1cm, ystep=1cm, black] (0,0) grid (1,5);
    \fill[black] (0,3) rectangle (1,4);
    \end{tikzpicture}
    \
    \begin{tikzpicture}[baseline=(current bounding box.center), scale=0.5]
        \draw[xstep=2.4cm, ystep=5cm, black] (0,0) grid (2.4,5);
        \fill[black] (0,0) rectangle (2.4,5);
    \end{tikzpicture}
    \right)}_{\text{(Non-Pauli) Error } E_x}
    \cdot S_1 \cdot
    \left(
    \begin{tikzpicture}[baseline=(current bounding box.center), scale=0.5]
    \draw[step=1cm, black] (0,0) grid (5,5);
    \node[fill=white, align=center,
          minimum width=2.3cm, 
          minimum height=0.5cm, 
          inner sep=0pt] 
          at (2.5, 2.5) 
          { $\encc^{\otimes n} $  \\ $ \circ \encq(|x\rangle)$ };
\end{tikzpicture}  \
  \begin{tikzpicture}[baseline=(current bounding box.center), scale=0.5]
        \draw[xstep=2cm, ystep=1.1cm, black] (0,0) grid (2,5.5);
        \node[fill=white, rotate=-90] at (1, 2.75) { $\encq(G^x \ket{\psi_x})$ };
    \end{tikzpicture} \
\begin{tikzpicture}[baseline=(current bounding box.center), scale=0.5]
        \draw[step=1cm, black] (0,0) grid (1,5);
        \node at (0.5, 0.5) {$|x\rangle$};
        \node at (0.5, 2.5) {$\vdots$};
        \node at (0.5, 4.5) {$|x\rangle$}; 
    \end{tikzpicture}
    \begin{tikzpicture}[baseline=(current bounding box.center), scale=0.5]
        \draw[xstep=2.4cm, ystep=5cm, black] (0,0) grid (2.4,5);
        \node[fill=white, align=center,
          inner sep=0pt] 
          at (1.2, 2.5) 
          { $\ket{\mathrm{rest}_x}$ };
    \end{tikzpicture}
    \right),
    \end{align}
    where the errors $\{E_x\}$ only depend on $\ket{x}$ and the input error $P$ but not on the states $\ket{\psi_x}$ and recall that $S_1$ is a sparse error acting on the $n \times n$ grid only.
    
    Notice that $E_x$ only acts nontrivially on the rows $C \cup C'$ of the codeblock $\encq(G^x \ket{\psi_x})$. We need to choose $c_\mathrm{tr}(\delta) \ll \delta$ and $ \eps^*_\mathrm{tr} \ll \delta^2$ to ensure $cn + \delta n/2 + \sqrt{\eps} n < \delta n$ (logical decoding succeeds) and $|C'|\leq (\sqrt{\eps}  + 2 \eps/\delta)  < c'(\sqrt{\eps})$ (residual error is small).

    \paragraph{Step 4: Post-suppression.}
    Let \(B_1,\dots,B_m\) denote the \(m\) output codeblocks. By Step~3, for each branch \(x\), the operator \(E_x\) acts nontrivially on register \(B\) only on the rows indexed by \(C\cup C'\). Hence, on each output block \(B_j\), the induced error is supported on at most \(|C|+|C'|\le c_\mathrm{tr}(\delta)n + c_0\sqrt{\eps}\,n\) qubits, for some absolute constant \(c_0\), where we use the bound on \(|C'|\) obtained earlier.

    Expanding \(E_x\) in the Pauli basis on register \(B\), we may write \(E_x=\sum_\nu P^{(B)}_{x,\nu}\otimes M^{(\mathrm{rest})}_{x,\nu}\), where each \(P^{(B)}_{x,\nu}\) is a Pauli operator on \(B\) whose restriction to each block \(B_j\) is supported on at most \(c_\mathrm{tr}(\delta)n + c_0\sqrt{\eps}\,n\) qubits. Therefore the state after Step~3 can be written as \(\sum_{x,\nu} P^{(B)}_{x,\nu}\bigl(\encq(G_{f(x)}\ket{\psi_x})_B\bigr)\otimes \ket{\zeta_{x,\nu}}\) for suitable states \(\ket{\zeta_{x,\nu}}\) on all other registers. In other words, after Step~3 the output register \(B\) is a \([c_\mathrm{tr}(\delta)+c_0\sqrt{\eps}]^m\)-deviation of the ideal branchwise output.

    Now choose \(c_\mathrm{tr}(\delta)\) sufficiently small and then choose \(\eps^*_\mathrm{tr}\) sufficiently small so that \(c_\mathrm{tr}(\delta)+c_0\sqrt{\eps}\le c_{\mathrm{es}}(\delta)\) for every \(\eps\le \eps^*_\mathrm{tr}\), where \(c_{\mathrm{es}}(\delta)\) is the input radius from~\Cref{prop:error-suppression}. It follows that \Cref{prop:error-suppression} applies to each output block \(B_j\), with all other registers regarded as untouched reference systems.

    We now combine these \(m\) error-suppression gadgets. Here we use parallel composability, namely \Cref{lem:parallel-composability}, because the \(m\) output blocks are corrected independently and simultaneously, and we want a single fault-tolerant statement for the tensor product of those \(m\) suppression gadgets acting on the whole register \(B\). This yields a gadget that maps the \([c_\mathrm{tr}(\delta)+c_0\sqrt{\eps}]^m\)-deviated output of Step~3 to a \([T_{\mathrm{es}}\sqrt{\eps}]^m\)-deviation of the ideal branchwise output on \(B\).

    Finally, we compose this parallel post-suppression layer with the adaptive-gates procedure up through Step~3. For this we use sequential composability, namely \Cref{def:seq-composability}, since the output guarantee of Step~3 exactly matches the input promise required by the post-suppression layer. Therefore, after Step~4 the final output is of the form \(\sum_{x,i} P_{x,i}\bigl(\encq(G_{f(x)}\ket{\psi_x})_B \otimes \ket{\mathrm{rest}_{x,i}}\bigr)\), where each \(P_{x,i}\) is a Pauli operator on \(B\) whose restriction to each output codeblock has support size at most \(T_{\mathrm{es}}\sqrt{\eps}\,n\). Renaming constants completes the proof.
\end{proof}

\subsection{Clifford Gates}
In this section, we show how to implement logical Clifford gates on the distributed codeblocks. We will use gate teleportation~\cite{gottesman1999demonstrating} to implement the logical Hadamard and CZ gates, which requires preparing certain stabilizer states as resource states.

\paragraph{Stabilizer states.}
We first show how to prepare the stabilizer states $\ket{\overline{0}}_\mathcal{Q}, \ket{\overline{+}}_\mathcal{Q}, \ket{\overline{\mathrm{CZ}}}_\mathcal{Q}$. The $\ket{0}$ state is needed to start the computation in BQP or QMA protocols. The $\ket{\mathrm{CZ}}$ state is used to perform CZ gate, and the $\ket{+}$ state is used to perform Hadamard gate using gate teleportation.

\begin{mdframed}
\begin{protocol}[Distributed stabilizer state preparation]\mbox{}\label{algo:distributed-state-prep}\\
\textbf{Ideal operation:} Distributed preparation of a logical stabilizer state in $\{\ket{\overline{0}}_\mathcal{Q}, \ket{\overline{+}}_\mathcal{Q}, \ket{\overline{\mathrm{CZ}}}_\mathcal{Q}\}$, promised that for each $\psi \in \{0,+,\mathrm{CZ}\}$, there exists an instance of the $\ket{\psi}$-distillation algorithm~\Cref{lem:stabilizer-distill} with instance size $\Omega(n) \leq n_\psi \leq n/2$ \footnote{This assumption is made only for simplicity and suffices for our use in later sections. Otherwise, this can be relaxed and the only changes to the protocol is that each player needs to prepares $\lceil n_\psi/n \rceil$ copies in step 1.}.\\
\textbf{Input:} None. \\
\textbf{Output:} An $n$-qubit (or $2n$-qubit) state jointly held by the $n$ players.
\begin{enumerate}
\item \textbf{(Preparing input states)} Each player $i$ prepares a $\ket{\psi}_\mathcal{Q}$ and distributes the codeword.
\item \textbf{(Filtering inputs)} Players perform the verification~\Cref{algo:vqss} on the distributed code blocks.
\item \textbf{(Logical distillation)} Using a pre-decided order, the players pick the first $n_\psi$ VQSS-accepted code blocks to run the distillation algorithm in~\Cref{lem:stabilizer-distill} at the logical level of the code $\mathcal{Q}$ with these code blocks as input. This uses transversal gates~\Cref{algo:adaptive-gates} and error suppression gadget~\Cref{algo:error-correction}. (If there are not enough accepted code blocks, abort). 
\end{enumerate}
\end{protocol}
\end{mdframed}

The state preparation procedures rely on distillation protocols using linear-distance quantum codes with transversal Clifford gates. We will use the following stabilizer state distillation protocol from~\cite{nguyen2025quantum}.
\begin{lemma}[Stabilizer state distillation, Proposition 4.9 in~\cite{nguyen2025quantum}]\label{lem:stabilizer-distill} Let $\mathrm{\psi} \in \{0, +, \mathrm{CZ}\}$.  There exist a constant $\delta_\mathrm{\psi} >0$ and infinitely and geometrically dense\footnote{This means that the protocol exists for an infinite increasing sequence of values $\{n_\psi^i\}_{i \in \mathbb{N}}$, such that $n_\psi^{i+1} = \Theta(n_\psi^i)$.} many $n_\psi$ such that the following holds. For each such $n_\psi$, there exists distillation protocol that takes as input a register that is $\delta_\mathrm{\psi}n_\psi$-deviated (not necessarily Pauli deviated) from $\ket{\psi}^{\otimes n_\mathrm{\psi}}$ and outputs $\ket{\psi}$. Furthermore, the protocol only uses $\mathrm{poly}(n_\psi)$ many CNOT gates, destructive measurements in computational/Hadamard basis, and adaptive Pauli gates, with the adaptivity following a $\mathrm{poly}(n_\psi)$-sized auxiliary classical computation on the measurement outcomes.
\end{lemma}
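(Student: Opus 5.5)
This lemma is imported from \cite{nguyen2025quantum} (Proposition 4.9). If I had to reprove it, I would build the distiller from a CSS code with linear distance and an efficient decoder, for example the inner code $\mathcal{Q}_\mathrm{inner}$ used in \Cref{algo:vqss}. I would treat the three cases $\psi\in\{0,+,\mathrm{CZ}\}$ uniformly, since each is a stabilizer state whose stabilizers are tensor products of Paulis. The $\ket{+}$ case is the $\ket{0}$ case conjugated by Hadamard, realised by swapping measurement bases. For $\ket{\mathrm{CZ}}$, I would treat each copy as a two-qubit stabilizer state with stabilizers $X\otimes Z$ and $Z\otimes X$.

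\textbf{Step 1 (encode the copies into the code).} Take a CSS code $\mathcal{C}$ of length $n_\psi$ encoding one logical qubit, with linear distance and a bounded-distance decoder. Such codes exist for a geometrically dense sequence of lengths, which is where the ``infinitely and geometrically dense'' qualifier comes from. I would identify the $n_\psi$ input registers with the physical positions of $\mathcal{C}$. In the ideal case $\ket{0}^{\otimes n_\psi}$ is a $Z$-type stabilizer state. The aim is to convert it into $\overline{\ket{0}}_{\mathcal{C}}$ and then decode to one physical $\ket{0}$. One route is Steane-type extraction: measure every qubit destructively in the $Z$ basis, after CNOTs from a fresh codeblock where needed, and classically decode the resulting string.

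\textbf{Step 2 (simplest version, exploiting destructive measurement).} For $\ket{0}$ there is an even simpler argument. Measure all $n_\psi$ qubits in the computational basis. A $\delta_\psi n_\psi$-deviation of $\ket{0}^{\otimes n_\psi}$ is a superposition $\sum_i N_{S^{(i)}}\ket{0}^{\otimes n_\psi}\otimes\ket{\xi^{(i)}}$ with $|S^{(i)}|\le \delta_\psi n_\psi$. Every branch therefore produces a string of Hamming weight at most $\delta_\psi n_\psi$, so the majority is exactly $0$ in every branch. The output qubit is then prepared as $X^{\text{maj}}\ket{0}=\ket{0}$, using a fresh qubit set by an adaptive Pauli. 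The same argument applies to $\ket{+}$ in the Hadamard basis. For $\ket{\mathrm{CZ}}$, I would measure stabilizer eigenvalues through CNOTs and destructive measurements. This certifies, by majority vote, the Pauli frame of one designated copy, and a Pauli correction then fixes that copy.

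\textbf{Step 3 (main obstacle).} The delicate point is that the deviation need not be Pauli, so the environment registers $\ket{\xi^{(i)}}$ can entangle different error branches. I would argue branchwise by linearity, as in \Cref{remark:pauli}. Every branch yields the same classical decision and the same output $\ket{\psi}$, possibly up to a branch-dependent Pauli that the adaptive correction removes exactly. Hence the output factorises as $\ket{\psi}\otimes(\text{garbage})$.

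For $\ket{\mathrm{CZ}}$, the difficulty is keeping an output copy rather than destroying it. I would use a designated copy whose errors are corrected by comparing stabilizer parities against the other copies. Each comparison is a CNOT followed by a destructive measurement, and the decision is a majority vote. With the threshold $\delta_\psi<1/4$, an error on the designated copy appears in a majority of the comparisons while errors elsewhere corrupt only a minority, so the inferred syndrome is exact in every branch. The resource counts are polynomial in $n_\psi$ because the circuit consists of linearly many CNOTs plus a majority computation.
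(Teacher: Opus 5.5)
There is a genuine gap in your $\ket{\mathrm{CZ}}$ case, which is where the content of the lemma lies. There is also a smaller problem in Step~2. In Step~2 the majority is $0$ in every branch, so $X^{\mathrm{maj}}$ is always the identity: your protocol simply discards its input and outputs a freshly initialized $\ket{0}$. The same holds for the fresh $\ket{+}$, and for the ``fresh codeblock'' of Step~1. Initialization is not among the permitted operations. Moreover, in the paper's application (\Cref{algo:distributed-state-prep}) a trusted logical $\ket{0}$ is exactly what is being manufactured, so this is circular. For $\ket{0}$ and $\ket{+}$ the repair is the designated-copy idea you only bring in later. Write $n=n_\psi$; the input is $\sum_{|e|\le\delta_\psi n}\ket{e}\otimes\ket{\xi_e}$. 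Apply $\mathrm{CNOT}$ from copy~$1$ to each copy $j\ge 2$ and measure copy $j$ in the $Z$ basis, obtaining $e_1\oplus e_j$. For $\delta_\psi<1/3$ the outcome string determines $e_1$, namely as its majority, so $X^{\mathrm{maj}}$ on copy~$1$ leaves exactly $\ket{0}\otimes(\text{garbage})$. For $\ket{+}$, reverse the CNOTs, measure in the $X$ basis, and correct with $Z$.

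For $\ket{\mathrm{CZ}}$, your comparison-plus-majority scheme in Step~3 fails because of back-action, which you do not address. Put copy $j$ on qubits $(p_j,q_j)$ and push its error to $X_{p_j}^{u_j}Z_{p_j}^{v_j}$, so that $u_j,v_j$ are the syndromes of $Z_{p_j}X_{q_j}$ and $X_{p_j}Z_{q_j}$. A CNOT circuit with matrix $A$ on the $p$'s, together with the reversed circuit (matrix $A^{-T}$) on the $q$'s, fixes $\ket{\mathrm{CZ}}^{\otimes n}$ and acts on frames by $(u,v)\mapsto(Au,A^{-T}v)$. Consider the comparison $\mathrm{CNOT}_{p_1\to p_j}\mathrm{CNOT}_{q_j\to q_1}$ followed by measuring $Z_{p_j},X_{q_j}$. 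It reveals $u_1\oplus u_j$, but it replaces $v_1$ by $v_1\oplus v_j$, and $v_j$ is destroyed. Symmetrically, each $v$-comparison XORs the never-measured $u_{j'}$ into $u_1$. Whichever comparisons come last therefore leave an unobserved parity in the kept frame, so a single error already defeats the scheme. For example, one $u_{j'}=1$ among the copies used for the final $v$-comparisons does it. In this language the rows of $A$ and of $A^{-T}$ are dual bases, so you cannot measure $e_1+e_j$ in one basis and $e_1+e_{j'}$ in the other, since $\langle e_1+e_j,e_1+e_{j'}\rangle=1$.

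The missing ingredient is a code with linear distance in \emph{both} bases. This is the linear-distance-code route the paper indicates for \cite{nguyen2025quantum}, and your Step~1 names it but never uses it. Concretely:
\begin{itemize}
\item Take a $[[n,1,\Theta(n)]]$ CSS code with an efficient decoder.
\item Choose $A$ with rows $H_Z$, $\bar Z$ and destabilizers, so that $A^{-T}$ has rows $H_X$, $\bar X$ and destabilizers, with $\bar Z$ and $\bar X$ in the row of the kept copy.
\item Apply the two CNOT circuits.
\item Measure the $H_Z$-copies via $Z_p,X_q$ and the $H_X$-copies via $X_p,Z_q$.
\item Decode $H_Zu$ and $H_Xv$ to $\hat u,\hat v$, and correct the kept copy by $X^{\langle\bar Z,\hat u\rangle}Z^{\langle\bar X,\hat v\rangle}$.
\end{itemize}
Distance $>2\delta_\psi n$ makes this exact in every branch consistent with the outcomes, after which your branchwise linearity argument goes through. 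The code family is also where the geometric density of the admissible $n_\psi$ genuinely comes from.
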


\begin{proposition}[Logical stabilizer state preparation]\label{prop:distributed-state-prep} There exist threshold $\eps^*_{ss}(\delta)=\mathrm{poly}(\delta)$ and absolute constant $c'_{ss}$ such that, for any $\eps \leq \eps^*_{ss}$, ~\Cref{algo:distributed-state-prep} is a $(\mathcal{Q},c'_{ss} \sqrt{\eps}, K_{\eps})$-fault-tolerant protocol for the initialization operation of $\ket{\psi} \in \{\ket{0},\ket{+}, \ket{\mathrm{CZ}} \}$.
\end{proposition}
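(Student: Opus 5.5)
We follow the three steps of \Cref{algo:distributed-state-prep}, controlling the faulty players by the same reduction used in the proof of \Cref{prop:vqss}. Fix $F$ with $|F|\le \eps n^2/2$ and let $C$ be the set of players with more than $4\sqrt{|F|}$ faulty incident edges, so $|C|\le \sqrt{\eps}\,n/(2\sqrt2)$. Every one-to-all-to-one transmission between players outside $C$ is then correct, and we may analyze an auxiliary execution in which the players in $C$ are arbitrarily corrupted and all others are honest. By \Cref{remark:pauli} it suffices to treat Pauli faults.

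\textbf{Steps 1--2 (preparation and filtering).} Each honest player $i\notin C$ prepares an exact codeword $\encq(\ket{\psi})$. Its distributed block is then a Pauli deviation supported only on shares of players in $C$, of weight at most $|C|\le c_0\sqrt{\eps}\,n$. By the completeness part of \Cref{prop:vqss}, VQSS accepts such a block, and all honest players agree on this after Byzantine agreement. For a corrupted preparer's block there are two possibilities, by the soundness and intermediate cases of \Cref{prop:vqss}. Either it is rejected, or it is accepted and its minimal Pauli deviation from \emph{some} codeword $\encq(\ket{\phi})$ has weight at most $\frac{wc_0+4\kappa}{\delta}\sqrt{\eps}\,n$, plus the residual error on $C'_{\mathrm{ver}}$. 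Choosing $\eps^*_{ss}=\poly(\delta)$ small, this weight is below $c_{\mathrm{es}}(\delta)n$ and below $c_{\mathrm{tr}}(\delta)n/m$, so we can apply \Cref{algo:error-correction} to every accepted block to bring it to a $T_{\mathrm{es}}\sqrt{\eps}$-deviation. At least $n-|C|\ge n/2\ge n_\psi$ blocks are accepted, so the protocol does not abort.

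\textbf{Step 3 (logical distillation).} Among the first $n_\psi$ accepted blocks, at most $|C|$ encode a wrong logical state. Linearity handles the fact that these may be entangled with other registers or with one another. Writing each bad block's logical content in a basis and expanding, the logical input to distillation is a superposition of states of the form $E\,\ket{\psi}^{\otimes n_\psi}$, where $E$ acts on a fixed set of at most $|C|$ logical positions. This is a $(|C|/n_\psi)$-deviation of $\ket{\psi}^{\otimes n_\psi}$ in the non-Pauli sense allowed by \Cref{lem:stabilizer-distill}. Since $n_\psi=\Omega(n)$, we have $|C|/n_\psi=O(\sqrt{\eps})\le \delta_\psi$ for small $\eps^*_{ss}$. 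The distillation circuit uses only CNOTs, computational/Hadamard measurements and adaptive Paulis. Each of these is implemented by \Cref{algo:adaptive-gates}, which includes error suppression; by \Cref{prop:adaptive} and \Cref{remark:transversal}, each such gadget maps $c_{\mathrm{tr}}$-bounded deviations to $c'_{\mathrm{tr}}\sqrt{\eps}$-deviations of the ideal branchwise output. Chaining these gadgets with \Cref{def:seq-composability} and \Cref{lem:parallel-composability}, with $c'_{\mathrm{tr}}\sqrt{\eps}\le c_{\mathrm{tr}}(\delta)/m$, gives an output that is a $c'_{ss}\sqrt{\eps}$-deviation of $\encq$ applied to the logical output of distillation. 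By \Cref{lem:stabilizer-distill}, that logical output is exactly $\ket{\psi}$. For $\ket{\mathrm{CZ}}$ the output consists of two blocks, and the same bound holds per block.

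\textbf{Main obstacle.} The delicate step is the interface between physical deviations and the logical-level deviation required by \Cref{lem:stabilizer-distill}. A corrupted accepted block is close to \emph{some} codeword, but which codeword can be entangled across branches with the rest of the state, including with the Pauli-error pattern itself. To handle this, we first decompose the physical state into terms $P_i\,\encq(\cdot)\otimes\ket{\xi_i}$ using minimal-weight Paulis (\Cref{claim:reversible-decoding-properties} with a bounded-distance decoder). We then check that the set of logically bad positions is fixed, namely the positions prepared by players in $C$, across all terms. Only after this check can distillation be applied term by term.
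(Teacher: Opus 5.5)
Your proposal is correct and follows essentially the same route as the paper: VQSS completeness accepts the blocks from good preparers, the contrapositive of VQSS soundness bounds the accepted bad blocks, distillation has at most $O(\sqrt{\eps}\,n)\le\delta_\psi n_\psi$ bad logical inputs, and the logical simulation proceeds gate by gate with adaptive transversal gates and error suppression. Your linearity treatment of possibly entangled bad blocks is in fact more careful than the paper's product ansatz $\ket{\phi_1}\otimes\cdots\otimes\ket{\phi_{n_0}}$.

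One small slip concerns Step~1. \Cref{algo:distributed-state-prep} distributes shares by direct sends, not through the one-to-all-to-one gadget, so your auxiliary-execution reduction does not cover this step. An honest preparer's block can therefore also be corrupted on shares sent across its faulty incident edges (at most $4\sqrt{|F|}\le 2\sqrt{2}\,\sqrt{\eps}\,n$ of them), not only on shares held by $C$. The paper counts these explicitly, and you only need to enlarge $c_0$ (e.g.\ $c_0\ge 4$) to absorb them.
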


\begin{proof}
We prove the claim for \(\ket{\psi}=\ket{0}\). The cases \(\ket{+}\) and \(\ket{\mathrm{CZ}}\) are identical, using the corresponding distillation instance from \Cref{lem:stabilizer-distill}.

By the promise in \Cref{algo:distributed-state-prep}, the chosen distillation instance has size \(n_0=\Omega(n)\), so there is an absolute constant \(\gamma_0>0\) such that \(n_0\ge \gamma_0 n\). After Step~1, it is convenient to view the physical qubits as an \(n\times n\) array: row \(i\) consists of all qubits held by player \(i\), and column \(j\) consists of the \(j\)-th distributed code block, namely the block prepared by player \(j\) and then shared among all \(n\) players.

Fix any absolute constant \(c_0\ge 2\). Let \(\kappa\) be the constant from \Cref{prop:vqss}, and let \(\epsilon_{\mathrm{ver}}^*,c'_{\mathrm{ver}}\) be the corresponding parameters from \Cref{prop:vqss} for this choice of \(c_0\), so that the threshold used in VQSS is \((w c_0+2\kappa)\sqrt{\eps}\,m\). Let \(\epsilon_{\mathrm{tr}}^*(\delta),c_{\mathrm{tr}}(\delta),c'_{\mathrm{tr}}\) be the parameters from \Cref{prop:adaptive}, let \(\epsilon_{\mathrm{es}}^*(\delta),c_{\mathrm{es}}(\delta),T_{\mathrm{es}}\) be the parameters from \Cref{prop:error-suppression}, and let \(\delta_0>0\) be the distillation radius from \Cref{lem:stabilizer-distill} for \(\psi=0\). Write
\[
D_{\mathrm{ver}}(\delta):=\frac{w c_0+4\kappa}{\delta}+c'_{\mathrm{ver}}.
\]
Choose
\[
\epsilon_{ss}^*(\delta):=
\min\!\left\{
\epsilon_{\mathrm{ver}}^*,
\epsilon_{\mathrm{tr}}^*(\delta),
\epsilon_{\mathrm{es}}^*(\delta),
\frac14,
(\gamma_0\delta_0)^2,
\left(\frac{c_{\mathrm{es}}(\delta)}{D_{\mathrm{ver}}(\delta)}\right)^2,
\left(\frac{c_{\mathrm{tr}}(\delta)}{3T_{\mathrm{es}}}\right)^2,
\left(\frac{c_{\mathrm{es}}(\delta)}{3T_{\mathrm{es}}+c'_{\mathrm{tr}}}\right)^2
\right\}.
\]
Since \(\epsilon_{\mathrm{ver}}^*=\poly(\delta)\), \(\epsilon_{\mathrm{tr}}^*(\delta)=\poly(\delta)\), \(\epsilon_{\mathrm{es}}^*(\delta)=\poly(\delta)\), and \(c_{\mathrm{tr}}(\delta),c_{\mathrm{es}}(\delta)=\poly(\delta)\), we have \(\epsilon_{ss}^*(\delta)=\poly(\delta)\).

Fix \(\eps\le \epsilon_{ss}^*(\delta)\), and fix an \(F\)-edge faulty execution of \Cref{algo:distributed-state-prep} with \(|F|\le \eps n^2/2\).

\medskip
\noindent\textbf{Step 1: preparing the \(n\) input code blocks.}
For a player \(v\), let \(\deg_F(v)\) be the number of faulty edges incident to \(v\), and define
\[
C_{\mathrm{bad}}:=\{v\in[n]: \deg_F(v)>\sqrt{\eps}\,n\}.
\]
Since \(\sum_v \deg_F(v)=2|F|\le \eps n^2\), we have \(|C_{\mathrm{bad}}|<\sqrt{\eps}\,n\). Call players in \(C_{\mathrm{bad}}\) bad and the others good.

Consider a good column \(j\notin C_{\mathrm{bad}}\), i.e.\ the code block prepared by player \(j\). In the ideal execution this block is \(\encq(\ket{0})\). In the actual execution, a share of this block can be corrupted only if either it is held by a bad player, or it is transmitted across one of the at most \(\sqrt{\eps}\,n\) faulty edges incident to \(j\). Thus the \(j\)-th block differs from \(\encq(\ket{0})\) on at most \(2\sqrt{\eps}\,n\) players. Since \(c_0\ge 2\), every good column is \(c_0\sqrt{\eps}\)-deviated from \(\encq(\ket{0})\).

\medskip
\noindent\textbf{Step 2: VQSS filtering.}
In Step~2 the players run \(n\) copies of \Cref{algo:vqss}, one on each column, with threshold \((w c_0+2\kappa)\sqrt{\eps}\,m\). Fix this faulty execution. As in \Cref{prop:vqss}, there is a set \(C'_{\mathrm{ver}}\subseteq[n]\), depending only on the faulty execution, such that
\[
|C'_{\mathrm{ver}}|\le c'_{\mathrm{ver}}\sqrt{\eps}\,n,
\]
and outside \(C'_{\mathrm{ver}}\) every player sees the same accept/reject outcome for every column.

Indeed, for each column the output is, by \Cref{prop:vqss}, in one of the two canonical forms with repetition register \(\ket{0^n}\) or \(\ket{1^n}\) outside \(C'_{\mathrm{ver}}\). Hence all players outside \(C'_{\mathrm{ver}}\) agree on whether each given column is accepted or rejected. In particular, they all agree on which columns are the first \(n_0\) accepted columns in the pre-decided order.

For every good column \(j\notin C_{\mathrm{bad}}\), completeness of \Cref{prop:vqss} applies, so that column is accepted by every player outside \(C'_{\mathrm{ver}}\). Since \(|C_{\mathrm{bad}}|<\sqrt{\eps}\,n<n/2\) and \(n_0\le n/2\), there are at least \(n_0\) accepted columns, and the protocol therefore does not abort.

Let \(J\subseteq[n]\) be the set of the first \(n_0\) accepted columns in the pre-decided order. Let \(B\subseteq[n_0]\) be the subset of positions among these selected columns that come from bad preparers. Then
\[
|B|\le |C_{\mathrm{bad}}|<\sqrt{\eps}\,n\le \delta_0 n_0,
\]
where the last inequality follows from \(n_0\ge \gamma_0 n\) and \(\eps\le (\gamma_0\delta_0)^2\).

We now use the VQSS guarantees to describe the selected columns. For each selected column \(t\in[n_0]\), let \(P_t\) be a minimal-weight Pauli and \(\ket{\phi_t}\) a logical state such that the corresponding verified block equals \(P_t\encq(\ket{\phi_t})\) outside the players in \(C'_{\mathrm{ver}}\). Since this column is accepted, the soundness part of \Cref{prop:vqss} implies \(|\supp(P_t)|\le \frac{w c_0+4\kappa}{\delta}\sqrt{\eps}\,n\); hence, after the coherent reordering in Step~3a, the \(t\)-th selected code block is a \(D_{\mathrm{ver}}(\delta)\sqrt{\eps}\)-deviation from \(\encq(\ket{\phi_t})\), where \(D_{\mathrm{ver}}(\delta)=\frac{w c_0+4\kappa}{\delta}+c'_{\mathrm{ver}}\). Moreover, if \(t\notin B\), i.e.\ if that selected column came from a good preparer, then by the completeness part of \Cref{prop:vqss} we may take \(\ket{\phi_t}=\ket{0}\).

Let
\[
\ket{\varphi}:=\ket{\phi_1}\otimes\cdots\otimes\ket{\phi_{n_0}}.
\]
Then \(\ket{\varphi}\) is a \(B\)-deviation of \(\ket{0}^{\otimes n_0}\). Since \(|B|\le \delta_0 n_0\), \Cref{lem:stabilizer-distill} implies that the ideal distillation circuit \(\mathcal D_0\) maps \(\ket{\varphi}\) to \(\ket{0}\).

Thus, after the coherent reordering of Step~3a, the first \(n_0\) selected code blocks jointly form a \(D_{\mathrm{ver}}(\delta)\sqrt{\eps}\)-deviation from \(\encq^{\otimes n_0}(\ket{\varphi})\). Since \(D_{\mathrm{ver}}(\delta)\sqrt{\eps}\le c_{\mathrm{es}}(\delta)\) by the choice of \(\epsilon_{ss}^*(\delta)\), \Cref{prop:error-suppression} is applicable to each selected block.

\medskip
\noindent\textbf{Step 3: fault-tolerant simulation of the distillation circuit.}
We simulate \(\mathcal D_0\) at the logical level using \Cref{algo:adaptive-gates} and \Cref{algo:error-correction}. First apply \Cref{algo:error-correction} independently to each selected block. By \Cref{prop:error-suppression}, after this initial suppression round each selected block is a \(T_{\mathrm{es}}\sqrt{\eps}\)-deviation from the corresponding encoded block of \(\encq^{\otimes n_0}(\ket{\varphi})\).

We claim inductively that immediately before each subsequent logical step of the encoded simulation, every currently active code block is a \(T_{\mathrm{es}}\sqrt{\eps}\)-deviation from the corresponding block in the ideal execution of \(\mathcal D_0\) on input \(\ket{\varphi}\).

This holds after the initial suppression round. Assume it holds before some logical step. Since the distillation circuit from \Cref{lem:stabilizer-distill} uses only CNOT gates, destructive \(X/Z\)-basis measurements, and adaptive Pauli corrections, each logical step involves at most three active code blocks. Expanding the current state into Pauli-deviated terms, the joint input to that step is, term by term, a Pauli \(3T_{\mathrm{es}}\sqrt{\eps}\)-deviation from the corresponding ideal encoded input. By the choice of \(\epsilon_{ss}^*(\delta)\), \(3T_{\mathrm{es}}\sqrt{\eps}\le c_{\mathrm{tr}}(\delta)\), so \Cref{prop:adaptive} applies and shows that each surviving output block becomes a \((3T_{\mathrm{es}}+c'_{\mathrm{tr}})\sqrt{\eps}\)-deviation from its ideal encoded output block. Again by the choice of \(\epsilon_{ss}^*(\delta)\), this is at most \(c_{\mathrm{es}}(\delta)\), so \Cref{prop:error-suppression} applies to each surviving output block and resets it to a \(T_{\mathrm{es}}\sqrt{\eps}\)-deviation from its ideal encoded state. This proves the induction.

At the end of the simulation, the ideal logical output block is \(\ket{0}\), because \(\mathcal D_0(\ket{\varphi})=\ket{0}\). Therefore the actual physical output block is a \(T_{\mathrm{es}}\sqrt{\eps}\)-deviation from \(\encq(\ket{0})\). Thus \Cref{algo:distributed-state-prep} is \((\mathcal Q,T_{\mathrm{es}}\sqrt{\eps},K_\eps)\)-fault-tolerant for initialization of \(\ket{0}\).

The same proof applies verbatim to \(\ket{+}\) and \(\ket{\mathrm{CZ}}\), using the corresponding distillation instance from \Cref{lem:stabilizer-distill}. Therefore \Cref{algo:distributed-state-prep} is \((\mathcal Q,c'_{ss}\sqrt{\eps},K_\eps)\)-fault-tolerant for initialization of \(\ket{\psi}\in\{\ket{0},\ket{+},\ket{\mathrm{CZ}}\}\), where one may take \(c'_{ss}:=T_{\mathrm{es}}\).

Finally, Step~1 uses \(O(n)\) local preparations and one distribution per player, Step~2 consists of \(n\) invocations of \Cref{algo:vqss}, and Step~3 simulates a \(\poly(n_0)=\poly(n)\)-size circuit using \Cref{algo:adaptive-gates} and \Cref{algo:error-correction}, each of which has per-player complexity \(\poly(n)\). Hence every player is involved in at most \(\poly(n)\) qubits and gates throughout the protocol.
\end{proof}

We will implement CZ and Hadamard gates using the distributed state preparation procedure and gate teleportation.

\begin{fact}[Clifford gate teleportation]\label{fact:gate-teleportation}
    The CZ gate can be implemented using a resource state $\ket{\mathrm{CZ}}$,CNOT gates, destructive Z-basis measurements, and adaptively controlled Pauli gates.
    The Hadamard gate can be implemented using a resource state $\ket{+}$, a CZ gate, destructive X-basis measurements, and adaptive Pauli X gate. 
\end{fact}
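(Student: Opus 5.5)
Both claims can be checked by direct calculation on computational-basis amplitudes. In each case we track the ideal (fault-free) circuit on an arbitrary input and show that the pre-correction output equals the target gate applied to the input, up to a Pauli determined by the measurement outcomes. Pauli operators commute or anticommute with CZ, so the required corrections are Paulis, possibly up to a global phase.

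\emph{Hadamard.} Take an input $\ket{\psi}=\alpha\ket{0}+\beta\ket{1}$ on qubit $1$ and an ancilla $\ket{+}$ on qubit $2$.
\begin{enumerate}
\item Apply $\mathrm{CZ}_{12}$. This gives $\alpha\ket{0}\ket{+}+\beta\ket{1}\ket{-}$.
\item Destructively measure qubit $1$ in the $X$ basis, obtaining outcome $s\in\{0,1\}$. Since $\ket{0}=(\ket{+}+\ket{-})/\sqrt2$ and $\ket{1}=(\ket{+}-\ket{-})/\sqrt2$, the ancilla becomes, up to normalization,
\[
\alpha\ket{+}+(-1)^s\beta\ket{-} \;=\; H\bigl(\alpha\ket{0}+(-1)^s\beta\ket{1}\bigr) \;=\; HZ^s\ket{\psi} \;=\; X^sH\ket{\psi}.
\]
The outcome probabilities are $1/2$ each, independent of $\ket{\psi}$.
\item Apply $X^s$ adaptively. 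This leaves exactly $H\ket{\psi}$ on qubit $2$.
\end{enumerate}
This uses only $\ket{+}$, one CZ, one destructive $X$-basis measurement and an adaptive Pauli $X$, as claimed.

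\emph{CZ.} First check the one-qubit teleportation primitive. Take an input $\ket{\psi}=\sum_x\psi_x\ket{x}$ on $q$ and an ancilla $\ket{+}$ on $a$.
\begin{enumerate}
\item Apply CNOT with control $a$ and target $q$. This gives $\tfrac{1}{\sqrt2}\sum_{x,y}\psi_x\ket{x\oplus y}_q\ket{y}_a$.
\item Measure $q$ in the $Z$ basis, obtaining $m$. This forces $y=x\oplus m$, so $a$ holds $X^m\ket{\psi}$.
\end{enumerate}

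Now run this primitive on both qubits of a two-qubit input $\ket{\psi}_{q_1q_2}$, with ancilla pair $a_1a_2$ prepared in $\ket{\mathrm{CZ}}=\mathrm{CZ}\ket{++}$.
\begin{itemize}
\item The $\mathrm{CZ}_{a_1a_2}$ applied during preparation is diagonal in the computational basis of $a_1a_2$.
\item Each CNOT uses $a_i$ only as a control, so it is also diagonal on $a_i$.
\item Hence $\mathrm{CZ}_{a_1a_2}$ commutes with both CNOTs and with the $Z$-measurements on $q_1,q_2$.
\end{itemize}
So the circuit is equivalent to running two independent one-qubit teleportations from $\ket{+}\ket{+}$ and then applying $\mathrm{CZ}$ to $a_1a_2$. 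With outcomes $m_1,m_2$, the output is
\[
\mathrm{CZ}\,(X^{m_1}\otimes X^{m_2})\ket{\psi}.
\]
Using $\mathrm{CZ}(X\otimes I)\mathrm{CZ}=X\otimes Z$ and $\mathrm{CZ}(I\otimes X)\mathrm{CZ}=Z\otimes X$, this equals, up to a global phase,
\[
\bigl(X^{m_1}Z^{m_2}\otimes Z^{m_1}X^{m_2}\bigr)\,\mathrm{CZ}\ket{\psi}.
\]
Applying the inverse of this Pauli adaptively, which depends only on $(m_1,m_2)$, leaves $\mathrm{CZ}\ket{\psi}$. The circuit uses exactly the resources listed in the fact: $\ket{\mathrm{CZ}}$, CNOTs, destructive $Z$-basis measurements and adaptive Paulis.

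There is no substantive obstacle; the only step needing care is the commutation argument for the CZ case. In particular, CNOT must use the ancilla as control, not as target, so that the pre-applied $\mathrm{CZ}$ on the ancillas commutes through. The measurement statistics are uniform and independent of $\ket{\psi}$, so no information leaks, although this is not needed for correctness.
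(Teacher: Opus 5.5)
Your proposal is correct and matches the paper's justification. The paper gives no written proof, only two circuit diagrams. For $H$: a CZ with a $\ket{+}$ ancilla, an $X$-basis measurement of the input wire, and an $X^a$ correction. For CZ: the ancilla pair in $\ket{\mathrm{CZ}}$, CNOTs with ancilla controls and input targets, $Z$-basis measurements, and the correction $\mathrm{CZ}(X^a\otimes X^b)\mathrm{CZ}$. Your calculations, including the commutation of the ancilla CZ through the ancilla-controlled CNOTs and the identity $\mathrm{CZ}(X^{m_1}\otimes X^{m_2})\mathrm{CZ}=X^{m_1}Z^{m_2}\otimes Z^{m_1}X^{m_2}$ (which holds exactly, with no phase), are a faithful explicit verification of those same circuits.
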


\begin{center}
\begin{tikzpicture}
    \node at (0,0) {
        \begin{quantikz}
            \lstick{$\ket{+}$} & \ctrl{1} \qw & \gate{X^a} & \rstick{$H\ket{\psi}$}  \qw \\
            \lstick{$\ket{\psi}$} & \control{} \qw & \meter{\hspace{0.6cm}  \text{\scriptsize X}} \vcw{-1}
        \end{quantikz}
    };
\end{tikzpicture}
\qquad 
\begin{tikzpicture}\node[scale=0.8]{
\begin{quantikz}
    \lstick{$\ket{+}$} & \ctrl{1} & \ctrl{2} & \qw & \gate[2, nwires={1}]{\text{CZ}(X^a X^b) \text{CZ}} \qw & \rstick[2]{$CZ\ket{\psi}$} \qw \\
    \lstick{$\ket{+}$} & \control{} & \qw & \ctrl{2} & & \qw \\
    \lstick[2]{$\ket{\psi}$} & \qw & \targ{} \qw & \qw & \meter{\hspace{0.6cm}  \text{\scriptsize Z}} \vcw{-1} \vbox{\hrule\hrule} \\
    & \qw & & \targ{} & \meter{\hspace{0.6cm}  \text{\scriptsize Z}} \vcw{-1} 
\end{quantikz}
};
\end{tikzpicture}
\end{center}

\begin{mdframed}
\begin{protocol}[Teleported Clifford gate]\mbox{}\label{algo:distributed-Clifford}\\
\textbf{Input:} An $n$- or $2n$-qubit state $\ket{\Psi}$, a purported qLTC codeword $\encq(\ket{\psi})$, jointly held by the $n$ players.\\
\textbf{Output:} An $n$- or $2n$-qubit state $\Phi$, a purported qLTC codeword $\encq(U \ket{\psi})$, jointly held by the $n$ players, where $U$ is Hadamard or CZ.
\begin{enumerate}
\setcounter{enumi}{0}
\item \textbf{(Preparing resource states)} Run~\Cref{algo:distributed-state-prep} to prepare the required resource states for gate teleportation of $U$ according to~\Cref{fact:gate-teleportation}.
\item \textbf{(Logical teleportation)} Using~\Cref{algo:adaptive-gates} and~\Cref{algo:error-correction}, run the circuits in~\Cref{fact:gate-teleportation} at the qLTC logical level using the input state and the resource states from step 1. (Note: the Hadamard teleportation also requires a CZ gate, which can be done using this protocol recursively.)
\end{enumerate}
\end{protocol}
\end{mdframed}

\begin{prop}[Teleported Clifford gate] \label{prop:distributed-Clifford}
There exist $\eps^*_\mathrm{Cl}= \mathrm{poly}(\delta)$, $c_\mathrm{Cl}(\delta) = \mathrm{poly}(\delta)$ and constant $c'_\mathrm{Cl}$ such that, for any $\eps \leq \eps^*_\mathrm{Cl}$ and any $m \in \{1,2\}$ ($m$ is the size of the Clifford gate),~\Cref{algo:distributed-Clifford} is a $(\mathcal{Q}^{\otimes m}, c_\mathrm{Cl}(\delta)/m, \mathcal{Q}^{\otimes m}, c'_\mathrm{Cl}\sqrt{\eps}, K_{\eps})$-fault-tolerant protocol for the gate $U \in \{H, \mathrm{CZ}\}$ and $m=|\operatorname{supp}(U)|$.
\end{prop}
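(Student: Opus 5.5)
The proof will follow the template of \Cref{prop:distributed-state-prep}: treat the teleportation circuit of \Cref{fact:gate-teleportation} as a short logical circuit, realize each of its steps with a gadget already shown to be fault-tolerant, and glue the gadgets together using \Cref{def:seq-composability} and \Cref{lem:parallel-composability}. We do CZ first, because the Hadamard circuit calls a CZ gate as a subroutine. Fix $\eps\le\eps^*_{\mathrm{Cl}}$ and an $F$-edge faulty execution with $|F|\le\eps n^2/2$. By \Cref{remark:pauli} we may assume the input is a Pauli $(c_{\mathrm{Cl}}(\delta)/m)$-deviation of $\encq^{\otimes m}(\ket{\psi})$ and that all faults are Pauli.

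\textbf{CZ.} Step~1 calls \Cref{algo:distributed-state-prep} for $\ket{\mathrm{CZ}}$. By \Cref{prop:distributed-state-prep}, its output is a $c'_{ss}\sqrt{\eps}$-deviation of $\encq^{\otimes 2}(\ket{\mathrm{CZ}})$ on both blocks. We run the input blocks through \Cref{algo:error-correction} in parallel with this preparation. Since $c_{\mathrm{Cl}}(\delta)\le c_{\mathrm{es}}(\delta)$, \Cref{prop:error-suppression} applies, and afterwards every one of the four active blocks is a $T_{\mathrm{es}}\sqrt{\eps}$-deviation of its ideal value; \Cref{lem:parallel-composability} combines these guarantees. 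The teleportation circuit then performs two transversal CNOTs, each of which is \Cref{algo:adaptive-gates} with $\ell=0$, followed by error suppression. It ends with one adaptive step of \Cref{algo:adaptive-gates} with $\ell=2$ $Z$-measurements: the correction is $f(a,b)=\mathrm{CZ}(X^aX^b)\mathrm{CZ}$, a product of Paulis, hence transversal. The gate-teleportation promise $G_{f(x)}\ket{\psi_x}\propto G_{f(x')}\ket{\psi_{x'}}$ holds by the identity in \Cref{fact:gate-teleportation}. Therefore the second bullet of \Cref{remark:transversal} yields a $c'_{\mathrm{tr}}\sqrt{\eps}$-deviation of $\encq^{\otimes 2}(\mathrm{CZ}\ket{\psi})$. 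Between consecutive steps, the inductive invariant ``every active block is $T_{\mathrm{es}}\sqrt{\eps}$-deviated'' is maintained exactly as in Step~3 of the proof of \Cref{prop:distributed-state-prep}. The gadget inputs involve at most four blocks, so we need the conditions $4T_{\mathrm{es}}\sqrt{\eps}\le c_{\mathrm{tr}}(\delta)$ and $(4T_{\mathrm{es}}+c'_{\mathrm{tr}})\sqrt{\eps}\le c_{\mathrm{es}}(\delta)$. We therefore define $\eps^*_{\mathrm{Cl}}$ as the minimum of $\eps^*_{ss},\eps^*_{\mathrm{tr}},\eps^*_{\mathrm{es}}$ and these two ratios squared, which is $\poly(\delta)$. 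We take $c_{\mathrm{Cl}}(\delta)=\min(c_{\mathrm{es}},c_{\mathrm{tr}})$ and $c'_{\mathrm{Cl}}=\max(T_{\mathrm{es}},c'_{\mathrm{tr}})$.

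\textbf{Hadamard.} The Hadamard case has the same structure. We prepare $\ket{+}$ using \Cref{prop:distributed-state-prep}, error-suppress the input, and then apply the CZ protocol just established, which we use as a black-box fault-tolerant gadget on two blocks. Its input radius $c_{\mathrm{Cl}}(\delta)/2$ is met after suppression once $T_{\mathrm{es}}\sqrt{\eps}\le c_{\mathrm{Cl}}(\delta)/2$, which we add to the definition of $\eps^*_{\mathrm{Cl}}$. Finally, one adaptive step measures the input block in the $X$ basis and applies $X^a$; the teleportation promise again holds. Sequential composability then yields the claim. The per-player cost is a constant number of $\poly(n)$ subroutines.

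\textbf{Main obstacle.} The most delicate point is the adaptive measurement step. \Cref{prop:adaptive} guarantees only a \emph{branchwise} deviation $\sum_x P_x\,\encq(G_{f(x)}\ket{\psi_x})$. To turn this into a deviation of a single codeword we need the teleportation promise, which we must verify at the logical level, including the possibly non-Pauli residual errors on the bad rows. This is precisely the content of the gate-teleportation bullet in \Cref{remark:transversal}, so we will check that both teleportation circuits meet its hypotheses. A second point to check is that the deviation constants do not grow with each recursive use: recursion depth is one, so $c'_{\mathrm{Cl}}$ stays an absolute constant.
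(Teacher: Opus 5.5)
Your proposal is correct and takes essentially the paper's route. You prove CZ first by combining state preparation, \Cref{prop:adaptive} with the teleportation identity (which collapses the branchwise output to a single codeword) and error suppression, and you then get Hadamard by composing with the already-proved CZ gadget. The differences are only bookkeeping: you add an initial suppression round and split the CNOTs off as an $\ell=0$ step, whereas the paper takes $c_{\mathrm{Cl}}(\delta)=\tfrac14\min\{c_{\mathrm{tr}}(\delta),c_{\mathrm{es}}(\delta)\}$ so that the raw input fits directly into a single adaptive call, and it handles Hadamard by counting a constant-length chain of boxes. You should also record the condition $2c'_{\mathrm{Cl}}\sqrt{\eps}\le c_{\mathrm{tr}}(\delta)$, or insert one more suppression round, before the final Hadamard adaptive step.
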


\begin{proof}
For \(U\in\{H,\mathrm{CZ}\}\), consider the fixed logical gate-teleportation circuit for \(U\) from Step~2 of \Cref{algo:distributed-Clifford}. Replace each logical stabilizer-state preparation box in that circuit by one invocation of \Cref{algo:distributed-state-prep}, each adaptive logical-measurement / transversal-gate box by one invocation of \Cref{algo:adaptive-gates}, and each logical cleanup box by one invocation of \Cref{algo:error-correction}. In the Hadamard circuit, leave the single logical \(\mathrm{CZ}\) box as one additional box. Let \(L_U\) be the maximum number of boxes in any chain of boxes whose outputs successively feed into one another and eventually affect a given logical output code block. Since the teleportation circuits in \Cref{fact:gate-teleportation} have constant size, \(L_H\) and \(L_{\mathrm{CZ}}\) are absolute constants. Define \(L:=\max\{L_H,L_{\mathrm{CZ}}\}\).

Let \(\eps^*_{\mathrm{ss}}(\delta), c'_{\mathrm{ss}}\) be the parameters from \Cref{prop:distributed-state-prep}, let \(\eps^*_{\mathrm{tr}}(\delta), c_{\mathrm{tr}}(\delta), c'_{\mathrm{tr}}\) be the parameters from \Cref{prop:adaptive}, and let \(\eps^*_{\mathrm{es}}(\delta), c_{\mathrm{es}}(\delta), T_{\mathrm{es}}\) be the parameters from \Cref{prop:error-suppression}. Here \(c'_{\mathrm{ss}}, c'_{\mathrm{tr}}\), and \(T_{\mathrm{es}}\) are absolute constants.

Set
\[
M:=\max\{2c'_{\mathrm{ss}},\,c'_{\mathrm{tr}},\,T_{\mathrm{es}}\},\qquad
c'_{\mathrm{Cl}}:=LM,
\]
so \(c'_{\mathrm{Cl}}\) is an absolute constant. Also set
\[
c_{\mathrm{Cl}}(\delta):=\frac14\min\{c_{\mathrm{tr}}(\delta),\,c_{\mathrm{es}}(\delta)\}.
\]
Finally, choose \(\eps^*_{\mathrm{Cl}}(\delta)>0\) such that
\begin{align}
\eps^*_{\mathrm{Cl}}(\delta)
&\le \min\{\eps^*_{\mathrm{ss}}(\delta),\,\eps^*_{\mathrm{tr}}(\delta),\,\eps^*_{\mathrm{es}}(\delta)\},
\label{eq:clifford-eps-choice-1}\\
c'_{\mathrm{Cl}}\sqrt{\eps^*_{\mathrm{Cl}}(\delta)}
&\le \frac12 c_{\mathrm{Cl}}(\delta).
\label{eq:clifford-eps-choice-2}
\end{align}
Such a choice exists because all quantities on the right-hand side are strictly positive.

We first treat the case \(U=\mathrm{CZ}\), for which \(m=|\supp(U)|=2\).

Fix \(\eps\le \eps^*_{\mathrm{Cl}}(\delta)\), let \(F\) be any faulty-edge set with \(|F|\le \eps n^2/2\), and let \(\ket{\Psi_{\mathrm{in}}}\) be an input state that is Pauli \([c_{\mathrm{Cl}}(\delta)/2]^2\)-deviated from \(\encq(\ket{\psi})\), where \(\ket{\psi}\) is the two-logical-qubit input state. By the definition of generalized Pauli deviation, there exists a Pauli operator \(P_{\mathrm{in}}\) supported on a set \(C_{\mathrm{in}}\subseteq [n]\) such that \(\ket{\Psi_{\mathrm{in}}}=P_{\mathrm{in}}\,\encq(\ket{\psi})\) and \(|C_{\mathrm{in}}|\le c_{\mathrm{Cl}}(\delta)n\).

\medskip
\noindent\textbf{Step 1: preparation of the teleportation resource.}
By \Cref{prop:distributed-state-prep}, the output of Step~1 is a \([c'_{\mathrm{ss}}\sqrt{\eps}]^2\)-deviation from \(\encq(\ket{\mathrm{CZ}})\), with respect to the natural partition into the two output code blocks. Hence there exists a decomposition
\begin{equation}
\ket{\Psi_{\mathrm{res}}}
=
\sum_{\alpha}
\Bigl(P_{\alpha}\,\encq(\ket{\mathrm{CZ}})\Bigr)\otimes \ket{\xi_{\alpha}},
\label{eq:clifford-cz-resource}
\end{equation}
where, for each \(\alpha\), the Pauli operator \(P_{\alpha}\) acts on the two resource-state code blocks only through some set \(C^{(\alpha)}_{\mathrm{ss}}\subseteq [n]\) satisfying \(|C^{(\alpha)}_{\mathrm{ss}}|\le 2c'_{\mathrm{ss}}\sqrt{\eps}\,n\le M\sqrt{\eps}\,n\).

Therefore the joint state entering the logical-teleportation part of Step~2 is
\begin{equation}
\ket{\Omega_0}
=
\sum_{\alpha}
\Bigl(
(P_{\alpha}\otimes P_{\mathrm{in}})
\bigl(\encq(\ket{\mathrm{CZ}})\otimes \encq(\ket{\psi})\bigr)
\Bigr)\otimes \ket{\xi_{\alpha}}.
\label{eq:clifford-cz-joint-input}
\end{equation}
For each \(\alpha\), the corresponding Pauli term is supported on \(C_{\mathrm{in}}\cup C^{(\alpha)}_{\mathrm{ss}}\), whose size is at most
\[
c_{\mathrm{Cl}}(\delta)n + M\sqrt{\eps}\,n
\le c_{\mathrm{Cl}}(\delta)n + c'_{\mathrm{Cl}}\sqrt{\eps}\,n
\le \frac32 c_{\mathrm{Cl}}(\delta)n
< c_{\mathrm{tr}}(\delta)n,
\]
where the last inequality uses the definition of \(c_{\mathrm{Cl}}(\delta)\). Hence \Cref{prop:adaptive} applies to every summand of \eqref{eq:clifford-cz-joint-input}.

\medskip
\noindent\textbf{Step 2a: adaptive logical teleportation.}
Applying \Cref{prop:adaptive} term-by-term to \eqref{eq:clifford-cz-joint-input}, for every \(\alpha\) there exists a set \(C^{(\alpha)}_{\mathrm{tr}}\subseteq [n]\) with \(|C^{(\alpha)}_{\mathrm{tr}}|\le c'_{\mathrm{tr}}\sqrt{\eps}\,n\le M\sqrt{\eps}\,n\) such that the output of the adaptive-teleportation subroutine has the form
\begin{align}
\sum_{a,b\in\{0,1\}}
E^{(\alpha)}_{ab}
\Bigl(
\encq\Bigl(
\bigl((\mathrm{CZ}(X^a\otimes X^b)\mathrm{CZ})\otimes \bra{ab}\bigr)
\mathrm{CNOT}^{\otimes 2}(\ket{\mathrm{CZ}}\ket{\psi})
\Bigr)
\otimes \ket{\eta^{(\alpha)}_{ab}}
\Bigr),
\label{eq:clifford-cz-after-adaptive}
\end{align}
where each \(E^{(\alpha)}_{ab}\) acts on the two output code blocks only through players in \(C_{\mathrm{in}}\cup C^{(\alpha)}_{\mathrm{ss}}\cup C^{(\alpha)}_{\mathrm{tr}}\). By the gate-teleportation identity from \Cref{fact:gate-teleportation},
\[
\bigl((\mathrm{CZ}(X^a\otimes X^b)\mathrm{CZ})\otimes \bra{ab}\bigr)
\mathrm{CNOT}^{\otimes 2}(\ket{\mathrm{CZ}}\ket{\psi})
=
\mathrm{CZ}\ket{\psi}
\qquad\text{for all }a,b\in\{0,1\}.
\]
Hence \eqref{eq:clifford-cz-after-adaptive} simplifies to
\begin{equation}
\sum_{a,b\in\{0,1\}}
E^{(\alpha)}_{ab}
\Bigl(
\encq(\mathrm{CZ}\ket{\psi})\otimes \ket{\eta^{(\alpha)}_{ab}}
\Bigr).
\label{eq:clifford-cz-after-adaptive-simplified}
\end{equation}
Moreover,
\[
|C_{\mathrm{in}}\cup C^{(\alpha)}_{\mathrm{ss}}\cup C^{(\alpha)}_{\mathrm{tr}}|
\le c_{\mathrm{Cl}}(\delta)n + 2M\sqrt{\eps}\,n
\le c_{\mathrm{Cl}}(\delta)n + c'_{\mathrm{Cl}}\sqrt{\eps}\,n
\le \frac32 c_{\mathrm{Cl}}(\delta)n
< c_{\mathrm{es}}(\delta)n.
\]
Therefore every Pauli term in the Pauli expansion of every \(E^{(\alpha)}_{ab}\) is a valid input to \Cref{prop:error-suppression} on each output code block.

\medskip
\noindent\textbf{Step 2b: repeated error suppression.}
Fix \(\alpha,a,b\), and expand \(E^{(\alpha)}_{ab}\) in the Pauli basis on the two output code blocks as
\[
E^{(\alpha)}_{ab}
=
\sum_{\beta}
\bigl(P^{(\alpha,ab)}_{\beta}\otimes R^{(\alpha,ab)}_{\beta}\bigr),
\]
where \(P^{(\alpha,ab)}_{\beta}\) is a Pauli operator on the two output code blocks supported on \(C_{\mathrm{in}}\cup C^{(\alpha)}_{\mathrm{ss}}\cup C^{(\alpha)}_{\mathrm{tr}}\), and \(R^{(\alpha,ab)}_{\beta}\) acts on the remaining registers. By the previous paragraph, each such Pauli term is within the admissible radius for \Cref{prop:error-suppression} on each output block. Applying \Cref{prop:error-suppression} independently to the two output blocks, each summand is mapped to a state that is a \([T_{\mathrm{es}}\sqrt{\eps}]^2\)-deviation from \(\encq(\mathrm{CZ}\ket{\psi})\). Summing back over \(\beta\), then over \(a,b\), and finally over \(\alpha\), linearity implies that the final output of \Cref{algo:distributed-Clifford} is a \([T_{\mathrm{es}}\sqrt{\eps}]^2\)-deviation from \(\encq(\mathrm{CZ}\ket{\psi})\). Since \(T_{\mathrm{es}}\le M\le c'_{\mathrm{Cl}}\), this proves the claimed bound for \(U=\mathrm{CZ}\).

\medskip
\noindent\textbf{The case \(U=H\).}
Now \(m=|\supp(U)|=1\). We analyze the fixed Hadamard-teleportation circuit from \Cref{fact:gate-teleportation}, with each logical resource preparation implemented by \Cref{algo:distributed-state-prep}, each adaptive logical step by \Cref{algo:adaptive-gates}, each cleanup step by \Cref{algo:error-correction}, and the single logical \(\mathrm{CZ}\) box implemented by the already-proved \(\mathrm{CZ}\) case above.

Consider any chain of boxes in this circuit whose outputs successively feed into one another and eventually affect the final logical output block. By definition of \(L\), such a chain contains at most \(L\) boxes. At the start of the chain, the input block is Pauli \(c_{\mathrm{Cl}}(\delta)\)-deviated. Each time a fresh resource block is introduced, \Cref{prop:distributed-state-prep} contributes a general deviation whose Pauli terms are supported on at most \(2c'_{\mathrm{ss}}\sqrt{\eps}\,n\le M\sqrt{\eps}\,n\) players. Each time \Cref{prop:adaptive} is applied to a Pauli term, the support of the resulting Pauli operator on the relevant output block enlarges by at most \(c'_{\mathrm{tr}}\sqrt{\eps}\,n\le M\sqrt{\eps}\,n\). Each time \Cref{prop:error-suppression} is applied, the affected block is reset to a \(T_{\mathrm{es}}\sqrt{\eps}\)-deviation, and hence to a general deviation whose Pauli terms are supported on at most \(M\sqrt{\eps}\,n\) players. Finally, the already-proved \(\mathrm{CZ}\) case outputs a \([T_{\mathrm{es}}\sqrt{\eps}]^2\)-deviation on its two output blocks, so in particular on any one output block the Pauli support size is at most \(M\sqrt{\eps}\,n\).

Thus each box in the chain contributes at most \(M\sqrt{\eps}\,n\) additional corrupted players to the relevant Pauli terms, and since there are at most \(L\) boxes in the chain, the total additional support accumulated along the chain is at most \(LM\sqrt{\eps}\,n = c'_{\mathrm{Cl}}\sqrt{\eps}\,n\). By \eqref{eq:clifford-eps-choice-2}, \(c'_{\mathrm{Cl}}\sqrt{\eps}\le \frac12 c_{\mathrm{Cl}}(\delta)\), so every active block entering any application of \Cref{prop:adaptive} or \Cref{prop:error-suppression} remains within the corresponding admissible radii \(c_{\mathrm{tr}}(\delta)\) and \(c_{\mathrm{es}}(\delta)\). Likewise, the input to the single recursive \(\mathrm{CZ}\) invocation in the Hadamard circuit is within the input radius \(c_{\mathrm{Cl}}(\delta)/2\) required in the already-proved \(\mathrm{CZ}\) case. Therefore all propositions invoked in the Hadamard construction are applicable throughout.

Repeating the same term-by-term expansion and linearity argument used in the \(\mathrm{CZ}\) case for each such chain, we conclude that the final output block of the Hadamard teleportation protocol is a \(c'_{\mathrm{Cl}}\sqrt{\eps}\)-deviation from \(\encq(H\ket{\psi})\).

Combining the two cases, we conclude that for every \(U\in\{H,\mathrm{CZ}\}\), with \(m=|\supp(U)|\), the protocol \Cref{algo:distributed-Clifford} is a
\[
(\mathcal{Q}^{\otimes m},\, c_{\mathrm{Cl}}(\delta)/m,\,
\mathcal{Q}^{\otimes m},\, c'_{\mathrm{Cl}}\sqrt{\eps},\, K_\eps)\text{-fault-tolerant}
\]
protocol for \(U\).
\end{proof}

\subsection{Non-Clifford Gates}
In this section, we show how to implement a logical CCZ gate using gate teleportation, which requires preparing logical CCZ state as resource states. The preparation of CCZ states relies on the CCZ state distillation protocol described in page 2 of~\cite{nguyen2025good}.

\begin{fact}[CCZ gate teleportation]\label{fact:CCZ-teleportation}
    The CCZ gate can be implemented using a resource state $\ket{\mathrm{CCZ}}:= \mathrm{CCZ}\ket{+++}$, CNOT gates, destructive Z-basis measurements, and adaptively controlled Pauli and CZ gates.
    \begin{center}
    \begin{tikzpicture}\node[scale=0.8]{
\begin{quantikz}
    \lstick{$\ket{+}$} & \ctrl{2} & \ctrl{3} & \qw & \qw & \gate[3, nwires={1}]{\mathrm{CCZ}(X^a X^b X^c) \mathrm{CCZ}} \qw & \rstick[3]{$\mathrm{CCZ}\ket{\psi}$} \qw \\
    \lstick{$\ket{+}$} & \control{} & \qw & \ctrl{3} & \qw & \qw & \\
    \lstick{$\ket{+}$} & \control{} & \qw & \qw & \ctrl{3} & \qw & \\
    \lstick[3]{$\ket{\psi}$} & \qw & \targ{} \qw & \qw &\qw & \meter{\hspace{0.6cm}  \text{\scriptsize Z}} \vcw{-1} \vbox{\hrule\hrule} \\
    & \qw & \qw & \targ{} &  & \meter{\hspace{0.6cm}  \text{\scriptsize Z}} \vcw{-1} \\
     & \qw & & \qw & \targ{} & \meter{\hspace{0.6cm}  \text{\scriptsize Z}} \vcw{-1} 
\end{quantikz}
};
\end{tikzpicture}
\end{center}
\end{fact}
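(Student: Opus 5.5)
We need to prove the CCZ gate-teleportation identity: with resource $\ket{\mathrm{CCZ}}=\mathrm{CCZ}\ket{+++}$, applying CNOTs from the resource qubits onto the data qubits and measuring the data qubits in the $Z$ basis (outcomes $a,b,c$) leaves $\mathrm{CCZ}(X^aX^bX^c)\mathrm{CCZ}\ket{\psi}$ up to the claimed correction. The correction $\mathrm{CCZ}(X^aX^bX^c)\mathrm{CCZ}$ is a product of Pauli and CZ gates.

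The plan is to first analyze the analogous circuit with $\ket{+++}$ in place of $\ket{\mathrm{CCZ}}$, and then commute $\mathrm{CCZ}$ through.

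\textbf{Step 1 (identity teleportation).} Write $\ket{\psi}=\sum_{y\in\{0,1\}^3}\psi_y\ket{y}$ and take the resource $\ket{+++}=2^{-3/2}\sum_x\ket{x}$. CNOT from resource qubit $i$ onto data qubit $i$ gives
\[
2^{-3/2}\sum_{x,y}\psi_y\ket{x}\ket{y\oplus x}.
\]
Measuring the data register with outcome $s=(a,b,c)$ forces $x=y\oplus s$. The resource register is then left in
\[
\sum_y\psi_y\ket{y\oplus s}=X^aX^bX^c\ket{\psi},
\]
and each outcome occurs with probability $1/8$.

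\textbf{Step 2 (commuting in CCZ).} The CCZ on the resource register is diagonal, so it commutes with the controls of the CNOTs. It therefore commutes past the CNOTs and the $Z$-measurements on the other register. Hence the true circuit outputs
\[
\mathrm{CCZ}\,X^{s}\ket{\psi}=\bigl(\mathrm{CCZ}\,X^s\,\mathrm{CCZ}\bigr)\,\mathrm{CCZ}\ket{\psi}.
\]
The correction $\mathrm{CCZ}\,X^s\,\mathrm{CCZ}$ is Hermitian, unitary, and self-inverse, since $(\mathrm{CCZ}X^s\mathrm{CCZ})^2=I$. Applying it therefore yields $\mathrm{CCZ}\ket{\psi}$.

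\textbf{Step 3 (the correction is Pauli and CZ).} Use conjugation of a single $X$:
\[
\mathrm{CCZ}\,X_1\,\mathrm{CCZ}=X_1\,\mathrm{CZ}_{23},
\]
and symmetrically for qubits 2 and 3. Expand $X^s$ as a product of single $X_i$ and insert $\mathrm{CCZ}\cdot\mathrm{CCZ}=I$ between factors. This gives a product of terms $X_i\mathrm{CZ}_{jk}$, which can be reordered into a Pauli string times CZs and $Z$'s using $X\,\mathrm{CZ}=\mathrm{CZ}\,X\,Z$. The correction is therefore adaptively controlled Pauli and CZ gates, with the adaptivity depending only on $s$.

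The only mildly delicate point is the bookkeeping in Step 3 and the orientation of the CNOTs. The data qubits must be the targets, as in the figure, so that the $Z$-measurements commute with the diagonal CCZ on the controls; nothing else presents an obstacle.
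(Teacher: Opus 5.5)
Your proof is correct and follows the same standard gate-teleportation reasoning the paper relies on. The paper states this fact without proof, pointing only to the identity $GX^s\ket{\psi}=(GX^sG^\dagger)G\ket{\psi}$ in the section introduction. Your explicit computation fills that in: teleporting through $\ket{+++}$ leaves $X^s\ket{\psi}$, the diagonal $\mathrm{CCZ}$ commutes past the CNOT controls, and $\mathrm{CCZ}\,X_i\,\mathrm{CCZ}=X_i\,\mathrm{CZ}_{jk}$ shows the correction is a product of Pauli and CZ gates.
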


\begin{lemma}[Magic state distillation~\cite{nguyen2025good}] \label{lem:magic-distill} There exist a constant $\delta_\mathrm{CCZ} >0$ and infinitely and geometrically dense sequence of integers $n_\mathrm{CCZ}$ such that the following holds.
    There exists a distillation protocol that takes as input a register that is $\delta_\mathrm{CCZ}n_\mathrm{CCZ}$-deviated (not necessarily Pauli deviated) from $\ket{\mathrm{CCZ}}^{\otimes n_\mathrm{CCZ}}$ and  outputs the state $\ket{\mathrm{CCZ}}$. Furthermore, the protocol only uses $\poly(n_\mathrm{CCZ})$ many CNOT gates, destructive measurements in computational/Hadamard basis, and adaptive Pauli and CZ gates, with the adaptivity following a $\mathrm{poly}(n_\mathrm{CCZ})$-sized auxiliary classical computation on the measurement outcomes.
\end{lemma}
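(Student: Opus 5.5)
The plan is to instantiate the distillation scheme sketched on page~2 of~\cite{nguyen2025good}, using an asymptotically good family of CSS codes $\mathcal{C}_N=[[N,K,D]]$ with $K,D=\Theta(N)$ that supports a \emph{transversal} logical CCZ: applying $\mathrm{CCZ}^{\otimes N}$ across three blocks of $\mathcal{C}_N$ implements $\overline{\mathrm{CCZ}}^{\otimes K}$, possibly up to a fixed logical Clifford-free correction, which we fold into the decoding step. We take $n_\mathrm{CCZ}:=N$ for each member of the family (or $3N$ if the code is used three-blockwise with one resource per physical triple; the bookkeeping is identical). Since the code lengths form an infinite family with bounded consecutive ratios, or can be padded to one, the sequence $\{n_\mathrm{CCZ}\}$ is infinite and geometrically dense. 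The protocol has three stages.
\begin{itemize}
\item \emph{Encoding.} Prepare three copies of the encoded stabilizer state $\overline{\ket{+}}^{\otimes K}$ in $\mathcal{C}_N$. Because $\mathcal{C}_N$ is CSS, this needs only $\ket{+}$ and $\ket{0}$ ancillas, CNOTs, $Z$-basis measurements of the $Z$-checks, and adaptive Pauli $X$ corrections computed by a classical syndrome decoder.
\item \emph{Teleported transversal CCZ.} Apply $\mathrm{CCZ}^{\otimes N}$ by teleporting each physical CCZ using one of the $n_\mathrm{CCZ}$ input resource states, via \Cref{fact:CCZ-teleportation}. This uses only CNOTs, $Z$-measurements, and adaptive Pauli and CZ gates.
\item \emph{Decoding.} Measure the syndrome (again with CNOTs, ancillas and measurements), apply a bounded-distance Pauli correction, unencode one logical triple with a CNOT-only CSS unencoder, and output it. In the error-free case the output is exactly $\mathrm{CCZ}\ket{+++}=\ket{\mathrm{CCZ}}$.
\end{itemize}
All the classical side-processing is $\poly(N)$.

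For correctness, I would first reduce to Pauli deviations. The input is a $\delta_\mathrm{CCZ}n_\mathrm{CCZ}$-deviation of $\ket{\mathrm{CCZ}}^{\otimes n_\mathrm{CCZ}}$, which means it is a superposition, entangled with an auxiliary register, of terms $N_S\ket{\mathrm{CCZ}}^{\otimes n_\mathrm{CCZ}}$ with $N_S$ a Pauli on a set $S$ of at most $\delta_\mathrm{CCZ}n_\mathrm{CCZ}$ resource states. The whole protocol is a fixed linear map, and the final decoder is a bounded-distance decoder that outputs the same correct state on every term. So, by linearity as in \Cref{remark:pauli}, it suffices to treat a single Pauli-corrupted resource $N_S\ket{\mathrm{CCZ}}^{\otimes n}$.

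The key propagation step is then to check what a Pauli error on a teleportation resource does. Conjugating an $X$ or $Z$ on $\ket{\mathrm{CCZ}}$ back through CCZ gives a Pauli times CZ operator supported on the same three qubits. After teleportation with the honest adaptive corrections, the three output code qubits at position $j\in S$ therefore carry a (non-Pauli, but local) error acting only on position $j$ of each of the three blocks. Expanding this error in the Pauli basis, every term is a Pauli supported on the physical positions $S$ in each block. Since $|S|\le\delta_\mathrm{CCZ}n_\mathrm{CCZ}<D/2$ once $\delta_\mathrm{CCZ}$ is chosen smaller than the relative distance over $2$ (times $3$ if needed), the bounded-distance decoding of \Cref{claim:reversible-decoding-properties} maps every term to exactly $\overline{\mathrm{CCZ}}^{\otimes K}\overline{\ket{+}}^{\otimes 3K}$, leaving the syndrome and measurement records entangled in the ancillas. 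The unencoded output is thus exactly $\ket{\mathrm{CCZ}}$ tensored with junk. Summing over terms, the output register is exactly $\ket{\mathrm{CCZ}}$.

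The main obstacle I expect is this propagation step. Two things must be verified. First, the adaptive CZ corrections in \Cref{fact:CCZ-teleportation}, which are computed from measurement outcomes altered by the resource error, do not spread the error to positions outside $S$. Second, the syndrome extraction in the decoding stage is itself error-free, since it uses only ideal Clifford operations. I would verify the first point position by position, using that every operation in the teleportation of physical CCZ number $j$ touches only position $j$ of the three blocks and resource $j$. That locality gives the support bound regardless of what the corrections are.
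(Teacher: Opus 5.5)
The paper gives no proof here beyond citing \cite{nguyen2025good}. Your reconstruction of that protocol is the standard one: encode $\overline{\ket{+}}$ in three blocks, realize $\mathrm{CCZ}^{\otimes N}$ by teleporting each physical CCZ through one input resource via \Cref{fact:CCZ-teleportation}, then correct and unencode. Its correctness argument is sound. Linearity reduces you to Pauli terms. Teleportation number $j$ touches only position $j$ of the three blocks, so a corrupted resource leaves an error there that is outcome-dependent but local. Each resulting Pauli term is then corrected exactly once $\delta_{\mathrm{CCZ}}$ is below the decoding radius. Varying deviation sets across terms are harmless, since every term is decoded to the same logical state.

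The gap is the efficiency clause. The lemma requires the adaptivity to come from a $\poly(n_{\mathrm{CCZ}})$-size classical computation, and you only assert that the side-processing is polynomial. Your encoding stage is fine: it is noiseless, so a CNOT-only CSS encoder, or Gaussian elimination to find any $X$-Pauli with the measured syndrome, suffices, because $\overline{\ket{+}}^{\otimes K}$ is invariant under logical $X$. The final stage is different: it must turn a syndrome into a correction for an adversarial error of weight up to $\delta_{\mathrm{CCZ}} n_{\mathrm{CCZ}}$ in each block. The decoder you invoke through \Cref{claim:reversible-decoding-properties} is only existential, and the paper notes explicitly that it may be inefficient. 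Linear distance alone does not give a polynomial-time decoder from a linear number of errors. You therefore need to cite or prove that the specific family from \cite{nguyen2025good,golowich2025asymptotically} has an efficient decoder from some constant fraction $\rho N$ of errors. This is plausible for these AG-based constructions, via efficient unique decoding of the outer AG code and brute force on constant-size inner pieces. You should then take $\delta_{\mathrm{CCZ}}<\rho$, rather than merely below half the relative distance.

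A smaller point: the lemma's list of operations does not include fresh $\ket{0}$ and $\ket{+}$ ancillas, which your encoding and syndrome-extraction stages need. In the paper's MPC use these are supplied by \Cref{prop:distributed-state-prep}, so you should state that assumption explicitly.
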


In combination with~\Cref{algo:distributed-state-prep}, we obtain a protocol to prepare logical CCZ states.

\begin{mdframed}
\begin{protocol}[Distributed magic state preparation]\mbox{}\label{algo:distributed-magic-prep}\\
\textbf{Ideal operation:} Distributed preparation of a logical CCZ state $\ket{\mathrm{CCZ}}_\mathcal{Q}$\\
\textbf{Promise:} There exists $\Omega(n) \leq n_\mathrm{CCZ} \leq n/2$ such that the distillation algorithm~\Cref{lem:magic-distill} can be run.\\
\textbf{Output:} An $n$-qubit state jointly held by the $n$ players
\begin{enumerate}
\item \textbf{(Preparing input states)} For each input register required in the distillation algorithm~\Cref{lem:magic-distill}, each player $i$ prepares a qLTC codeword encoding the ideal state $\ket{\mathrm{CCZ}}_\mathcal{Q}$ of the input register, then distributes each codeword to all players. Now there are $n$ many code blocks being shared corresponding to this input register.
\item \textbf{(Filtering inputs)} Players perform the qLTC verification~\Cref{algo:vqss} on the distributed code blocks.

\item \textbf{(Logical distillation)} Using a pre-decided order and~\Cref{algo:adaptive-gates},~\Cref{algo:distributed-Clifford}, the players pick the first $n_\mathrm{CCZ}$ accepted code blocks to run the distillation algorithm in~\Cref{lem:stabilizer-distill}, at the logical level of the qLTC $\mathcal{Q}$ with these code blocks as input. (If there are not enough accepted code blocks, abort).
\end{enumerate}
\end{protocol}
\end{mdframed}

\begin{mdframed}
\begin{protocol}[Teleported CCZ gate]\mbox{}\label{algo:teleported-CCZ}\\
\textbf{Input:} A $3n$-qubit state $\ket{\Psi}$, a purported qLTC codeword $\mathrm{Enc}(\ket{\psi})$ jointly held by the $n$ players.\\
\textbf{Output:} An $3n$-qubit state $\Phi$, a purported qLTC codeword $\mathrm{Enc}(\mathrm{CCZ}\ket{\psi})$ jointly held by the $n$ players.
\begin{enumerate}
\setcounter{enumi}{0}
\item \textbf{(Preparing resource states)} Run~\Cref{algo:distributed-magic-prep} to prepare the required resource state $\ket{\mathrm{CCZ}}_\mathcal{Q}$.
\item \textbf{(Logical teleportation)} Using adaptive transversal gates (\Cref{algo:adaptive-gates}) and Clifford gates (\Cref{algo:distributed-Clifford}), run the teleportation circuit in~\Cref{fact:CCZ-teleportation} at the qLTC logical level using the input state and the resource states from step 1.
\end{enumerate}
\end{protocol}
\end{mdframed}

\begin{prop}[Teleported CCZ gate]
\label{prop:teleported-CCZ}
There exist $\eps^*_\mathrm{CCZ}= \mathrm{poly}(\delta)$, $c_\mathrm{CCZ}(\delta) = \poly(\delta)$ and constant $c'_\mathrm{CCZ}$ such that, for any $\eps \leq \eps^*_\mathrm{CCZ}$,~\Cref{algo:teleported-CCZ} is a $(\mathcal{Q}^{\otimes 3},c_\mathrm{CCZ}/10, \mathcal{Q}^{\otimes 3}, c'_\mathrm{CCZ}\sqrt{\eps}, K_{\eps})$-fault-tolerant protocol for the gate $\mathrm{CCZ}$.
\end{prop}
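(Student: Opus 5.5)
The proof runs in two stages, mirroring \Cref{prop:distributed-state-prep} and \Cref{prop:distributed-Clifford}. First we show that \Cref{algo:distributed-magic-prep} is a $(\mathcal{Q}^{\otimes 3},c'_{\mathrm{mp}}\sqrt{\eps},K_\eps)$-fault-tolerant protocol for preparing $\ket{\mathrm{CCZ}}$, with $c'_{\mathrm{mp}}$ an absolute constant. Then we run the teleportation circuit of \Cref{fact:CCZ-teleportation} box by box and track Pauli supports along chains of constant length.

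\textbf{Magic state preparation.} For preparation we reuse the argument of \Cref{prop:distributed-state-prep} almost verbatim. Call a player bad if $\deg_F(v)>\sqrt{\eps}\,n$; there are fewer than $\sqrt{\eps}\,n$ bad players. Each good preparer's three blocks (one input register of the distillation) are $2\sqrt{\eps}$-deviated from $\encq^{\otimes 3}(\ket{\mathrm{CCZ}})$. We run \Cref{prop:vqss} on each block with $c_0\ge 2$. Its completeness gives that every good column is accepted, so there is no abort when $n_{\mathrm{CCZ}}\le n/2$. Its soundness gives that every accepted block is a $D_{\mathrm{ver}}(\delta)\sqrt{\eps}$-deviation of $\encq(\ket{\phi_t})$ for some logical state $\ket{\phi_t}$. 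Because all players outside $C'_{\mathrm{ver}}$ agree on which columns were accepted, the selected logical input is a $B$-deviation of $\ket{\mathrm{CCZ}}^{\otimes n_{\mathrm{CCZ}}}$ with $|B|<\sqrt{\eps}\,n\le\delta_{\mathrm{CCZ}}n_{\mathrm{CCZ}}$, provided $\eps\le(\gamma\delta_{\mathrm{CCZ}})^2$. \Cref{lem:magic-distill} then says the ideal logical distillation circuit outputs exactly $\ket{\mathrm{CCZ}}$.

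We simulate that circuit at the logical level by induction on its gates. The invariant is that before each logical step every active block is a $T_{\mathrm{es}}\sqrt{\eps}$-deviation of its ideal counterpart. The circuit uses only CNOTs, measurements, adaptive Paulis and adaptive CZs. CNOTs, measurements and Pauli feedback are handled by \Cref{prop:adaptive} followed by \Cref{prop:error-suppression}.

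\textbf{Adaptive CZ corrections.} These are the one genuinely new ingredient. \Cref{prop:adaptive} needs the feedback gate to be transversal, and CZ is not. We implement each adaptive CZ as an adaptive \emph{choice} of whether to run \Cref{algo:distributed-Clifford}. Concretely, each player decodes the measured logical bits as in step 2 of \Cref{algo:adaptive-gates}. The Clifford gadget is then executed controlled on that player's locally computed bit: when the bit is $0$ the teleported CZ is replaced by the identity, e.g.\ by teleporting with a $\ket{++}$ resource instead of $\ket{\mathrm{CZ}}$, which uses the same communication pattern.

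Honest players outside $C\cup C'$ all compute the same bit $f(x)$. The branchwise analysis of \Cref{prop:adaptive} therefore goes through, with \Cref{prop:distributed-Clifford} in place of the transversal step. Each term contributes at most $c'_{\mathrm{Cl}}\sqrt{\eps}\,n$ new support, and the step ends with error suppression. We expect this to be the main obstacle. The risk is that bad players whose control bit differs could make the gadget inconsistent, so we will check that their effect stays confined to their own rows. This is handled exactly like the non-Pauli error $E_x$ in the proof of \Cref{prop:adaptive}.

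\textbf{Teleportation and parameters.} For the teleportation itself, the circuit of \Cref{fact:CCZ-teleportation} has constant size. It consists of one magic preparation, three logical CNOTs (transversal, via \Cref{prop:adaptive} with $\ell=0$), three Z-measurements, and the adaptive correction $\mathrm{CCZ}(X^aX^bX^c)\mathrm{CCZ}$, which is a product of Paulis and CZs. Let $L$ be the maximal chain length of boxes in this circuit and $M=\max\{c'_{\mathrm{mp}},c'_{\mathrm{tr}},c'_{\mathrm{Cl}},T_{\mathrm{es}}\}$, and set $c'_{\mathrm{CCZ}}=LM$.

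We take $c_{\mathrm{CCZ}}$ to be a small multiple of $\min\{c_{\mathrm{tr}},c_{\mathrm{es}},c_{\mathrm{Cl}}\}$, and $\eps^*_{\mathrm{CCZ}}\le\min$ of all the earlier thresholds, with $c'_{\mathrm{CCZ}}\sqrt{\eps^*_{\mathrm{CCZ}}}\le c_{\mathrm{CCZ}}/20$. With these choices every intermediate Pauli term stays inside the admissible radii. Each branch $(a,b,c)$ ideally yields $\encq(\mathrm{CCZ}\ket{\psi})$ by the teleportation identity, and linearity (\Cref{remark:pauli}) together with \Cref{def:seq-composability} and \Cref{lem:parallel-composability} gives the final $[c'_{\mathrm{CCZ}}\sqrt{\eps}]^3$-deviation. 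All thresholds are $\poly(\delta)$ since every ingredient's is.
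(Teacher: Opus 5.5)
Your overall route is the paper's. Its proof is just the remark that the argument for \Cref{prop:distributed-Clifford} carries over, with the adaptive CZ gates (in the correction $\mathrm{CCZ}(X^aX^bX^c)\mathrm{CCZ}$ and inside the distillation of \Cref{lem:magic-distill}) implemented by \Cref{algo:distributed-Clifford}. Your magic-state paragraph is the transcription of \Cref{prop:distributed-state-prep} that this implicitly uses. The gap is in the step you single out yourself: the justification you give for the adaptive CZ does not work.

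In \Cref{prop:adaptive} the non-Pauli error $E_x$ stays on the rows $C\cup C'$ only because the feedback gate is transversal. A player that decoded the wrong bit therefore only applies a wrong gate to its own shares. \Cref{algo:distributed-Clifford}, by contrast, is interactive:
\begin{itemize}
\item in VQSS, check outcomes are computed by assigned players and redistributed;
\item there is a Byzantine-agreement step;
\item in error suppression, gates are applied by pre-assigned players $i_g$ to qubits sent to them.
\end{itemize}
So a player executing the wrong branch behaves as a Byzantine participant and corrupts other players' shares. Moreover, preparing $\ket{++}$ uses two $\ket{+}$-distillation instances rather than one $\ket{\mathrm{CZ}}$ instance, so the two branches do not even share a schedule.

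Nor can these players be absorbed into the $K_\eps$ hypothesis of \Cref{prop:distributed-Clifford}. By \Cref{def:edge-pauli-fault}, a fault in a player's register makes all $n-1$ of its edges faulty. Even if only players with more than $\delta n/2$ faulty incident edges can decode wrongly (at most $2\eps n/\delta$ of them), this costs up to $2\eps n^2/\delta$ edges. That exceeds the budget $\eps n^2/2$ by a factor of order $1/\delta$, which is not a constant. So the claimed increment of $c'_{\mathrm{Cl}}\sqrt{\eps}\,n$ per term does not follow, and with it the constant $c'_{\mathrm{CCZ}}$.

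Either of two repairs closes this.

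\textbf{Charge the disagreeing players as fully corrupted.} Branch by branch, their deviation has a Pauli expansion supported on their registers and incident edges. So \Cref{prop:distributed-Clifford} applies at an effective parameter $\eps'=O(\eps/\delta)$ (this needs $\eps'\le\eps^*_{\mathrm{Cl}}$) and gives an $O(\sqrt{\eps/\delta})$-deviation. Then run \Cref{algo:error-correction} once more, outside the control, where every player follows the protocol. By \Cref{prop:error-suppression} with the true fault set, this restores a $T_{\mathrm{es}}\sqrt{\eps}$-deviation, provided $O(\sqrt{\eps/\delta})\le c_{\mathrm{es}}(\delta)$. That is still a $\poly(\delta)$ condition on $\eps$.

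\textbf{Avoid adaptively invoking a non-transversal gadget.} This route is simpler.
\begin{itemize}
\item Prepare both $\ket{\mathrm{CZ}}$ and $\ket{++}$ unconditionally with \Cref{algo:distributed-state-prep}.
\item Choose between them with a classically controlled logical SWAP. This is transversal, so \Cref{prop:adaptive} applies and wrong-bit players really are row-local.
\item Teleport, letting the Pauli correction depend on the control bit as well as the fresh outcomes. This needs a routine extension of \Cref{algo:adaptive-gates} allowing $f$ to use previously decoded bits.
\end{itemize}

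With either repair, your remaining bookkeeping goes through, including $c'_{\mathrm{CCZ}}=LM$ and the choice of $\eps^*_{\mathrm{CCZ}}$.
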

\begin{proof}
    Similar to that of~\Cref{prop:distributed-Clifford}. Note that the (adaptive) CZ gates inside the CCZ protocol can be done using~\Cref{algo:distributed-Clifford}.
\end{proof}

\subsection{Combining Everything}
We now combine the above protocols for state preparation and gate teleportation to obtain a distributed protocol for universal quantum computation (with input).

\paragraph{Gate set.} We will assume the compution is composed of gates from $\mathcal{G}= \{\mathrm{X}, \mathrm{Y}, \mathrm{Z}, \mathrm{CNOT}, \mathrm{H}, \mathrm{CZ}, \mathrm{CCZ}\}$, as well as the $\mathrm{Init}$ operation which initializes the $\ket{0}$ state. Note that this set is universal, but not minimal, and we choose it for simplicity of presentation.

Since in later sections, we will consider MPC protocols whose input state may be arbitrary, it will be convenient to consider a modification of the VQSS~\Cref{algo:vqss} such that when the input state is too corrupted and thus rejected, the compution can still proceed in a well-defined manner and outputs some logical codeword. This is done by simply performing the state preparation procedure of a fixed logical state, e.g. $\encq(\ket{0})$ using \Cref{algo:distributed-state-prep} (which in turn uses~\Cref{algo:vqss} as a subroutine), when~\Cref{algo:vqss} rejects.

\begin{mdframed}
\begin{protocol}[VQSS with state replacement]\mbox{}\label{algo:vqss-state-replacement}\\
\textbf{Input:} An $n$-qubit pure state $\ket{\Psi}$, purported to be a codeword of $\mathcal{Q}$, jointly held by the $n$ players. \\
\textbf{Output:} The original codeblock register, and another $n$-qubit register that is ideally close to $\ket{0^n}$ (accept) or $\ket{1^n}$ (reject), both registers jointly held by the players.

\begin{enumerate}
\setcounter{enumi}{0}
\item \textbf{(Verification)} Run~\Cref{algo:vqss} to verify the state.
\item \textbf{(Replacement)} If the verification in step 1 rejects, run~\Cref{algo:distributed-state-prep} to prepare $\encq(\ket{0})$ and replace the input state with this state.
\end{enumerate}
\end{protocol}
\end{mdframed}

The guarantees of this modified VQSS protocol follow directly from the guarantees of~\Cref{algo:vqss} and~\Cref{algo:distributed-state-prep} so we will not spell out the guarantees here. Instead, we directly use this modified protocol as a subroutine in the proof of the main theorem below, which includes the VQSS with replacement protocol as a special case (by taking the computation $U$ below to be identity).

\begin{theorem}[Distributed universal computation]\label{thm:distributed-FTQC}
There exist \(\eps^*_\mathrm{comp}=\poly(\delta)\) and an absolute constant \(c_\mathrm{comp}>0\) such that the following holds. For any quantum circuit \(U\) mapping from \(k\) to \(k'\) qubits using \(T\) operations from the gate set \(\{\mathcal{G},\mathrm{Init}\}\), and for any \(\eps \leq \eps^*_\mathrm{comp}\), we can efficiently construct an MPC protocol \(\mathcal P\) that is a \((\mathcal{Q}^{\otimes k}, c_\mathrm{comp}\sqrt{\eps}, \mathcal{Q}^{\otimes k'}, c_\mathrm{comp}\sqrt{\eps}, K_{\eps})\)-fault-tolerant and friendly protocol for \(U\).
\end{theorem}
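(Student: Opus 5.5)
The plan is to build \(\mathcal P\) gate by gate and use sequential and parallel composability (\Cref{def:seq-composability}, \Cref{lem:parallel-composability}) to chain the gadgets already constructed. Concretely, \(\mathcal P\) first runs \Cref{algo:vqss-state-replacement} on each of the \(k\) input blocks, followed by \Cref{algo:error-correction}. It then processes the \(T\) operations of \(U\) in order. Each operation is replaced by its gadget:
\begin{itemize}
\item Paulis and CNOT by \Cref{algo:adaptive-gates} with \(\ell=0\), via \Cref{remark:transversal};
\item \(H\) and CZ by \Cref{algo:distributed-Clifford};
\item CCZ by \Cref{algo:teleported-CCZ};
\item \(\mathrm{Init}\) by \Cref{algo:distributed-state-prep} for \(\ket{0}\).
\end{itemize}
Every gadget is followed by one round of \Cref{algo:error-correction} on each block it touches. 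Finally, blocks not among the \(k'\) outputs are discarded.

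Set \(c_\mathrm{comp}:=\max\{T_\mathrm{es},c'_\mathrm{ss},c'_\mathrm{tr},c'_\mathrm{Cl},c'_\mathrm{CCZ}\}\). Choose \(\eps^*_\mathrm{comp}\) to be the minimum of all the component thresholds \(\eps^*_\mathrm{ver},\eps^*_\mathrm{es},\eps^*_\mathrm{tr},\eps^*_\mathrm{ss},\eps^*_\mathrm{Cl},\eps^*_\mathrm{CCZ}\). Further shrink it so that \(3c_\mathrm{comp}\sqrt{\eps}\) is at most \(\min\{c_\mathrm{tr}(\delta)/3,\,c_\mathrm{es}(\delta),\,c_\mathrm{Cl}(\delta)/2,\,c_\mathrm{CCZ}(\delta)/10\}\). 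All of these are \(\poly(\delta)\), so \(\eps^*_\mathrm{comp}=\poly(\delta)\).

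The invariant I will maintain is that before each gadget, every live block is a \(T_\mathrm{es}\sqrt{\eps}\)-deviation of its ideal encoded state. A gate on at most three blocks then receives a Pauli deviation of size at most \(3T_\mathrm{es}\sqrt{\eps}\) per term, which is within that gadget's input radius. By \Cref{remark:pauli} it suffices to check Pauli deviations and Pauli faults. The gadget's proposition then outputs a \(c_\mathrm{comp}\sqrt{\eps}\)-deviation, which is inside \(c_\mathrm{es}(\delta)\). The following error suppression (\Cref{prop:error-suppression}) restores the invariant. Untouched blocks are carried along as reference registers, which is what parallel composability permits. Because suppression resets the error weight after every gate, the bound does not grow with \(T\). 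The same \(K_\eps\) fault set is charged to each sequential gadget, and composition with \(F\cup F=F\) keeps it within \(\mu(F)\le\eps\).

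The input stage needs separate handling.
\begin{itemize}
\item If the input is within \(c_\mathrm{comp}\sqrt{\eps}\), completeness of \Cref{prop:vqss} (with \(c_0=c_\mathrm{comp}\)) gives acceptance on all honest players, and suppression brings the block within \(T_\mathrm{es}\sqrt{\eps}\). This yields the fault-tolerance claim, since the final output is then a \(c_\mathrm{comp}\sqrt{\eps}\)-deviation of \(\encq(U\ket{\psi})\).
\item For friendliness the input may be arbitrarily deviated. For each input block, either VQSS rejects and replacement installs a \(c'_\mathrm{ss}\sqrt{\eps}\)-deviation of \(\encq(\ket{0})\), or VQSS accepts. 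In the accepting case, soundness of \Cref{prop:vqss} gives a minimal-weight Pauli of size \(O(\sqrt{\eps}/\delta)\) around some logical state \(\ket{\phi}\), plus damage on \(C'_\mathrm{ver}\). I need this to fall within \(c_\mathrm{es}(\delta)\) so that suppression applies, and I will shrink \(\eps^*\) to ensure it.
\end{itemize}
In either case each block becomes close to an encoding of some logical state, and the rest of the computation proceeds as before. The output is therefore a deviation of \(\encq(U\ket{\xi})\) for some \(\ket{\xi}\).

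The main obstacle is the friendliness argument. The accept/reject outcome and the resulting logical state \(\ket{\phi}\) can differ across Pauli branches of the input, and the repetition register only agrees outside \(C'_\mathrm{ver}\). I would handle this by fixing the minimal-weight decomposition \(P\,\encq(\ket{\phi})\) of the pure input block, which determines a single branch. I would then note that \Cref{prop:vqss} puts the output in one of the two canonical forms with error confined to \(C'_\mathrm{ver}\). The controlled replacement is then effectively classical on the honest players, so the resulting block is a deviation of a single encoded state \(\encq(\ket{\xi_j})\), and the tensor product of these gives \(\ket{\xi}\). If the branches do not decouple cleanly, I would fall back on letting \(\ket{\xi}\) absorb the superposition, using linearity of the deviation definition.
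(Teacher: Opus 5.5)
Your proposal is correct and follows the paper's proof essentially step for step. You sanitize each input block with VQSS-with-replacement followed by error suppression, using VQSS completeness with $c_0=c_{\mathrm{comp}}$ for fault tolerance, and for friendliness the accept/reject dichotomy of \Cref{prop:vqss} together with the soundness contrapositive that bounds the minimal-weight Pauli in the accept case. You then run gate-by-gate gadgets chained by sequential composability on the same fault set, exactly as the paper does.

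The only difference is cosmetic: you add an extra suppression round after each gadget to hold the invariant at $T_{\mathrm{es}}\sqrt{\eps}$. The paper skips this by taking $c_{\mathrm{comp}}$ at least every gadget's output constant and shrinking $\eps^*_{\mathrm{comp}}$ so that $c_{\mathrm{comp}}\sqrt{\eps}$ lies inside every gadget's input radius, so each gadget's output already meets the next gadget's input promise.
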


\begin{proof}
Let \(w\) and \(\kappa\) be the constants from \Cref{prop:vqss}. Let \(c'_\mathrm{ver}\) and \(\eps^*_\mathrm{ver}\) be the constants from \Cref{prop:vqss}, let \(T_{\mathrm{es}}\), \(c_{\mathrm{es}}(\delta)\), and \(\eps^*_{\mathrm{es}}\) be as in \Cref{prop:error-suppression}, let \(c'_\mathrm{ss}\) and \(\eps^*_{\mathrm{ss}}\) be as in \Cref{prop:distributed-state-prep}, let \(c_{\mathrm{tr}}(\delta),c'_{\mathrm{tr}},\eps^*_{\mathrm{tr}}\) be as in \Cref{prop:adaptive}, let \(c_{\mathrm{Cl}}(\delta),c'_{\mathrm{Cl}},\eps^*_{\mathrm{Cl}}\) be as in \Cref{prop:distributed-Clifford}, and let \(c_{\mathrm{CCZ}}(\delta),c'_{\mathrm{CCZ}},\eps^*_{\mathrm{CCZ}}\) be as in \Cref{prop:teleported-CCZ}.

Choose an absolute constant
\[
c_{\mathrm{comp}}
\ge
\max\{1,\,T_{\mathrm{es}},\,c'_{\mathrm{ss}},\,c'_{\mathrm{tr}},\,c'_{\mathrm{Cl}},\,c'_{\mathrm{CCZ}}\}.
\]
Next choose \(\eps^*_{\mathrm{comp}}=\poly(\delta)\) small enough that, for every \(\eps\le \eps^*_{\mathrm{comp}}\), all of the following hold:
\begin{enumerate}
    \item \(\eps\le \min\{\eps^*_{\mathrm{ver}},\eps^*_{\mathrm{ss}},\eps^*_{\mathrm{es}},\eps^*_{\mathrm{tr}},\eps^*_{\mathrm{Cl}},\eps^*_{\mathrm{CCZ}}\}\);
    \item \((c_{\mathrm{comp}}+c'_{\mathrm{ver}})\sqrt{\eps}\le c_{\mathrm{es}}(\delta)\);
    \item \(c'_{\mathrm{ss}}\sqrt{\eps}\le c_{\mathrm{es}}(\delta)\);
    \item \(\bigl(\frac{w c_{\mathrm{comp}}+4\kappa}{\delta}+c'_{\mathrm{ver}}\bigr)\sqrt{\eps}\le c_{\mathrm{es}}(\delta)\);
    \item \(c_{\mathrm{comp}}\sqrt{\eps}\le \min\{c_{\mathrm{tr}}(\delta)/2,\ c_{\mathrm{Cl}}(\delta)/2,\ c_{\mathrm{CCZ}}(\delta)/10\}\).
\end{enumerate}
This is possible because \(c_{\mathrm{comp}}\), \(w\), \(\kappa\), \(c'_{\mathrm{ver}}\), \(T_{\mathrm{es}}\), \(c'_{\mathrm{ss}}\), \(c'_{\mathrm{tr}}\), \(c'_{\mathrm{Cl}}\), and \(c'_{\mathrm{CCZ}}\) are absolute constants, while \(c_{\mathrm{es}}(\delta)\), \(c_{\mathrm{tr}}(\delta)\), \(c_{\mathrm{Cl}}(\delta)\), and \(c_{\mathrm{CCZ}}(\delta)\) are all \(\poly(\delta)\). In particular, item 4 only requires \(\sqrt{\eps}\le \delta\,c_{\mathrm{es}}(\delta)/O(1)\), which is still a \(\poly(\delta)\) condition.

We now build the protocol \(\mathcal P\).

\paragraph{Input sanitization.}
For a single input codeblock, define a gadget \(\mathcal S\) as follows:
\begin{enumerate}
    \item run the VQSS-with-replacement protocol \Cref{algo:vqss-state-replacement}, where in the invocation of~\Cref{algo:vqss} the threshold in step 2 is set to \((w c_{\mathrm{comp}}+2\kappa)\sqrt{\eps}\,m\);
    \item then run the repeated error-suppression protocol \Cref{algo:error-correction} on the resulting codeblock.
\end{enumerate}
We claim that \(\mathcal S\) is a friendly \((\mathcal Q,c_{\mathrm{comp}}\sqrt{\eps},\mathcal Q,c_{\mathrm{comp}}\sqrt{\eps},K_\eps)\)-fault-tolerant gadget for the identity on one logical qubit, and that the same statement remains valid when arbitrary untouched reference registers are adjoined.

First consider the ordinary fault-tolerance guarantee. Let the input block be a Pauli \(c_{\mathrm{comp}}\sqrt{\eps}\)-deviation of \(\encq(\ket{\phi})\). Apply \Cref{prop:vqss} with \(c_0=c_{\mathrm{comp}}\) (note that this also governs the setting of the threshold in \Cref{prop:vqss}). The completeness part of \Cref{prop:vqss} then implies that the verification stage accepts, and the data block after verification is a Pauli \((c_{\mathrm{comp}}+c'_{\mathrm{ver}})\sqrt{\eps}\)-deviation of \(\encq(\ket{\phi})\). By item 2 above, this lies within the input radius of \Cref{prop:error-suppression}, so after the error-suppression stage the output block is a \(T_{\mathrm{es}}\sqrt{\eps}\)-deviation of \(\encq(\ket{\phi})\), hence a \(c_{\mathrm{comp}}\sqrt{\eps}\)-deviation.

Next we prove friendliness. Let the input block be an arbitrary Pauli-deviation \(P\,\encq(\ket{\phi})\), with \(P\) chosen minimal-weight. Again applying \Cref{prop:vqss} with \(c_0=c_{\mathrm{comp}}\), the verification stage outputs either the accept form or the reject form. If it outputs the reject form, then the replacement stage prepares a fresh encoded \(\ket{0}\), and by \Cref{prop:distributed-state-prep} the resulting block is a \(c'_{\mathrm{ss}}\sqrt{\eps}\)-deviation of \(\encq(\ket{0})\). By item 3 above, the subsequent error-suppression stage is applicable, and the final block becomes a \(T_{\mathrm{es}}\sqrt{\eps}\)-deviation of \(\encq(\ket{0})\). If instead the verification stage outputs the accept form, then \(|\supp(P)|<\frac{w c_{\mathrm{comp}}+4\kappa}{\delta}\sqrt{\eps}\,n\), since the soundness part of \Cref{prop:vqss} would otherwise force rejection. In that case, the data block after verification is a Pauli \(\bigl(\frac{w c_{\mathrm{comp}}+4\kappa}{\delta}+c'_{\mathrm{ver}}\bigr)\sqrt{\eps}\)-deviation of \(\encq(\ket{\phi})\). By item 4 above, the error-suppression stage is again applicable, and the final block becomes a \(T_{\mathrm{es}}\sqrt{\eps}\)-deviation of \(\encq(\ket{\phi})\). Therefore, in all cases the output of \(\mathcal S\) is a \(c_{\mathrm{comp}}\sqrt{\eps}\)-deviation of \(\encq(\ket{\xi})\) for some one-qubit state \(\ket{\xi}\in\{\ket{\phi},\ket{0}\}\). This proves friendliness of \(\mathcal S\).

Now apply \(\mathcal S\) successively to the \(k\) input codeblocks, always treating the untouched codeblocks and all accumulated workspace as untouched reference registers. Since each invocation has the same input and output threshold \(c_{\mathrm{comp}}\sqrt{\eps}\), repeated application of sequential composability (\Cref{def:seq-composability}) yields an input-sanitization stage \(\mathcal S_{\mathrm{in}}\) that is a \((\mathcal Q^{\otimes k},c_{\mathrm{comp}}\sqrt{\eps},\mathcal Q^{\otimes k},c_{\mathrm{comp}}\sqrt{\eps},K_\eps)\)-fault-tolerant gadget for the identity on the full \(k\)-qubit input register. The same inductive argument also proves friendliness: on an arbitrary Pauli-deviated encoded \(k\)-qubit input, \(\mathcal S_{\mathrm{in}}\) outputs a \(c_{\mathrm{comp}}\sqrt{\eps}\)-deviation of \(\Enc_{\mathcal Q^{\otimes k}}(\ket{\xi})\) for some \(k\)-qubit state \(\ket{\xi}\).

\paragraph{Elementary computation gadgets.}
Write the ideal circuit as \(U=g_T\circ \cdots \circ g_1\), where each \(g_t\) is one of \(\mathrm{Init},X,Y,Z,\mathrm{CNOT},H,\mathrm{CZ},\mathrm{CCZ}\). For each \(t\), let \(q_t\) denote the number of logical qubits present after the first \(t\) ideal steps, so \(q_0=k\) and \(q_T=k'\).

For each gate occurrence \(g_t\), we define a whole-register gadget \(\widetilde g_t\) acting on the current collection of codeblocks, always regarding the untouched codeblocks and all workspace registers as untouched reference systems.

\begin{itemize}
    \item If \(g_t\in\{X,Y,Z,\mathrm{CNOT}\}\), we use the transversal-gate case of \Cref{remark:transversal} on the codeblocks touched by \(g_t\). By item 5 above, the current input deviation \(c_{\mathrm{comp}}\sqrt{\eps}\) is within its input radius. Its output is a \(c'_{\mathrm{tr}}\sqrt{\eps}\)-deviation, hence a \(c_{\mathrm{comp}}\sqrt{\eps}\)-deviation.

    \item If \(g_t\in\{H,\mathrm{CZ}\}\), we use \Cref{prop:distributed-Clifford} on the codeblocks touched by \(g_t\). Again item 5 guarantees that the current input deviation \(c_{\mathrm{comp}}\sqrt{\eps}\) is admissible. Its output is a \(c'_{\mathrm{Cl}}\sqrt{\eps}\)-deviation, hence a \(c_{\mathrm{comp}}\sqrt{\eps}\)-deviation.

    \item If \(g_t=\mathrm{CCZ}\), we use \Cref{prop:teleported-CCZ} on the three touched codeblocks. By item 5 the current input deviation \(c_{\mathrm{comp}}\sqrt{\eps}\) is within its input radius, and its output is a \(c'_{\mathrm{CCZ}}\sqrt{\eps}\)-deviation, hence a \(c_{\mathrm{comp}}\sqrt{\eps}\)-deviation.

    \item If \(g_t=\mathrm{Init}\), we append a fresh codeblock using \Cref{prop:distributed-state-prep}. The new block is a \(c'_{\mathrm{ss}}\sqrt{\eps}\)-deviation of \(\encq(\ket{0})\), hence a \(c_{\mathrm{comp}}\sqrt{\eps}\)-deviation.
\end{itemize}

Thus every \(\widetilde g_t\) is a \((\mathcal Q^{\otimes q_{t-1}},c_{\mathrm{comp}}\sqrt{\eps},\mathcal Q^{\otimes q_t},c_{\mathrm{comp}}\sqrt{\eps},K_\eps)\)-fault-tolerant gadget for the corresponding ideal logical operation \(g_t\).

\paragraph{Composition.}
Define the overall protocol by
\[
\mathcal P:=\widetilde g_T\circ \cdots \circ \widetilde g_1\circ \mathcal S_{\mathrm{in}}.
\]
All consecutive gadgets have matching code families and matching threshold \(c_{\mathrm{comp}}\sqrt{\eps}\). Therefore repeated application of sequential composability, i.e. \Cref{def:seq-composability}, shows that \(\mathcal P\) is a \((\mathcal Q^{\otimes k},c_{\mathrm{comp}}\sqrt{\eps},\mathcal Q^{\otimes k'},c_{\mathrm{comp}}\sqrt{\eps},K_\eps)\)-fault-tolerant protocol for the ideal circuit \(U\).

\paragraph{Friendliness.}
Let the initial input be arbitrarily Pauli-deviated from an encoded \(k\)-qubit state. By friendliness of \(\mathcal S_{\mathrm{in}}\), after the input-sanitization stage the state is a \(c_{\mathrm{comp}}\sqrt{\eps}\)-deviation of \(\Enc_{\mathcal Q^{\otimes k}}(\ket{\xi})\) for some \(k\)-qubit state \(\ket{\xi}\). Since each subsequent gadget \(\widetilde g_t\) is fault-tolerant and its input promise is satisfied, repeated application of the same sequential-composition argument shows that the final output of \(\mathcal P\) is a \(c_{\mathrm{comp}}\sqrt{\eps}\)-deviation of \(\Enc_{\mathcal Q^{\otimes k'}}(U\ket{\xi})\). Hence \(\mathcal P\) is friendly.

\paragraph{Efficiency.}
Each primitive gadget used above has per-player work complexity \(\poly(n)\) by the cited propositions, and the total number of invoked gadgets is \(O(T+k)\). Therefore \(\mathcal P\) can be constructed efficiently from the circuit \(U\), and each player's total work is polynomial in \(n\) and \(T\).

This proves the theorem.
\end{proof}

\subsection{Converting to Repetition Code}
The following protocol will be used after MPC is concluded so that every participating player can recover the classical outcome of the computation.

\paragraph{Repetition-code notation.}
For a single-qubit state \(\ket{\psi}=\alpha\ket{0}+\beta\ket{1}\), define
\[
\Enc_{\mathrm{rep}}(\ket{\psi})
~:=~
\alpha\ket{0}^{\otimes n}+\beta\ket{1}^{\otimes n}.
\]
This is the \(n\)-qubit repetition encoding of \(\ket{\psi}\), with one output qubit intended for each player.

\begin{mdframed}
\begin{protocol}[Repetition-code readout]\mbox{}\label{algo:rep-code-output}\\
\textbf{Input:} A distributed \(\mathcal{Q}\)-codeblock \(\encq(\ket{\psi})\) jointly held by the \(n\) players.\\
\textbf{Ideal output:} The repetition-code encoding \(\Enc_{\mathrm{rep}}(\ket{\psi})\), with one qubit held by each player.
\begin{enumerate}
\item Let \(B_1\) denote the input block. The players run \Cref{algo:distributed-state-prep} independently \(n-1\) times to prepare fresh blocks \(B_2,\ldots,B_n\), each ideally equal to \(\encq(\ket{0})\).

\item For each \(j=2,\ldots,n\), do the following:
\begin{enumerate}
    \item Apply a logical CNOT from \(B_1\) to \(B_j\). Since no measurements are involved, this is the \(\ell=0\) special case of \Cref{algo:adaptive-gates}, namely transversal physical CNOT gates between the two code blocks.
    \item Run \Cref{algo:error-correction} on \(B_1\).
    \item Run \Cref{algo:error-correction} on \(B_j\).
\end{enumerate}

\item For each \(i\in[n]\), player \(i\) receives the block \(B_i\) and unencodes it to a single qubit.
\end{enumerate}
\end{protocol}
\end{mdframed}

\begin{lemma}\label{lem:rep-code}
Let \(\delta=\delta(n)\) be the parameter from \Cref{assumption:nice-qLTC}, so that the decoder for \(\mathcal{Q}\) corrects up to \(n\delta\) errors on one code block. Then there exist \(\beta^*_{\mathrm{out}}(\delta)=\poly(\delta)\) and \(\eps^*_{\mathrm{out}}(\delta)=\poly(\delta)\) such that, for any \(\beta\le \beta^*_{\mathrm{out}}(\delta)\), for any \(\eps\le \eps^*_{\mathrm{out}}(\delta)\), and for any set of faulty communication edges $F$ with $|F|\le \eps \frac{n^2}{2}$, the following holds.

Suppose the input state is a Pauli \(\beta n\)-deviation of \(\encq(\ket{\psi})\). Call a vertex \emph{good} if it is incident to at most \(\delta n/2\) faulty edges, and let
\[
W(F):=\{v\in [n]:\deg_F(v)>\delta n/2\}
\]
be the set of bad vertices. Then the output of \Cref{algo:rep-code-output} is deviated from \(\Enc_{\mathrm{rep}}(\ket{\psi})\) only on qubits held by players in \(W(F)\). 
This implies that \Cref{algo:rep-code-output} is a \((\mathcal{Q},\beta,\mathrm{Rep},2\eps/\delta,F)\)-fault-tolerant protocol for the identity operation, where \(\mathrm{Rep}\) is the repetition code.

Finally, each player is involved in at most \(\poly(n)\) qubits and gates during the protocol.
\end{lemma}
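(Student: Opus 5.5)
The plan is to follow the logical state through the three steps of \Cref{algo:rep-code-output} using the fault-tolerance statements already established, and then to analyze the final unencoding step separately. Only that last step produces a per-player conclusion, namely that deviations sit on $W(F)$. By \Cref{remark:pauli} and linearity, it suffices to treat a single Pauli deviation of $\encq(\ket{\psi})$ together with a single Pauli fault pattern $\mathsf F$ supported on $F$. Throughout, set $s:=\sqrt{\eps}$.

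\textbf{Step 1 (encoded fan-out).} By \Cref{prop:distributed-state-prep}, each fresh block $B_j$ is a $c'_{ss}\sqrt{\eps}$-deviation of $\encq(\ket 0)$. Choose $\beta^*_{\mathrm{out}}\le c_{\mathrm{tr}}(\delta)/2$ and $\eps^*_{\mathrm{out}}$ so that $c'_{ss}\sqrt{\eps}\le c_{\mathrm{tr}}(\delta)/2$. Then each pair $(B_1,B_j)$ meets the input radius of the transversal case of \Cref{remark:transversal}. After the CNOT, both blocks are $c'_{\mathrm{tr}}\sqrt\eps$-deviated. Shrinking $\eps^*_{\mathrm{out}}$ further so that $c'_{\mathrm{tr}}\sqrt\eps\le c_{\mathrm{es}}(\delta)$, \Cref{prop:error-suppression} returns both blocks to $T_{\mathrm{es}}\sqrt\eps$-deviations. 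Each loop iteration therefore maps the input threshold to itself, and sequential and parallel composability (\Cref{def:seq-composability}, \Cref{lem:parallel-composability}) let us induct over $j=2,\dots,n$. After Step 2 the joint state of $B_1,\dots,B_n$ is a $[T_{\mathrm{es}}\sqrt\eps]^n$-deviation of $\encq^{\otimes n}(\alpha\ket{0}^{\otimes n}+\beta\ket1^{\otimes n})$, which is exactly $\encq^{\otimes n}(\Enc_{\mathrm{rep}}(\ket\psi))$. All of the choices above are $\poly(\delta)$ constraints.

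\textbf{Step 3 (delivery and unencoding).} Block $B_i$ is sent to player $i$. Each share of $B_i$ is held by some player $v$ and travels along edge $(v,i)$. Suppose $i\notin W(F)$. The received word then carries at most $T_{\mathrm{es}}\sqrt\eps\, n$ errors from the deviation, plus at most $\delta n/2$ transmission errors on faulty edges incident to $i$. Taking $T_{\mathrm{es}}\sqrt\eps<\delta/2$, which is again $\poly(\delta)$, the total stays below the decoding radius $\delta n$. Player $i$ runs the efficient decoder of \Cref{assumption:nice-qLTC}, or equivalently the reversible decoding unitary of \Cref{claim:reversible-decoding-properties} with a bounded-distance decoder. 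This yields the correct logical qubit with the syndrome split off into workspace, and the syndrome depends only on the Pauli error, not on $\ket\psi$. Players in $W(F)$ may do anything, but whatever they do acts only on their own output qubits.

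Combining these, each Pauli term leaves $\Enc_{\mathrm{rep}}(\ket\psi)$ intact outside $W(F)$, tensored with a junk register, so the output is a $W(F)$-deviation of $\Enc_{\mathrm{rep}}(\ket\psi)$. The counting is $|W(F)|\cdot\delta n/2<\sum_v\deg_F(v)=2|F|\le\eps n^2$, which gives $|W(F)|\le 2\eps n/\delta$ and hence the stated $(\mathcal Q,\beta,\mathrm{Rep},2\eps/\delta,F)$-fault tolerance.

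\textbf{Main obstacle.} The delicate point is Step 3. Players outside $W(F)$ must each decode their own block correctly and coherently in every branch of the superposition over Pauli deviations. This requires the syndrome garbage to be independent of the logical value, so that it factors out as the auxiliary register in the definition of deviation. This is where \Cref{claim:reversible-decoding-properties} is used. A second issue is that transversal CNOT propagates $X$-errors from $B_1$ into every $B_j$. Because of this, I would apply error suppression to $B_1$ after each CNOT, as the protocol does, to keep $B_1$'s deviation at $T_{\mathrm{es}}\sqrt\eps$ rather than letting it accumulate over $n$ iterations. The per-player complexity bound follows because the protocol makes $O(n)$ calls to subroutines, each of which is $\poly(n)$ per player.
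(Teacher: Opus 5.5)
Your proposal is correct and follows essentially the same route as the paper: you induct over the $n-1$ CNOT rounds, with error suppression resetting $B_1$ and $B_j$ to $T_{\mathrm{es}}\sqrt{\eps}$-deviations each round, then decode at each player $i\notin W(F)$ with at most $T_{\mathrm{es}}\sqrt{\eps}\,n+\delta n/2<\delta n$ errors, and finish with the handshake count $|W(F)|\le 2\eps n/\delta$. The only difference is cosmetic: you treat the CNOT as a black box via \Cref{remark:transversal}, which needs the extra condition $T_{\mathrm{es}}\sqrt{\eps}\le c_{\mathrm{tr}}(\delta)/2$ for the rounds after the first (you should state this explicitly), whereas the paper just propagates Pauli errors through the local transversal CNOT and needs only $\beta n+t_{ss}\le c_{\mathrm{es}}(\delta)n$.
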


\begin{proof}
Let \(\eps^*_{ss}\) and \(c'_{ss}\) be as in \Cref{prop:distributed-state-prep}, and let \(\eps^*_{\mathrm{es}}\), \(c_{\mathrm{es}}(\delta)\), and \(T_{\mathrm{es}}\) be as in \Cref{prop:error-suppression}. Define $\beta^*_{\mathrm{out}}(\delta):=\frac{c_{\mathrm{es}}(\delta)}{4}.$
Choose \(\eps^*_{\mathrm{out}}(\delta)\) so that
\[
\eps^*_{\mathrm{out}}
~\le~
\min\!\left\{
\eps^*_{ss},
\eps^*_{\mathrm{es}},
\left(\frac{\beta^*_{\mathrm{out}}(\delta)}{c'_{ss}}\right)^2,
\left(\frac{\beta^*_{\mathrm{out}}(\delta)}{T_{\mathrm{es}}}\right)^2,
\left(\frac{\delta}{4T_{\mathrm{es}}}\right)^2
\right\}.
\]
This is possible with \(\beta^*_{\mathrm{out}}(\delta)=\poly(\delta)\) and \(\eps^*_{\mathrm{out}}(\delta)=\poly(\delta)\).

Now fix \(\beta\le \beta^*_{\mathrm{out}}(\delta)\), \(\eps\le \eps^*_{\mathrm{out}}(\delta)\), and a faulty edge set \(F\) with $|F|\le \eps \frac{n^2}{2}.$
Let
\[
s:=\sqrt{2|F|},
\qquad
t_{ss}:=c'_{ss}s,
\qquad
t_{\mathrm{es}}:=T_{\mathrm{es}}s.
\]
By the choice of \(\eps^*_{\mathrm{out}}\),
\[
t_{ss}\le \beta^*_{\mathrm{out}}(\delta)n,
\qquad
t_{\mathrm{es}}\le \beta^*_{\mathrm{out}}(\delta)n,
\qquad
t_{\mathrm{es}}\le \delta n/4.
\]

Let \(B_1\) be the input block and \(B_2,\ldots,B_n\) the fresh blocks from Step \(1\). By hypothesis, the input block is $E_1\,\encq(\ket{\psi})$
for some Pauli \(E_1\) with $\operatorname{wt}(E_1)\le \beta n.$
Also, since \(\eps^*_{\mathrm{out}}\le \eps^*_{ss}\), \Cref{prop:distributed-state-prep} applies to each of the \(n-1\) preparations in Step \(1\). Hence each fresh block \(B_j\) for \(j=2,\ldots,n\) is a Pauli \(t_{ss}\)-deviation of \(\encq(\ket{0})\).

We now analyze Step \(2\). For \(j=2,\ldots,n\), let \(R_j\) denote the \(j\)-th round: apply the logical CNOT from \(B_1\) to \(B_j\), then run \Cref{algo:error-correction} on \(B_1\) and on \(B_j\).

\begin{claim}
After round \(R_j\), the control block \(B_1\) and every processed target block \(B_2,\ldots,B_j\) are each Pauli \(t_{\mathrm{es}}\)-deviations of their ideal code blocks, while every unprocessed target block \(B_{j+1},\ldots,B_n\) remains a Pauli \(t_{ss}\)-deviation of \(\encq(\ket{0})\).
\end{claim}
\begin{proof}
We argue by induction on \(j\).

For \(j=2\), before the CNOT, the control block \(B_1\) is a Pauli \(\beta n\)-deviation of \(\encq(\ket{\psi})\), and the target block \(B_2\) is a Pauli \(t_{ss}\)-deviation of \(\encq(\ket{0})\). The logical CNOT is implemented transversally by local physical CNOT gates at each player, so it uses no communication edges and introduces no additional edge faults. Conjugating Pauli errors through the transversal CNOT shows that after the CNOT, each of \(B_1\) and \(B_2\) is a Pauli \((\beta n+t_{ss})\)-deviation of its ideal code block. Since \(\beta\le \beta^*_{\mathrm{out}}(\delta)\) and \(t_{ss}\le \beta^*_{\mathrm{out}}(\delta)n\),
\[
\beta n+t_{ss}\le 2\beta^*_{\mathrm{out}}(\delta)n=\frac{c_{\mathrm{es}}(\delta)}{2}n<c_{\mathrm{es}}(\delta)n.
\]
Therefore \Cref{prop:error-suppression} applies to both \(B_1\) and \(B_2\), and each becomes a Pauli \(t_{\mathrm{es}}\)-deviation of its ideal code block. This proves the base case.

Now suppose the claim holds after round \(R_{j-1}\), for some \(j\ge 3\). Then before round \(R_j\), the control block \(B_1\) is a Pauli \(t_{\mathrm{es}}\)-deviation of its ideal code block, and the fresh target block \(B_j\) is a Pauli \(t_{ss}\)-deviation of \(\encq(\ket{0})\). Again, the transversal CNOT introduces no additional edge faults, and after the CNOT each of \(B_1\) and \(B_j\) is a Pauli \((t_{\mathrm{es}}+t_{ss})\)-deviation of its ideal code block. Since \(t_{\mathrm{es}},t_{ss}\le \beta^*_{\mathrm{out}}(\delta)n\),
\[
t_{\mathrm{es}}+t_{ss}\le 2\beta^*_{\mathrm{out}}(\delta)n=\frac{c_{\mathrm{es}}(\delta)}{2}n<c_{\mathrm{es}}(\delta)n.
\]
So \Cref{prop:error-suppression} applies to both \(B_1\) and \(B_j\), and each becomes a Pauli \(t_{\mathrm{es}}\)-deviation of its ideal code block. All earlier processed targets are untouched in round \(R_j\), so they remain \(t_{\mathrm{es}}\)-deviated, and all later targets remain \(t_{ss}\)-deviated. This proves the claim.
\end{proof}

\medskip
After all \(n-1\) rounds of Step \(2\), every block \(B_i\), \(i\in[n]\), is a Pauli \(t_{\mathrm{es}}\)-deviation of its ideal code block.

Finally, consider Step \(3\). Fix a good player \(i\notin W(F)\). Before routing, the block \(B_i\) is a Pauli \(t_{\mathrm{es}}\)-deviation of its ideal code block. During routing, each physical qubit of \(B_i\) is transmitted over one communication edge incident to player \(i\), so the number of additional corrupted qubits introduced during routing is at most
\[
\deg_F(i)\le \delta n/2.
\]
Hence the total number of corrupted qubits on \(B_i\) at decoding time is at most
\[
t_{\mathrm{es}}+\delta n/2\le \delta n/4+\delta n/2<\delta n.
\]
By the decoding guarantee from \Cref{assumption:nice-qLTC}, player \(i\)'s decoder returns the correct logical qubit. Since this holds for every good player \(i\notin W(F)\), the final output differs from \(\Enc_{\mathrm{rep}}(\ket{\psi})\) only on qubits held by players in \(W(F)\).

For the size of \(W(F)\), the handshake lemma gives
\[
\sum_{v\in [n]}\deg_F(v)=2|F|.
\]
Since every \(v\in W(F)\) satisfies \(\deg_F(v)>\delta n/2\),
\[
|W(F)|\cdot \frac{\delta n}{2}
<
\sum_{v\in W(F)}\deg_F(v)
\le
2|F|.
\]
Therefore
\[
|W(F)|
<
\frac{4|F|}{\delta n}
\le
\frac{2\eps}{\delta}\,n.
\]

Finally, each player uses only \(\poly(n)\) resources. Step \(1\) invokes \Cref{algo:distributed-state-prep} \(n-1\) times. Step \(2\) performs \(n-1\) transversal CNOTs together with \(2(n-1)\) invocations of \Cref{algo:error-correction}. By \Cref{prop:error-suppression}, each invocation of \Cref{algo:error-correction} uses \(\poly(n)\) qubits and gates per player. Step \(3\) asks each player to decode one code block, which is polynomial-time by \Cref{assumption:nice-qLTC}. Hence each player is involved in at most \(\poly(n)\) qubits and gates throughout the protocol.
\end{proof}

%% file: routing.tex
\section{Fault-Tolerant Quantum Routing on an HDX}\label{sec:routing}

In this section, we construct a fault-tolerant protocol for routing encoded qubits between cliques of an HDX according to an arbitrary permutation. We first recall the classical clique-to-clique routing construction in~\Cref{subsec:classical-transfer}. Then in~\Cref{subsec:quantum-transfer}, we replace its majority transfers with quantum transfers using an inner quantum code concatenation and the error-suppression procedure from~\Cref{sec:mpc}. We introduce a `partial' fault tolerance framework to allow a small fraction of codeblocks to fail in \Cref{sec:prelims-partial-ft}. Finally, in~\Cref{subsec:quantum-routing}, we combine the clique-to-clique transfer guarantee with HDX concentration bounds to prove the quantum routing result in~\Cref{prop:ft-routing}.

\subsection{The Classical Clique-to-Clique Routing Protocol}\label{subsec:classical-transfer}
We start by stating the clique-to-clique routing statement in~\cite{BafnaMV25} and its version by~\cite{BT} in the small gap regime. The proof of the latter result is the same as~\cite{BafnaMV25} except that it uses~\Cref{lem:concentration-improved} as the sampling bound instead of~\Cref{lem:chernoff-hdx} and it uses a derandomization of~\cite{Nenadov23} to set up the clique-to-clique paths over an expander. Below is a figure that illustrates the routing protocol,

\begin{figure}[H]
    \centering
    \includegraphics[width=0.5\linewidth]{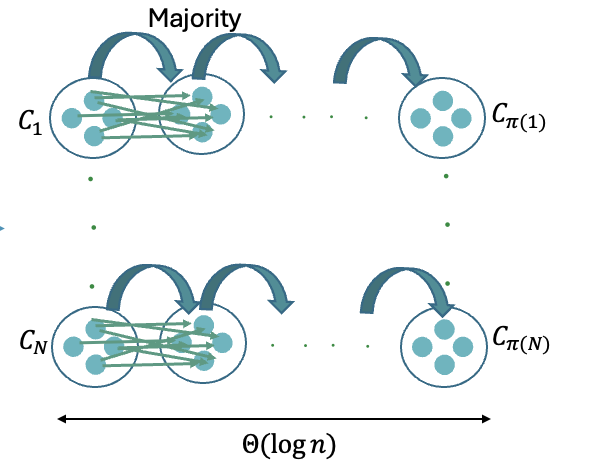}
    \caption{Clique-to-Clique Routing Protocol where in each vertical layer of cliques every clique in $G$ appears exactly once, and the clique-to-clique transfers proceed via majority votes.
    }
    \label{fig:clique-routing}
\end{figure}

We first give a sketch of~\cite{BafnaMV25}'s proof for the sake of completeness, and then generalize it to the quantum setting (with slightly different parameters).

\begin{lemma}\label{lem:clique-routing}
Let $X$ be a $d$-partite $\lambda$-HDX with $|X(1)|=n$. Let $\pi:X(d)\rightarrow X(d)$ be any permutation. Then we can efficiently construct a routing protocol on $G=(X(1),X(2))$ with round complexity $T=O(\log n)$ and work complexity $(1/\lambda)^{O(d^2)}\log |\Sigma|$ such that for all initial functions $A_0:X(d)\rightarrow \Sigma^d$ satisfying
\[\Pr_{D\sim X(d)}[\maj_{0.99}(A_0(D)) \neq \perp] \geq 1-\eta,\]
and for all possible adversaries that corrupt at most $\mu$-fraction of edges we have the following two statements:
\begin{enumerate}
\item \textbf{(Low corruption regime)} There exists $d_0 \in \N$ such that for all $c \in \N$ and for large enough $n \in \N$, when $d > d_0$ and $\mu,\lambda \lll 1/\log^{c+1} n$, the protocol computes the function $A_T:X(d)\rightarrow \Sigma^d$ satisfying
\[\Pr_{D\sim X(d)}[\maj_{0.99}(A_T(\pi(D)))=\maj_{0.99}(A_0(D))]\geq 1-\eta-\frac{\mu}{\log^c n}.\]
\item \textbf{(Constant corruption regime)} There exist $\mu_0 \in (0,1)$ such that for all $c \in \N$ and for large enough $n \in \N$, when $\mu < \mu_0$ and  $d = \Theta(\log\log^2 n)$ and $\lambda < 1/\poly(d)$, the protocol computes the function $A_T:X(d)\rightarrow \Sigma^d$ satisfying
\[\Pr_{D\sim X(d)}[\maj_{0.99}(A_T(\pi(D)))=\maj_{0.99}(A_0(D))]\geq 1-\eta-\frac{1}{\log^c n}.\]
\end{enumerate}
\end{lemma}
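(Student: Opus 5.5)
The plan is to reproduce the BMV argument in three layers: a single clique-to-clique transfer primitive, a layered routing schedule over an auxiliary expander on $X(d)$, and a concentration step that bounds how many transfers along the schedule can be spoiled by the adversary.

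\emph{Transfer primitive.} For two $d$-cliques $D,D'$ that are adjacent in a suitable auxiliary graph (say, both contained in a common face of a link, or sharing a large intersection), every vertex $u\in D$ sends its value to every $v\in D'$ along the edges of $G$, possibly via short paths inside a common link. Each $v$ then takes $\maj_{0.99}$ of what it received. The deterministic claim is that if $D$ holds a $0.99$-majority value $\sigma$ and the pair $(D,D')$ is ``good'', meaning that at most a tiny constant fraction of the edges used between $D$ and $D'$ are corrupted, then $D'$ ends with $0.99$-majority value $\sigma$. This is a counting argument: all but a small fraction of the receivers in $D'$ see mostly uncorrupted messages coming from the $0.99$ honest senders.

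\emph{Routing schedule.} Next I would set up the permutation routing. Take a bounded-degree expander $\mathcal{H}$ on $X(d)$ whose edges are realizable clique-to-clique transfers; it can be built from the down-up walk, whose spectral gap is controlled by \Cref{lem:spectral_gap_of_graphs_from_HDX}. Route $\pi$ along $\mathcal{H}$ using $O(\log n)$-length paths with bounded congestion, either via a Valiant-style random intermediate step or via the derandomization of~\cite{Nenadov23}. Split these paths into $T=O(\log n)$ rounds, arranged so that in each round every clique appears exactly once as a sender and once as a receiver, as in \Cref{fig:clique-routing}. Each vertex then participates in $(1/\lambda)^{O(d^2)}$ cliques per round, which gives the stated work complexity $(1/\lambda)^{O(d^2)}\log|\Sigma|$.

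\emph{Counting bad transfers.} A message for $D$ is lost only if $D$ starts without a majority (probability at most $\eta$) or some transfer on its path is bad. Let $F:X(2)\to\{0,1\}$ be the indicator of corrupted edges, so $\mu(F)\le\mu$. A clique-pair transfer is bad only if the corresponding larger face $K$ has $\mu(F|_K)\ge c_0$ for a small constant $c_0$. In the low-corruption regime I would apply \Cref{lem:concentration-improved} with $i=2$: the fraction of bad faces is at most $O(\mu(\mu+\lambda))\le\mu/\log^{c+1}n$ when $\mu,\lambda\lll 1/\log^{c+1}n$. A union bound over the $O(\log n)$ rounds, using that each round is a bijection so every round contributes at most this fraction, then gives total loss $\mu/\log^c n$. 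In the constant-corruption regime I would instead use \eqref{eq:hdx-chernoff-2}: with $d=\Theta((\log\log n)^2)$ we have $\sqrt d=\Theta(\log\log n)$, so the bad fraction is at most $\exp(-c'c_0^2\sqrt d/2)\le 1/\log^{c+1}n$ once the constant in $d$ is large, and again a union bound over the $O(\log n)$ rounds gives the claim.

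The main obstacle is the union bound over rounds. It needs the bad set in each round to be measured under the uniform distribution on transfers, which is exactly why each round must use every clique once, and why the path system must be a bijection per layer with congestion $O(1)$. I would verify this first, using \cite{Nenadov23} to obtain such layered paths deterministically.
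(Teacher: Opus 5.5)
Your proposal follows the paper's proof: a routing graph on $X(d)$ built from the down-up walk $U^d_{\alpha d}D^d_{\alpha d}$ (expansion from \Cref{lem:spectral_gap_of_graphs_from_HDX}), $O(\log n)$-length paths from the derandomized \cite{Nenadov23} routing so that each layer is a bijection on cliques, majority transfers through the shared $\alpha d$-face $D\cap D'$, a bound on the measure of bad $d$-faces via \Cref{lem:concentration-improved} or \eqref{eq:hdx-chernoff-2} in the two regimes, and a union bound over the $O(\log n)$ layers. There are two harmless slips: the down-up graph is not bounded-degree (its degree grows like $(1/\lambda)^{O(d^2)}$, which the routing result does not need), and badness should be defined separately for each of $D$ and $D'$ (each must have at most a $c_0\ll\alpha$ fraction of corrupted edges), since top faces are maximal and no single larger face contains both.
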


\begin{proof}
Fix an ordering on the cliques in $X(d)$, a permutation $\pi$ on $[N]$ (recall that $|X(d)|=N$), and an initial  function $A_0$. 
\vspace{-2ex}
\paragraph{Setting up paths:} First consider the graph $H$ defined on the vertex set $X(d)$ whose edges are given by the down-up random walk $U^d_{\alpha d}D^d_{\alpha d}$ for some $\alpha \in (0,1)$ chosen to be a small enough constant. Since $\pi_d$ is the uniform distribution over $X(d)$, $H$ is a regular graph with multiedges. By \Cref{lem:spectral_gap_of_graphs_from_HDX} we know that $H$ has $\sigma_2(H) \leq \alpha+\poly(d)\lambda \leq 2\alpha$. By~\cite{Nenadov23}'s result and its derandomization presented in~\cite{BT}, provided that $2\alpha$ is a small enough constant, we can efficiently set up paths of length $O(\log|H|)=O(\log N)$ where each vertex appears in at most $O(\log N)$ paths; thus, we fix a routing protocol $\cR$ that has $N$ paths each of length $T' \leq O(\log N)$. Let the $j^{th}$ path in $\cR$ be the path from $C_j \rightarrow C_{\pi(j)}$.

Now let $T:=2T'$, and expand each path above to also include the $\alpha d$-cliques associated with each edge in $H$. That is, for even $t \in [0,T]$ let $C_{j,t}$ denote the $d$-face that occurs in the $(t/2)^{th}$-time-step of the $j^{th}$ path in $\cR$, and for odd $t\in [T]$ let $C_{j,t}$ denote the $\alpha d$-face that is associated to the edge $(C_{j,t-1},C_{j,t+1})\in E(H)$. Let $C_{j,0}$ denote $C_j$ and $C_{j,T}$ denote $C_{\pi(j)}$. 

\vspace{-2ex}
\paragraph{Routing protocol:}
The routing protocol on $G$ proceeds as follows. For each path $j$ and time-step $t$, we perform a clique-to-clique transfer from a $d$-clique $C_1$ to an $\alpha d$-clique $C_2$ and back to a $d$-clique $C_3$ using majority votes in the obvious way. We remark that this clique-to-clique transfer can be seen as a special case of~\Cref{lem:clique-to-clique-transfer} where the message is encoded in a repetition code, and the lemma we provided is for the general setting of quantum codes.

\vspace{-2ex}
\paragraph{Bad cliques and bounding them:}
We now begin the analysis of the protocol, and for that we need to introduce a few notions. 
First, at most $\mu$-fraction of the edges are corrupted, and we denote the set of corrupted edges by $\cE \subseteq X(2)$. Define the set $\cB \subseteq X(d)$ of bad $d$-faces $D$ as those that contain at least $\alpha/100$-fraction of bad edges. Now there are two regimes of parameters, one where $\mu$ is smaller than $1/\pl n$ and the other when $\mu$ is a small constant (independent of $n$). We can use the concentration bounds on HDX from~\Cref{sec:concentration-prelims} to bound the fraction of bad cliques in both regimes:
\begin{enumerate}
\item If $d \gtrsim 1/\alpha^2$, using~\Cref{lem:concentration-improved} we get,
\begin{equation}
\mu(\cB)=\Pr_{D \sim X(d)}\left[\mu(\cE|_D) \geq \frac{\alpha}{100}\right]\leq \frac{2}{\alpha^2}\mu(\mu+4\lambda) \leq \frac{\mu}{\log^{c+1} n},     
\end{equation}
since $\mu, \lambda\lll 1/\log^{c+1} n$.
\item Using~\Cref{lem:chernoff-hdx} we get,
\begin{equation}\label{claim:a_few_bad_cliques}
\mu(\cB) =\Pr_{D \sim X(d)}\left[\mu(\cE|_D) \geq \frac{\alpha}{100}\right] \leq 
\Pr_{D \sim X(d)}\left[\mu(\cE|_D) \geq \mu(\cE)+ \left(\frac{\alpha}{100}-\mu\right)\right] \leq 
2^{-\Omega(\sqrt{d})} \lll \frac{1}{\log^{c+1} n}, 
\end{equation}
where we used that
$d \geq \Omega(\log\log^2 n)$ and $\mu \lll \alpha$. 
\end{enumerate}

\paragraph{Error Analysis of the Protocol:} One can check if the cliques $C_1$ and $C_3$ are good, i.e. they contain at most $\alpha/100$-fraction of bad edges, then the clique-to-clique transfer (using majority votes) from $C_1 \rightarrow C_1 \cap C_3=C_2 \rightarrow C_3$ succeeds. Thus using this fact and a union bound over the path length, we get that at most $\eta+\mu/\log^c n$-paths fail to transmit the required message in the first regime, and $\eta+1/\log^c n$-fraction of paths in the second regime.
\end{proof}

\subsection{One Clique-to-Clique Transfer in the Quantum Setting}\label{subsec:quantum-transfer}
The clique-to-clique routing protocol in~\cite{BafnaMV25} uses as its basic primitive a majority transfer between a clique $C_i$ to a clique $C_j$, where these cliques share an $\alpha d$-sized clique. We prove an analogue of the clique-to-clique routing protocol where one wants to route qubits. In words, every $d$-dimensional clique $C_i$ of $X$ holds an encoding of a qubit, that needs to be transmitted to $C_{\pi(i)}$, for some permutation $\pi$. In their setting, this encoding was a repetition code of a classical bit, but in our setting since the edges of the graph can introduce quantum errors, we need to use a quantum code. We prove an analogue of their statement for routing qubits. To replace their use of classical majority gates, we use the following protocol.

\begin{mdframed}
\begin{protocol}Clique-to-clique transfer on an HDX\mbox{}\label{algo:transfer}\\
\textbf{Input:} A state $\ket{\Psi}$ purported to be a length-$d$ codeword of $\mathcal{Q}$, jointly held by vertices in a $d$-clique $C_1$. \\
\textbf{Output:} The original codeword is now jointly held by the players in another $d$-clique $C_3$. \\
\textbf{Promise:} There is a smaller clique $C_2$ of size $\alpha d$ that is contained in both $C_1$ and $C_3$ for $\alpha <1$.
\begin{enumerate}
\setcounter{enumi}{0}
\item Each player in $C_1$ encodes their qubit into a linear-distance quantum code (`inner code') of size $\alpha d$ and sends qubit $j$ to player $j$ in $C_2$.
\item Each player in $C_2$ forwards the received qubits to $C_3$: qubits received by player $i$ of $C_1$ is sent to player $i$ of $C_3$.
\item The players in $C_3$ first locally error-correct and decode the inner code. Then they run the the error suppression~\Cref{algo:error-correction}.
\end{enumerate}
\end{protocol}
\end{mdframed}

\begin{lemma}[Clique-to-clique transfer]\label{lem:clique-to-clique-transfer}
Let \(\mathcal Q\) be the code of blocklength \(d\) from \Cref{assumption:nice-qLTC}, and let \(\delta=\delta(d)\) be the associated parameter from that assumption.

Then there exist \(\beta^*_{\mathrm{transfer}}>0\) and a universal constant \(c_{\mathrm{transfer}}>0\) such that, for any \(\beta\le \beta^*_{\mathrm{transfer}}\), and any set of bad edges \(F\subseteq E(C_1)\cup E(C_3)\) such that, for each \(r\in\{1,3\}\), \(|F\cap E(C_r)|\le \beta \binom{d}{2}\), the following holds.

\Cref{algo:transfer} is a \((\mathcal Q,c_{\mathrm{transfer}}\sqrt{\beta},F)\)-fault-tolerant protocol for the identity operation in the sense of \Cref{def:fault-tolerant-protocol}. Equivalently, for every state \(\ket{\psi}\), every input state that is \(c_{\mathrm{transfer}}\sqrt{\beta}\)-deviated from \(\mathrm{Enc}_{\mathcal Q}(\ket{\psi})\), and every Pauli fault string on the communications of \Cref{algo:transfer} supported only on communications whose underlying edges lie in \(F\), the output state held by \(C_3\) is \(c_{\mathrm{transfer}}\sqrt{\beta}\)-deviated from \(\mathrm{Enc}_{\mathcal Q}(\ket{\psi})\).

Furthermore, each player is involved in at most \(\poly(d)\) qubits and gates during the protocol.
\end{lemma}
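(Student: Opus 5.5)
By linearity (\Cref{remark:pauli}) it suffices to treat an input that is a Pauli $c_{\mathrm{transfer}}\sqrt{\beta}$-deviation $P\,\encq(\ket{\psi})$ with $|\supp(P)|\le c_{\mathrm{transfer}}\sqrt\beta\, d$, and a single Pauli fault pattern supported on communications along edges of $F$. The argument has three steps: reduce the edge faults to a small set of bad players, show that every good player of $C_3$ ends up with the correct share, and then invoke \Cref{prop:error-suppression}.

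\textbf{Step 1: bad players.} For $r\in\{1,3\}$, call a player $v\in C_r$ bad if it is incident to more than $\sqrt{\beta}\,d$ edges of $F\cap E(C_r)$. The handshake argument from \Cref{prop:vqss} gives
\[
|\mathrm{Bad}_r|\le \frac{2\beta\binom d2}{\sqrt\beta\, d}\le \sqrt\beta\, d .
\]

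\textbf{Step 2: transfer of a share.} Fix a share $i$ whose sender in $C_1$ and receiver in $C_3$ are both good. This share is encoded in the inner code of length $\alpha d$ and travels along paths $i\to j\to i'$ with $j\in C_2$. A coordinate $j$ of its inner codeword can be corrupted only if the edge $(i,j)\in F\cap E(C_1)$ or the edge $(j,i')\in F\cap E(C_3)$. There are at most $2\sqrt\beta\, d$ such coordinates. Choosing $\beta^*_{\mathrm{transfer}}$ so that $2\sqrt\beta\, d$ is below half the inner-code distance $\Omega(\alpha d)$, the receiver decodes the share exactly. Shares held by bad senders or bad receivers, or carried into bad receivers, may be arbitrarily corrupted. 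By the Pauli-expansion argument of \Cref{prop:adaptive}, such corruptions contribute a general deviation supported on at most $|\mathrm{Bad}_1|+|\mathrm{Bad}_3|\le 2\sqrt\beta\, d$ coordinates.

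Consequently, before error suppression, the block held by $C_3$ is a deviation of $\encq(\ket{\psi})$ with support at most $(c_{\mathrm{transfer}}+2)\sqrt\beta\, d$.

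\textbf{Step 3: error suppression.} Apply \Cref{prop:error-suppression} on the complete graph $C_3$. Its faulty edge set is $F\cap E(C_3)$ with $|F\cap E(C_3)|\le \beta d^2/2$, so it plays the role of $\eps=\beta$ there. Provided that $(c_{\mathrm{transfer}}+2)\sqrt{\beta}\le c_{\mathrm{es}}(\delta)$ and $\beta\le\eps^*_{\mathrm{es}}$, the output is a $T_{\mathrm{es}}\sqrt\beta$-deviation. Setting $c_{\mathrm{transfer}}:=T_{\mathrm{es}}$ closes the loop, since the output radius equals the input radius. The conditions above then hold by taking $\beta^*_{\mathrm{transfer}}=\poly(\delta)$ small enough; note that $\beta^*_{\mathrm{transfer}}$ may depend on $\delta$.

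The main obstacle is the bookkeeping in Step 2 for corrupted shares. Bad players may apply non-Pauli, entangling operations across many inner codewords. This must be shown to yield only a deviation, in the paper's sense, that is confined to a fixed small set of coordinates, and must combine linearly with the input Pauli $P$ so that the support stays bounded. I would handle this as in the proof of \Cref{prop:adaptive}: expand each bad player's operation in the Pauli basis on its outgoing share register, and absorb the remaining action into the auxiliary $\ket{\xi^{(i)}}$ registers.

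Per-player complexity is $\poly(d)$, since inner encoding and decoding take $\poly(d)$ and \Cref{algo:error-correction} costs $\poly(d)$ per player.
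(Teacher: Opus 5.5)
Your proof is correct and follows the paper's route. You bound the set of outer coordinates whose inner blocks may exceed the inner decoding radius, conclude an $O(\sqrt{\beta})\,d$-sized Pauli deviation before the final call to \Cref{algo:error-correction}, and then close the loop with \Cref{prop:error-suppression} on $C_3$ with $\eps=\beta$ and $c_{\mathrm{transfer}}\ge T_{\mathrm{es}}$. The differences are cosmetic. The paper defines the bad set directly as the coordinates $i$ with $e_1(i)+e_3(i)>\rho_{\mathrm{in}}\alpha d$, which gives size $O(\beta)\,d$ rather than your $2\sqrt{\beta}\,d$. Your worry about non-Pauli actions of bad players never arises: the fault model only inserts Pauli faults on communications, and reversible decoding turns an over-corrupted inner block into a logical Pauli on that coordinate.
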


\begin{proof}
Let \(C_1\) and \(C_3\) be \(d\)-cliques, and let \(C_2\subseteq C_1\cap C_3\) be a clique of size \(\alpha d\), where \(\alpha\in(0,1)\). Fix bijections \(\iota_1:[d]\to C_1\), \(\iota_3:[d]\to C_3\), and \(\kappa:[\alpha d]\to C_2\), corresponding to the labeling used in \Cref{algo:transfer}.

Let \(\rho_{\mathrm{in}}>0\) be a relative decoding radius for the inner code used in Steps~1--3 of \Cref{algo:transfer}, so that this inner code has blocklength \(\alpha d\) and corrects any Pauli error on at most \(\rho_{\mathrm{in}}\alpha d\) qubits. Let \(\epsilon_{\mathrm{es}}^*(\delta)\), \(c_{\mathrm{es}}(\delta)\), and \(T_{\mathrm{es}}\) be the quantities from \Cref{prop:error-suppression}.

Choose a universal constant \(c_{\mathrm{transfer}}\ge T_{\mathrm{es}}\), and define
\[
\beta^*_{\mathrm{transfer}}
:=
\min\!\left\{
\left(\frac{\rho_{\mathrm{in}}\alpha c_{\mathrm{transfer}}}{2}\right)^2,\,
\left(\frac{c_{\mathrm{es}}(\delta)}{2c_{\mathrm{transfer}}}\right)^2,\,
\epsilon_{\mathrm{es}}^*(\delta)
\right\}.
\]

Now fix an arbitrary state \(\ket{\psi}\), let \(\ket{\Phi}:=\mathrm{Enc}_{\mathcal Q}(\ket{\psi})\), let \(\beta\le \beta^*_{\mathrm{transfer}}\), and let \(\mathsf P\) be an arbitrary Pauli string on the communications of \Cref{algo:transfer} such that \(\mathsf P_\ell=I\) whenever the \(\ell\)-th communication uses an edge not in \(F\).

We may write the designated input block as \(P_{S_{\mathrm{in}}}\ket{\Phi}\), where \(S_{\mathrm{in}}\subseteq[d]\), \(|S_{\mathrm{in}}|\le c_{\mathrm{transfer}}\sqrt{\beta}\,d\), and \(P_{S_{\mathrm{in}}}\) is a Pauli operator supported on \(S_{\mathrm{in}}\).

We first analyze Steps~1 and~2 of \Cref{algo:transfer}. For each outer coordinate \(i\in[d]\), Step~1 encodes the \(i\)-th qubit into an inner codeword of length \(\alpha d\), whose positions are indexed by \(j\in[\alpha d]\), and sends the \(j\)-th inner-code qubit along the edge \(\{\iota_1(i),\kappa(j)\}\in E(C_1)\). Step~2 forwards that same qubit along the edge \(\{\kappa(j),\iota_3(i)\}\in E(C_3)\).

For each \(i\in[d]\), define \(e_1(i):=\bigl|\{\,j\in[\alpha d] : \{\iota_1(i),\kappa(j)\}\in F\,\}\bigr|\) and \(e_3(i):=\bigl|\{\,j\in[\alpha d] : \{\kappa(j),\iota_3(i)\}\in F\,\}\bigr|\). Thus \(e_1(i)\) is the number of inner-code positions of outer coordinate \(i\) whose Step~1 transmission uses a bad edge, and \(e_3(i)\) is the analogous quantity for Step~2.

Write \(F_1:=F\cap E(C_1)\) and \(F_3:=F\cap E(C_3)\). By hypothesis, \(|F_1|,|F_3|\le \beta\binom{d}{2}\le \frac{\beta d^2}{2}\).

Each edge of \(F_1\) is used by at most two Step~1 transmissions: if exactly one endpoint lies in \(C_2\), it is used once, and if both endpoints lie in \(C_2\), it is used once in each direction. Therefore \(\sum_{i=1}^d e_1(i)\le 2|F_1|\le \beta d^2\). Similarly, each edge of \(F_3\) is used by at most two Step~2 transmissions, so \(\sum_{i=1}^d e_3(i)\le 2|F_3|\le \beta d^2\).

Define \(B(F):=\{\,i\in[d] : e_1(i)+e_3(i)>\rho_{\mathrm{in}}\alpha d\,\}\). For each \(i\in B(F)\), by definition \(e_1(i)+e_3(i)>\rho_{\mathrm{in}}\alpha d\). Summing over \(i\in B(F)\) yields
\[
|B(F)|\rho_{\mathrm{in}}\alpha d
<
\sum_{i\in B(F)}\bigl(e_1(i)+e_3(i)\bigr)
\le
\sum_{i=1}^d\bigl(e_1(i)+e_3(i)\bigr)
\le 2\beta d^2,
\]
and hence \(|B(F)|\le \frac{2\beta}{\rho_{\mathrm{in}}\alpha}\,d\).

Now let \(B_{\mathrm{tot}}:=S_{\mathrm{in}}\cup B(F)\). Then
\[
|B_{\mathrm{tot}}|
\le
\left(
c_{\mathrm{transfer}}\sqrt{\beta}
+
\frac{2\beta}{\rho_{\mathrm{in}}\alpha}
\right)d.
\]
Because \(\beta \le \left(\frac{\rho_{\mathrm{in}}\alpha c_{\mathrm{transfer}}}{2}\right)^2\), we have \(\frac{2\beta}{\rho_{\mathrm{in}}\alpha}\le c_{\mathrm{transfer}}\sqrt{\beta}\), and therefore \(|B_{\mathrm{tot}}|\le 2c_{\mathrm{transfer}}\sqrt{\beta}\,d\).

We claim that after the local inner-code correction and decoding in Step~3, the designated \(d\)-qubit block held by \(C_3\) is a Pauli \((|B_{\mathrm{tot}}|/d)\)-deviation of \(\ket{\Phi}\). Indeed, fix any \(i\in[d]\setminus B_{\mathrm{tot}}\). Since \(i\notin S_{\mathrm{in}}\), the \(i\)-th outer-code qubit is initially correct. Since \(i\notin B(F)\), we have \(e_1(i)+e_3(i)\le \rho_{\mathrm{in}}\alpha d\). Because \(\mathsf P\) is supported only on communications using edges in \(F\), the \(i\)-th inner-code block experiences Pauli faults on at most \(e_1(i)+e_3(i)\) of its \(\alpha d\) physical qubits during Steps~1 and~2. This is within the decoding radius of the inner code, so the local inner-code correction and decoding performed at the vertex \(\iota_3(i)\in C_3\) recovers exactly the \(i\)-th qubit of \(\ket{\Phi}\).

Thus, after the local inner decoding step, only the coordinates in \(B_{\mathrm{tot}}\) can differ from \(\ket{\Phi}\). Hence the designated \(d\)-qubit block is a Pauli \((|B_{\mathrm{tot}}|/d)\)-deviation of \(\ket{\Phi}\). In particular, just before the final call to \Cref{algo:error-correction} on \(C_3\), the designated block is a Pauli \(2c_{\mathrm{transfer}}\sqrt{\beta}\)-deviation of \(\ket{\Phi}\).

We now apply \Cref{prop:error-suppression} to the final repeated error-suppression stage on \(C_3\), with bad edge set \(F_3\). Since \(\beta \le \epsilon_{\mathrm{es}}^*(\delta)\), the edge-fault hypothesis of \Cref{prop:error-suppression} is satisfied. Also, because \(\beta \le \left(\frac{c_{\mathrm{es}}(\delta)}{2c_{\mathrm{transfer}}}\right)^2\), we have \(2c_{\mathrm{transfer}}\sqrt{\beta}\le c_{\mathrm{es}}(\delta)\), so the input-deviation hypothesis of \Cref{prop:error-suppression} is also satisfied. Therefore that proposition implies that the final designated output block is a Pauli \(T_{\mathrm{es}}\sqrt{\beta}\)-deviation of \(\ket{\Phi}\). Since \(c_{\mathrm{transfer}}\ge T_{\mathrm{es}}\), this is in particular a Pauli \(c_{\mathrm{transfer}}\sqrt{\beta}\)-deviation of \(\ket{\Phi}=\mathrm{Enc}_{\mathcal Q}(\ket{\psi})\), as required.

This proves that \Cref{algo:transfer} is a \((\mathcal Q,c_{\mathrm{transfer}}\sqrt{\beta},F)\)-fault-tolerant protocol for the identity operation in the sense of \Cref{def:fault-tolerant-protocol}.

For the resource bound, each player in \(C_1\) encodes one qubit into an inner block of size \(\alpha d\), so it handles \(O(d)\) qubits and gates in Step~1. Each player in \(C_2\) forwards one qubit for each outer coordinate, so it handles \(O(d)\) qubits and gates in Step~2. Each player in \(C_3\) decodes one inner block of size \(\alpha d\), and then participates in \Cref{algo:error-correction}, which uses \(\poly(d)\) qubits and gates per player by \Cref{prop:error-suppression}. Therefore every player is involved in at most \(\poly(d)\) qubits and gates during the protocol.
\end{proof}

We now prove a quantum analogue of this statement for routing qubits on an HDX. The proof remains the same, except that we use the clique-to-clique transfer in~\Cref{lem:clique-to-clique-transfer} and we set the parameters accordingly. Before we do that we provide a few definitions that extend the fault-tolerance notions in~\Cref{sec:prelims-ft} to the setting of multiple cliques, since we have to give up on some of the codeblocks in the output state, so we also consider a relaxed notion of fault-tolerant protocol where we only require the output state to be deviated from an encoding of $C\ket{\psi}$ on a subset of the output codeblocks.

\subsection{Partial Fault-Tolerance Preliminaries}\label{sec:prelims-partial-ft}

\begin{definition}[Partial Deviation]
For $0 \leq \alpha \leq 1$, we say that $\ket{\psi}$ is a \emph{$(\alpha,\vec{t})$-deviation} of $\ket{\phi}_{R_1,\ldots, R_m}$ if the requirements of $\vec{t}$-deviation hold true on at least a $(1-\alpha)$ fraction of the registers (for the same registers across the terms in the decomposition in~\Cref{eq:deviation-decomp}). In other words, $(0,\vec{t})$-deviation is the same as $\vec{t}$-deviation.    
\end{definition}

\begin{definition}[Partially FT protocols]
    \label{def:partial-FT}
    Consider the same setting as in~\Cref{def:fault-tolerant-protocol}. For $F \subseteq E$ and $0 \leq \beta \leq 1$, the protocol $\cP$ is said to be a \emph{$(\mathcal{Q},\vec{t},\mathcal{Q'},\vec{t'},F, \beta)$-partially fault-tolerant protocol} for $C$ if there exists a constant $c > 0$ such that the following statement holds:

    For any $0 \leq \alpha \leq 1$ and $k$-qubit pure state $\ket{\psi}$, upon any input state $\ket{\Psi}$ that is Pauli \((\alpha, \vec{t})\)-deviated from \(\mathrm{Enc}_\mathcal{Q}(\ket{\psi})\), an $F$-edge faulty execution of $\cP$ specified by Pauli faults $\mathsf{F}
$ outputs a state \(\cP_\mathsf{F}(\ket{\Psi})\) is \((\min \{c \alpha+\beta, 1\},\vec{t'})\)-deviated from \(\mathrm{Enc}_{\mathcal{Q}'}(C\ket{\psi})\). We also assume that the the bad output block set \(B'\) can be chosen as a function only of the bad input blocks and the Pauli fault pattern \(\mathsf F\), and not of the
particular Pauli deviation on the good blocks. 

If the above statement holds for all $\mu(F) \leq \epsilon $, then we say $\cP$ is a $(\mathcal{Q},\vec{t},\mathcal{Q'},\vec{t'}, G_\epsilon, \beta)$-FT protocol.
\end{definition}

\begin{lemma}[Sequential composability of partially fault-tolerant gadgets]
\label{lem:partial-ft-sequential}
Let \(\mathcal{Q}_1,\mathcal{Q}_2,\mathcal{Q}_3\) be codes with associated block partitions of sizes
\(m_1,m_2,m_3\), respectively, and let \(G_1,G_2\) be operations mapping \(k_1\to k_2\) and
\(k_2\to k_3\) qubits.

Suppose \(\mathcal{P}_1\) is a
\((\mathcal{Q}_1,\vec t_1,\mathcal{Q}_2,\vec t_2,F_1,\beta_1)\)-partially fault-tolerant protocol
for \(G_1\), witnessed by a constant \(c_1>0\), and \(\mathcal{P}_2\) is a
\((\mathcal{Q}_2,\vec t_2,\mathcal{Q}_3,\vec t_3,F_2,\beta_2)\)-partially fault-tolerant protocol
for \(G_2\), witnessed by a constant \(c_2>0\). Then \(\mathcal{P}_2\circ \mathcal{P}_1\) is a
\((\mathcal{Q}_1,\vec t_1,\mathcal{Q}_3,\vec t_3,F_1\cup F_2,\beta_{12})\)-partially
fault-tolerant protocol for \(G_2\circ G_1\), witnessed by
$c_{12}=c_2c_1$ and $\beta_{12}=\min\{c_2\beta_1+\beta_2,1\}.
$
\end{lemma}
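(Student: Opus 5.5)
The plan is to mirror the proof of sequential composability (\Cref{def:seq-composability}), adding bookkeeping for the set of bad blocks. Fix $0\le\alpha\le1$, a $k_1$-qubit state $\ket{\psi}$, and an input $\ket{\Psi}$ that is Pauli $(\alpha,\vec t_1)$-deviated from $\Enc_{\mathcal Q_1}(\ket{\psi})$. Let $B_0$ be its set of bad input blocks, with $|B_0|\le \alpha m_1$. Fix also an $(F_1\cup F_2)$-edge faulty execution with Pauli fault pattern $\mathsf F$, and restrict it to $\mathsf F_1,\mathsf F_2$ on the two halves. As before, $(\mathcal P_2\circ\mathcal P_1)_{\mathsf F}=(\mathcal P_2)_{\mathsf F_2}\circ(\mathcal P_1)_{\mathsf F_1}$, with $\supp(\mathsf F_i)\subseteq F_i$.

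\textbf{First stage.} Apply the partial fault-tolerance of $\mathcal P_1$ to get
\[
\ket{\Omega}:=(\mathcal P_1)_{\mathsf F_1}\ket{\Psi}=\sum_i \ket{\phi^{(i)}}\otimes\ket{\xi_i}.
\]
Here each $\ket{\phi^{(i)}}$ equals a Pauli $P_i$ applied to $\Enc_{\mathcal Q_2}(G_1\ket{\psi})$, and every $P_i$ satisfies the $\vec t_2$ bound on all blocks outside a \emph{common} set $B_1$ with $|B_1|\le \min\{c_1\alpha+\beta_1,1\}\,m_2$. By the last clause of \Cref{def:partial-FT}, $B_1$ depends only on $B_0$ and $\mathsf F_1$. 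So for every $i$, the term $P_i\Enc_{\mathcal Q_2}(G_1\ket{\psi})$ is a Pauli $(\alpha_1,\vec t_2)$-deviation with the same bad set $B_1$, where $\alpha_1=|B_1|/m_2$. Registers not acted on by $\mathcal P_2$ are treated as untouched references, as in the footnote to \Cref{lem:parallel-composability}.

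\textbf{Second stage.} Apply the partial fault-tolerance of $\mathcal P_2$ termwise with fault pattern $\mathsf F_2$. Each term becomes $\sum_j Q_{ij}\Enc_{\mathcal Q_3}(G_2G_1\ket{\psi})\otimes\ket{\eta_{ij}}$, with the $\vec t_3$ bound holding outside a set $B_2$ of size at most $\min\{c_2\alpha_1+\beta_2,1\}\,m_3$. The point needing care, and the main obstacle, is ensuring $B_2$ is the same for all $i$; otherwise the recombined sum would not be a partial deviation with a single bad-register set. This is exactly what the ``chosen as a function only of the bad input blocks and $\mathsf F$'' clause provides: all terms share the bad input set $B_1$ and the pattern $\mathsf F_2$, so they share the output set $B_2=B_2(B_1,\mathsf F_2)$. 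Composing, $B_2$ is a function of $B_0$ and $\mathsf F$ alone, which preserves the extra clause for the composite protocol. Summing over $i,j$ by linearity (\Cref{remark:pauli}) then gives that the output is $(|B_2|/m_3,\vec t_3)$-deviated from $\Enc_{\mathcal Q_3}(G_2G_1\ket{\psi})$.

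\textbf{Arithmetic.} It remains to bound the fraction:
\[
c_2\min\{c_1\alpha+\beta_1,1\}+\beta_2\le c_2c_1\alpha+c_2\beta_1+\beta_2 .
\]
Taking the minimum with $1$, and noting that the case $\beta_{12}=1$ is trivial, yields $\min\{c_{12}\alpha+\beta_{12},1\}$ with $c_{12}=c_2c_1$ and $\beta_{12}=\min\{c_2\beta_1+\beta_2,1\}$.
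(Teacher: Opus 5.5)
Your proposal is correct and follows essentially the same route as the paper's proof. Both arguments track the common bad set $B_1=B_1(B_0,\mathsf F_1)$ after the first stage, use the fixed-set clause of the partial fault-tolerance definition to obtain a single $B_2=B_2(B_1,\mathsf F_2)$ for every term of the decomposition, recombine by linearity, and use the same arithmetic for $c_{12}$ and $\beta_{12}$.
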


\begin{proof}
Fix \(0\le \alpha\le 1\), a \(k_1\)-qubit pure state \(\ket{\psi}\), a Pauli
\((\alpha,\vec t_1)\)-deviation \(\ket{\Psi}\) of \(\Enc_{\mathcal Q_1}(\ket{\psi})\), and an
\((F_1\cup F_2)\)-edge faulty execution of \(\mathcal P_2\circ \mathcal P_1\), specified by a
Pauli fault pattern \(\mathsf F\). Let \(\mathsf F_1,\mathsf F_2\) be the restrictions of
\(\mathsf F\) to \(\mathcal P_1\) and \(\mathcal P_2\), respectively.

By definition of \((\alpha,\vec t_1)\)-deviation, there is a set $B_0\subseteq [m_1]$ of bad input blocks with
$|B_0|\le \alpha m_1$
such that \(\ket{\Psi}\) is \(\vec t_1\)-deviated on every block outside \(B_0\).

Apply \(\mathcal P_1\). By the partial fault-tolerance guarantee for \(\mathcal P_1\),
there exists a set \(B_1\subseteq [m_2]\), depending only on \(B_0\) and \(\mathsf F_1\), such that
\[
\frac{|B_1|}{m_2}
\le
\min\!\left\{c_1\frac{|B_0|}{m_1}+\beta_1,\,1\right\}
\le
\min\{c_1\alpha+\beta_1,1\}
=: \alpha_1,
\]
and the intermediate state
$\ket{\Omega}:=(\mathcal P_1)_{\mathsf F_1}\ket{\Psi}$ 
is \((B_1,\vec t_2)\)-deviated from \(\Enc_{\mathcal Q_2}(G_1\ket{\psi})\), i.e. it is
\(\vec t_2\)-deviated on the output block outside \(B_1\).

Now write
\[
\ket{\Omega}=\sum_i \ket{\Omega_i}\otimes \ket{\xi_i},
\]
where each \(\ket{\Omega_i}\) is a Pauli \((B_1,\vec t_2)\)-deviation of
\(\Enc_{\mathcal Q_2}(G_1\ket{\psi})\). Since the bad-block set is the same \(B_1\) for every term,
the fixed-set guarantee for \(\mathcal P_2\) applies with the same input bad set \(B_1\) to every
\(\ket{\Omega_i}\). Hence there exists a set \(B_2\subseteq [m_3]\), depending only on \(B_1\) and
\(\mathsf F_2\), such that
\[
\frac{|B_2|}{m_3}
\le
\min\!\left\{c_2\frac{|B_1|}{m_2}+\beta_2,\,1\right\}
\le
\min\{c_2(c_1\alpha+\beta_1)+\beta_2,1\},
\]
and for every \(i\),
$(\mathcal P_2)_{\mathsf F_2}\ket{\Omega_i}$
is \((B_2,\vec t_3)\)-deviated from \(\Enc_{\mathcal Q_3}(G_2G_1\ket{\psi})\).

Therefore, by linearity,
\[
(\mathcal P_2\circ \mathcal P_1)_{\mathsf F}\ket{\Psi}
=
\sum_i (\mathcal P_2)_{\mathsf F_2}\ket{\Omega_i}\otimes \ket{\xi_i},
\]
so the final state is itself \((B_2,\vec t_3)\)-deviated from
\(\Enc_{\mathcal Q_3}(G_2G_1\ket{\psi})\). Thus it is
\((\alpha_2,\vec t_3)\)-deviated, where
\[
\alpha_2
=
\min\{c_2c_1\alpha+c_2\beta_1+\beta_2,1\}
=
\min\{c_{12}\alpha+\beta_{12},1\}.
\]
This is exactly the required partial fault-tolerance guarantee for
\(\mathcal P_2\circ \mathcal P_1\).
\end{proof}

\subsection{Clique-to-Clique Routing in the Quantum Setting}\label{subsec:quantum-routing}

\begin{prop}[Fault-tolerant quantum routing on an HDX]
\label{prop:ft-routing}
Let $X$ be the $d$-partite $\lambda$-HDX from~\Cref{lem:clique-routing} with $|X(1)|=n$ and for large enough $n \in \N$ and $|X(d)|=N$. Let $\pi:X(d)\rightarrow X(d)$ be any permutation and $c \in \N$ be any universal constant. Then we can efficiently construct a quantum routing protocol $\mathcal{R}$ on $G=(X(1),X(2))$ with round complexity $T=O(\log n)$ and work complexity $(1/\lambda)^{O(d^2)}$ and a constant $c_\mathrm{route} >0$
 such that $\mathcal{R}$ is a $(\mathcal{Q}^{\otimes N}, c_\mathrm{route} \delta , G_{\mu},  \frac{\mu}{\log^c n})$-partially FT protocol (with the universal constant $1$) for the $N$-qubit permutation unitary $U_\pi$ in the sense of~\Cref{def:partial-FT} with the following choice of HDX parameters in different regimes of $\mu$:
\begin{enumerate}
    \item \textbf{(Low corruption)} When $\mu \lll 1/\log^{c+1} n$, set $d > d_0$ for an absolute constant $d_0 \in \N$ and $\lambda \lll 1/\log^{c+1} n$.
    \item \textbf{(Almost-constant corruption)} When $ \mu \lll 1/\mathrm{polylogloglog} \ n$, set $d = \Theta(\log\log^3 n)$ and $\lambda = 1/\log^c n$.
\item \textbf{(Constant corruption)} When $\mu < \mu_0$ where $\mu_0 \in (0,1)$ is an absolute constant, set $d = \Theta(\log\log^2 n)$ and $\lambda = 1/\log^c n$. This assumes the existence of an asymptotically good quantum LTC. 
\end{enumerate}
\end{prop}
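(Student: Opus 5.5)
We follow the proof of \Cref{lem:clique-routing} essentially verbatim, replacing each classical majority transfer by the quantum clique-to-clique transfer of \Cref{algo:transfer}.

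\textbf{Construction.} We first build the graph $H$ on $X(d)$ given by the down-up walk $U^d_{\alpha d}D^d_{\alpha d}$, with $\alpha$ a small constant. By \Cref{lem:spectral_gap_of_graphs_from_HDX} we have $\sigma_2(H)\le 2\alpha$. Using the derandomized \cite{Nenadov23} construction from \cite{BT}, we fix $N$ paths $C_j\to C_{\pi(j)}$ in $H$. Each path has length $T'=O(\log N)$, and each clique lies on $O(\log N)$ paths. We then expand every $H$-edge $(C_{j,t-1},C_{j,t+1})$ through its shared $\alpha d$-face $C_{j,t}$.

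Because each $d$-face appears exactly once per layer, every layer $t$ is a permutation of codeblocks. The protocol $\mathcal R$ runs $T=2T'$ layers, and in each layer every block performs one execution of \Cref{algo:transfer}. The per-vertex work is $\poly(d)$ per transfer, times the number of cliques through a vertex, times $O(\log n)$ path multiplicity. This gives $(1/\lambda)^{O(d^2)}$ by \Cref{thm:hdx-construction}.

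\textbf{Single layer is partially FT.} Let $\cE$ be the faulty edge set with $\mu(\cE)\le\mu$. Call a $d$-face bad if it contains more than a $\beta$ fraction of faulty edges, where $\beta\le\beta^*_{\mathrm{transfer}}$ is chosen so that $c_{\mathrm{transfer}}\sqrt\beta\le c_\mathrm{route}\delta$. This requires $\beta=\Theta(\delta^2)$, a constant or $1/\pl$ depending on the regime.

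By \Cref{lem:clique-to-clique-transfer}, a transfer whose source and target $d$-cliques are both good maps a $c_\mathrm{route}\delta$-deviated block to a $c_\mathrm{route}\delta$-deviated block. This holds with a fixed output set, since the bad cliques are determined by $\mathsf F$ alone. A block that is already bad, or whose transfer touches a bad clique, is simply declared bad.

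So one layer is $(\mathcal Q^{\otimes N},c_\mathrm{route}\delta,F,\beta_1)$-partially FT with constant $c=1$, since the map is a permutation of blocks. Here $\beta_1\le 2\mu(\cB)$ times the multiplicity factor, where $\cB$ is the set of bad $d$-faces. Composing the $T$ layers with \Cref{lem:partial-ft-sequential} keeps the constant equal to $1$ and gives $\beta_{\rm tot}\le T\cdot O(\mu(\cB))=O(\log n)\,\mu(\cB)$.

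\textbf{Bounding $\mu(\cB)$ in the three regimes.} The final step is to show $O(\log n)\,\mu(\cB)\le \mu/\log^c n$ in each regime.

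\begin{itemize}
\item \emph{Low corruption.} Apply \Cref{lem:concentration-improved} with $i=2$. This gives $\mu(\cB)\lesssim \beta^{-2}\mu(\mu+4\lambda)$. Since $\mu,\lambda\lll 1/\log^{c+1}n$, this is at most $\mu/\log^{c+2}n$, provided $\beta$ is a constant so that $d>d_0$ suffices.
\item \emph{Almost-constant corruption.} Use \eqref{eq:hdx-chernoff-2} with $\eps=\beta-\mu\ge\beta/2$. This gives $\mu(\cB)\le\exp(-\Omega(\beta^2\sqrt d))$. We need $\beta^2\sqrt d\gg \log\log n$. With $\delta=1/\pl$ in the block length $d$, $\beta^2$ is $1/\polylog\log n$, so $d=\Theta(\log\log^3 n)$ suffices. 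This explains the hypothesis $\mu\lll 1/\plll n$ (since we need $\mu\le\beta/2$) and why the bound only yields $1/\log^{c+1}n$ rather than $\mu/\log^c n$; the $\mu$ factor is absorbed because $\mu$ is at least inverse-$\polylog\log$.
\item \emph{Constant corruption.} With a good qLTC, $\delta$ and $\beta$ are constants. Then $d=\Theta(\log\log^2 n)$ suffices, exactly as in \eqref{claim:a_few_bad_cliques}.
\end{itemize}

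\textbf{Main obstacle.} The hardest part is the bookkeeping in these parameter choices. The qLTC soundness $\delta$ is evaluated at block length $d$, and it feeds into $\beta$, which in turn controls the required $d$. Checking that this loop closes in the almost-constant regime, and that $\lambda=1/\log^c n\le 1/\poly(d)$ meets the Chernoff hypotheses, is where the delicate work lies.

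A secondary subtlety is the partial FT definition itself. We must confirm that the bad output set depends only on $\mathsf F$ and the bad input blocks, and not on the Pauli deviation on good blocks. This holds because badness is defined via faulty-edge counts.
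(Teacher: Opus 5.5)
Your construction and accounting agree with the paper's, and the low-corruption regime is fine. The paper counts failing paths using that each clique lies on $O(\log N)$ paths, instead of composing per-layer gadgets via \Cref{lem:partial-ft-sequential}, but both give $O(\log n)\,\mu(\cB)$. The genuine gap is in regimes 2 and 3. There you need $O(\log n)\,\mu(\cB)\le\mu/\log^c n$, but \eqref{eq:hdx-chernoff-2} only gives a tail $\exp(-\Omega(\beta^2\sqrt d))$ that does not scale with $\mu$. Your way of absorbing the missing factor, ``$\mu$ is at least inverse-polyloglog,'' is false: $\mu\lll 1/\mathrm{polylogloglog}\,n$ (resp.\ $\mu<\mu_0$) is an \emph{upper} bound. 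For, say, $\mu=1/n$, a tail that is only $\log^{-\omega(1)}n$ (resp.\ $\log^{-O(1)}n$) is not below $\mu/\log^{c+1}n$. Your constant-regime bullet has the same defect silently, since \eqref{claim:a_few_bad_cliques} only yields $1/\log^{c+1}n$. The paper's proof also simply asserts $\exp(\cdot)\le\mu/\log^{c+1}n$ at this point, so you have found a real soft spot, but your justification does not close it.

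To close it, split on $\mu$. For $\mu\gtrsim\log^{c+1}n\cdot e^{-\Omega(\beta^2\sqrt d)}$, the Chernoff argument works as you wrote it. For smaller $\mu$, use \Cref{lem:concentration-improved}. It applies because $d\gg\beta^{-2}$ in both regimes, and it gives $\mu(\cB)\lesssim\beta^{-2}\mu(\mu+4\lambda)$, which is linear in $\mu$. This absorbs the $O(\log n)$ path-length factor once $\mu,\lambda\lesssim\beta^2/\log^{c+1}n$. So $\lambda=1/\log^c n$ has to be shrunk slightly, which is harmless since $c$ is arbitrary. The two ranges overlap: in regime 2 the Chernoff tail is $\log^{-\omega(1)}n$, and in regime 3 it can be made $\le\log^{-2c-3}n$ by the constant in $d=\Theta(\log\log^2 n)$. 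Two smaller points. First, choose $\beta$ with $c_{\mathrm{transfer}}\sqrt\beta=c_{\mathrm{route}}\delta$ rather than $\le$, since \Cref{lem:clique-to-clique-transfer} has equal input and output radius. Second, in regime 1 your bound is $\mu/\log^{c+1}n$ times the small $\lll$ constant, not $\mu/\log^{c+2}n$, which still suffices.
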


\begin{proof}
\textbf{Construction of $\mathcal{R}$:} We set up the paths as in the proof of~\Cref{lem:clique-routing}, and we use the clique-to-clique transfer protocol from~\Cref{lem:clique-to-clique-transfer} to transfer the qubits along the paths.
\\
\textbf{Error analysis:} The main difference is that~\Cref{lem:clique-routing} only guarantees that the majority value on the target clique is correct, i.e., the underlying codeword held by each clique is the length-$d$ repetition code, and the innder code used in the clique-to-clique transfer is also a repetition code of length $\alpha d$. In our case, the underlying code is a length-$d$ MPC-supporting quantum code $\mathcal{Q}$ with decoding radius $\delta= 1/\mathrm{polylog}(d)$ (\Cref{assumption:nice-qLTC}), while the inner code is another length-$\alpha d$ quantum code with decoding radius $\rho_\mathrm{in}=\Theta(1)$ in our clique-to-clique transfer~\Cref{lem:clique-to-clique-transfer}. Thus, we need to set the parameters accordingly in the HDX concentration inequalities. In particular, let $\mu$-fraction of the edges be corrupted
and define the set $\mathcal{B} \subseteq X(d)$ of bad $d$-faces as those that contain at least $\beta \lesssim (\alpha \delta)^2 = 1/\mathrm{polylog}(d)$-fraction of bad edges. 
\begin{enumerate}
    \item In the low corruption regime, we pick a large enough constant $d$ such that $d \ggg 1/\beta^2$ and $\beta \lesssim 1/\pl d$, and we pick $\lambda \lll 1/\log^{c+1} n$. Then for any $\mu \leq \lambda$ we apply~\Cref{lem:concentration-improved} (with $i=2$) and find that the fraction of bad cliques is upper bounded by
    $$ \mu(\mathcal{B}) \leq  \frac{ 1 + O\left(\frac{1}{d} \right)}{\beta^2 - O\left(\frac{1}{d} \right)} \cdot \mu \cdot (\mu  + 4\lambda) \leq  O_d(1)\mu \lambda.$$
    \item In the almost-constant corruption regime, we use~\Cref{lem:chernoff-hdx} (in particular,~\Cref{eq:hdx-chernoff-2}) with $i=2$, $\varepsilon= \beta/2$ and $\mu \leq \beta/2$ to get that the fraction of bad cliques is upper bounded by
    \begin{align}
        \mu(\mathcal{B}) \leq \exp\left(-c'\varepsilon^2 \frac{\sqrt{d}}{2}\right) \leq \exp\left(-\Theta\left(\frac{\log\log^{3/2} n}{\mathrm{polylogloglog}\ n} \right)  \right) \leq \frac{\mu}{\log^{c+1} n}.
    \end{align}
    \item In the constant corruption regime, we assume a good quantum LTC satisfying~\Cref{assumption:nice-qLTC} with decoding radius $\delta_0= \Theta(1)$. Then we can again use~\Cref{lem:chernoff-hdx} with $\beta, \varepsilon \lll \delta_0^2 $ and bound
    \begin{align}
        \mu(\mathcal{B}) \leq \exp\left(-c'\varepsilon^2 \frac{\sqrt{d}}{2}\right) \leq \exp\left(-\Theta\left(\log\log n \right)  \right) \leq \frac{\mu}{\log^{c+1} n}.
    \end{align}
\end{enumerate}
Each clique participates in at most $O(\log N)$ paths, so we can guarantee the fraction of paths that fail is at most $\mu \lambda \cdot O(\log N) \leq \mu/\log^c n$ for large enough $n$ in all cases. 

Now we can apply~\Cref{lem:clique-to-clique-transfer} to each path and conclude that at the end of the protocol, for at least $1-\mu/\log^c n$ fraction of the paths, the underlying codeword held by the target clique is a $c_\mathrm{route} \delta$-deviation of the codeword held by the source clique, where $c_\mathrm{route}$ is a constant that depends on the parameters of the inner and outer codes used in~\Cref{lem:clique-to-clique-transfer}. This proves that $\mathcal{R}$ is a $(\mathcal{Q}^{\otimes N}, c_\mathrm{route} \delta , G_{\mu},  \frac{\mu}{\log^c n})$-partially FT protocol for $U_\pi$.

\end{proof}

%% file: ckt-to-protocol-v2.tex
\section{Distributed Computing Protocol  for Commuting LH}\label{sec:protocol-for-circuit}
In~\Cref{sec:overview}, we described a template for our gap amplification. Here we formalize it in the setting where the input local Hamiltonian is commuting. This provides many of the ideas required in the analysis of our basic protocol, so we present it before moving to the more general case of non-commuting local Hamiltonians in~\Cref{sec:nc-protocol}.

We start by considering the sequential measurement circuit of the given LH, and then describe a distributed computing protocol to implement this circuit robustly on an HDX. Let us start by defining this circuit more formally,

\begin{definition}[Sequential measurement circuit]\label{def:sequential-measurement-circuit-appendix}
Fix constants \(k,g\). We consider Hamiltonians
\(H=\sum_{i=1}^m h_i\) on \(n\) qubits such that each $h_i$ is a projector on $k$ qubits that can be described using $O(1)$ bits, and each term \(h_i\) intersects the support of at most \(g=\Theta(1)\) other terms. Since each term intersects only \(O(1)\) other terms, the interaction graph of the terms has bounded degree. Hence the terms can be partitioned into \(\ell=\ell(g)=O(1)\) layers
\(H=\sum_{j=1}^\ell H_j\), where \(H_j=\sum_{a=1}^{m_j} h_{j,a}\) and the terms inside each \(H_j\) have pairwise disjoint supports.

For a layer \(j\) and a bit string \(y_j\in\{0,1\}^{m_j}\), define
\(Q^{(j)}_{y_j}:=\prod_{a=1}^{m_j}(y_{j,a}h_{j,a}+(1-y_{j,a})(I-h_{j,a}))\).
Because the layer terms are disjoint, the family \(\{Q^{(j)}_{y_j}\}_{y_j}\) consists of pairwise orthogonal projections satisfying \(\sum_{y_j}Q^{(j)}_{y_j}=I\). For \(\mathbf y=(y_1,\dots,y_L)\), set \(Q_{\mathbf y}:=Q^{(L)}_{y_L}\cdots Q^{(1)}_{y_1}\).

The \emph{sequential measurement circuit} \(C\) is the circuit that coherently measures every layer \(H_j\) into fresh ancillas, so that on any input state \(\ket{\psi}\),
\[
C\ket{\psi}=\sum_{\mathbf y}\ket{\mathbf y}_A\otimes Q_{\mathbf y}\ket{\psi}_S.
\]
Thus the output distribution is \(\Pr[\mathbf y]=\|Q_{\mathbf y}\ket{\psi}\|^2\).
\end{definition}

We will give a natural distributed computing protocol $\cP$ on an HDX $G=(X(1),X(2))$ to implement $C$. We will use a $[d,1,\Delta]$-MPC code $\mathcal{Q}$ and a $[d,1,d]$-repetition code $\mathsf{Rep}$, on each $d$-clique in the HDX. The protocol $\cP$ receives the encoded version of $N$ witness qubits, each encoded in the MPC code from~\Cref{sec:mpc} and distributed to one of the cliques. Additionally, each vertex starts out with a lot of ancillas, which are used in $\cP$ to prepare encoded ancillas required for executing $C$, as well as for other purposes throughout the protocol. At the end of the protocol, the $m$ measurement qubit blocks are converted to $m$ repetition code blocks, i.e. $\encq \rightarrow \encc$.

\begin{mdframed}
\begin{protocol}\mbox{}\label{algo:overall-protocol} Distributed Protocol $\cP$ for Sequential Measurement Circuit $C$\\
\textbf{Input:}  Every clique $i \in [N]$ in the HDX $G=(X(1), X(2))$ receives $k$ code blocks; each block is distributed to the $d$ players.\\
\textbf{Output:}
\begin{enumerate}
\item For each time-step $t \geq 1$ of circuit $C$ with gate fan-in $k$,
\begin{enumerate}
\item Clique $i$ runs the $i^{th}$ gate $G_{i,t}$ in $C$ using the quantum-multiparty computation protocol~\Cref{thm:distributed-FTQC}. This results in $k$ code blocks supported on the vertices of clique $i$, supposedly encoding the $k$ output wires in a quantum code, and the measurement wire in a classical repetition code (using the repetition code readout protocol~\Cref{lem:rep-code}).
\item Decompose the perfect matching formed by the outputs of the $t^{th}$-layer of $C$ and the inputs of the $(t+1)^{th}$ layer of $C$, into $k$ permutations $\pi_1,\ldots, \pi_k$\footnote{Note that we can always do this by Hall's theorem.}.
\item Route each permutation $\pi_i$ using the clique-to-clique routing in~\Cref{prop:ft-routing}. This results in clique $i$ holding $k$ code blocks, which are supposed to be encoding the input wires of $G_{i,t+1}$ in a quantum code. 
\end{enumerate}
\end{enumerate}
\end{protocol}
\end{mdframed}

In the protocol above, each measurement output in $C$, is now encoded in a repetition code and distributed across the vertices of the clique that computes this output. Thus in effect each vertex in $G$ actually contains a lot of measurement outputs -- one for every clique that it is part of, and it can take an OR over all the bits it sees and declare that as its output (with violation being denoted by a `1' output). The completeness of this protocol is quite straightforward -- given a perfectly encoded witness for an LH with zero (or inverse-doubly-exponential) energy, one can check that the protocol above run without any faults correctly computes all the measurements. Thus all the vertices would see zero violations in their local views, and output a `0'. The main technical work goes into analyzing the soundness. Here we show that even a faulty-run of the protocol (as dictated by the noise model that arises in~\Cref{sec:ckt-to-H}) ends up with a lot of vertices outputting `1' in expectation: $\E_v[\Tr(\Pi_v\ketbra{\Psi}{\Psi})$ is large, where $\ket{\Psi}$ is the input to the protocol $\cP$ and each $\Pi_v$ is the measurement operator that corresponds to taking an OR over all measurements that the vertex $v$ is part of (\Cref{def:vertex-check-operator}).

\begin{theorem}[Soundness of~\Cref{algo:overall-protocol}]\label{thm:commuting-main}
Suppose the input $k$-local Hamiltonian has (unnormalized) ground energy $\epsilon M$ and is commuting. Fix \(\cE\subseteq X(2)\) with \(\mu(\cE)  \lll \min(\epsilon \pl n,\ \eps_0)\).
    Consider an adversarially faulty execution of the distributed protocol $\cP$ for the sequential measurement circuit $C$ of $H$, so that the computation state can be expressed as
    \begin{align}
        \ket{\Psi} = \sum_{\sF\in \mathsf{PAULI}^{\cE}} c_{\sF} \cP_\sF \ket{\phi_\sF}.
    \end{align}
    and is a normalized state.
    Then,
    \begin{align}
        \mathbb{E}_{v} \Tr(\Pi_v \ketbra{\Psi}) \geq \epsilon {\pl n},
    \end{align}
where $\Pi_v$ is the vertex OR operator from~\Cref{def:vertex-check-operator}.
\end{theorem}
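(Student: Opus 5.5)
The plan is to reduce the soundness statement to a statement about a \emph{logical} faulty execution of the sequential measurement circuit $C$, and then exploit commutativity to make the measurement string essentially deterministic.

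\textbf{Step 1: bad cliques and bad paths.} Given $\cE$, call a $d$-clique bad if it contains more than a $\poly(\delta)$ fraction of edges from $\cE$; these are the cliques where \Cref{thm:distributed-FTQC} does not apply. By \Cref{lem:concentration-improved} (or \Cref{lem:chernoff-hdx} in the high-corruption regime), the fraction of bad cliques is at most $\mu(\cE)/\log^{c} n$. Likewise, \Cref{prop:ft-routing} shows that each routing step loses at most a $\mu(\cE)/\log^c n$ fraction of blocks. Chaining these over the $T=O(1)$ layers with \Cref{lem:partial-ft-sequential}, I would fix a set $\mathcal{B}$ of (clique, timestep) gate locations, depending only on $\cE$, with $|\mathcal B|\le O(\mu(\cE)/\log^c n)\cdot NT$. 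The conclusion is that every summand $\cP_\sF\ket{\phi_\sF}$ is deviated from a logical computation in which the gates outside $\mathcal B$ are exactly those of $C$, while the gates in $\mathcal B$ are arbitrary operators. Friendliness of the MPC protocol handles inputs that are far from the codespace: they are replaced by some encoded $\ket{\xi}$. \Cref{lem:rep-code} then says that every good clique's measurement bit is a repetition encoding deviated only on $W(F)$-vertices. By linearity (\Cref{remark:pauli}), the global state is $\sum_{\mathbf y}\ket{\mathrm{Rep}(\mathbf y)}$-deviated $\otimes\, \widetilde Q_{\mathbf y}\ket{\xi}$ on the good locations, where $\widetilde Q$ is $C$ with the bad gates replaced by arbitrary (possibly non-unitary) operators.

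\textbf{Step 2: commutativity.} Since $H$ is commuting, every $Q_{\mathbf y}$ restricted to good positions is a product of commuting projectors. I would expand the state in the common eigenbasis: $\ket{\xi}=\sum_{\lambda}\alpha_\lambda\ket{\lambda}$, where each $\ket{\lambda}$ has a deterministic outcome on every term. Arbitrary operators at the $|\mathcal B|$ bad locations can only change the eigenstate on the supports they touch, plus the $O(1)$-neighborhoods of those supports in later layers. So in each branch, the good measurements still report a string agreeing with some eigenstate $\ket{\lambda'}$ outside $O(|\mathcal B|)$ terms. Every eigenstate violates at least $\epsilon M$ terms, because the ground energy is $\epsilon M$ and the terms are projectors. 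Hence every branch $\mathbf y$ with nonzero amplitude has at least $\epsilon M - O(|\mathcal B|)\ge \epsilon M/2$ ones at good locations, using $\mu(\cE)\lll \epsilon\,\pl n$ and $|\mathcal B|\le \mu(\cE)NT/\log^c n$.

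\textbf{Step 3: amplification.} This step is where adversarial postselection could otherwise bite, but it is harmless here because the statement of Step 2 is per-branch. For a fixed branch, the set $S_{\mathbf y}$ of good cliques holding a 1 has density at least $\epsilon/2$. Vertex-clique sampling on the HDX, via \Cref{lem:concentration-improved} applied to the complement or via down-up expansion (\Cref{lem:spectral_gap_of_graphs_from_HDX}), gives that at least $\min(\epsilon\,\pl n,\Omega(1))$ of the vertices lie in some clique of $S_{\mathbf y}$. Removing the $2\eps/\delta$ fraction of vertices in $W(F)$, whose copies may be corrupted, still leaves an $\epsilon\,\pl n$ fraction. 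On such a vertex $v$, every term in the superposition lies in the image of $\Pi_v$, so $\Pi_v$ acts as the identity on the whole branch. Note that the deviations live on other players' registers and so do not affect $v$'s bit. Averaging then yields $\E_v\Tr(\Pi_v\ketbra{\Psi})\ge \epsilon\,\pl n$, since $\Pi_v$ is diagonal in the $\mathbf y$ basis and the branches are orthogonal on good vertices.

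\textbf{Main obstacle.} I expect Step 2 to be the hardest part. Making precise that arbitrary non-unitary bad gates cannot move a branch to a low-violation string requires tracking the light cones of $\mathcal B$ through $O(1)$ layers. It also requires arguing orthogonality of different $\mathbf y$ branches on the good vertices, so that the cross terms vanish despite the coherent sum over $\sF$.
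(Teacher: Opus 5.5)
Your proposal is correct in substance, but after the fault-tolerance step it takes a different route from the paper.

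\textbf{What the paper does.} Starting from \Cref{lem:ft}, the paper proceeds in five steps:
\begin{enumerate}
\item It passes to the logical level with reversible decoders (\Cref{claim:KL}).
\item It builds a logical density matrix $\rho$ that reproduces all statistics of encoded observables (\Cref{lem:rho}).
\item It uses commutativity to absorb the arbitrary logical $Z$-errors on measurement wires as $\pm1$ phases on eigen-branches. It then conjugates the remaining errors back to the input, obtaining a valid input $\rho'$ that agrees with $\rho$ outside $K'=\FL(\BL(K_{\mathrm{data}}\cup K_{\mathrm{meas}}))$ (\Cref{lem:soundness-of-rho}).
\item It identifies the physical checks $\Pi'_v$ with $\REnc(\pi'_v)$ (\Cref{lem:Pi-to-pi}).
\item Only then does it run the serial-repetition and sampling argument on $C\rho' C^\dagger$ (\Cref{lem:serial-rep}).
\end{enumerate}

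\textbf{What you do instead.} You argue directly about the computational-basis support of $\ket{\Psi}$ on the good copies of measurement wires, meaning wires in good cliques outside $\LC(L_1,\dots,L_\ell)$, held by vertices $v\notin W_i$. Each summand of the coherent sum has support inside $\{\Rep(b_\lambda)|_{\mathrm{good}}\}$, where $b_\lambda$ is the deterministic outcome string of a common eigenvector. The support of a sum lies in the union of the supports. Since $\Pi_v\succeq\Pi'_v$ is diagonal, $\E_v\Tr(\Pi_v\ketbra{\Psi}{\Psi})$ is a convex combination of per-string OR-fractions.

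\textbf{What each approach buys.} Your route is shorter. It needs no reversible decoding, no $\rho'$, and no identification of physical with logical checks, because output-level errors on data blocks never touch the good measurement copies. It also makes transparent that postselection can only reweight branches, never create new ones. The price is that it is intrinsically a commuting-case argument. The paper's reduction of physical statistics to a logical faulty execution is the template reused in \Cref{sec:nc-protocol}, where low-weight branches do carry amplitude and a support argument fails.

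\textbf{Your Step 2 obstacle.} The obstacle you flag in Step 2 largely disappears if you use \Cref{lem:ft} as stated. There all errors act after the ideal $\Enc_{\mathrm{out}}(C(\cdot))$ and are trivial on good copies. The per-branch claim is then immediate, with only the $\lll\epsilon M$ light-cone wires lost. You do need, as you assert, that the $L_t$ depend only on $\cE$.

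\textbf{A step that needs repair.} In Step 3 you remove ``the $2\eps/\delta$ fraction of vertices in $W(F)$''. However, bad vertices are defined per clique (\Cref{def:bad-cliques}): a vertex can be good in one clique and bad in another. So there is no global small set whose removal guarantees that the remaining vertices of $\bigcup_{i\in S_{\mathbf y}}D_i$ see a good copy of a $1$.

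Instead, let $T_{\mathbf y}$ be the set of vertices that are good in at least one clique of $S_{\mathbf y}$ containing them.
\begin{itemize}
\item Every $D_i\in S_{\mathbf y}$ is good, so $|W_i|<(2\eps_0/\delta)d\le d/2$. Hence $\mu(T_{\mathbf y}|_{D_i})\ge 1/2$ for all such $D_i$.
\item Applying \Cref{lem:concentration-improved} to $1_{T_{\mathbf y}}$ gives $\mu(S_{\mathbf y})\lll \mu(T_{\mathbf y})(\mu(T_{\mathbf y})+\lambda)$. Therefore $\mu(T_{\mathbf y})\gtrsim\min(\sqrt{\epsilon},\epsilon/\lambda)$, using \Cref{lem:chernoff-hdx} in the constant regime.
\end{itemize}
Every $v\in T_{\mathbf y}$ then holds a good copy equal to $1$, so $\Pi_v$ acts as the identity on that branch, and the rest of your averaging goes through.
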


The rest of this section is organized as follows. In~\Cref{subsec:commuting-completeness}, we prove the completeness of the protocol.
The proof of soundness begins by recording some facts about commuting Hamiltonians in~\Cref{sec:robust-ckt}, which help us avoid the postselection attacks discussed in the overview. Next in~\Cref{sec:serial-repetition} we show a `serial repetition' type of statement, i.e. the soundness of the OR checks if performed on a non-faulty run of $C$ (in the logical space). Then in the rest of the section, we show how to interpret the state $\ket{\Psi}$ above as the encoded output of the circuit $C$ run on an appropriate valid state~\Cref{subsec:FT-commuting-protocol} and~\Cref{subsec:physical-to-logical-commuting-protocol}. After doing this, we derive the soundness statement for $\ket{\Psi}$ using an application of the serial repetition lemma in~\Cref{subsec:soundness-commuting-protocol}.

\subsection{Completeness}\label{subsec:commuting-completeness}
We first prove the completeness of the protocol for zero-energy LH.
\begin{claim}\label{claim:completeness-commuting}
    Let the input Hamiltonian $H$ have a zero-energy ground state $\ket{\xi_0}$, and consider the state
    $
        \ket{\Psi_0} = \cP  \encq(\ket{\xi_0}).
    $
    Then,
    $
        \mathbb{E}_{v} \Tr(\Pi_v \ketbra{\Psi_0}) =0.
    $
\end{claim}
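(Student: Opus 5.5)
The plan is to run the fault-free protocol $\cP$ on $\encq(\ket{\xi_0})$ and track the ideal logical state through each layer, using the $F=\varnothing$ case of every fault-tolerance statement. We want to show that the repetition-code measurement registers all end up in $\ket{0^d}$. With no faults, every invocation of \Cref{thm:distributed-FTQC}, \Cref{lem:rep-code} and \Cref{prop:ft-routing} has empty faulty-edge set. In this regime each gadget outputs an exact encoding of the ideal logical output. For error suppression this follows from the $F=\varnothing$ branch of the proof of \Cref{prop:error-suppression}, where $t_R=0$. For VQSS it follows from completeness of \Cref{prop:vqss} with $C'_{\mathrm{ver}}=\varnothing$, since $C=\varnothing$ forces $L_\chi=\varnothing$ when there are no bad players. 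For routing, no clique is bad, so \Cref{lem:clique-to-clique-transfer} applies on every path with zero deviation.

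The steps, in order, are as follows.
\begin{enumerate}
\item The input $\encq(\ket{\xi_0})$ is an undeviated encoding, and the state-replacement branch of \Cref{algo:vqss-state-replacement} is never triggered (VQSS accepts). Hence after the input sanitization stage the logical state is exactly $\ket{\xi_0}$.
\item By induction over timesteps $t$, sequential composability (\Cref{def:seq-composability}) and parallel composability (\Cref{lem:parallel-composability}) across cliques show the following. After the MPC and routing of layer $t$, the joint state is the exact encoding of the state of the logical circuit $C$ after $t$ layers. That state is $\sum_{\mathbf y}\ket{\mathbf y}\otimes Q_{\mathbf y}\ket{\xi_0}$ truncated to the first $t$ layers, with measurement bits in $\Enc_{\mathrm{rep}}$ by \Cref{lem:rep-code}.
\item Since $H\ket{\xi_0}=0$ and each $h_i\succeq 0$, every $h_i\ket{\xi_0}=0$. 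Therefore $Q_{\mathbf y}\ket{\xi_0}=0$ unless $\mathbf y=0$, and $Q_{0}\ket{\xi_0}=\ket{\xi_0}$. This holds because $(I-h_{j,a})\ket{\xi_0}=\ket{\xi_0}$ layer by layer; here we use that all terms annihilate $\xi_0$, so commutation is not even needed.
\end{enumerate}

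The final state is therefore the product $\Enc_{\mathrm{rep}}(\ket{0})^{\otimes NT}\otimes \encq(\ket{\xi_0})\otimes\ket{\mathrm{rest}}$. Every vertex $v$ sees only zeros on the repetition shares of every clique it belongs to. Since $\Pi_v$ projects onto the event that the OR of these bits is $1$, we get $\Pi_v\ket{\Psi_0}=0$ for each $v$, and averaging over $v$ gives $0$.

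The main obstacle is step 2: checking that the cited propositions, stated as deviation bounds, really give exact (zero-deviation) outputs when $F=\varnothing$. Exactness is needed so that even a negligible corruption of the measurement repetition blocks cannot occur; in particular, the Pauli deviations in \Cref{lem:rep-code} must be supported on $W(\varnothing)=\varnothing$. I would handle this by noting that every deviation set in those proofs is built from bad players or bad edges, all of which are empty here. The sole exception is the error-suppression residual, which the integer-rounding argument shows is $0$.
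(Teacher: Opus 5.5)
Your proposal is correct and follows the same route as the paper. The paper's proof simply notes that in the fault-free run on $\encq(\ket{\xi_0})$ the measurement registers are repetition encodings of logical $0$, so every local view is all zeros and each $\Pi_v$ annihilates $\ket{\Psi_0}$. Your extra care makes explicit what the paper leaves as ``by construction'': you argue exactness at $F=\varnothing$ from inside the proofs, which is needed because the stated bounds are in terms of the design parameter $\eps$ rather than $|F|$, and you derive the deterministic all-zero outcome from $h_i\ket{\xi_0}=0$.
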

\begin{proof} By construction, the output energy measurement qubits of the protocol are a repetition code encoding of the logical values $0$. Thus all local views contain only $0$'s.
\end{proof}

\subsection{Robust Measurement Circuit for a Gapped Commuting Hamiltonian}\label{sec:robust-ckt}

Below is a standard fact about commuting Hamiltonians,
\begin{fact}[Simultaneous diagonalization of commuting Hamiltonian]\label{fact:clh-eigenspace}
Let $H_1,\dots,H_m$ be Hermitian operators on  the Hilbert space $\mathcal H$ (each $H_i$ may be local, but is
viewed as an operator on all of $\mathcal H$ by tensoring with identities), such that they commute pairwise: $[H_i,H_j]=0, \forall\, i,j.$
Then there exists an orthonormal basis $\{\,|\psi_\alpha\rangle\,\}_\alpha$ of $\mathcal H$
and real numbers $\lambda_{i,\alpha}$ such that for every $i$ and $\alpha$,
\[
H_i|\psi_\alpha\rangle=\lambda_{i,\alpha}|\psi_\alpha\rangle.
\]
\end{fact}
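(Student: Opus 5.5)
The plan is the standard proof via the spectral theorem for commuting normal operators, by induction on the number $m$ of operators. Since each $H_i$ is Hermitian, it has a spectral decomposition $H_1=\sum_{\lambda}\lambda P_\lambda$, where the $P_\lambda$ are orthogonal projections onto eigenspaces $\mathcal H_\lambda$. These eigenspaces are pairwise orthogonal and satisfy $\bigoplus_\lambda \mathcal H_\lambda=\mathcal H$. The base case $m=1$ is exactly this spectral theorem: we pick an orthonormal basis within each eigenspace.

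For the inductive step, the key observation is that commutation makes each eigenspace of $H_1$ invariant under every other $H_j$. Indeed, if $H_1\ket{v}=\lambda\ket{v}$, then $H_1H_j\ket{v}=H_jH_1\ket{v}=\lambda H_j\ket{v}$, so $H_j\mathcal H_\lambda\subseteq\mathcal H_\lambda$. Hence the restrictions $H_j|_{\mathcal H_\lambda}$ for $j=2,\dots,m$ are Hermitian operators on the finite-dimensional space $\mathcal H_\lambda$. They remain Hermitian because the restriction of a Hermitian operator to an invariant subspace is self-adjoint with respect to the inherited inner product. They also still pairwise commute. The induction hypothesis then applies to these $m-1$ operators on $\mathcal H_\lambda$. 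It yields an orthonormal basis of $\mathcal H_\lambda$ consisting of common eigenvectors of $H_2,\dots,H_m$, and each of these vectors is automatically an eigenvector of $H_1$ with eigenvalue $\lambda$.

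To finish, we take the union over $\lambda$ of these bases. Since the $\mathcal H_\lambda$ are mutually orthogonal and together span $\mathcal H$, the union is an orthonormal basis $\{\ket{\psi_\alpha}\}$ of $\mathcal H$ with $H_i\ket{\psi_\alpha}=\lambda_{i,\alpha}\ket{\psi_\alpha}$ for all $i$. The eigenvalues $\lambda_{i,\alpha}$ are real because each $H_i$ is Hermitian.

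The only step needing any care is the invariance of the eigenspaces under the other operators together with self-adjointness of the restrictions, which is where the commutation hypothesis enters. Finite-dimensionality of the qubit Hilbert space removes any analytic subtleties.
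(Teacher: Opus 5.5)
Your proof is correct: it is the standard induction on $m$, using the fact that each eigenspace of $H_1$ is invariant under the commuting $H_j$. Finite-dimensionality, which holds in the paper's qubit setting, is what justifies applying the spectral theorem at each step. The paper gives no proof of this fact and simply calls it standard, so your argument is the textbook reasoning it relies on.
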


\begin{fact}\label{lem:clh-circuit}
Consider a $k$-local commuting Hamiltonian $H=\sum_i \Pi_i$ with common eigenbasis $\cB$ (from~\Cref{fact:clh-eigenspace}) and energy $\delta$. Let $C$ be the sequential measurement circuit for $H$. Given any input $\ket{\phi} = \sum_{\ket{v}\in \cB} c_v \ket{v}$ to $C$ we have that,
\[C(\ket{\phi}\otimes \ket{0}^m) = \sum_{\ket{v}\in \cB} c_v\ket{v}\otimes \ket{b_v},\]
where each $b_v \in \{0,1\}^m$ with Hamming weight at least $\delta m$.
\end{fact}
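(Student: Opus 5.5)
The plan is to work in the simultaneous eigenbasis $\cB$ from \Cref{fact:clh-eigenspace}. Each term $\Pi_i$ of $H$ is a projector, so on every common eigenvector $\ket{v}\in\cB$ it acts as multiplication by an eigenvalue $\lambda_{i,v}\in\{0,1\}$. Define $b_v\in\{0,1\}^m$ by $(b_v)_i:=\lambda_{i,v}$. Then, for each layer $j$ and outcome string $y_j$, the layer projector $Q^{(j)}_{y_j}$ from \Cref{def:sequential-measurement-circuit-appendix} acts on $\ket{v}$ as multiplication by $\prod_a [y_{j,a}=(b_v)_{j,a}]$. This is an indicator, so $Q^{(j)}_{y_j}\ket{v}=\ket{v}$ if $y_j$ agrees with $b_v$ on layer $j$, and $Q^{(j)}_{y_j}\ket{v}=0$ otherwise.

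Composing over the layers gives $Q_{\mathbf y}\ket{v}=\ket{v}$ when $\mathbf y=b_v$, and $Q_{\mathbf y}\ket{v}=0$ otherwise. The identity $C\ket{\psi}=\sum_{\mathbf y}\ket{\mathbf y}\otimes Q_{\mathbf y}\ket{\psi}$ is linear in $\ket{\psi}$. Expanding $\ket{\phi}=\sum_v c_v\ket{v}$ and applying it termwise therefore gives $C(\ket{\phi}\otimes\ket{0}^m)=\sum_v c_v\ket{v}\otimes\ket{b_v}$, after reordering registers to match the statement.

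It remains to bound the Hamming weight of each $b_v$. Its weight is $\sum_i \lambda_{i,v}=\bra{v}H\ket{v}$, the energy of the eigenvector $\ket{v}$. I read ``energy $\delta$'' as saying that the ground energy of $H$ is at least $\delta m$, in unnormalized form matching the normalization $\epsilon M$ of \Cref{thm:commuting-main}. Under that reading, $|b_v|=\bra{v}H\ket{v}\ge\lambda_{\min}(H)\ge\delta m$ for every $v$.

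None of these steps is hard. The only point needing care is that this normalization convention has to agree with how the fact is later invoked. The bigger issue is that the argument uses both commutativity, through the common eigenbasis, and the projector property of each term, to get $\{0,1\}$ eigenvalues. Without them the branches are no longer deterministic, and this failure of determinism is exactly what the non-commuting case has to handle separately.
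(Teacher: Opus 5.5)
Your proof is correct and follows the paper's argument. In the common eigenbasis each projector term acts as $0$ or $1$, so the sequential measurement circuit writes a deterministic string $b_v$, its weight is $\bra{v}H\ket{v}\ge\lambda_{\min}(H)\ge\delta m$, and linearity extends this to a general input. Your version differs only in being more explicit: you route the determinism through the $Q_{\mathbf y}$ formalism and state the unnormalized reading of ``energy $\delta$,'' which matches how the fact is later used in \Cref{lem:serial-rep}.
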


\begin{proof}
Since $\ket{v} \in \cB$, $\Pi_i \ket{v}$ is either $0$ or $1\cdot\ket{v}$, and the corresponding gate in the sequential measurement circuit $C$ outputs $\ket{v}\otimes$ $\ket{0}$ or $\ket{v}\otimes \ket{1}$ respectively. So $C(\ket{v} \otimes \ket{0}^m)$ would be equal to $\ket{v}\otimes \ket{b_v}$ for some length $m$ bit string $b_v$. The energy of $H$ further implies that the Hamming weight of $b_v$ must be at least $\delta m$. Thus, given a general input $\ket{\phi}\otimes\ket{0}^m$, applying the above and using linearity completes the proof.
\end{proof}

We will use this fact to prove a version of the robust detectability lemma for the commuting case later on in the section. This turns out to be simpler than the full detectability lemma, so we include the proof in~\Cref{lem:soundness-of-rho}.

\subsection{Soundness of Serial Repetition}\label{sec:serial-repetition}
For each vertex \(v\in X(1)\), let
\[
S_v:=\{(i,t)\in [N]\times [\ell] : v\in D_i\}.
\]
The ideal logical vertex check is the OR of all measurement wires whose clique contains \(v\).

\begin{definition}[Logical vertex OR-check]\label{def:logical-vertex-check}
Define the projector \(\pi_v\) on the logical measurement wires of \(C\) by
\[
\pi_v
:=
\sum_{b\in \{0,1\}^{S_v}} \mathrm{OR}_{S_v}(b)\,\ket{b}\!\bra{b},
\]
with the identity on all remaining output registers of \(C\). Equivalently, \(\pi_v\) measures the bits \(\{M_{i,t} : (i,t)\in S_v\}\) in the computational basis and outputs their OR.
\end{definition}

Using the \(\pi_v\)'s defined above, one can show that for all valid mixed states \(\rho\) on the input of \(C\), if $C$ measures a local Hamiltonian with energy $\mu$ then,
\begin{equation}\label{eq:serial-rep}
\E_v\left[\Tr[\pi_v C\rho C^\dagger]\right] \gtrsim \min(\mu\pl n, c_0),
\end{equation}
for some universal constant $c_0$. The idea is that for a local Hamiltonian with energy \(\mu\), then on every input $\rho$, running the circuit and then measuring all the outcomes of $C^\dagger \rho C$, yields a string of bits $\Vec{b}$ that contains at least a \(\mu\)-fraction of ones. This is easy to show this for a commuting local Hamiltonian, but for a non-commuting one this follows from a stronger version of the detectability lemma proved in~\Cref{sec:DL-new}. Now fix such a string $\Vec{b}$ with $\mu$-fraction of $1$s. Given a fixed $\Vec{b}$, each trace with respect to $\pi_v$ simplifies to just an OR taken over a $\pl n$-sized subset of $\Vec{b}$. Now observe that, if one had taken all possible $\pl n$-sized ORs, then at least a $\mu\pl n$-fraction of them would turn out to be $1$. The set of ORs given by $\{\pi_v\}$, namely the family $\{S_v\}$, does not include every possible OR, instead it forms a ``pseudorandom'' family. Indeed, on every such $\Vec{b}$, using the concentration properties of an HDX, one can show that at least $\mu\pl n$-fraction of these ORs would output $1$. It is straightforward to make this argument formal, but instead of doing so, we move on to proving a harder version of~\eqref{eq:serial-rep} in~\Cref{lem:serial-rep}. This is the version we later reduce to in our fault-tolerance analysis of~\Cref{algo:overall-protocol}. Fortunately, the proof of~\Cref{lem:serial-rep} follows most of the intuition given here.

We start by defining modified versions of the $\pi_v$ operators, where we allow ourselves to ignore a small family of bad measurements that might arise due to corruptions in the protocol.
Concretely, fix an edge-corruption pattern \(\cE\subseteq X(2)\) that the distributed computing protocol is run with. Let $\cE$ have measure $\eps$ that is smaller than some universal constant $\eps_0$ to be chosen later. Also let us fix the quantum code $\cQ$ used for MPC throughout, and let $\delta$ be its distance parameter (\Cref{assumption:nice-qLTC}), with $\eps_0 <\delta/10$. We can thus define a set of bad cliques and bad vertices with respect to $\cE$.

\begin{definition}[Bad cliques and bad vertices]\label{def:bad-cliques}
For each clique \(D_i\), we call it \emph{\(\cE\)-bad} if it contains at least $\eps_0$-fraction of corrupted edges 
and \emph{\(\cE\)-good} otherwise; since $\cE$ is fixed throughout the section, we will henceforth drop it from the notation. Let
$\cB$ denote the set of bad cliques.

For each clique \(D_i\), let $\cE_i$ denote the set of corrupted edges inside it and define the set of bad vertices inside it as,
\[
W_i
:=
\{v\in D_i:\deg_{\cE_i}(v)>\delta d/2\},
\]
that is, the set of vertices in $D_i$ that are incident to too many corrupted edges in $D_i$. We say that \(v\) is good in $D_i$ if \(v\notin W_i\). 
\end{definition}

We can now define a modified version of the $\pi_v$ operators above; these ignore certain bad wires while taking the OR. 
\begin{definition}[Modified logical vertex OR-check]\label{def:logical-mod-measurement}
Let $J$ be any set of cliques. For \(v\in X(1)\), define
\[
S'_v(J)
:=
\{(i,t)\in [N]\times [\ell] : v\in D_i,\ i\notin J,\ v\notin W_i\}.
\]
The corresponding modified logical operator is
\[
\pi'_v(J)
:=
\sum_{b\in \{0,1\}^{S'_v(J)}} \mathrm{OR}_{S'_v(J)}(b)\,\ket{b}\!\bra{b},
\]
with identity on all other output registers of \(C\). When the set $J$ is clear from context, we simply write \(\pi'_v\).
\end{definition}

Given these operators, we now prove that the stronger version of~\eqref{eq:serial-rep} holds, with $\pi_v$ replaced by $\pi'_v$. 

\begin{lemma}\label{lem:serial-rep}
Let $c \geq 1$ be any constant and let \(X\) be a \(d\)-partite \(\lambda\)-HDX on $n$ vertices with $\lambda=\Theta(1/\log^c n)$. Let \(C\) be a sequential measurement circuit for a commuting local Hamiltonian\footnote{Even though we state the lemma here for commuting local Hamiltonians, the proof works almost as is for non-commuting ones too, except one needs to use the detectability lemma instead of~\Cref{lem:clh-circuit}.} with energy at least \(\mu\). Fix a set of corrupted edges $\cE$ of measure $\eps$, and the set of bad cliques $\cB$ \& bad vertices with respect to it with parameters $\eps_0 < \delta/2$ (see~\Cref{def:bad-cliques}). Let \(J\subseteq [N]\) be any set of cliques with \(\cB \subseteq J\) and \(|J|\le \eta N\), where \(\eta<\mu\). Then, for every valid input mixed state \(\rho\) to the circuit \(C\), we get bounds on the soundness of $\rho$ in two different regimes,
\begin{enumerate}
\item When $\mu-\eta \leq \lambda^2$, then we set $d$ to be a large enough constant and get that $
\E_v\left[\Tr[\pi'_v C\rho C^{\dagger}]\right] \geq \frac{\mu-\eta}{\lambda}$.
\item When $\mu-\eta \geq \lambda^2$, we set $d$ to be $\Theta(\mathrm{loglog}^2 n)$ to get that $
\E_v\left[\Tr[\pi'_v C\rho C^{\dagger}]\right] \geq 1/100$.
\end{enumerate}
\end{lemma}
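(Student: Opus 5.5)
The plan is to reduce to a classical statement about a fixed outcome string, and then finish with HDX sampling. By linearity of trace it suffices to treat a pure input $\rho=\ketbra{\phi}$; write $\ket{\phi}=\sum_{\ket{v}\in\cB}c_v\ket{v}$ in the common eigenbasis. By \Cref{lem:clh-circuit}, $C\ket{\phi}=\sum_v c_v\ket{v}\ket{b_v}$, where every $b_v\in\{0,1\}^{N\ell}$ has at least a $\mu$ fraction of ones. Each $\pi'_v$ is diagonal in the computational basis on the measurement wires, so the cross terms vanish and
\[
\E_u\Tr[\pi'_u C\rho C^\dagger]=\sum_v |c_v|^2\,\E_u\,\mathrm{OR}_{S'_u(J)}(b_v).
\]
It therefore suffices to prove, for every fixed string $b$ with $|b|\ge \mu N\ell$, the bound $\E_u \mathrm{OR}_{S'_u(J)}(b)\ge$ the claimed quantity. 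This is a purely combinatorial statement.

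Fix such a $b$. Let $V\subseteq X(d)$ be the set of cliques $i\notin J$ for which some $t$ has $b_{i,t}=1$. Since each clique carries only $\ell=O(1)$ wires and $|J|\le\eta N$, we get $\mu(V)\ge(\mu-\eta)/\ell$; I will absorb this constant, or else reduce $\eta$ by the same $\ell$ normalization. A vertex $u$ outputs $1$ whenever it lies in some $D_i$ with $i\in V$ and $u\notin W_i$. Each $i\in V$ is good, since $\cB\subseteq J$, so $D_i$ has at most an $\eps_0$ fraction of corrupted edges. By Markov's inequality on degrees, $|W_i|\le (4\eps_0/\delta) d\le d/2$ once $\eps_0<\delta/10$. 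So every $D_i$ with $i\in V$ contributes at least half of its vertices as ``good witnesses''. This reduces the problem to lower bounding the measure of the set $A\subseteq X(1)$ of vertices that are good in some clique of $V$. For this I will use the fact that $A$ contains at least half of each $D_i$, $i\in V$.

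The main obstacle is this vertex-clique expansion step. I would take the complement $f=\mathbb 1_{X(1)\setminus A}$ and note that every $D\in V$ has $\overline f(D)\le 1/2$. Equivalently, the set $V$ consists of cliques on which the sparse function $\mathbb 1_A$ has average at least $1/2$.

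In regime 1, $\mu(A)$ is small. I would apply \Cref{lem:concentration-improved} with $i=1$ to $\mathbb 1_A$ and threshold $\varepsilon=1/2$. This gives $\mu(V)\le \Pr_D[\overline{\mathbb 1_A}(D)\ge 1/2]\lesssim \mu(A)(\mu(A)+\lambda)$, with $d$ a large constant so that the $O(i^2/d)$ terms are negligible. If $\mu(A)\le\lambda$, this yields $\mu(A)\gtrsim \mu(V)/\lambda\gtrsim(\mu-\eta)/\lambda$. Otherwise $\mu(A)>\lambda\ge(\mu-\eta)/\lambda$, using $\mu-\eta\le\lambda^2$. Either way regime 1 follows, up to the constants, which I would tune through the choice of $\lambda=\Theta(\log^{-c}n)$.

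In regime 2, $\mu(V)\gtrsim\lambda^2=\log^{-2c}n$. I would argue by contradiction: suppose $\mu(A)<1/100$. Then \eqref{eq:hdx-chernoff-2} with $i=1$, $\varepsilon=1/4$ and $d=\Theta(\log\log^2 n)$ gives $\Pr_D[\overline{\mathbb 1_A}(D)\ge 1/2]\le \exp(-\Omega(\sqrt d))$. This must be smaller than $\lambda^2/\ell$, which fixes the constant in $d=\Theta(\mathrm{loglog}^2 n)$ in terms of $c$. That contradicts $\mu(V)\gtrsim\lambda^2$, so $\mu(A)\ge 1/100$.

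I expect the delicate points to be two. The first is matching the exponents so that $e^{-\Omega(\sqrt d)}$ beats $\log^{-2c}n$. The second is checking that the vertex measure $\mu_1$ coincides with the uniform $\E_u$, which holds by the regularity in \Cref{thm:hdx-construction}.
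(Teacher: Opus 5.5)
Your proposal is correct and follows essentially the same route as the paper. It reduces via \Cref{lem:clh-circuit} to a fixed outcome string and notes that every clique of the resulting set ($\cS$ in the paper, your $V$) is good, hence has at least half of its vertices good. It then lower bounds the measure of the witnessing vertex set with \Cref{lem:concentration-improved} in the first regime, and by contradiction with \Cref{lem:chernoff-hdx} in the second.

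The differences are cosmetic. Counting directly gives $\mu(V)\ge\mu-\eta$, so your $1/\ell$ loss is unnecessary; the paper gets the same bound by picking one layer with a $\mu$ fraction of ones. The cross terms vanish because the eigenbasis is orthonormal on the system register, not because $\pi'_u$ is diagonal. And, as you noticed, getting $|W_i|\le d/2$ needs the standing assumption $\eps_0<\delta/10$, not just the $\eps_0<\delta/2$ in the statement.
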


\begin{proof}
We start by reducing the problem to a classical serial repetition problem as alluded to for proving~\eqref{eq:serial-rep}. 

By writing \(\rho\) as a mixture over pure states of the form \(\ket{\xi}\otimes\ket{0^m}\), it suffices to show that the expected trace is large for each such pure state. Further let us expand this out using~\Cref{lem:clh-circuit},
\[
C\ket{\xi}\otimes\ket{0^m}=\sum_{w\in \mathcal V}c_w\ket{w}\otimes\ket{b_w},
\]
where \(\mathcal V\) is the common eigenspace basis of \(H\) and we know that each vector \(\ket{b_w}\) has at least \(\mu\)-fraction of \(1\)'s in it. Thus the expected trace simplifies to,
\[
\E_v\left[\Tr[\pi'_v C \ketbra{\xi}{\xi}\otimes \ketbra{0^m}{0^m} C^{\dagger}]\right]
=
\E_v\!\left[\E_{w\sim \mathcal V}\!\left[\Tr[\pi'_v \ketbra{b_w}{b_w}]\right]\right],
\]
where \(w\) is sampled from \(\mathcal V\) with probability \(|c_w|^2\). Thus it suffices to show that for a fixed $\ket{b_w}$, \(\E_v\Tr[\pi'_v\ketbra{b_w}{b_w}]\) is large. Now note that $b_w$ needs to have $\geq \mu$-fractional Hamming weight in at least one layer; call this $N$-length string $b$. Let $\cS$ be the set of cliques that $\supp(b)\setminus J$. In this layer, each $\pi'_v$ only takes an OR over cliques that are outside of $J$, which forms an $\leq \eta$-fraction of all cliques, and where it is good. So,
\[\E_v\Tr[\pi'_v\ketbra{b_w}{b_w}] \geq \E_v[\bigvee_{i\in [N]\setminus J: D_i\ni v} b_i] = \E_v[\exists i \in \cS: v\in D_i].\]

Hence we have reduced the problem to the following: We have a set \(\cS\) of \((\mu-\eta)\)-fraction of cliques. We know by assumption that each clique in $\cS$ is good, since it excludes cliques from $J \supset \cB$. We want to conclude that a large fraction of vertices are a part of at least one clique from \(\cS\) where they are good with respect to it too.

\paragraph{Proof of the sampling claim:}   For each good clique \(C\), by Markov's inequality it has at most $2\eps_0/\delta$-fraction of bad vertices, which is at most $1/2$ by assumption.
Our proof goes via defining the following set of vertices,
\[
T = \{v: \exists C \in \cS, \text{ such that } v \text{ is a good vertex inside } C\}.
\]
Now since each clique in \(\cS\) contains at least a $1/2$-fraction of good vertices and by the definition of \(T\), they all belong to \(T\). So using~\Cref{lem:concentration-improved} we get that,
\[
(\mu-\eta)\leq \Pr_{C \sim X(d)}[\mu(T|_C) \geq 1/2] \lll \mu(T)(\mu(T)+\lambda),
\]
thereby implying that \(\mu(T) \gtrsim \min(\sqrt{\mu-\eta},\frac{\mu-\eta}{\lambda})\).
When \((\mu-\eta) \leq \lambda^2\) this expression becomes \(\frac{\mu-\eta}{\lambda}\), as required. 

Otherwise, assume that $\mu-\eta \geq \lambda^2$, and that \(\mu(T)\leq 1/100\) for the sake of contradiction. Then we can apply the Chernoff bound in~\Cref{lem:chernoff-hdx} instead to get that,
\[
\mu-\eta \leq \Pr_{C \sim X(d)}[\mu(T|_C) \geq 1/2] \leq \exp(-\sqrt{d}) \leq 1/\log^{2c} n,
\]
which is a contradiction to the assumption that $\mu-\eta \geq \lambda^2$. Therefore \(\mu(T)\) must be at least \(1/100\), and in either case the claimed lower bound follows.
\end{proof}

\subsection{Fault Tolerance of the Protocol under Adversarial Edge-Corruptions}\label{subsec:FT-commuting-protocol}
In this section we analyze the distributed protocol \(\cP\) under adversarial edge-Pauli faults, using the partial fault-tolerance notions from \Cref{sec:prelims-partial-ft}. Recall that \(m=N\ell\) is the number of measurement wires of \(C\). We write
\[
\Enc_{\mathrm{out}}
\;:=\;
\encq^{\otimes Nk}\otimes \encc^{\otimes m}
\]
for the final output encoding map, namely \(N\) data blocks in the MPC code \(\mathcal Q\) and \(m\) measurement blocks in the repetition code. A faulty execution of \(\cP\) is specified by a Pauli fault pattern \(\mathsf F\) with \(\operatorname{supp}_{\cP}(\mathsf F)\subseteq \cE\), in the sense of \Cref{def:edge-pauli-fault}. The protocol \(\cP\) is built by composing two kinds of gadgets:
\begin{enumerate}
    \item For each layer \(t\in[\ell]\), a parallel family of \(N\) $k$-local gate gadgets, one on each clique \(D_i\), implemented using \Cref{thm:distributed-FTQC} and then converted to a repetition block on the measurement output using \Cref{lem:rep-code};
    \item Between consecutive layers, \(k\) routing gadgets implementing the \(k\) permutations induced by the circuit wiring, covered by \Cref{prop:ft-routing}.
\end{enumerate}

Fix \(\cE\subseteq X(2)\) with \(\mu(\cE)\le \eps_0\), and let \(\cB \subseteq [N]\) be the set of bad cliques, namely those cliques whose local corrupted-edge density exceeds \(\eps_0\). Let us now choose the parameter $\eps_0$ as: 
\[
\eps_0:=\min\!\left\{\eps^*_{\mathrm{comp}},\,\eps^*_{\mathrm{out}},\,\frac{\delta}{10}\right\},
\qquad
\tau:=\min\!\left\{c_{\mathrm{comp}}\sqrt{\eps_0},\,c_{\mathrm{route}}\delta,\,\beta^*_{\mathrm{out}}(\delta)\right\}.
\]
Here \(\eps^*_{\mathrm{comp}},c_{\mathrm{comp}}\) come from \Cref{thm:distributed-FTQC}, \(\eps^*_{\mathrm{out}},\beta^*_{\mathrm{out}}(\delta)\) come from \Cref{lem:rep-code}, and \(c_{\mathrm{route}}\) comes from \Cref{prop:ft-routing}.

Let $\eta_{\mathrm{bad}}$ be the measure of the bad cliques $\cB$. We can bound this by using the same HDX concentration bounds from  earlier, \(\eta_{\mathrm{bad}}\le \mu(\cE)/\log^{c+1} n\). For each routing stage, let \(\beta_{\mathrm{route}}=O(\mu(\cE)/\log^c n)\) denote the additive loss parameter coming from \Cref{prop:ft-routing}; we use the standing assumption that the coefficient multiplying \(\alpha\) in the routing statement is \(1\). 

We now define exceptional clique sets \(L_t\subseteq [N]\) recursively. Set
$L_1:=\cB.$
Given \(L_t\), let \(R_t\) be the bad destination set produced by the layer-\(t\) routing stage, and set
\[
L_{t+1}:=\cB_{\cE}\cup R_t.
\]
Writing \(\alpha_t\) as the measure of $L_t$, this gives
\[
\alpha_1=\eta_{\mathrm{bad}}
\qquad\text{and}\qquad
\alpha_{t+1}\le \alpha_t+\eta_{\mathrm{bad}}+\beta_{\mathrm{route}}.
\]

Finally, let \(\LC(L_1,\ldots,L_\ell)\) denote the forward light cone in \(C\) of the exceptional gates
\[
\{G_{i,t}: i\in L_t,\ t\in[\ell]\}.
\]
Informally, the proof will show that the protocol is fault-tolerant outside this light cone. The only additional point is the first computation layer: there we use the friendliness of the local MPC gadget to replace the arbitrary initial deviation by some valid encoded witness \(\ket{\widetilde{\phi}}\), and all subsequent layers are analyzed relative to the honest execution of \(C\) on \(\ket{\widetilde{\phi}}\).

\begin{prop}[Fault tolerance of \(\cP\)]\label{lem:ft}
Fix \(\cE\subseteq X(2)\) with \(\mu(\cE)\le \eps\), and let \(L_1,\ldots,L_\ell\) be the exceptional clique sets defined above for the given fault pattern \(\mathsf F\). Then the following holds.

For every pure input state \(\ket{\Phi}\) to the protocol, and every Pauli fault pattern \(\mathsf F\) with \(\supp_{\cP}(\mathsf F)\subseteq \cE\), there exists a state \(\ket{\widetilde{\phi}}\) such that, the final state \(\cP_{\mathsf F}(\ket{\Phi})\) is a $\tau$-deviation of
$\Enc_{\mathrm{out}}\bigl(C(\ket{\widetilde{\phi}}\otimes \ket{0^m})\bigr)$ outside all output blocks lying in \(\LC(L_1,\ldots,L_\ell)\):
\[\cP_\sF\ket{\Phi}= \sum_r a_r\,E_r\,\Enc_{\mathrm{out}}\bigl(C(\ket{\widetilde{\phi}}\otimes \ket{0^m})\bigr)\otimes \ket{\mathrm{rest}_r},\]
where we have the following support guarantees for each error $E_r$,
\begin{enumerate}
    \item The restriction of $E_r$ to every data block outside \(\LC(L_1,\ldots,L_\ell)\) has weight at most $\tau d$.
    \item If \(M_{i,t}\) is a measurement wire outside \(\LC(L_1,\ldots,L_\ell)\), then the restriction of $E_r$ on the corresponding repetition block is contained in the bad vertex set \(W_i(\cE)\) of $D_i$. In particular, every \(v\in D_i\setminus W_i(\cE)\) stores the correct measurement bit.
\end{enumerate}
\end{prop}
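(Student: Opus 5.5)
By linearity (\Cref{remark:pauli}), we reduce to a single Pauli fault pattern $\mathsf F$ supported on $\cE$ and a single Pauli-deviated input term. We then prove the claim by induction over the layers $t=1,\dots,\ell$. Throughout, we view $\cP$ as the alternating composition of computation stages (layer-$t$ MPC gadgets on all $N$ cliques, followed by repetition readout) and routing stages. We track, after each stage, the invariant that the current state is a deviation of the encoded ideal state. The ideal state is $C_{\le t}(\ket{\widetilde\phi}\otimes\ket{0^m})$ encoded in $\Enc_{\mathrm{out}}$ restricted to the wires produced so far. On every data block not in the forward light cone of $\{G_{i,s}: i\in L_s, s\le t\}$, the deviation is at most $\tau$. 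On every repetition block $M_{i,s}$ outside the light cone, the error is supported on $W_i(\cE)$. Blocks inside the light cone are allowed arbitrary errors. This is exactly a fixed-set partial deviation in the sense of \Cref{def:partial-FT}, with the bad set determined only by $\mathsf F$ and the $L_s$. That feature is what makes sequential composition work, as in \Cref{lem:partial-ft-sequential}.

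\textbf{Base case: producing $\ket{\widetilde\phi}$.} At layer $1$, each good clique $i\notin\cB=L_1$ has local corrupted-edge density below $\eps_0\le\eps^*_{\mathrm{comp}}$. So \Cref{thm:distributed-FTQC} applies to the gadget for $G_{i,1}$, whose first stage is input sanitization. By friendliness, an arbitrary Pauli-deviated input block on clique $i$ is mapped to a $c_{\mathrm{comp}}\sqrt{\eps_0}$-deviation of $\encq$ of some state $\ket{\xi_i}$. That state is either the honest logical content or the replacement $\ket0$, and which one depends only on the accept/reject branch. I will define $\ket{\widetilde\phi}$ wire by wire: use the sanitized logical state on inputs to good cliques, and an arbitrary fixed state on inputs to cliques in $L_1$, since the latter lie in the light cone anyway. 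The subtle point is that sanitization is applied per input term, so different Pauli branches of the input could in principle choose different replacements. I would handle this as follows. First, the VQSS outcome is a deterministic function of the minimal-weight Pauli (\Cref{prop:vqss}, whose conclusion is stated per Pauli $P$). Second, the input to the protocol is a single pure state, written $P\,\encq(\ket\psi)$ per block. So $\ket{\widetilde\phi}$ is well-defined for the fixed input $\ket\Phi$, and in the intermediate regime we take whichever form the proposition asserts.

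\textbf{Inductive step.} For a good clique $i\notin L_t$ at layer $t$, the fault-tolerance part of \Cref{thm:distributed-FTQC} gives output data blocks that are $c_{\mathrm{comp}}\sqrt{\eps_0}\le\tau$-deviated. This holds provided the inputs are $\tau$-deviated and not in the light cone, because $\tau\le c_{\mathrm{comp}}\sqrt{\eps_0}$ is within the admissible radius. The measurement block is likewise $\le\beta^*_{\mathrm{out}}$-deviated, so \Cref{lem:rep-code} applies. Its conclusion is that the resulting repetition block is wrong only on $W(F)$, which is $W_i(\cE)$ for $F=\cE_i$; this is item 2. Parallel composability (\Cref{lem:parallel-composability}) combines the $N$ cliques. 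Cliques in $L_t$ are simply declared to lie in the light cone. Gates fed by light-cone wires are also in the light cone by definition, so no guarantee is needed there. For routing, \Cref{prop:ft-routing} (with coefficient $1$ on $\alpha$) sends $\tau$-deviated blocks to $c_{\mathrm{route}}\delta$-deviated blocks, with $\tau\le c_{\mathrm{route}}\delta$ after error suppression. It fails only on the destination set $R_t$, which depends only on $\mathsf F$ and the incoming bad set. Since $L_{t+1}\supseteq R_t$, these failures land in the exceptional set and hence in the light cone.

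\textbf{Main obstacle.} The step I expect to be hardest is keeping the deviation radii consistent across gadgets. Routing outputs a $c_{\mathrm{route}}\delta$-deviation, while the next MPC gadget requires a $c_{\mathrm{comp}}\sqrt{\eps_0}$-deviation as input. I would resolve this using the fact that the clique-to-clique transfer ends with \Cref{algo:error-correction}. By \Cref{prop:error-suppression}, the output of a good path is actually $T_{\mathrm{es}}\sqrt{\beta}$-deviated, which fits the input radius after shrinking $\eps_0$ by a $\poly(\delta)$ factor. The remaining work is bookkeeping: expanding each non-Pauli light-cone error in the Pauli basis to obtain the sum $\sum_r a_r E_r(\cdots)\otimes\ket{\mathrm{rest}_r}$, and verifying that every $E_r$ obeys the support constraints outside $\LC(L_1,\dots,L_\ell)$ uniformly in $r$.
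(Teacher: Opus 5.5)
Your route is the paper's: induction over layers, friendliness used only at layer $1$, fault tolerance plus parallel composition for the computation layers, partial fault tolerance of routing with the exceptional sets $L_t$, and light-cone bookkeeping. The genuine gap is the step that produces a \emph{single} $\ket{\widetilde\phi}$.

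Your justification, that the pure input is ``written $P\,\encq(\ket\psi)$ per block'', is false. A general pure $\ket\Phi$ splits over syndrome sectors as $\sum_{\vec s}E_{\vec s}\,\Enc(\ket{\psi_{\vec s}})$. The logical contents are sector-dependent and entangled across blocks, and different sectors can be accepted or rejected differently. Friendliness is a statement about one Pauli-deviated term. The linearity of \Cref{remark:pauli} only recombines terms that share the same target encoded state. Here the target supplied by friendliness depends on the term, so the reduction in your first paragraph cannot be undone at the end.

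In fact no single $\ket{\widetilde\phi}$ exists in general. Take $\cE=\emptyset$, so all $L_t$ and all $W_i$ are empty. Let one input block be $\tfrac{1}{\sqrt2}(I+Q)\,\encq(\ket{1})$, with $Q$ heavy enough that VQSS rejects it and all other blocks honest. This block is even a deviation, in the paper's sense, of the single codeword $\encq(\ket1)$.
\begin{itemize}
\item The accepted branch keeps logical $\ket1$, with flag register $\ket{0^d}$.
\item The rejected branch is replaced by $\encq(\ket0)$, with flag register $\ket{1^d}$ (\Cref{algo:vqss-state-replacement}).
\end{itemize}
The flags are orthogonal. So the decoded logical output is a rank-two mixture of $C(\ket{\phi_1}\otimes\ket{0^m})$ and $C(\ket{\phi_0}\otimes\ket{0^m})$, where $\ket{\phi_b}$ is the honest input with that wire set to $\ket b$.

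Now apply the reversible decoder $U$ of \Cref{def:section7-renc} to any state of the asserted form. There every $E_r$ has weight at most $\tau d$ (within the decoding radius) on each data block and acts trivially on each repetition block. The result is $C(\ket{\widetilde\phi}\otimes\ket{0^m})\otimes\ket{\Xi}$, whose logical part is pure, a contradiction. The same defect already appears for a single Pauli term whose logical content is entangled across blocks, when one block is replaced. The discarded original stays entangled with the kept blocks, so ``defining $\ket{\widetilde\phi}$ wire by wire'' does not yield a pure state either.

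What your argument, like the paper's, actually proves is the summand-indexed version. It has the same two support guarantees, but $\ket{\widetilde\phi_r}$ is allowed to depend on $r$:
\[
\cP_{\sF}\ket\Phi=\sum_r a_rE_r\,\Enc_{\mathrm{out}}\bigl(C(\ket{\widetilde\phi_r}\otimes\ket{0^m})\bigr)\otimes\ket{\mathrm{rest}_r}.
\]
To get it, expand $\ket\Phi$ into Pauli-deviated terms (and, for replaced blocks, expand the discarded content in a basis). Then run your induction term by term and never recombine. The guarantees are uniform in $r$ because $L_t$ depends only on $\cE$ and $\sF$.

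This is precisely the form of the non-commuting analogue \Cref{lem:ft-Pgamma}. It is also all that \Cref{lem:rho} uses, since that proof already sums over branches with branch-dependent inputs. The paper's own proof makes the same leap (friendliness of $\mathcal M_1^{\mathrm{good}}$ applied to ``the arbitrary deviation $\ket\Phi$''). So the repair is a restatement rather than a new idea, but your current justification has to be replaced by it.

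Separately, your reconciliation of radii through the error suppression that ends \Cref{algo:transfer} is sound. It is more careful than the paper, whose $\tau$ is the minimum of the gadgets' radii although each gadget only guarantees its own radius. Note, though, that the parameter to adjust is the routing bad-clique threshold $\beta$. It must satisfy $T_{\mathrm{es}}\sqrt{\beta}\le c_{\mathrm{comp}}\sqrt{\eps_0}$, for instance $\beta\le\eps_0$ using $c_{\mathrm{comp}}\ge T_{\mathrm{es}}$. Further shrinking $\eps_0$ tightens that inequality instead of relaxing it.
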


\begin{proof}
The proof mirrors the protocol stage by stage. The formal content is that the computation stages are handled by parallel composition of FT gadgets, while the routing stages are handled by sequential composition of partially FT gadgets. The only place where friendliness enters is the first computation stage.

For \(t\in[\ell]\), let \(\Gamma_t\) denote the ideal tensor product of all layer-\(t\) local gates together with the repetition-code conversion on their measurement outputs, and for \(t\in[\ell-1]\), let \(\Pi_t\) denote the ideal inter-layer wiring map between layers \(t\) and \(t+1\). Also, for \(t\ge 2\), let \(\ket{\psi_t^{\mathrm{in}}(\widetilde{\phi})}\) be the logical state of the \(kN\) input wires entering layer \(t\) in the honest execution of \(C\) on \(\ket{\widetilde{\phi}}\).

We will prove two assertions:

\smallskip
\noindent\textbf{A\(_t\).} Just before computation layer \(t\) (for \(t\ge 2\)), the collection of active data blocks is an \((\alpha_t,\tau)\)-deviation of \(\Enc_{\mathcal Q}^{\otimes kN}(\ket{\psi_t^{\mathrm{in}}(\widetilde{\phi})})\), with all bad blocks contained in \(L_t\).

\smallskip
\noindent\textbf{B\(_t\).} Just after computation layer \(t\), every data output block of every gate \(G_{i,t}\) with \(i\notin L_t\) is a \(\tau\)-deviation of the corresponding honest \(\mathcal Q\)-encoded output wire of \(C\) on \(\ket{\widetilde{\phi}}\), and the measurement block \(M_{i,t}\) has Pauli support contained in \(W_i(\cE)\).

\smallskip
Once these assertions are established, the proposition follows immediately: an output wire lies outside \(\LC(L_1,\ldots,L_\ell)\) exactly when its backward light cone avoids all exceptional gates \(G_{i,t}\) with \(i\in L_t\), and for such a wire the corresponding FT guarantee from the relevant assertion \(B_t\) applies. Absorbing all output blocks in \(\LC(L_1,\ldots,L_\ell)\), together with the work registers, into the auxiliary register then gives exactly the claimed deviation statement.

We now prove the assertions.

\smallskip
\noindent\emph{Base step: the first computation layer.}
Consider the family of gadgets \(\{\mathcal M_{i,1}: i\notin L_1\}\). Since \(L_1=\cB_{\cE}\), every clique in this family is good, and the corresponding layer-1 gates are disjoint. As observed above, each \(\mathcal M_{i,1}\) is friendly and FT, so by the same tensor-product argument underlying parallel composability, the combined first-layer gadget
\[
\mathcal M_1^{\mathrm{good}}:=\bigotimes_{i\notin L_1}\mathcal M_{i,1}
\]
is again friendly. Applying this to the arbitrary deviation \(\ket{\Phi}\) of \(\Enc_{\mathrm{in}}(\ket{\phi})\), friendliness yields a state \(\ket{\widetilde{\phi}}\) such that the outputs on all cliques \(i\notin L_1\) are exactly as if \(\mathcal M_1^{\mathrm{good}}\) had been applied to a valid \(\tau\)-deviation of the encoded layer-1 inputs for \(\ket{\widetilde{\phi}}\). Since \(\mathcal M_1^{\mathrm{good}}\) is FT, assertion \(B_1\) follows: every data output block of every gate \(G_{i,1}\) with \(i\notin L_1\) is \(\tau\)-deviated from the correct \(\mathcal Q\)-encoded output block, and the measurement block \(M_{i,1}\) has support contained in \(W_i(\cE)\).

If \(\ell=1\), we are done. Otherwise we now apply the first routing stage \(\mathcal R_1\). By \(B_1\), the collection of routed data blocks is an \((\alpha_1,\tau)\)-deviation of the correct encoded input to layer \(2\), with bad blocks contained in \(L_1\). Since \(\mathcal R_1\) is partially FT, there exists a destination bad set \(R_1\) with \(|R_1|/N\le \alpha_1+\beta_{\mathrm{route}}\) such that all routed blocks outside \(R_1\) are \(\tau\)-deviations of the correct inputs to layer \(2\). Setting \(L_2:=\cB_{\cE}\cup R_1\), we obtain assertion \(A_2\).

\smallskip
\noindent\emph{Inductive step.}
Fix \(t\in\{2,\ldots,\ell\}\), and assume \(A_t\). We first prove \(B_t\), and then, if \(t<\ell\), we deduce \(A_{t+1}\).

To prove \(B_t\), consider the family \(\{\mathcal M_{i,t}: i\notin L_t\}\). For every such \(i\), all \(k\) input blocks to the gate \(G_{i,t}\) are good by \(A_t\), since the bad blocks are contained in \(L_t\). Moreover, \(i\notin L_t\) implies in particular \(i\notin \cB_{\cE}\), so the local clique gadget satisfies the hypotheses of \Cref{thm:distributed-FTQC}. Therefore each \(\mathcal M_{i,t}\) may be applied to a \(\tau\)-deviation of the correct encoded input, and by its FT guarantee it outputs the correct layer-\(t\) data wires up to \(\tau\)-deviation, together with a repetition block whose Pauli support is contained in \(W_i(\cE)\). Since these gadgets act on disjoint registers, parallel composability gives the same conclusion simultaneously for all \(i\notin L_t\). This proves \(B_t\).

Now assume \(t<\ell\). By \(B_t\), the collection of data wires leaving layer \(t\) is an \((\alpha_t,\tau)\)-deviation of the correct encoded input to the routing stage \(\Pi_t\), with bad blocks contained in \(L_t\). Apply the partially FT routing gadget \(\mathcal R_t\). Since the coefficient multiplying \(\alpha\) is \(1\), the output is an \((\alpha_t+\beta_{\mathrm{route}},\tau)\)-deviation of the correctly routed encoded state, with bad destination set \(R_t\) satisfying \(|R_t|/N\le \alpha_t+\beta_{\mathrm{route}}\). We then define \(L_{t+1}:=\cB_{\cE}\cup R_t\). This yields assertion \(A_{t+1}\), and at the same time gives the recursive bound
\[
\alpha_{t+1}=\frac{|L_{t+1}|}{N}\le \alpha_t+\eta_{\mathrm{bad}}+\beta_{\mathrm{route}}.
\]

This completes the induction.

\smallskip
We now conclude. Assertion \(B_\ell\) tells us that every final data block outside the exceptional layer-\(\ell\) cliques is a \(\tau\)-deviation of the corresponding honest encoded output wire of \(C\) on \(\ket{\widetilde{\phi}}\), and every final measurement block \(M_{i,\ell}\) with \(i\notin L_\ell\) has support contained in \(W_i(\cE)\). More generally, by tracing the induction through the circuit, the same is true for every output wire whose backward light cone avoids the exceptional gates \(\{G_{i,t}: i\in L_t\}\), i.e. for every output wire outside \(\LC(L_1,\ldots,L_\ell)\). Finally, we absorb all work registers and all output blocks in \(\LC(L_1,\ldots,L_\ell)\) into the auxiliary register in the sense of \Cref{eq:deviation-decomp}. What remains is exactly a deviation of
\(\Enc_{\mathrm{out}}(C(\ket{\widetilde{\phi}}\otimes \ket{0^m}))\)
with the two stated support guarantees.
\end{proof}

The proposition above is the only fault tolerance guarantee needed later: it isolates a small family of exceptional gates \(L_t\), and outside those gates it says that the protocol output differs from the ideal encoded output only by correctable errors on data blocks and by errors on the measurement blocks that are confined to the bad vertices of the corresponding clique.

\subsection{Moving from Physical Measurements to Logical Measurements}\label{subsec:physical-to-logical-commuting-protocol}
In this section we will consider a pure state of the form, 
\[
\ket{\Psi} = \sum_{\sF\in \mathsf{Pauli}^{\cE}}c_{\sF} \cP_\sF \ket{\phi_\sF},
\]
where each error pattern $\sF$ only touches the edges in $\cE$. Note that one summand here is just a state that is obtained by running the faulty protocol $\cP$ with Pauli-edge faults $\sF$ contained in $\cE$ and this is something one would analyze in standard multiparty computation. It turns out though that we need to analyze this more complicated state where the Pauli-faults on $\cE$, as well as the input witness, can vary across the summands, and the only guarantee we have is that the state $\ket{\Psi}$ has norm $1$. As we will show in~\Cref{sec:ckt-to-H} this is the noise model that arises out of the circuit to Hamiltonian reduction, thus we call it the tensor-network state. Proving the soundness of such a state (as opposed to $\cP_\sF\ket{\phi}$) turns out to be quite non-trivial. Our high-level strategy is to reduce physical measurements on this state to logical measurements on a certain mixed state that is obtained from $\ket{\Psi}$.

Thus, in this section we will construct a mixed state $\rho'$ in the logical space and show that all ``valid'' measurements on $\ket{\Psi}$, those of the form $\REnc(\pi)$ (with a few additional caveats), correspond to performing the measurement $\pi$ on $C\rho' C^\dagger$. This state $\rho$ will have a nice structure: it is composed of pure states of the form $\ket{\xi}\otimes\ket{0^m}$, exactly the set of valid inputs to $C$. Once we have this, we can hope to analyze the soundness of $\ket{\Psi}$, using the soundness of $\rho'$. The latter can then be analyzed using serial repetition, \Cref{lem:serial-rep}.

We first need a simple claim about quantum codes which will allow us to relate measurements on the encoded protocol output to measurements on the logical output of \(C\).

\begin{definition}[Encoded observables]\label{def:section7-renc}
Recall~\Cref{def:reversible-decoding}. Let \(U_\mathcal{Q}\) be the reversible decoder for the MPC code and let \(U_{\mathsf{Rep}}\) be the standard reversible majority decoder for the repetition code. Define
\[
U:=U_\mathcal{Q}^{\otimes kN}\otimes U_{\mathsf{Rep}}^{\otimes m}.
\]
For every projector \(\pi\) on the logical output space of \(C\), define the corresponding physical observable by
\[
\REnc(\pi):=U^\dagger(\pi\otimes \Id)U.
\]
\end{definition}

For later use, define
\[
K_{\mathrm{meas}}:=\{M_{i,t}: i\in L_t,\ t\in[\ell]\},
\]
namely the exceptional measurement wires, and let \(K_{\mathrm{data}}\) be the set of all non-measurement wires lying in the forward light-cone of the exceptional gate set \(\{G_{i,t}: i\in L_t\}\).

\begin{claim}\label{claim:KL}
Let
\[
T
=
E\,\Enc_{\mathrm{out}}
\bigl(C(\ketbra{\psi}{\psi'}\otimes \ketbra{0^m}{0^m})C^\dagger\bigr)E'
\]
where \(E,E'\) satisfy the support conditions from \Cref{lem:ft}. Then for every projector \(\pi\) on the logical output space there exist a coefficient \(\beta_{E,E'}\) and Pauli operators \(P,P'\) on the logical output space such that
\begin{equation}\label{eq:KL-main}
\Tr(\REnc(\pi)T)
=
\beta_{E,E'}\Tr\!\left(\pi\,P\,C(\ketbra{\psi}{\psi'}\otimes \ketbra{0^m}{0^m})C^\dagger P'\right),
\end{equation}
where the support of \(P,P'\) on the data wires is contained in \(K_{\mathrm{data}}\), and on the measurement wires their \(X\)-support is contained in \(K_{\mathrm{meas}}\) (while their \(Z\)-support may be arbitrary).
\end{claim}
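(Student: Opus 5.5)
The plan is to push the decoder $U$ through the error $E$ one code block at a time. This turns the physical trace into a logical trace, with the errors replaced by logical Paulis on the bad wires and the syndromes pulled out as a scalar. By linearity, and after expanding $E$ and $E'$ in the Pauli basis (they are sums of Paulis with the stated supports, the auxiliary registers being traced out), it suffices to treat $E,E'$ as single Pauli operators. The reason is that any non-Pauli parts only touch auxiliary registers and blocks inside the light cone, and these can be absorbed into $\beta_{E,E'}$ and into the choice of $P,P'$ term by term. If needed we restate the claim with $E,E'$ being single Pauli terms of the decomposition in \Cref{lem:ft}.

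Write $\Enc_{\mathrm{out}}=\bigotimes_b \Enc_b$ over the blocks. We then have
\[
\Tr(\REnc(\pi)T)=\Tr\bigl((\pi\otimes\Id)\,UE\,\Enc_{\mathrm{out}}(\sigma)\,E'U^\dagger\bigr),
\]
with $\sigma=C(\ketbra{\psi}{\psi'}\otimes\ketbra{0^m}{0^m})C^\dagger$. Since $U$ and $E$ both factor over blocks, we apply \Cref{claim:reversible-decoding-properties} to each $\mathcal Q$ block and \Cref{claim:dec-rep} to each repetition block. On a block $b$ the restriction $E_b$ satisfies
\[
U_b E_b \Enc_b(\cdot)=\omega_{E_b}P_{E_b}(\cdot)\otimes\ket{s_b},
\]
where $s_b$ is the syndrome of $E_b$. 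By linearity this identity extends from pure states to the operator $\sigma$ in the entangled multi-block setting, because $U_bE_b\Enc_b$ is a fixed linear map on the logical block. The case analysis on $P_{E_b}$ is as follows.

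For data blocks outside $K_{\mathrm{data}}$, condition 1 of \Cref{lem:ft} gives $|E_b|\le\tau d<\Delta/2$, since we choose $\tau$ below the decoding radius and take the decoder to be bounded-distance. Hence $P_{E_b}=I$.

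For repetition blocks $M_{i,t}\notin K_{\mathrm{meas}}$, condition 2 says the support of $E_b$ lies in $W_i(\cE)$. Since $i\notin L_t\supseteq\cB$, the clique is good, and Markov gives $|W_i|\le 2\eps_0 d/\delta<d/2$. \Cref{claim:dec-rep} then yields a logical operator $Z^{|J|\bmod 2}$ with no $X$ part. This is why only the $Z$-support on measurement wires is unconstrained.

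On all remaining blocks $P_{E_b}$ is an arbitrary logical Pauli, but these blocks are exactly the wires in $K_{\mathrm{data}}$ and $K_{\mathrm{meas}}$.

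Assemble $P=\bigotimes_b P_{E_b}$, and similarly $P'$ from $E'$, using $U_bE'_b{}^\dagger\ldots$ on the bra side. This gives
\[
UE\,\Enc_{\mathrm{out}}(\sigma)E'U^\dagger=\omega\bar\omega'\,P\sigma P'^\dagger\otimes\ketbra{s}{s'}.
\]
Because $\pi\otimes\Id$ acts trivially on the syndrome register, the trace factors as $\Tr(\pi P\sigma P'^\dagger)\cdot\langle s'|s\rangle\,\omega_E\bar\omega_{E'}$. We set $\beta_{E,E'}:=\omega_E\bar\omega_{E'}\langle s'|s\rangle$, which is $0$ or a phase, and rename $P'^\dagger\to P'$.

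The main obstacle is the repetition-block case: we must verify that $|W_i|<d/2$ for every non-exceptional measurement clique, so that no logical $X$ can arise there. This needs the good-clique Markov bound together with the choice $\eps_0\le\delta/10$ made earlier. A second point to handle carefully is making the Pauli-expansion reduction rigorous when $E$ is only a sum of Paulis with the stated support pattern. We plan to treat this by stating the claim termwise and summing, with $\beta$ depending on the term.
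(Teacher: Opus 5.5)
Your proposal is correct and follows essentially the same route as the paper. You decode block by block via \Cref{claim:reversible-decoding-properties} and \Cref{claim:dec-rep} to get $UE\,\Enc_{\mathrm{out}}(C(\ket{\psi}\otimes\ket{0^m}))=\alpha_E\,P\,C(\ket{\psi}\otimes\ket{0^m})\otimes\ket{S_E}$ (and likewise on the bra side), then absorb the phases and the syndrome overlap into $\beta_{E,E'}$; your extra checks (reducing to single Pauli terms, and deriving $|W_i|\le 2\eps_0 d/\delta<d/2$ from $\eps_0\le\delta/10$) make explicit what the paper's proof uses implicitly.
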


\begin{proof}
Write the RHS in~\eqref{eq:KL-main}, $\Tr(U^\dagger \pi U T)$. This equals $\Tr(\pi U T U^\dagger)$, so let us consider the ket-side of $UTU^\dagger$: $UE\Enc_{\mathrm{out}}(C\ket{\psi}\otimes\ket{0^m})$. We will first simplify this using the decoding properties of $\cQ$ and the repetition code.

On every non-measurement output block outside the forward light-cone of the exceptional gates, the restriction of \(E\) has weight at most \(\tau d\), so by the reversible-decoding property of the code $\cQ$ it contributes only a syndrome register and no logical action. Therefore any nontrivial logical Pauli on the decoded data output can only come from the wires in \(K_{\mathrm{data}}\).

Now consider a measurement block \(M_{i,t}\) with \(i\notin L_t\). By \Cref{lem:ft}, the corresponding physical error is supported only on the coordinates in \(W_i(\cE)\). Since \(|W_i(\cE)|\le \tau_{\mathrm{rep}} d < d/2\), majority decoding returns the correct logical bit on that block, so no logical \(X\)-error can appear there. However, Pauli \(Z\)-operators supported on \(W_i(\cE)\) may still contribute a phase, hence after decoding they can induce a logical \(Z\)-operator on the measurement wire together with syndrome terms. Thus a logical \(X\)-error on a decoded measurement wire can occur only on the exceptional set \(K_{\mathrm{meas}}\), while $Z$-errors might be arbitrary.

Formally, using claims~\ref{claim:reversible-decoding-properties} and~\ref{claim:dec-rep} we get that there exist a syndrome state \(\ket{S_E}\), a coefficient \(\alpha_E\), and a Pauli operator \(P\) with the claimed support properties such that
\[
U E \Enc_{\mathrm{out}} C(\ket{\psi}\otimes \ket{0^m})
=
\alpha_E\,P\,C(\ket{\psi}\otimes \ket{0^m})\otimes \ket{S_E}.
\]
The same statement holds for the bra-side of $UTU^\dagger$, with corresponding data \(\alpha_{E'},P',\ket{S_{E'}}\). Therefore
\begin{align*}
\Tr(\REnc(\pi)T)
&=
\Tr\!\left((\pi\otimes \Id)U T U^\dagger\right) \\
&=
\alpha_E\overline{\alpha_{E'}}\,
\Tr\!\left(\pi\,P\,C(\ketbra{\psi}{\psi'}\otimes \ketbra{0^m}{0^m})C^\dagger P' \otimes \ketbra{S_E}{S_{E'}}\right) \\
&=
\beta_{E,E'}\Tr\!\left(\pi\,P\,C(\ketbra{\psi}{\psi'}\otimes \ketbra{0^m}{0^m})C^\dagger P'\right),
\end{align*}
where
\[
\beta_{E,E'}:=\alpha_E\overline{\alpha_{E'}}\braket{S_{E'}}{S_E}.
\]
This is exactly \eqref{eq:KL-main}.
\end{proof}

\begin{lemma}[Equivalence to a logical output state]
\label{lem:rho}
Consider the setting above, and let \(K_{\mathrm{data}},K_{\mathrm{meas}}\) be the wire sets defined just before \Cref{claim:KL}. For every valid pure state
\[
\ket{\Psi} = \sum_{\sF\in \cP} c_\sF \cP_\sF \ket{\phi_\sF},
\]
there exists a valid density matrix of the form
\begin{align}\label{eq:rho}
    \rho = \sum_{i} p_{i} \ketbra{v_i}{v_i},
\end{align}
with \(\{p_i\}\) a probability distribution and each \(\ket{v_i}\in \mathrm{span}(PC(\ket{\psi}\otimes\ket{0^m}))\), where \(P\) is a Pauli operator such that its support on the data wires lies in \(K_{\mathrm{data}}\), while on the measurement wires its \(X\)-support lies in \(K_{\mathrm{meas}}\) and its \(Z\)-support is arbitrary. This mixed state \(\rho\) satisfies, for every projector \(\pi\),
\[
\Tr(\REnc(\pi)\ketbra{\Psi}{\Psi})=\Tr(\pi \rho).
\]
\end{lemma}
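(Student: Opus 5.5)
The plan is to apply \Cref{lem:ft} summand by summand and then use \Cref{claim:KL} to push every physical observable $\REnc(\pi)$ down to the logical level. For each fault pattern $\sF$ with $\supp_{\cP}(\sF)\subseteq\cE$ and each input $\ket{\phi_\sF}$, \Cref{lem:ft} gives a logical state $\ket{\widetilde\phi_\sF}$ and an expansion
\[
\cP_\sF\ket{\phi_\sF}=\sum_r a_{\sF,r}\,E_{\sF,r}\,\Enc_{\mathrm{out}}\bigl(C(\ket{\widetilde\phi_\sF}\otimes\ket{0^m})\bigr)\otimes\ket{\mathrm{rest}_{\sF,r}} .
\]
The exceptional sets $L_t$ are defined from $\cE$ and the bad-clique and routing bad sets. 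I would first check that they can be fixed uniformly in $\sF$, for instance by taking worst-case supersets determined by $\cE$ alone. This matters because $K_{\mathrm{data}}$ and $K_{\mathrm{meas}}$ must be the same for every summand.

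Rather than expanding $\ketbra{\Psi}{\Psi}$ as a double sum and invoking \Cref{claim:KL} on cross terms, I would work at the level of vectors. Apply the reversible decoder $U$ of \Cref{def:section7-renc} to $\ket{\Psi}$. By \Cref{claim:reversible-decoding-properties}, \Cref{claim:dec-rep}, and the support guarantees of \Cref{lem:ft}, each term $U E_{\sF,r}\Enc_{\mathrm{out}}(C(\ket{\widetilde\phi_\sF}\otimes\ket{0^m}))$ equals $\alpha\,P\,C(\ket{\widetilde\phi_\sF}\otimes\ket{0^m})\otimes\ket{S}$. Here $\ket{S}$ is a syndrome basis state, and $P$ is a Pauli on the logical output with data support in $K_{\mathrm{data}}$ and $X$-support on measurement wires in $K_{\mathrm{meas}}$; this is exactly the computation in the proof of \Cref{claim:KL}. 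Collecting terms gives
\[
U\ket{\Psi}=\sum_j \ket{w_j}\otimes\ket{e_j},
\]
where $\{\ket{e_j}\}$ is an orthonormal basis of the syndrome-plus-auxiliary system (the syndromes $\ket{S}$ tensored with a basis of the rest registers). Each $\ket{w_j}$ is a linear combination of vectors $P\,C(\ket{\psi}\otimes\ket{0^m})$ with $P$ of the allowed type.

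I then set $p_j:=\|w_j\|^2$ and $\ket{v_j}:=\ket{w_j}/\|w_j\|$. Because $U$ is unitary and $\ket{\Psi}$ is normalized, $\sum_j p_j=1$, so $\rho:=\sum_j p_j\ketbra{v_j}{v_j}$ is a valid density matrix. The identity $\Tr(\REnc(\pi)\ketbra{\Psi}{\Psi})=\Tr((\pi\otimes\Id)\,U\ketbra{\Psi}{\Psi}U^\dagger)=\sum_j\bra{w_j}\pi\ket{w_j}=\Tr(\pi\rho)$ then follows from orthonormality of the $\ket{e_j}$.

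The main obstacle is the membership condition on $\ket{v_i}$. Each $\ket{w_j}$ is a superposition over different Paulis $P$ and different inputs $\ket{\widetilde\phi_\sF}$, whereas the lemma asks for $\ket{v_i}\in\mathrm{span}(PC(\ket\psi\otimes\ket{0^m}))$. I would read "span" as ranging over all admissible $P$ and all $\ket{\psi}$. That reading is what the later soundness argument needs, since it only uses that $X$-errors on measurement wires avoid the good wires while $Z$-errors and data errors are harmless for computational-basis measurement outcomes on the non-exceptional measurement wires. If a single fixed $P$ per component is required, I would refine the decomposition by expanding each $\ket{w_j}$ in the $X$-part of $P$ restricted to $K_{\mathrm{meas}}$ together with its data part. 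Distinct such $X$-parts give orthogonal pieces only after projecting onto measurement outcomes, so this refinement is valid only for projectors $\pi$ diagonal in the computational basis on the measurement wires, which is the class that matters. I would state this caveat explicitly.
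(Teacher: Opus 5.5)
Your argument is correct. It uses the same ingredients as the paper's proof: \Cref{lem:ft} applied summand by summand, followed by the reversible-decoding computation behind \Cref{claim:KL}. The difference is that you work with vectors rather than density matrices.

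The paper expands $\ketbra{\Psi}{\Psi}$ into a double sum over pairs of fault patterns. It applies \Cref{claim:KL} to each cross term $E\,\Enc_{\mathrm{out}}(\cdots)E'$, folding the syndrome and work-register overlaps into coefficients $\beta_{E,E'}$, and defines $\rho$ as the resulting operator sum. That sum is not manifestly Hermitian, so the paper must argue separately that it is a density matrix. Positivity comes from $\Tr(\pi\rho)=\Tr(\REnc(\pi)\ketbra{\Psi}{\Psi})\ge 0$ for all projectors, unit trace from $\pi=\Id$, and only then does it diagonalize inside the span.

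Your $\rho$ is the reduced state of $U\ket{\Psi}$ on the logical output registers, after tracing out the syndrome and work registers. This is the same operator the paper builds. In your form, positivity, normalization, and an explicit mixture (the normalized $\ket{w_j}$, with no diagonalization needed) are immediate. The paper's route instead reuses \Cref{claim:KL} verbatim as a black box.

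Your reading of the span as ranging over all admissible $P$ and all $\ket{\psi}$ is what the paper intends. It is also exactly what \Cref{lem:soundness-of-rho} uses downstream. The fixed-$P$ refinement and the restriction to computational-basis-diagonal $\pi$ are therefore unnecessary, and you should drop that caveat.

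Your point that $L_t$, and hence $K_{\mathrm{data}}$ and $K_{\mathrm{meas}}$, must be fixed uniformly in $\sF$ is something the paper's proof uses tacitly. It is justified: the bad cliques and failed routing paths can be taken to depend on $\cE$ alone, as \Cref{lem:ft-Pgamma} states explicitly.
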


\begin{proof}
Start by expanding \(\ketbra{\Psi}{\Psi}\) and tracing out the work ancillas:
\begin{align*}
\Tr_{\mathrm{ancillas}}(\ketbra{\Psi}{\Psi})
&=
\sum_{\sF,\sF'} c_\sF \overline{c_{\sF'}}\,
\Tr_{\mathrm{ancillas}}\!\left(\cP_\sF\ketbra{\phi_\sF}{\phi_{\sF'}}\cP_{\sF'}^\dagger\right).
\end{align*}
For each \(\sF\), apply \Cref{lem:ft} to the ket vector \(\cP_\sF\ket{\phi_\sF}\), and similarly for each \(\sF'\). Writing 
\[
\cP_\sF\ket{\phi_\sF}
=
\sum_i a_{\sF,i}\,E_{\sF,i}\,\Enc_{\mathrm{out}}\bigl(C(\ket{\widetilde{\phi_{\sF}}}\otimes \ket{0^m})\bigr)\otimes \ket{\mathrm{rest}_{\sF,i}},
\]
and the analogous expansion for \(\sF'\), we obtain
\begin{align}\label{eq:ft-two-sided}
\Tr_{\mathrm{ancillas}}(\ketbra{\Psi}{\Psi})
=
\sum_{\sF,\sF',i,j}
\gamma_{\sF,\sF',i,j}\,
E_{\sF,i}\,\Enc_{\mathrm{out}}
\bigl(C(\ketbra{\widetilde{\phi_{\sF}}}{\widetilde{\phi_{\sF'}}}\otimes \ketbra{0^m}{0^m})C^\dagger\bigr)E'_{\sF',j},
\end{align}
where
$\gamma_{\sF,\sF',i,j}
=
c_\sF\overline{c_{\sF'}}\,a_{\sF,i}\overline{a_{\sF',j}}\,
\braket{\mathrm{rest}_{\sF',j}}{\mathrm{rest}_{\sF,i}}$ and the support structure of the errors $E_{\sF,i}, E'_{\sF',j}$ are as specified in~\Cref{lem:ft}.

Let us bundle the indices \((\sF,\sF',i,j)\) into \(k\) for brevity of notation. Now apply \Cref{claim:KL} term-by-term to \eqref{eq:ft-two-sided}. This gives
\begin{align*}
\Tr(\REnc(\pi)\Tr_{\mathrm{ancillas}}(\ketbra{\Psi}{\Psi}))
&=
\sum_{k}\beta'_k \Tr\!\left(\pi P_k(C\ketbra{\psi_k}{\psi'_k}\otimes \ketbra{0^m}{0^m}C^\dagger)P'_k\right) \\
&=
\Tr\!\left(\pi \underbrace{\sum_k \beta'_k P_k(C\ketbra{\psi_k}{\psi'_k}\otimes \ketbra{0^m}{0^m}C^\dagger)P'_k}_{:=\rho}\right),
\end{align*}
where every \(P_k,P'_k\) satisfies the support conditions from \Cref{claim:KL}.

By construction, \(\Tr(\REnc(\pi)\ketbra{\Psi}{\Psi}) = \Tr(\pi\rho)\) for every projector \(\pi\). It remains to verify that \(\rho\) is a valid density matrix and has the claimed support structure.
Positivity follows because, for every projector \(\pi\),
\[
\Tr(\pi\rho)=\Tr(\REnc(\pi)\ketbra{\Psi}{\Psi})\ge 0.
\]
Also,
\[
\Tr(\rho)
=
\Tr(\REnc(\Id)\ketbra{\Psi}{\Psi})
=
\Tr(\ketbra{\Psi}{\Psi})
=
1,
\]
so \(\rho\) has trace \(1\). Hence \(\rho\) is a density matrix.

Finally, \(\rho\) is supported on the span of states of the form
\[
P\,C(\ket{\psi}\otimes \ket{0^m}),
\]
with \(P\) obeying exactly the support conditions stated in the lemma. Therefore \(\rho\) can be diagonalized inside that span, yielding a decomposition of the form \eqref{eq:rho}.
\end{proof}

Next, we further reduce measurements on the density matrix $\rho$ above to those of a valid execution of the logical circuit \(C\) on some valid input density matrix. Given a set $S$ of input wires of $C$ let $\FL(S)$ denote the set of output wires in the forward lightcone of $S$. Similarly, given a set of output wires $S$, let $\BL(S)$ denote the set of input wires in the backward lightcone of $S$. The following lemma crucially uses that $C$ is a circuit that measures a commuting local Hamiltonian.

\begin{lemma}[Equivalence to a logical input state]\label{lem:soundness-of-rho}
Let
\[
K':=\FL(\BL(K_{\mathrm{data}}\cup K_{\mathrm{meas}})),
\]
and let \(\rho\) be the state from \Cref{lem:rho}. Then there exists a mixed state \(\rho'\), each of whose pure states is of the form \(\ket{\xi}\otimes\ket{0^m}\), such that for every projector \(\pi\) acting on the measurement qubits contained in \(\overline{K'}\), 
\[
\Tr[(\pi\otimes \Id)\rho]=\Tr[(\pi\otimes \Id)C\rho' C^\dagger].
\]
\end{lemma}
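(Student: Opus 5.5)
The plan is to reduce the lemma to a statement about the marginal of $\rho$ on the wires outside $K'$, and to show that this marginal is the same as the marginal of some honest execution $C\rho'C^\dagger$. Since $\pi$ acts only on measurement qubits in $\overline{K'}$, it suffices to prove
\[
\Tr_{K'}\rho=\Tr_{K'}\bigl(C\rho'C^\dagger\bigr)
\]
restricted to the measurement wires of $\overline{K'}$.

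First, take $\rho=\sum_k \beta'_k P_k\,C(\ketbra{\psi_k}{\psi'_k}\otimes\ketbra{0^m}{0^m})C^\dagger P'_k$ from \Cref{lem:rho} and push the Paulis $P_k,P'_k$ backwards through $C$. The $X$ and data parts of $P_k$ are supported on $K_{\mathrm{data}}\cup K_{\mathrm{meas}}$. Conjugating them back through $C$ therefore gives an operator $\tilde P_k=C^\dagger P_k^{(X,\mathrm{data})}C$ supported on $\BL(K_{\mathrm{data}}\cup K_{\mathrm{meas}})$ on the input side, and possibly on the measurement ancillas in that backward light cone. This operator need not be Pauli, because $C$ contains non-Clifford measurement gates, but its support is controlled.

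The $Z$ parts on non-exceptional measurement wires are handled separately. They commute with computational-basis projectors $\pi$ on those wires. In the two-sided expression $\Tr(\pi P_k X P'_k)$ they can be moved next to $\pi$ and cancelled by cyclicity, up to a phase sign absorbed into $\beta'_k$. This is where the diagonality of $\pi$ in the computational basis is used.

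Next, use the commuting structure from \Cref{lem:clh-circuit}. Expand each input $\ket{\psi_k}$ in the common eigenbasis $\cB$, so that $C(\ket{v}\otimes\ket{0^m})=\ket{v}\otimes\ket{b_v}$. A measurement wire $M_{i,t}\notin K'$ is a measurement gate whose backward light cone avoids $\BL(K_{\mathrm{data}}\cup K_{\mathrm{meas}})$. Because the terms commute, the outcome bit of that gate depends only on the eigenvalue of the corresponding $h_j$, which is determined by the reduced state of the input on the support of $h_j$.

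I will then define $\rho'$ by feeding $C$ the input
\[
\tilde\rho_{\mathrm{in}}=\sum_k\beta'_k\,\tilde P_k\bigl(\ketbra{\psi_k}{\psi'_k}\otimes\ketbra{0^m}{0^m}\bigr)\tilde P'^{\dagger}_k .
\]
The operators $\tilde P_k$ act only inside $\BL(K)$, and the input marginal away from $\BL(K)$ is unchanged. Hence the outcome distribution on measurement wires outside $\FL(\BL(K))$ is the same for $\rho$ and for $C\tilde\rho_{\mathrm{in}}C^\dagger$. This follows from a standard light-cone argument: the gates computing those wires lie entirely outside the lightcone of the perturbed region.

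Finally, $\tilde\rho_{\mathrm{in}}$ must be replaced by a genuine density matrix of the form $\sum_j p_j\ketbra{\xi_j}{\xi_j}\otimes\ketbra{0^m}{0^m}$. Positivity and unit trace of the relevant marginal follow as in \Cref{lem:rho}, since these outcome statistics equal $\Tr(\REnc(\pi)\ketbra{\Psi}{\Psi})$. It remains to ensure that the ancillas read $0^m$. For this I would reset any ancilla of a measurement gate inside $\BL(K)$ by tracing out and re-preparing. Such gates only influence wires inside $\FL(\BL(K))=K'$, so the reset does not affect the marginal we care about.

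The main obstacle is this last step. The combination $\sum_k\beta'_k\tilde P_k(\cdots)\tilde P'^\dagger_k$ is a priori only a Hermitian operator, not a state of the product form. The key point I will argue is that, for observables outside $K'$, only its marginal on the input qubits outside $\BL(K)$, together with the untouched $\ket{0}$ ancillas, matters. That marginal is positive and has trace one because it reproduces physical probabilities. Any positive completion on $\BL(K)$, tensored with $\ket{0^m}$ on the ancillas, then gives a valid $\rho'$. I expect to need the commuting property at exactly this point: it guarantees that the outcome of each surviving measurement is a function of the input marginal on the support of its term alone, with no sequential disturbance from earlier gates inside $K'$.
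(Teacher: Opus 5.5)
The step that fails is your treatment of the $Z$-parts of $P_k,P'_k$ on the measurement wires. Write $P_k=Z^{I_k}R_k$, $P'_k=R'_kZ^{I'_k}$ and $\Gamma_k:=C(\ketbra{\psi_k}{\psi'_k}\otimes\ketbra{0^m}{0^m})C^\dagger$. Cyclicity together with diagonality of $\pi$ only gives $\Tr(\pi P_k\Gamma_kP'_k)=\pm\Tr\bigl(\pi\,Z^{I_k\oplus I'_k}R_k\Gamma_kR'_k\bigr)$. The ket- and bra-side supports $I_k,I'_k$ come from different fault patterns $\sF,\sF'$ in \eqref{eq:ft-two-sided}. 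They are also unconstrained, not confined to $K_{\mathrm{meas}}$. So $Z^{I_k\oplus I'_k}$ is a genuine operator, not a phase. These cross terms are exactly how a coherent combination such as $\tfrac12(I+Z)$ on a measurement wire implements postselection.

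A sanity check shows the step cannot be right. It uses only that $\pi$ is diagonal. Your remaining steps (pulling back the data/$X$ part, the light-cone comparison) hold for any unitary circuit. So your argument would prove the lemma for non-commuting Hamiltonians too, where it is false. Take $H=\tfrac12(\ketbra{0}{0}+\ketbra{+}{+})$ and apply $\tfrac12(I+Z)$ to both outcome wires of $C(\ket{1}\otimes\ket{00})$. This gives a state of exactly the form in \Cref{lem:rho}, with $K_{\mathrm{data}}=K_{\mathrm{meas}}=\emptyset$ (so $K'=\emptyset$) and deterministic outcome $00$. Yet every honest input yields $00$ with probability at most $1/2$. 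Relatedly, you place the use of commutativity in the light-cone step, but that step holds for any unitary circuit (cf.\ \Cref{lem:reset-locality}).

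The missing idea is to use \Cref{lem:clh-circuit} to absorb the $Z$'s into the witness, separately on the ket and bra side, before pulling anything back. Since $C(\ket{\phi}\otimes\ket{0^m})=\sum_v c_v\ket{v}\otimes\ket{b_v}$ with every $\ket{b_v}$ a computational basis state, $Z^I$ acts on branch $v$ by a sign $\lambda_v\in\{\pm1\}$. Hence $Z^IC(\ket{\phi}\otimes\ket{0^m})=C(\ket{\phi'}\otimes\ket{0^m})$ with $\ket{\phi'}=\sum_v\lambda_vc_v\ket{v}$. What remains is $P_1\otimes X^J$ with $\supp(P_1)\subseteq K_{\mathrm{data}}$ and $J\subseteq K_{\mathrm{meas}}$, and $C^\dagger(P_1\otimes X^J)C$ is supported on $K:=\BL(K_{\mathrm{data}}\cup K_{\mathrm{meas}})$.

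This also removes what you call the main obstacle. Take $\sigma:=C^\dagger\rho C$, which is a density matrix simply because $\rho$ is. Its pure components have the form $\ket{\phi''}_{K\cup D}\otimes\ket{0}_{A\setminus K}$. Then $\rho':=\Tr_A(\sigma)\otimes\ketbra{0^m}{0^m}$ agrees with $\sigma$ outside $K$, and the forward light-cone argument gives $\Tr_{K'}(C\rho'C^\dagger)=\Tr_{K'}(C\sigma C^\dagger)=\Tr_{K'}(\rho)$.

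Your own positivity argument does not go through as stated. Matching nonnegative probabilities for diagonal output projectors outside $K'$ does not make an input operator positive semidefinite. Moreover, with the $Z$'s dropped your $\tilde\rho_{\mathrm{in}}$ is no longer $C^\dagger\rho C$, so nothing forces it to be positive.
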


\begin{proof}
Write
\[
K:=\BL(K_{\mathrm{data}}\cup K_{\mathrm{meas}}),
\qquad\text{so that}\qquad
K'=\FL(K).
\]

By \Cref{lem:rho}, \(\rho\) is a mixture of pure states each of which lies in the span of the set  \(\cS =\{P\,C(\ket{\phi}\otimes\ket{0^m}\}\), where on the data register \(P\) is supported on \(K_{\mathrm{data}}\), on the measurement register its \(X\)-support is contained in \(K_{\mathrm{meas}}\), and its \(Z\)-part is arbitrary. 

Let us start by simplifying the states in $\cS$. Consider any state $\ket{w} = PC\ket{\phi}\otimes\ket{0}^m$ in $\cS$.
Write \(P=P_1\otimes X^J Z^I\), where \(\supp(P_1)\subseteq K_{\mathrm{data}}\) and \(J\subseteq K_{\mathrm{meas}}\). Using \Cref{lem:clh-circuit}, if
\[
C(\ket{\phi}\otimes\ket{0^m})=\sum_{v\in\mathcal V} c_v \ket{v}\otimes\ket{b_v},
\]
then each \(\ket{b_v}\) is a computational basis state, so \(Z^I\ket{b_v}=\lambda_v\ket{b_v}\) for some \(\lambda_v\in\{\pm1\}\). Hence
\[
Z^I C(\ket{\phi}\otimes\ket{0^m})=C(\ket{\phi'}\otimes\ket{0^m})
\]
for
\[
\ket{\phi'}:=\sum_v \lambda_v c_v\ket{v}.
\]
Therefore
\[
\ket{w}
=
P_1\otimes X^J\, C(\ket{\phi'}\otimes\ket{0^m})
=
C\underbrace{\bigl(C^\dagger(P_1\otimes X^J)C\bigr)}_{:=E}(\ket{\phi'}\otimes\ket{0^m}).
\]
It is easy to check that the support of the operator $E$ is contained within \(K\). Thus,
\[\cS = \{C E\ket{\phi'}\otimes\ket{0^m}: \supp(E)\subseteq K\}.\]

Now set \(\sigma:=C^\dagger\rho C\). This mixture is composed of pure states in the span of, 
\[C^\dagger\cS = \{E(\ket{\phi'}\otimes\ket{0^m}): \supp(E)\subseteq K\} = \left\{\ket{\phi''}_{K\cup D}\otimes\ket{0}_{A\setminus K}\right\},\] 
where \(A\) and $D$ denote the ancilla and the witness-input register respectively. 

Next define \(\omega:=\Tr_A(\sigma)\). Since every pure state in the support of \(\sigma\) lies in $C^\dagger \cS$, all its ancillas outside of $K$ remain fixed to $0$. Thus, if we set
\[
\rho':=\omega\otimes\ketbra{0^m}{0^m},
\]
then each pure state in a mixture for \(\rho'\) is of the form \(\ket{\xi}\otimes\ket{0^m}\), and moreover one can check that $\Tr_K(\rho')=\Tr_K(\sigma)$. 

Now since $\rho'$ and $\sigma$ have equal marginals outside $K$, the circuit $C$ run on them has equal marginals outside the forward lightcone of $K$. Thus one can check that,
\begin{equation}
\Tr_{K'}(C\rho' C^\dagger) =\Tr_{K'}(C\sigma C^\dagger) = \Tr_{K'}(\rho),    
\end{equation}
where the first equality follows since \(K'=\FL(K)\) and second follows by the definition of $\sigma$.

Finally, let \(\pi\) be a projector on the output qubits in \(\overline{K'}\). Then,
\[
\Tr((\pi\otimes\Id)\rho)
=
\Tr(\pi\,\Tr_{K'}(\rho))
=
\Tr(\pi\,\Tr_{K'}(C\rho' C^\dagger)))
=
\Tr((\pi\otimes\Id)C\rho' C^\dagger),
\]
as required.
\end{proof}

\subsection{Soundness of the Protocol for the Tensor-Network State}\label{subsec:soundness-commuting-protocol}
Recall the picture: for each layer \(t\in[\ell]\) and each clique \(D_i\), the protocol implements the gate \(G_{i,t}\) on clique \(D_i\), and the measurement output \(M_{i,t}\) is written as a length-\(d\) repetition block on the vertices of \(D_i\). Hence each vertex \(v\in X(1)\) stores one physical measurement bit for every pair \((i,t)\) with \(v\in D_i\). Thus we can define the measurement operators $\Pi_v$ in the physical (protocol) space analogous to the operators $\pi_v$ we defined earlier.

\begin{definition}[Vertex OR-checks in the Protocol Space]\label{def:vertex-check-operator}
Fix \(v\in X(1)\) and let
\[
S_v:=\{(i,t)\in [N]\times [\ell]: v\in D_i\}.
\]
Let \(\mathcal R_v\) be the tensor product of the physical measurement-output qubits stored at \(v\), one qubit for each \((i,t)\in S_v\). We define \(\Pi_v\) to be the projector acting on \(\mathcal R_v\) by
\[
\Pi_v
:=
\sum_{b\in \{0,1\}^{S_v}} \mathrm{OR}_{S_v}(b)\,\ket{b}\!\bra{b},
\]
and as the identity on all other registers.
Equivalently, on a computational-basis state of the local measurement registers at \(v\), the operator \(\Pi_v\) outputs the OR of those bits.
\end{definition}

Similarly, we can also define the ``modified'' measurement operators $\Pi'_v$ analogous to the logical operators $\pi'_v$ defined earlier.

\begin{definition}[Modified OR-checks in the Protocol Space]\label{def:mod-measurement}
Fix the edge-corruption pattern \(\cE\). For each layer \(t\in[\ell]\), let \(L_t\subseteq [N]\) be the exceptional set of gates supplied by \Cref{lem:ft}, and for each clique \(D_i\) let \(W_i(\cE)\) be the bad-vertex set. For \(v\in X(1)\), define
\[
S'_v:=\{(i,t)\in [N]\times [\ell]: v\in D_i,\ (i,t) \notin K',\ \text{and \(v\) is good in } D_i\},
\]
where $K' = \FL(\BL(K_{meas}\cup K_{data}))$ as defined in~\Cref{lem:soundness-of-rho}.
Recall that the modified logical operator \(\pi'_v\) is the projector on the logical measurement registers indexed by \(S'_v\) given by
\[
\pi'_v
:=
\sum_{b\in \{0,1\}^{S'_v}} \mathrm{OR}_{S'_v}(b)\,\ket{b}\!\bra{b},
\]
with identity on all other output registers of \(C\).
The modified physical operator \(\Pi'_v\) is defined analogously on the local measurement registers stored at \(v\):
\[
\Pi'_v
:=
\sum_{b\in \{0,1\}^{S'_v}} \mathrm{OR}_{S'_v}(b)\,\ket{b}\!\bra{b},
\]
with identity on all other protocol registers.
\end{definition}

Let $\rho(\Psi):=\Tr_{\mathrm{ancillas}}(\ketbra{\Psi}{\Psi}).$ In the next lemma we will show that the projectors $\Pi'_v$ and $\REnc(\pi'_v)$ are equal on $\rho(\Psi)$. To do so, we will use the fault-tolerance property of the protocol, \Cref{lem:ft}, which says that the measurement bits seen by $v$ on the cliques in which it is good, agree with the correct outcome.

\begin{lemma}\label{lem:Pi-to-pi}
For every vertex \(v\in X(1)\),
\[
\Tr(\Pi'_v\,\rho(\Psi))=\Tr(\REnc(\pi'_v)\,\rho(\Psi)).
\]
\end{lemma}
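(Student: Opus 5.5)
Both $\Pi'_v$ and $\REnc(\pi'_v)$ are functions of the same index set $S'_v$, so I will compare them blockwise on the repetition registers $M_{i,t}$ with $(i,t)\in S'_v$. The plan is to expand $\rho(\Psi)$ as in \eqref{eq:ft-two-sided}, namely
\[
\rho(\Psi)=\sum_k \gamma_k\,E_k\,\Enc_{\mathrm{out}}\bigl(C(\ketbra{\widetilde\phi_k}{\widetilde\phi'_k}\otimes\ketbra{0^m}{0^m})C^\dagger\bigr)E'_k ,
\]
where every $E_k,E'_k$ obeys the support guarantees of \Cref{lem:ft}. By linearity it is enough to show $\Tr(\Pi'_v T)=\Tr(\REnc(\pi'_v)T)$ for each off-diagonal term $T$ of this form.

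For such a term I will use two facts. First, $\pi'_v$ is diagonal in the computational basis of the logical measurement wires. Second, the logical output $C(\ket{\phi}\otimes\ket{0^m})$ can be written as $\sum_{\mathbf y}\ket{\mathbf y}\otimes\ket{\chi_{\mathbf y}}$. After encoding, each measurement wire carries $\Enc_{\mathrm{rep}}$, so in the ideal (error-free) encoding every block $M_{i,t}$ in branch $\mathbf y$ is the classical string $y_{i,t}^d$.

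Fix $(i,t)\in S'_v$. Then $(i,t)\notin K'$, and in particular $M_{i,t}$ lies outside $\LC(L_1,\dots,L_\ell)$, since $K'\supseteq K_{\mathrm{meas}}\cup K_{\mathrm{data}}$ and every exceptional wire and every wire in the light cone of an exceptional gate is caught there. \Cref{lem:ft}(2) then says $E_k$ and $E'_k$ restricted to block $M_{i,t}$ are supported in $W_i(\cE)$. Since $v\in D_i$ is good, $v\notin W_i(\cE)$, so $E_k$ acts on $v$'s qubit of $M_{i,t}$ at most as a Pauli $Z$ (up to the Pauli-expansion), or not at all. The same holds for $E'_k$.

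Consequently the computational-basis value of $v$'s physical bit of $M_{i,t}$ equals $y_{i,t}$ on both the ket and bra side of each branch. Here I use that $Z$ and phase factors do not change a computational basis label, and that errors on other registers commute with $\Pi'_v$. It follows that $\Pi'_v T = T_{\mathrm{OR}}$, where we insert the indicator $\mathrm{OR}_{S'_v}(\mathbf y)$ branchwise. This uses that $\Pi'_v$ is diagonal, so only the diagonal branches $\mathbf y=\mathbf y'$ survive in the trace.

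On the other side, $\REnc(\pi'_v)=U^\dagger(\pi'_v\otimes\Id)U$. For these blocks, \Cref{claim:dec-rep} applies because the $X$-support lies in $W_i$ and $|W_i|<d/2$. It gives that $U_{\Rep}$ returns the logical bit $y_{i,t}$, up to a $Z$ and a syndrome register. So $\REnc(\pi'_v)$ also acts on branch $\mathbf y$ as $\mathrm{OR}_{S'_v}(\mathbf y)$, exactly as in the proof of \Cref{claim:KL}. Both traces therefore equal $\sum_{\mathbf y}\mathrm{OR}_{S'_v}(\mathbf y)\cdot(\text{same branch weight})$, and summing over $k$ gives the lemma.

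The main obstacle is the branch-weight bookkeeping. I need the physical projector $\Pi'_v$, which reads a single share, and the decoded projector to assign identical weights, even though $E_k\neq E'_k$ and off-diagonal cross terms between different $\mathbf y$ appear.

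My first attempt will be to argue that both operators are diagonal in the physical computational basis of the blocks $\{M_{i,t}\}_{(i,t)\in S'_v}$ restricted to that support. Concretely, $U_{\Rep}$ is a permutation of basis states, so $\REnc(\pi'_v)$ is diagonal with value $\mathrm{OR}$ of the majority bits. On the support of $T$, with block $X$-errors in $W_i$, the majority bit equals $v$'s bit. Thus the two diagonal projectors agree on the subspace containing the range of $T$, which yields $\Pi'_v T=\REnc(\pi'_v)T$ outright.
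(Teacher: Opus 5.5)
Your proposal is correct, and your final paragraph is essentially the paper's proof. The paper also expands $\rho(\Psi)$ via \eqref{eq:ft-two-sided} and observes that on each summand $\ketbra{w}{w'}$, every block $M_{i,t}$ with $(i,t)\in S'_v$ has $X$-support inside $W_i(\cE)$, with $v\notin W_i(\cE)$ and $|W_i(\cE)|<d/2$. Hence $v$'s share and the majority bit both equal $b_{i,t}$ branchwise, which gives $\Pi'_v\ket{w}=\REnc(\pi'_v)\ket{w}$ outright. The branch-weight bookkeeping in your middle paragraphs (including the claim that only diagonal branches survive) is unnecessary: this ket-side operator identity alone already gives $\Tr(\Pi'_v T)=\Tr(\REnc(\pi'_v)T)$ for each summand.
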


\begin{proof}
Fix $v$, and abbreviate $S:=S'_v$. Consider one summand in the expansion \eqref{eq:ft-two-sided} of $\rho_{\Psi}$:
\[
T
=
E(\Enc_{\cQ}\otimes \Rep)\bigl(C\ketbra{\psi}{\psi'}\otimes \ketbra{0^m}{0^m}C^{\dagger}\bigr)E' := \ketbra{w}{w'}.
\]
We will show that by virtue of $E, E'$,
\[\Pi'_v\,\ket{w} = \REnc(\pi'_v)\,\ket{w}, \quad \Pi'_v\,\ket{w'} = \REnc(\pi'_v)\,\ket{w'}.\]
This then immediately gives the desired equality for $T$, thus implying the equality for $\rho(\Psi)$.

Write the output of $C$ by separating the logical measurement bits indexed by $S$:
\[
C(\ket{\psi}\otimes \ket{0^N})
=
\sum_{b\in \{0,1\}^{S}} \ket{\phi_b}\otimes \ket{b}_{S},
\]
where $\ket{\phi_b}$ contains all remaining output registers. After applying $(\Enc_{\cQ}\otimes \Rep)$ and then $E$, we obtain
\[
\ket{w}
=
\sum_{b\in \{0,1\}^{S}} E\bigl(\ket{\widetilde\phi_b}\otimes \encc^{\otimes |S|}\ket{b}_{S}\bigr),
\]
for suitable states $\ket{\widetilde\phi_b}$.

Now fix $(i,t)\in S$. By definition of $S$, the measurement output of the gate $(i,t)$ does not belong to $K'$, which is a superset of $L_t$, and the vertex $v$ is good in $D_i$. Hence the fault-tolerance statement implies two facts about the error $E$ restricted to the repetition block of $M_{i,t}$: first, there is no $X$-error on the coordinate stored at $v$; second, the total $X$-weight on that block is $< d/2$. Therefore, on the term indexed by $b$, the physical bit stored at $v$ in that block is exactly $b_{i,t}$, and majority decoding of that block also returns $b_{i,t}$.

Consequently, both operators $\Pi'_v$ and $\REnc(\pi'_v)$ act on the $b$-term by the same scalar, namely $\mathrm{OR}_S(b)$. Thus
\[
\Pi'_v\,\ket{w}
=
\sum_{b\in \{0,1\}^{S}} \mathrm{OR}_S(b)\, E\bigl(\ket{\widetilde\phi_b}\otimes \encc^{\otimes |S|}\ket{b}_{S}\bigr)
=
\REnc(\pi'_v)\,\ket{w}.
\]
The same proof implies the equality for $\ket{w'}$. This proves that $\Tr(\Pi'_v T)=\Tr(\REnc(\pi'_v)T)$ for each summand $T$ in \eqref{eq:ft-two-sided}. Summing over all summands gives
\[
\Tr(\Pi'_v\,\rho(\Psi))=\Tr(\REnc(\pi'_v)\,\rho(\Psi)),
\]
as claimed.
\end{proof}

We can now put everything together and thus finish the proof of~\Cref{thm:commuting-main} via the following lemma,
\begin{lemma}\label{lem:main7}
Consider the setting of parameters  in~\Cref{prop:ft-routing}. Suppose the measure of bad edges is at most \(\eps\), $\lambda \leq 1/\log^{c+1} n$ and assume that \(\eps \lll \mu\log^c n\). Then,
\[
\E_{v\in X(1)}\bigl[\Tr(\Pi_v\ketbra{\Psi}{\Psi})\bigr] \gtrsim \mu/\lambda.
\]
\end{lemma}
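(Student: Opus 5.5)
The plan is to chain the reductions of the preceding subsections, moving from physical vertex checks to the logical serial-repetition bound of \Cref{lem:serial-rep}. First, since \(\Pi_v\) is the OR over a superset of the bits used by \(\Pi'_v\), we have the operator inequality \(\Pi'_v\le \Pi_v\). Moreover, both operators act only on the measurement registers, so their traces against \(\ketbra{\Psi}{\Psi}\) agree with their traces against \(\rho(\Psi)\). It therefore suffices to lower bound \(\E_v\Tr(\Pi'_v\rho(\Psi))\). By \Cref{lem:Pi-to-pi}, this equals \(\E_v\Tr(\REnc(\pi'_v)\rho(\Psi))\). By \Cref{lem:rho}, it in turn equals \(\E_v\Tr(\pi'_v\rho)\) for the logical mixed state \(\rho\).

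Next, since \(\pi'_v\) acts only on measurement qubits indexed by \(S'_v\), which avoid \(K'\) by definition, \Cref{lem:soundness-of-rho} converts this into \(\E_v\Tr(\pi'_v C\rho' C^\dagger)\), where \(\rho'\) is a mixture of valid inputs \(\ket{\xi}\otimes\ket{0^m}\). This puts us in the setting of \Cref{lem:serial-rep}. We take \(J\) to be the set of cliques \(i\) such that some \((i,t)\in K'\) (or \(\bigcup_t L_t\)), which contains \(\cB\). Then \(\pi'_v\) as defined there only drops more bits than \(\pi'_v(J)\) of \Cref{def:logical-mod-measurement} would, or exactly the same ones. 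Applying \Cref{lem:serial-rep} in its first regime then gives \(\gtrsim (\mu-\eta)/\lambda\), with \(\eta=\mu(J)\).

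The remaining and main step is bounding \(\eta\) by, say, \(\mu/2\). Here I would use the recursion \(\alpha_{t+1}\le\alpha_t+\eta_{\mathrm{bad}}+\beta_{\mathrm{route}}\) with \(\ell=O(1)\) layers, giving \(\mu(\bigcup_t L_t)\le O(\ell)\cdot \eps/\log^c n\). To pass to \(K'=\FL(\BL(\cdot))\), I would use that \(C\) has constant depth \(\ell\) and bounded fan-in \(k\). Each exceptional gate therefore has a backward and forward light cone of size \(k^{O(\ell)}=O(1)\) gates, which inflates the measure only by a constant factor. Thus \(\eta\lesssim \eps/\log^c n\). The hypothesis \(\eps\lll\mu\log^c n\) then makes \(\eta\le\mu/2\), so \(\mu-\eta\gtrsim\mu\), and we conclude \(\E_v\Tr(\Pi_v\ketbra{\Psi}{\Psi})\gtrsim \mu/\lambda\).

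I expect two points to need care. The first is matching the regime condition \(\mu-\eta\le\lambda^2\) of \Cref{lem:serial-rep}. In the other regime the same argument yields the constant \(1/100\), which is still consistent with the capped form of \Cref{thm:commuting-main}. The second is checking that the light-cone closure respects the clique-level bookkeeping used in \Cref{lem:serial-rep}: a clique is excluded wholesale in \(J\), whereas \(K'\) excludes individual gate positions \((i,t)\). To handle this, I would define \(J\) as the projection of \(K'\) onto clique indices, so that excluding more only weakens the OR. In that case the reduction in \Cref{lem:serial-rep} to a single layer with \(\ge\mu\) weight outside \(J\) still goes through.
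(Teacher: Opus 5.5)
Your proposal is correct and follows the paper's proof essentially step for step: $\Pi'_v\preceq\Pi_v$, then \Cref{lem:Pi-to-pi}, \Cref{lem:rho}, \Cref{lem:soundness-of-rho}, and finally \Cref{lem:serial-rep}, with $\eta=O_\ell(\eps/\log^c n)\le\mu/2$ coming from the routing recursion and the constant light-cone blowup of the depth-$\ell$ circuit $C$. Your extra care about the regime condition $\mu-\eta\le\lambda^2$ and about projecting $K'$ onto a clique set $J\supseteq\cB$ fills in points the paper leaves implicit; note that this projection gives $\pi'_v(J)\preceq\pi'_v$, the direction your last paragraph correctly uses and the opposite of how your second paragraph words it (likewise, the alternative $J=\bigcup_t L_t$ would not suffice).
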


\begin{proof}
Since $\Pi_v$ acts only on the non-ancilla output registers it suffices to show that 
\[
\E_{v\in X(1)}[\Tr(\Pi_v\,\rho_\Psi)] \gtrsim \mu/\lambda.
\]

For each $v$, the set $S'_v$ is a subset of $S_v$. Since both $\Pi_v$ and $\Pi'_v$ are diagonal in the computational basis on the local measurement registers, and the OR over a larger set of bits dominates the OR over a subset, we have
$\Pi_v \succeq \Pi'_v$, therefore,
\begin{equation}\label{eq:main7-1}
\E_v\bigl[\Tr(\Pi_v\,\rho_{\Psi})\bigr]
\geq
\E_v\bigl[\Tr(\Pi'_v\,\rho_{\Psi})\bigr].
\end{equation}
By \Cref{lem:Pi-to-pi},
\begin{equation}\label{eq:main7-2}
\Tr(\Pi'_v\,\rho_{\Psi}) = \Tr(\REnc(\pi'_v)\,\rho_{\Psi})
\qquad\text{for every }v.
\end{equation}
Applying \Cref{lem:rho} and then \Cref{lem:soundness-of-rho}, we obtain a mixed state $\rho'$ whose pure states are of the form $\ket{\xi}\otimes \ket{0^m}$ and such that
\begin{equation}\label{eq:main7-3}
\Tr(\REnc(\pi'_v)\,\rho_{\Psi}) = \Tr(\pi'_v C\rho' C^{\dagger})
\qquad\text{for every }v.
\end{equation}

It remains to lower-bound the average of the logical modified checks. By construction, \(\pi'_v\) satisfies the hypothesis of~\Cref{lem:serial-rep} with the set of omitted gates there being $K'$.  Hence we get,
\[
\E_v\bigl[\Tr(\pi'_v C\rho' C^{\dagger})\bigr]
\geq
\frac{1}{\lambda}(\mu-\eta),
\]
where $\eta$ the maximum fraction of gates $K'$ touches in any layer of $C$. Since $K = \FL(\BL(\mathrm{LC}(L_1,\ldots, L_\ell)))$ and $C$ is a constant depth circuit, $\eta$ is at most $O_\ell(\eps/\log^c n)$ by the HDX concentration lemmas (applied in the three relevant regimes like in~\Cref{prop:ft-routing}). By the assumption in the lemma, this is at most $\mu/2$ giving us that,
\[
\E_v\bigl[\Tr(\pi'_v C\rho' C^{\dagger})\bigr]
\gtrsim
\mu/\lambda.
\]
Combining this with \eqref{eq:main7-1}, \eqref{eq:main7-2}, and \eqref{eq:main7-3} yields
\[
\E_v\bigl[\Tr(\Pi_v\,\ketbra{\Psi}{\Psi})\bigr] \gtrsim \mu/\lambda,
\]
as required.
\end{proof}

This allows us to complete the proof of~\Cref{thm:commuting-main},
\begin{proof}[Proof of~\Cref{thm:commuting-main}]
To conclude, apply~\Cref{lem:Pi-to-pi} with parameters \(\mu=\epsilon\) and  \(\mu(\mathcal E)\). In the low-energy regime \(\epsilon\leq\lambda^2\), the assumed corruption bound gives \(\mathbb E_v\operatorname{Tr}(\Pi_v|\Psi\rangle\langle\Psi|)\gtrsim\epsilon/\lambda\). Choosing \(\lambda\) to be  \(1/\pl n\) completes the proof.
\end{proof}

%% file: nc-protocol.tex
\section{Modified Distributed Computing Protocol for Non-Commuting LH}\label{sec:nc-protocol}

In \Cref{sec:protocol-for-circuit}, the analysis of the protocol $\cP$'s soundness against the `tensor-network' noise model relies crucially on the commuting Hamiltonian assumption. To recall, $\cP$ outputs the constraint measurements in the form of a repetition codeword on each clique, and we established that the vertex checks will be violated if the original Hamiltonian has large energy, even with adversarially corrupted execution of $\cP$ under the `tensor-network' noise model:
\begin{align}
    \sum_{\sF\in \mathsf{PAULI}^{\cE}} c_{\sF} \cP_\sF \ket{\phi_\sF},
\end{align}
where $\sF$ are Pauli errors supported on a corrupted edge set $\cE$.
While the fault tolerance of $\cP$ guarantees that most cliques will have small residual errors on their output codeword, the subtlety is the repetition code cannot tolerate Z errors. A single physical Z error leads to a logical Z error; thus, at the logical level of the sequential measurement circuit $C$, we may incur logical Z errors on all measurement qubits. Intuitively, this should not be an issue because we are measuring out these qubits to obtain a classical distribution. However, the challenge is that a coherent superposition of different Z error patterns on the measurement qubits may lead to a postselection gate: $(I+Z) \propto \ketbra{0}$. In particular, if the ideal logical distribution over the measurement bits have nonzero measure on the all-zero string, then the adversary can collapse the entire logical distribution to the all-zero string. This is a serious problem in the soundness case.

For commuting Hamiltonians, this challenge is resolved essentially by the observation that the distribution on the measurement bits of the sequential circuit is only supported on strings with a large number of 1's, owing to the fact that the measurements have a mutual eigenbasis (\Cref{fact:clh-eigenspace}). For non-commuting Hamiltonians, we do not have this property, and thus we need to modify the protocol $\cP$ to make it robust against such postselection attacks. The natural question is why can't we use a different code, say a quantum code, for the readout of the measurement bits that can tolerate Z errors? The issue is that the vertices will then not be able to read out the logical measurement values held by the cliques, due to the well-known local indistinguishability of quantum codewords. This suggests an apparent dichotomy in the readout step. On one hand, we need some sort of local decodability on each vertex. On other hand, we need some noise tolerance in both X and Z bases. We resolve this challenge by two ideas.

First, we  switch to a \emph{noisy} repetition code:
\begin{align}
    \NRep: \ket{0} \mapsto ((1-\gamma)\ketbra{0}+\gamma \ketbra{1})^{\otimes d}, \qquad \ket{1} \mapsto (\gamma \ketbra{0} + (1-\gamma) \ketbra{1})^{\otimes d},
\end{align}
where $0< \gamma < 1/2$. This means that the ideal logical measurement value of `$1$' is encoded into a superposition of strings with a large number of 1's, but also some 0's. This prevents the adversary from postselecting on the all-zero logicalstring, and thus, in a sense, introduces a small tolerance against Z errors while approximately retaining the local decodability of the repetition code. Concretely, we modify the distributed protocol $\cP$ (\Cref{algo:overall-protocol}) to a new protocol $\cP_\gamma$ paramaterized by $\gamma$ as follows. At the end of each energy measurement in step 2(a) of~\Cref{algo:overall-protocol}, instead of running the repetition code conversion~\Cref{algo:rep-code-output} for readout, the players in a clique perform a $\gamma$-noisy version of it:

\begin{mdframed}
\begin{protocol}[Noisy repetition-code readout]\mbox{}\label{algo:noisy-rep-code-output}\\
\textbf{Input:} A distributed \(\mathcal{Q}\)-codeblock \(\encq(a \ket{0} + b \ket{1})\) jointly held by the \(n\) players.\\
\textbf{Ideal output:} \(  a^2 \NRep(\ketbra{0})+ b^2 \NRep(\ketbra{1}) \).
\begin{enumerate}
\item Let \(B_0\) denote the input block. The players performs MPC (\Cref{thm:distributed-FTQC}) to prepare fresh blocks \(B_1,\ldots,B_n\), each supposedly equal to \(\encq(\sqrt{1-\gamma} \ket{00} + \sqrt{\gamma} \ket{11}) \).

\item For each \(j=1,\ldots,n\), do the following:
\begin{enumerate}
    \item Apply a logical CNOT from \(B_0\) to the first register of \(B_{j}\).
    \item Run error suppression \Cref{algo:error-correction} on \(B_1\).
    \item Run error suppression \Cref{algo:error-correction} on \(B_j\).
\end{enumerate}

\item For each \(i\in[n]\), player \(i\) receives the first register of \(B_{i}\) and unencodes it to a single qubit.
\end{enumerate}
\end{protocol}
\end{mdframed}

The remaining steps of $\cP_\gamma$ then follow the same as $\cP$.

\begin{claim}[Noiseless execution of $\cP_\gamma$]\
    \label{claim:ideal-protocol-nc}
    Let $\ket{\psi}$ be any input state to the sequential measurement circuit and $C \ket{\psi}= \sum_{y \in \{0,1\}^{M}} \alpha_y \ket{y}\otimes \ket{\xi_y}$, where $y$ are the measurement outcomes. Let $p_\gamma(x)$ be the multivariable Bernoulli distribution with bias $\gamma$ on $Md$ bits. Then, the output of the modified protocol $\cP_\gamma$ on input $\encq(\ket{\psi})$ is
    \begin{align}
        \cP_\gamma(\encq(\ket{\psi})) &= \sum_{y \in \{0,1\}^{M} \atop x \in \{0,1\}^{Md}}   \alpha_y  \sqrt{p_\gamma(x)} \ket{\Rep(y) \oplus x}_\mathrm{meas} \otimes \encq(\ket{y}\ket{\xi_y}\ket{x}) \otimes \ket{\mathrm{rest}}, \\
        &:= \mathrm{NEnc}_\mathrm{out} (C\ket{\psi}) \otimes \ket{\mathrm{rest}},
    \end{align}
    where we implicitly generalize $\Rep$ and $\encq$ multi-qubit strings by applying them bitwise.
\end{claim}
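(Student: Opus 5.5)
We propose to prove the claim by tracking the ideal (fault-free) execution of $\cP_\gamma$ stage by stage at the logical level, using linearity throughout. First, by the completeness part of \Cref{thm:distributed-FTQC} (with $F=\varnothing$, all deviations trivial) and the fact that fault-free routing (\Cref{prop:ft-routing}) implements the wire permutation exactly, the computation stages of $\cP_\gamma$ up to the readout produce exactly $\encq^{\otimes}(C\ket{\psi})$ together with some workspace $\ket{\mathrm{rest}}$ that is independent of $\ket{\psi}$ (all ancillas are returned to fixed states or are input-independent). Writing $C\ket{\psi}=\sum_y\alpha_y\ket{y}\otimes\ket{\xi_y}$, it therefore suffices, by linearity, to analyze the noisy readout \Cref{algo:noisy-rep-code-output} on a single measurement block $\encq(\ket{y_j})$ with $y_j\in\{0,1\}$, and then tensor over all $M$ measurement wires.

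For a single block we compute the logical circuit: the fresh blocks $B_1,\dots,B_d$ ideally encode $\sqrt{1-\gamma}\ket{00}+\sqrt{\gamma}\ket{11}=\sum_{x_j\in\{0,1\}}\sqrt{p_\gamma(x_j)}\ket{x_j x_j}$ (by fault-free MPC preparation, \Cref{thm:distributed-FTQC}). The logical CNOT from $B_0$ (holding $y_j$) into the first register maps $\ket{x_j x_j}\mapsto\ket{(y_j\oplus x_j)\, x_j}$; error suppression (\Cref{prop:error-suppression}) acts trivially on exact codewords. After all $d$ rounds the joint logical state is $\sum_{x\in\{0,1\}^d}\sqrt{p_\gamma(x)}\,\ket{y_j}_{B_0}\bigotimes_i\ket{y_j\oplus x_i}\ket{x_i}$, all encoded in $\cQ$. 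In step 3, each player $i$ receives the first register of $B_i$ and unencodes it exactly (the decoder of \Cref{assumption:nice-qLTC} is exact on undamaged codewords), yielding the physical qubit $\ket{y_j\oplus x_i}$, i.e. the string $\ket{\Rep(y_j)\oplus x}_{\mathrm{meas}}$, while $B_0$ and the second registers remain $\encq(\ket{y_j})\otimes\encq(\ket{x})$.

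Combining over the $M$ measurement wires (using that the noise strings for different blocks are independent, so $\prod_j p_\gamma(x^{(j)})=p_\gamma(x)$ for the concatenated $x\in\{0,1\}^{Md}$) and summing over $y$ with amplitudes $\alpha_y$, together with the data blocks $\encq(\ket{\xi_y})$, gives exactly the displayed expression, which we then take as the definition of $\mathrm{NEnc}_{\mathrm{out}}(C\ket{\psi})$. Tracing out the $\encq(\ket{y}\ket{x})$ registers recovers the stated ideal output $a^2\NRep(\ketbra{0})+b^2\NRep(\ketbra{1})$ per block, as a sanity check (the retained $\encq(\ket{x})$ copies decohere the noise).

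The only potentially delicate point is justifying that $\ket{\mathrm{rest}}$ is a single fixed state independent of $y$ and $x$, i.e. that no workspace of the MPC, routing and readout subprotocols becomes entangled with the logical data in the fault-free run; we would handle this by noting each subprotocol, run without faults on exact codewords, acts as an isometry of the form (logical map)$\otimes$(fixed ancilla state), which follows from the exact (zero-deviation) case of the fault-tolerance statements, and composing via \Cref{def:seq-composability}.
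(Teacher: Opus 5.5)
Your proposal is correct and follows essentially the same route as the paper. Both arguments track the fault-free logical state $\sum_y\alpha_y\encq(\ket{y}\ket{\xi_y})\otimes\ket{\mathrm{rest}}$ up to readout, apply the logical CNOTs into $\encq(\sqrt{1-\gamma}\ket{00}+\sqrt{\gamma}\ket{11})$ to obtain $\sum_x\sqrt{p_\gamma(x)}\,\encq(\ket{\Rep(y)\oplus x})\otimes\encq(\ket{x})$, and then unencode the first registers; your argument that $\ket{\mathrm{rest}}$ is input-independent (zero-deviation fault tolerance for every $\ket{\psi}$, combined with linearity, forces a fixed workspace) is sound and goes slightly beyond what the paper states.
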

\begin{proof}
    Recall that in the original protocol $\cP$ (\Cref{algo:overall-protocol}),the state before the repetition code readout step (step 2(a) of \Cref{algo:overall-protocol}) upon input $\encq(\ket{\psi})$ is
    \begin{align}
        \sum_{y \in \{0,1\}^{M}}  \alpha_y \encq(\ket{y} \ket{\xi_y}) \otimes \ket{\mathrm{rest}}.
    \end{align}
    Next, we perform random noise on the repetition code readout
    \begin{align}
        \sum_{y \in \{0,1\}^{M} \atop x \in \{0,1\}^{Md}}   \alpha_y \sqrt{p_\gamma(x)} \encq(\ket{\Rep(y) \oplus x}_\mathrm{meas}) \otimes \encq(\ket{y}\ket{\xi_y}\ket{x}) \otimes \ket{\mathrm{rest}}.
    \end{align}
    Finally, we unencode $\mathcal{Q}$ on the first register to obtain the desired form.
\end{proof}

It turns out that this noisy repetition encoding doesn't quite work on its own because the global number of adversarial postselection errors in the soundness case may still be too large (see~\Cref{subsec:distribution-APgamma}). Thus we also need to introduce a second idea which is a modification of the circuit-to-Hamiltonian mapping in~\Cref{sec:ckt-to-H}. The details of this is not important at this point, but the effect we get is that at the end of protocol $\cP_\gamma$, the players will further apply a `weak postselection' gate that favors `$1$' on the measurement bits to counter against the adversary's noisy postselection on `$0$' (which is no longer perfect due to the noise introduced by idea 1). This is implemented by applying a local operator
\begin{align}
    \label{eq:38453829}
    A_\eta = \sqrt{1-\eta} \ketbra{1} + \sqrt{\eta} \ketbra{0}
\end{align}
on each measurement bit, where $\eta \leq 1/2$ is a constant that depends on $\gamma$. The vanilla case corresponds to $\eta=1/2$ and $A_\eta \propto I$. It will turn out that $\eta$ can be chosen to be significantly larger than $\gamma$ and in some regime to be $1/2 -o(1)$ (see~\Cref{def:gamma'}).

In summary, the overall modified protocol is described below in \Cref{algo:nc-protocol}. Recall that we consider an input Hamiltonian $H_\mathrm{in}$ consisting of $M$ many $k$-local terms such that its sequential measurement circuit $C$ measures $N$ non-overlappingterms per layer and has $\ell$ layers.

\begin{mdframed}
\begin{protocol}\mbox{}\label{algo:nc-protocol} Modified Protocol $\cP_\gamma$ with Postselection Gates $A_\eta$ for Sequential Measurement Circuit $C$\\
\textbf{Input:}  On every $d$-clique $i \in [N]$ in the HDX $G=(X(1), X(2))$ we have $k$ distributed input registers, which are expected to be $k$ quantum MPC codeblocks.\\
\textbf{Output:} Each vertex $v$ holds shares of the violation values of the Hamiltonian terms assigned to cliques containing $v$.
\\

\noindent For each time-step $t \in [\ell]$, let $G_{1,t},\ldots, G_{N,t}$ be the $N$ gates in the $t^{th}$ layer of $C$:
\begin{enumerate}
\item Clique $i$ runs the $i^{th}$ gate $G_{i,t}$ in $C$ using the quantum-multiparty computation protocol~\Cref{thm:distributed-FTQC}. This results in $k+1$ code blocks supported on the vertices of clique $i$, supposedly encoding the $k$ output wires and the measurement value wire in the quantum MPC code
\begin{enumerate}
    \item Encode the measurement wire into a noisy repetition code using~\Cref{algo:noisy-rep-code-output}, so that each vertex holds a single qubit of the noisy repetition encoding.
    \item Each vertex applies the postselection gate $A_\eta$ (see~\Cref{eq:38453829}) on its local qubit of the noisy repetition encoding.
\end{enumerate}
\item Decompose the perfect matching formed by the outputs of the $t^{th}$-layer of $C$ and the inputs of the $(t+1)^{th}$ layer of $C$, into $k$ permutations $\pi_1,\ldots, \pi_k$.
\item Route each permutation $\pi_i$ using the clique-to-clique routing in~\Cref{prop:ft-routing}. This results in clique $i$ holding $k$ code blocks, which are supposed to be encoding the input wires of $G_{i,t+1}$ in a quantum code. 
\end{enumerate}
\end{protocol}
\end{mdframed}

The ideal output state of the protocol is obtained by applying $A_{\eta}^{\otimes Md}$ on the output of $\cP_\gamma$, and for soundness analysis in this section we will analyze states of the form
\begin{align}
    \sum_{\sF\in \mathsf{PAULI}^{\cE}} c_{\sF} A_\eta^{\otimes Md} (\cP_\gamma)_\sF \ket{\phi_\sF}.
    \label{eq:state-overview-sec7}
\end{align}
In our final Hamiltonian we will have a term per vertex that checks if the measurement bits in its local view have at least some fraction of 1's. We will show that if the original Hamiltonian $H_\mathrm{in}$ has large energy, then this state will violate a large fraction of these vertex checks whenever the adversarially corrupted edge set $\cE$ is below a fraction.

\begin{definition}[Vertex threshold checks]\label{def:vertex-threshold-check}
Fix \(v\in X(1)\) and let the set of measurement bits in the local view of \(v\) be
\[
M_v:=\{(i,t)\in [N]\times [\ell]: v\in D_i\}.
\]
Let $0\leq \tau_\mathrm{out}\leq 1$ and $\mathrm{Th}_{\tau_\mathrm{out}}$ be the threshold function,
we define \(\Pi_v^{\tau_\mathrm{out}}\) to be the projector acting on these qubits
\[
\Pi_v^{\tau_\mathrm{out}}
:=
\sum_{b\in \{0,1\}^{M_v}} \mathrm{Th}_{\tau_\mathrm{out}}(b)\,\ket{b}\!\bra{b}.
\]
We also denote the \emph{vertex OR checks} $\Pi_v
:=
\sum_{b\in \{0,1\}^{M_v}} \mathrm{OR}_{M_v}(b)\,\ket{b}\!\bra{b}$ as in~\Cref{def:vertex-check-operator}.
\end{definition}

The rest of this section is organized as follows. First, we will show in~\Cref{subsec:completeness-nc-protocol} the completeness of the modified protocol (\Cref{claim:completeness-nc-protocol}). Then we show in~\Cref{subsec:nc-robust-measurement-circuit} that the logical measurement circuit is robust to adversarial errors if the original Hamiltonian has large energy, by proving a version of the detectability lemma that applies to the sequential measurement circuit. Next, in~\Cref{subsec:distribution-APgamma}, we analyze the output distribution of the modified protocol for the corrupted computation state in~\Cref{eq:state-overview-sec7}. Finally, in~\Cref{sec:soundness-nc-protocol}, we use the sampling properties of the HDX to conclude the soundness of the modified protocol (\Cref{thm:soundness-nc-protocol}).

\subsection{Completeness}\label{subsec:completeness-nc-protocol}
Here we show that the modified protocol has near-perfect completeness and that its output state can be prepared efficiently when the input Hamiltonian has zero ground energy.

\begin{claim}\label{claim:completeness-nc-protocol}
    Let the input Hamiltonian $H_\mathrm{in}$ have zero-energy ground state $\ket{\xi_0}$, and consider the normalized state
    \begin{align}
        \ket{\Psi_0} \propto A_\eta^{\otimes Md} \cP_\gamma  \encq(\ket{\xi_0}).
        \label{eq:nc-completeness-1}
    \end{align}
    Then, the measurement bits give an i.i.d. Bernoulli distribution with parameter $ \gamma ' = \frac{\gamma (1-\eta) }{ \gamma (1-\eta) + (1-\gamma) \eta}$. As a corollary, the expected value of the vertex OR checks is
    \begin{align}
        \mathbb{E}_{v} \Tr(\Pi_v \ketbra{\Psi_0}) \leq \gamma' C_{\max}, \qquad \text{where } C_{\max} := \max_v |M_v|
        \label{eq:nc-completeness-2}
    \end{align}
    And in the regime $(\tau_\mathrm{out} -\gamma') \cdot \sqrt{C_{\min}} \geq \pl n$, where $C_{\min}:=\min_v |M_v|$, the expected value of the threshold checks
    \begin{align}
        \mathbb{E}_{v} \Tr(\Pi_v^{\tau_\mathrm{out}} \ketbra{\Psi_0}) \leq 2^{- {\pl n}}.
        \label{eq:nc-completeness-3}
    \end{align}
    Furthermore, the state $\ket{\Psi_0}$ is efficiently preparable given $\ket{\xi_0}$.
\end{claim}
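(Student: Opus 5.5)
The plan is to compute the ideal output of $\cP_\gamma$ on $\encq(\ket{\xi_0})$ explicitly via \Cref{claim:ideal-protocol-nc}, and then to apply the product operator $A_\eta^{\otimes Md}$ coordinatewise.

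\textbf{Step 1: the ideal output.} Since $\ket{\xi_0}$ has zero energy, every term measurement returns $0$ deterministically. Hence $C\ket{\xi_0}=\ket{0^M}\otimes\ket{\xi_0'}$, i.e.\ $\alpha_y=\delta_{y,0^M}$. By \Cref{claim:ideal-protocol-nc},
\[
\cP_\gamma\encq(\ket{\xi_0})=\sum_{x\in\{0,1\}^{Md}}\sqrt{p_\gamma(x)}\,\ket{x}_{\mathrm{meas}}\otimes\encq(\ket{0^M}\ket{\xi_0'}\ket{x})\otimes\ket{\mathrm{rest}}.
\]
The key structural point is that the purifying register $\encq(\ket{x})$ is orthogonal for distinct $x$. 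The measurement register is therefore effectively the classical mixture $\bigotimes_{j}\bigl((1-\gamma)\ketbra{0}+\gamma\ketbra{1}\bigr)$.

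\textbf{Step 2: applying $A_\eta^{\otimes Md}$.} Because $A_\eta$ is diagonal, it multiplies the $x$-branch by $\prod_j \sqrt{\eta}^{\,1-x_j}\sqrt{1-\eta}^{\,x_j}$ and preserves the orthogonality between branches. After normalization, the weight of $x$ is
\[
\prod_j \frac{\bigl(\gamma(1-\eta)\bigr)^{x_j}\bigl((1-\gamma)\eta\bigr)^{1-x_j}}{\gamma(1-\eta)+(1-\gamma)\eta}.
\]
This is exactly an i.i.d.\ Bernoulli$(\gamma')$ distribution with $\gamma'=\frac{\gamma(1-\eta)}{\gamma(1-\eta)+(1-\gamma)\eta}$.

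\textbf{Step 3: the OR checks.} For each $v$, $\Pi_v$ is diagonal, so $\Tr(\Pi_v\ketbra{\Psi_0})$ is the probability that some bit in $M_v$ equals $1$. A union bound gives at most $\gamma'|M_v|\le\gamma' C_{\max}$, which proves \eqref{eq:nc-completeness-2}.

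\textbf{Step 4: the threshold checks.} $\Tr(\Pi^{\tau_{\mathrm{out}}}_v\ketbra{\Psi_0})$ is the probability that the fraction of ones among $|M_v|\ge C_{\min}$ i.i.d.\ Bernoulli$(\gamma')$ bits reaches $\tau_{\mathrm{out}}$. Hoeffding's inequality bounds this by
\[
\exp\bigl(-2(\tau_{\mathrm{out}}-\gamma')^2 C_{\min}\bigr)\le \exp(-2\,\pl^2 n)=2^{-\pl n}.
\]
Averaging over $v$ preserves the bound, giving \eqref{eq:nc-completeness-3}.

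\textbf{Step 5: efficient preparation.} This is the only step I expect to be non-obvious. Running $\cP_\gamma$ honestly on $\encq(\ket{\xi_0})$ is an efficient circuit, but the final application of the non-unitary $A_\eta^{\otimes Md}$ cannot be implemented by postselection, since the success probability is exponentially small. The approach I would take instead uses the orthogonality observation from Step 2. Applying $A_\eta^{\otimes Md}$ is equivalent to replacing the ancilla states $\sqrt{1-\gamma}\ket{00}+\sqrt{\gamma}\ket{11}$ in \Cref{algo:noisy-rep-code-output} by $\sqrt{1-\gamma'}\ket{00}+\sqrt{\gamma'}\ket{11}$. The reason is that $A_\eta$ acts on a bit that is perfectly correlated, through the CNOT, with the encoded copy in $B_j$.

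More precisely, because $y=0$ here, the measurement bit equals $x_j$. Reweighting the amplitude of $x_j$ therefore yields exactly the honest $\cP_{\gamma'}$ output, up to normalization. This gives
\[
\ket{\Psi_0}=\cP_{\gamma'}\,\encq(\ket{\xi_0}),
\]
which is a polynomial-size unitary protocol applied to $\encq(\ket{\xi_0})$. It is therefore efficiently preparable given $\ket{\xi_0}$.
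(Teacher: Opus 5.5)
Your proposal is correct and follows essentially the same route as the paper. You apply \Cref{claim:ideal-protocol-nc} with the deterministic all-zero outcome, reweight the diagonal, branch-orthogonal measurement register by $A_\eta^{\otimes Md}$ to obtain i.i.d.\ Bernoulli$(\gamma')$ bits, use a union bound and a Chernoff-type bound for the two checks, and realize the effect of $A_\eta$ by preparing the noisy ancillas with bias $\gamma'$ instead of $\gamma$. If anything, your Hoeffding step (with the correct squared exponent) and your remark that the reweighting identity relies on $y=0^M$ are slightly more precise than the paper's write-up.
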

\begin{proof} Let $\ket{\xi_0}$ be a ground state of $H_\mathrm{in}$ and consider the execution of $\cP_\gamma$ on input $\encq(\ket{\xi_0})$, where $\cQ$-blocks are distributed to the cliques as specified by the protocol.
    It follows from~\Cref{claim:ideal-protocol-nc} that the output of $\cP_\gamma$ is 
    \begin{align} \cP_\gamma(\encq(\ket{\xi_0})) = \sum_{x \in \{0,1\}^{M d} } \sqrt{p_\gamma(x)} \ket{x}_\mathrm{meas} \otimes \encq(\ket{x}) \otimes \ket{\mathrm{rest}'},
    \end{align}
    where $p_{\gamma}(x)$ denotes the multivariable Bernoulli distribution with bias $\gamma$,
    whose reduced state on the measurement bits is
    \begin{align}
        \sum_{x \in \{0,1\}^{Md}}  p_{\gamma}(x) \ketbra{x}{x}.
    \end{align}
    Applying the $A_\eta$ operators then gives
    \begin{align}
        A_\eta^{\otimes Md} \sum_{x \in \{0,1\}^{Md}}  p_{\gamma}(x) \ketbra{x}{x} A_\eta^{\otimes Md}  \propto \sum_{x \in \{0,1\}^{Md}}  p_{\gamma}(x) (1-\eta)^{|x|} \eta^{Md-|x|} \ketbra{x}{x} \propto \sum_{x \in \{0,1\}^{Md}}  p_{\gamma'}(x) \ketbra{x}{x}.
    \end{align}
    The OR check in~\Cref{eq:nc-completeness-2} follows from the union bound over the measurement bits in the link of a vertex. For the threshold checks in~\Cref{eq:nc-completeness-3},
    by the additive Chernoff bound, each local view of a vertex $v$ will have more than $\tau_\mathrm{out}$ fraction of $1$'s with probability $e^{- \Omega((\tau_\mathrm{out} -\gamma')\sqrt{|M_v|} )}$.

    The only nontrivial place to confirm that $\ket{\Psi_0}$ is efficiently prepareble from $\ket{\xi_0}$ is in applying $A_\eta^{\otimes Md}$, which seems problematic because $A_\eta$ is non-unitary. However, as seen above, the effect of $A_\eta$'s is simply to change the bias of the Bernoulli distribution on the measurement bits from $\gamma$ to $\gamma'$. Thus, we can simply prepare the noisy repetition code with bias $\gamma'$ instead of $\gamma$ in~\Cref{algo:noisy-rep-code-output}, and then we do not need to apply $A_\eta$'s at all.
\end{proof}

\subsection{Robust Measurement Circuit from the Robust Detectability Lemma}\label{subsec:nc-robust-measurement-circuit}
In the noncommuting case, while intuitive, it is a priori far from trivial why the sequential measurement circuit should output a large number of violations if the input Hamiltonian has large energy. This question is closely related to the detectability lemma~\cite{AharonovALV09} with a subtle difference. In~\cite{AharonovALV09}, the authors give a `proof checking' statement: for any state, there exists a layer of non-overlapping terms that incur high violations. However, the mere existence of such layer is not sufficient for our purposes. Because we sequentially measure the layers in the circuit $C$, the state changes as we move on to the next layers. Nevertheless, we show that their proof can be adapted to establish an operational variant of the detectability lemma for sequential measurement circuits. As a corollary, we show that if the Hamiltonian has energy at least $\epsilon$, then the sequential measurement circuit is robust to $\Theta(\epsilon)$-fraction of adversarial errors. To state this lemma we need a few mild assumptions on our input Hamiltonian as similarly made in~\cite{AharonovALV09}.

\begin{assumption}\label{assumption:DL-Hamiltonian}
    Fix constants $k,g$ and a fixed set $\mathfrak{h}$ of $k$-qubit projectors that can be specified by a constant number of bits. We will assume that $H = \sum_{i=1}^{M} h_i$ on $n$ qubits, such that each $h_i$ is a projector that belongs to the set $\mathfrak{h}$ and only intersects with $g=\Theta(1)$ other terms in $H$.
\end{assumption}

\begin{definition}[Robust measurement circuits]
Consider a quantum circuit $C$ with $n$ inputs and $M$ measurement outputs. 
We say that $C$ is $(\eta, \epsilon, \delta)$-robust to adversarial faults if the following holds. For any subset $F$ of locations in $C$ of size at most $\eta |C|$, the adversary corrupts the gates in $F$ and is allowed to take a coherent superposition of different fault patterns on different input states, 
\begin{align}
    \sum_{\mathsf{F} \in \mathsf{PAULI}^{F}} c_\mathsf{F} C_\mathsf{F} \ket{\psi_\mathsf{F}}.
\end{align}
Let the corresponding distribution on the measurement qubits $q(y)$, then
\begin{align}
    \Pr_{y \sim q}[|y| \leq \epsilon M] \leq \delta.
\end{align}

\end{definition}

In~\Cref{sec:DL-new}, we will prove the following version of the detectability lemma that applies to the sequential measurement circuit.

\begin{lemma}[Robust detectability lemma]\label{lem:dl} Let $H$ be a Hamiltonian as in~\Cref{assumption:DL-Hamiltonian}, whose (unnormalized) minimum energy $ \geq \epsilon M$.
Then its sequential measurement circuit is a $(\epsilon/r, \epsilon/r', 2^{- r'' \epsilon M })$-robust circuit, where $r, r',r''$ are constants depending only on the constants in~\Cref{assumption:DL-Hamiltonian}.
\end{lemma}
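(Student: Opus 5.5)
The plan is to adapt the Aharonov--Arad--Landau--Vazirani detectability-lemma argument to the sequential measurement circuit, in three stages: a clean exponential bound, an extension to coherent superpositions of faults, and a union bound over fault patterns.

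\textbf{Stage 1 (fault-free circuit).} We show that the ideal circuit $C$ outputs few violations only with exponentially small probability. For $\mathbf y$ with $|\mathbf y|\le \epsilon M/r'$, write $Q_{\mathbf y}=Q^{(\ell)}_{y_\ell}\cdots Q^{(1)}_{y_1}$. Bounding $\sum_{|\mathbf y|\le \epsilon M/r'}\|Q_{\mathbf y}\ket\psi\|^2$ is a statement about a product of commuting-within-layer projections. Group the terms by their support pattern $\mathbf y$, and for each pattern compare $Q_{\mathbf y}$ with the detectability-lemma operator $\mathrm{DL}=\prod_j\prod_a(I-h_{j,a})$ restricted to terms outside $\supp(\mathbf y)$. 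The key quantity is $\|\prod(I-h)\ket\psi\|^2\le (1+\epsilon M/g^2)^{-1}$-type decay. To get exponential decay rather than the $1/(1+x)$ bound, we use the locality of $H$ and the constant-size gate set $\mathfrak h$. Partition the $n$ qubits into $\Theta(M)$ well-separated regions; since $H$ has energy at least $\epsilon M$ on every state, a constant fraction of the regions must carry local energy $\Omega(\epsilon)$. Then apply the detectability lemma region by region in a Chernoff-like way: within a region the layered projection shrinks the norm by a constant factor. Because the sequential circuit only ever multiplies projectors, these per-region contraction factors multiply, giving $\sum_{|\mathbf y|\le \epsilon M/r'}\|Q_{\mathbf y}\psi\|^2\le \binom{M}{\epsilon M/r'}(1-c\epsilon)^{\Omega(M)}\le 2^{-r''\epsilon M}$ once $r'$ is large.

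\textbf{Stage 2 (faults).} Each Pauli fault pattern supported on a set $F$ of at most $\epsilon|C|/r$ locations acts only on terms in the light cone of $F$, and there are $O_g(|F|)$ such terms. For a fixed $F$, every $C_{\mathsf F}$ agrees with $C$ outside this light cone. We therefore write the coherent superposition $\sum_{\mathsf F}c_{\mathsf F}C_{\mathsf F}\ket{\psi_{\mathsf F}}$ as $C'$ applied to a state on the enlarged input, where $C'$ is the sequential measurement circuit of the Hamiltonian $H'$ obtained by deleting the $O(|F|)$ terms near $F$, with arbitrary operators on the deleted region. The operator $H'$ still has energy at least $\epsilon M-O(\epsilon M/r)\ge \epsilon M/2$ for suitable $r$. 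Because Stage 1 bounds $\sum_{\mathbf y\in\mathrm{Low}}\|Q'_{\mathbf y}\,\cdot\|^2$ as an operator inequality, namely $\sum Q'^\dagger_{\mathbf y}Q'_{\mathbf y}\preceq 2^{-r''\epsilon M}I$, it holds for any input state, including the entangled superposition. This is the point where postselection is defused: the bound is on an operator norm, not on a probability for a fixed state.

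\textbf{Stage 3 (union bound) and main obstacle.} Taking a union bound over the $\binom{|C|}{\le \epsilon|C|/r}$ possible fault sets $F$ costs at most $2^{O(\epsilon M\log r/r)}$, which the exponent $r''$ absorbs. The main obstacle is Stage 1's operator-norm form. The original detectability lemma gives only a $1/(1+\epsilon M/g^2)$ bound for one round, not a tail bound over low-weight outcome strings of the sequential measurement. I would first try the region-partition argument with product structure; if that fails, I would instead iterate the detectability lemma across $\Theta(\epsilon M)$ disjoint blocks of layers, using that the constant-size term set $\mathfrak h$ gives a uniform local gap.
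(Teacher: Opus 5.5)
Your Stage 1 carries the entire content of the lemma, and the mechanism you propose for it does not work. The hypothesis $\lambda_{\min}(H)\ge\epsilon M$ is global and does not localize. Once you cut the qubits into well-separated regions and discard the terms joining them, every regional sub-Hamiltonian can be frustration-free, with all the frustration carried by the discarded boundary terms, so there is no state-independent contraction per region. A state-dependent ``local energy'' does not help either, because it is not preserved as the layers are measured one after another, which is exactly what the original detectability lemma fails to control.

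Even where a region does contract, norms of an entangled $\ket{\psi}$ do not multiply across regions. Only operator norms do, and the operator norm of $\prod_a(I-h_a)$ equals $1$ on any frustration-free region, so there is no Chernoff-type product. The counting also fails: a union bound over $\binom{M}{\epsilon M/r'}\ge 2^{(\epsilon M/r')\log(r'/\epsilon)}$ low-weight strings against a per-string bound $2^{-\Theta(\epsilon M)}$ forces $r'\gtrsim\log(1/\epsilon)$. That is not a constant depending only on $k,g,\mathfrak h$, and the lemma is applied with $\epsilon$ as small as $1/\mathrm{polylog}(n)$. Your fallback of iterating the detectability lemma over $\Theta(\epsilon M)$ blocks of layers is unavailable, since $C$ has only $L=O(1)$ layers and measures each term exactly once.

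The missing idea is the energy-comparison argument of \Cref{lem:ideal-detectability}, which never localizes energy. First, apply the ground-energy bound to each post-measurement branch $\ket{\Omega_{\mathbf y}}=Q_{\mathbf y}\ket{\psi}$ separately. This gives $E_0\,x^2(\ell)\le\sum_{\mathbf y}\bra{\Omega_{\mathbf y}}H\ket{\Omega_{\mathbf y}}$, summed over outcomes with at most $\ell$ violations per layer. Then bound the same sum from above. Grow pyramids from one layer, factor $Q_{\mathbf y}=A_{y_{\mathrm{in}}}B_{y_{\mathrm{out}}}$, and split each pyramid's space into $X$-blocks, where every projector acts as $0$ or $I$, and a $Y$-space. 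On the $Y$-space every legal product of the pyramid's projectors contracts by a uniform $\theta<1$; this is where finiteness of $\mathfrak h$ enters.

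In an $XY$-sector with $s$ $Y$-pyramids, the top-layer energy is at most $s+\ell$, because an $X$-pyramid top contributes only if $\mathbf y$ marks it violated. The sector weight is at most $\theta^{2s}$ times the number of compatible inside patterns. Covering each layer by $O(1)$ pyramid families and using suffix circuits gives $E_0x^2(\ell)\le a\ell\,x^2(\ell)+b\,2^{-c\ell}$, hence $x^2(\ell)\le b\,2^{-c\ell}/(E_0-a\ell)$. Taking $\ell=\Theta(\epsilon M)$ yields the exponential tail with no $\log(1/\epsilon)$ loss. Because the sector projectors form a projective decomposition, entanglement causes no difficulty.

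Your Stage 2, by contrast, is a sound alternative to the paper's fault handling, provided the deleted set is closed backward in time. Each deleted measurement or fault must commute with every retained measurement preceding it, so that all of them can be pushed onto the input, as in \Cref{lem:backward-fault-localization}. Deleting these $O(|F|)$ terms and using $\lambda_{\min}(H')\ge\epsilon M-O(|F|)$ replaces the paper's step of resetting the touched ancillas $K_A$ and discarding their forward lightcone $B$ via \Cref{lem:reset-locality}. Stage 3 should be dropped: robustness is required for each fixed $F$, so no union bound over fault sets is needed, and your estimate of its cost omits the same $\log(1/\epsilon)$ factor.
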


\subsection{Fault Tolerance of the Measurement Distribution under Adversarial Edge-Corruptions}\label{subsec:distribution-APgamma}

We now analyze the output distribution of~\Cref{algo:nc-protocol} under tensor-network adversarial edge-corruption model:
\begin{align}
    \ket{\Psi} = \sum_{\sF\in \mathsf{PAULI}^{\cE}} c_{\sF} A_\eta^{\otimes Md} (\cP_\gamma)_\sF \ket{\phi_\sF}.
\end{align}

\begin{definition}\label{def:nc-bad-cliques}
    Let us recall the definitions of good and bad cliques from~\Cref{def:bad-cliques}. Fix a corrupted edge set $\cE\subseteq X(2)$. For each clique \(D_i\), we call it \emph{bad} if it contains at least $\eps_0$-fraction of corrupted edges 
and \emph{good} otherwise. Let $\cB_{\cE}$ denote the set of bad cliques. We will work with the following parameters
\begin{align}
    \label{eq:params-section7}
\eps_0 \leq \min\!\left\{\eps^*_{\mathrm{comp}},\,\eps^*_{\mathrm{out}},\,\frac{\delta}{10}\right\},
\qquad
\tau:=\min\!\left\{c_{\mathrm{comp}}\sqrt{\eps_0},\,c_{\mathrm{route}}\delta,\,\beta^*_{\mathrm{out}}(\delta)\right\}.
\end{align}
where $\delta$ is the parameter of the quantum MPC code from~\Cref{sec:mpc}, \(\eps^*_{\mathrm{comp}},c_{\mathrm{comp}}\) come from \Cref{thm:distributed-FTQC}, \(\eps^*_{\mathrm{out}},\beta^*_{\mathrm{out}}(\delta)\) come from \Cref{lem:rep-code}, and \(c_{\mathrm{route}}\) comes from \Cref{prop:ft-routing}. Intuitively, $\varepsilon_0$ is the noise threshold of the quantum MPC scheme in~\Cref{sec:mpc}, and $\tau$ is the residual error guarantee of quantum MPC on a good (below-threshold) clique.

For each good clique \(D_i\), let $\cE_i$ denote the set of corrupted edges inside it and define the set of bad vertices inside it as $W_i :=\{v\in D_i:\deg_{\cE_i}(v)>\delta d/2\}$, where $d$ is the clique size. Note that $|W_i| \leq (2\eps_0/\delta) \cdot d$ .
We say that \(v\) is good in $D_i$ if \(v\notin W_i\). 
\end{definition}

Consider first a single summand in the above expression. We can obtain a fault tolerance guarantee similar to~\Cref{lem:ft}.

\begin{prop}[Fault-tolerance of \(\cP_\gamma\)]\label{lem:ft-Pgamma}
Fix a corrupted edge set \(\cE\subseteq X(2)\) with \(\mu(\cE)\le \eps_0\). Then there exists \(\kappa\le \mu(\cE)/\pl n\) and exceptional clique subsets
$L_t\subseteq [N], t\in[\ell]$ \footnote{Recall that $N\cdot \ell=M$ is the total number of constraints in the input Hamiltonian.}, 
depending only on \(\cE\), with \(|L_t|\le \kappa N\) for every \(t\), such that the following holds.

For every pure input state \(\ket{\phi_{\mathsf{F}}}\) and every Pauli fault pattern \(\mathsf F\) with
\(\operatorname{supp}_{\cP_\gamma}(\mathsf F)\subseteq \cE\), there exist a set of indices $r$ and corresponding coefficients \(c_{r, \mathsf F}\), normalized logical witness states \(\ket{\psi_{r, \mathsf{F}}}\) to the logical circuit $C$, Pauli operators \(E_{r, \mathsf F}\), and work-register states \(\ket{\mathrm{rest}_{r,\mathsf{F}}}\) such that
\begin{align}\label{eq:2dsaaw21}
(\cP_\gamma)_{\mathsf F}\ket{\phi_{\sF}}
&=\sum_r c_{r,\sF}\,E_{r,\sF}\ \mathrm{NEnc}_\mathrm{out} ( C \ket{\psi_{r,\sF}} ) \otimes \ket{\mathrm{rest}_{r,\sF}}\\
&=\sum_r c_{r,\sF}\,E_{r,\sF}\ \sum_{y \in \{0,1\}^{M} \atop x \in \{0,1\}^{Md}}  \alpha_{r,y,\sF} \sqrt{p_\gamma(x)} \ket{\Rep(y)\oplus x} \otimes \encq(\ket{y}\ket{\xi_{r,y,\sF}} \ket{x}) \otimes \ket{\mathrm{rest}_{r,\sF}} \notag{}
\end{align}
where $\mathrm{NEnc}_\mathrm{out}$ is defined in~\Cref{claim:ideal-protocol-nc} and the second line expands the logical output state 
\begin{align}
    C \ket{\psi_{r, \mathsf F}} = \sum_{y \in \{0,1\}^{M} }\alpha_{r,y, \mathsf F} \ket{y} \ket{\xi_{r,y, \mathsf F}}.
\end{align}
Moreover, every summand in \eqref{eq:2dsaaw21} satisfies:

\begin{enumerate}
    \item The restriction of $E_{r,\mathsf F}$ to every $\cQ$-block outside the lightcone \(\LC(L_1,\ldots,L_\ell)\) has weight at most $\tau d$.
    \item If \(M_{i,t}\) is a measurement wire outside the lightcone \(\LC(L_1,\ldots,L_\ell)\), then the restriction of $E_{r, \mathsf F }$ to the repetition block corresponding to \(M_{i,t}\) is supported only on the coordinates indexed by $W_i(\cE)$.
\end{enumerate}
\end{prop}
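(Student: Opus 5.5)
The plan is to rerun the proof of \Cref{lem:ft} almost verbatim, changing only the measurement-readout gadget. The per-clique gadget is now MPC (\Cref{thm:distributed-FTQC}) followed by the noisy readout \Cref{algo:noisy-rep-code-output}, in place of \Cref{lem:rep-code}. We take the exceptional sets $L_1:=\cB_{\cE}$ and $L_{t+1}:=\cB_{\cE}\cup R_t$, where $R_t$ is the bad destination set of the layer-$t$ routing. The bound $\eta_{\mathrm{bad}}\le\mu(\cE)/\log^{c+1}n$ and the routing loss $\beta_{\mathrm{route}}=O(\mu(\cE)/\log^c n)$ from \Cref{prop:ft-routing} give $\alpha_{t+1}\le\alpha_t+\eta_{\mathrm{bad}}+\beta_{\mathrm{route}}$. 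Since $\ell=O(1)$, we get $|L_t|\le\kappa N$ with $\kappa\le\mu(\cE)/\pl n$. These sets depend only on $\cE$, because bad cliques do and $R_t$ is determined by $\cE$ together with the bad input blocks.

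The only genuinely new ingredient is a fault-tolerance statement for the noisy readout on a good clique. We run the MPC protocol on the circuit that prepares $\sqrt{1-\gamma}\ket{00}+\sqrt{\gamma}\ket{11}$ from $\mathrm{Init}$, $H$-type and CNOT gates; if $\gamma$ is not exactly reachable, we approximate it by adjusting the ideal definition of $\mathrm{NEnc}_\mathrm{out}$. Theorem \ref{thm:distributed-FTQC} then shows that each fresh block $B_j$ is a $c_{\mathrm{comp}}\sqrt{\eps_0}$-deviation of the ideal encoding. After that we copy the proof of \Cref{lem:rep-code} line by line: the transversal CNOT from $B_0$ into the first register of $B_j$ adds no edge faults, error suppression (\Cref{prop:error-suppression}) resets each block to a $T_{\mathrm{es}}\sqrt{\eps}$-deviation, and each good vertex $v\notin W_i$ decodes its first register correctly because $t_{\mathrm{es}}+\delta d/2<\delta d$. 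By linearity over the ideal branches, the logical state of the clique is $\sum_{y,x}\alpha_y\sqrt{p_\gamma(x)}\ket{\Rep(y)\oplus x}\otimes\encq(\ket{y}\ket{\xi_y}\ket{x})$, which matches \Cref{claim:ideal-protocol-nc}. The Pauli error on the readout qubits is confined to $W_i$, and the errors on the remaining $\cQ$ blocks (the kept copy of $y$ and the $x$ registers) have weight at most $\tau d$. The ideal object being preserved is an isometry of the logical input, so deviations compose, and parallel and sequential composability (\Cref{lem:parallel-composability}, \Cref{def:seq-composability}) apply as before.

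Next we handle an arbitrary input $\ket{\phi_\sF}$ using friendliness. Applied to the layer-1 gadgets on good cliques, friendliness produces effective logical witnesses. Because the input is an arbitrary superposition of Pauli deviations, and because state replacement in VQSS branches on the accept/reject outcome, we get a sum over $r$ of logical witnesses $\ket{\psi_{r,\sF}}$ rather than a single one. So we write the input as $\sum_r c_r P_r\Enc(\ket{\phi_r})\otimes\ket{\mathrm{aux}_r}$, apply friendliness term by term, and then carry the induction (assertions $A_t,B_t$) through each term. Outputs in the lightcone of $L_t$ are absorbed into the work register, exactly as in \Cref{lem:ft}. This yields \eqref{eq:2dsaaw21} with the two support conditions.

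I expect the main obstacle to be bookkeeping rather than substance. The noisy-readout blocks leave the entangled $\encq(\ket{x})$ and $\encq(\ket{y})$ copies on clique $i$, and these must be counted as $\cQ$-blocks subject to condition 1 without being routed onward. We also have to check that $E_{r,\sF}$ can be chosen Pauli with support sets independent of the branch $(y,x)$. We handle the latter using \Cref{remark:pauli}, expanding any non-Pauli residual error in the Pauli basis with supports fixed by $\cE$ and $W_i$.
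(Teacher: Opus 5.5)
Your proposal is correct and follows the paper's route: the paper proves this proposition only as a ``straightforward imitation of the proof of \Cref{lem:ft},'' and that is what you do, with the noisy readout gadget (analyzed by rerunning the argument of \Cref{lem:rep-code}) in place of the repetition readout. You also fill in details the paper leaves implicit: applying friendliness term by term to obtain the sum over logical witnesses $\ket{\psi_{r,\mathsf F}}$, and noting that the $\gamma$-biased resource state must be exactly preparable with the available gate set.
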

\begin{proof}
Straightforward imitation of the proof of~\Cref{lem:ft}.
\end{proof}

Using the fault tolerance guarantee of $\cP_\gamma$ in \Cref{lem:ft-Pgamma}, we will show in~\Cref{lem:output-distribution-Pgamma} below that the measurement bit distribution of $\ket{\Psi}$, in the soundness case, has high Hamming weight on good cliques.

\begin{definition}\label{def:gamma'}
    Let $p_\gamma (x)$ denotes the multivariable Bernoulli distribution with bias $\gamma$. Let $0\leq \eta \leq 1/2$, define the parameters
    \begin{align}
        \gamma ' = \frac{\gamma (1-\eta) }{ \gamma (1-\eta) + (1-\gamma) \eta}, \qquad \gamma '' = \frac{(1-\gamma)(1-\eta) }{ (1-\gamma)(1-\eta) + \gamma \eta}.
    \end{align}
    Later we will choose
    \begin{align}
    \eta = \max\left(\frac{1}{2} - \Theta(\log(1/\gamma)\eps_0/\delta), \Theta(\gamma^{0.3})\right),
\end{align}
    in which case we have
\begin{align}
    \gamma' = \frac{\gamma(1-\eta)}{\gamma(1-\eta) + \eta (1-\gamma)} \lll \gamma^{0.5}, \qquad \text{and} \qquad \gamma'' = \frac{(1-\gamma)(1-\eta)}{(1-\gamma)(1-\eta) + \gamma \eta} \geq 1 - O(\gamma ).
\end{align}
\end{definition}

\begin{lemma}[Output distribution of faulty $\cP_\gamma$]\label{lem:output-distribution-Pgamma}  
    Suppose the input Hamiltonian has (unnormalized) ground energy $\epsilon M$. Fix \(\cE\subseteq X(2)\) with \(\mu(\cE)  \le \eps_0\).
    Consider an adversarially faulty execution of the distributed protocol $\cP_\gamma$, including the final postselection gates $A_\eta$, with the parameter choices as in~\Cref{def:gamma'} for the sequential measurement circuit $C$ of $H_\mathrm{in}$ with faults restricted to edges $\cE$.

    If
    $
        \mu(\cE)/{\pl n} \lll \epsilon 
    $
    then the
    distribution of the physical measurement bits held by the good vertices of the good cliques\footnote{See~\Cref{def:nc-bad-cliques}.}, denoted $\cW^c$, has the form
     \begin{align}
        P(z) = \sum_{y_G} q(y_G)  P_{\Rep(y_G)_{\cW^c}}(z),
     \end{align}
     where $q(y_G)$ is a distribution over the logical measurement bits held by the cliques outside of $LC(L_1,...,L_\ell)$ \footnote{The lightcone of the corrupted/exceptional cliques, see~\Cref{lem:ft-Pgamma}.} satisfying $\Pr_{y_G \sim q}[|y_G| \ggg \epsilon M] \geq 1-2^{-\poly(M)}$, and
     \begin{align}
         P_{\Rep(y_G)_{\cW^c}}(z) =  \bigotimes_{j \in \cW^c: \Rep(y_G)_j=1} p_{\gamma''}(z_j) \bigotimes_{j \in \cW^c: \Rep(y_G)_j=0} p_{\gamma'}(z_j).
     \end{align}
\end{lemma}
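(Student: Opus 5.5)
Substitute the fault-tolerance decomposition of \Cref{lem:ft-Pgamma} into $\ket{\Psi}=\sum_{\sF}c_{\sF}A_\eta^{\otimes Md}(\cP_\gamma)_{\sF}\ket{\phi_{\sF}}$, trace out everything except the measurement bits on $\cW^c$, and reduce the resulting distribution to a logical statement about $C$, much as in \Cref{lem:rho} and \Cref{lem:soundness-of-rho}. Two features are new here: the injected Bernoulli noise, and the fact that the adversary can now cause logical $Z$ errors on the measurement wires.

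\textbf{Per-clique computation.} Fix a good measurement block $M_{i,t}$ outside $\LC(L_1,\ldots,L_\ell)$. By item 2 of \Cref{lem:ft-Pgamma}, every $E_{r,\sF}$ acts on that block only through $W_i$. So on the coordinates $j\in D_i\setminus W_i$, every summand contains the factor
\[
\sum_{x_j}\sqrt{p_\gamma(x_j)}\,A_\eta\ket{y_{i,t}\oplus x_j},
\]
with the matching copy $\encq(\ket{x_j})$ in the $\mathcal Q$ register. Because $x_j$ is also recorded in that quantum register, the branches with different $x_j$ are orthogonal. The data-block restriction of $E$ has weight at most $\tau d$ (item 1), so it cannot make them interfere: reversible decoding (\Cref{claim:reversible-decoding-properties}) turns the error into a syndrome only. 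Tracing out, these coordinates therefore contribute a diagonal product distribution. Its weights are $\gamma(1-\eta)$ versus $(1-\gamma)\eta$ when $y=0$, and $(1-\gamma)(1-\eta)$ versus $\gamma\eta$ when $y=1$. After normalization these are exactly $p_{\gamma'}$ and $p_{\gamma''}$, as in \Cref{claim:completeness-nc-protocol}.

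\textbf{The normalization factor.} This factor depends on $y_{i,t}$: it is $\gamma(1-\eta)+(1-\gamma)\eta$ when $y=0$ and $(1-\gamma)(1-\eta)+\gamma\eta$ when $y=1$, per good coordinate. These weights get absorbed into the logical distribution, so
\[
q(y_G)\;\propto\;\tilde q(y_G)\prod_{(i,t)}w_{y_{i,t}}^{\,|D_i\setminus W_i|}\times(\text{adversarial reweighting through }W_i),
\]
where $\tilde q$ is the logical distribution produced by the tensor-network state. The conditional product form of $P(z)$ given $y_G$ is then immediate. The remaining work is the weight bound on $q$.

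\textbf{The weight bound (main obstacle).} The adversary controls the $W_i$ coordinates and the coherent sum over $\sF$. That control lets it apply an arbitrary operator on the repetition coordinates in $W_i$, which amounts to noisy postselection with per-clique bias at most a ratio of about $((1-\gamma)/\gamma)^{|W_i|}$ between the $y=0$ and $y=1$ branches, together with logical $Z$'s. Using the robust detectability lemma (\Cref{lem:dl}) requires the following reduction.

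\begin{itemize}
\item First reduce the logical side, exactly as in \Cref{lem:rho} and \Cref{lem:soundness-of-rho}. The errors on $K'$ and the logical $Z$/postselection operators on measurement wires outside $K_{\mathrm{meas}}$ become a coherent superposition of faults on a set of size $O(\mu(\cE)/\pl n)\cdot M\le\epsilon M/r$.
\item Then treat the adversarial reweighting as part of the coherent fault. Bound it by a per-clique likelihood ratio $\left(\frac{(1-\gamma)}{\gamma}\right)^{|W_i|}$. Compare this with the friendly factor $\left(\frac{(1-\gamma)(1-\eta)+\gamma\eta}{\gamma(1-\eta)+(1-\gamma)\eta}\right)^{-|D_i\setminus W_i|}$, which favours $y=1$.
\item Since $|W_i|\le(2\eps_0/\delta)d$, the choice $\eta=\tfrac12-\Theta(\log(1/\gamma)\eps_0/\delta)$ from \Cref{def:gamma'} makes the product of these two factors at least $1$ in every good clique. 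The net reweighting thus only increases the weight of strings with more $1$'s.
\item Finally, apply \Cref{lem:dl} to the unweighted logical distribution: strings of weight $\le\epsilon M/r'$ have mass at most $2^{-r''\epsilon M}$. A reweighting monotone toward $1$'s cannot raise that mass.
\end{itemize}

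The hardest part is making this monotonicity rigorous. The adversarial part is coherent rather than a classical likelihood ratio, so I would first bound the operator on $W_i$ in operator norm by the worst-case postselection and then apply a Cauchy–Schwarz/majorization argument. Lightcone cliques are simply dropped from $y_G$; they cost only $O(\mu(\cE)/\pl n)M\ll\epsilon M$ in weight.
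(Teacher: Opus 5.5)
Your skeleton matches the paper's: orthogonality from the noise record on good coordinates, the friendly factor $\Gamma$, an adversarial bias of at most $\beta^{-|W_i|}$ per clique (with $\beta=\gamma/(1-\gamma)$), and \Cref{lem:dl} for the lightcone. However, the argument does not close as written.

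\textbf{The closing step is false as stated.} Your last bullet says a reweighting that favours $1$'s clique-by-clique cannot raise the mass \Cref{lem:dl} assigns to low-weight strings. This fails for non-uniform product tilts. Take three bits with tilts $(\Lambda,1,1)$, $\Lambda\gg 1$, and $q$ uniform on $\{100,011\}$. The tilt is coordinatewise monotone, yet $\Pr[|y|\le 1]$ rises from $1/2$ to $\Lambda/(\Lambda+1)$.

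The tilts here genuinely are non-uniform. For a fixed adversarial pattern $o'$ on $W_i$, the factor for $y_{i,t}=1$ versus $y_{i,t}=0$ is $\Gamma^{|D_i\setminus W_i|}\beta^{|W_i|-2|o'_i|}$. Across cliques this varies over a window of width $\beta^{-2|W_i|}$.

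The paper instead compares pointwise across a threshold gap. Take any $y_G$ with $|y_G|\ge c\epsilon M$ and any $y'_G$ with $|y'_G|<c'\epsilon M$, where $c'\ll c$. The uniform bound $|W_i|\le (2\eps_0/\delta)d$ and the triangle inequality give a tilt ratio of at least $(\Gamma^{0.98}\beta^{2.02\eps_0/\delta})^{|y_G|d}$. The choice of $\eta$ in \Cref{def:gamma'} is calibrated exactly so that this base exceeds $1$. Only then does $\Lambda_{\mathrm{low}}\le 2^{-r''\epsilon M}\Lambda_{\mathrm{high}}$ survive the tilt. Knowing that the product of the two factors is at least $1$ in every good clique is not enough.

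\textbf{Missing decoherence step, and a misplaced use of DL.} You treat the adversary's action on $W_i$ as genuinely coherent and propose an operator-norm bound plus majorization. An operator-norm bound controls surviving weights only from above. Your likelihood-ratio comparison needs a lower bound on the surviving $y=1$ branch. That lower bound exists only because the noise on $W_i$ is also recorded.

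This is the idea you are missing: the same record $\encq(\ket{x})$ you used on $D_i\setminus W_i$ covers the $W_i$ coordinates too. Expand the adversarial operator on $\cW$ as $\sum c_{o,o'}\ketbra{o}{o'}$. For fixed $(y_G,x_G)$, only $o'_G=(\Rep(y_G)\oplus x_G)_{\cW_G}$ survives. Knill--Laflamme on the good blocks then kills all cross terms between distinct $g=(o,o'_G,\mathsf E_G)$. Hence $P(z)=\sum_g\|\bra{z}\ket{\Psi_g}\|^2$ (\Cref{claim:reduction-to-bad-cliques}). For each $g$, the adversary's effect is exactly the classical factor $\beta^{|\Rep(y_G)_{\cW_G}\oplus o'_G|}$, which is two-sided. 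Since $y_G$ is itself still encoded on the good blocks, no logical-$Z$ interference between different $y_G$ arises.

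Relatedly, your first bullet cannot work. The $Z$/postselection operators on good-clique measurement wires sit on $\Theta(M)$ wires. They cannot be folded into a DL fault set of size $O(\mu(\cE)/\pl n)M$. Absorbing $Z^I$ into the input, as in \Cref{lem:soundness-of-rho}, relied on commutativity. In the paper, \Cref{lem:dl} is applied separately for each fixed $g$, and only to the bad-clique faults on $B\cup F$ (\Cref{claim:Lambda-high}). The good-clique effects enter solely through the explicit tilt, which the threshold-gap comparison above then handles.
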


In other words, the distribution of the good measurement bits in the good cliques is a noisy repetition encoding of an underlying logical distribution $q(y_G)$ that is mostly supported on $|y_G| \ggg \epsilon M$.

\paragraph{Proof overview:} Similarly to the commuting case proof in~\Cref{sec:protocol-for-circuit}, we will show that $P(z)$ can be reduced to analyzing the logical distributions of a faulty sequential measurement circuit. First, we separate the noise into a contribution from good cliques and another from bad cliques. If there were only the bad-cliques noise, we could directly obtain guarantees on the physical measurement distribution by relating it to a logical distribution of a faulty logical energy measurement circuit~\Cref{claim:Lambda-high}. This logical distribution is mostly supported on high-weight strings by the robust detectability lemma (\Cref{lem:dl}). Next, \Cref{claim:physical-ditribution-each-component} shows that this property is maintained when the good-clique noise is added \footnote{This step is where we need the noisy postselection operator $A_\eta$.}. All together, the final distribution over $z$ is mostly supported on high-weight strings.

\begin{proof}[Proof of~\Cref{lem:output-distribution-Pgamma}]
    Applying the fault tolerance of the protocol~\Cref{lem:ft-Pgamma}, we express the output state as of an adversially noisy execution of $\cP_\gamma$ as
\begin{align}
    \ket{\Psi} &= \sum_{\sF\in \mathsf{PAULI}^{\cE}} c_{\sF} A_\eta^{\otimes Md} (\cP_\gamma)_\sF \ket{\phi_\sF}\notag  \\
    &= \sum_{\sF} \sum_r c_{r,\sF} A_\eta^{\otimes Md} \,E_{r,\sF}\ \mathrm{NEnc}_\mathrm{out} ( C \ket{\psi_{r,\sF}} ) \otimes \ket{\mathrm{rest}_{r,\sF}}.
\end{align}
Crucially,~\Cref{lem:ft-Pgamma} guarantees: On at least $1 - \mu(\cE) /\pl n$ cliques, the restriction of \(E_{r, \mathsf F}\) to the corresponding $\cQ$-codeblock(s) has weight at most \(\tau d\), and the restriction of $E_{r, \mathsf F}$ to the measurement register is supported only on the bad vertices with respect to that clique.

Let $\cW \subseteq [Md]$ be the set of corrupted measurement bits (first register in the above state) and $\cW^c = [Md]\backslash \cW$ be its complement. Now decomposing each $E_{r,\sF}$ into two parts: computational basis states $\ket{o},\ket{o'}$ on the corrupted measurement bits $\cW$,  and Paulis $\mathsf{E}$ on the corrupted data and workspace qubits (denoted by $\cR$), we can regroup terms and rewrite the state as 
\begin{align}
\ket{\Psi} &= \sum_{\substack{o,o' \in \{0,1\}^{\cW} \\  \mathsf{E} \in \mathsf{Pauli}^{\cR}}} A_\eta^{\otimes Md} c_{o,o',\mathsf{E}} (\ketbra{o}{o'} \otimes \mathsf{E})  \cdot \mathrm{NEnc}_\mathrm{out} ( C \ket{\psi_{o,o',\mathsf{E}}} ) \otimes \ket{\mathrm{rest}_{o,o',\mathsf{E}}} \tag*{} \\
 & = \sum_{\substack{o,o' \in \{0,1\}^{\cW} \\  \mathsf{E} \in \mathsf{Pauli}^{\cR}}} A_\eta^{\otimes Md} c_{o,o',\mathsf{E}} (\ketbra{o}{o'} \otimes \mathsf{E})  \sum_{\substack{y \in \{0,1\}^{M} \atop x \in \{0,1\}^{Md} }} \alpha_{o,o',\mathsf{E},y} \ket{\Rep(y) \oplus x} \otimes \encq(\ket{y}\ket{\xi_{o,o',\mathsf{E},y}} \ket{x}) \otimes \ket{\mathrm{rest}_{o,o',\mathsf{E}}},
    \label{eq:Psi-noncommuting-proof}
\end{align}
where $\alpha_{o,o',\mathsf{E},y}$ are the coefficients in an ideal execution of the logical measurement circuit $C$ on some normalized input state $\ket{\phi_{o,o',\mathsf{E}}}$,
\begin{align}
    C \ket{\phi_{o,o',\mathsf{E}}} = \sum_{y} \alpha_{o,o',\mathsf{E},y} \ket{y}\ket{\xi_{o,o',\mathsf{E},y}}.
    \label{eq:Cphi}
\end{align}

Our goal is to compute 
\begin{align}
    P(z) = \Tr(\ketbra{z}_{\cW^c} \ketbra{\Psi}{\Psi} ).
\end{align}

We denote the abstract logical measurement bits held by the cliques inside the lightcone of the bad cliques\footnote{Each clique may be in charge of constantly many logical measurement qubits throughout the protocol.} as $B \subseteq [M]$ and the remaining measurement bits by $G = [M]\backslash B$. Note that $|B| = M  \cdot \mu(\cE) /\pl n \lll \epsilon M$. Abusing notation, let $\cW_B$ be the restriction of $\cW$ to the bad cliques and similarly define $\cW_G$, so that $\cW = \cW_B \cup \cW_G$. We assume without loss of generality that $\cW_B$ equals the full set of bad-clique physical measurement bits, i.e.\ $|\cW| = |B|\cdot d$. For $\cR$, we split $\cR = \cR_B \cup \cR_G \cup \cR_w$, where $\cR_B$ ($\cR_G$) denotes the restriction onto the $\mathcal{Q}$-code blocks on $B$-cliques ($G$-cliques), and $\cR_w$ is the workspace qubits (collecting over all cliques). Similarly we assume wlog $\cR_B$ equals the full set of $\cQ$-code physical qubits on $B$-cliques. We will use the decompositions $y=(y_B,y_G)$, $x = (x_B, x_G)$, $o'=(o'_B,o'_G)$, $\mathsf{E}=(\mathsf{E}_B, \mathsf{E}_G, \mathsf{E}_{w})$.

We now state the three main claims and prove the lemma.

First, \Cref{claim:reduction-to-bad-cliques} shows that it suffices to perform the analysis for fixed values $o,o'_G,\mathsf{E}_G$.

\begin{claim}[Decohering good-clique noise]\label{claim:reduction-to-bad-cliques}
     Decompose the sum $o',E$ in~\Cref{eq:Psi-noncommuting-proof} into $o'_G,o'_B,E_B,E_G$, where $E_G$ ranges over non-equivalent Pauli errors (with respect to the stabilizers of the MPC code $\mathcal{Q}$). Let 
    \begin{align}
        \ket{\Psi_{o,o'_G,\mathsf{E}_G}} = \sum_{\substack{o'_B \in \{0,1\}^{\cW_B} \\  \mathsf{E}_{Bw} \in \mathsf{Pauli}^{\cR_{Bw}}}} A_\eta^{\otimes Md} c_{o,o',\mathsf{E}} (\ketbra{o}{o'} \otimes \mathsf{E})  \cdot \mathrm{NEnc}_\mathrm{out} ( C \ket{\psi_{o,o',\mathsf{E}}} ) \otimes \ket{\mathrm{rest}_{o,o',\mathsf{E}}}.
    \end{align}
     It holds that 
    \begin{align}
         P(z) = \sum_{o,o'_G,\mathsf{E}_G} \left\| \bra{z} \ket{\Psi_{o,o'_G, \mathsf{E}_G}} \right\|^2.
    \end{align}
\end{claim}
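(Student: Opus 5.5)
The plan is to show that, once we project onto a fixed computational-basis string $z$ on the good measurement bits $\cW^c$, the terms of $\bra{z}\ket{\Psi}$ indexed by different triples $(o,o'_G,\mathsf{E}_G)$ are mutually orthogonal. Then the squared norm splits as a sum with no cross terms. Concretely, write $\bra{z}_{\cW^c}\ket{\Psi}=\sum_{o,o'_G,\mathsf{E}_G}\bra{z}\ket{\Psi_{o,o'_G,\mathsf{E}_G}}$, which is just the regrouping of \Cref{eq:Psi-noncommuting-proof}. Since $P(z)=\|\bra{z}_{\cW^c}\ket{\Psi}\|^2$, it suffices to prove
\[
\bigl(\bra{z}\ket{\Psi_{o_1,o'_{G,1},\mathsf{E}_{G,1}}}\bigr)^\dagger\bigl(\bra{z}\ket{\Psi_{o_2,o'_{G,2},\mathsf{E}_{G,2}}}\bigr)=0
\]
whenever the two triples differ. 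I will handle the three coordinates in turn, using registers that are untouched by $A_\eta$ and by the projection onto $z$.

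\emph{The index $o$.} The operator $\ketbra{o}{o'}$ places the corrupted measurement bits $\cW$ in the basis state $\ket{o}$. The only thing acting on $\cW$ afterwards is $A_\eta^{\otimes Md}$, which is diagonal in the computational basis, and the projection $\bra{z}$ acts only on $\cW^c$. So after these operations the register $\cW$ still holds $\ket{o}$ up to a nonzero scalar, and distinct values of $o$ give orthogonal vectors.

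\emph{The index $\mathsf{E}_G$.} For fixed $o$, look at the $\cQ$-code blocks $\cR_G$ of the good cliques. By \Cref{lem:ft-Pgamma}, each such block carries a codeword (of $\encq(\ket{y}\ket{\xi}\ket{x})$) with a Pauli error of weight at most $\tau d$. Neither $A_\eta$ nor $\bra{z}$ acts on these blocks. Because $\mathsf{E}_G$ ranges over stabilizer-inequivalent Paulis of weight below $\Delta/2$, any two distinct ones have distinct syndromes on at least one block. Applying the reversible decoder $U_\cQ$ (\Cref{claim:reversible-decoding-properties}), or simply projecting onto syndrome subspaces $\mathcal H_s$, therefore separates them: distinct $\mathsf{E}_G$ map into orthogonal syndrome sectors. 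The logical contents and all the other registers only multiply the syndrome-sector component, so they do not affect this orthogonality.

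\emph{The index $o'_G$.} This is the step I expect to be the main obstacle, since $o'_G$ enters only through the bra $\bra{o'_G}$, which selects which physical strings $\Rep(y)_G\oplus x_G$ on the corrupted good-clique bits are kept. Distinct $o'_G$ select distinct values of $x_G$ on those coordinates, once $y_G$ is fixed. My plan is to use the copy of $x$ that is stored in the $\cQ$-encoded register $\encq(\ket{x})$; this is exactly why \Cref{algo:noisy-rep-code-output} entangles the noise with a kept block. On good cliques these blocks decode correctly. The logical bit $y_G$ is likewise recorded in the encoded $\ket{y}$ register. Hence after decoding, distinct $o'_G$ lead to distinct classical values $(y_G)_j\oplus x_j$ on the coordinates $j\in\cW_G$, and so to orthogonal states. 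The remaining care is to check that the sum over the bad-clique data $(o'_B,\mathsf{E}_B,\mathsf{E}_w)$ and over the workspace cannot recombine branches with different $(y_G,x_G)$ records. This holds because those operators act only on $\cR_B\cup\cR_w\cup\cW_B$, which is disjoint from the good encoded registers.

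Once the three orthogonality statements are in place, expanding $\|\sum_{o,o'_G,\mathsf{E}_G}\bra{z}\ket{\Psi_{o,o'_G,\mathsf{E}_G}}\|^2$ kills every cross term and leaves exactly $\sum_{o,o'_G,\mathsf{E}_G}\|\bra{z}\ket{\Psi_{o,o'_G,\mathsf{E}_G}}\|^2$, which is the claim.
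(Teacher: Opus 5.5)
Your proposal is correct and follows essentially the same route as the paper. The paper also shows that the states $\mathsf{E}\,\encq(\ket{y}\ket{\xi}\ket{x})\otimes\ket{\mathrm{rest}}$ are orthogonal unless $(y_G,x_G,\mathsf{E}_G)$ agree, using Knill--Laflamme for the low-weight good-block errors (equivalent to your syndrome-sector argument), and then uses that only terms with $o'=(\Rep(y)\oplus x)_{\cW}$ survive to rule out cross terms between different $o'_G$; your explicit treatment of the index $o$ through the $\cW$ register, which the paper leaves implicit, is a harmless and welcome addition.
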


This claim asserts that it suffices to analyze the physical measurement distribution of each $\ket{\Psi_{o,o'_G,\mathsf{E}_G}}$. We will relate this to an (un-normalized) probability distribution of the logical measurement bits $y_G$ associated with  good cliques  of a faulty logical sequential measurement circuit, which in turn is guaranteed to have most mass on high-weight $y_G$ by the robust detectability lemma.

Consider fixed $o, o'_G, \mathsf{E}_G$. We separate the noise contribution in $\ket{\Psi_{o,o'_G,\mathsf{E}_G}}$ into good-clique and bad-clique parts. The bad-clique noise defines the following state
\begin{align}
\ket{\Phi_{o, o'_G, \mathsf{E}_G}} &= \sum_{\substack{o'_B \in \{0,1\}^{\cW_B} \\  \mathsf{E}_{Bw} \in \mathsf{Pauli}^{\cR_{Bw}}}} c_{o,o',\mathsf{E}} (\bra{o'_B} \otimes \mathsf{E}_{Bw})  \cdot \mathrm{NEnc}_\mathrm{out} ( C \ket{\psi_{o,o',\mathsf{E}}} ) \otimes \ket{\mathrm{rest}_{o,o',\mathsf{E}}}
\end{align}
so that applying the good-clique noise yields $\ket{\Psi_{o,o'_G,\mathsf{E}_G}}= A_\eta^{\otimes Md} (\ketbra{o}{o'_G} \otimes \mathsf{E}_G)  \cdot \ket{\Phi_{o, o'_G, \mathsf{E}_G}}$.

The next claim establishes that the damage of the bad-clique noise can be controlled.

\begin{claim}[Reducing physical distribtion to logical distribution]\label{claim:Lambda-high} Suppose that $|B| = M  \cdot (\mu(\cE) /\pl n) \lll \epsilon M$, then
\begin{align}
    \ket{\Phi_{o, o'_G, \mathsf{E}_G}} = \sum_{y_G \in \{0,1\}^{G}} \sum_{x_G \in \{0,1\}^{Gd} } \sqrt{p_\gamma(x_G)} \Lambda_{y_G} \ket{\Rep(y_G) \oplus x_G}  \otimes \ket{Y_{y_G,x_G}},
\end{align}
where $\ket{Y_{y_G,x_G}}$'s are normalized states that are orthogonal for distinct $x_G,y_G$, and the coefficients $\Lambda_{y_G}$ satisfy the following condition: There exist constants $r''$, and $c' \ll c$ depending only on the Hamiltonian family (\Cref{assumption:DL-Hamiltonian}) such that 
        \begin{align}
            \Lambda_{\mathrm{low}} \leq 2^{-r'' \epsilon M} \Lambda_{\mathrm{high}}, \qquad \text{where }\Lambda_{\mathrm{high}} := \sum_{y_G: |y_G|\geq c\epsilon M} \Lambda_{y_G}^2, \qquad \Lambda_{\mathrm{low}} := \sum_{y_G: |y_G|< c' \epsilon M} \Lambda_{y_G}^2.
        \end{align}
\end{claim}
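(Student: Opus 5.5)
The plan is to turn $\ket{\Phi_{o,o'_G,\mathsf E_G}}$ into the output of an adversarially faulty logical sequential measurement circuit, and then invoke the robust detectability lemma (\Cref{lem:dl}). The first step is to decode. Apply the reversible decoders $U_{\mathcal Q}$ (\Cref{claim:reversible-decoding-properties}) on every $\mathcal Q$-block outside the lightcone. Each such block carries an error of weight at most $\tau d$, so it decodes to its logical content tensored with a syndrome register and contributes no logical action. On the good-clique measurement register, the physical string is $\Rep(y_G)\oplus x_G$. Since $o'_G$ and $\mathsf E_G$ are fixed and act only on the bad vertices $W_i$, and since the good-clique noise has been pulled out as the fixed prefactor in $\ket{\Psi_{o,o'_G,\mathsf E_G}}$, the register $\ket{\Rep(y_G)\oplus x_G}$ appears untouched in every summand. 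The $x_G$ copy stored inside $\encq(\ket{x})$ also keeps distinct $x_G$ orthogonal. Grouping the terms by $(y_G,x_G)$ gives the claimed form with
\[
\Lambda_{y_G}\sqrt{p_\gamma(x_G)}\ket{Y_{y_G,x_G}}.
\]
Here $\ket{Y}$ collects the bad-clique registers, the data, the syndromes and the rest, and orthogonality across distinct $(y_G,x_G)$ is witnessed by the $\encq(\ket{y}\ket{x})$ registers.

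The second step is to show that the $x_G$-dependence factors out as exactly $\sqrt{p_\gamma(x_G)}$. In $\mathrm{NEnc}_{\mathrm{out}}$, the noise $x$ is attached independently of $y$ with amplitude $\sqrt{p_\gamma(x)}$. The bad-clique operators $\bra{o'_B}\otimes\mathsf E_{Bw}$ act only on $x_B$, $y_B$ and the workspace, never on $x_G$. Hence
\[
\Lambda_{y_G}\ket{Y_{y_G,x_G}}=\sum_{o'_B,\mathsf E_{Bw}} c\,(\bra{o'_B}\otimes\mathsf E_{Bw})\,\alpha_{\cdots,y}\,\cdots,
\]
where the right-hand side depends on $x_G$ only through the orthogonal tag. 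This identifies
\[
\Lambda_{y_G}^2=\Bigl\|\bigl(\bra{y_G}\otimes I\bigr)\,\widetilde C\Bigr\|^2,
\]
with $\widetilde C:=\sum_{o'_B,\mathsf E_{Bw}} c\,\mathsf K\, C\ket{\psi}$. This is a coherent combination, with arbitrary coefficients, of Pauli or projection faults $\mathsf K$ located only on the gates of $B$ (the bad lightcone). The projections $\bra{o'_B}$ expand into linear combinations of $I$ and $Z$, so each term is a Pauli fault.

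The third step applies the robustness. The circuit $\widetilde C$ has the form $\sum_{\mathsf F\in\mathsf{PAULI}^F}c_{\mathsf F}C_{\mathsf F}\ket{\psi_{\mathsf F}}$ with $|F|\le |B|\cdot O(1)\lll \epsilon M$, which is $\le(\epsilon/r)M$. By \Cref{lem:dl} the induced distribution $q(y)\propto\|\bra y\widetilde C\|^2$ satisfies $\Pr[|y|\le\epsilon M/r']\le 2^{-r''\epsilon M}$. Marginalizing to $y_G$ loses at most $|B|\lll\epsilon M$ ones, so $|y_G|<c'\epsilon M$ forces $|y|<\epsilon M/r'$ once $c'<1/r'$ and $|B|$ is small. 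Because $\Lambda_{y_G}^2$ is the unnormalized marginal, this gives $\Lambda_{\mathrm{low}}\le 2^{-r''\epsilon M}\sum_{y_G}\Lambda_{y_G}^2$. Splitting the total into low, middle and high weight, and choosing $c$ so that the middle band also sits below the threshold $\epsilon M/r'$, yields
\[
\Lambda_{\mathrm{low}}\le 2^{-r''\epsilon M}(\Lambda_{\mathrm{low}}+\Lambda_{\mathrm{high}}),
\]
hence $\Lambda_{\mathrm{low}}\lesssim 2^{-r''\epsilon M}\Lambda_{\mathrm{high}}$ after adjusting $r''$. The constants need to be checked carefully so that $c'\ll c$ still works, possibly by taking $c:=c'$ with the middle band empty.

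The main obstacle is the second step: making rigorous that the bad-clique postselection ($\bra{o'_B}$) together with the non-unitary normalization does not leak into the good-clique noise $x_G$, and that it genuinely looks like a Pauli-fault superposition covered by \Cref{lem:dl}. This includes confirming that \Cref{lem:dl} tolerates unnormalized sums. To handle that, I would first normalize $\widetilde C$; its norm cancels in the ratio $\Lambda_{\mathrm{low}}/\Lambda_{\mathrm{high}}$.
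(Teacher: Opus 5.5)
Your overall route is the paper's: group by the untouched labels $(y_G,x_G)$, read off orthogonality and the $\sqrt{p_\gamma(x_G)}$ factor from the untouched encoded copy $\encq(\ket{y_G}\ket{x_G})$, identify $\Lambda_{y_G}^2$ with the unnormalized $y_G$-marginal of a coherently faulty run of $C$, and invoke \Cref{lem:dl}. The gap is in the step you flag yourself, and it is a missing mechanism rather than a mere formality.

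You apply the reversible decoders only to the good blocks. In $\ket{\Phi_{o,o'_G,\mathsf E_G}}$ those blocks carry no error at all, because $\mathsf E_G$ has been factored out, so that step does nothing. What is missing is a way to turn two bad-clique objects into faults on the logical circuit $C$. The first is the arbitrary, possibly high-weight physical Pauli $\mathsf E_{Bw}$ on the bad-clique $\mathcal Q$-blocks. The second is the contraction $\bra{o'_B}$ on the physical noisy-repetition bits, which are not outputs of $C$. Without this, the identity $\Lambda_{y_G}^2=\|(\bra{y_G}\otimes I)\widetilde C\|^2$ is asserted rather than derived.

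The paper closes this by inserting $U_{\mathcal Q}$ inside the norm on the \emph{bad-clique} blocks. It then uses the first part of \Cref{claim:reversible-decoding-properties}, which holds for every Pauli. As a result, $\mathsf E_{Bw}$ becomes a logical Pauli $\mathsf P_{\mathsf E_{Bw}}$ acting only on $y_B$, $x_B$ and the data wires $F$ held by bad cliques. The syndrome and workspace are absorbed into an auxiliary register. Meanwhile $\braket{o'_B}{\Rep(y_B)\oplus x_B}$ forces $x_B=o'_B\oplus\Rep(y_B)$. So the bad-clique readout contributes the weight $\sqrt{p_\gamma(o'_B\oplus\Rep(y_B))}$, a function of $y_B$ alone. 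Packaging these gives linear maps $O_{o'_B,\mathsf E_{Bw}}$ supported on $y_B\cup F$ plus appended registers, with $\Lambda_{y_G}^2=\|\bra{y_G}\sum_{o'_B,\mathsf E_{Bw}}O_{o'_B,\mathsf E_{Bw}}C\ket{\psi_{o'_B,\mathsf E_{Bw}}}\|^2$. Expand each $O$ in Paulis on these $O(|B|)$ locations and fix a basis vector of the appended registers. This puts the expression exactly in the fault model of \Cref{lem:dl}, with different inputs in coherent superposition.

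Your treatment of the thresholds is sound, and more explicit than the paper's, but needs one correction. Choose $c'<c\le 1/r'-|B|/(\epsilon M)$. Then both the low band and the middle band $c'\epsilon M\le|y_G|<c\epsilon M$ lie inside $\{|y|\le\epsilon M/r'\}$, since $|y|\le|y_G|+|B|$. \Cref{lem:dl} then gives $(1-2^{-r''\epsilon M})(\Lambda_{\mathrm{low}}+\Lambda_{\mathrm{mid}})\le 2^{-r''\epsilon M}\Lambda_{\mathrm{high}}$, which is the claim after adjusting $r''$. Your displayed inequality omits the middle band. The fallback $c=c'$ is unnecessary and conflicts with the required $c'\ll c$.
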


In other words, the above claim says that, if there were only bad-clique noise, then up to the Bernoulli noise $\gamma$, the physical measurement distribution would essentially be a repetition encoding of strings of large Hamming weight $\geq c\epsilon M$. The actual state has the good-clique noise and the postselection gates on $\ket{\Psi_{o,o'_G,\mathsf{E}_G}}$. The following claim shows that these essentially do not affect the aforementioned guarantee.

\begin{claim}[Effect of good-clique noise on physical distribution]\label{claim:physical-ditribution-each-component} Under the same conditions in~\Cref{claim:physical-ditribution-each-component},
    it holds that
    \begin{align}
    \Tr(\ketbra{z} \Psi_{o,o'_G,\mathsf{E}_G} ) = \sum_{y_G} P_{\Rep(y_G)_{\cW^c}}(z) \Gamma^{ |\Rep(y_G)_{\cW^c}|}  \beta^{|\Rep(y_G)_{\cW_G} \oplus o'_{G}|}  |\Lambda'_{g,y_G}|^2.
\end{align}
      where the coefficients $\Lambda'_{g,y_G}$ satisfy the same properties as in~\Cref{claim:Lambda-high} and
    \begin{align}
        P_{\Rep(y)_{\cW^c}}(z) &:=  \bigotimes_{i: \Rep(y)_i=1} p_{\gamma''}(z_i) \bigotimes_{i: \Rep(y)_i=0} p_{\gamma'}(z_i)   \qquad \text{with } \gamma'' ,\gamma' \text{ defined in~\Cref{def:gamma'}},
    \end{align}
    and $\Gamma := \frac{(1-\gamma)(1-\eta) + \gamma \eta}{\gamma(1-\eta) + \eta (1-\gamma)}$, $\beta := \frac{\gamma}{1-\gamma}.$
\end{claim}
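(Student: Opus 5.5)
The plan is a direct computation on $\ket{\Psi_{o,o'_G,\mathsf{E}_G}} = A_\eta^{\otimes Md}(\ketbra{o}{o'_G}\otimes \mathsf{E}_G)\ket{\Phi_{o,o'_G,\mathsf{E}_G}}$, starting from the decomposition of $\ket{\Phi_{o,o'_G,\mathsf{E}_G}}$ given by \Cref{claim:Lambda-high}. That decomposition writes the state as $\sum_{y_G,x_G}\sqrt{p_\gamma(x_G)}\Lambda_{y_G}\ket{\Rep(y_G)\oplus x_G}\otimes\ket{Y_{y_G,x_G}}$, with the $\ket{Y_{y_G,x_G}}$ orthonormal.

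First, I will let the good-clique noise act. The Pauli $\mathsf{E}_G$ acts only on $\cQ$-blocks of good cliques and on workspace. By item 1 of \Cref{lem:ft-Pgamma} it has weight at most $\tau d$ on each such block. So, using \Cref{claim:reversible-decoding-properties}, it maps each $\ket{Y_{y_G,x_G}}$ to a phase times a state that is still orthonormal across distinct $(y_G,x_G)$. The phase may depend on $(y_G,x_G)$, but it is unimodular and disappears once we take squared norms, because orthogonality keeps the $(y_G,x_G)$ branches separate. The operator $\ketbra{o}{o'_G}$ acts on the corrupted measurement bits $\cW_G$, which by item 2 of \Cref{lem:ft-Pgamma} lie only on bad vertices of good cliques. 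It projects the coordinates $\cW_G$ of $\Rep(y_G)\oplus x_G$ onto $o'_G$ and then rewrites them as $o$. As a result, $x_G$ restricted to $\cW_G$ is pinned to $\Rep(y_G)_{\cW_G}\oplus o'_G$, and this contributes the amplitude factor $\sqrt{p_\gamma}$ evaluated on that restricted string. Its square is $(1-\gamma)^{|\cW_G|}\beta^{|\Rep(y_G)_{\cW_G}\oplus o'_G|}$ with $\beta=\gamma/(1-\gamma)$. The factor $(1-\gamma)^{|\cW_G|}$ is common to all terms and I absorb it into the normalization of $\Lambda'$.

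Second, I will apply $A_\eta^{\otimes Md}$ and measure $\ket{z}$ on $\cW^c$. On the coordinates in $\cW$ the operator $A_\eta$ gives a factor that depends only on the fixed string $o$, so it is absorbed into $\Lambda'$ as well. On each coordinate $j\in\cW^c$ the bit is $\Rep(y_G)_j\oplus x_j$ with $x_j$ independent Bernoulli($\gamma$). Take a coordinate with $\Rep(y_G)_j=1$. After $A_\eta$, its unnormalized weights are $(1-\gamma)(1-\eta)$ on $1$ and $\gamma\eta$ on $0$, which sum to $(1-\gamma)(1-\eta)+\gamma\eta$ and normalize to $p_{\gamma''}$. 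A coordinate with $\Rep(y_G)_j=0$ gives total weight $\gamma(1-\eta)+(1-\gamma)\eta$ and normalizes to $p_{\gamma'}$. Factoring out $(\gamma(1-\eta)+(1-\gamma)\eta)^{|\cW^c|}$ (again absorbed) leaves the ratio $\Gamma^{|\Rep(y_G)_{\cW^c}|}$.

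Since the branches are orthogonal in $(y_G,x_G)$ and only $z=$ (bits on $\cW^c$) is observed, the sum over $x_G$ restricted to $\cW^c$ collapses into the product distribution $P_{\Rep(y_G)_{\cW^c}}(z)$. Cross terms between different $y_G$ with the same $z$ vanish because the states $\ket{Y}$ are orthogonal. Setting $\Lambda'_{g,y_G}$ equal to $\Lambda_{y_G}$ times the $y_G$-independent constants gives the stated formula. The properties in \Cref{claim:Lambda-high}, namely the ratio bound between $\Lambda_{\mathrm{low}}$ and $\Lambda_{\mathrm{high}}$, are invariant under a global rescaling, so they carry over to $\Lambda'$.

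The main obstacle is the first step: checking that the logical action of $\mathsf{E}_G$ and of the $\cQ$-encoded copy $\encq(\ket{x})$ does not destroy orthogonality across $x_G$ or introduce $y_G$-dependent amplitudes, as opposed to mere phases. I will handle this by using that the syndrome and data registers are decoded reversibly, so distinct $x_G$ remain distinguishable through their encoded copy. The remaining weight bookkeeping is routine.
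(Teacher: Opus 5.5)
Your proposal is correct and follows essentially the same direct computation as the paper. You pin $x_G$ on $\cW_G$ and on $\cW^c$ via $\bra{o'_G}$ and $\bra{z}$, and read off the per-coordinate weights that produce $p_{\gamma'}$, $p_{\gamma''}$, $\Gamma$ and $\beta$. You absorb the $y_G$-independent factors $\sqrt{p_{1-\eta}(o)}$, $(1-\gamma)^{|\cW_G|/2}$ and $(\gamma(1-\eta)+\eta(1-\gamma))^{|\cW^c|/2}$ into $\Lambda'$, and you discard cross terms by orthogonality of the $\ket{Y_{y_G,x_G}}$.

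The step you single out as the main obstacle is in fact immediate. For fixed $g=(o,o'_G,\mathsf{E}_G)$, $\mathsf{E}_G$ is one fixed Pauli and hence a unitary on the $\ket{Y}$ factor. So the vectors $\mathsf{E}_G\ket{Y_{y_G,x_G}}$ are automatically orthonormal, and no weight bound or reversible-decoding argument is needed.
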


We defer the proofs of the above claims to later. Let us first combine~\Cref{claim:reduction-to-bad-cliques} and~\Cref{claim:physical-ditribution-each-component} (\Cref{claim:Lambda-high} is only used within the proof of~\Cref{claim:physical-ditribution-each-component}) to prove the lemma.

Indeed, combining the two lemmas we obtain
\begin{align}
    P(z) = \sum_{\substack{g=(o,o'_G, \mathsf{E}_G)}} \sum_{y_G} P_{\Rep(y_G)_{\cW^c}}(z) \Gamma^{ |\Rep(y_G)_{\cW^c}|}  \beta^{|\Rep(y_G)_{\cW_G} \oplus o'_{G}|}  |\Lambda'_{g,y_G}|^2.
\end{align}
with constants $r''$, and $c' \ll c$ such that, for every $g$, 
        \begin{align}
            \Lambda'_{g,\mathrm{low}} \leq 2^{-r'' \epsilon M} \Lambda'_{g,\mathrm{high}}, \qquad \text{where }\Lambda'_{g,\mathrm{high}} := \sum_{y_G: |y_G|\geq c\epsilon M} |\Lambda'_{g,y_G}|^2, \qquad \Lambda'_{g,\mathrm{low}} := \sum_{y_G: |y_G|< c' \epsilon M} |\Lambda'_{g,y_G}|^2.
        \end{align}
The key is that the `tilting' factor $\Gamma^{ |\Rep(y_G)_{\cW^c}|}  \beta^{|\Rep(y_G)_{\cW_G} \oplus o'_{G}|}$ does not significantly skew the weight on low-weight $y_G$. Let $\tilde{\Lambda}'_{g,y_G}= \Gamma^{|\Rep(y_G)_{\cW^c}|/2}\beta^{|\Rep(y_G)_{\cW_G}\oplus o'_G|/2}\Lambda'_{g,y_G}$ be the tilted amplitude, so that
    
    \begin{align}
        P(z) = \sum_{\substack{g=(o,o'_G, \mathsf{E}_G)}} \sum_{y_G} P_{\Rep(y_G)_{\cW^c}}(z) |\tilde{\Lambda}'_{g,y_G}|^2.
    \end{align}
    Let $\tilde{\Lambda}'_{g,\mathrm{high}}$ and $\tilde{\Lambda}'_{g,\mathrm{low}}$ denote the corresponding sums of squared tilted amplitudes:
    \begin{align}
    \tilde{\Lambda}'_{g,\mathrm{high}}
    &= \sum_{|y_G|\geq c\eps M} \Gamma^{|\Rep(y_G)_{\cW^c}|} \beta^{|\Rep(y_G)_{\cW_G}\oplus o'_G|} |\Lambda'_{g,y_G}|^2 \tag*{(using $|\Rep(y_G)_{\cW^c}| \geq (1-2\eps_0/\delta)|y_G|d \geq 0.99|y_G|d$)}\\
    &\geq \sum_{|y_G|\geq c\eps M} \Gamma^{0.99|y_G|d}  \beta^{|\Rep(y_G)_{\cW_G}\oplus o'_G|}  |\Lambda'_{g,y_G}|^2
    , \label{eq:Lambda high}
    \\
    \tilde{\Lambda}'_{g,\mathrm{low}}
    &= \sum_{|y_G| < c'\eps M} \Gamma^{|\Rep(y_G)_{\cW^c}|}\beta^{|\Rep(y_G)_{\cW_G}\oplus o'_G|} |\Lambda'_{g,y_G}|^2 \tag*{($|\Rep(y_G)_{\cW^c}|\leq |y_G|d$)} \\
    &\leq \sum_{|y_G| < c'\eps M} \Gamma^{|y_G|d} \beta^{|\Rep(y_G)_{\cW_G}\oplus o'_G|}  |\Lambda'_{g,y_G}|^2
    \label{eq:Lambda low}
\end{align}
We analyze the ratio between two nonzero summands~\eqref{eq:Lambda high} and~\eqref{eq:Lambda low}. For any $y_G$ with $|y_G| \geq  c\epsilon M$ and any $y'_G$ with $|y'_G| < c'\epsilon M \ll c\epsilon M/100$, the ratio is lower bounded by
    \begin{align}
        &\geq \frac{|\Lambda'_{g,y_G}|^2}{|\Lambda'_{g,y'_G}|^2} \cdot \Gamma^{(0.99|y_G|-|y'_G|)d} \cdot \beta^{|\Rep(y_G)_{\cW_G}  \oplus o'_{G}| - |\Rep(y'_G)_{\cW_G}  \oplus o'_{G}|}  \notag
        \\
        &\geq \frac{|\Lambda'_{g,y_G}|^2}{|\Lambda'_{g,y'_G}|^2} \cdot \Gamma^{0.98|y_G| d} \cdot \beta^{|\Rep(y_G)_{\cW_G}| + |o'_G| - ( |o'_G| - | \Rep(y'_G)_{\cW_G}| ) }   \tag*{(triangle inequaltiy)}\\
        &\geq \frac{|\Lambda'_{g,y_G}|^2}{|\Lambda'_{g,y'_G}|^2}  \cdot \Gamma^{0.98|y_G|d} \cdot \beta^{2.02(\eps_0/\delta) |y_G|d } \tag*{(bad-vertex fraction $\leq 2\eps_0/\delta$ per good clique)} \\
        &\geq \frac{|\Lambda'_{g,y_G}|^2}{|\Lambda'_{g,y'_G}|^2}  (\Gamma^{0.98} \beta^{2.02 \eps_0/\delta})^{|y_G|d}.
    \end{align}
    Assuming $\Gamma^{0.98}\beta^{2.02 \eps_0/\delta} > 1 $, then
    \begin{align}
        \frac{\tilde{\Lambda}'_{g,\mathrm{low}} }{\tilde{\Lambda}'_{g,\mathrm{high}}}  \leq \frac{\Lambda'_{g,\mathrm{low}} }{\Lambda'_{g,\mathrm{high}}} (\Gamma^{0.98} \beta^{2.02 \eps_0/\delta})^{-c\epsilon Md} \leq 2^{-r'' \epsilon M} (\Gamma^{0.98} \beta^{2.02 \eps_0/\delta})^{-c\epsilon Md} \leq o(1).
        \label{eq:final Lambda bound}
    \end{align}

    Because this holds for all $g$, we conclude that
    \begin{align}
        P(z) = \sum_{y_G} q(y_G) P_{\Rep(y_G)_{\cW^c}}(z),
     \end{align}
     where $q(y_G)$ is a distribution satisfying $\Pr_{y_G \sim q}[|y_G| < c'\epsilon M] \leq o(1)$, and thus $P(z)$ is mostly supported on high-weight strings.

     Finally, we derive the condition of $\eta$ required to ensure that $\Gamma^{0.98}\beta^{2.02 \eps_0/\delta} > 1 $, assuming $\gamma \ll 1/2$ and $\iota:= 3\eps_0/\delta \leq 0.3$ (by out choice of $\eps_0$ in~\eqref{eq:params-section7}). Let $a = \frac{1-\gamma}{\gamma}$ and $b = \frac{1-\eta}{\eta}$ and rewrite $\Gamma = \frac{ab+1}{a+b}$. The choice $\Gamma\geq a^\iota$ still suffices, since $\Gamma^{0.98}\beta^{2.02\eps_0/\delta}\geq a^{(0.98\cdot 3-2.02)\eps_0/\delta}=a^{0.92\eps_0/\delta}>1$. Requiring that $\frac{ab+1}{a+b} \geq a^\iota$ yields $b \geq \frac{a^{\iota+1} - 1}{a - a^\iota} \geq a^\iota$. This gives $\eta \lll \gamma^\iota$. However, observe the choice of $\eta$ can be tightened when $\iota \log(1/\gamma)$ is small. In this specific regime, expand $a^\iota = \exp(\iota \log a) \approx 1 + \iota \log(1/\gamma)$ and we instead obtain $b \geq 1 + \iota \log(1/\gamma)$. Substituting $b = \frac{1}{\eta} - 1$ yields $\frac{1}{\eta} \geq 2 + \iota \log(1/\gamma)$, which gives $\eta \leq \frac{1}{2} - \frac{\iota}{4} \log(1/\gamma)$. In summary, we need $\eta = \max(1/2 - \Theta(\log(1/\gamma)\eps_0/\delta), \Theta(\gamma^{3\eps_0/\delta}))$, precisely the choice in the lemma statement. This concludes the proof of~\Cref{lem:output-distribution-Pgamma}.
\end{proof}

    \begin{remark}\label{remark:why-need-A}
        \Cref{eq:final Lambda bound} is the only place where we need the noisy postselection operator $A_\eta$. If we do not have $A_\eta$, then $\Gamma=1$ and it is unclear if~\eqref{eq:final Lambda bound} could still be bounded (note that even the first inequality already uses the fact that $\Gamma^{0.98} \beta^{2.02 \eps_0/\delta} > 1$). In other words, without $A_\eta$, the tilting factor (originating from adversarial faults on good cliques) might increase the weight on low-weight strings. It is an interesting question whether one could control the logical distribution more tightly to avoid this issue.
    \end{remark}

\subsection{Proof of Omitted Claims}
We now prove the three claims used above.

\begin{proof}[Proof of~\Cref{claim:reduction-to-bad-cliques}] Define
\begin{align}\label{eq:Y}
    \ket{Y_{o,o',\mathsf{E},y,x}}:=\mathsf{E} \left( \encq(\ket{y}\ket{\xi_{o,o',\mathsf{E},y}} \ket{x}) \otimes \ket{\mathrm{rest}_{o,o',\mathsf{E}}} \right).
\end{align}
    By the assumption on the good cliques and the Knill-Laflamme condition on the code $\cQ$ it holds that 
    \begin{align}
         \braket{Y_{\tilde{o},\tilde{o}',\mathsf{E}',y',x'}}{Y_{o,o',\mathsf{E},y,x}}&\neq 0 \qquad \text{ only if } (y_G,x_G, \mathsf{E}_G) = (y'_G,x'_G, \mathsf{E}'_G).
    \end{align}
Since only the terms in~\eqref{eq:Psi-noncommuting-proof} with $o'= (\Rep(y)\oplus x)_{\cW}$ are non-vanishing, the above condition also enforces there are no cross-terms between different $o_G$ and $o'_G$. Therefore,
\begin{align}
    P(z) = \sum_{o,o'_G,\mathsf{E}_G} \left\| \bra{z} \ket{\Psi_{o,o'_G, \mathsf{E}_G}} \right\|^2.
\end{align}
\end{proof}

\begin{proof}[Proof of~\Cref{claim:Lambda-high}]
Suppress the fixed $o,o'_G,\mathsf{E}_G$ throughout. Expanding $\mathrm{NEnc}_{\mathrm{out}}$ in the definition of $\ket{\Phi}$ gives
\begin{align*}
    \ket{\Phi}
    &=\sum_{o'_B,\mathsf{E}_{Bw},y,x}
        c_{o'_B,\mathsf{E}_{Bw}}\alpha_{o'_B,\mathsf{E}_{Bw},y}
        \sqrt{p_\gamma(x_B)p_\gamma(x_G)}\\
    &\qquad{}\times\braket{o'_B}{\Rep(y_B)\oplus x_B}
        \ket{\Rep(y_G)\oplus x_G} \otimes\mathsf{E}_{Bw}
        \left(\encq(\ket{y}\ket{\xi_{o'_B,\mathsf{E}_{Bw},y}}\ket{x})
        \otimes\ket{\mathrm{rest}_{o'_B,\mathsf{E}_{Bw}}}\right),
\end{align*}
where $y=(y_B,y_G)$, $x=(x_B,x_G)$, and $\alpha_{o'_B,\mathsf{E}_{Bw},y}\ket{\xi_{o'_B,\mathsf{E}_{Bw},y}}=\bra{y}C\ket{\psi_{o'_B,\mathsf{E}_{Bw}}}$. The inner product enforces $o'_B=\Rep(y_B)\oplus x_B$. Grouping by the untouched encoded labels $(y_G,x_G)$ and comparing with the claim's decomposition, whose noise factor is $\sqrt{p_\gamma(x_G)}$, we obtain
\begin{align*}
    \Lambda_{y_G}\ket{Y_{y_G,x_G}}
    &=\sum_{\substack{o'_B,\mathsf{E}_{Bw},y_B,x_B\\o'_B=\Rep(y_B)\oplus x_B}}
        c_{o'_B,\mathsf{E}_{Bw}}\sqrt{p_\gamma(x_B)}
        \,\alpha_{o'_B,\mathsf{E}_{Bw},y}\\
    &\qquad{}\times\mathsf{E}_{Bw}
        \left(\encq(\ket{y}\ket{\xi_{o'_B,\mathsf{E}_{Bw},y}}\ket{x_B,x_G})
        \otimes\ket{\mathrm{rest}_{o'_B,\mathsf{E}_{Bw}}}\right).
\end{align*}
The right-hand side has an untouched normalized factor $\encq(\ket{x_G})$, so its norm is independent of $x_G$. Choosing $\Lambda_{y_G}\geq 0$ and normalizing the nonzero $\ket{Y_{y_G,x_G}}$, so that $\Lambda_{y_G}$ equal the norm of RHS. First, note that the states $\ket{Y_{y_G,x_G}}$ are orthogonal for distinct pairs $(y_G,x_G)$, since every $\mathsf{E}_{Bw}$ leaves the factor $\encq(\ket{y_G}\ket{x_G})$ untouched, and encoding preserves orthogonality. Next, let us calculate $|\Lambda_{y_G}|^2$.

We recall the reversible decoding unitary $U_{\cQ}$ from~\Cref{def:reversible-decoding}. Recall from~\Cref{claim:reversible-decoding-properties} that, for any state $\ket{\psi}$ and Pauli $\mathsf{E}$, $U_\cQ \mathsf{E}\Enc_\cQ(\ket{\psi}) = \omega_\mathsf{E}\, \mathsf{P}_\mathsf{E} \ket{\psi}\ket{\syn(\mathsf{E})}$ for Pauli $\mathsf{P}_\mathsf{E}$,  phase $\omega_\mathsf{E}$ and $\syn(\mathsf{E})$. Inserting $U_\mathcal{Q}$ inside the norm:
\begingroup
\small
\begin{align*}
    |\Lambda_{y_G}|^2
    &=\left\|\sum_{\substack{o'_B,\mathsf{E}_{Bw},y_B,x_B\\o'_B=\Rep(y_B)\oplus x_B}} c_{o'_B,\mathsf{E}_{Bw}}\sqrt{p_\gamma(x_B)}\,\alpha_{o'_B,\mathsf{E}_{Bw},y}\,U_{\cQ}\mathsf{E}_{Bw}\left(\encq(\ket{y}\ket{\xi_{o'_B,\mathsf{E}_{Bw},y}}\ket{x_B})\otimes\ket{\mathrm{rest}_{o'_B,\mathsf{E}_{Bw}}}\right)\right\|^2\\
    &=\left\|\sum_{\substack{o'_B,\mathsf{E}_{Bw},y_B,x_B\\o'_B=\Rep(y_B)\oplus x_B}} c_{o'_B,\mathsf{E}_{Bw}}\sqrt{p_\gamma(x_B)}\,\alpha_{o'_B,\mathsf{E}_{Bw},y}\,\mathsf{P}_{\mathsf{E}_{Bw}}\left(\ket{y}\ket{\xi_{o'_B,\mathsf{E}_{Bw},y}}\ket{x_B}\otimes\ket{\mathrm{rest}'_{o'_B,\mathsf{E}_{Bw}}}\right)\right\|^2,
\end{align*}
\endgroup
where we absorb the phase, syndrome registers, and workspace error into $\ket{\mathrm{rest}'_{o'_B,\mathsf{E}_{Bw}}}$. The operator $\mathsf{P}_{\mathsf{E}_{Bw}}$ is a Pauli acting only on $\ket{x_B}$, on $\ket{y_B}$, and on a small subset $F$ of qubits in $\ket{\xi_{o'_B,\mathsf{E}_{Bw},y}}$, namely those whose code blocks are held by bad cliques. It acts trivially on $\ket{y_G}$ and on the remaining data qubits. Recall that each clique holds $O(1)$ data code blocks of the logical circuit $C$, so $|B\cup F|=O(|B|)=O(M\mu(\cE)/\pl n)$.

Now we can relate the above expression to a faulty execution of the logical sequential measurement circuit $C$ with faults on the set $B\cup F$. Define $O_{o'_B,\mathsf{E}_{Bw}}$ as the following linear map acting on the qubits of $\ket{y_B}$ and the $F$-qubits of $\ket{\xi_{o'_B,\mathsf{E}_{Bw},y}}$, and appending the noise and residual registers:
\begin{align*}
    O_{o'_B,\mathsf{E}_{Bw}}\ket{y_B}\ket{f}_F
    :=c_{o'_B,\mathsf{E}_{Bw}}\mathsf{P}_{\mathsf{E}_{Bw}}
    \left(\ket{y_B}\ket{f}_F
    \otimes\sum_{x_B:\ o'_B=\Rep(y_B)\oplus x_B}
        \sqrt{p_\gamma(x_B)}\ket{x_B}\otimes\ket{\mathrm{rest}'_{o'_B,\mathsf{E}_{Bw}}}\right).
\end{align*}
Extending $O_{o'_B,\mathsf{E}_{Bw}}$ by the identity on the remaining logical registers, we obtain
\begin{align*}
    |\Lambda_{y_G}|^2
    &=\left\|\sum_{y_B}\sum_{o'_B,\mathsf{E}_{Bw}}
        O_{o'_B,\mathsf{E}_{Bw}}\alpha_{o'_B,\mathsf{E}_{Bw},y}
        \ket{y}\ket{\xi_{o'_B,\mathsf{E}_{Bw},y}}\right\|^2\\
    &=\left\|\bra{y_G}\sum_{o'_B,\mathsf{E}_{Bw}}
        O_{o'_B,\mathsf{E}_{Bw}}C\ket{\psi_{o'_B,\mathsf{E}_{Bw}}}\right\|^2.
\end{align*}
This is precisely the (unnormalized) probability of obtaining $y_G$ as the measurement outcome of the good logical qubits in a coherent faulty execution of the logical sequential measurement circuit $C$ on the input states $\ket{\psi_{o'_B,\mathsf{E}_{Bw}}}$ and faults on the set $B\cup F$. Since $|B\cup F|=O(M\mu(\cE)/\pl n)\lll\epsilon M$, we can apply~\Cref{lem:dl} to conclude that there exist constants $c'\ll c$ (depending on $r,r',r''$ in~\Cref{lem:dl}) such that
\begin{align*}
    \Lambda_{\mathrm{high}}
    &=\sum_{y_G:\ |y_G|\geq c\epsilon M}|\Lambda_{y_G}|^2
    \geq (1-2^{-r''\epsilon M})
    \sum_{y_G\in\{0,1\}^G}|\Lambda_{y_G}|^2,\\
    \Lambda_{\mathrm{low}}
    &=\sum_{y_G:\ |y_G|<c'\epsilon M}|\Lambda_{y_G}|^2
    \leq 2^{-r''\epsilon M}
    \sum_{y_G\in\{0,1\}^G}|\Lambda_{y_G}|^2.
\end{align*}
\end{proof}

\begin{proof}[Proof of~\Cref{claim:physical-ditribution-each-component}] Let $g=(o,o'_G, \mathsf{E}_G)$. According to~\Cref{claim:Lambda-high}, we can write
\begin{align}
\ket{\Phi_{g}} = \sum_{y_G \in \{0,1\}^{G}} \sum_{x_G \in \{0,1\}^{Gd} } \sqrt{p_\gamma(x_G)} \Lambda_{g,y_G} \ket{\Rep(y_G) \oplus x_G} \otimes \ket{Y_{g, y_G,x_G}}
\end{align}
so that $\ket{\Psi_{g}}= A_\eta^{\otimes Md} (\ketbra{o}{o'_G} \otimes \mathsf{E}_G)  \cdot \ket{\Phi_{g}}$.

Hitting $\ket{\Psi_{g}}$ with $\ketbra{z}_{\cW^c}$ from the left yields
\begin{align}
    \ketbra{z} \ket{\Psi_{g}}&= \sum_{y_G,x_G}  \sqrt{p_{1-\eta}(o,z)} \ketbra{z}_{\cW^c} \ketbra{o}{o'_G}  \cdot \Lambda_{g,y_G} \sqrt{p_\gamma(x_G)} \ket{\Rep(y_G) \oplus x_G} \otimes \mathsf{E}_G \ket{Y_{g,y_G,x_G}} \tag*{\hphantom{1cm}} \\
    &= \sum_{y_G}  \sqrt{p_{1-\eta}(o,z)}  \sqrt{p_\gamma(\Rep(y_G)_{\cW_G} \oplus o'_G) p_\gamma(\Rep(y_G)_{\cW^c} \oplus z) }  \Lambda_{g,y_G} \ket{o, z} \otimes \mathsf{E}_G \ket{Y_{g,y_G,x_G}} \tag*{\hphantom{1cm}} \\
    &=  \sum_{y_G} \beta^{|\Rep(y_G)_{\cW_G} \oplus o'_G|/2}   \tilde{\Lambda}_{g,y_G} \sqrt{p_{1-\eta}(z) p_\gamma(\Rep(y_G)_{\cW^c} \oplus z) } \ket{o, z} \otimes \mathsf{E}_G \ket{Y_{g,y_G,x_G}},\label{eq:single-ket-term}
\end{align}
where
\begin{align}
    \beta &:= \frac{\gamma}{1-\gamma}, \qquad \text{and} \qquad \tilde{\Lambda}_{g,y_G} := \Lambda_{g,y_G} \sqrt{p_{1-\eta}(o)} (1-\gamma)^{|\cW_G|/2}.
\end{align}
Notice that the distribution of $z$ is product for a fixed $y_G$. Furthermore, at location $i \in \cW^c$ where $\Rep(y)_i = 1$, the product factor is
 $$p_{1-\eta}(z_i) p_{1-\gamma}(z_i)= ((1-\gamma)(1-\eta) + \gamma \eta ) p_{\gamma''}(z_i),$$
and at location $i$ where $\Rep(y)_i = 0$, the product factor is 
$$p_{1-\eta}(z_i) p_{\gamma}(z_i)= (\gamma(1-\eta) + \eta (1-\gamma) ) p_{\gamma'}(z_i),$$ 
with $\gamma' ,\gamma''$ defined in~\Cref{def:gamma'}. Thus, we can further simplify~\eqref{eq:single-ket-term} to
\begin{align}
   \sum_{y_G}  \sqrt{P_{\Rep(y_G)_{\cW^c}}(z)} \Gamma^{ |\Rep(y_G)_{\cW^c}|/2}  \beta^{|\Rep(y_G)_{\cW_G} \oplus o'_G|/2}   \Lambda'_{g,y_G}  \ket{o, z} \otimes \mathsf{E}_G \ket{Y_{g,y_G,x_G}},
\end{align}
where
 \begin{align}
        \Lambda'_{g,y_G} &:=  \tilde{\Lambda}_{g, y_G}  (\gamma(1-\eta) + \eta (1-\gamma))^{|\cW^c|/2} \qquad \text{and} \qquad \Gamma := \frac{(1-\gamma)(1-\eta) + \gamma \eta}{\gamma(1-\eta) + \eta (1-\gamma)},\\
        P_{\Rep(y_G)_{\cW^c}}(z) &:=  \bigotimes_{i: \Rep(y_G)_i=1} p_{\gamma''}(z_i) \bigotimes_{i: \Rep(y_G)_i=0} p_{\gamma'}(z_i)   \qquad \text{with } \gamma'' ,\gamma' \text{ defined in~\Cref{def:gamma'}}.
    \end{align}

    Therefore,
    \begin{align}
        \left\| \bra{z} \ket{\Psi_{o,o'_G, \mathsf{E}_G}} \right\|^2 = \sum_{y_G} P_{\Rep(y_G)_{\cW^c}}(z) \Gamma^{ |\Rep(y_G)_{\cW^c}|}  \beta^{|\Rep(y_G)_{\cW_G} \oplus o'_{G}|}  |\Lambda'_{g,y_G}|^2,
    \end{align}
    where we used $\ket{Y_{g,y_G,x_G}}$'s are orthogonal (\Cref{claim:Lambda-high}).
\end{proof}

\subsection{Soundness of the Tensor-Network State}\label{sec:soundness-nc-protocol}

\Cref{lem:output-distribution-Pgamma} asserts that the distribution of the good measurement bits in the good cliques is essentially a noisy repetition encoding of an underlying logical distribution $q(y_G)$ that is mostly supported on $|y_G| \ggg \epsilon M$. Thus, we can follow the intuition outlined in~\Cref{sec:serial-repetition} to show that a large fraction of the vertex threshold checks (\Cref{def:vertex-threshold-check}) will be violated with high probability. It suffices to consider a fixed string $y_G$ of Hamming weight $\ggg \epsilon M$. We need to analyze the violations of the vertex checks when the Bernoulli noise is added on top of the string $\Rep(y_G)$.

An easy solution to deal with the Bernoulli noise is to pick $\gamma \lll 1/\mathrm{poly}(C_{\max})$, where $C_{\max}:=\max_v |M_v| = \pl n$ is the maximum number of measurement bit held by a vertex. Accordingly we pick $\eta = \gamma^{0.3}$, so that $\gamma'\leq  O(\gamma^{0.5}) \leq 1/\mathrm{poly}(C_{\max})$ and $\gamma'' \geq 1 - O(\gamma) \geq 1 - 1/\mathrm{poly}(C_{\max})$ (see~\Cref{def:gamma'}). Then we effectively recover the setting of~\Cref{sec:serial-repetition} where the good vertices of the good cliques display the correct logical measurement value, and simply using the vertex OR checks and the analysis of serial repetition there would suffice to establish the soundness. However, this choice of $\eta$ means the $A_\eta$ operator is enforcing a rather strong postselections ($\eta = 1/\pl n \rightarrow 0$), which seems undesirable.

Instead, in the high soundness regime $\epsilon \geq 1/\pl n$, it turns out that we can pick more `friendly' parameters: $\gamma, \gamma', 1-\gamma''=\Theta(1)$ and $\eta \approx 1/2$. In this regime, the Bernoulli noise is significant and we need to use the vertex threshold checks. We will use the following lemma, which is a strengthening of the sampling argument in the proof of~\Cref{lem:serial-rep}.

\begin{lemma}[HDX sampling lemma]
    \label{lem:hdx-sampling-nc-protocol}
Let $X$ be a $d$-partite $\lambda$-HDX with $\lambda \leq 1/\poly(d)$ and $\cE \subseteq X(2)$ with $\mu(\cE) \leq \eps_0$. Let $\cB_{\cE}$ be the set of bad cliques, i.e., those containing more than $\eps_0$ fraction of edges from $\cE$. Let $\cS$ be any set of cliques with measure $\mu$, and furthermore assume that each of these is good (i.e. contains at most $\eps_0$-fraction bad edges). If $\mu \gtrsim \Omega(1/\log^c n)$ and $d= \Omega(c^2\log\log^2 n)$, then at least an $\Omega(1)$-fraction of vertices $v$ satisfy both the properties below,
\begin{enumerate}
\item The link of $v$ contains at least $\mu/10$-fraction of cliques that belong to $\cS$ and where $v$ is also good with respect to them.
\item The link of $v$ has at most $O(1/\log^c n)$-fraction of cliques that are bad overall, or are bad with respect to $v$.
\end{enumerate}
\end{lemma}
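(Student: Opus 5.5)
We prove the two properties separately, each failing on only a small fraction of vertices, and then take the intersection of the good sets. The main tools are the down/up correspondence between $\mu_d$ and $\mu_1$ and the HDX Chernoff bound (\Cref{lem:chernoff-hdx}, in the form \eqref{eq:hdx-chernoff-2}) applied to links.

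For a vertex $v$, the link distribution on $d$-cliques containing $v$ is $\mu_d$ conditioned on containing $v$. Averaging over $v\sim\mu_1$ recovers $\mu_d$. We define the following sets of cliques:
\begin{itemize}
\item $\cS_v$, the cliques in $\cS$ that contain $v$ and in which $v$ is good;
\item $\cB_v$, the cliques containing $v$ that are either in $\cB_{\cE}$ or in which $v$ is bad.
\end{itemize}

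\textbf{Property 2.} We first bound $\mu(\cB_{\cE})$. Applying \eqref{eq:hdx-chernoff-2} with $i=2$ to the indicator of $\cE$, with deviation $\eps_0/2$ and $d=\Omega(c^2\log\log^2 n)$, gives $\mu(\cB_{\cE})\le \exp(-\Omega(\sqrt d))\le 1/\log^{2c}n$. Here we assume $\mu(\cE)\le\eps_0/2$; the constant can be absorbed into $\eps_0$.

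Next we bound the pairs $(v,D)$ with $D$ good and $v$ bad in $D$. In a good clique, Markov's inequality gives that at most a $2\eps_0/\delta$ fraction of vertices are bad. A constant fraction of bad vertices is too many for our purposes, so we instead measure badness through the link. The vertex $v$ is bad in $D$ when $\deg_{\cE}(v)$ inside $D$ exceeds $\delta d/2$. This is a sampling event for the function $\cE$ restricted to the link $X_v$, which is itself a $\lambda$-HDX. Applying \eqref{eq:hdx-chernoff-2} inside $X_v$ with $i=1$ to the indicator of $\cE_v$ (the edges at $v$, viewed as link vertices) shows that, whenever $\mu(\cE_v)\le\delta/4$, only an $\exp(-\Omega(\sqrt d))$ fraction of the cliques through $v$ make $v$ bad.

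By Markov's inequality, $\mu(\cE_v)>\delta/4$ holds for at most a $4\eps_0/\delta\le 1/2$ fraction of vertices; with a smaller choice of $\eps_0$ this fraction becomes a small constant. We call such vertices \emph{heavy}. For a non-heavy $v$, the link mass of cliques where $v$ is bad is at most $1/\log^{2c}n$.

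For $\cB_{\cE}$ itself, the average over $v$ of the link mass of $\cB_{\cE}$ equals $\mu(\cB_{\cE})$. By Markov's inequality, all but a $1/\log^{c}n$ fraction of vertices have link mass of $\cB_{\cE}$ at most $O(1/\log^c n)$. Hence Property 2 holds for all but a small constant fraction of vertices.

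\textbf{Property 1.} This is the expected main obstacle, since it is a lower-tail, vertex-side statement. Let $T=\{v:\mu_v(\cS_v)\ge \mu/10\}$. Every clique in $\cS$ is good, so at least $1-2\eps_0/\delta\ge 0.9$ of its vertices are good in it. Hence the vertex-weighted average is $\E_v[\mu_v(\cS_v)]\ge 0.9\mu$. To show that $T$ has constant measure, we argue by contradiction. Suppose $\mu(T)$ is small, say $\mu(T)\le 1/100$. Then at least a $0.9\mu - \mu/10$ fraction of the clique mass of $\cS$ must be carried by cliques that have an $\Omega(1)$ fraction of their vertices in $T$. Otherwise, each clique contributes mostly through vertices outside $T$, where the link density is below $\mu/10$, and averaging gives a total below $0.9\mu$.

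However, by \eqref{eq:hdx-chernoff-2} with $i=1$ applied to the indicator of $T$ (so $\mu(T)\le1/100$), the cliques with at least a $1/2$ fraction of their vertices in $T$ have mass $\exp(-\Omega(\sqrt d))\le 1/\log^{2c}n\ll\mu$. This is the same argument as in \Cref{lem:serial-rep}, and it is where we use $\mu\gtrsim 1/\log^c n$ together with $d=\Omega(c^2\log\log^2 n)$. The resulting contradiction shows $\mu(T)\ge 1/100$.

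Making the averaging step precise requires care: the contributions from $T$ and from outside $T$ must be weighted with the $1/2$ vertex threshold and the $\mu/10$ link threshold chosen compatibly, and the constants may need adjusting, for example $\mu/20$ in place of $\mu/10$.

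\textbf{Combining.} The set $T$ has measure at least $1/100$, while Property 2 fails on at most $4\eps_0/\delta + 2/\log^c n$ of the vertices. We choose $\eps_0$ small enough relative to $\delta$ that this failure fraction is below $1/200$. The remaining $\Omega(1)$ fraction of vertices then satisfies both properties.
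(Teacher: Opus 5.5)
Your proposal is correct and follows essentially the same route as the paper. For Property 1 you use the same counting argument: at least $3/5$ of the mass of $\cS$ lies on cliques with half their vertices in $T$, which the HDX Chernoff bound contradicts when $\mu(T)\le 1/100$. For Property 2 you use Markov on per-vertex bad-edge density plus a Chernoff bound inside the link, together with Markov on the link mass of $\cB_{\cE}$.

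The only difference is in the constants. Your fixed heaviness threshold $\delta/4$ forces you to shrink $\eps_0$ relative to $\delta$. The paper instead thresholds at $1000\mu(\cE)$ and relies on $\mu(\cE)\lll\eps_0\lll\delta$. Both are constant-factor strengthenings of the stated hypotheses.
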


\begin{proof}
We first prove a simpler claim to lower bound the fraction of vertices satisfying property (1).
\begin{claim}
\label{lem:hdx-sampling-good-cliques}
In the setting of the lemma, consider the bipartite graph between $H=(X(d), X(1))$. Let $T$ be the set of vertices $v$ such that, it is part of a lot of cliques in $\cS$ where it is also good:
\[T = \left\{v: \frac{|\{i \in [N]: v\in D_i\setminus W_i \cap i\in \cS\}|}{|\{i \in [N]: v\in D_i\}|} \geq \mu/10\right\}.\]
Then $\mu(T) \geq 1/100$.
\end{claim}

\begin{proof}
We start by showing that $\cS$ has a lot of edges inside $T$ (in the bipartite graph $H$) by a simple counting argument. First note that since each clique in $\cS$ is good, it has at most $2\eps_0/\delta\leq 1/10$-fraction of bad vertices in it. Let $\gamma$ denote the average fractional-degree of a vertex in $\cS$ inside $\overline{T}$. Then,
\[\mu \gamma = \frac{|E(S,\overline{T})|}{|E(H)|} \leq \frac{\mu}{10}(1-\mu(T))+\frac{2\eps_0}{\delta}\mu \leq \mu/5,\]
implying that $\gamma \leq 1/5$. So by Markov's inequality we get that at least $2/5$-fraction of cliques in $\cS$ have more than $1/2$-fraction of their neighbors in $\overline{T}$. Therefore at least $3/5$-fraction of $\cS$ has at least $1/2$-fraction of their neighbors inside $T$.

We are now in a position to apply the HDX concentration lemma to get that $T$ itself must be large. Assume for the sake of contradiction that $\mu(T) \leq 1/100$. Then~\Cref{lem:chernoff-hdx} implies that,
\[
3\mu/5 \leq \Pr_{C \sim X(d)}[\mu(T|_C) \geq 1/2] \lll \exp(-\sqrt{d}) \leq 1/\log^c n,
\]
which is a contradiction to the fact that $\mu \geq \Omega(1/\log^c n)$.
\end{proof}

To bound the other properties of $v$, we define three events on the vertices of the HDX.
\begin{enumerate}
\item Event 1: A vertex has at least $\mu/10$-fraction of cliques from $\cS$ in their local view and where the vertex is itself good too. By the sampling claim above we know that this holds for at least $1/100$-fraction of vertices.
\item Event 2: A vertex $v$ touches less than $1000\mu(\cE)$-fraction bad edges. By Markov's inequality there are only a $1/1000$-fraction that don't satisfy this.
\item Event 3: A vertex $v$ has less than $1000\mu(\cB)$-fraction of bad cliques in their local view. Again by Markov's inequality only a $1/1000$-fraction of vertices don't satisfy this.
\end{enumerate}

Taking a union bound over events (1), (2) and (3) gives that there are at least $\frac{1}{100}-\frac{2}{1000} = \Omega(1)$-fraction of vertices that satisfy all three  events. Fix any such vertex $v$. Event (1) directly implies property 1. Event (2) implies that there are at most $O(\eps)$-fraction bad vertices (corresponding to the bad edges) in its link. Apply the HDX-Chernoff~\Cref{lem:chernoff-hdx} in the link of $v$ to get that at most $1/\log^c n$-fraction of cliques in its link contain more than $\delta/2$-fraction of these bad vertices (since $\mu(\cE)\lll \eps_0 \lll \delta$). Reinterpreted, this just means that only an $O(1/\log^c n)$-fraction of the cliques in the link of $v$ are bad with respect to $v$ (i.e. $v$ has too many bad edges incident on it inside the clique). Event (3) says that only an $O(\mu(\cB))$-fraction of cliques in the link of $v$ are bad overall (have too many bad edges), and again using the HDX Chernoff lemma we get that this is at most $O(1/\log^c n)$. Thus we get that every such vertex satisfies the properties we want in its local view, thus proving the lemma.

\end{proof}

We can now state and prove the soundness statement for the tensor-network state $\ket{\Psi}$.

\begin{theorem}[Soundness of $\cP_\gamma$]
    \label{thm:soundness-nc-protocol}
    Suppose the input Hamiltonian has unnormalized ground energy $\epsilon M$. Fix \(\cE\subseteq X(2)\) with \(\mu(\cE)  \lll \min(\epsilon \pl n,\ \eps_0)\).
    Consider an adversarially faulty execution of the distributed protocol $\cP_\gamma$ for the sequential measurement circuit $C$ of $H_\mathrm{in}$, so that the computation state can be expressed as
   $
        \ket{\Psi} = \sum_{\sF\in \mathsf{PAULI}^{\cE}} c_{\sF} A_\eta^{\otimes Md} (\cP_\gamma)_\sF \ket{\phi_\sF}
    $
    and is a normalized state.
    Then,
    \begin{align}
        \mathbb{E}_{v} \Tr(\Pi_v \ketbra{\Psi}) \geq (1-(1-\gamma'') C_{\max}) \epsilon {\pl n},
    \end{align}
    where $\Pi_v$ is the vertex OR operator from~\Cref{def:vertex-threshold-check}, $C_{\max} = \max_v |M_v|$ is the maximum number of measurement bits held by a vertex.

    In the regime $\epsilon \geq  1/\pl n$, where $C_{\min}:=\min_v |M_v|$, there exist choices $\gamma, \gamma', 1-\gamma''= \Theta(1)$, $\eta= 1/2 - \Theta(\eps_0/\delta)$, and $\tau_\mathrm{out} = \gamma' + \Theta(\epsilon)$ such that
    \begin{align}
        \mathbb{E}_{v} \Tr(\Pi_v^{\tau_\mathrm{out}} \ketbra{\Psi}) \geq 1/200.
    \end{align}
\end{theorem}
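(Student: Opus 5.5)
The proof combines \Cref{lem:output-distribution-Pgamma} with an HDX sampling argument, handling one fixed logical string at a time. First, the hypothesis $\mu(\cE)\lll\min(\epsilon\pl n,\eps_0)$ is exactly what \Cref{lem:output-distribution-Pgamma} needs. It gives the marginal of $\ket{\Psi}$ on the physical measurement bits held by good vertices of good cliques (the set $\cW^c$) in the form $P(z)=\sum_{y_G}q(y_G)P_{\Rep(y_G)_{\cW^c}}(z)$. Here $q$ puts $1-o(1)$ mass on strings with $|y_G|\ggg\epsilon M$.

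Both checks are monotone in the bits a vertex holds. So for any $v$, $\Tr(\Pi_v\ketbra{\Psi})$ is at least the probability under $P$ that the OR over the bits of $v$ lying in $\cW^c$ (and outside the light cone $\LC(L_1,\ldots,L_\ell)$) is $1$. The same bound holds for $\Pi_v^{\tau_\mathrm{out}}$, provided the threshold is counted on good positions and we argue there are enough of them. This reduces both statements to bounds for a fixed $y_G$ with $|y_G|\ggg\epsilon M$, which we then average against $q$.

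\textbf{OR bound.} Fix such a $y_G$ and pick a layer where it has fractional weight $\gtrsim\epsilon$. Let $\cS$ be the set of good cliques outside the light cone with $y_G$-bit equal to $1$. The light cone removes only a $\mu(\cE)/\pl n\lll\epsilon$ fraction, so $\mu(\cS)\gtrsim\epsilon$. The sampling argument of \Cref{lem:serial-rep} (via \Cref{lem:concentration-improved} or \Cref{lem:chernoff-hdx}) then shows that an $\gtrsim\min(\epsilon\pl n,\Omega(1))$ fraction of vertices $v$ are good in at least one clique of $\cS$. For such a $v$, the physical bit it holds in that clique has law $p_{\gamma''}$, so it reads $1$ except with probability $1-\gamma''$. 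A union bound over the at most $C_{\max}$ positions in the view of $v$ loses at most the factor $(1-(1-\gamma'')C_{\max})$. Averaging over $y_G\sim q$ gives the first inequality, with the $o(1)$ loss absorbed into constants.

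\textbf{Threshold bound, regime $\epsilon\geq1/\pl n$.} Choose $\gamma=\Theta(1)$ small, and take $\eta=\tfrac12-\Theta(\eps_0/\delta)$ as in \Cref{def:gamma'}. Then $\gamma'$ and $1-\gamma''$ are constants with $\gamma''-\gamma'=\Omega(1)$. Set $\tau_\mathrm{out}=\gamma'+c_1\epsilon$, where $c_1$ is a small constant. Apply \Cref{lem:hdx-sampling-nc-protocol} with $\cS$ as above, taking $\mu\gtrsim\epsilon\geq1/\log^c n$. It yields an $\Omega(1)$ fraction of vertices $v$ with two properties. (1) At least an $\epsilon/10$ fraction of the cliques in the view of $v$ are in $\cS$ and $v$ is good in them. (2) Only an $O(1/\log^c n)$ fraction of the cliques in the view of $v$ are bad or bad with respect to $v$. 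For such a $v$, the bits it holds are independent given $y_G$: positions with logical value $1$ are $\mathrm{Bern}(\gamma'')$ and positions with logical value $0$ are $\mathrm{Bern}(\gamma')$. The corrupted positions number at most $O(1/\log^c n)|M_v|$, and we count them as $0$.

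The expected fraction of $1$s at $v$ is then at least $\gamma'+(\gamma''-\gamma')\epsilon/10-O(1/\log^c n)$. This exceeds $\tau_\mathrm{out}+\Omega(\epsilon)$ once $c_1<(\gamma''-\gamma')/20$ and $c$ is chosen large compared with the exponent in $\epsilon\geq1/\pl n$. A Chernoff bound over $|M_v|\geq C_{\min}$ independent bits, with $\epsilon\sqrt{C_{\min}}\gg1$, makes $v$'s check fire with probability $\geq1/2$. Hence $\E_v\Tr(\Pi_v^{\tau_\mathrm{out}}\ketbra{\Psi})\geq\Omega(1)\cdot\tfrac12\cdot(1-o(1))\geq1/200$, after tracking the constants in \Cref{lem:hdx-sampling-nc-protocol}.

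\textbf{Main obstacle.} The delicate step is making the monotonicity reduction rigorous. The corrupted bits $\cW$ are adversarial and possibly entangled, so we cannot simply condition on them. I would handle this by observing that both projectors are diagonal and monotone, so we may lower-bound using only the marginal $P$ on $\cW^c$, which \Cref{lem:output-distribution-Pgamma} controls. The remaining issue is the parameter bookkeeping: $\eta$ must satisfy the tilting condition $\Gamma^{0.98}\beta^{2.02\eps_0/\delta}>1$ while keeping $\gamma''-\gamma'=\Omega(1)$. This holds because $\eps_0/\delta\leq1/10$ is a small constant.
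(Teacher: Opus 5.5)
Your proposal is correct and follows the paper's proof essentially step for step. You reduce via \Cref{lem:output-distribution-Pgamma} to a fixed $y_G$ with $|y_G|\ggg\epsilon M$. For the OR checks you use the sampling argument of \Cref{lem:serial-rep}, with the same crude union-bound factor. For the threshold checks you use \Cref{lem:hdx-sampling-nc-protocol} plus a Chernoff bound that needs $\epsilon\sqrt{C_{\min}}\gg1$.

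Your explicit monotonicity reduction and your estimate $\gamma'+(\gamma''-\gamma')\epsilon/10-O(1/\log^c n)$ for the expected fraction of $1$s are slightly more careful than the paper's sketch. One small gap is shared with the paper: property (2) of \Cref{lem:hdx-sampling-nc-protocol} only controls edge-density-bad cliques, not all cliques in $\LC(L_1,\ldots,L_\ell)$. Bounding the light-cone positions in $v$'s view therefore needs one extra Markov step, using that $\LC(L_1,\ldots,L_\ell)$ has density $\lll\epsilon$ and that the HDX is vertex-regular.
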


\begin{proof}
    By~\Cref{lem:output-distribution-Pgamma}, the distribution of the measurement bits held by the good vertices of the good cliques has the form
     \begin{align}
        P(z) = \sum_{y_G} q(y_G)  P_{\Rep(y_G)_{\cW^c}}(z),
     \end{align}
     where $q(y_G)$ is a distribution over the logical measurement bits held by the cliques outside of $LC(L_1,...,L_\ell)$, satisfying $\Pr_{y_G \sim q}[|y_G| \ggg \epsilon M] \geq 1-2^{-\poly(M)}$, and
     \begin{align}
         P_{\Rep(y_G)_{\cW^c}}(z) =  \bigotimes_{j \in \cW^c: \Rep(y_G)_j=1} p_{\gamma''}(z_j) \bigotimes_{j \in \cW^c: \Rep(y_G)_j=0} p_{\gamma'}(z_j).
     \end{align}
    It suffices to show that for any fixed $y_G$ with $|y_G| \ggg \epsilon M$, the distribution $P_{\Rep(y_G)_{\cW^c}}(z)$ violates the claimed fraction of vertex checks in expectation. Let $\cS$ be the set of cliques in the support of $\Rep(y_G)$ that computes a $1$, then $\mu(\cS) \ggg \epsilon$.
    
    Let us first consider the OR check version. Applying~\Cref{lem:serial-rep} we find that the set $T$ of vertices that see at least one 1 in their local view has measure at least $\mu(T)=\epsilon/\lambda = \epsilon \pl n$. Since the vertex OR check is violated if a vertex sees at least one 1, we conclude that $\mathbb{E}_v \Tr(\Pi_v \ketbra{\Psi}) \geq (1-(1-\gamma'') C_{\max})\epsilon \pl n $, where the union bound factor $(1-\gamma'') C_{\max}$ accounts for the noise.

    Next we consider the threshold version. We instead apply~\Cref{lem:hdx-sampling-nc-protocol} to conclude that the set $T$ of vertices who sees at least $\kappa \ggg \epsilon$ fraction of 1's in their local view has measure $\mu(T) \geq 1/100$. We analyze the energy of the checks on these vertices. For each vertex $v \in T$, the expected fraction of 1's in its local view is at least $(1- 1/\pl n )\gamma' + \kappa  \gamma'' \geq \gamma' + \kappa \gamma''/2$ . Suppose $\gamma', 1-\gamma'' =\Theta(1)$, then by the additive Chernoff bound, the probability that $v$ sees less than $\tau_\mathrm{out}=\gamma' + \kappa/10$ fraction of 1's is at most $\exp(-\Omega(\epsilon^2 C_{\min}))$, where $C_{\min}$ is the minimum number of measurement bits held by a vertex. Thus, if $\epsilon \ggg 1/\sqrt{C_{\min}}$, then the probability that $v$ violates the threshold check is at least $1/2$. Our HDX~\Cref{thm:hdx-construction} has $C_\mathrm{min}= (1/\lambda)^{O(d^2)} > 1/\pl n$, thus this condition holds in the regime $\epsilon \geq 1/\pl n$. Since this holds for at least $1/100$-fraction of vertices, we conclude that $\mathbb{E}_v \Tr(\Pi_v^{\tau_\mathrm{out}} \ketbra{\Psi}) \geq 1/200$.
     Finally, note that when $\gamma=\Theta(1)$, we can pick $\eta$ to be a constant arbitrarily close to $1/2$ by picking $\eps_0/\delta$ to be small, see~\Cref{def:gamma'}.
\end{proof}

%% file: protocol-to-ham.tex
\section{Converting a Distributed Computing Protocol into an LH}\label{sec:ckt-to-H}
In this section we describe the mapping from a distributed computing protocol to a local Hamiltonian. For pedagogical clarity, we begin by briefly illustrating the circuit-to-Hamiltonian procedure from~\cite{Anshu_2024} with a toy example circuit on one qubit in~\Cref{subsec:1-qubit-ABN}. Then,~\Cref{sec:comm-ckt-to-H} adapts the procedure to distributed computing protocols. 
Next, we prove the `combinatorial soundness' of the protocol-to-Hamiltonian mapping in~\Cref{subsec:comb-soundness}. 
Finally,~\Cref{sec:nc-ckt-to-H} modifies the mapping to incorporate the postselection gates used in the non-commuting case from~\Cref{sec:nc-protocol}.

\subsection{One-Qubit Toy Example}\label{subsec:1-qubit-ABN}
The Hamiltonian that the circuit-to-Hamiltonian procedure from~\cite{Anshu_2024} produces can be thought of as a way to enforce a computation by \emph{gate teleportation}. To illustrate the idea, suppose the input circuit is
\[
W = U_D \cdots U_2 U_1
\]
acting on a single-qubit input state $\ket{\psi}$. For each gate $U_t$, prepare the two-qubit resource state
\[
\ket{\Phi_{U_t}}_{a_t b_t} = (\Id \otimes U_t)\ket{\Phi_I},
\qquad
\ket{\Phi_I}=\frac{1}{\sqrt2}(\ket{00}+\ket{11}).
\]
Arrange these resource states in a line, with $\ket{\psi}$ at the left end. If, for each $t$, one projects the neighboring pair of qubits in registers $b_{t-1}$ and $a_t$ onto $\ket{\Phi_I}$, then the input state is teleported from left to right through the chain, and the gate $U_t$ is applied at the $t$th step. After all projections have been applied, the final qubit contains $W\ket{\psi}$ (up to the normalization factor coming from successful postselection on $\ket{\Phi_I}$ during each Bell basis measurement).

This suggests a parent-Hamiltonian construction. Let
\[
\ket{\Omega}
=
\ket{\psi}_{b_0}\otimes \bigotimes_{t=1}^D \ket{\Phi_{U_t}}_{a_t b_t},
\qquad
\ket{\Phi_{U_t}}=(\Id\otimes U_t)\ket{\Phi_I},
\]
where $b_0$ is the input qubit. Before any teleportation constraints are imposed, $\ket{\Omega}$ is a product state, and hence the unique ground state of the frustration-free Hamiltonian
\[
H_0
=
(\Id-\ket{\psi}\bra{\psi})_{b_0}
+
\sum_{t=1}^D
(\Id-\ket{\Phi_{U_t}}\bra{\Phi_{U_t}})_{a_t b_t}.
\]
If the Bell-pair projections on the pairs $b_{t-1}a_t$ were replaced by invertible local maps $L_t$, then the computation state would be proportional to $L_D\cdots L_1\ket{\Omega}$, and a parent Hamiltonian for that state could be obtained by conjugating each term of $H_0$ by the inverse maps acting on the neighboring pairs that overlap that term.

The obstacle is that projection onto $\ket{\Phi_I}$ is not invertible. ABN resolve this issue by replacing the exact EPR projection with the injective perturbation
\[
Q
=
\ket{\Phi_I}\bra{\Phi_I}
+
\delta \sum_{P\in\{X,Z,XZ\}} \ket{\Phi_P}\bra{\Phi_P},
\qquad
\ket{\Phi_P}:=(\Id\otimes P)\ket{\Phi_I}.
\]
This map is invertible,\footnote{It is invertible but not trace-preserving. Nevertheless, ground states of the resulting Hamiltonians can still be prepared efficiently \cite[Section~2]{Anshu_2024}.} so the parent-Hamiltonian construction just proposed now goes through. Operationally, replacing the ideal Bell projection by $Q$ means that each teleportation step is allowed, with amplitude controlled by $\delta$, to produce one of the three nontrivial Bell outcomes as well as the ideal one. Equivalently, each step introduces a small amplitude for a Pauli fault. Thus the ABN ground state encodes not a perfectly noiseless execution of $W$, but a slightly faulty execution that can be handled by fault tolerance.

For clarity of nomenclature, we remark that the `resource states' we have been referring to are often called \emph{Choi states}, and that the process of computation by gate teleportation can be recast graphically as a \emph{tensor network}. More specifically, a computation by gate teleportation can be represented as a tensor network in which entangled resource states and measurements are tensors and carrying out the teleportation corresponds to contracting those tensors~\cite{biamonte2017tensornetworksnutshell}.

\subsection{The Tensor-Network Circuit-to-Hamiltonian Mapping}\label{sec:comm-ckt-to-H}
We give a modification of the circuit-to-Hamiltonian mapping of~\cite{Anshu_2024} tailored to distributed computing protocol on an unweighted multigraph $G=(V,E)$.

\paragraph{Tensor networks.}
We will encode the distributed computing protocol $\mathcal{P}$ into a tensor network. The qubits are grouped into columns, where each column corresponds to a layer of the circuit $\mathcal{P}$. The first column contains the input state, and each subsequent column contains the Choi states of the gates applied at that layer. Consider a $k$-qubit gate $U$, the Choi state is a $2k$-qubit state that encodes the action of the gate
\begin{equation}
    \ket{\Phi_U}_{AB} = (I_{A}\otimes U_B)  \left( \bigotimes_{i=1}^k \frac{\ket{00}_{A_i B_i} + \ket{11}_{A_i B_i}}{\sqrt{2}}  \right) .
\end{equation}
We will often denote $\ket{\Phi_U} = (I \otimes U)\ket{\Phi_I}^{\otimes k}$ for brevity, where $\ket{\Phi_I}=\left( \ket{00} + \ket{11} \right)/\sqrt{2}$. See \Cref{fig:state} for a diagrammatic representation of this state.

\begin{figure}[h]
    \centering
    \begin{tikzpicture}[scale=0.6]
      \path [draw=blue,snake it, thick] (-0.5,2.5) -- (1.5,2.5); %
      \path [draw=blue,snake it, thick] (-0.5,0.5) -- (1.5,0.5); %
      \draw[ultra thick] (0.9,0) rectangle (2.1,3) node[right] {$U$};
      \draw [fill=black] (-0.5,0.5) circle [radius=0.1] node[above] {$A_2$}; %
      \draw [fill=black] (-0.5,2.5) circle [radius=0.1] node[below] {$A_1$}; %
      \draw [fill=black] (1.5,0.5) circle [radius=0.1] node[above] {$B_2$}; %
      \draw [fill=black] (1.5,2.5) circle [radius=0.1] node[below] {$B_1$}; %
    \end{tikzpicture} 
    \caption{
        The Choi state $\ket{\Phi_U}$ associated with a 2-qubit gate.
        The registers $A_iB_i$, for $i=1,2$, are in the EPR state $\ket{\Phi_I}$, indicated by the blue wavy lines.
        The unitary $U$ is applied to qubits $B_i$ (black box).
    }
    \label{fig:state}
\end{figure}

The state $\ket{\Phi_U}$ allows implementing the gate $U$ via teleportation and postselection. For example, the application of a 1-qubit gate $U$ on an input state $\ket{0}$ is simulated by projecting qubits 1, 2 of the joint system $\ket{0}_1\otimes \ket{\Phi_U}_{2,3}$ onto the EPR state $\ket{\Phi_I}$, obtaining $(U\ket{0})_3$. Extending the previous idea to the entire circuit, we can encode any quantum circuit into a tensor network. See~\Cref{fig:tensor-network} for an illustration.

\begin{figure}
    \centering
\begin{tikzpicture}[scale=0.4]
  \clip (-0.25,-0.5) rectangle (16.73,7.6);
    
      \draw [fill=black] (1.5,0.5) circle [radius=0.1] ; %
      \draw [fill=black] (1.5,2.5) circle [radius=0.1] ; %

       \draw[black,rounded corners=5pt] (1,-0.25) rectangle (2,3.25) node[midway,left] {$\ket{\xi}\ $}; %

      \draw [fill=black] (1.5,0.5+4) circle [radius=0.1] node[left] {$\ket{0}$}; %
      \draw [fill=black] (1.5,2.5+4) circle [radius=0.1] node[left] {$\ket{0}$}; %

      \draw[gray,rounded corners,fill=gray, opacity=0.1] (0.75,-0.25) rectangle (4,1.25); %
      \draw[gray,rounded corners,fill=gray, opacity=0.1] (0.75,1.75) rectangle (4,3.25); %
      \draw[gray,rounded corners,fill=gray, opacity=0.1] (0.75,-0.25 +4) rectangle (4,1.25 + 4); %
      \draw[gray,rounded corners,fill=gray, opacity=0.1] (0.75,1.75+4) rectangle (4,3.25+4); %

  \foreach \x in {8} {
    \foreach \y in {0,4,8} {
      \path [draw=blue,snake it, thick] (\x-0.5,\y+2.5) -- (\x+1.5,\y+2.5); %
      \path [draw=blue,snake it, thick] (\x-0.5,\y+0.5) -- (\x+1.5,\y+0.5); %
      \draw[ultra thick] (\x+1,\y) rectangle (\x+2,\y+3);
      \draw [fill=black] (\x-0.5,\y+0.5) circle [radius=0.1]; %
      \draw [fill=black] (\x-0.5,\y+2.5) circle [radius=0.1]; %
      \draw [fill=black] (\x+1.5,\y+0.5) circle [radius=0.1]; %
      \draw [fill=black] (\x+1.5,\y+2.5) circle [radius=0.1]; %
      \draw[gray,rounded corners,fill=gray, opacity=0.1] (\x+0.75,\y-0.25) rectangle (\x+4,\y+1.25); %
      \draw[gray,rounded corners,fill=gray, opacity=0.1] (\x+0.75,\y+1.75) rectangle (\x+4,\y+3.25); %
    }
  }
  \foreach \y in {-2,2,6,10} {
    \foreach \x in {4,12} {
      \path [draw=blue,snake it, thick] (\x-0.5,\y+2.5) -- (\x+1.5,\y+2.5); %
      \path [draw=blue,snake it, thick] (\x-0.5,\y+0.5) -- (\x+1.5,\y+0.5); %
      \draw[ultra thick] (\x+1,\y) rectangle (\x+2,\y+3);
      \draw [fill=black] (\x-0.5,\y+0.5) circle [radius=0.1]; %
      \draw [fill=black] (\x-0.5,\y+2.5) circle [radius=0.1]; %
      \draw [fill=black] (\x+1.5,\y+0.5) circle [radius=0.1]; %
      \draw [fill=black] (\x+1.5,\y+2.5) circle [radius=0.1]; %

    }
    
    \draw[gray,rounded corners,fill=gray, opacity=0.1] (4+0.75,\y-0.25) rectangle (4+4,\y+1.25); %
    \draw[gray,rounded corners,fill=gray, opacity=0.1] (4+0.75,\y+1.75) rectangle (4+4,\y+3.25); %
    
  }
\end{tikzpicture}
    \caption{
        Tensor network representation of a quantum circuit with depth $D=3$ with brickwork structure. 
        \textbf{Input:}
        The first column (black dots) represents the input qubits, which could be restricted to be in some subspace (e.g., $\ketbra{0}{0}$), or to be arbitrary (any input state $\ket{\xi}$).
        \textbf{Operations:} The subsequent columns contain the Choi states of the gates (black dots connected by blue wavy lines) applied at the corresponding circuit locations. The EPR projector $\ketbra{\Phi_I}{\Phi_I}$ or its $\delta$-noisy version $Q = \ketbra{\Phi_I}{\Phi_I} + \delta\sum_{P \in \{X,Y,Z\}} \ketbra{\Phi_P}{\Phi_P}$ (gray boxes) is applied on pairs of qubits to perform the gate. In addition to Choi states, we may also have $\ket{0}$'s, representing mid-circuit initializations.
        \textbf{Output:}
        The output includes the qubits at the last column and any qubits that are `discarded' throughout the circuit.
    }
    \label{fig:tensor-network}
\end{figure}

We now specify to distributed computing protocol $\mathcal{P}$ as in~\Cref{def:protocol} that takes in a $w$-qubit input state $\ket{\xi}$ (think of as a QMA witness). For each vertex $v$, let $\mathsf{init}(v)$ be the total number of $\ket{0}$ qubits initialized by $v$ throughout the protocol. Define the following product state
\begin{equation}
    \ket{\Phi_{\mathcal{P}, \xi}}= \ket{\xi}_\mathrm{in} \otimes \left(\bigotimes_{v \in V} \ket{0}^{\mathsf{init}(v)} \right) \left( \bigotimes_{U \in \mathcal{P}} \ket{\Phi_U} \right),
    \label{eq:tensor-network-state}
\end{equation}
where $U$ are gates acting within a vertex or an edge of $G=(V,E)$. So the total number of qubits is $N_{TN} = O(nm)$.

    The tensor network state encoding the execution of $\mathcal{P}$ is obtained by projecting appropriate pairs of qubits of $\ket{\Phi_{\mathcal{P}, \xi}}$ onto $\ketbra{\Phi_I}{\Phi_I}$ (gray boxes in~\Cref{fig:tensor-network}). Specifically, for each edge $e=(u,v) \in E$, let $\mathsf{supp}(e)$ denote the support of all gray boxes corresponding to the Choi states crossing $e$, and we define
\begin{align}
    P_e = \ketbra{\Phi_I}{\Phi_I}^{\mathsf{supp}(e)}.
\end{align}
Intuitively, this `edge' projector enforces qubit transmissions between $u$ and $v$. Additionally, for each vertex $v \in V$, let $\operatorname{supp}(v)$ denote the support of all gray boxes intersecting with the Choi states fully contained within $v$, \emph{except} the ones that are already included in the edge projectors. We define the `vertex' projector as
\begin{align}
    P_v = \ketbra{\Phi_I}{\Phi_I}^{\mathsf{supp}(v)}.
\end{align}
Intuitvely, a `vertex' projector enforces local computations within $v$. Note that the supports of the edge and vertex projectors are disjoint.

Then our tensor network encoding of $\mathcal{P}$ upon input $\ket{\xi}$ is 
\begin{align}
    \left( \bigotimes_{e \in E} P_e \right)  \left( \bigotimes_{v \in V} P_v \right) \ket{\Phi_{\mathcal{P},\xi}} = \ket{\Phi_I}^{\otimes N_\mathrm{EPR}} \mathcal{P} (\ket{\xi}).
\end{align}

\paragraph{Parent Hamiltonian.} The key idea of~\cite{Anshu_2024} is to obtain a local `parent Hamiltonian' corresponding to a perturbed version of the described tensor network. Here, we consider a modification of their Hamiltonian that makes use of the grouping of qubits and gates according to the graph $G$ described earlier. For each edge $e \in E$, define a perturbed version of the projector $P_e$
\begin{align}
    Q_e := P_e + \frac{\delta_\mathrm{edge}}{\sqrt{4^{|\operatorname{supp}(e)|}-1}} ( I - P_e) = P_e + \frac{\delta_\mathrm{edge}}{\sqrt{4^{|\operatorname{supp}(e)|}-1}} \sum_{P \in  \{X,Y,Z\}^{\operatorname{supp}(e)} \backslash I } \ketbra{\Phi_{P}}{\Phi_P}
\end{align}
This perturbed projection can be interpreted as performing the qubit transmissions on the edge $e$, but allowing for Pauli errors to occur with probability proportional to $O(\delta_\mathrm{edge}^2)$. Similarly, for each vertex $v \in V$, we define a perturbed version of the projector $P_v$ as
\begin{align}
    Q_v = P_v + \frac{\delta_\mathrm{vertex}}{\sqrt{4^{|\operatorname{supp}(v)|}-1}} \sum_{P \in  \{X,Y,Z\}^{\operatorname{supp}(v)} \backslash I } \ketbra{\Phi_{P}}{\Phi_P}.
\end{align}

We can now define the Hamiltonian. For $v$, let $\Lambda_v = Q_v^{-1}$ and $\mathsf{comp}(v)$ be the set of gates done locally at vertex $v$. For $e=(u,v)$, let $\Lambda_e =  Q_e^{-1}$ and $\mathsf{trans}(e)$ be the set of Choi states corresponding to qubit transmissions on edge $e$.  Then Hamiltonian term corresponding to the edge $e$ is
\begin{align}
    h_{e} = \mathrm{span} \left( \Lambda_u \Lambda_v \Lambda_e \left(I -\left( \ketbra{0}{0}^{\mathrm{init}(u,v) } \bigotimes_{U \in \mathrm{comp}(u,v)} \ketbra{\Phi_U}{\Phi_U}  \right) \otimes \ketbra{\Phi_I}{\Phi_I}^{\otimes \mathrm{trans}(e)}  \right) \Lambda_u \Lambda_v \Lambda_e \right).
\end{align}
That is, each term checks the local computations and communications at the corresponding vertex or edge, but with a `sandwiching' by the inverse of the perturbed projectors. The reason for requiring the edge terms to check the local computations is for later convenience in the analysis of combinatorial states.

Similarly, for $v$, let $\Lambda_v = Q_v^{-1}$ and $\mathsf{comp}(v)$ be the set of gates done locally at vertex $v$. Then the Hamiltonian term corresponding to the vertex $v$ is
\begin{align}
    h_{v} = \mathrm{span} \left(  \Lambda_v \left(I - \left( \ketbra{0}{0}^{\mathrm{init}(v)} \bigotimes_{U \in \mathrm{comp}(v)} \ketbra{\Phi_U}{\Phi_U}  \right)  \right) \Lambda_v \right).
\end{align}

The overall Hamiltonian is
\begin{align}
    H_\mathrm{TN} (\mathcal{P}, \delta_\mathrm{vertex}, \delta_\mathrm{edge}) = \frac{1}{2|V|} \sum_{v \in V} h_v  + \frac{1}{2|E|} \sum_{e \in E}   h_e .
\end{align}

\paragraph{Notations.}
\begin{itemize}
    \item $N_\mathrm{EPR} = O(nm)$ denotes the number of EPR projectors in the tensor network (gray boxes in~\Cref{fig:tensor-network}).
    \item $N_\mathrm{anc} = \sum_{v \in V} |\mathsf{init}(v)| = O(nm)$ denotes the number of qubits initialized to $\ket{0}$ in the protocol.
    \item $\mathsf{Pauli}^{S}$ is the set of all Pauli strings supported on the EPR locations in $S$. With a slight abuse of notation we will also use $\mathsf{Pauli}^{\otimes N_\mathrm{EPR}}$ to denote the set of all Pauli strings supported on the EPR locations in the tensor network, and similarly $\mathsf{Pauli}^{\otimes N_\mathrm{anc}}$ for the ancilla qubits.
    \item $|\mathsf{F}|$ denotes the weight of the Pauli string $\mathsf{F}$.
    \item $|\mathsf{F}|_E$ denotes the `edge' weight, i.e., how many edges on which $\mathsf{F}$ acts nontrivially.
\end{itemize}

\begin{claim}[Ground states]\label{claim:gs}
     The tensor network Hamiltonian $H_\mathrm{TN}(\mathcal{P}, \delta_\mathrm{vertex}, \delta_\mathrm{edge})$ is frustration-free and has a ground space $\operatorname{span}\{\ket{\Psi_{\mathcal{P},\xi}} : \text{$w$-qubit } \ket{\xi}\}$, with
    \begin{align}
        \ket{\Psi_{\mathcal{P},\xi}} \propto \sum_{ \mathsf{F} \in \mathsf{Pauli}^{\otimes N_\mathrm{EPR}}} \delta^{\mathsf{F}} \ket{\Phi_{\mathsf{F}}} \otimes  \big(\mathcal{P}_{\mathsf{F}} \ket{\xi} \big)_\mathrm{out} ,
    \end{align}
where $\mathcal{P}_{\mathsf{F}}$ denotes the noisy implementation of $\mathcal{P}$ with Pauli errors $\mathsf{F}$, and
\begin{align}
    \delta^{\mathsf{F}}  := \prod_{e \in E} \prod_{v \in V} \big(\frac{\delta_\mathrm{edge}}{\sqrt{4^{|\operatorname{supp}(e)|}-1}} \big)^{|\mathsf{F}|_e} \big(\frac{\delta_\mathrm{vertex}}{\sqrt{4^{|\operatorname{supp}(v)|}-1}} \big)^{|\mathsf{F}|_v} .
\end{align}
\end{claim}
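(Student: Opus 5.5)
The plan follows the ABN argument: a change of basis by the inverse perturbed projectors turns every term into a frustration-free product-state Hamiltonian, and we undo that change of basis at the end. Let $\Lambda:=\prod_{e}\Lambda_e\prod_v\Lambda_v=\bigl(\prod_e Q_e\prod_v Q_v\bigr)^{-1}$. This product is well defined because the supports of the edge and vertex projectors are pairwise disjoint, so all the $Q$'s commute.

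Each $Q_e$ and $Q_v$ is invertible: on the Bell basis of its support it is diagonal, with eigenvalue $1$ on $\ket{\Phi_I}$ and eigenvalue $\delta/\sqrt{4^{|\mathrm{supp}|}-1}>0$ on the orthogonal complement. So the span of $h_e$ equals $\Lambda\,\mathrm{span}\bigl(I-\Pi_e\bigr)$ restricted appropriately, where $\Pi_e$ is the product-state projector onto the $\ket{0}$ ancillas, Choi states and transmission EPR pairs associated with $e$. This uses that the $\Lambda$'s acting outside the support of the term can be inserted freely. The same holds for the vertex terms.

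The first key step is to identify the kernel. A state $\ket{\Gamma}$ is annihilated by every term if and only if $\Lambda^{-1}\ket{\Gamma}$ lies in the simultaneous $+1$ space of all the product projectors. These projectors act on disjoint registers: Choi-state pairs, ancilla qubits, and EPR pairs for transmissions, with only the input register left unconstrained. Their common $+1$ space is therefore exactly $\{\ket{\Phi_{\mathcal P,\xi}}\}$ from~\eqref{eq:tensor-network-state}, for arbitrary $w$-qubit $\ket{\xi}$. Hence the ground space is $\Lambda^{-1}$ applied to that space, i.e.\ $\mathrm{span}\{\prod_e Q_e\prod_v Q_v\ket{\Phi_{\mathcal P,\xi}}\}$.

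This space is nonzero and is annihilated by each (positive semidefinite) term, so the Hamiltonian is frustration-free with ground energy $0$. One point to check carefully is that $h_e$ really contains $\Lambda_u\Lambda_v$, and that these do not interfere with the invertibility argument. The $\Lambda_u,\Lambda_v$ act on registers overlapping the local-computation Choi states in the term, while $\Lambda_e$ acts on the transmission pairs. Every other $\Lambda$ acts on registers disjoint from the term's projector and can be commuted through.

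The second key step is to expand each $Q$ in the Bell basis. We write $Q=\sum_{P}c_P\ketbra{\Phi_P}{\Phi_P}$, with $c_I=1$ and $c_P=\delta/\sqrt{4^{|\mathrm{supp}|}-1}$ otherwise. Multiplying out gives a sum over Pauli strings $\mathsf F$ on the EPR locations with coefficient $\delta^{\mathsf F}$. For each term we use the teleportation identity
\[
\bigl(\ketbra{\Phi_P}{\Phi_P}\otimes I\bigr)\ket{\phi}\otimes\ket{\Phi_U}=\ket{\Phi_P}\otimes U P\ket{\phi}
\]
(up to transpose conventions, absorbed into relabelling $P$). Applying it gate by gate in circuit order shows that the projected network equals $\ket{\Phi_{\mathsf F}}\otimes(\mathcal P_{\mathsf F}\ket{\xi})_{\mathrm{out}}$, where $\ket{\Phi_{\mathsf F}}:=\bigotimes\ket{\Phi_{P}}$ over the EPR locations.

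The main obstacle is bookkeeping rather than concept. It consists of fixing the transpose and normalization conventions, which contribute the global factor $2^{-N_{\mathrm{EPR}}}$ that is absorbed into $\propto$, and checking that the Pauli inserted at each gray box is exactly the fault at that circuit location in the sense of \Cref{def:edge-pauli-fault}. It also requires checking that output and discarded qubits, as well as mid-circuit $\ket{0}$ initializations, are handled consistently. I would verify the identity on a single Choi state and then induct over the layers of $\mathcal P$.
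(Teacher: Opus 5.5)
Your route is the same as the paper's: undo the perturbed projectors to get a non-interacting product-state Hamiltonian whose common kernel is $\{\ket{\Phi_{\mathcal P,\xi}}\}$, map back by $\prod_e Q_e\prod_v Q_v$, and then expand each $Q$ in the Bell basis and teleport to read off the Pauli sum. That last step is one the paper leaves implicit. There is a harmless slip: $\ket{\Gamma}$ is in the common kernel iff $\Lambda\ket{\Gamma}$, not $\Lambda^{-1}\ket{\Gamma}$, lies in the common $+1$ space. Your conclusion, that the ground space is $\Lambda^{-1}$ applied to that space, is the correct one. However, the step you yourself flagged is a real gap under the definitions as written.

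By definition $\mathrm{supp}(v)$ excludes gray boxes already assigned to edges. Such an edge gray box of $Q_{e'}$, with $e'=(u,w)$, pairs the output qubit of some local Choi state $\ket{\Phi_U}$, $U\in\mathsf{comp}(u)$, with the input half of a transmission Choi state. That qubit lies in the support of $\ketbra{\Phi_U}{\Phi_U}$, which appears in $h_u$ and in $h_e$ for every $e\ni u$. Yet $h_u$ is sandwiched only by $\Lambda_u$, and $h_e$ only by $\Lambda_u\Lambda_v\Lambda_e$.

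So your assertion that ``every other $\Lambda$ acts on registers disjoint from the term's projector'' is false, and the kernel argument breaks. Write $Q_{e'}=cI+(1-c)P_{e'}$. The EPR projection in $P_{e'}$ teleports the output of $U$ onward, so those registers are no longer in $\ket{\Phi_U}$; for a one-qubit $U$ their reduced state becomes maximally mixed. Hence in general $(I-\ketbra{\Phi_U}{\Phi_U})\,Q_{e'}\ket{\Phi_{\mathcal P,\xi}}\neq 0$, and $h_u$ does not annihilate $\prod_e Q_e\prod_v Q_v\ket{\Phi_{\mathcal P,\xi}}$.

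The fix is the ABN convention: conjugate each term by the inverses of \emph{all} $Q$'s whose support meets the term's projector. For example, include $\Lambda_{e'}$ for every edge gray box touching a Choi state of $u$. With that definition every remaining $Q$ genuinely commutes with the term's projector, and your commute-through and kernel argument is correct verbatim. The paper's one-line proof (``the sandwiched Hamiltonian is non-interacting'') rests on the same tacit assumption. You should therefore state it explicitly as part of the construction rather than assert the disjointness.
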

\begin{proof}
    The `sandwiched' Hamiltonian $\sum_{e \in E} Q_e h_e Q_e + \sum_{v \in V} Q_v h_v$ is non-interacting, frustration-free, and has ground states of the form $\ket{\Phi_{\mathcal{P}, \xi}}= \ket{\xi}_\mathrm{in} \otimes \left(\bigotimes_{v \in V} \ket{0}^{\mathsf{init}(v)} \right) \left( \bigotimes_{U \in \mathcal{P}} \ket{\Phi_U} \right)$, see~\Cref{eq:tensor-network-state}. Thus, the ground states of $H_\mathrm{TN}$ are
        $\left( \bigotimes_{e \in E} Q_e \right)  \left( \bigotimes_{v \in V} Q_v \right) \ket{\Phi_{\mathcal{P}, \xi}}.
    $
\end{proof}

\subsection{Combinatorial Soundness}\label{subsec:comb-soundness}

We will show that any state that violates at most an $\varepsilon$ fraction of vertex and edge terms in $H_\mathrm{TN}$ is close to a noisy execution of $\mathcal{P}$ with noise rate $\Theta(\varepsilon)$. There is a slight difference with~\Cref{def:edge-pauli-fault}. Because of the random noise coming from the tensor network injectivity parameters, we need to keep track of both vertex and edge faults.

\begin{definition}[Vertex and edge faults]
    \label{def:fault-ckt-to-H}
    Let $\cP$ be a distributed protocol on a unweighted multigraph $G=(V,E)$. We say that the execution is faulty on vertex $v$ if there is a non-trivial fault touching a qubit in player $v$'s register, and faulty on edge $(i,j)$ if there is a non-trivial Pauli fault occurring in a communication between players $i$ and $j$. If the fraction of faulty vertices is at most $\delta_1$ and the fraction of faulty edges is at most $\delta_2$, we say the execution is a \emph{$(\delta_1, \delta_2)$-vertex and edge faulty} execution of $\cP$.
\end{definition}

\begin{proposition}[Combinatorial robustness]\label{prop:combinatorial-soundness} There exists choices of the injectivity parameters
    $\delta_\mathrm{edge},\delta_\mathrm{vertex} \leq e^{-\mathrm{poly}(m,\log |V|)}$
     such that for any $\varepsilon \geq 0$ and for any $\varepsilon$-combinatorial state $\ket{\psi}$ of $H_\mathrm{TN}(\mathcal{P}, \delta_\mathrm{vertex}, \delta_\mathrm{edge})$, the reduced state $\psi_\mathrm{out}$ on the output qubits is $e^{-\mathrm{polylog}(m, |V|) \max(\varepsilon|V|, 1)}\text{-close}$ in trace distance to a mixture of outputs of $(\Theta(\varepsilon),\Theta(\varepsilon))$-vertex and edge faulty executions of $\mathcal{P}$, where an fault execution is in the sense of~\Cref{def:fault-ckt-to-H}.
\end{proposition}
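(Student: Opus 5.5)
The approach is to follow the ABN combinatorial analysis: conjugate by the injective maps to turn $H_\mathrm{TN}$ into a non-interacting Hamiltonian, identify the terms a combinatorial state satisfies exactly, and read off the form of the state on the violated region.

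\textbf{Step 1: set up the sandwiched frame.} Let $\ket{\psi}$ be an $\varepsilon$-combinatorial state. Let $S_V\subseteq V$ and $S_E\subseteq E$ be the vertex and edge terms that $\ket{\psi}$ violates, so $|S_V|\le 2\varepsilon|V|$ and $|S_E|\le 2\varepsilon|E|$ after accounting for the normalization $\frac{1}{2|V|},\frac{1}{2|E|}$. Set
\[
Q:=\Bigl(\bigotimes_{e}Q_e\Bigr)\Bigl(\bigotimes_{v}Q_v\Bigr)
\qquad\text{and}\qquad
\ket{\omega}:=Q^{-1}\ket{\psi}.
\]
These operators commute because their supports are disjoint. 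A satisfied term $h_v$ (resp.\ $h_e$) means $\ket{\psi}$ lies in the kernel of the range projector of $\Lambda_v(\cdots)\Lambda_v$ (resp.\ $\Lambda_u\Lambda_v\Lambda_e(\cdots)\Lambda_u\Lambda_v\Lambda_e$). Since the $\Lambda$'s are invertible, this translates into $\ket{\omega}$ having its local Choi/ancilla registers at $v$ (resp.\ at $u,v$ and the transmissions on $e$) exactly in the ideal product state.

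Step 1 needs care: the $\Lambda$'s of neighbouring terms do not commute with the local ideal projectors. I would use the fact that every $\Lambda$ acts only on the gray-box (EPR) pair supports, and argue locally as ABN do. Satisfied terms whose supports overlap only through $\Lambda$'s of other satisfied terms can be simultaneously unwound.

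\textbf{Step 2: obtain the structural form.} The conclusion of Step 1 is that $\ket{\omega}=\ket{\mathrm{ideal}}_{\overline{S}}\otimes\ket{\chi}_{S}$, with arbitrary content $\ket{\chi}$ only on registers of vertices in $S_V$ and endpoints/transmissions of edges in $S_E$. Here $\ket{\mathrm{ideal}}$ is the ideal product of Choi states, and the input register is arbitrary. Expanding $\ket{\chi}$ in the Choi-state basis $(I\otimes P)\ket{\Phi_U}$, with Pauli $P$, shows that arbitrary content on a violated Choi register equals a linear combination of Pauli faults inserted at that gate. Applying $Q$ and expanding each $Q_e,Q_v$ into the ideal Bell projector plus $\delta$-weighted Pauli Bell projectors gives
\[
\ket{\psi}=\sum_{\mathsf F}c_{\mathsf F}\,\delta^{\mathsf F_{\mathrm{out}}}\,\ket{\Phi_{\mathsf F}}\otimes\mathcal P_{\mathsf F}\ket{\xi_{\mathsf F}}.
\]
In this sum, $\mathsf F=\mathsf F_{\mathrm{in}}\cdot\mathsf F_{\mathrm{out}}$, where $\mathsf F_{\mathrm{in}}$ is supported on $S_V\cup S_E$ and $\mathsf F_{\mathrm{out}}$ is an arbitrary Pauli string carrying injectivity weight $\delta^{\mathsf F_{\mathrm{out}}}$.

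\textbf{Step 3: the weight estimate (main obstacle).} Keeping only the terms whose $\mathsf F_{\mathrm{out}}$ is supported within $\Theta(\varepsilon)$ vertices and edges should be a good approximation. The obstacle is that the coefficients $c_{\mathsf F}$ on the violated region are unconstrained and can be huge: the adversary can exploit $\Lambda$ to amplify branches that are suppressed by $\delta$. The ratio of amplification to suppression must therefore be controlled.

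I would bound this using the invertibility constants. Each $\|\Lambda_v\|,\|\Lambda_e\|\le \delta^{-1}$, and only $O(\varepsilon|V|)$ violated terms can amplify. So the amplification is at most $\delta^{-O(\varepsilon|V|)\cdot\mathrm{poly}(m)}$. Meanwhile, outer faults of total weight $w$ beyond the violated region pay $\delta^{w}$, and there are at most $(4^{\mathrm{poly}(m)}|V|)^{w}$ such patterns. Choosing $\delta_\mathrm{edge},\delta_\mathrm{vertex}=e^{-\mathrm{poly}(m,\log|V|)}$ and truncating at $w=C\max(\varepsilon|V|,1)$ for a large constant $C$ makes the tail norm $e^{-\mathrm{polylog}\cdot\max(\varepsilon|V|,1)}$ relative to the retained part. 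The subtle point is comparing against the retained part's norm rather than an absolute bound. For this I would use that the outer Pauli Bell projectors are orthogonal to the ideal ones, so that norms add.

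\textbf{Step 4: conclude.} Tracing out the EPR/Bell-label registers $\ket{\Phi_{\mathsf F}}$ decoheres distinct fault-label sectors. The retained reduced output state is then a mixture over label sectors, each a coherent sum of faulty executions $\mathcal P_{\mathsf F}$ with faults supported on a fixed set of $\Theta(\varepsilon)$ vertices and edges. This gives the claimed $(\Theta(\varepsilon),\Theta(\varepsilon))$-vertex and edge faulty form. Gentle-measurement (triangle) arguments then convert the norm bound from Step 3 into the trace-distance bound.
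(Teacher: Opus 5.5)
Your route is the paper's. You pass to $Q^{-1}\ket{\psi}$ and use that it is the ideal product of Choi and $\ket{0}$ states off the violated region. You then split $Q=Q^{S}Q^{S^c}$ and charge the adversary's amplification to the condition number of $Q^{S}$; the paper does this using $[Q^S,\Pi']=0$ and $4^{-m}\delta\, I\preceq Q_v,Q_e\preceq I$. Finally you bound the tail on $S^c$, where the Bell-label distribution is exactly i.i.d.\ with per-location rate $\approx\delta^2$. The paper uses Chernoff there, and your orthogonality-plus-counting bound is equivalent.

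The step that fails is the amplification count in Step 3. As you note in Step 1, up to $2\varepsilon|E|$ edge terms may be violated. Each such edge contributes a factor $\|\Lambda_e\|\approx 4^m/\delta_{\mathrm{edge}}$, and so does each edge adjacent to a violated vertex (still $O(\varepsilon|E|)$ edges by regularity). The amplification is therefore $(4^m/\delta_{\mathrm{vertex}})^{O(\varepsilon|V|)}(4^m/\delta_{\mathrm{edge}})^{O(\varepsilon|E|)}$, not $\delta^{-O(\varepsilon|V|)}$.

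Here $|E|/|V|$, the multidegree of the protocol graph, is superconstant. With a single $\delta$, the vertex tail $\Pr[\text{outer vertex weight}\ge\alpha|V|]\approx\delta^{\Theta(\alpha|V|)}$ beats $\delta^{-\Theta(\varepsilon|E|)}$ only if $\alpha\gtrsim\varepsilon|E|/|V|$. You would then certify $\Theta(\varepsilon|E|/|V|)$ faulty vertices, not the claimed $\Theta(\varepsilon)$. For the same reason, truncating the \emph{total} outer weight at $C\max(\varepsilon|V|,1)$ cannot beat an edge amplification of $\delta_{\mathrm{edge}}^{-\Theta(\varepsilon|E|)}$.

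What is missing is most of the paper's calculation. You need to truncate vertex weight and edge weight separately, at $\alpha|V|$ and $\alpha|E|$ with $\alpha=\Theta(\varepsilon)$. You also need to put the two injectivity parameters on different scales, so that $|V|\log(1/\delta_{\mathrm{vertex}})$ and $|E|\log(1/\delta_{\mathrm{edge}})$ agree up to constant factors. Concretely, the paper imposes $\delta_{\mathrm{edge}}^{\frac{|E|}{|V|}\frac{\alpha'}{200\varepsilon}}\le\delta_{\mathrm{vertex}}\le\delta_{\mathrm{edge}}^{\frac{|E|/|V|}{\alpha'/200\varepsilon}}$ with $\alpha'=\alpha-\varepsilon$.

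The upper bound lets the vertex tail beat the amplification from the $O(\varepsilon|E|)$ edge terms. The lower bound lets the edge tail beat the amplification from the $O(\varepsilon|V|)$ vertex terms. This is compatible with the upper bounds $\delta_{\mathrm{edge}},\delta_{\mathrm{vertex}}\le e^{-\mathrm{poly}(m,\log|V|)}$ in the statement. With this change your argument matches the paper's.
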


\begin{proof} 
    The case $\varepsilon = 0$ is a straightforward union bound over the iid random noise in the ground state. So let us consider the nontrivial case $\varepsilon \neq 0$, and assume wlog $\varepsilon > 1/\mathrm{poly}(|V|)$.

    An $\varepsilon/2$-combinatorial $\ket{\psi}$ violates at most an $\varepsilon$ fraction of the vertex and edge terms in $H_\mathrm{TN}$.
    Consider the unnormalized state $\ket{\psi'}=  \left( \bigotimes_{e \in E} Q_e \right)^{-1}  \left( \bigotimes_{v \in V} Q_v \right)^{-1}   \ket{\psi}$ and let $F$ be the set of locations within violated edges and $W$ within violated vertices. Note that $|F| \leq 2\varepsilon |E| m$ and $|W| \leq 2\varepsilon |V| m$.
    For the satisfied Hamiltonain terms, the state $\ket{\psi'}$ locally equals the correct Choi states and $\ket{0}$ states. Thus, $\ket{\psi'}$ lies in the span of all Pauli errors occurring only at locations in $S$:
    \begin{align}\label{eq:psi'-combin-proof}
        \ket{\psi'}  = \sum_{\mathsf{G} \in \mathsf{Pauli}^{F \cup W}} c_{\mathsf{G}} \ket{\xi_{\mathsf{G}}}_\mathrm{in} \otimes   \mathsf{G} \left(\bigotimes_{v \in V} \ket{0}^{\mathsf{init}(v)}  \bigotimes_{U} \ket{\Phi_U} \right),
    \end{align} 
    where, wlog, $c_{\mathsf{G}}$ is nonzero only when $\mathsf{G} $ is composed of only the Pauli operators $I$ and $X$ on the initialization locations $\ket{0}$.

    Next, we apply the $Q$ maps to obtain the original combinatorial state:

    \begin{align}
    \ket{\psi} & =  \left( \bigotimes_{e \in E} Q_e \right)  \left( \bigotimes_{v \in V} Q_v \right) \ket{\psi'} \\ & = \sum_{\mathsf{F} \in \mathsf{Pauli}^{\otimes N_\mathrm{EPR}}}
    \delta^{\mathsf{F}}  \ket{\Phi_{\mathsf{F}}} \bigotimes \left(\sum_{\mathsf{G} \in \mathsf{Pauli}^{F \cup W}} c_{\mathsf{G}}   \mathcal{P}_{\mathsf{G}\mathsf{F}} \ket{0^{N_\mathrm{anc}}}\ket{\xi_{\mathsf{G}}}\right)_\mathrm{out},
    \label{eq:chi}
\end{align}
where $\delta^{\mathsf{F}} $ is defined in~\Cref{claim:gs}, $\mathsf{G}\mathsf{F}$ is the `location-wise' multiplication of the Pauli strings and $\mathcal{P}_{\mathsf{G}\mathsf{F}}$ denotes the noisy implementation of $\mathcal{P}$ with Pauli errors $\mathsf{G}\mathsf{F}$.

We now show that $\psi_\mathrm{out} = \Tr_{\overline{\mathrm{out}}} \ketbra{\psi}{\psi} $ is very close to a mixed state which is obtained by removing from~\Cref{eq:chi} the summands $\mathsf{F}$ whose edge-weight is larger than $\alpha |E|$ or vertex-weight is larger than $\alpha |V|$ and renormalizing properly, for some $\alpha$ to be specified. Specifically, define
\begin{equation}
    \Pi = \sum_{\mathsf{F} \in \mathsf{Pauli}^{\otimes N_\mathrm{EPR}}: \  |\mathsf{F}|_E \geq \alpha |E| \text{ or } |\mathsf{F}|_V \geq \alpha |V|}   \ketbra{\Phi_{\mathsf{F}}}{\Phi_{\mathsf{F}}}.
\end{equation}

Let $S$ denote the EPR locations on violated edges, violated vertices, and (possibly non-violated) edges adjacent to violated vertices. Consider the natural definition of $Q^S$ and $Q^{S^c}$, where $Q^{S^c}$ does not intersect with the EPR locations $S$, let the `partially undone' state be
\begin{align}
    \ket{\chi} := Q^{S^c}\ket{\psi'}, \qquad \text{such that } \ket{\psi} = Q^S \ket{\chi}.
\end{align}
For $\alpha > 1000 \varepsilon$ to be chosen, let $\alpha' >0$ such that 
\begin{align}
    \label{eq:alpha-combin-soundness}
    \alpha' := (\alpha -\varepsilon) > \varepsilon.
\end{align}
We can upper bound $\Pi$ with the following operator:
\begin{align}
    \Pi \preceq \Pi' = \left(\sum_{\mathsf{F} \in \mathsf{Pauli}^{S^c}:\ |\mathsf{F}|_E \geq \alpha' |E| \text{ or } |\mathsf{F}|_V \geq \alpha' |V| } \ketbra{\Phi_{\mathsf{F}}}{\Phi_{\mathsf{F}}} \right) \otimes I_{S},
\end{align}

Note that $4^{-m} \delta_\mathrm{vertex} \cdot I \preceq Q_v \preceq I$ and $4^{-m} \delta_\mathrm{edge} \cdot I \preceq Q_e \preceq I$. Furthermore, $\Pi'$ and $Q$ commute. So we have
\begin{align}\label{eq:psi-Pi-psi}
    \bra{\psi} \Pi \ket{\psi} & \leq \bra{\psi} \Pi' \ket{\psi} = \frac{\bra{\chi} Q^S \Pi' Q^S \ket{\chi} }{ \bra{\chi} Q^S Q^S \ket{\chi}} \\
    &\leq (4^{-m} \delta_\mathrm{vertex})^{-2 \varepsilon |V| } (4^{-m} \delta_\mathrm{edge})^{- 2\varepsilon |E| } \frac{\bra{\chi}  \Pi' \ket{\chi} }{ \bra{\chi} \ket{\chi}}.
\end{align}

Let us bound the RHS:
\begin{align}
    \frac{\bra{\chi}  \Pi' \ket{\chi} }{ \bra{\chi} \ket{\chi}} 
    &= \frac{ \bra{\chi} \left( \sum_{\mathsf{F} \in \mathsf{Pauli}^{S^c}: \ |\mathsf{F}|_E \geq \alpha' |E| \text{ or } |\mathsf{F}|_V \geq \alpha' |V|} \ketbra{\Phi_{\mathsf{F}}}{\Phi_{\mathsf{F}}} \right) \ket{\chi} }{ \bra{\chi} \ket{\chi}}\\
    & = \frac{ \bra{\psi'} \left( \sum_{\mathsf{F} \in \mathsf{Pauli}^{S^c}: \ |\mathsf{F}|_E \geq \alpha' |E| \text{ or } |\mathsf{F}|_V \geq \alpha'  |V| } (\delta^{\mathsf{F}})^2 \ketbra{\Phi_{\mathsf{F}}}{\Phi_{\mathsf{F}}} \right) \ket{\psi'} }{ \bra{\psi'} \left( \sum_{\mathsf{F} \in \mathsf{Pauli}^{S^c}} (\delta^{\mathsf{F}})^2 \ketbra{\Phi_{\mathsf{F}}}{\Phi_{\mathsf{F}}} \right) \ket{\psi'} }\\
    & = \frac{  \sum_{\mathsf{F} \in \mathsf{Pauli}^{S^c}: \ |\mathsf{F}|_E \geq \alpha' |E|  \text{ or } |\mathsf{F}|_V \geq \alpha' |V| } (\delta^{\mathsf{F}})^2 }{\sum_{\mathsf{F} \in \mathsf{Pauli}^{S^c}} (\delta^{\mathsf{F}})^2  }, \label{eq:randomweight}
\end{align}
where the last equality follows from the fact that $\ket{\psi'}$ on $S^c$ is a product $\ket{0}$'s and Choi states $\ket{\Phi_U}$ and thus the norm $\|\bra{\psi'} (\mathrm{I} \otimes \ket{\Phi_{\mathsf{F}}})\|$ is independent of $\mathsf{F}$.

Recall from~\Cref{claim:gs} that $\delta^{\mathsf{F}}  := \prod_{e \in E} \prod_{v \in V} \big(\frac{\delta_\mathrm{edge}}{\sqrt{4^{|\operatorname{supp}(e)|}-1}} \big)^{|\mathsf{F}|_e} \big(\frac{\delta_\mathrm{vertex}}{\sqrt{4^{|\operatorname{supp}(v)|}-1}} \big)^{|\mathsf{F}|_v} $. So the above expression can be interpreted as the probability that a random Pauli string $\mathsf{F}$, where each edge (vertex) location is independently chosen to be faulty with probability $\mu_\mathrm{edge} = \delta_\mathrm{edge}^2/(1+ \delta_ \mathrm{edge}^2)$ ($\mu_\mathrm{vertex}= \delta_\mathrm{vertex}^2/(1+ \delta_\mathrm{vertex}^2)$), has edge-weight at least $\alpha' |E|$ or vertex-weight at least $\alpha' |V|$. Therefore,~\Cref{eq:randomweight} can be upper bounded with the (multiplicative) Chernoff bound.

\begin{fact}\label{fact:chernoff} Let $X= X_1 + \hdots + X_N $ where $X_i \in \{0,1\}$ are i.i.d. binary random variables with $\mathbb{E}[X_i] \leq \mu$. Let, $\eta > 10$, then $\Pr[ X \geq \eta \mu N ] \leq e^{-\eta\log(\eta)\mu N/3}.$
\end{fact}

We apply the Chernoff bound with two sets of parameters. The first set is to bound the number of faulty edges: mean $\mu_\mathrm{edge} = \delta_\mathrm{edge}^2/(1+ \delta_ \mathrm{edge}^2)$, number of edges $N_\mathrm{edge} \geq |E| -\alpha' |E| > |E|/2$ and deviation $\eta_\mathrm{edge} =\alpha'/\mu_\mathrm{edge} $. The second set is to bound the number of faulty vertices: $\mu_\mathrm{vertex}= \delta_\mathrm{vertex}^2/(1+ \delta_\mathrm{vertex}^2)$, $N_\mathrm{vertex} \geq |V| -\alpha' |V| > |V|/2$ and $\eta_\mathrm{vertex} = \alpha'/\mu_\mathrm{vertex}$. Direct substitution into~\Cref{fact:chernoff} gives

\begin{align}
    \eqref{eq:randomweight} \leq e^{ - \log(\alpha'/\mu_\mathrm{edge}) \alpha' |E|/6 } +  e^{ - \log(\alpha'/\mu_\mathrm{vertex}) \alpha' |V|/6 } .
\end{align}

Combining everything above we obtain
\begin{align}
    \bra{\psi} \Pi \ket{\psi}  &\leq (4^{-m} \delta_\mathrm{vertex})^{- 2\varepsilon |V| } (4^{-m} \delta_\mathrm{edge})^{- 2\varepsilon |E|} (e^{ - \log(\alpha'/\mu_\mathrm{edge}) \alpha' |E|/6 } +  e^{ - \log(\alpha'/\mu_\mathrm{vertex}) \alpha' |V|/6 } ).\\
    & \leq \underbrace{e^{ (4m + \log (1/\delta_\mathrm{vertex}) ) 2\varepsilon |V| } e^{ (4m + \log (1/\delta_\mathrm{edge})) 2\varepsilon |E| }  e^{ - \log(\alpha'/\mu_\mathrm{vertex}) \alpha' |V|/6 } }_{\text{`vertex term'}} + \\
    & \qquad \underbrace{e^{ (4m + \log (1/\delta_\mathrm{vertex}) ) 2\varepsilon |V| } e^{ (4m + \log (1/\delta_\mathrm{edge})) 2\varepsilon |E| } e^{ - \log(\alpha'/\mu_\mathrm{edge}) \alpha' |E|/6 } }_{\text{`edge term'}}.
\end{align}
Assuming
\begin{itemize}
    \item (i) $\delta_\mathrm{edge}, \delta_\mathrm{vertex} \leq e^{-\mathrm{poly}(m, \log |V|)}$,
    \item (ii) $\alpha' >  990 \varepsilon \geq 1/\mathrm{poly}(|V|)$ (see~\Cref{eq:alpha-combin-soundness} for the first inequality, and the second is wlog),
\end{itemize}
let us bound each term separately. For the vertex term, the negative exponent dominates if
\begin{align}
    \frac{1}{12} \log(\alpha'/\mu_\mathrm{vertex}) \alpha' |V| & \geq  \frac{1}{12} \log(1/\delta_\mathrm{vertex}) \alpha' |V|  \tag*{(true by assumption (i))}\\
    & \geq 4 \log (1/\delta_\mathrm{vertex}) \varepsilon |V|  +  4 \log (1/\delta_\mathrm{edge})\varepsilon |E| \label{eq:vertex-upper-bound} \\
    & \geq 2 (4m + \log (1/\delta_\mathrm{vertex}) ) \varepsilon |V|  + 2(4m + \log (1/\delta_\mathrm{edge})) \varepsilon |E| \tag*{(true by assumption (ii))}
\end{align}
For~\Cref{eq:vertex-upper-bound} to hold, it suffices to have
\begin{align}
    \log(1/\delta_\mathrm{vertex}) (\alpha'/48\varepsilon - 1) & \geq  \log(1/\delta_\mathrm{edge}) \cdot |E|/|V|\\
    \text{and it suffices to choose } \delta_\mathrm{vertex} & \leq   (\delta_\mathrm{edge})^{\frac{|E|/|V|}{\alpha' /200 \varepsilon }} .
\end{align}
For the edge term, we similarly require
\begin{align}
     \frac{1}{12} \log(1/\delta_\mathrm{edge}) \alpha' |E|  &\geq 4 \log (1/\delta_\mathrm{vertex}) \varepsilon |V|  +  4 \log (1/\delta_\mathrm{edge})\varepsilon |E|  \\
    \Leftrightarrow \log(1/\delta_\mathrm{edge}) \left(\frac{\alpha'}{48\varepsilon} - 1\right) \frac{|E|}{|V|}& \geq  \log(1/\delta_\mathrm{vertex}) \\
    \text{and we choose } \delta_\mathrm{vertex} & \geq   (\delta_\mathrm{edge})^{\frac{|E|}{|V|} \frac{\alpha'}{200\varepsilon} }.
\end{align}

In summary, under the following conditions:
\begin{align}\label{eq:end-of-combin-proof}
   \delta_\mathrm{edge}, \delta_\mathrm{vertex} &\leq e^{-\mathrm{poly}(m, \log |V|)}\\
    (\delta_\mathrm{edge})^{\frac{|E|}{|V|} \frac{\alpha'}{200\varepsilon} } & \leq \delta_\mathrm{vertex} \leq   (\delta_\mathrm{edge})^{\frac{|E|/|V|}{\alpha' /200 \varepsilon }},\\
    \alpha & >  \max (1000 \varepsilon ,  100/|V|),
\end{align} 
we obtain the upper bound, with $\alpha' = \alpha -  \varepsilon$,
\begin{align}
     \bra{\psi} \Pi \ket{\psi}  &\leq e^{-\frac{1}{12} \log(1/\delta_\mathrm{vertex}) \alpha' |V|}   + e^{-\frac{1}{12} \log(1/\delta_\mathrm{edge}) \alpha' |E|} \\
     & \leq 2 e^{- \min (\log 1/\delta_\mathrm{vertex}, \log 1/\delta_\mathrm{edge}) \alpha |V|/24}\\
     &\leq e^{- \mathrm{poly}(m, \log |V|) \max(\varepsilon|V|, 1)}.
\end{align}
In other words, $\ket{\psi}_\mathrm{out}$ is $e^{-\mathrm{poly}(m, \log |V|) \max(\varepsilon|V|, 1)}$-close in fidelity to a state $\ket{\psi'}$ which is a superposition of states corresponding executions of $\mathcal{P}$ with at most $(\alpha +\varepsilon) |E|$ faulty edges and $(\alpha +\varepsilon) |V|$ faulty vertices.
Picking $\alpha = 1000 \varepsilon = \Theta(1)$ gives the desired result.

\end{proof}

\subsection{A Modified Construction for Handling Postselection Gates}\label{sec:nc-ckt-to-H}
We now introduce a modified construction that can handle protocols with postselection gates. The main motivation is to be able to handle protocols for the non-commuting Hamiltonian measurement circuit. See the discussion at the beginning of~\Cref{sec:nc-protocol}.

Let $M$ be the number of measurement output qubits.
We introduce the following modifications to the circuit-to-Hamiltonian construction from~\Cref{sec:comm-ckt-to-H}.

Given a protocol $\cP_\gamma$, our modified tensor network encoding of $\mathcal{P}_\gamma$ upon input $\ket{\xi}$ is 
\begin{align}
    \left( \bigotimes_{e \in E} Q_e \right)  \left( \bigotimes_{v \in V} Q_v \right) \left( \bigotimes_{b \in [M]} A_b \right) \ket{\Phi_{\mathcal{P}_\gamma,\xi}},
\end{align}
where 
\begin{align}
    A := \sqrt{1-\eta}\ketbra{1}{1} + \sqrt{\eta} \ketbra{0}{0}
\end{align}
can be thought of as a noisy postselection operator that biases the measurement bits towards $1$ when $\eta <1/2$. Our final choice will be $\eta = 1/2 - o(1)$. This small amount of postselection is crucial for our fault tolerance analysis. See~\Cref{remark:why-need-A}.

Associated to this we modify the vertex Hamiltonian terms as follows:
\begin{align}
    h'_v = \operatorname{span} \left( (A^{-1})^{\otimes M_v} h_{v} (A^{-1})^{\otimes M_v} \right),
\end{align}
where $M_v \subset [M]$ is the number of measurement qubits shared in the vertex $v$.

The overall Hamiltonian is
\begin{align}
    H_\mathrm{TN} (\mathcal{P}_\gamma, \delta_\mathrm{vertex}, \delta_\mathrm{edge},\eta) =  \frac{1}{2|E|} \sum_{e \in E}  h_e + \frac{1}{2|V|} \sum_{v \in V} h'_v
\end{align}

\begin{claim}[Ground state of modified Hamiltonian]\label{claim:modified-ground-state}
     The tensor network Hamiltonian $H_\mathrm{TN}(\mathcal{P}_\gamma, \delta_\mathrm{vertex}, \delta_\mathrm{edge}, \eta)$ is frustration-free and has a ground space $\operatorname{span}\{\ket{\Psi_{\mathcal{P}_\gamma,\xi}} : \text{$w$-qubit } \ket{\xi}\}$, with
    \begin{align}
        \ket{\Psi_{\mathcal{P}_\gamma,\xi}} \propto \sum_{ \mathsf{F} \in \mathsf{Pauli}^{\otimes N_\mathrm{EPR}}}  \delta^{\mathsf{F}} \ket{\Phi_{\mathsf{F}}} \otimes  A^{\otimes M} \big((\mathcal{P}_\gamma)_{\mathsf{F}} \ket{\xi} \big)_\mathrm{out} ,
    \end{align}
where $(\mathcal{P}_\gamma)_{\mathsf{F}}$ denotes the noisy implementation of $\mathcal{P}_\gamma$ with Pauli errors $\mathsf{F}$, and
\begin{align}
    \delta^{\mathsf{F}}  := \prod_{e \in E} \prod_{v \in V} \big(\frac{\delta_\mathrm{edge}}{\sqrt{4^{|\operatorname{supp}(e)|}-1}} \big)^{|\mathsf{F}|_e} \big(\frac{\delta_\mathrm{vertex}}{\sqrt{4^{|\operatorname{supp}(v)|}-1}} \big)^{|\mathsf{F}|_v} .
\end{align}
\end{claim}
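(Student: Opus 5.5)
The proof will mirror that of \Cref{claim:gs}, with one extra layer of invertible maps. The key observation is that $A=\sqrt{1-\eta}\ketbra{1}{1}+\sqrt{\eta}\ketbra{0}{0}$ is invertible for $0<\eta<1$. Like the $Q_e,Q_v$, it is a nonsingular linear map, so the whole construction is a similarity transform of a non-interacting frustration-free Hamiltonian. First, consider the unsandwiched terms
\[
g_e := I-\Big(\ketbra{0}{0}^{\mathrm{init}(u,v)}\textstyle\bigotimes_{U\in\mathrm{comp}(u,v)}\ketbra{\Phi_U}{\Phi_U}\Big)\otimes\ketbra{\Phi_I}{\Phi_I}^{\otimes\mathrm{trans}(e)}
\]
and $g_v := I-\ketbra{0}{0}^{\mathrm{init}(v)}\bigotimes_{U\in\mathrm{comp}(v)}\ketbra{\Phi_U}{\Phi_U}$. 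Their common kernel is exactly the span of the product states $\ket{\Phi_{\mathcal P_\gamma,\xi}}$ from \eqref{eq:tensor-network-state}, ranging over all witnesses $\ket{\xi}$. This holds because every non-input qubit is forced into a fixed $\ket{0}$ or Choi state, and the input register is left free.

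Next, set $L:=(\bigotimes_e Q_e)(\bigotimes_v Q_v)(\bigotimes_{b\in[M]}A_b)$, which is invertible. One needs to check that the operators $Q_e$, $Q_v$ and $A_b$ act on pairwise disjoint registers, or at least commute. The $Q$'s act on EPR/gray-box locations, which are disjoint by construction. $A_b$ acts on the physical measurement output qubit $b$, which is the $B$-side output of a Choi state. If that output qubit is itself part of a gray box, one orders $A$ after the $Q$'s as in the definition of the encoding; the claim only needs the stated product. For each term, one verifies that $\ker h_e=\ker(g_e L^{-1})$ and $\ker h'_v=\ker(g_v L^{-1})$. This reduces to showing that a vector $\ket{\phi}$ is annihilated by $\Lambda g\Lambda$, with $\Lambda$ the local inverse factors, iff $g$ annihilates $L^{-1}\ket{\phi}$ locally. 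For $h_e$ this was already established in \Cref{claim:gs}. The only issue is that the $A^{-1}$ factors on measurement qubits inside $u,v$ are not in $h_e$. Since $h_e$ is an edge term and the measurement qubits of $u,v$ enter $g_e$ only through $\mathrm{comp}(u,v)$ Choi states, I would handle this either by noting that $h'_v$ (which includes $A^{-1}$) already pins those qubits, or by checking that $A^{-1}$ commutes past $g_e$ on its support. This bookkeeping is the main point to verify carefully. If it fails, one adds the $A^{-1}$ sandwich to $h_e$ as well; that is harmless for the later analysis.

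Given this, the common kernel of all terms is $L\cdot\operatorname{span}\{\ket{\Phi_{\mathcal P_\gamma,\xi}}\}$, and each term is positive semidefinite, so the Hamiltonian is frustration-free with exactly this ground space. Finally, one expands $L\ket{\Phi_{\mathcal P_\gamma,\xi}}$. Writing each $Q$ as $P+(\text{coefficient})\sum_{P\neq I}\ketbra{\Phi_P}{\Phi_P}$ and contracting as in \Cref{claim:gs} gives $\sum_{\mathsf F}\delta^{\mathsf F}\ket{\Phi_{\mathsf F}}\otimes((\mathcal P_\gamma)_{\mathsf F}\ket{\xi})_{\mathrm{out}}$. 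Because $A^{\otimes M}$ acts only on output measurement qubits, it commutes with the projections onto $\ket{\Phi_{\mathsf F}}$ on the EPR registers. It can therefore be pulled onto the output factor, giving the claimed form up to normalization.
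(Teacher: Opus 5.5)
Your route is the paper's: its proof is the one-line remark that \Cref{claim:gs} goes through with $A^{\otimes M}$ appended to the invertible map. In other words, the ground space is $L\cdot\operatorname{span}\{\ket{\Phi_{\mathcal P_\gamma,\xi}}\}$ with $L=(\bigotimes_e Q_e)(\bigotimes_v Q_v)A^{\otimes M}$. One of your hedges is moot. A measurement qubit $b$ is a final $B$-output of a Choi state and lies in no gray box, so $A_b$ commutes with every $Q$; this is also what lets you move $A^{\otimes M}$ onto the output factor at the end. The other hedge, however, is the only substantive point in the claim, and it cannot be left as a menu, because your first two options are wrong.

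The criterion you need is this: $\ker\bigl(\operatorname{span}(\Lambda g\Lambda)\bigr)=\Lambda^{-1}\ker g$ coincides with $L\ker g$ whenever $L\Lambda$ acts outside $\operatorname{supp}(g)$. That is, each term's sandwich must contain the inverse of every factor of $L$ meeting that term's support. The edge term $h_e$, $e=(u,v)$, checks $\mathrm{comp}(u)$. This includes the Choi state $\ket{\Phi_U}$ of $u$'s final local unencoding gate, whose $B$-output is a measurement qubit $b\in M_u$. Now $(I\otimes A_b)\ket{\Phi_U}=(I\otimes A_bU)\ket{\Phi_I}^{\otimes k}$ is proportional to $\ket{\Phi_U}$ only if $A_b\propto I$, i.e.\ $\eta=1/2$. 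So $A_b$ neither commutes with $g_e$ nor preserves its kernel, and option (b) fails.

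Even granting the $Q$-accounting of \Cref{claim:gs}, the condition $L\ket{\Phi_{\mathcal P_\gamma,\xi}}\in\ker h_e$ reduces to $g_e A^{\otimes(M_u\cup M_v)}\ket{\Phi_{\mathcal P_\gamma,\xi}}=0$. This is false for every $\eta\neq 1/2$, in particular for all of the paper's choices. Option (a) fails for a more basic reason: frustration-freeness needs each term to annihilate the state by itself, so $h'_u$ being satisfied says nothing about $h_e$.

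Your third option is therefore necessary, not merely harmless. The claim holds for the Hamiltonian in which $h_e$ is also sandwiched, $h_e\mapsto\operatorname{span}\bigl((A^{-1})^{\otimes(M_u\cup M_v)}h_e(A^{-1})^{\otimes(M_u\cup M_v)}\bigr)$. The displayed definition modifies only the vertex terms, but this modified Hamiltonian is what the paper's one-line proof implicitly uses. So does the proof of \Cref{prop:modified-combinatorial-soundness}, which passes to the basis where all $A_b$ are undone together with the $Q$'s. With that modification, your kernel computation, the intersection of kernels, and the expansion of $L\ket{\Phi_{\mathcal P_\gamma,\xi}}$ are exactly the paper's argument.

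The same criterion is what you import from \Cref{claim:gs} for the $Q$-factors. There too, a term's sandwich should be read as containing the inverse of every $Q$ whose gray boxes touch the Choi states it checks, including the $Q$'s of other edges incident to $u$ or $v$.
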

\begin{proof}
    Similar to the proof of~\Cref{claim:gs}, but we additionally apply $A$ on the measurement bits.
\end{proof}

\begin{proposition}[Combinatorial Soundness of Modified Hamiltonian]\label{prop:modified-combinatorial-soundness} There exists choices of the injectivity parameters such that $$\delta_\mathrm{edge},\delta_\mathrm{vertex} \leq e^{-   \mathrm{poly}(\frac{m}{\varepsilon}, \log |V|) \log(1/\eta)}$$ such that for any $\varepsilon > 0$ and for any $\varepsilon$-combinatorial state $\ket{\psi}$ of $H_\mathrm{TN}(\mathcal{P}_\gamma, \delta_\mathrm{vertex}, \delta_\mathrm{edge}, \eta)$, the reduced state $\psi_\mathrm{out}$ on the output qubits is
    $$e^{- \mathrm{poly}(\frac{m}{\varepsilon}, \log |V|) \log(1/\eta) \cdot |V|}\text{-close }$$ 
in trace distance to a mixture of outputs of $(\Theta(\varepsilon), \Theta(\eps))$-vertex and edge faulty executions of $\mathcal{P}_\gamma$.\\
And when $\varepsilon=0$ (ground state), the same statement holds with $m/\varepsilon$ in the exponent replaced with $m$.
\end{proposition}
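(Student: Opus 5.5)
We adapt the proof of \Cref{prop:combinatorial-soundness}, treating the postselection layer $A^{\otimes M}$ as one more injective local map alongside the $Q_e,Q_v$. Let $\ket{\psi}$ be $\varepsilon$-combinatorial for $H_\mathrm{TN}(\mathcal{P}_\gamma,\delta_\mathrm{vertex},\delta_\mathrm{edge},\eta)$. We first undo all injective maps, setting
\[
\ket{\psi'}=\Bigl(\bigotimes_{b}A_b\Bigr)^{-1}\Bigl(\bigotimes_e Q_e\Bigr)^{-1}\Bigl(\bigotimes_v Q_v\Bigr)^{-1}\ket{\psi}.
\]
The operators $A_b$ act on measurement output qubits, which are disjoint from the EPR locations where the $Q$'s act, so they commute with the $Q$'s. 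Each satisfied term $h_e$ or $h'_v$ then forces $\ket{\psi'}$ to agree locally with the correct Choi states and $\ket{0}$ initializations, because $h'_v$ is exactly $h_v$ conjugated by $(A^{-1})^{\otimes M_v}$. Hence, as in \eqref{eq:psi'-combin-proof}, $\ket{\psi'}$ lies in the span of Pauli deviations supported on the locations $F\cup W$ of violated terms, with $|F|\le 2\varepsilon|E|m$ and $|W|\le 2\varepsilon|V|m$. Re-applying $Q^{\otimes}A^{\otimes M}$ gives the analogue of \eqref{eq:chi}:
\[
\ket{\psi}=\sum_{\mathsf F}\delta^{\mathsf F}\ket{\Phi_{\mathsf F}}\otimes A^{\otimes M}\Bigl(\sum_{\mathsf G}c_{\mathsf G}(\mathcal P_\gamma)_{\mathsf{GF}}\ket{0^{N_\mathrm{anc}}}\ket{\xi_{\mathsf G}}\Bigr)_\mathrm{out}.
\]
Here $\mathsf{G}$ may also include errors on measurement qubits held at violated vertices; these are absorbed into vertex faults.

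Next we bound $\bra{\psi}\Pi\ket{\psi}$, where $\Pi$ projects onto high-weight $\mathsf F$. As before, we partially undo to obtain $\ket{\chi}$ and write $\ket{\psi}=Q^S A^{S_M}\ket{\chi}$. Here $S_M$ consists of the measurement qubits held by violated vertices. Measurement qubits at satisfied vertices carry their $A$ factor inside $\ket{\chi}$, and this factor commutes with $\Pi'$.

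The only new ingredient is the norm distortion from $A^{S_M}$. Since $\sqrt{\eta}\,I\preceq A\preceq I$ for $\eta\le 1/2$, the ratio step \eqref{eq:psi-Pi-psi} picks up an extra factor of at most
\[
\eta^{-|S_M|}\le \eta^{-2\varepsilon|V|\cdot C_{\max}}.
\]
A violated vertex can hold up to $\max_v|M_v|\le m$ measurement qubits, so this factor is at most $e^{2\varepsilon|V|m\log(1/\eta)}$. There is a complication, however. If the vertex terms whose $A$-support we must undo include those touching faulty-neighborhood vertices, the bookkeeping is cleaner if we enlarge $S$ to a constant-factor neighborhood. This costs only a factor $O(m/\varepsilon)$ in how the loss compares to the gain, and it is the source of the $\mathrm{poly}(m/\varepsilon)$ in the statement.

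The Chernoff computation \eqref{eq:randomweight} is unchanged. The $A$'s act outside the EPR locations and so factor out of the weight ratio, since $\ket{\psi'}$ on $S^c$ is still a product state and is independent of $\mathsf F$.

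The final step is re-balancing the exponents, as in \eqref{eq:end-of-combin-proof}. We need the Chernoff decay $\tfrac1{12}\log(1/\delta)\alpha'|V|$ to dominate the sum of the old loss terms and the new loss $2\varepsilon|V|m\log(1/\eta)$. Choosing
\[
\log(1/\delta_\mathrm{edge}),\ \log(1/\delta_\mathrm{vertex})\ \ge\ \mathrm{poly}(m/\varepsilon,\log|V|)\log(1/\eta)
\]
achieves this, while keeping the same two-sided compatibility window between $\delta_\mathrm{vertex}$ and $\delta_\mathrm{edge}$. Setting $\alpha=1000\varepsilon$ then yields closeness $e^{-\mathrm{poly}(m/\varepsilon,\log|V|)\log(1/\eta)|V|}$ to a mixture of $(\Theta(\varepsilon),\Theta(\varepsilon))$-faulty executions of $\mathcal P_\gamma$ followed by $A^{\otimes M}$.

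When $\varepsilon=0$, no $A$'s need to be undone in the loss term, and only a union bound over the i.i.d.\ noise is needed. The $1/\varepsilon$ therefore disappears, and we take $\alpha=\Theta(1/|V|)$ as in the original proof.

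The main obstacle is controlling the $A^{-1}$ distortion. Unlike $Q$, the operator $A$ is not close to identity (its ratio $\sqrt{\eta}$ is $\Theta(1)$), and it acts on every measurement qubit. We must ensure that only the $A$ factors at violated vertices enter the loss, and not all $M$ of them, since otherwise the factor $\eta^{-M}$ would swamp the Chernoff gain. This is why we commute the $A$'s at satisfied vertices into $\ket{\chi}$ and verify that they commute with $\Pi'$.
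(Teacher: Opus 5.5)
Your overall structure (undo all injective maps, use the structure of $\ket{\psi'}$, bound $\bra{\psi}\Pi\ket{\psi}$, rebalance) matches the paper. There is, however, a genuine gap: the step that localizes the $A$-cost to violated vertices fails.

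\textbf{The gap.} You claim that the factors $A^{S_M^c}$ at satisfied vertices can be folded into $\ket{\chi}$ and leave \eqref{eq:randomweight} unchanged. Commutation with $\Pi'$ is fine, but the Chernoff step needs more: the quantity $\bigl\|A^{S_M^c}(I\otimes\bra{\Phi_{\mathsf F}})\ket{\psi'}\bigr\|$ must be independent of $\mathsf F\in\mathsf{Pauli}^{S^c}$.

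In the original proof this holds because contracting the EPR locations with $\bra{\Phi_{\mathsf F}}$ is, up to a constant, a unitary teleportation. Once a non-unitary $A$ sits on the output qubits, the squared norm becomes $\Tr\bigl[(A^2)^{\otimes S_M^c}\rho^{\mathsf F}_{\mathrm{out}}\bigr]$. Here $\rho^{\mathsf F}_{\mathrm{out}}$ is the output of the execution with faults $\mathsf F$, and it depends on $\mathsf F$ because the faults propagate through the computation to the measurement bits. A single $X$ fault flipping one output bit already changes the weight by a factor $(1-\eta)/\eta$. The product structure of $\ket{\psi'}$ on $S^c$ does not prevent this.

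Consequently, the distribution over $\mathsf F$ encoded in $\ket{\chi}$ is the i.i.d.\ distribution reweighted by some $w(\mathsf F)\in[\eta^{|S_M^c|}c,\,c]$. The heavy-fault probability can therefore be inflated by up to $\eta^{-|S_M^c|}\approx\eta^{-M}$, which is exactly the factor you were trying to avoid. Your $\varepsilon=0$ argument (``no $A$'s need to be undone'') fails for the same reason: the ground state weights branch $\mathsf F$ by $\|A^{\otimes M}(\mathcal P_\gamma)_{\mathsf F}\ket{\xi}\|^2$, so the i.i.d.\ union bound does not apply as stated. The paragraph about enlarging $S$ does not address any of this, and its $O(m/\varepsilon)$ cost is not derived from anything.

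\textbf{The fix.} The global factor is affordable, and that is what the paper does. It writes $\ket{\psi}=A^{\otimes M}\ket{\psi''}$, where $\ket{\psi''}$ is precisely the state analyzed in \Cref{prop:combinatorial-soundness}. Using only $[A^{\otimes M},\Pi]=0$ and $A^2\succeq\eta I$, it gets
\[
\bra{\psi}\Pi\ket{\psi}\le\eta^{-O(M)}\,\frac{\bra{\psi''}\Pi\ket{\psi''}}{\braket{\psi''}{\psi''}}.
\]
Since $M\le|V|m$, this loss is $e^{O(|V|m\log(1/\eta))}$. It is linear in $|V|$, just like the Chernoff gain $e^{-\Omega(\alpha|V|\log(1/\delta))}$. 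Choosing $\log(1/\delta_{\mathrm{edge}}),\log(1/\delta_{\mathrm{vertex}})\gtrsim m\log(1/\eta)/\alpha$ with $\alpha=\Theta(\varepsilon)$ makes the gain dominate.

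This requirement is the real source of the $\mathrm{poly}(m/\varepsilon)$ in the statement. So your ``main obstacle'' (that $\eta^{-M}$ would swamp the Chernoff gain) is not an obstacle once $\delta$ may depend on $1/\varepsilon$. Your final parameter choice already has this form, so replacing the localization step with the global bound repairs the proof.
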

\begin{proof}
The proof is a straightforward modification of the proof of~\Cref{prop:combinatorial-soundness}. 
Again, consider an $\varepsilon$-combinatorial (normalized) state $\ket{\psi}$.

In the (non-orthonormal) basis $\left( \bigotimes_{e \in E} Q_e \right)^{-1}  \left( \bigotimes_{v \in V} Q_v \right)^{-1}  \left( \bigotimes_{b \in [M]} A_b \right)^{-1}$ the state has the form (unnormalized)
\begin{align}
        \ket{\psi'}  = \sum_{\mathsf{G} \in \mathsf{Pauli}^{F \cup W}} c_{\mathsf{G}} \ket{\xi_{\mathsf{G}}}_\mathrm{in} \otimes   \mathsf{G} \left(\bigotimes_{v \in V} \ket{0}^{\mathsf{init}(v)}  \bigotimes_{U} \ket{\Phi_U} \right),
\end{align} 
where $F$ is the set of locations within violated edges and $W$ within violated vertices, with $|F| \leq 2\varepsilon |E| m$ and $|W| \leq 2\varepsilon |V| m$. Thus,
\begin{align}
    \ket{\psi} & =  \left( \bigotimes_{e \in E} Q_e \right)  \left( \bigotimes_{v \in V} Q_v \right)\left( \bigotimes_{b \in [M]} A_b \right) \ket{\psi'} \\ 
    & = \sum_{\mathsf{F} \in \mathsf{Pauli}^{\otimes N_\mathrm{EPR}}}
    \delta^{\mathsf{F}}  \ket{\Phi_{\mathsf{F}}} \bigotimes A^{\otimes M} \left(\sum_{\mathsf{G} \in \mathsf{Pauli}^{F \cup W}} c_{\mathsf{G}}   \mathcal{P}_{\mathsf{G}\mathsf{F}} \ket{0^{N_\mathrm{anc}}}\ket{\xi_{\mathsf{G}}}\right)_\mathrm{out}\\
    &:= A^{\otimes M} \ket{\psi''}.
\end{align}

Again, we would like to upper bound
$ \bra{\psi} \Pi \ket{\psi} $
where
\begin{equation}
    \Pi = \sum_{\mathsf{F} \in \mathsf{Pauli}^{\otimes N_\mathrm{EPR}}: \  |\mathsf{F}|_E \geq \alpha |E| \text{ or } |\mathsf{F}|_V \geq \alpha |V|}   \ketbra{\Phi_{\mathsf{F}}}{\Phi_{\mathsf{F}}},
\end{equation}
and $\alpha$ is to be specified.

Note that the unnormalized state $\ket{\psi''}$ defined above would have played the role of $\ket{\psi}$ in the proof of~\Cref{prop:combinatorial-soundness}, and our now $\ket{\psi}$ is obtained by additionally applying $A^{\otimes M}$ on some output qubits, which we can think of as a noisy postselection. This means that instead of bounding $\bra{\psi''} \Pi \ket{\psi''}$, we need to bound $\bra{\psi''} A^{\otimes M} \Pi A^{\otimes M} \ket{\psi''}$. Using the fact that $A$ commutes with $\Pi$ (they do not overlap) and that $ \eta^{M} \cdot I \prec A \preceq I$
\begin{align}
    \bra{\psi} \Pi \ket{\psi}  = \frac{\bra{\psi''} A^{\otimes M} \Pi A^{\otimes M} \ket{\psi''}}{\bra{\psi''} A^{\otimes M} A^{\otimes M} \ket{\psi''}} \leq (1/\eta)^{2M} \frac{ \bra{\psi''} \Pi \ket{\psi''}}{ \braket{\psi''}}.
\end{align}
The quantity $\frac{ \bra{\psi''} \Pi \ket{\psi''}}{ \braket{\psi''}}$ is exactly the quantity that we bounded in the proof of~\Cref{prop:combinatorial-soundness}. 

In particular, consider first the case $\varepsilon=0$ (ground state). Then we choose $\alpha=0$ and simply perform a union bound
\begin{align}
    \bra{\psi} \Pi \ket{\psi} \leq e^{\log(1/\eta) 2M} (\delta_\mathrm{vertex} |V| + \delta_\mathrm{edge}|E|).
\end{align}
It then suffices to choose $\delta_\mathrm{edge},\delta_\mathrm{vertex} \leq e^{-   \mathrm{poly}(m, \log |V|) \log(1/\eta)}$ to obtain the claimed bound.

Next we consider the nontrivial case $\varepsilon > 0$ and assume wlog $\varepsilon > 1/\mathrm{poly}(|V|)$.
At the end of \Cref{prop:combinatorial-soundness}'s proof (\Cref{eq:end-of-combin-proof}), we had the conclusion that when
\begin{align}
    \alpha & >  \max (1000\varepsilon ,  100/|V|) \\
   \delta_\mathrm{edge}, \delta_\mathrm{vertex} &\leq e^{-\mathrm{poly}(m, \log |V|)}\\
    (\delta_\mathrm{edge})^{\frac{|E|}{|V|} \frac{\alpha'}{200\varepsilon} } & \leq \delta_\mathrm{vertex} \leq   (\delta_\mathrm{edge})^{\frac{|E|/|V|}{\alpha'/200\varepsilon}}, \tag*{where $\alpha' = \alpha - \varepsilon$}
\end{align}
then 
\begin{align}
     \frac{ \bra{\psi''} \Pi \ket{\psi''}}{ \braket{\psi''}} 
     & \leq 2 e^{- \min (\log 1/\delta_\mathrm{vertex}, \log 1/\delta_\mathrm{edge}) \alpha |V|/24}.
\end{align}
Thus, 
\begin{align}
    \bra{\psi} \Pi \ket{\psi}  &\leq 2 (1/ \eta)^{2M} \cdot e^{- \min (\log 1/\delta_\mathrm{vertex}, \log 1/\delta_\mathrm{edge}) \alpha |V|/24}\\
     &\leq 2 e^{2 \log (1/ \eta) |V| m} \cdot e^{- \min (\log 1/\delta_\mathrm{vertex}, \log 1/\delta_\mathrm{edge}) \alpha |V|/24}. \tag*{($M \leq |V| m$)}
\end{align}
We can choose $\log 1/\delta_\mathrm{vertex}, \log 1/\delta_\mathrm{edge} \geq 200 \log(1/\eta) m/\alpha$ to ensure 
\begin{align}
    \bra{\psi} \Pi \ket{\psi}  \leq e^{- \min (\log 1/\delta_\mathrm{vertex}, \log 1/\delta_\mathrm{edge}) \alpha |V|/50}.
\end{align}
In summary, we can choose $\alpha = 1000\varepsilon$, $\delta_\mathrm{edge}, \delta_\mathrm{vertex} \leq e^{-   \mathrm{poly}(\frac{m}{\varepsilon}, \log |V|) \log(1/\eta)}$ (while satisfying the relation between $\delta_\mathrm{edge}$ and $\delta_\mathrm{vertex}$ above) to get
\begin{align}
    \bra{\psi} \Pi \ket{\psi}  \leq e^{- \mathrm{poly}(\frac{m}{\varepsilon}, \log |V|) \log(1/\eta) \cdot |V|}.
\end{align}
\end{proof}

%% file: reduction.tex
\section{The Final Reduction}
\label{sec:final-reduction}
In this section, we combine all the components to obtain our reduction from the input local Hamiltonian $H_\mathrm{in}$ with gap $\epsilon$ to an output 2-local Hamiltonian $H$ with larger alphabet and combinatorial gap $\epsilon \pl n$. We will begin by describing the reduction procedure, and then analyze the resulting Hamiltonian $H$.

We make the following mild assumptions on the input Hamiltonian $H_\mathrm{in}$, although our results hold more generally for $O(\operatorname{loglog} n)$ locality and interaction degree, and more general input alphabet size. Below, the assumption on the fixed constraint set is only required for the non-commuting case, and is used in the detectability lemma.
\begin{assumption}
    Fix constants $k,D=\Theta(1)$ and a fixed set $\mathfrak{h}$ of $k$-qubit projectors that can be specified by a constant number of bits. We assume that $H_\mathrm{in} = \sum_{i=1}^{M} h_i$ on $n$ qubits, such that each $h_i$ is a projector that belongs to the set $\mathfrak{h}$ and only intersects with $D$ other terms in $H_\mathrm{in}$.
\end{assumption}

\begin{mdframed}
\medskip
\noindent \textbf{Gap Amplification Procedure:}
\begin{enumerate}
\item Preprocessing: Given an input Hamiltonian $H_\mathrm{in}$ on $n$ qubits, construct a HDX $X$ from~\Cref{thm:hdx-construction} with $N=O(n)$ $d$-cliques. Then pad $H_\mathrm{in}$ with dummy qubits and constraints to obtain a local Hamiltonian $H'$ on $O(N)$ qubits with $O(N)$ constraints.
\item Compute the sequential measurement circuit $C$ for $H'$ from~\Cref{def:sequential-measurement-circuit-appendix}, such that each layer of $C$ measures $N$ constraints of $H'$.
\item Compute the distributed protocol on $G=(X(1),X(2))$ that implements $C$: For commuting Hamiltonians, we use the protocol $\cP$ from~\Cref{sec:protocol-for-circuit}. For non-commuting Hamiltonians, we use the protocol $\cP_\gamma$ from~\Cref{sec:nc-protocol}.
\item Unweighting: Convert the protocol into a protocol $\cP'$ on a multigraph with unweighted edges. Specifically, each edge $e=(v,v')$ in $G$ is replaced by $k_e$ parallel multi-edges  done by replacing each weighted edge in $X(2)$, where $k_e$ is the number of $d$-cliques $e$ is part of in $X$. On the new multi-graph $G'=(X(1),E')$, the protocol $\cP'$ emulates communications between $v$ and $v'$ by encoding each communication qubit into a $\Theta(k_e)$-qubit linear-distance quantum code with efficient decoding algorithm (e.g.~\cite{dinur2023good}) and send each qubit of the encoding through a different edge between $v$ and $v'$. The receiving vertex decodes the received qubits to recover the communication qubit.
\item Convert the distributed protocol $\cP'$ to a 2-local Hamiltonian using~\Cref{sec:ckt-to-H}. Specifically, for commuting Hamiltonians, we use the construction from~\Cref{sec:comm-ckt-to-H} to obtain $H_\mathrm{TN}(\cP,\delta_\mathrm{vertex}, \delta_\mathrm{edge})$, and for non-commuting Hamiltonians, we use~\Cref{sec:nc-ckt-to-H} to obtain $H_\mathrm{TN}(\cP_\gamma,\delta_\mathrm{vertex}, \delta_\mathrm{edge}, \eta)$.
\item For each vertex $v$ of $X$, introduce a vertex output constraint $h_v$ (either $\Pi_v$ or $\Pi_v^{\tau_\mathrm{out}}$ from~\Cref{def:vertex-threshold-check}) over measurement outputs held by the vertex.
\end{enumerate}
The final Hamiltonian is on the graph $G'$
\begin{align*}
    H = H_\mathrm{TN} + \E_{v \sim X(1)} h_v.
\end{align*}
\end{mdframed}

We now analyze the completeness and combinatorial soundness of the resulting Hamiltonian $H$.

\begin{definition}[Combinatorial energy]
    We say that a Hamiltonian $H$ has combinatorial gap $\epsilon$ if for any state $\ket{\psi}$, if we sample a constraint from $H$ according to its specified distribution, then the probability that $\ket{\psi}$ violates the constraint is at least $\epsilon$.
\end{definition}

\begin{theorem}[Main theorem]\label{thm:combinatorial-soundness-H} There exist a function $\epsilon_{\max}(n)=1/{\plll n}$ such that the following holds. For all $\epsilon_\mathrm{in} >0$ let $\epsilon = \min(\epsilon_\mathrm{in}  \operatorname{polylog} n, \epsilon_{\max}(n))$. There exists a polynomial-time procedure, given a constant-locality, constant-degree Hamiltonian $H_\mathrm{in}$, it produces a $2$-local Hamiltonian $H$ on $O(n)$ qudits of interaction degree $D$ and alphabet size $\Sigma$ such that:
\begin{enumerate}
    \item (Degree and alphabet)
    \begin{itemize}
        \item If $\epsilon \lll 1/\log n$, then $D,\ \log \Sigma = \operatorname{polylog}(n)$.
        \item  If $\epsilon \ggg 1/\log n$, then $D ,\ \log \Sigma = 2^{\operatorname{polyloglog}(n)}$.
    \end{itemize}
    \item (Completeness) If $H_\mathrm{in}$ has zero ground energy, then $H$ also has ground energy at most $2^{-\operatorname{polylog}(n)} $.
    \item (Combinatorial soundness) If $H_\mathrm{in}$ has normalized ground energy $\epsilon_\mathrm{in} $, then $H$ has $\epsilon$ combinatorial energy.
\end{enumerate}
Furthermore, assuming the existence of asymptotically good quantum locally testable codes in the sense of~\Cref{assumption:nice-qLTC}, the above statement holds with $\epsilon_{\max}(n) = \Theta(1)$.

In the special case when $H_\mathrm{in}$ is promised to be commuting, the same statement holds with the improvement that the output Hamiltonian also has zero energy in the completeness case.
\end{theorem}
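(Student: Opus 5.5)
The plan is to assemble the pieces of the reduction in the order the procedure is written and verify the three claimed properties separately. The HDX parameters are fixed first, by regime. In the low-gap regime ($\epsilon\lll 1/\log n$) I take $d$ a large constant and $\lambda=1/\pl n$ in \Cref{thm:hdx-construction}. Each vertex then lies in $q^{O(d^2)}=\pl n$ cliques, and each clique executes $\poly(d)$-size MPC and routing gadgets (\Cref{thm:distributed-FTQC}, \Cref{prop:ft-routing}) over $O(\ell\log n)$ routing rounds, so the per-vertex work $m$, and hence $\log\Sigma$ and the degree of $G'$, is $\pl n$. In the high-gap regime I take $d=\Theta(\mathrm{loglog}^3 n)$, or $d=\Theta(\mathrm{loglog}^2 n)$ under a good qLTC. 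Then $q^{O(d^2)}=2^{\pll n}$, which gives the stated degree and alphabet bound. The unweighting step multiplies the degree only by the clique multiplicities $k_e\le 2^{\pll n}$. The 2-locality of $H$ is immediate because every term of $H_{\mathrm{TN}}$ is a vertex or edge term and every $h_v$ is a vertex term on $G'$.

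For completeness, I take a zero-energy ground state $\ket{\xi_0}$, padded, and use the ground state of \Cref{claim:gs} (resp. \Cref{claim:modified-ground-state}) with input $\encq(\ket{\xi_0})$. This state has zero energy on $H_{\mathrm{TN}}$. By \Cref{prop:combinatorial-soundness} (resp. \Cref{prop:modified-combinatorial-soundness}) with $\varepsilon=0$, its output is $e^{-\pl n}$-close to a mixture of executions with $\Theta(1/|V|)$-sparse faults. I would argue such sparse faults lie below the MPC threshold in every clique, so by \Cref{claim:completeness-commuting} the OR checks have energy $0$ up to this error. In the non-commuting case I instead invoke \Cref{claim:completeness-nc-protocol}: the threshold checks with $\tau_{\mathrm{out}}=\gamma'+\Theta(\epsilon)$ have energy at most $2^{-\pl n}$, using $C_{\min}\ge \pl n$. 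For commuting inputs with OR checks and no injected noise, the state is an exact zero-energy state once the $\delta$-faults are absorbed. I expect a small subtlety here, because the $\delta$-noise is not literally zero, and I would argue that OR checks outside bad cliques see only correct bits.

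For soundness, I fix an arbitrary $\ket{\psi}$ and let $\varepsilon'$ be the fraction of violated terms of $H_{\mathrm{TN}}$. If $\varepsilon'\ge c\epsilon$ we are done, since $H_{\mathrm{TN}}$ carries constant weight. Otherwise \Cref{prop:modified-combinatorial-soundness} shows that $\psi_{\mathrm{out}}$ is exponentially close to a mixture of outputs of $(\Theta(\varepsilon'),\Theta(\varepsilon'))$-faulty executions, each of the tensor-network form $\sum_{\sF}c_\sF A_\eta^{\otimes Md}(\cP_\gamma)_\sF\ket{\phi_\sF}$. The next step converts faulty vertices and multi-edges of $G'$ into a corrupted edge set $\cE\subseteq X(2)$. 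A faulty vertex corrupts all its incident edges, and a weighted edge counts as corrupted only if a constant fraction of its $k_e$ copies fail, by the inner-code decoding in the unweighting step. This gives $\mu(\cE)=O(\varepsilon')$ after Markov, using regularity of the HDX. I then apply \Cref{thm:commuting-main} or \Cref{thm:soundness-nc-protocol}, which require $\mu(\cE)\lll\min(\epsilon_{\mathrm{in}}\pl n,\eps_0)$. Their conclusion is that $\E_v\Tr(h_v\psi)\ge\min(\epsilon_{\mathrm{in}}\pl n,\Omega(1))$. Here $\eps_0=\poly(\delta)=1/\plll n$ in the almost-good qLTC regime, which fixes $\epsilon_{\max}$, and $\Theta(1)$ under good qLTCs. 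Since a violated check has nonzero trace against $\psi$, the combinatorial energy is at least $\min(\varepsilon',\epsilon)$ up to constants.

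The main obstacle I expect is the vertex-fault conversion. \Cref{prop:combinatorial-soundness} yields vertex faults, but the protocol analyses handle only edge faults, so a faulty vertex may lie in many cliques at once. My first attempt is to treat each faulty vertex as corrupting all its edges and to bound the resulting clique corruption with \Cref{lem:concentration-improved} applied with $i=1$. The parameter bookkeeping then has to reconcile the $\log(1/\eta)$ and $m/\varepsilon$ terms in the injectivity parameters with the $2^{-\pl n}$ completeness error. The mixture-versus-tensor-network-state mismatch should be harmless by linearity.
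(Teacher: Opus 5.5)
Your skeleton is the same as the paper's: HDX parameters chosen by regime, the $H_{\mathrm{TN}}$ ground state plus closeness for completeness, conversion of vertex and multi-edge faults into an $O(\varepsilon')$-measure edge set (inner code, Markov, regularity), then the protocol soundness theorems.

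\textbf{The gap: the non-commuting case.} You never fix the readout noise $\gamma$ or the postselection parameter $\eta$. The check you commit to, $\Pi_v^{\tau_{\mathrm{out}}}$ with $\tau_{\mathrm{out}}=\gamma'+\Theta(\epsilon)$, only works when $\epsilon\ge 1/\pl n$.
\begin{itemize}
\item \emph{Completeness.} \Cref{claim:completeness-nc-protocol} gives the $2^{-\pl n}$ bound only if $(\tau_{\mathrm{out}}-\gamma')\sqrt{C_{\min}}\ge\pl n$, i.e.\ $\epsilon\gtrsim\pl n/\sqrt{C_{\min}}$. Knowing $C_{\min}\ge\pl n$ is not enough. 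Take inverse-polynomial $\epsilon_{\mathrm{in}}$ with $d=O(1)$, so $C_{\min}=\pl n$. The margin $\Theta(\epsilon)$ is then far below the $1/\sqrt{C_{\min}}$ fluctuations of an honest vertex's $\gamma'$-biased bits. Honest vertices violate the threshold check with constant probability, and completeness fails.
\item \emph{Soundness.} The threshold half of \Cref{thm:soundness-nc-protocol} is proved only for $\epsilon\ge 1/\pl n$. Its OR half carries the factor $1-(1-\gamma'')C_{\max}$, which is vacuous unless $\gamma\ll 1/C_{\max}$.
\end{itemize}
So the cited results do not give $\E_v\Tr(h_v\psi)\ge\min(\epsilon_{\mathrm{in}}\pl n,\Omega(1))$ without a regime-dependent choice of checks and of $(\gamma,\eta)$, which you have not made.

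\textbf{The paper's fix.} It uses the OR checks $\Pi_v$ with $\gamma=2^{-\pl n}$ and $\eta=\Theta(\gamma^{0.3})$, which is the choice of \Cref{def:gamma'} for this $\gamma$.
\begin{itemize}
\item $\gamma'\lll\gamma^{0.5}$ makes the completeness energy $\gamma' C_{\max}\le 2^{-\pl n}$ in both degree regimes.
\item $1-\gamma''=O(\gamma)$ makes the prefactor in the OR soundness bound $1-o(1)$.
\item The resulting $\log(1/\eta)=\pl n$ is harmless in the injectivity parameters of \Cref{prop:modified-combinatorial-soundness}.
\end{itemize}
Threshold checks with $\gamma=\Theta(1)$ and $\eta\approx 1/2$ are only an optional alternative when $\epsilon_{\mathrm{in}}\ge 1/\pl n$ (\Cref{remark:eta}).

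\textbf{Two smaller points on completeness.}
\begin{itemize}
\item The $\varepsilon=0$ case of \Cref{prop:combinatorial-soundness} and \Cref{prop:modified-combinatorial-soundness} gives closeness to the fault-free execution, via a union bound over the i.i.d.\ $\delta$-faults. It does not give executions with $\Theta(1/|V|)$-sparse faults. So your ``below the MPC threshold in every clique'' step is unnecessary. It would also not suffice: a single fault on a vertex's own readout qubit violates its $\Pi_v$ whether or not its cliques are good.
\item That same observation shows your worry about exact zero energy in the commuting case is justified. The paper only invokes \Cref{claim:completeness-commuting}, which concerns the noiseless $\cP\encq(\ket{\xi_0})$. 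For the actual $\delta$-perturbed ground state of $H_{\mathrm{TN}}$, these tools give the same $2^{-\pl n}$ bound, not exact zero.
\end{itemize}
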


\begin{remark}
We remark that the proof of this theorem sets the friendly postselection parameter $\eta$ to be approximately $1/2$ in the high-gap regime ($\geq 1/\pl n$), while in the low input gap regime we need to pick $\eta \ll \epsilon \pl n$. Note that choosing $\eta  \approx 1/2$ could potentially make a general soundness analysis for our output Hamiltonian more tractable: see the discussion in~\Cref{sec:discussion}.   
\end{remark}

\begin{proof} 

    \medskip\noindent \textbf{Polynomial time:} The only non-trivial part in our reduction is to construct the HDX and the fault-tolerant routing protocol on it in polynomial time. This is guaranteed by~\Cref{thm:hdx-construction} and~\Cref{prop:ft-routing}.

\medskip\noindent \textbf{Degree and alphabet:}  In the low-gap regime $\epsilon \lll 1/\log n$, it suffices to use an HDX with $d=O(1)$ (low corruption regime in the routing protocol~\Cref{prop:ft-routing}) and $\lambda= 1/\pl n$. Whereas, in the high-gap regime $\epsilon \ggg 1/\log n$, it suffices to use $d=O(\operatorname{polyloglog} n)$ and $\lambda= 1/\operatorname{polylog} n$ (high corruption regime~\Cref{prop:ft-routing}). In the HDX from~\Cref{thm:hdx-construction}, each vertex is involved in $ (1/\lambda)^{O(d^2)}$ cliques, and the work complexity of each vertex in the distributed protocol is $m=  (1/\lambda)^{O(d^2)} \pl n$. Thus the Hamiltonian interaction degree is $D = (1/\lambda)^{O(d^2)}$ and log-alphabet size $\log \Sigma =  O(m)$. Plugging in the parameters of $d$ and $\lambda$ for the two regimes gives the claimed bounds on $D$ and $\log \Sigma$.

Let us now move to the completeness and soundness. First, the preprocessing step and the construction of the sequential measurement circuit $C$ only affect the soundness by a constant factor, and thus we will ignore this loss in the analysis below.

Let us consider the general, non-commuting case first.
We first take the following choice of parameters for the construction of $H_\mathrm{TN}(\cP_\gamma, \delta_\mathrm{vertex}, \delta_\mathrm{edge}, \eta)$:
\begin{itemize}
    \item $\gamma = 2^{-\operatorname{polylog}(n)}$, and as a result,
    \item $\eta= \Theta(\gamma^{0.3})= 2^{-\operatorname{polylog}(n)}$ (see~\Cref{def:gamma'}),
    \item $\delta_\mathrm{vertex}, \delta_\mathrm{edge} = 2^{2^{ \operatorname{polyloglog}(n) }/\poly(\epsilon)}$ (see~\Cref{prop:modified-combinatorial-soundness})
    \end{itemize}
Then, we can use the OR vertex checks defined in~\Cref{def:vertex-threshold-check} for the $h_v$ terms.

\begin{remark}\label{remark:eta} The above choice of parameters is made for simplicity and can be relaxed in several ways. First, in the low-gap regime it actually suffices to choose $\gamma \ll \epsilon \pl(n)$ if we sacrifice near-perfect completeness: with this choice of $\gamma$ we can create a completeness-soundness gap by comparing the vertex OR check energies from~\Cref{claim:completeness-nc-protocol} and \Cref{thm:soundness-nc-protocol}. Second, as discussed in~\Cref{sec:soundness-nc-protocol} and particularly~\Cref{thm:soundness-nc-protocol}, in the  high-gap case $\epsilon \geq 1/\pl n$, we can choose $\gamma=\Theta(1)$ and $\eta = 1/2 - \Theta(\epsilon_0/\delta) \approx 1/2$ (see~\Cref{def:gamma'}). In this setting, we instead use the vertex threshold checks in~\Cref{def:vertex-threshold-check} and the vertex threshold energy bounds from~\Cref{claim:completeness-nc-protocol} and~\Cref{thm:soundness-nc-protocol} to also obtain completeness $2^{-\operatorname{polylog}(n)} $ and combinatorial soundness $\epsilon \pl n$ with exactly the same calculations as below.
\end{remark}

\medskip\noindent \textbf{Completeness:} Let $\ket{\xi_0}$ be a ground state of $H_\mathrm{in}$, then
    \Cref{claim:completeness-nc-protocol} asserts that the state
    \begin{align}
        \ket{\Psi_0} \propto A^{\otimes Md} \cP_\gamma  \encq(\ket{\xi_0})
    \end{align}
    has energy on the vertex output check terms
    \begin{align}
        \mathbb{E}_{v} \Tr(h_v \ketbra{\Psi_0}) \leq 2^{-\operatorname{polylog}(n)}.
    \end{align}
    On the other hand, according to~\Cref{claim:modified-ground-state}, the state
    \begin{align}
        \ket{\Psi_{\mathcal{P}_\gamma,\ket{\xi_0}}} \propto \sum_{ \mathsf{F} \in \mathsf{Pauli}^{\otimes N_\mathrm{EPR}}}  \delta^{\mathsf{F}} \ket{\Phi_{\mathsf{F}}} \otimes  A^{\otimes Md} \big((\mathcal{P}_\gamma)_{\mathsf{F}} \ket{\xi_0} \big)_\mathrm{out}
    \end{align}
    is a ground state of $H_\mathrm{TN}(\mathcal{P}_\gamma, \delta_\mathrm{vertex}, \delta_\mathrm{edge}, \eta)$. According to~\Cref{prop:modified-combinatorial-soundness}, the reduced state of $\ket{\Psi_{\mathcal{P}_\gamma,\ket{\xi_0}}}$ on the output qubits is $2^{-\operatorname{polylog}(n)}$-close to $\ket{\Psi_0}$.
    Therefore, the ground energy of $H$ is at most $2^{-\operatorname{polylog}(n)}$.

    \medskip\noindent \textbf{Combinatorial soundness:}
Consider any $\mu$-combinatorial state of $\ket{\psi}$ of $H_\mathrm{TN}(\mathcal{P}_\gamma, \delta_\mathrm{vertex}, \delta_\mathrm{edge}, \eta)$, then~\Cref{prop:modified-combinatorial-soundness} asserts that the reduced state on the measurement bits is $2^{-\operatorname{polylog} n}$-close to a mixture of outputs of $\mu'=\Theta(\mu)$-vertex and edge faulty executions $\{\ket{\Psi_i}\}_i$ of $\mathcal{P}'$ on the multi-graph $G'=(X(1), E')$. Next, we claim that the fault tolerance of $\cP'$ is inherited from that of $\cP_\gamma$ and this will allow us to apply~\Cref{thm:soundness-nc-protocol} to each $\ket{\Psi_i}$. 

For each set of multi-edges $E'(e) \subset E'$ that corresponds to an original edge $e$ in $X(2)$, let $w_e$ be the number of corrupted multi-edges in $E'(e)$. We say $E'(e)$ or $e$ is bad if more than $c k_e$ multi-edges in $E'(e)$ are corrupted, where $c=\Theta(1)$ is the decoding radius of the linear-distance quantum code used in the unweighting step. Recall that $\mu_2(e) \propto k_e$, therefore,
\begin{align}
    \E_{e \sim \mu_2} w_e = \mu'.
\end{align}
A Markov bound then gives
\begin{align}
    \E_{e \sim \mu_2} \mathbf{1}_{E'(e) \text{ is bad}} \leq \frac{\mu'}{c}.
\end{align}
In other words, $\mu'$ edge corruption in $\cP'$ corresponds to $\mu'/c$ edge corruption in $\cP_\gamma$. Also, $\mu'$ vertex corruption in $\cP'$ corresponds directly to $\mu'$ vertex corruption in $\cP_\gamma$ because they have the same vertex set. Because the vertices in our HDX are regular (see~\Cref{thm:hdx-construction}), $\mu'$ vertex corruption corresponds to $2 \mu'$ edge corruption by Markov inequality. Thus, the total corruption in $\cP_\gamma$ is at most $2 \mu' + \mu'/c = O(\mu')$. Therefore, the soundness analysis of $\cP_\gamma$ from~\Cref{thm:soundness-nc-protocol} applies to each $\ket{\Psi_i}$. 

If $\mu \lll \epsilon \pl n $, then~\Cref{thm:soundness-nc-protocol} asserts that for each of the states $\ket{\Psi_i}$,
\begin{align}
        \mathbb{E}_{v} \Tr(h_v \ketbra{\Psi_i}) \geq \epsilon {\pl n}.
\end{align}
As a result
\begin{align}
        \mathbb{E}_{v} \Tr(h_v \ketbra{\psi}) \geq \epsilon {\pl n} .
\end{align}
Thus, the combinatorial energy of $H$ is at least $ \ggg \epsilon \pl n$.

Next, the restriction $\epsilon_{\max}$ follows from the corruption tolerance of our local quantum MPC scheme. As shown in~\Cref{prop:ft-routing}, using existing almost-good quantum locally testable codes can unconditionally tolerate up to $1/\plll n$ fraction of global corruptions. This number becomes a constant assuming the existence of asymptotically good quantum locally testable codes in the sense of~\Cref{assumption:nice-qLTC}. 

Finally, in the commuting Hamiltonian case, we instead use the vanilla distributed protocol from~\Cref{sec:protocol-for-circuit} and tensor network Hamiltonian construction from~\Cref{sec:comm-ckt-to-H}, as well as the completeness from~\Cref{claim:completeness-commuting} and soundness guarantee from~\Cref{thm:commuting-main}. The improvement is that the ground energy is exactly zero in the completeness case by~\Cref{claim:completeness-commuting}.

\end{proof}

%% file: dl-new.tex
\section{A Robust Detectability Lemma for Circuits}\label{sec:DL-new}

In this appendix we prove two strengthenings of the detectability lemma of~\cite{AharonovALV09}. The first, \Cref{lem:ideal-detectability}, differs from the detectability lemma of~\cite{AharonovALV09} primarily in the following way. \cite[Lemma 5.1]{AharonovALV09} strictly speaking only guarantees that, for any Hamiltonian with $>\epsilon_0$ ground energy and a decomposition into constantly many layers of non-overlapping terms, \emph{there exists} one ``special" layer in the layer decomposition of the Hamiltonian such that, if the terms in that ``special" layer are simultaneously measured on any state (excluding terms from other layers), it is likely that at least $\ell$ term violations will be observed, for some $\ell$ depending on $\epsilon_0$. This does not technically say anything about the sequential measurement circuit defined in~\Cref{def:sequential-measurement-circuit-appendix2}, whose measurement statistics we would like to control, because the ``special" layer may not be the first layer, and~\cite[Lemma 5.1]{AharonovALV09} does not say anything about what may happen during a measurement of the ``special" layer when other layers have already been measured beforehand. However, a fairly routine modification of the proof of~\cite[Lemma 5.1]{AharonovALV09} is capable of proving our~\Cref{lem:ideal-detectability} as well. We undertake to do this proof, and we remark a little more fully on the differences between our statement and theirs in~\Cref{rem:DL-vs-AALV}.

The second strengthening of~\cite[Lemma 5.1]{AharonovALV09} that we need is~\Cref{lem:dl-appendix}. This is a \emph{postselection-resistant} version of~\Cref{lem:ideal-detectability}: it guarantees that the sequential measurement circuit of a Hamiltonian with large ground energy will register many violations even when the circuit is executed in the presence of adversarial, non-unitary faults, in the sense made more precise in~\Cref{def:robust-meas-circuit-appendix}. We prove~\Cref{lem:dl-appendix} using a lightcone argument that exploits the limited depth of the sequential measurement circuits that we consider.

\begin{assumption}\label{assumption:DL-Hamiltonian-appendix}
Fix constants \(k,g\) and a fixed finite set \(\mathfrak h\) of \(k\)-qubit projectors, each describable by \(O(1)\) bits. We consider Hamiltonians
\(H=\sum_{i=1}^M h_i\) on \(n\) qubits such that each \(h_i\in\mathfrak h\), and each term \(h_i\) intersects the support of at most \(g=\Theta(1)\) other terms.
\end{assumption}

\begin{definition}[Robust measurement circuits]
\label{def:robust-meas-circuit-appendix}
Consider a quantum circuit \(C\) with \(n\) input qubits and \(m\) measurement outputs. We say that \(C\) is \((\eta,\epsilon,\delta)\)-robust to adversarial faults if the following holds.

For every set \(F\) of circuit locations with \(|F|\le \eta |C|\), the adversary may corrupt the gates at locations in \(F\), and may take a coherent superposition of different fault patterns on different input states, of the form
\[
\sum_{\mathsf F\in \mathsf{PAULI}^F} c_{\mathsf F}\, C_{\mathsf F}\ket{\psi_{\mathsf F}}.
\]
Let \(q\) denote the resulting distribution on the measurement outcomes \(y\in\{0,1\}^m\). Then
\[
\Pr_{y\sim q}[\,|y|\le \epsilon m\,]\le \delta.
\]
\end{definition}

Recall the definition of a sequential measurement circuit,
\begin{definition}[Sequential measurement circuit]\label{def:sequential-measurement-circuit-appendix2}
Let \(H=\sum_{i=1}^M h_i\) satisfy \Cref{assumption:DL-Hamiltonian-appendix}. Since each term intersects only \(O(1)\) other terms, the interaction graph of the terms has bounded degree. Hence the terms can be partitioned into \(L=L(g)=O(1)\) layers
\(H=\sum_{j=1}^L H_j\), where \(H_j=\sum_{a=1}^{m_j} h_{j,a}\) and the terms inside each \(H_j\) have pairwise disjoint supports.

For a layer \(j\) and a bit string \(y_j\in\{0,1\}^{m_j}\), define
\(Q^{(j)}_{y_j}:=\prod_{a=1}^{m_j}(y_{j,a}h_{j,a}+(1-y_{j,a})(I-h_{j,a}))\).
Because the layer terms are disjoint, the family \(\{Q^{(j)}_{y_j}\}_{y_j}\) consists of pairwise orthogonal projections satisfying \(\sum_{y_j}Q^{(j)}_{y_j}=I\). For \(\mathbf y=(y_1,\dots,y_L)\), set \(Q_{\mathbf y}:=Q^{(L)}_{y_L}\cdots Q^{(1)}_{y_1}\).

The \emph{sequential measurement circuit} \(C\) is the circuit that coherently measures every layer \(H_j\) into fresh ancillas, so that on any input state \(\ket{\psi}\),
\[
C\ket{\psi}=\sum_{\mathbf y}\ket{\mathbf y}_A\otimes Q_{\mathbf y}\ket{\psi}_S.
\]
Thus the output distribution is \(\Pr[\mathbf y]=\|Q_{\mathbf y}\ket{\psi}\|^2\).
\end{definition}

\paragraph{Proof strategy and terminology}
Before proving \Cref{lem:ideal-detectability}, it is helpful to recall the overall logic of the detectability-lemma argument of Aharonov, Arad, Landau, and Vazirani~\cite{AharonovALV09}, since the proof below follows the same architecture.

The quantity we ultimately want to bound is the total weight of all measurement outcomes \(\mathbf y=(y_1,\dots,y_L)\) for which every layer reports only a small number of violations, namely the total probability of outcomes with \(|y_j|\le \ell\) for all \(j\). The reason the proof is indirect is that these outcomes are not easy to control directly from the ground-energy assumption \(\lambda_{\min}(H)\ge E_0\). The strategy is therefore to compare two bounds on the same quantity: first, a lower bound coming from the energy \(E_0\), and second, an upper bound obtained by decomposing the post-measurement states into carefully chosen local sectors. Once these two bounds are in place, one rearranges them to conclude that the low-violation outcomes must have very small total weight.

The main combinatorial object used for the upper bound is a \emph{pyramid decomposition}. Fix one layer, say layer \(1\). For each term \(t\) in that layer, one defines its \emph{pyramid} by starting from \(t\) and then growing a connected cluster downward through the later layers: at layer \(j+1\), one includes every term that intersects the part of the cluster already present in layers \(1,\dots,j\). Because the Hamiltonian has bounded locality and bounded interaction degree, every pyramid contains only \(O(1)\) constraints and acts on only \(O(1)\) qubits.

A pyramid decomposition is obtained by choosing a family of layer-1 terms whose corresponding pyramids are pairwise disjoint; these chosen layer-1 terms are the \emph{tops} of the pyramids in the decomposition. Equivalently, if one declares two layer-1 terms to conflict when their grown pyramids intersect, then a pyramid decomposition comes from choosing an independent set in this conflict graph.

The point of introducing pyramids is to separate the operator $Q_{\mathbf y}$ associated with a measurement outcome $\mathbf{y}$ into a component built from constraints lying inside the chosen pyramids and a component built from the remaining constraints. Because of how the pyramids are grown through the layers, the outside factors can be commuted past the inside factors, which allows the key step in the analysis to essentially ignore constraints outside the chosen pyramids and proceed as if only the constraints inside the chosen pyramids existed.

The next step is to understand what can happen \emph{inside a single pyramid}. For a fixed pyramid $P$, let $\mathcal H_P$ denote the Hilbert space of the qubits contained in $P$. One then considers the finite-dimensional algebra generated by the local projectors appearing in that pyramid as operators on $\mathcal H_P$, and decomposes $\mathcal H_P$ into a direct sum of invariant blocks for that algebra. Some of these blocks are completely classical in the sense that every projective term included in $P$ (every `included projection') acts on the block as either \(0\) or \(I\); these are the \(X\)-blocks. The remaining part is the genuinely quantum part, where different projectors do not simultaneously reduce to \(0/1\) values; this is the \(Y\)-space. The crucial fact is that any legal product of included projections contracts any vector inside the \(Y\)-space by a constant factor \(\theta<1\). Intuitively, the \(Y\)-space is where incompatibility between the local constraints is visible, and repeated application of the included projections suppresses amplitude there.

After this single-pyramid analysis, one passes to the full collection of pyramids. An \emph{\(XY\)-sector} is a specification, for each pyramid in the chosen family, of whether we are looking at one of its \(X\)-blocks or at its whole \(Y\)-space. Since different pyramids act on disjoint qubit sets, these local choices define a tensor-product subspace of the full Hilbert space: on each pyramid one picks the chosen local \(X\)-block or the local \(Y\)-space, and outside the pyramids one leaves the state unrestricted. The projector \(P_\nu\) onto a sector \(\nu\) is the tensor product of these local projectors. If \(|\nu|=s\), meaning that exactly \(s\) pyramids are placed in their \(Y\)-space, then the contraction property inside each \(Y\)-pyramid yields an overall suppression factor \(\theta^{2s}\) for the weight of that sector. Thus sectors with many \(Y\)-pyramids are exponentially suppressed.

This explains where the proof is heading. One decomposes each post-measurement state \(Q_{\mathbf y}\ket{\psi}\) into its \(XY\)-sectors and studies two complementary features of a sector \(\nu\). On the one hand, if \(|\nu|\) is large, then the sector is rare (has small norm) because of the \(\theta^{2|\nu|}\) suppression. On the other hand, if \(|\nu|\) is small, then most pyramids are in \(X\)-blocks, and on \(X\)-blocks the layer-1 top projectors behave classically: a top term can contribute energy when measured on \(Q_{\mathbf y}\ket{\psi}\) only if the measurement outcome \(\mathbf y\) explicitly marks that top as violated. Since we are restricting to outcomes \(\mathbf y\) with at most \(\ell\) violations in the first layer (indeed in any layer), an \(X\)-pyramid top can contribute only if it is marked violated by $\mathbf{y}$, which can happen for at most \(\ell\) pyramids, while each \(Y\)-pyramid contributes at most \(1\) to the top-layer energy. Hence the top-layer energy in sector \(\nu\) is at most \(|\nu|+\ell\). Thus the proof gets useful information in both regimes: large sectors are improbable, while small sectors cannot carry much top-layer energy.

The last stage of the argument is to turn this sector analysis into an upper bound on the total energy of all low-violation outcomes. Concretely, one wants to bound
\[
\sum_{\mathbf y:\,|y_j|\le \ell\ \forall j}\bra{\Omega_{\mathbf y}}H\ket{\Omega_{\mathbf y}},
\]
where \(\Omega_{\mathbf y}=Q_{\mathbf y}\ket{\psi}\). For one fixed family of disjoint pyramids, the sector analysis controls the part of this sum coming from the \emph{top terms} of those pyramids, that is,
\[
\sum_{\mathbf y:\,|y_j|\le \ell\ \forall j}\bra{\Omega_{\mathbf y}}H^{\mathrm{top}}_{\mathcal P}\ket{\Omega_{\mathbf y}},
\]
where \(H^{\mathrm{top}}_{\mathcal P}\) is the sum of the corresponding top constraints. A single pyramid family does not cover every term in the layer, so one repeats the construction with constantly many different families of disjoint pyramids until every term in that layer appears as a top in at least one family. Since the operators involved are all positive, the resulting bounds can simply be added, and this yields a bound for the whole first layer.

One then repeats the same reasoning for each later layer. For layer \(j\), one applies the same pyramid-and-sector argument not to the entire circuit from layer \(1\), but to the suffix beginning at layer \(j\). This gives a bound for the contribution of \(H_j\) to the same total energy sum over low-violation outcomes. Summing these bounds over all layers produces an upper bound on the full quantity
\[
\sum_{\mathbf y:\,|y_j|\le \ell\ \forall j}\bra{\Omega_{\mathbf y}}H\ket{\Omega_{\mathbf y}}.
\]

Finally, the ground-energy assumption gives a lower bound on exactly the same quantity, because every nonzero post-measurement state has energy at least \(E_0\). Comparing the lower and upper bounds forces the total probability of the low-violation outcomes to be small. This is the basic logic behind both the original detectability lemma and the variant proved here.

\begin{remark}\label{rem:DL-vs-AALV} We remark how \Cref{lem:ideal-detectability} differs from the original DL setting.
The original detectability lemma of~\cite{AharonovALV09} is formulated as a bound on the norm of a coherent product of layer projectors applied to a state. By contrast, \Cref{lem:ideal-detectability} is tailored to the sequential measurement circuit. The quantity we need to control is the total probability of all measurement outcomes in which every layer reports at most \(\ell\) violated constraints, namely the incoherent sum \(\sum_{\mathbf y:\,|y_j|\le \ell\ \forall j}\|Q_{\mathbf y}\ket{\psi}\|^2\). The proof uses the same pyramid and sector machinery, but applies it directly to this incoherent sum. This has two consequences that are important for us. First, it matches exactly the operational quantity produced by the measurement circuit. Second, it keeps an explicit exponentially small tail in \(\ell\), which is what later allows us to prove robustness against adversarial circuit faults.
\end{remark}

\begin{lemma}[Detectability lemma for sequential measurement circuit]\label{lem:ideal-detectability}
Let \(H\) satisfy \Cref{assumption:DL-Hamiltonian-appendix}, let \(L\) be as in \Cref{def:sequential-measurement-circuit-appendix2}, and let \(C\) be the corresponding sequential measurement circuit. Then there exist constants \(a=a(k,g,\mathfrak h)>0\) and \(c=c(k,g,\mathfrak h)>0\) such that the following holds.

If \(\lambda_{\min}(H)\ge E_0>0\), then for every density matrix \(\rho\) on the \(n\) system qubits and every integer \(\ell\ge 0\),
\[
p_\rho(\ell):=\sum_{\mathbf y:\ |y_j|\le \ell\ \forall j}\Tr(Q_{\mathbf y}\rho\,Q_{\mathbf y}^\dagger)
\le \frac{2^{-c\ell}}{E_0-a\ell},
\]
whenever \(E_0>a\ell\). In particular,
\[
\Pr_{y\sim C(\rho)}[\,|y|\le \ell\,]\le \frac{2^{-c\ell}}{E_0-a\ell}.
\]
\end{lemma}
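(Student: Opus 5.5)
By linearity it suffices to treat a pure state $\ket{\psi}$. For every outcome string $\mathbf y$ write $\Omega_{\mathbf y}:=Q_{\mathbf y}\ket{\psi}$, and let $\mathcal Y_\ell:=\{\mathbf y:|y_j|\le\ell\ \forall j\}$. The plan follows the outline given before the lemma: we compare a lower and an upper bound on the single quantity
\[
S:=\sum_{\mathbf y\in\mathcal Y_\ell}\bra{\Omega_{\mathbf y}}H\ket{\Omega_{\mathbf y}}.
\]

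\emph{Lower bound.} Since $\lambda_{\min}(H)\ge E_0$, every nonzero $\Omega_{\mathbf y}$ contributes at least $E_0\|\Omega_{\mathbf y}\|^2$. Summing gives $S\ge E_0\,p_\psi(\ell)$.

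\emph{Upper bound, layer by layer.} Write $S=\sum_{j}S_j$, where $S_j$ is the corresponding sum with $H$ replaced by $H_j$. Fix a layer $j$ and consider the suffix of layers $j,j+1,\dots,L$, together with the prefix layers $1,\dots,j-1$ that precede $j$ in $Q_{\mathbf y}$.

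\begin{enumerate}
\item \emph{Pyramids.} For each term $t$ of $H_j$, grow its pyramid by cyclically adding, at each subsequent layer, every term that touches the part of the cluster already built. Bounded $k$ and $g$ make each pyramid of size $O(1)$. A greedy colouring of the conflict graph (two tops conflict when their pyramids intersect) yields $r=O(1)$ families $\mathcal P_1,\dots,\mathcal P_r$ of pairwise disjoint pyramids whose tops together cover $H_j$.
\item \emph{Commuting outside factors.} For a fixed family $\mathcal P$, the projectors outside the pyramids are commuted past the inside projectors. By the way pyramids are grown, an outside factor never overlaps an inside factor that comes after it.
\item \emph{Single-pyramid algebra.} Inside one pyramid, decompose $\mathcal H_P$ into $X$-blocks, on which every included projector acts as $0$ or $I$, and the $Y$-space. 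The key fact is a contraction $\|\Pi\,v\|\le\theta\|v\|$ with $\theta<1$ for every legal product $\Pi$ of included projections and every $v$ in the $Y$-space. Finiteness of $\mathfrak h$ makes $\theta$ uniform: there are only finitely many pyramid types, each with a finite-dimensional algebra.
\item \emph{Sector decomposition.} Define the $XY$-sector projectors $P_\nu$ and split $\Omega_{\mathbf y}=\sum_\nu P_\nu\Omega_{\mathbf y}$. These pieces are orthogonal and $H^{\rm top}_{\mathcal P}$ is block diagonal in them. In sector $\nu$, an $X$-pyramid top has energy $1$ only if $\mathbf y$ marks it as violated, and this happens at most $\ell$ times; each $Y$-pyramid contributes at most $1$. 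Hence the top energy in sector $\nu$ is at most $(|\nu|+\ell)\|P_\nu\Omega_{\mathbf y}\|^2$.
\item \emph{Summing over sectors.} Sum over $\mathbf y$ and over $\nu$. Completeness $\sum_{\mathbf y}Q_{\mathbf y}^\dagger Q_{\mathbf y}=I$ handles the outcomes, and the $\theta^{2|\nu|}$ suppression bounds the total weight of sectors of size $s$ by $\binom{|\mathcal P|}{s}\theta^{2s}$-type terms. Together these should give
\[
\sum_{\mathbf y\in\mathcal Y_\ell}\bra{\Omega_{\mathbf y}}H^{\rm top}_{\mathcal P}\ket{\Omega_{\mathbf y}}\le a'\ell\,p_\psi(\ell)+2^{-c\ell}.
\]
The split is between sectors with $|\nu|\lesssim\ell$, which contribute at most $O(\ell)$ times their weight, and sectors with large $|\nu|$, whose total weight decays like $\theta^{2|\nu|}$ and which we count only against the tail.
\end{enumerate}

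Summing over the $r$ families and the $L$ layers gives $S\le a\ell\,p_\psi(\ell)+2^{-c\ell}$ after absorbing constants. Combining with the lower bound, $E_0\,p\le a\ell\,p+2^{-c\ell}$, which rearranges to the claim whenever $E_0>a\ell$.

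\emph{Main obstacle.} I expect the hard step to be item 5: obtaining a tail that is genuinely exponentially small in $\ell$, not merely $O(\theta^{2\ell})$ multiplied by a combinatorial factor that grows with the number of pyramids. The natural worry is that the count of sectors with $|\nu|=s$ could overwhelm the suppression. I would first try to avoid counting sectors at all, and instead bound the total energy of the $Y$-part by a generating-function, or Markov-type, argument on the random variable $|\nu|$ under the weights $\|P_\nu\Omega_{\mathbf y}\|^2$. The point would be that a product contraction gives $\mathbb E[\theta^{-|\nu|}]=O(1)$, and then $\Pr[|\nu|>\ell]\le\theta^{\ell}$. A secondary difficulty is making the commutation in item 2 legitimate inside an incoherent sum over $\mathbf y$ rather than inside a single coherent product.
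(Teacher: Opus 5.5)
\textbf{Verdict: genuine gap in item 5.} Your outline (energy lower bound, pyramids with the inside/outside factorization, $XY$-sectors with the $\theta$-contraction, the $(|\nu|+\ell)$ top-energy bound, $O(1)$ pyramid families per layer, suffix circuits for later layers) is the paper's. However, item~5 is the only step that produces the $2^{-c\ell}$ tail, and it is missing its key idea; the route you sketch there does not work.

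The suppression holds outcome by outcome. Writing $Q_{\mathbf y}=A_{y_{\mathrm{in}}}B_{y_{\mathrm{out}}}$ and using that $P_\nu$ commutes with $A_{y_{\mathrm{in}}}$, one gets $\|P_\nu\Omega_{\mathbf y}\|^2\le\theta^{2|\nu|}\|P_\nu B_{y_{\mathrm{out}}}\ket{\psi}\|^2$. The right-hand side does not depend on $y_{\mathrm{in}}$. If you then sum over outcomes using completeness, as item~5 proposes, the suppression cancels exactly: $\sum_{y_{\mathrm{in}}}A_{y_{\mathrm{in}}}^\dagger A_{y_{\mathrm{in}}}=I$ gives $\sum_{y_{\mathrm{in}}}\|P_\nu A_{y_{\mathrm{in}}}B_{y_{\mathrm{out}}}\ket{\psi}\|^2=\|P_\nu B_{y_{\mathrm{out}}}\ket{\psi}\|^2$.

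The missing idea is to use the restriction $\mathbf y\in\mathcal Y_\ell$ a second time, to \emph{count} inside patterns. Fix $\nu$ with $|\nu|=s$ and fix $y_{\mathrm{out}}$. The inside bits on the $X$-pyramids are forced by $\nu$, since any other pattern gives $P_\nu A_{y_{\mathrm{in}}}=0$. The $Y$-pyramids carry at most $Rs$ bits per layer, where $R=O(1)$ bounds the inside terms per layer of one pyramid, and at most $\ell$ of those bits can equal $1$. Hence at most $N_{s,\ell}=\bigl(\sum_{t\le\ell}\binom{Rs}{t}\bigr)^L$ inside patterns contribute, and
\[
\sum_{\mathbf y\in\mathcal Y_\ell}\|P_\nu\Omega_{\mathbf y}\|^2\le N_{s,\ell}\,\theta^{2s}\sum_{y_{\mathrm{out}}}\|P_\nu B_{y_{\mathrm{out}}}\ket{\psi}\|^2 .
\]
Completeness is then applied only to the outside operators: $\sum_\nu\sum_{y_{\mathrm{out}}}\|P_\nu B_{y_{\mathrm{out}}}\ket{\psi}\|^2=1$. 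For $s\ge r_0\ell$ with $r_0$ a large constant, $(s+\ell)N_{s,\ell}\theta^{2s}=O(2^{-c_0 s})$, and this yields exactly your target bound on the top energy. Use $\sum_{t\le\ell}\binom{Rs}{t}\le(eRs/\ell)^{\ell}$ here; a cruder estimate of the form $(\mathrm{const}\cdot s)^{\ell}$ does not beat $\theta^{2s}$ uniformly over $s\ge r_0\ell$ as $\ell$ grows.

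\textbf{Your identified obstacle and fix.} The obstacle you single out is not the real one. Sectors never need counting, because the $P_\nu$ are orthogonal with $\sum_\nu P_\nu=I$, so no $\binom{|\mathcal P|}{s}$ factor ever appears. Your proposed fix, $\mathbb E[\theta^{-|\nu|}]=O(1)$, is not available either.
\begin{itemize}
\item Contraction alone says nothing about how the input is spread over sectors. With unrestricted weights the sector weights are $\sum_{y_{\mathrm{out}}}\|P_\nu B_{y_{\mathrm{out}}}\ket{\psi}\|^2$, which can sit on sectors with $|\nu|$ of order $|\mathcal P|$, so the claimed bound fails.
\item With weights restricted to $\mathcal Y_\ell$, bounding this expectation needs precisely the pattern count above and gives only $2^{O(\ell)}$. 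That suffices for a Markov-type argument, but only after the counting is done.
\end{itemize}

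\textbf{Two smaller points.}
\begin{itemize}
\item $Q_{\mathbf y}=A_{y_{\mathrm{in}}}B_{y_{\mathrm{out}}}$ is an exact operator identity for each fixed $\mathbf y$, so there is no extra difficulty in using it inside the incoherent sum.
\item For layer $j$ the pyramids must be grown only through layers $j+1,\dots,L$. The prefix is absorbed into the unnormalized input $Q^{(j-1)}_{y_{j-1}}\cdots Q^{(1)}_{y_1}\ket{\psi}$ and then summed over prefixes. If ``cyclically'' means wrapping around into layers $1,\dots,j-1$, the commutation in item~2 breaks: a wrapped-around prefix term may overlap an outside term of a later layer that must be moved past it.
\end{itemize}
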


\begin{proof}
By linearity in \(\rho\), it suffices to consider a pure state \(\rho=\ket{\psi}\!\bra{\psi}\). Write \(\ket{\Omega_{\mathbf y}}:=Q_{\mathbf y}\ket{\psi}\) and
\[
x^2(\ell):=\sum_{\mathbf y:\ |y_j|\le \ell\ \forall j}\|\Omega_{\mathbf y}\|^2.
\]

\medskip\noindent\textbf{Step 1: Energy lower bound.}
Since every nonzero normalized post-measurement state \(\ket{\Omega_{\mathbf y}}/\|\Omega_{\mathbf y}\|\) has energy at least \(E_0\), we have \(\bra{\Omega_{\mathbf y}}H\ket{\Omega_{\mathbf y}}\ge E_0\|\Omega_{\mathbf y}\|^2\). Summing over all \(\mathbf y\) with \(|y_j|\le \ell\) for every \(j\) gives
\[
E_0\,x^2(\ell)\le \sum_{\mathbf y:\ |y_j|\le \ell\ \forall j}\bra{\Omega_{\mathbf y}}H\ket{\Omega_{\mathbf y}}.
\]

\medskip\noindent\textbf{Step 2: Pyramids and the inside/outside factorization.}
We now upper bound the right-hand side. Fix a family \(\mathcal P\) of pairwise disjoint pyramids built from layer-1 terms as follows. Starting from a chosen set of pairwise disjoint layer-1 terms, called the \emph{tops}, grow each top downward through the layers by including, at layer \(j+1\), every term of \(H_{j+1}\) that intersects a term already included in layers \(1,\dots,j\). Since \(L\), \(k\), and the interaction degree are all \(O(1)\), each pyramid contains only \(O(1)\) terms and acts on only \(O(1)\) qubits.

For each layer \(j\), every term of \(H_j\) is either inside some pyramid in \(\mathcal P\) or outside all of them. Thus for every outcome \(y_j\), we may factor \(Q^{(j)}_{y_j}=Q^{(j)}_{y_{j,\mathrm{in}}}Q^{(j)}_{y_{j,\mathrm{out}}}\), where the first factor is the product over inside-pyramid terms and the second is the product over outside terms. By construction of pyramids, if \(j'\le j\), then every term in \(Q^{(j')}_{y_{j',\mathrm{in}}}\) is disjoint from every term in \(Q^{(j)}_{y_{j,\mathrm{out}}}\), so these operators commute. Therefore
\[
Q_{\mathbf y}=A_{y_{\mathrm{in}}}B_{y_{\mathrm{out}}},
\]
where
\[
A_{y_{\mathrm{in}}}:=Q^{(L)}_{y_{L,\mathrm{in}}}\cdots Q^{(1)}_{y_{1,\mathrm{in}}}
\qquad\text{and}\qquad
B_{y_{\mathrm{out}}}:=Q^{(L)}_{y_{L,\mathrm{out}}}\cdots Q^{(1)}_{y_{1,\mathrm{out}}}.
\]
Also, by peeling off one layer at a time, \(\sum_{y_{\mathrm{in}}}A_{y_{\mathrm{in}}}^\dagger A_{y_{\mathrm{in}}}=I\) and \(\sum_{y_{\mathrm{out}}}B_{y_{\mathrm{out}}}^\dagger B_{y_{\mathrm{out}}}=I\).

\medskip\noindent\textbf{Step 3: The \(XY\)-decomposition and contraction on the \(Y\)-space.}
Fix one pyramid \(P\in\mathcal P\). Let \(\mathcal{H}_P\) denote the Hilbert space of the qubits acted on by projectors in \(P\). Let \(\mathcal A_P\) be the finite-dimensional algebra generated by the local projectors appearing in \(P\) as operators on the Hilbert space \(\mathcal{H}_P\). Decompose \(\mathcal{H}_P\) into irreducible common invariant blocks for \(\mathcal A_P\), say \(\mathcal H_P=\bigoplus_\alpha V_{P,\alpha}\). Call a block \(V_{P,\alpha}\) an \(X\)-block if every generator of \(\mathcal A_P\) acts on it as either \(0\) or \(I\); let \(Y_P\) be the orthogonal sum of all remaining blocks. Thus \(\mathcal H_P=X_P\oplus Y_P\), and every projector appearing in $P$ is block-diagonal with respect to this decomposition. (Under this decomposition, $X_P$ contains all the common eigenvectors of $\mathcal{A}_P$, and $Y_P$ contains no such common eigenvectors.)

The number of possible pyramids is finite and each pyramid Hilbert space $\mathcal{H}_P$ has constant dimension. As such, there exists a constant \(\theta\in(0,1)\), depending only on \(k,g,\mathfrak h\), such that for every pyramid \(P\) and every possible product of the form \(W=T_L\cdots T_1\), where each \(T_j\) is a product that chooses (for each term $h_{j,a}$ in layer $j$ of the pyramid $P$) either $h_{j,a}$ or $I - h_{j,a}$ and multiplies the choices, one has
\[\|\Pi_{Y_P}W\Pi_{Y_P}\|\le \theta.\]

Indeed, if this norm were \(1\), then some unit vector \(v\in Y_P\) would satisfy \(\|Wv\|=1\). Since each \(h_{j,a}\) (or $I - h_{j,a}$) is an orthogonal projection, norm preservation through the product forces every factor in $W$ to fix \(v\). This implies that every $h_{j,a}$ in the pyramid $P$ acts on \(v\) as either \(0\) or \(I\), contradicting \(v\in Y_P\).

An \(XY\)-sector \(\nu\) chooses, independently for every pyramid \(P\in\mathcal P\), either one \(X\)-block of \(P\) or the whole space \(Y_P\). Let \(P_\nu\) denote the corresponding product projector (that projects on $X_P$ whenever $\nu$ chose the $X$-block for pyramid $P$ and $Y_P$ otherwise---the order of the product does not matter because all pyramids in $\mathcal{P}$ are disjoint), and let \(|\nu|\) be the number of pyramids in which \(\nu\) chose the \(Y\)-sector.

\medskip\noindent\textbf{Step 4: Sector-wise decay.}
Fix \(\mathbf y\) and \(\nu\). Every inside-pyramid factor is block-diagonal with respect to the \(X/Y\)-decomposition, so
\(P_\nu A_{y_{\mathrm{in}}}=A_{y_{\mathrm{in}}}P_\nu=P_\nu A_{y_{\mathrm{in}}}P_\nu\).
On an \(X\)-block, the restriction of \(A_{y_{\mathrm{in}}}\) to that block is either \(0\) or \(I\), while on a \(Y\)-block its norm is at most \(\theta\). Therefore, if on any pyramid where $\nu$ chose an $X$-block, $y_{\mathrm{in}}$ does not match the fixed outcome pattern for the $X$-block $\nu$ chose, $\|P_\nu A_{y_{\mathrm{in}}}P_\nu\| = 0$. Otherwise, we have the inequality
\[
\|P_\nu A_{y_{\mathrm{in}}}P_\nu\|\le \theta^{|\nu|}.
\]
Since \(Q_{\mathbf y}=A_{y_{\mathrm{in}}}B_{y_{\mathrm{out}}}\), it follows that
\[
\|P_\nu\Omega_{\mathbf y}\|^2\le \,\theta^{2|\nu|}\,\|P_\nu B_{y_{\mathrm{out}}}\ket{\psi}\|^2.
\]

\medskip\noindent\textbf{Step 5: Sector-wise energy upper bound.}
Let \(H^{\mathrm{top}}_{\mathcal P}\) be the sum of the layer-1 top terms of the pyramids in \(\mathcal P\). It commutes with every \(P_\nu\). For fixed \(\mathbf y\) and \(\nu\), every \(Y\)-pyramid contributes at most \(1\) to \(H^{\mathrm{top}}_{\mathcal P}\), so all \(Y\)-pyramids together contribute at most \(|\nu|\). For an \(X\)-pyramid with top term \(h_{\mathrm{top}}\), if the corresponding bit of \(y_1\) equals \(0\), then the rightmost inside-pyramid factor in \(A_{y_{\mathrm{in}}}\) for that pyramid is \(I-h_{\mathrm{top}}\), hence \(h_{\mathrm{top}}P_\nu A_{y_{\mathrm{in}}}=0\). Thus an \(X\)-pyramid can contribute only if its top term is violated in \(y_1\), and there are at most \(|y_1|\le \ell\) such pyramids. Therefore
\[
\bra{\Omega_{\mathbf y}}P_\nu H^{\mathrm{top}}_{\mathcal P}P_\nu\ket{\Omega_{\mathbf y}}
\le (|\nu|+\ell)\,\|P_\nu\Omega_{\mathbf y}\|^2.
\]

\medskip\noindent\textbf{Step 6: Summing over inside patterns and sectors.}
Let \(R=R(k,g)\) be a constant upper bound on the number of inside-pyramid terms in one layer of one pyramid. Fix a sector \(\nu\) with \(|\nu|=s\) and an outside outcome pattern \(y_{\mathrm{out}}\). Compatibility with \(\nu\) uniquely determines every inside outcome bit on the \(X\)-pyramids; these fixed bits may include violations. Only the outcome bits on the \(s\) many \(Y\)-pyramids remain free. In each layer, there are at most \(Rs\) such bits, of which at most \(\ell\) can equal \(1\). Violations already specified by the \(X\)-pyramids or by \(y_{\mathrm{out}}\) can only reduce this allowance. Hence the number of compatible inside patterns satisfying \(|y_j|\le\ell\) for all (j) is at most
\[
N_{s,\ell}:=
\left(\sum_{t=0}^{\min{\ell,Rs}}\binom{Rs}{t}\right)^L \le (C_0s)^{L\ell}
\]
for a constant \(C_0=C_0(k,g)\).

Define
\[
\Lambda_\nu^2:=\sum_{\mathbf y:\ |y_j|\le \ell\ \forall j}\|P_\nu\Omega_{\mathbf y}\|^2.
\]
Summing the sector decay bound over all compatible \(y_{\mathrm{in}}\) gives
\[
\Lambda_\nu^2\le N_{s,\ell}\,\theta^{2s}\sum_{y_{\mathrm{out}}}\|P_\nu B_{y_{\mathrm{out}}}\ket{\psi}\|^2.
\]
Now define \(\Lambda_s^2:=\sum_{|\nu|=s}\Lambda_\nu^2\) and
\(\eta_s^2:=\sum_{|\nu|=s}\sum_{y_{\mathrm{out}}}\|P_\nu B_{y_{\mathrm{out}}}\ket{\psi}\|^2\).
Then \(\Lambda_s^2\le N_{s,\ell}\theta^{2s}\eta_s^2\). Since the $XY$-sectors form an orthogonal decomposition of the full Hilbert space (i.e.~they form a projective measurement), $\sum_\nu P_\nu = I$. Then we have
\begin{align}
\sum_{s\ge0}\eta_s^2
&=\sum_{y_{\mathrm{out}}}\sum_\nu
\|P_\nu B_{y_{\mathrm{out}}}\psi\|^2\\
&=\sum_{y_{\mathrm{out}}}\|B_{y_{\mathrm{out}}}\psi\|^2\\
&=\left\langle\psi\left|
\sum_{y_{\mathrm{out}}}B_{y_{\mathrm{out}}}^\dagger
B_{y_{\mathrm{out}}}
\right|\psi\right\rangle\\
&=\langle\psi|\psi\rangle=1.
\end{align}

Since \(\log N_{s,\ell}=O(\ell\log s)\), there exist constants \(r_0=r_0(k,g,\mathfrak h)\) and \(c_0=c_0(k,g,\mathfrak h)>0\) such that for all \(s>r_0\ell\), one has \((s+\ell)N_{s,\ell}\theta^{2s}\le 2^{-2c_0s}\). Therefore
\begin{align*}
\sum_{\mathbf y:\ |y_j|\le \ell}\bra{\Omega_{\mathbf y}}H^{\mathrm{top}}_{\mathcal P}\ket{\Omega_{\mathbf y}}
&=\sum_{s\ge 0}\sum_{\mathbf y:\ |y_j|\le \ell}\bra{\Omega_{\mathbf y}}P_sH^{\mathrm{top}}_{\mathcal P}P_s\ket{\Omega_{\mathbf y}}\\
&\le \sum_{s\ge 0}(s+\ell)\Lambda_s^2\\
&\le (r_0+1)\ell\sum_{s\le r_0\ell}\Lambda_s^2+\sum_{s>r_0\ell}2^{-2c_0s}\eta_s^2\\
&\le (r_0+1)\ell\,x^2(\ell)+2^{-c_0\ell}.
\end{align*}

\medskip\noindent\textbf{Step 7: Covering one layer with constantly many pyramid families.}
We next cover all terms in one fixed layer. For a layer-1 term \(t\), let \(P(t)\) be the pyramid grown from \(t\). Two layer-1 terms conflict if \(P(t)\cap P(t')\neq\emptyset\). Since each pyramid contains only \(O(1)\) terms and each term intersects only \(O(1)\) other terms, the conflict graph on the layer-1 terms has bounded degree. Hence it can be colored using \(T=T(k,g)=O(1)\) colors. For each color class, the corresponding pyramids are pairwise disjoint, so the previous argument applies. Since every layer-1 term is the top of its own pyramid in exactly one color class, summing over the \(T\) color classes yields constants \(a_1,b_1>0\), depending only on \(k,g,\mathfrak h\), such that
\[
\sum_{\mathbf y:\ |y_j|\le \ell}\bra{\Omega_{\mathbf y}}H_1\ket{\Omega_{\mathbf y}}
\le a_1\ell\,x^2(\ell)+b_1\,2^{-c_0\ell}.
\]

\medskip\noindent\textbf{Step 8: Extending the same bound to every layer.}
The same estimate holds for every layer. Fix \(j\in\{1,\dots,L\}\). For every prefix outcome \(y_{<j}=(y_1,\dots,y_{j-1})\), define the unnormalized state
\(\ket{\psi_{y_{<j}}}:=Q^{(j-1)}_{y_{j-1}}\cdots Q^{(1)}_{y_1}\ket{\psi}\).
Applying the preceding layer-1 estimate to the truncated circuit consisting only of layers \(j,j+1,\dots,L\), with input \(\ket{\psi_{y_{<j}}}\), and then summing over all prefixes with \(|y_t|\le \ell\) for \(t<j\), gives
\[
\sum_{\mathbf y:\ |y_t|\le \ell\ \forall t}\bra{\Omega_{\mathbf y}}H_j\ket{\Omega_{\mathbf y}}
\le a_1\ell\,x^2(\ell)+b_1\,2^{-c_0\ell}.
\]
Summing over \(j=1,\dots,L\), we obtain constants \(a:=La_1\) and \(b:=Lb_1\) such that
\[
\sum_{\mathbf y:\ |y_j|\le \ell\ \forall j}\bra{\Omega_{\mathbf y}}H\ket{\Omega_{\mathbf y}}
\le a\ell\,x^2(\ell)+b\,2^{-c_0\ell}.
\]

\medskip\noindent\textbf{Step 9: Final rearrangement.}
Combining this with the lower bound from Step 1, we get
\(E_0x^2(\ell)\le a\ell\,x^2(\ell)+b\,2^{-c_0\ell}\).
Hence, whenever \(E_0>a\ell\),
\[
x^2(\ell)\le \frac{b\,2^{-c_0\ell}}{E_0-a\ell}.
\]
Since \(b\) is an absolute constant, we may decrease the exponent once and write
\[
x^2(\ell)\le \frac{2^{-c\ell}}{E_0-a\ell}
\]
for some \(c=c(k,g,\mathfrak h)>0\), as claimed.
\end{proof}

\subsection{Robust version}
In this section, we undertake to prove~\Cref{lem:dl-appendix}. We firstly prove three utility lemmas (\Cref{lem:zero-outcome-bound}, \Cref{lem:reset-locality}, and \Cref{lem:backward-fault-localization}). We then use these to prove~\Cref{lem:dl-appendix}. The main idea in this section is to use the low depth of the sequential measurement circuit to control the propagation of faults using a lightcone argument, and to observe that the lightcone argument applies even when the faults could be non-unitary coherent sums of unitary faults. The lightcone argument allows us to define a circumscribed set of output measurement bits in the sequential measurement circuit that could be corrupted, and apply~\Cref{lem:ideal-detectability} to reason about the uncorrupted measurement outcomes.

\begin{lemma}[Zero-outcome bound for the uncorrupted sequential measurement circuit]\label{lem:zero-outcome-bound}
Let \(H\) satisfy \Cref{assumption:DL-Hamiltonian-appendix}, and let \(C\) be its sequential measurement circuit. There exists a constant \(\kappa=\kappa(k,g,\mathfrak h)>0\) such that if \(\lambda_{\min}(H)\ge E_0>0\), then for every input state \(\rho\),
\[
\Pr_{y\sim C(\rho)}[y=0^M]\le \frac{1}{1+\kappa E_0}.
\]
Consequently, for every constant \(R_0>0\) there exists \(\kappa'=\kappa'(R_0,k,g,\mathfrak h)>0\) such that whenever \(0<E_0\le R_0\), one has \(\Pr_{y\sim C(\rho)}[y=0^M]\le 2^{-\kappa' E_0}\).
\end{lemma}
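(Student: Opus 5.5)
The plan is to run the argument of \Cref{lem:ideal-detectability} in the special case \(\ell=0\), but to avoid the exponential tail term. Instead we compare the energy of the all-zero branch directly with its norm. Write \(\ket{\Omega_0}:=Q_{0^M}\ket{\psi}\) for a pure input (mixed inputs follow by linearity), and set \(x^2:=\|\Omega_0\|^2=\Pr[y=0^M]\).

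\textbf{Step 1: lower bound.} Since \(\lambda_{\min}(H)\ge E_0\), we have \(\bra{\Omega_0}H\ket{\Omega_0}\ge E_0x^2\).

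\textbf{Step 2: upper bound on the energy of the zero branch.} Fix a layer \(j\) and a term \(h=h_{j,a}\). Because \(y_j=0\), the factor \(I-h\) appears in \(Q_{0^M}\). Only the terms of later layers that overlap \(h\) fail to commute with it. So \(h\ket{\Omega_0}\) is \(h\) applied after a product of \(O(1)\) local projectors of the form \(I-h'\) (the forward lightcone of \(h\)), all acting on the state \((I-h)\ket{\phi}\), where \(\ket{\phi}\) is the state just before layer \(j\).

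Using the pyramid/\(XY\) machinery of Steps 2--5 of \Cref{lem:ideal-detectability}, take a disjoint pyramid family with tops in layer \(j\), applied to the suffix circuit starting at layer \(j\). In every \(X\)-sector the top contributes exactly \(0\) energy, because its outcome bit is \(0\). In a \(Y\)-sector the top contributes at most \(1\), but the sector weight is contracted by \(\theta^2\).

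The key bookkeeping is to compare \(\|P_\nu\Omega_0\|^2\) with the weight the same sector would have before the inside factors are applied. The cleanest route is a geometric-sum estimate. Summing over sectors gives
\[\bra{\Omega_0}H^{\mathrm{top}}_{\mathcal P}\ket{\Omega_0}\le\sum_\nu|\nu|\,\theta^{2|\nu|}\eta_\nu^2,\]
with \(\sum_\nu\eta_\nu^2\le 1\). This is at most a constant times the total mass \(1-x^2\) that has been removed, since each \(Y\)-pyramid discards at least a \((1-\theta^2)\) fraction of its weight into nonzero outcomes. Concretely, I would aim to show \(\sum_\nu|\nu|\theta^{2|\nu|}\eta_\nu^2\le C_1\sum_\nu(1-\theta^{2|\nu|})\eta_\nu^2\le C_1(1-x^2)\), using \(s\theta^{2s}\le C_1(1-\theta^{2s})\) for all \(s\ge1\) (the term \(s=0\) contributes nothing on either side). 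The inequality \(\sum_\nu(1-\theta^{2|\nu|})\eta_\nu^2\le 1-x^2\) comes from \(\|P_\nu\Omega_0\|^2\le\theta^{2|\nu|}\eta_\nu^2\) and orthogonality of the sectors.

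Covering each layer by \(T=O(1)\) colour classes and summing over the \(L=O(1)\) layers (Steps 7--8, restricted to the single outcome \(0^M\) with prefixes fixed to \(0\)) gives \(\bra{\Omega_0}H\ket{\Omega_0}\le C(1-x^2)\).

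\textbf{Step 3: rearrangement.} Combining the two bounds gives \(E_0x^2\le C(1-x^2)\), so \(x^2\le 1/(1+E_0/C)\). This is the first claim with \(\kappa=1/C\).

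For the consequence, note that on \(0<E_0\le R_0\) the function \(\log_2(1+\kappa E_0)/E_0\) is decreasing, so \(1/(1+\kappa E_0)\le 2^{-\kappa' E_0}\) with \(\kappa'=\log_2(1+\kappa R_0)/R_0\).

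The main obstacle is Step 2. One must check carefully that the sector weights \(\eta_\nu\) of the suffix-circuit input really dominate the removed mass, i.e.\ that the \(B_{y_{\mathrm{out}}}\) factors with \(y_{\mathrm{out}}=0\) do not spoil the inequality \(\sum_\nu(1-\theta^{2|\nu|})\eta_\nu^2\le 1-x^2\) across different layers and colour classes. If this direct comparison fails, the fallback is to prove only \(x^2\le 1-\Omega(E_0)\) by applying Step 2 to a single layer carrying \(\Omega(E_0)\) of the energy. That version also yields the stated form after adjusting constants.
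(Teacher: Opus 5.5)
Your proposal is correct and follows the paper's proof essentially step for step. The zero outcome kills all $X$-pyramid top energy, your bound $\sum_\nu|\nu|\theta^{2|\nu|}\eta_\nu^2\le C_1\sum_\nu(1-\theta^{2|\nu|})\eta_\nu^2\le C_1(1-x^2)$ is exactly the paper's $R(1-z)\le R(1-x_0)$ argument, and the rearrangement and choice of $\kappa'$ are identical. The obstacle you flag is not real: for each layer and colour class, the suffix input $Q^{(j-1)}_0\cdots Q^{(1)}_0\ket{\psi}$ (and $B_0$ applied to it) has norm at most $1$, so $\sum_\nu\eta_\nu^2\le 1$ holds separately for each of the $LT$ pyramid families, the resulting bounds simply add, and the fallback is unnecessary.
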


\begin{proof}
By linearity in \(\rho\), it suffices to prove the first claim for a pure state \(\rho=\ket{\psi}\!\bra{\psi}\).

Let \(\Pi_0:=Q^{(L)}_{0}\cdots Q^{(1)}_{0}\), let \(x_0:=\|\Pi_0\ket{\psi}\|^2\), and write \(\ket{\Omega_0}:=\Pi_0\ket{\psi}\). Then \(x_0=\Pr[y=0^M]\). Since every nonzero normalized post-measurement state has energy at least \(E_0\), we have \(E_0x_0\le \bra{\Omega_0}H\ket{\Omega_0}\).

Reuse the same pyramid and \(XY\)-sector decomposition as in the proof of~\Cref{lem:ideal-detectability}. For one fixed pyramid decomposition \(\mathcal P\), let \(H^{\mathrm{top}}_{\mathcal P}\) be the sum of the top terms of the pyramids in \(\mathcal P\). Let \(P_\nu\) be the \(XY\)-sector projections, let \(|\nu|\) be the number of \(Y\)-pyramids in \(\nu\), and write \(\Lambda_\nu^2:=\|P_\nu\Omega_0\|^2\). Exactly as in the proof of~\Cref{lem:ideal-detectability}, there exists a constant \(\mu\in(0,1)\), depending only on \(k,g,\mathfrak h\), and nonnegative numbers \(\eta_s^2\) such that \(\Lambda_s^2:=\sum_{|\nu|=s}\Lambda_\nu^2\le \mu^s\eta_s^2\) and \(\sum_{s\ge 0}\eta_s^2\le 1\).

For the zero outcome, no \(X\)-pyramid top can be violated, so the sector-wise top-energy bound simplifies to \(\bra{\Omega_0}P_\nu H^{\mathrm{top}}_{\mathcal P}P_\nu\ket{\Omega_0}\le |\nu|\,\|P_\nu\Omega_0\|^2\). Hence \(\bra{\Omega_0}H^{\mathrm{top}}_{\mathcal P}\ket{\Omega_0}\le \sum_{s\ge 1}s\,\Lambda_s^2\).

Let \(R:=\sup_{s\ge 1}\frac{s\mu^s}{1-\mu^s}<\infty\). Then
\[
\sum_{s\ge 1}s\,\Lambda_s^2
\le \sum_{s\ge 1}s\,\mu^s\eta_s^2
\le R\sum_{s\ge 1}(1-\mu^s)\eta_s^2.
\]
If \(z:=\sum_{s\ge 0}\mu^s\eta_s^2\), then \(x_0=\sum_{s\ge 0}\Lambda_s^2\le z\), and therefore \(\sum_{s\ge 1}s\,\Lambda_s^2\le R(1-z)\le R(1-x_0)\). Thus \(\bra{\Omega_0}H^{\mathrm{top}}_{\mathcal P}\ket{\Omega_0}\le R(1-x_0)\).

As in the proof of \Cref{lem:ideal-detectability}, a constant number of pyramid decompositions suffice to cover all terms in one fixed layer as pyramid tops, and applying the same argument to every suffix circuit shows that there is a constant \(A=A(k,g,\mathfrak h)\) such that \(\bra{\Omega_0}H\ket{\Omega_0}\le A(1-x_0)\). Combining this with \(E_0x_0\le \bra{\Omega_0}H\ket{\Omega_0}\) gives \(E_0x_0\le A(1-x_0)\), hence \(x_0\le \frac{A}{E_0+A}=\frac{1}{1+(E_0/A)}\). This proves the first claim with \(\kappa:=1/A\).

For the second claim, fix \(R_0>0\) and set \(\kappa':=\frac{\log_2(1+\kappa R_0)}{R_0}\). Since the function \(x\mapsto \log(1+\kappa x)/x\) is decreasing on \((0,\infty)\), for every \(0<E_0\le R_0\) we have \(\frac{1}{1+\kappa E_0}\le 2^{-\kappa' E_0}\).
\end{proof}

\begin{lemma}[Forward lightcone argument]\label{lem:reset-locality}
Let \(U\) be any unitary circuit. Let \(K\) be a set of input qubits, and let \(B\) be the forward lightcone of \(K\) at the output of \(U\). If two input states \(\rho,\rho'\) have the same reduced density matrix on the complement of \(K\), then \(U\rho U^\dagger\) and \(U\rho' U^\dagger\) have the same reduced density matrix on the complement of \(B\).
\end{lemma}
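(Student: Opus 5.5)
The plan is to reduce the statement to the standard locality of unitary circuits in the Heisenberg picture. Write $\overline{K}$ and $\overline{B}$ for the complements of $K$ and $B$. Reduced density matrices on $\overline{B}$ are determined by the expectation values $\Tr\bigl((O_{\overline{B}}\otimes \Id_B)\,U\rho U^\dagger\bigr)$ over all operators $O_{\overline{B}}$ supported on $\overline{B}$. It therefore suffices to show that for every such $O$ the Heisenberg-evolved operator $U^\dagger (O\otimes \Id_B) U$ is supported on $\overline{K}$, i.e.\ acts as identity on $K$.

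First, by cyclicity, $\Tr(O\,U\rho U^\dagger)=\Tr(U^\dagger O U\,\rho)$. Next, I will prove the support claim by induction over the layers of $U$, taken in reverse order. Write $U=G_D\cdots G_1$ with each $G_t$ a product of gates on disjoint qubits, and conjugate $O$ by $G_D,G_{D-1},\dots,G_1$ one layer at a time. The invariant is that after conjugating by layers $D,\dots,t$, the operator is supported on the complement of the set of wires at time $t-1$ that lie in the forward lightcone of $K$.

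The inductive step needs two observations. A gate touching no lightcone wire commutes with identity on lightcone wires and preserves the invariant. A gate touching some lightcone wire at its input has all its output wires in the lightcone, by definition of the forward lightcone. On those outputs the current operator acts as identity, so conjugating gives $g^\dagger(\Id\otimes X)g$ with $X$ on the wires outside the gate, which equals $\Id\otimes X$ by unitarity. The conjugated operator therefore stays identity on that gate's wires.

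At the end, $U^\dagger OU = \Id_K\otimes O'_{\overline{K}}$. Hence $\Tr(U^\dagger OU\rho)=\Tr(O'\,\Tr_K\rho)=\Tr(O'\,\Tr_K\rho')$, and running the same chain backwards shows that this equals $\Tr(O\,U\rho' U^\dagger)$. Since this holds for all $O$ on $\overline{B}$, we get $\Tr_B(U\rho U^\dagger)=\Tr_B(U\rho' U^\dagger)$.

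The only delicate point, and so the main obstacle, is defining the lightcone wire-by-wire so that ancillas introduced mid-circuit and gates partially overlapping the lightcone are handled correctly; the argument above settles this by treating each gate's outputs as entirely in the lightcone whenever any of its inputs is. No earlier result of the paper is needed beyond this bookkeeping.
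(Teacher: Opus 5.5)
Your proposal is correct and follows the paper's proof: both pass to the Heisenberg picture via $\Tr(O\,U\rho U^\dagger)=\Tr(U^\dagger O U\,\rho)$ and use that $U^\dagger O U$ acts trivially on $K$ whenever $O$ is supported outside $B$. The only difference is that you prove this support claim by an explicit layer-by-layer induction, whereas the paper takes it directly from the definition of the forward lightcone.
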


\begin{proof}
Let \(O\) be any observable supported on the complement of \(B\). Then \(U^\dagger O U\) is supported on the complement of \(K\), by definition of forward lightcone. Since \(\rho\) and \(\rho'\) have the same marginal outside \(K\), we have
\[
\Tr(O\,U\rho U^\dagger)=\Tr(U^\dagger O U\,\rho)=\Tr(U^\dagger O U\,\rho')=\Tr(O\,U\rho' U^\dagger).
\]
Thus the two output states agree on every observable supported outside \(B\), so their reduced density matrices outside \(B\) are equal.
\end{proof}

\begin{lemma}[Backward lightcone argument]\label{lem:backward-fault-localization}
Fix a unitary implementation \(U=G_T\cdots G_1\) of the sequential measurement circuit before the final computational-basis measurements of its output ancillas. There exists a constant \(\beta_{\mathrm{back}}=\beta_{\mathrm{back}}(k,g,\mathfrak h)\) such that the following holds.

Let \(F\subseteq [T]\) be any set of gate locations. Then there is a set \(K_A\) of input ancillas, with \(|K_A|\le \beta_{\mathrm{back}}|F|\), such that for every family of fault patterns supported on \(F\), every choice of coefficients \(c_\alpha\), and every family of system input states \(\ket{\psi_\alpha}\), the corresponding coherent faulty output state can be written as
\[
\sum_\alpha c_\alpha\, U_\alpha(\ket{\psi_\alpha}_S\otimes \ket{0}_A)
=
U\bigl(\ket{\Phi}_{S,K_A}\otimes \ket{0}_{A\setminus K_A}\bigr)
\]
for some state \(\ket{\Phi}_{S,K_A}\in \mathcal H_S\otimes \mathcal H_{K_A}\).

Here \(U_\alpha\) denotes the circuit obtained from \(U\) by corrupting only the gates at locations in \(F\), with each corrupted gate acting on the same qubits as the corresponding ideal gate.
\end{lemma}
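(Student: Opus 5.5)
The plan is to push every fault backward to the start of the circuit, where it lands on the input ancillas. Write each faulty circuit as \(U_\alpha = G_T\cdots \tilde G_{t}\cdots G_1\), where the corrupted gates at locations in \(F\) are replaced by \(\tilde G_t = P_t G_t\) (or \(G_tP_t\)) for some Pauli \(P_t\) supported on the qubits of \(G_t\). By linearity, and since coherent sums of Pauli faults span all operators on those qubits, it suffices to handle one Pauli pattern \(\mathsf F\) at a time, provided the resulting representation has the form \(U(\cdot)\) with the same fixed set \(K_A\). Then
\[
U_{\mathsf F} = U\,\bigl(U^\dagger U_{\mathsf F}\bigr),
\]
and I will show that \(V_{\mathsf F}:=U^\dagger U_{\mathsf F}\) acts nontrivially only on \(S\) together with a fixed set \(K_A\) of ancillas. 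Given this, \(V_{\mathsf F}(\ket{\psi}_S\otimes\ket{0}_A)=\ket{\Phi_{\mathsf F}}_{S,K_A}\otimes\ket{0}_{A\setminus K_A}\). Summing over \(\alpha\) with coefficients \(c_\alpha\) then yields the claimed form with \(\ket{\Phi}=\sum_\alpha c_\alpha \ket{\Phi_\alpha}\), which may be unnormalized; the statement only asserts that some state exists.

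To locate the support of \(V_{\mathsf F}\), I would write each inserted fault as a conjugated operator pulled to the front. A fault \(P_t\) inserted after gate \(G_t\) contributes \((G_t\cdots G_1)^\dagger P_t (G_t\cdots G_1)\), and \(V_{\mathsf F}\) is the ordered product of these terms. Each such operator is supported inside the backward lightcone of \(\mathrm{supp}(P_t)\) through the first \(t\) gates. Let \(K_A\) be the set of ancilla qubits in the union of these backward lightcones over all \(t\in F\). This set depends only on \(F\), not on the particular Pauli or coefficients, as required.

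The remaining step is the size bound \(|K_A|\le \beta_{\mathrm{back}}|F|\). Here I use the structure of the sequential measurement circuit from \Cref{def:sequential-measurement-circuit-appendix2}. It has \(L=O(1)\) layers, and in each layer every term \(h_{j,a}\) is measured by an \(O(1)\)-size gadget acting on its \(k\) system qubits and one fresh ancilla. I fix the unitary implementation so that each such gadget has constant depth, giving total depth \(O(1)\). Since each gate touches \(O(1)\) qubits, the backward lightcone of any single gate location has size at most \(c^{\mathrm{depth}}=O(1)\), depending only on \(k,g,\mathfrak h\). In fact, tracking through terms is enough: a gate in layer \(j\) touches one term, and going back through a layer each system qubit meets one term of that layer. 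Bounded degree \(g\) therefore gives at most \((kg)^{L}\) ancillas per faulty location.

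I expect the main obstacle to be pinning down a concrete constant-depth unitary implementation \(U\) so that the lightcone count is rigorous, including how the ancillas are ordered and reused. Once the implementation is fixed, the support argument is routine. The key conceptual point to state carefully is that the non-unitarity of coherent sums causes no trouble: every summand factors as \(U\) times an operator confined to \(S\cup K_A\), and linearity preserves this.
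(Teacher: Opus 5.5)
Your proposal is correct and follows essentially the same route as the paper. Both arguments rewrite each corrupted gate as a post-gate fault and conjugate it back to the start, giving $U_\alpha = U\,\widetilde D_{t_s}\cdots\widetilde D_{t_1}$ with each factor supported on the inputs of the backward lightcone of location $t$. Both then take $K_A$ to be the ancillas in the union of these lightcones, bound $|K_A|$ using constant depth, and conclude by linearity.

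The only cosmetic difference is that you first expand the faults in the Pauli basis. The paper instead works directly with the arbitrary fault operator $D_t=E_tG_t^\dagger$, which makes that expansion unnecessary.
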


\begin{proof}
Fix \(F=\{t_1<\cdots <t_s\}\subseteq [T]\). Since every ideal gate \(G_t\) is unitary, any corruption of \(G_t\) can be written as a post-gate fault: if the corrupted gate is \(E_t\), write \(E_t=D_tG_t\), where \(D_t:=E_tG_t^\dagger\) acts on the same qubits as \(G_t\). Thus, for every fault pattern \(\alpha\), the corresponding faulty circuit has the form
\[
U_\alpha
=
G_T\cdots G_{t_s+1}\, D_{t_s}^{(\alpha)} G_{t_s}\cdots G_{t_2+1}\, D_{t_2}^{(\alpha)} G_{t_2}\cdots G_{t_1+1}\, D_{t_1}^{(\alpha)} G_{t_1}\cdots G_1,
\]
where each \(D_{t_i}^{(\alpha)}\) acts on the support of \(G_{t_i}\).

For each \(i\), define the conjugated fault operator
\[
\widetilde D_{t_i}^{(\alpha)}:=G_1^\dagger\cdots G_{t_i}^\dagger\, D_{t_i}^{(\alpha)}\, G_{t_i}\cdots G_1.
\]
A direct cancellation computation shows that
\[
U_\alpha = U\, \widetilde D_{t_s}^{(\alpha)}\cdots \widetilde D_{t_1}^{(\alpha)}.
\]
Indeed,
\begin{align*}
U\, \widetilde D_{t_s}^{(\alpha)}\cdots \widetilde D_{t_1}^{(\alpha)}
&=
G_T\cdots G_1
\bigl(G_1^\dagger\cdots G_{t_s}^\dagger D_{t_s}^{(\alpha)} G_{t_s}\cdots G_1\bigr)
\cdots
\bigl(G_1^\dagger\cdots G_{t_1}^\dagger D_{t_1}^{(\alpha)} G_{t_1}\cdots G_1\bigr)\\
&=
G_T\cdots G_{t_s+1} D_{t_s}^{(\alpha)} G_{t_s}\cdots G_{t_2+1} D_{t_2}^{(\alpha)} G_{t_2}\cdots G_{t_1+1} D_{t_1}^{(\alpha)} G_{t_1}\cdots G_1,
\end{align*}
which is exactly \(U_\alpha\).

For a location \(t\), let \(\mathrm{BL}(t)\) denote its backward lightcone in the ideal circuit \(U\), and let \(\partial_{\mathrm{in}}\mathrm{BL}(t)\) be the set of input wires entering that lightcone. By construction, \(\widetilde D_t^{(\alpha)}\) is supported on \(\partial_{\mathrm{in}}\mathrm{BL}(t)\). Let \(K_A\) be the union of the input ancillas contained in \(\partial_{\mathrm{in}}\mathrm{BL}(t)\) over all \(t\in F\).

Because \(U\) is a constant-depth circuit of bounded-size gates, each single-location backward lightcone meets only \(O_{k,g,\mathfrak h}(1)\) input ancillas. Therefore \(|K_A|\le \beta_{\mathrm{back}}|F|\) for some constant \(\beta_{\mathrm{back}}\) depending only on \(k,g,\mathfrak h\).

Now fix a fault pattern \(\alpha\). Since \(\widetilde D_{t_s}^{(\alpha)}\cdots \widetilde D_{t_1}^{(\alpha)}\) acts only on the system register together with the ancillas in \(K_A\), there exists a state \(\ket{\phi_\alpha}_{S,K_A}\in \mathcal H_S\otimes \mathcal H_{K_A}\) such that
\[
\widetilde D_{t_s}^{(\alpha)}\cdots \widetilde D_{t_1}^{(\alpha)}
(\ket{\psi_\alpha}_S\otimes \ket{0}_A)
=
\ket{\phi_\alpha}_{S,K_A}\otimes \ket{0}_{A\setminus K_A}.
\]
Equivalently,
\[
U_\alpha(\ket{\psi_\alpha}_S\otimes \ket{0}_A)
=
U\bigl(\ket{\phi_\alpha}_{S,K_A}\otimes \ket{0}_{A\setminus K_A}\bigr).
\]

Therefore
\[
\sum_\alpha c_\alpha\, U_\alpha(\ket{\psi_\alpha}_S\otimes \ket{0}_A)
=
U\Bigl(\sum_\alpha c_\alpha\, \ket{\phi_\alpha}_{S,K_A}\otimes \ket{0}_{A\setminus K_A}\Bigr).
\]
Since the factor \(\ket{0}_{A\setminus K_A}\) is independent of \(\alpha\), this equals
\[
U\bigl(\ket{\Phi}_{S,K_A}\otimes \ket{0}_{A\setminus K_A}\bigr),
\]
where
\[
\ket{\Phi}_{S,K_A}:=\sum_\alpha c_\alpha\, \ket{\phi_\alpha}_{S,K_A}\in \mathcal H_S\otimes \mathcal H_{K_A}.
\]
This proves the claim.
\end{proof}

\begin{lemma}[Robust detectability lemma]\label{lem:dl-appendix}
Let \(H\) satisfy \Cref{assumption:DL-Hamiltonian-appendix}, and assume \(\lambda_{\min}(H)\ge \epsilon M\). Then the sequential measurement circuit of \(H\) is \((\epsilon/r,\epsilon/r',2^{-r''\epsilon M})\)-robust (in the sense of \Cref{def:robust-meas-circuit-appendix}) for constants \(r,r',r''>0\) depending only on the constants in \Cref{assumption:DL-Hamiltonian-appendix}.
\end{lemma}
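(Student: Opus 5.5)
The plan is to reduce the faulty circuit to an ideal run of the sequential measurement circuit on a modified input, and then apply \Cref{lem:ideal-detectability} to a sub-Hamiltonian obtained by deleting the terms touched by the faults.

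First, fix a fault set \(F\) with \(|F|\le \eta|C|\), where \(\eta=\epsilon/r\). Apply \Cref{lem:backward-fault-localization} to rewrite the coherent faulty state \(\sum_{\mathsf F} c_{\mathsf F} C_{\mathsf F}\ket{\psi_{\mathsf F}}\) as \(U(\ket{\Phi}_{S,K_A}\otimes\ket{0}_{A\setminus K_A})\). This holds for some state \(\ket{\Phi}\) and some ancilla set \(K_A\) with \(|K_A|\le\beta_{\mathrm{back}}|F|\). Since \(|C|=O(M)\), we get \(|K_A|=O(\epsilon M/r)\). The non-unitary, postselection-like part of the adversary is thereby absorbed into a single (after normalization) input state. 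Next, let \(\rho\) be the reduced state of \(\ket{\Phi}\) on \(S\), and compare with the ideal input \(\rho\otimes\ketbra{0}{0}_A\). The two inputs agree on the complement of \(K_A\). By \Cref{lem:reset-locality}, the two output states agree outside the forward lightcone \(B\) of \(K_A\). Because the circuit has constant depth and bounded fan-in, \(B\) contains only \(O(|K_A|)=O(\epsilon M/r)\) measurement bits. Call this set of measurement bits \(J\).

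It remains to bound \(\Pr[|y_{\overline J}|\le\epsilon M/r']\) for the ideal circuit run on \(\rho\). This suffices, because \(|y|\ge|y_{\overline J}|\) and the faulty distribution restricted to \(\overline J\) equals the ideal one. Let \(H'\) be \(H\) with the terms indexed by \(J\) removed. Measuring \(H\) sequentially and discarding the \(J\) outcomes gives the same marginal on \(\overline J\) as the sequential circuit of \(H'\) only if the removed measurements do not disturb the rest, which they do. So instead I will keep the full circuit and rerun the argument of \Cref{lem:ideal-detectability} while counting violations only outside \(J\).

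Concretely, I would restrict the sum in Step 1 to outcomes with \(|y_j\cap\overline J|\le\ell\). In Step 5, an \(X\)-pyramid top can contribute energy only if it is violated in \(y\). Such a top either lies in \(J\), giving at most \(|J|\) tops, or is counted in \(\ell\). The energy bound therefore becomes \((|\nu|+\ell+|J|)\). The combinatorial count \(N_{s,\ell}\) in Step 6 also grows, since bits in \(J\) are free. I would handle this by charging those bits against pyramid \(Y\)-sectors, or simply bounding the count by \(2^{O(|J|)}\) per sector. The result would be a bound of the form \(x^2\le 2^{-c\ell+O(|J|)}/(E_0-a(\ell+|J|))\).

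Finally, take \(E_0=\epsilon M\), \(\ell=\epsilon M/r'\), and \(|J|\le\epsilon M/r''' \). Choosing \(r\) and \(r'\) large relative to \(a\), \(c\) and \(\beta_{\mathrm{back}}\) gives the failure probability \(2^{-r''\epsilon M}\).

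The main obstacle is this last modification. The \(J\)-bits are neither constrained to have few ones nor summed freely. Their contribution must be dominated by the \(2^{-c\ell}\) decay, which forces \(|J|\ll\ell\) up to constants; this is why \(r\) must be taken much larger than \(r'\). If the charging argument fails, my fallback is to use \Cref{lem:zero-outcome-bound} and the ideal lemma on the Hamiltonian restricted away from the lightcone \(J\). In that case I would bound the ground energy of \(H'\) below by \(\epsilon M-|J|\), and view the faulty outputs on \(\overline J\) as a sequential measurement of \(H'\) with interleaved measurements of \(J\)-terms, which act as extra projectors inside pyramids.
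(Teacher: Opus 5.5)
Your first two steps are exactly the paper's reduction. \Cref{lem:backward-fault-localization} absorbs the coherent, possibly postselecting, faults into a single input $\ket{\Phi}_{S,K_A}\otimes\ket{0}_{A\setminus K_A}$. Resetting $K_A$ and invoking \Cref{lem:reset-locality} then makes the faulty marginals and the ideal ones (input $\rho\otimes\ket{0}\!\bra{0}_A$) agree outside a set $J$ of $O(\epsilon M/r)$ measurement bits.

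\textbf{The final step is left open, but it is trivial.} You flag it as the main obstacle and propose to reopen the pyramid/sector argument of \Cref{lem:ideal-detectability} with the $J$-bits unconstrained. That argument is never carried out: both the charging argument and the claimed $2^{-c\ell+O(|J|)}$ bound are unproven, and you hedge with a fallback. No modification of the ideal lemma is needed. Since $|y|\le |y_{\overline J}|+|J|$ for every string, with $t=\epsilon M/r'$ you have
\[
\Pr_{\mathrm{faulty}}[|y|\le t]\le \Pr_{\mathrm{faulty}}[|y_{\overline J}|\le t]=\Pr_{\mathrm{ideal}}[|y_{\overline J}|\le t]\le \Pr_{\mathrm{ideal}}[|y|\le t+|J|].
\]
Moreover $\{|y|\le \ell\}\subseteq\{|y_j|\le \ell\ \forall j\}$. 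So \Cref{lem:ideal-detectability} applies as a black box with $\ell=t+|J|$, giving $2^{-c(t+|J|)}/(E_0-a(t+|J|))$, which is already at least as strong as the form you were aiming for. Take $r=\Theta(\beta r')$ so that $|J|\le t/2$; then $\ell\le 3t/2$, and $r'\ge 6a$ gives $E_0-a\ell\ge E_0/2$. The $J$-bits only shift $\ell$ by a constant factor; they never have to be charged against the decay.

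\textbf{The small-energy regime is not handled.} Your closing parameter choice fails when $\epsilon M$ is at most a constant of order $r'$. For $\epsilon M\le r'/2$ one has $t+|J|\le 3/4<1$, so the event is exactly $\{y=0^M\}$. There \Cref{lem:ideal-detectability} (with $\ell=0$) yields only $\Pr[y=0^M]\le 1/E_0$. This is not of the form $2^{-r''E_0}$: it is the trivial bound $1$ at $E_0=1$ and exceeds $1$ for $E_0<1$. This regime genuinely requires \Cref{lem:zero-outcome-bound}, namely $\Pr[y=0^M]\le 1/(1+\kappa E_0)\le 2^{-\kappa' E_0}$ for $E_0\le r'/2$, as a separate case; the paper splits at $\epsilon M=r'/2$. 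In the complementary case $\epsilon M\ge r'/2$, you must also take $r'$ large enough that the prefactor $1/(E_0-a\ell)\le 2/E_0$ is absorbed into the exponential, leaving $2^{-cE_0/(2r')}$.
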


\begin{proof}
Let \(U\) be a fixed unitary implementation of the sequential measurement circuit before the final computational-basis measurements of the output ancillas, and let \(C\) denote the resulting measurement circuit. Since \(\mathfrak h\) is finite and every term acts on \(k=O(1)\) qubits, each term-measurement gadget can be implemented using \(O_{k,\mathfrak h}(1)\) elementary gates, so \(|C|\le \gamma M\) for a constant \(\gamma=\gamma(k,\mathfrak h)\).

Let \(\beta_{\mathrm{back}}\) be the constant from \Cref{lem:backward-fault-localization}. Since \(U\) has constant depth and bounded-size gates, there is also a constant \(\beta_{\mathrm{forw}}=\beta_{\mathrm{forw}}(k,g,\mathfrak h)\) such that the forward lightcone of any single input ancilla contains at most \(\beta_{\mathrm{forw}}\) output measurement ancillas. Set \(\beta:=\max\{\beta_{\mathrm{back}},\beta_{\mathrm{forw}}\}\).

Let \(a,c>0\) be the constants from \Cref{lem:ideal-detectability}, and let \(\kappa>0\) be the constant from \Cref{lem:zero-outcome-bound}. Choose a constant \(r'>0\) with \(r'\ge \max\{6a,4\}\), then set \(r:=2\beta r'\). Finally define
\[
r''_1:=\frac{\log_2(1+\kappa r'/2)}{r'/2},\qquad
r''_2:=\frac{c}{2r'},\qquad
r'':=\min\{r''_1,r''_2\}.
\]

Fix any set \(F\) of faulty locations with \(|F|\le \frac{\epsilon}{r}|C|\). By \Cref{lem:backward-fault-localization}, there is a set \(K_A\) of input ancillas with \(|K_A|\le \beta|F|\) such that the corresponding coherent faulty output state is of the form
\[
U\bigl(\ket{\Phi}_{S,K_A}\otimes \ket{0}_{A\setminus K_A}\bigr)
\]
for some \(\ket{\Phi}_{S,K_A}\in \mathcal H_S\otimes \mathcal H_{K_A}\). Let
\[
\rho:=\bigl(\ket{\Phi}\!\bra{\Phi}\bigr)_{S,K_A}\otimes \ket{0}\!\bra{0}_{A\setminus K_A}^{\otimes},
\]
that is, \(\rho\) is the pure input state
\(\ket{\Phi}_{S,K_A}\otimes \ket{0}_{A\setminus K_A}\).

Now define
\[
\rho':=\Tr_{K_A}(\rho)\otimes \ket{0}\!\bra{0}^{\otimes K_A}.
\]
Then \(\rho\) and \(\rho'\) have the same reduced density matrix on the complement of \(K_A\). Let \(B\) be the set of output measurement ancillas in the forward lightcone of \(K_A\). By \Cref{lem:reset-locality}, the output states \(U\rho U^\dagger\) and \(U\rho' U^\dagger\) have the same reduced density matrix on the complement of \(B\). Hence the two output distributions agree on the measurement bits outside \(B\).

Since each ancilla in \(K_A\) influences at most \(\beta\) output measurement ancillas, \(|B|\le \beta |K_A|\le \beta^2 |F|\). Using \(|F|\le (\epsilon/r)|C|\) and \(|C|\le \gamma M\), and absorbing the constant \(\gamma\) into \(\beta\), we may write
\[
|B|\le \beta\frac{\epsilon}{r}M=\frac{\epsilon M}{2r'}.
\]

Let \(x:=\epsilon M\) and \(t:=x/r'\). If \(y\) denotes the full measurement string and \(y_{\overline B}\) its restriction outside \(B\), then
\[
\Pr_{\mathrm{faulty}}[|y|\le t]
\le \Pr_{\mathrm{faulty}}[|y_{\overline B}|\le t]
= \Pr_{C(\rho')}[|y_{\overline B}|\le t]
\le \Pr_{C(\rho')}[|y|\le t+|B|].
\]
Set \(\ell:=t+|B|\). Then \(\ell\le \frac{x}{r'}+\frac{x}{2r'}=\frac{3x}{2r'}\).

We split into two cases.

If \(x\le r'/2\), then \(\ell\le 3/4<1\). Since \(|y|\) is an integer, the event \(\{|y|\le \ell\}\) is exactly the all-zero event \(\{y=0^M\}\). Hence
\[
\Pr_{\mathrm{faulty}}[|y|\le t]
\le \Pr_{C(\rho')}[y=0^M]
\le \frac{1}{1+\kappa x}
\le 2^{-r''_1x}
\le 2^{-r''x}.
\]

If \(x\ge r'/2\), then since \(r'\ge 6a\), we have \(a\ell\le a\cdot \frac{3x}{2r'}\le x/2\), so \(x-a\ell\ge x/2\). Applying \Cref{lem:ideal-detectability} to the ideal circuit on input \(\rho'\), with ground energy \(E_0=x\), gives
\[
\Pr_{C(\rho')}[|y|\le \ell]\le \frac{2^{-c\ell}}{x-a\ell}\le \frac{2^{-cx/r'}}{x/2}.
\]
Because \(x\ge r'/2\) and \(r'\ge 4\), we have \(\frac{2}{x}\le \frac{4}{r'}\le 1\le 2^{cx/(2r')}\). Therefore
\[
\Pr_{C(\rho')}[|y|\le \ell]\le 2^{-cx/r'}\cdot 2^{cx/(2r')}=2^{-cx/(2r')}=2^{-r''_2x}\le 2^{-r''x}.
\]

Thus in both cases, \(\Pr_{\mathrm{faulty}}[|y|\le t]\le 2^{-r''x}\). Since \(x=\epsilon M\) and \(t=x/r'=(\epsilon/r')M\), this shows that for every fault set of size at most \((\epsilon/r)|C|\),
\[
\Pr[|y|\le (\epsilon/r')M]\le 2^{-r''\epsilon M}.
\]
This is exactly the claimed \((\epsilon/r,\epsilon/r',2^{-r''\epsilon M})\)-robustness.
\end{proof}